\newtheorem{thm}{Theorem}[section]
\newtheorem{lem}[thm]{Proposition}
\newlist{abbrv}{itemize}{1}
\setlist[abbrv,1]{label=,labelwidth=1in,align=parleft,itemsep=0.1\baselineskip,leftmargin=!}
\newcommand{\inline}[1]{\lstinline[columns=fixed]{#1}}
\newcommand{\nodeflux}[2][]{\ifthenelse{\equal{#2}{dot}}{
		\ifthenelse{\equal{#1}{}}{\dot{\Phi}}{\dot{\Phi}_{#1}}}
	{\ifthenelse{\equal{#1}{}}{\Phi}{\Phi_{#1}}}}
\newcommand{\branchflux}[2][]{\ifthenelse{\equal{#2}{dot}}{
		\ifthenelse{\equal{#1}{}}{\dot{\Phi}_{\mathfrak{b}}}{\dot{\Phi}_{\mathfrak{b}=(#1)}}}
	{\ifthenelse{\equal{#1}{}}{\Phi_{\mathfrak{b}}}{\Phi_{\mathfrak{b}=(#1)}}}}
\newcommand{\extflux}[2][]{\ifthenelse{\equal{#2}{dot}}{
		\ifthenelse{\equal{#1}{}}{\dot{\Phi}_{\mathrm{ext}}}{\dot{\Phi}_{\mathrm{ext}}^{\mathrm{#1}}}}
	{\ifthenelse{\equal{#1}{}}{\Phi_{\mathrm{ext}}}{\Phi_{\mathrm{ext}}^{\mathrm{#1}}}}}
\newcommand{\nodephi}[2][]{\ifthenelse{\equal{#2}{dot}}{
		\ifthenelse{\equal{#1}{}}{\dot{\phi}}{\dot{\phi}_{#1}}}
	{\ifthenelse{\equal{#1}{}}{\phi}{\phi_{#1}}}}
\newcommand{\extphi}[2][]{\ifthenelse{\equal{#2}{dot}}{
		\ifthenelse{\equal{#1}{}}{\dot{\phi}_{\mathrm{ext}}}{\dot{\phi}_{\mathrm{ext}}^{\mathrm{#1}}}}
	{\ifthenelse{\equal{#1}{}}{\phi_{\mathrm{ext}}}{\phi_{\mathrm{ext}}^{\mathrm{#1}}}}}
\newcommand{\lagrangian}{\mathcal{L}}
\newcommand{\hamiltonian}{\mathcal{H}}
\newcommand{\Ohm}{\mathrm{\Omega}}
\newcommand{\mat}[1]{\bm{#1}}
\newcommand{\vect}[1]{\bm{#1}}
\newcommand{\vv}{\mathrm{v}}
\newcommand{\ii}{\mathrm{i}}
\newcommand{\zz}{\mathrm{z}}
\newcommand{\yy}{\mathrm{y}}
\newcommand{\scat}{\mathrm{s}}
\newcommand{\vb}{\vv_{\mathfrak{b}}}
\newcommand{\ib}{\ii_{\mathfrak{b}}}
\newcommand{\vbm}[1]{\bm{\vv}_{\mathfrak{b}_{#1}}}
\newcommand{\ibm}[1]{\bm{\ii}_{\mathfrak{b}_{#1}}}
\newcommand{\vbvec}{\bm{\vv}_{\mathfrak{b}}}
\newcommand{\ibvec}{\bm{\ii}_{\mathfrak{b}}}
\newcommand{\Phibvec}{\bm{\Phi}_{\mathfrak{b}}}
\begin{document}

\begin{titlepage}
\centering
\vspace*{1.5\baselineskip}

\rule{16cm}{1.6pt}\vspace*{-\baselineskip}\vspace*{2pt} % Thick horizontal rule
	\rule{16cm}{0.4pt} % Thin horizontal rule
	
		\vspace{0.75\baselineskip} % Whitespace above the title

{\Huge Lecture Notes on Quantum Electrical Circuits}

\vspace{0.2\baselineskip} % Whitespace below the title
	\rule{16cm}{0.4pt}\vspace*{-\baselineskip}\vspace{3.2pt} % Thin horizontal rule
	\rule{16cm}{1.6pt} % Thick horizontal rule
	
		\vspace{7.75\baselineskip} % Whitespace after the title block

{\huge Alessandro Ciani} \\
\vspace{0.3cm}
{\Large \it Forschungszentrum J\"ulich} \\
\vspace{0.8cm}
{\huge David P. DiVincenzo} \\
\vspace{0.3cm}
{\Large \it Forschungszentrum J\"ulich} \\
\vspace{0.8cm}
{\huge Barbara M. Terhal}\\
\vspace{0.3cm}
{\Large \it Delft University of Technology} \\

\end{titlepage}

\noindent Title Open Textbook: Lecture Notes on Quantum Electrical Circuits \\
Authors: Alessandro Ciani, David P. DiVincenzo, Barbara M. Terhal \\
Publisher: TU Delft OPEN Publishing \\
Year of publication: 2024 \\
ISBN (softback/paperback): 978-94-6366-814-9 \\
ISBN (E-book): 978-94-6366-815-6  \\
DOI:  \url{https://doi.org/10.59490/tb.85} \\
Attribution cover image: 17-transmon quantum processor wire-bonded to a printed circuit board from DiCarlo lab, Delft University of Technology (2020) \\

\doclicenseThis 

This copyright license, which TU Delft OPEN Publishing uses for their original content, does not extend to or include any special permissions granted to us by the rights holders for our use of their content. CC-BY conditions do not apply to Figure 6.1(b).

Every attempt has been made to ascertain the correct source of images and other potentially copyrighted material and ensure that all materials included in this book have been attributed and used according to their license. If you believe that a portion of the material infringes someone else’s copyright, please contact the author b.m.terhal@tudelft.nl.

\vfill
\noindent Alessandro Ciani \\
$\circ$ Peter Gr\"unberg Institute: Institute for Quantum Computing Analytics (PGI-12), Forschungszentrum J\"ulich, 52425 J\"ulich, Germany \\
\href{mailto:a.ciani@fz-juelich.de}{a.ciani@fz-juelich.de} 
\vspace{0.5cm}

\noindent David P. DiVincenzo \\
$\circ$ Peter Gr\"unberg Institute: Theoretical Nanoelectronics (PGI-2), Forschungszentrum J\"{u}lich, 52425 J\"{u}lich, Germany \\
$\circ$ JARA Institute for Quantum Information, RWTH Aachen University, 52074 Aachen, Germany \\
$\circ$ QuTech, Lorentzweg 1, 2628 CJ Delft, The Netherlands \\
\href{mailto:d.divincenzo@fz-juelich.de}{d.divincenzo@fz-juelich.de} 
\vspace{0.5cm}

\noindent Barbara M. Terhal \\
$\circ$ Department of Applied Mathematics, Faculty EEMCS, Delft University of Technology, Mekelweg 4, 2628 CD Delft, The Netherlands \\
$\circ$ QuTech, Lorentzweg 1, 2628 CJ Delft, The Netherlands \\
\href{mailto:B.M.Terhal@tudelft.nl}{b.m.terhal@tudelft.nl} 

\tableofcontents

\newpage

\section*{Symbols and nomenclature}
\subsection*{List of constants}

\begin{abbrv}
\item[$h$ (resp. $\hbar$)]                  Planck constant resp. reduced Planck constant:  $\hbar=h/2 \pi$             
\item[$e$]                   Electron charge
\item[$m_e$]                 Electron mass
\item[$\Phi_0$]              Superconducting flux quantum: $\Phi_0 = \frac{h}{2 e}$
\item[$c$] Speed of light
\item[$k_B, T$] Boltzmann constant resp. temperature: $\beta=\frac{1}{k_B T}$
\end{abbrv}

\subsection*{List of symbols}
 \begin{abbrv}
\item[$\lagrangian$]  Lagrangian 
\item[$\mathfrak{L}$] Lagrangian density
\item[$\hamiltonian$] Classical Hamiltonian
\item[$H$]            Quantum Hamiltonian
\item[$C$]            Capacitance
\item[$E_C$]          Charging energy associated with a capacitance $C$: $E_C = \frac{e^2}{2 C}$
\item[$L$]            Inductance
\item[$E_L$]          Inductive energy associated with an inductance $L$: $E_L = \frac{\Phi_0^2}{4 \pi^2 L}$
\item[$E_J$]          Josephson energy associated with a Josephson junction: $E_J=\frac{\Phi_0 I_c}{2\pi}$ with critical current $I_c$
\item[$L_J$]          Josephson inductance associated with a Josephson junction: $L_J = \frac{\Phi_0^2}{4 \pi^2 E_J}$
\item[$\Phi(t)$]      Flux 
\item[$\phi(t)$]       Reduced (dimensionless) flux : $\phi(t) = 2 \pi \Phi(t)/\Phi_0=2 e\Phi(t)/\hbar$
\item[$\varphi(t)$] Phase in $[0,2\pi)$
\item[$Q(t)$]      Charge 
\item[$q(t)$]       Reduced (dimensionless) charge: $q(t) = \frac{Q(t)}{2 e}$
\item[$\vv(t)$ ($V(s)$)]       Voltage in time domain (resp. Laplace domain)
\item[$\ii(t)$ ($I(s)$)]       Current in time domain (resp. Laplace domain)
\item[$\zz(t)$ ($Z(s)$)]      Impedance in time domain (resp. Laplace domain)
\item[$\yy(t)$ ($Y(s)$)]      Admittance in time domain (resp. Laplace domain)
\item[$X,Y,Z$ and $\mathds{1}$] The three $2 \times 2$ Pauli matrices and the identity matrix or identity operator
\end{abbrv}

\clearpage

\chapter{Introduction}
\label{chap:intro}
\epigraph{``And I proceeded to think intensively about how one could modify the laws of mechanics so as to fit them in with this non-commutation.''}{Paul Dirac, \href{https://www.youtube.com/watch?v=2GwctBldBvU}{Lecture 1} on Quantum Mechanics, 1975}

The classical theory of electrical circuits and networks is a very well established subject \cite{newcomb,peikari}. During the last 30 years, stimulated by the quest to build superconducting quantum processors, a theory of quantum electrical circuits has emerged, which is called circuit quantum electrodynamics or circuit QED. Early pioneering work on building a quantum network theory for electrical circuits using canonical quantization was done by Yurke and Denker \cite{YD:qnetwork}. For quantum information processing, the circuits in question are superconducting and a central role is played by the Josephson junction.

The circuit theory aims to model and describe the dynamics of superconducting chips in which some elements readily admit a lumped-element description, i.e., capacitors or inductors. Other superconducting structures we will encounter are co-planar resonators, transmission lines and amplifiers, all operating in the microwave (roughly between $300 \, \mathrm{MHz} $ and $10\, \mathrm{GHz}$) regime. The goal of the theory is to provide a quantum description of the most relevant degrees of freedom. The central objects to be derived and studied are the Lagrangian and the Hamiltonian governing these degrees of freedom. Central concepts in classical network theory such as impedance and scattering matrices can be used to obtain the Hamiltonian and Lagrangian description for the lossless part of the circuits. Methods of analysis, both classical and quantum, can also be developed for nonreciprocal circuits. Losses can later be introduced and modeled in the quantum theory via dynamical equations such as the Lindblad master equation.  

These lecture notes aim at giving a comprehensive overview of this subject for theoretically-oriented Master or PhD students in physics and electrical engineering, as well as Master and PhD students who work on experimental superconducting quantum devices and wish to learn more theory. The text is supplemented by various exercises; answers can be found at the end of the book for the version available at \href{https://textbooks.open.tudelft.nl/textbooks}{TU Delft OPEN Publishing}. This book is based on lecture notes developed for Master courses in quantum technology at TU Delft and RWTH Aachen in the period 2016-2023, and was first published in book form in 2024. 

Compared to the theory of classical electrical networks, the theory of coherent superconducting quantum devices is by no means a finished topic and we anticipate that new elements and descriptions thereof will be added over time. We hope that the reader's curiosity is piqued by our descriptions and that this work encourages further reading into the subject; some suggestions are given below. We also hope that this book makes clear that the richness of the subject of circuit QED goes beyond the experimental effort of making good qubits or even a functioning superconducting quantum computer. 

Let us emphasize what we do {\em not} cover in this book. We do not discuss the important experimental subjects of chip fabrication, chip design, cooling $\&$ refrigeration and electronics, nor the topic of classical electromagnetic simulations of devices. Although we cover some aspects of noise, we do not focus on the (many) physical mechanisms which introduce loss and decoherence. We do not heavily focus on what can be done with the resulting circuit Hamiltonians, that is, the ways in which single or two-qubit gates and couplers can be realized with these Hamiltonians.

We refer the interested reader to the following papers which either have introduced or cover elements of the circuit QED formalism extensively: Devoret \cite{devoret}, Burkard, Koch and DiVincenzo \cite{BKD:circuit, burkard:circuit}, Girvin \cite{girvin}, Vool $\&$ Devoret \cite{Vool2016} and Minev {\em et.~al.} \cite{Minev2021}. The application of circuit QED for manipulating superconducting qubits can be found in the review by Blais, Grimsmo, Girvin and Walraff~\cite{Blais_2021}, and the review by Oliver and coworkers \cite{Krantz_2019}. Besides the QuCAT software package discussed in Section~\ref{sec:qucat}, a Python software package Scqubits for analyzing several superconducting qubits, discussed in Section~\ref{sec:examples} and Chapter~\ref{chap:transmon}, is described in \cite{software:scqubits}.

\section{Overview}

In Chapter~\ref{chap:lagr_ham} we consider electrical circuits and how each element contributes to the Lagrangian of the circuit. Chapter~\ref{chap:lagr_ham} also discusses some simple circuits that provide the basic understanding for the successful transmon qubit and the description of resonators, covered in detail in Chapter~\ref{chap:transmon}.

Chapter~\ref{chap:cq-app} is devoted to applying the theory to various circuits and understanding how this leads to the definition of various qubits. Some of the circuit examples discussed in this chapter provide interesting new qubits under current investigation which may have advantages over the transmon qubit. We hope to illustrate some of the richness and `plug-and-play' character of the formalism by including these.

Chapter~\ref{chap:transmon} is devoted to an in-depth discussion of the transmon qubit, resonators and transmission lines and how these can be coupled to manipulate the transmon qubit. A discussion of the many ways to realize two-qubit gates between transmon qubits is beyond the scope of this book.

Chapter~\ref{chap:ln} provides an introduction to the description of classical multi-node electrical networks using admittance, impedance and scattering matrices. This description is useful for understanding and generating the Hamiltonian of large quantum circuits and using classical EM simulations as input to the quantum Hamiltonian description.

Chapter~\ref{chap:nonrec} is devoted to non-reciprocal circuit elements and how these are included in the Lagrangian and Hamiltonian description of the circuit.

Chapter~\ref{chap:noise} gives a concise overview of the description of noise and the concept of noise protection.

Chapter~\ref{chap:add} collects some additional `stand-alone' exercises on quantum amplification.

Appendix~\ref{app:cc}, which accompanies Chapter~\ref{chap:lagr_ham}, reviews the general formalism of canonical quantization,
which is a standard technique in classical and quantum mechanics.
Appendix~\ref{app:anharmonic-born-opp} describes the general treatment of harmonic systems, as well as the formal elimination of high-energy variables for non-harmonic systems. This appendix also includes a standard circuit representation of an $N$-port linear electrical network which is used in so-called black-box quantization (Section~\ref{sec:bb}).

\section{Acknowledgements}
We thank Francesco Battistel, Mario Gely, Yaroslav Herasymenko, Martin Rymarz, Mac Hooper Shaw, Maarten Stroeks and Boris Varbanov for contributing their ideas, teaching assistance and input. We thank Leo DiCarlo and Christian Andersen for sharing images of superconducting processors fabricated in their labs. We thank Sander Bais for generous hospitality at Le Vialat at which some part of this work was completed.
This work was supported by QuTech NWO funding 2020-2024 – Part I “Fundamental Research” with project number 601.QT.001-1. B.M.T. and A.C. thank the OpenSuperQPlus100 project (no.~101113946) of the EU Flagship on Quantum Technology (HORIZON-CL4-2022-QUANTUM-01-SGA) for support. A. C. acknowledges funding from the Deutsche Forschungsgemeinschaft (DFG, German Research Foundation) under Germany's Excellence Strategy – Cluster of Excellence Matter and Light for Quantum Computing (ML4Q) EXC 2004/1 – 390534769 and from the German Federal Ministry of Education and Research in the funding program ``quantum technologies – from basic research to market" (contract number 13N15585). 

\chapter{Lagrangians and Hamiltonians for electrical circuits }
\label{chap:lagr_ham}

In this chapter we introduce the method of canonical quantization for electrical circuits, a procedure also known as circuit quantization \cite{Vool2016, BKD:circuit, rasmussen2021}. The general goal is to describe a lossless electrical circuit in terms of a Lagrangian that depends on appropriate (independent) degrees of freedom of the circuit. The Lagrangian determines the dynamics of the circuit, which as the reader might already imagine, corresponds to Kirchhoff's well-known laws on current and voltages. As explained in Appendix~\ref{app:cc}, from the Lagrangian we can obtain a Hamiltonian and quantize the system by promoting the variables to operators with appropriate commutation relations. As we will see, since we will assume that the circuit is made out of superconducting material, two main additions are needed to the standard theory of canonical quantization of electrical circuits: the presence of an additional, nonlinear element, the Josephson junction, and the additional condition of flux (or better fluxoid) quantization \cite{clarke2006squid, tinkham}. 

We start by introducing the concept of a branch and in particular the two main types of branches, capacitive and inductive, and their energies. These fundamental elements will represent the building blocks for larger circuits. We then provide several examples of simple circuits, starting with the LC oscillator, in order to build some intuition. We then move to describe the general procedure and explain how to handle the presence of external fluxes, as well as voltage or current sources.

\section{Branch voltages and branch fluxes}
\label{sec:canq_el}

As explained above, in order to apply canonical quantization to an electrical circuit we need to determine its Lagrangian which is in accordance with the known (classical) equations of motion, i.e., Kirchhoff's laws. Here we assume that the circuits we consider are lossless and thus the dynamics is energy-conserving. We will show that it is also possible to associate a classical Lagrangian to elements whose microscopic origin has its roots in quantum mechanics, and thus do not have a classical analog, such as the Josephson junction. In fact, we can still use known equations of motion, expressing the underlying dynamics of superconductors, to obtain a classical Lagrangian. In this chapter we mostly restrict ourselves to circuit elements which are branches, while in Chapter~\ref{chap:nonrec} we consider more general circuit elements. The collection of branches and nodes gives a graph, namely the electrical circuit graph $G$.  

We start by stating some conventions and identifying the relevant dynamical variables. A general lumped-element branch is shown in Fig.~\ref{fig:el_conv}. For an electrical circuit, each branch is characterized by the voltage $\vb(t)$ across the branch and the current $\ib(t)$ through the branch.

\begin{figure}[htb]
\centering
\includegraphics[height= 4 cm]{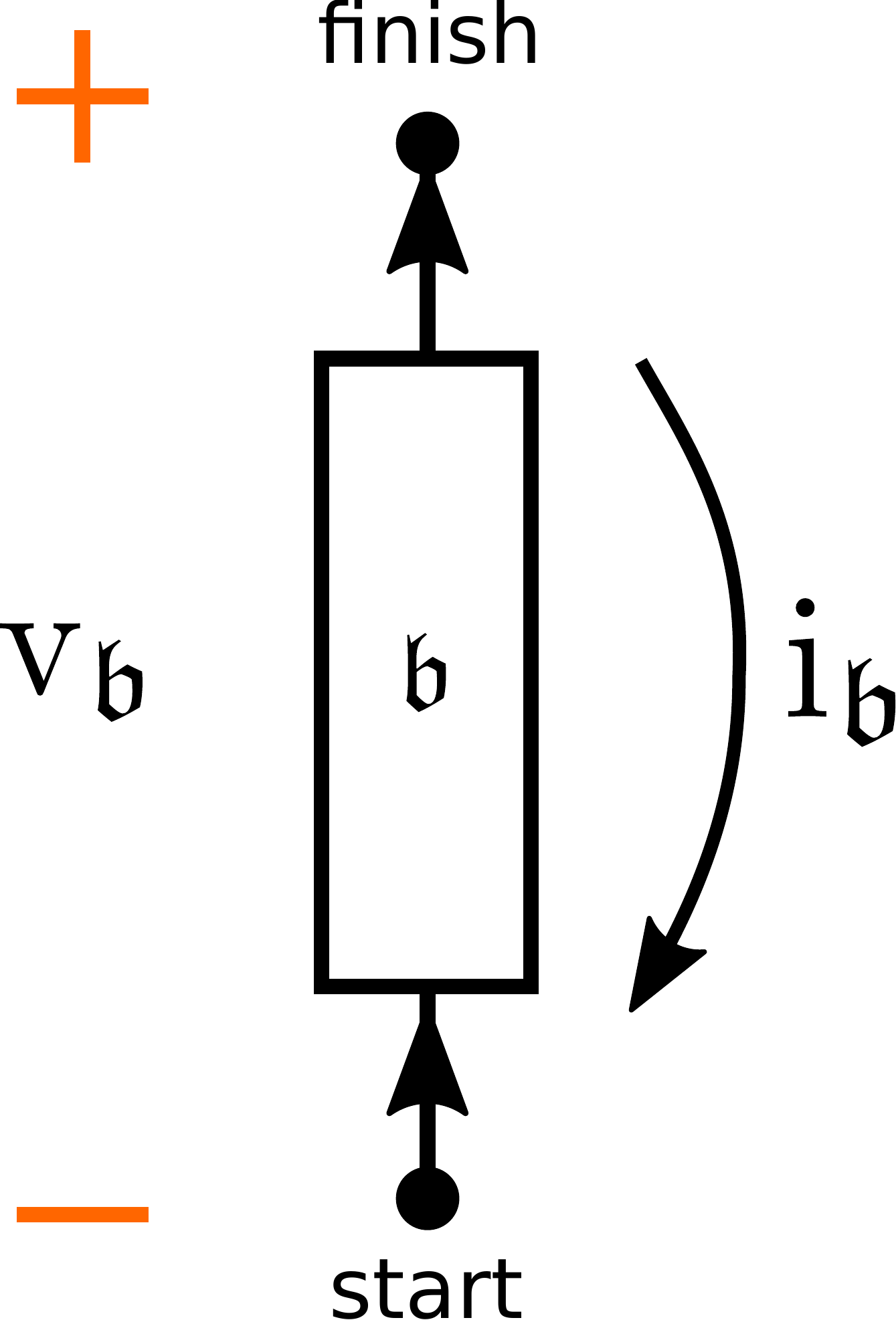}
\caption{General lumped-element branch $\mathfrak{b}$ with standard passive voltage and current convention. The $+$ and $-$ nodes are positive resp. negative voltages and positive current flows in the opposite direction of the arrow (orientation) on the branch.}
\label{fig:el_conv}
\end{figure}

Branches of a superconducting electrical circuit can be capacitors, inductors, Josephson junctions or voltage or current sources. All these branches except the voltage and current sources are conservative elements, meaning that they store or conserve energy, and do not dissipate it or increase it. Passive, as opposed to active elements such as current or voltage sources, amplifiers, or explicitly time-dependent elements do not increase energy. A resistor is an example of a non-conservative passive element.

As we see from Fig.~\ref{fig:el_conv}, each branch is a {\em two-terminal} element, i.e.,~an edge between two nodes or terminals. One can also consider electrical circuits which have four-terminal elements, as depicted in Fig.~\ref{fig:twop_network}. Each pair of terminals is then also called a port (see more in Section~\ref{sec:mport}). In this case, the voltages and currents at the two branches are functionally related to each other. Fundamental two-port elements are the mutual inductor, and the closely related ideal transformer, covered in Section~\ref{sec:mi}, and the gyrator, which we will study in Chapter~\ref{chap:nonrec}. 

In order to be precise, we need to assign a standard orientation (arrow) to each branch (so the graph $G$ has directed edges). As shown in Fig.~\ref{fig:el_conv}, when the arrow points from a node called start to a node called finish, the current variable of the branch is the positive current $\ib(t)$ which flows from finish to start. The voltage across the branch is defined as $\vb(t) \equiv \vv_{\rm finish}(t)-\vv_{\rm start}(t)$. 
Thus, we see the branch voltage as a difference of two node voltages, using the orientation on a branch.

Instead of working with voltages and currents ---which is natural from the electrical engineering perspective--- in order to obtain a Lagrangian formulation, we will work with (generalized) branch-flux variables $\Phi_{\mathfrak{b}}(t)$. By {\em definition}, this branch flux relates to the voltage across a branch as 
\begin{equation}
\vb(t)=\frac{d \Phi_{\mathfrak{b}}(t)}{dt}.
\label{eq:def-flux}
\end{equation}

We assume that at $t \rightarrow - \infty$ all electromagnetic fields are absent. This implies that voltages, currents, magnetic fluxes and charges can all taken to be zero at $t \rightarrow - \infty$. In what follows we will always work with this assumption, but we will specify this explicitly when it plays a role. Thus, we can write 
\begin{equation}\label{eq:branchflux}
\Phi_{\mathfrak{b}}(t)=\int_{-\infty}^t \vb(t') dt'.
\end{equation}

Similar to node voltages, we can also define node fluxes, see Section~\ref{sec:dep}, i.e., we have by convention for an oriented branch:
\begin{equation}
\Phi_{\mathfrak{b}}(t)=\Phi_{\rm finish}(t)-\Phi_{\rm start}(t),
\label{eq:def-orien}
\end{equation}
 where $\Phi_{\rm start}, \Phi_{\rm finish}$ are the node fluxes at the nodes labeled start and finish in Fig.~\ref{fig:el_conv}.
The unit of the branch flux is the weber ($1$ weber is $1$ volt $\times$ $1$ sec). If we were to sum the oriented branch fluxes around a loop $\gamma=\partial S$ in a circuit, we obtain the total magnetic flux through the surface $S$ defined by the loop. 

For superconducting circuits the branch-flux variable also directly relates to the superconducting phase drop over the piece of superconducting material representing the branch; we will come back to this in Section~\ref{subsec:joscpb} and later.

\begin{figure}
    \centering
    \includegraphics[height=4 cm]{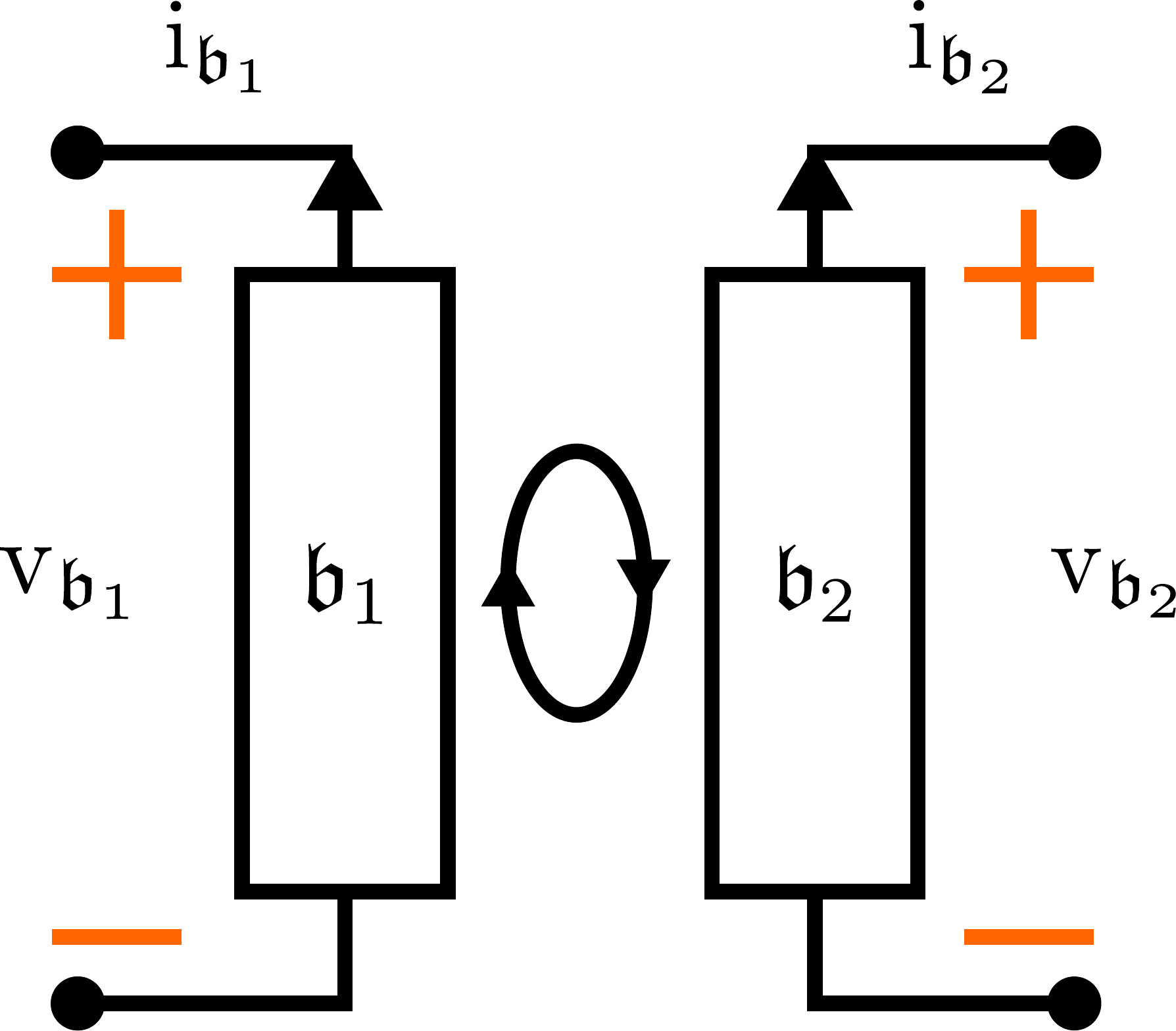}
    \caption{General two-port, or four-terminal, element.}
    \label{fig:twop_network}
\end{figure}

 \section[Inductive and capacitive branches]{Constructing the Lagrangian: inductive and capacitive branches}
\label{sec:ind-cap}

 For a given electrical circuit and its graph, the first task is to write down the Lagrangian of the circuit in terms of our chosen variables $\Phi_{\mathfrak{b}}(t)$ and their time derivatives $\dot{\Phi}_{\mathfrak{b}}(t)$. These variables will be the analog of positions and velocities in the mechanical setting. We can obtain the Lagrangian by considering the potential or kinetic energy stored in each conservative element. For the circuit elements that we consider in this chapter, the Lagrangian will be of the form
\begin{equation*}
 \lagrangian=T(\{\dot{\Phi}_\mathfrak{b}\})-U(\{\Phi_\mathfrak{b}\}),
 \end{equation*}
 where $T$ plays the role of the kinetic energy and $U$ the role of the potential energy. We refer to Appendix~\ref{app:cc} for a general formalism which describes how one obtains a Hamiltonian
 from a Lagrangian and how one then quantizes the system.
 The reason to use branch variables in $\lagrangian$ is that energy is stored in branches (not in nodes), but, as we will see later, it is also possible to obtain formulations in terms of independent node variables.
 
In order to write down the Lagrangian of the electrical circuit, we have to express the potential energy stored in the inductive branch elements and the kinetic energy stored in the capacitive branch elements at some time $t$.

Let us start our analysis with the inductive elements. A current $\ib(t)$ that flows on a branch for an infinitesimal time $d t$ transfers a charge $dt \ib(t)$ from the finish to the start node in Fig.~\ref{fig:el_conv}. This charge experiences a change of potential $\vb(t)$ and thus the infinitesimal work done ``on" the element is $d W = dt\, \ib(t) \vb(t)$. The total energy stored in the element is 

\begin{equation}
U_{\mathfrak{b}}(t)=\int_{-\infty}^t dt'  \ib (t') \vb(t').
\label{eq:def-U}
\end{equation}
For a general inductive branch one has the constitutive relation 
\begin{equation}
\label{eq:indbranch}
\ib(t)=g(\branchflux{}(t)),
\end{equation} 
where $g$ is some function. For a linear inductor, the function $g$ is linear, i.e.,~
\begin{equation}
\ib(t)=\frac{\Phi_{\mathfrak{b}}(t)}{L},
\label{eq:ind}
\end{equation}
with inductance $L$ (unit: henry=weber/ampere). For a piece of coil, the inductance can be the magnetic self-inductance (related to the number of windings of the coil) plus a so-called kinetic inductance, which comes from the kinetic energy of the charge carriers. The origin of the kinetic inductance for superconductors is that the superconducting phase variable $\varphi(t)$ (associated with the flux) varies over a piece of superconducting wire or a branch (as it can vary between two nodes in a lumped circuit). The energetic cost of this variation is captured by the kinetic inductance of the superconducting material, see also Section~\ref{subsec:fluxonium}.

Generally, the (magnetic) energy stored in an inductive branch is thus
\begin{equation}
U_{\mathfrak{b}}(t)=\int_{-\infty}^t  dt'g(\Phi_{\mathfrak{b}}(t'))\frac{d \Phi_{\mathfrak{b}}(t')}{dt'}  =\int_{\Phi_{\mathfrak{b}}(t=-\infty)}^{\Phi_{\mathfrak{b}}(t)}  d\Phi_{\mathfrak{b}}g(\Phi_{\mathfrak{b}})=G(\Phi_{\mathfrak{b}}(t)),
\label{eq:genpot}
\end{equation}
with primitive (or antiderivative) function $G(\Phi_{\mathfrak{b}}(t))$. Hence, for a linear inductor, one has 

\begin{equation}
U_{\mathfrak{b}}(t)=G(\Phi_b(t))=\frac{\Phi_{\mathfrak{b}}^2(t)}{2L}. 
\end{equation}

A general capacitive branch is defined as one in which the current is given by
\begin{equation}
\ib(t)= \frac{df(\vb(t))}{dt},
\end{equation}
 where $f$ is some function. For a linear capacitor the function $f$ is linear, namely 

 \begin{equation}
    \ib(t) = C \vb(t), 
 \end{equation}
 with capacitance $C > 0$ (unit: farad=coulomb/volt). This definition gives the well-known relation for a linear capacitor $Q_{\mathfrak{b}}=C \vb$ where the charge is defined as $Q_{\mathfrak{b}}(t)=\int _{-\infty}^t dt' \ib(t')$ or $C d \vv_{\mathfrak{b}}(t)/dt=d Q_{\mathfrak{b}}(t)/dt=\ib(t)$. 

For a linear capacitor we can write its kinetic energy as
\begin{align}
T_{\mathfrak{b}}(t)&=\int_{-\infty}^t dt'\frac{d f(\vb(t'))}{dt'} \vb(t') =\int_{\vb(t=-\infty)}^{\vb(t)} d \vb \frac{d f(\vb)}{d\vb}\vb  =\frac{1}{2} C\vb^2(t)=\frac{1}{2} C \dot{\Phi}_{\mathfrak{b}}^2(t),
\label{eq:en-cap}
\end{align} 
which can also be written as $T_{\mathfrak{b}}(t)=\frac{Q_{\mathfrak{b}}^2(t)}{2C} $. 
The kinetic energy can also be viewed as the integral of the amount of work done to induce a charge change $d Q_{\mathfrak{b}}$ given the current voltage $\vb=Q_{\mathfrak{b}}/C$. 

In what follows we omit writing the time dependence of the dynamical variables explicitly. We will write the time dependence explicitly only inside integral symbols and when needed for clarity.
 
\subsection{The LC oscillator and its canonical quantization}
\label{subsec:lc}

The parallel LC oscillator consists of a capacitor $C$ in parallel with an inductor $L$, see Fig.~\ref{fig:LCa}. Clearly, the voltages across the capacitive and the inductive branches are not independent variables since they are the same due to Kirchhoff's law for voltages, and thus there is only one independent branch variable, $\Phi_{\mathfrak{b}}$. This branch variable can be written as the difference between two node variables $\Phi_{\mathfrak{b}} = \Phi - \Phi_{\mathrm{ground}}$, where $\Phi$ is the node flux on the top node and $\Phi_{\mathrm{ground}}$ the node flux on the bottom (ground) node in Fig.~\ref{fig:LCa}. Since only the difference between the node fluxes matters, we can arbitrarily set $\Phi_{\mathrm{ground}} =0$, in which case we get simply $\Phi_{\mathfrak{b}} = \Phi$. The capacitor then contributes the kinetic energy $T=\frac{C}{2} \dot{\Phi}^2$ and the inductor contributes the potential energy $U=\frac{\Phi^2}{2L}$. We get the Lagrangian

\begin{equation}
\label{eq:lagr_lc}
\lagrangian=T-U=\frac{C}{2} \dot{\Phi}^2-\frac{\Phi^2}{2L}.
\end{equation}
Now we use the method described in Appendix~\ref{app:cc} to determine the Hamiltonian. In Appendix~\ref{app:cc} we can read that this requires that $\lagrangian$ is a convex function of $\dot{\Phi}$ which is clearly the case here.
So we define the conjugate variable 
\begin{equation}
Q=\frac{\partial \lagrangian}{ \partial \dot{\Phi}}=C \dot{\Phi},
\end{equation}
which leads to
\begin{equation}
\hamiltonian=\frac{Q^2}{2C} +\frac{\Phi^2}{2L}.
\end{equation} 
Note that this conjugate variable is precisely what we defined as the charge variable above, but here its definition comes about formally, not physically \footnote{For more general elements the conjugate variable defined through the Legendre transformation in going from the Lagrangian to the Hamiltonian may not be identifiable with a physical charge.}.

This Hamiltonian is like that of a mechanical harmonic oscillator with $\Phi$ as position, and $Q$ as momentum, $C$ playing the role of the mass and $L^{-1}$ being the spring constant. The classical equation of motion, i.e., the Euler-Lagrange equation associated with the Lagrangian $\lagrangian$, is 
\begin{equation}
\frac{d}{dt}\left(\frac{\partial\lagrangian}{\partial\dot{\Phi}}\right)-\frac{
\partial\lagrangian}{\partial\Phi}=0.
\end{equation}
For the Lagrangian in Eq.~\eqref{eq:lagr_lc} this gives 
\begin{equation}
C \ddot{\Phi}+ \frac{\Phi}{L}=0,
\end{equation}
which is analogous to the equation of motion of a spring.
In this electrical context we can also interpret this equation as Kirchhoff's current law on the current at the top node with the current through the inductor ($\Phi/L$) equal in magnitude to the current through the capacitor ($C \ddot{\Phi}$) with opposite sign. Notice that if Kirchhoff's current law is satisfied on the top node, then it is automatically satisfied on the bottom node. In our formalism, this is the reason why we can always set the node flux of an arbitrary node, the ground node, to zero, effectively removing one variable. 

Having identified the conjugate variables, we can quantize our system and replace $Q$ by $\hat{Q}$, and $\Phi$ by $\hat{\Phi}$, i.e., by Hermitian operators which take eigenvalues in $\mathbb{R}$ that satisfy the commutation relation
\begin{equation}
  [\hat{\Phi}, \hat{Q}]=i \hbar \mathds{1}.  
\end{equation} 

Let us also introduce useful dimensionless
charges and flux variables, which will be used throughout this book. They are defined as 
\begin{subequations}
\begin{equation}
   \phi = \frac{2 \pi \Phi}{\Phi_0},
   \label{phidef}
\end{equation}
\begin{equation}
   q  =\frac{Q}{2e},
\end{equation}
\label{eq:dimless}
\end{subequations}
with 
\begin{equation}
\Phi_0 = \frac{h}{2 e}
\end{equation}
the superconducting flux quantum ($\Phi_0 \approx 2 \times 10^{-15}$ weber). In terms of these variables the Hamiltonian becomes
\begin{equation}
\mathcal{H} = 4 E_C q^2 + \frac{E_L}{2} \phi^2,
\label{eq:cpb_Ham}
\end{equation}
where we define the charging energy $E_C$ 
\begin{equation}
   E_C = \frac{e^2}{2 C}, 
\label{eq:char_en}
\end{equation}
and the inductive energy $E_L$ 
\begin{equation}
   E_L = \frac{\Phi_0^2}{4 \pi^2 L}.
\label{eq:ind_en}
\end{equation} 
 
The commutation relation for the rescaled quantum operators $\hat{\phi}$ and $\hat{q}$ reads

\begin{equation}
  [\hat{\phi}, \hat{q}]=i \mathds{1}.   
   \label{eq:cr}
\end{equation} 

The quantum states of the system, i.e. 
\[\ket{\psi}=\int_{\mathbb{R}}d\phi \;\psi(\phi)\ket{\phi},
\]
correspond to square-integrable functions $\ell^2(\mathbb{R})$ with $\int_{\mathbb{R}}d\phi \;|\psi(\phi)|^2 < \infty$. 

As is standard for the quantum harmonic oscillator, one can introduce the annihilation operator 
\begin{equation}
\hat{a}=\frac{1}{\sqrt{2 L \hbar \omega_r}} \hat{\Phi}+i \frac{1}{\sqrt{2C \hbar \omega_r}} \hat{Q},\label{eq:annih}
\end{equation}
with $\omega_r$ the resonant frequency of the LC oscillator defined as

\begin{equation}
    \omega_r = \frac{1}{\sqrt{L C}}.
\end{equation}
The annihilation $\hat{a}$ and creation $\hat{a}^{\dagger}$ satisfy the bosonic commutation relations

\begin{equation}
[\hat{a}, \hat{a}^{\dagger}]=\mathds{1}.
\end{equation}
Using the operators $\hat{a}$ and $\hat{a}^{\dagger}$, the Hamiltonian can be rewritten as
\begin{equation}
H=\hbar \omega_r \biggl(\hat{a}^{\dagger}\hat{a}+\frac{1}{2} \biggr).
\label{eq:harmosc}
\end{equation}
In this book, when we quantize an (an)harmonic oscillator, we will often omit the vacuum energy term $\hbar \omega_r/2$ in Eq.~\eqref{eq:harmosc}.

We can observe that
\begin{subequations}
\begin{equation}
\hat{\Phi} = \Phi_{\rm zpf} (\hat{a}^\dagger + \hat{a}) =  \sqrt{\frac{\hbar Z_0}{2}} (\hat{a}^\dagger + \hat{a}), 
\end{equation}
\begin{equation}
\hat{Q} = i Q_{\rm zpf}(\hat{a}^{\dagger} - \hat{a})= i \sqrt{\frac{\hbar }{2 Z_0}} (\hat{a}^{\dagger} - \hat{a}),
\end{equation}
\label{eq:chargeflux_aadag}
\end{subequations}
with $Z_0 = \sqrt{L/C}$ the characteristic impedance of the oscillator.

The coefficients 
\begin{equation}
    \Phi_{\rm zpf}=\sqrt{\frac{\hbar}{2C \omega_r}}=\sqrt{\frac{\hbar Z_0}{2}}, \;\;Q_{\rm zpf}=\sqrt{\frac{\hbar C \omega_r}{2}}=\sqrt{\frac{\hbar}{2Z_0}},
    \label{eq:zpf}
\end{equation} 
represent the zero point fluctuations (zpf) of the flux and charge variable, respectively. They correspond to the standard deviation of the corresponding variables in the vacuum state, since $\ket{0}$ is $\bra{0} (\Delta\hat{\Phi})^2 \ket{0}=\Phi_{\rm zpf}^2$ and $\bra{0} (\Delta \hat{Q})^2 \ket{0}=Q_{\rm zpf}^2$, together obeying Heisenberg uncertainty $(\Delta \hat\Phi)^2 (\Delta \hat Q)^2 \geq \hbar^2/4$.

\begin{figure}
\centering
\begin{subfigure}[t]{0.45 \textwidth}
\centering
\includegraphics[height=4cm]{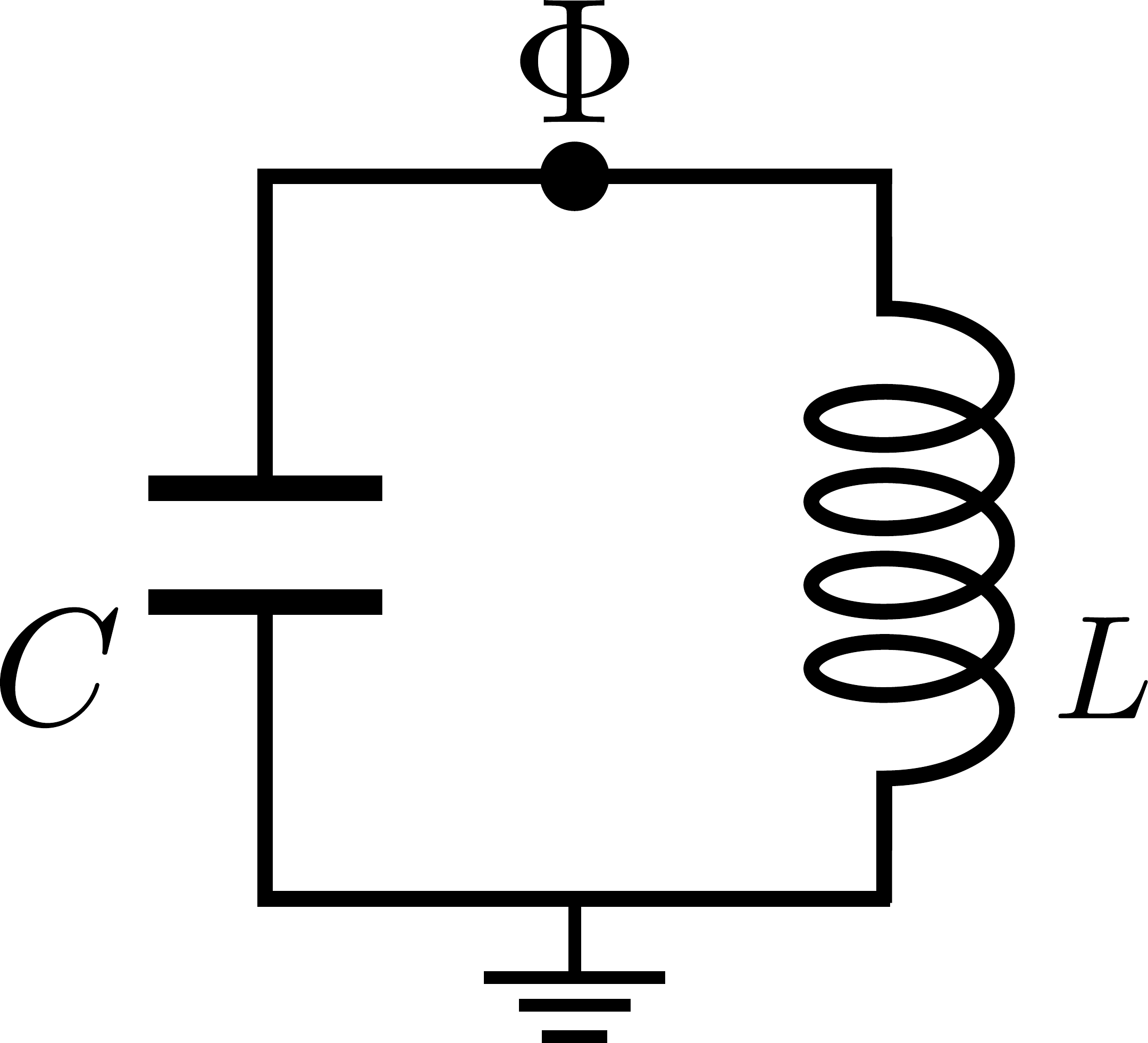}
\subcaption{}
\label{fig:LCa}
\end{subfigure}
\begin{subfigure}[t]{0.45 \textwidth}
\centering
\includegraphics[height=4cm]{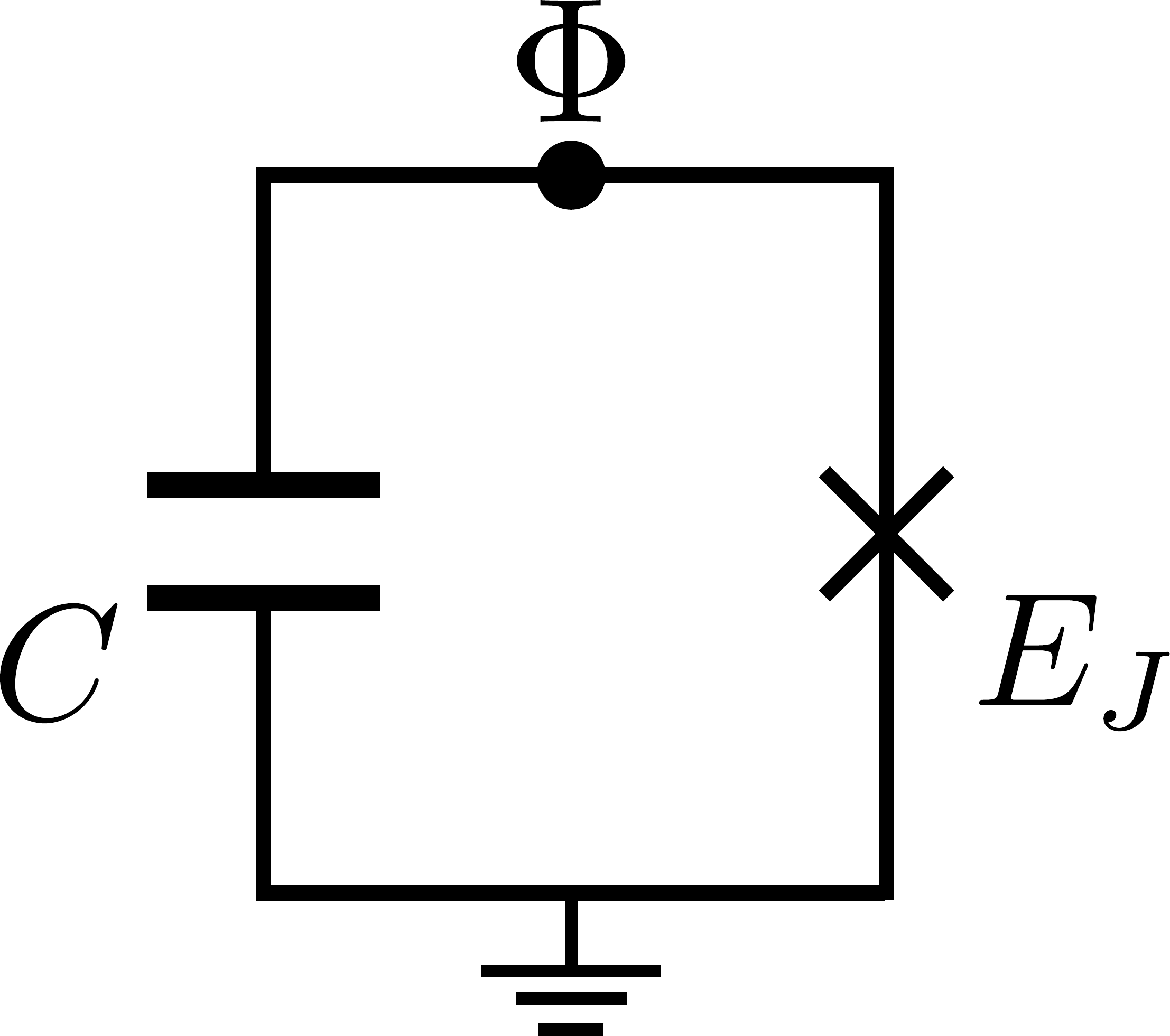}
\subcaption{}
\label{fig:LCb}
\end{subfigure}
\caption{(a) LC oscillator. (b) Replacing the inductor in the LC oscillator by a Josephson junction. The ground in these figures merely represents our arbitrary choice of reference node.}
\label{fig:LC}
\end{figure}

For completeness, we can also express the rescaled operators $\hat{\phi}$ and $\hat{q}$ in terms of annihilation and creation operators:
\begin{subequations}
\begin{equation}
\hat{\phi} = \phi_{\rm zpf} (\hat{a} + \hat{a}^{\dagger}) =  \biggl(\frac{2 E_C}{E_L}\biggr)^{\nicefrac{1}{4}}(\hat{a} + \hat{a}^{\dagger}), 
\label{eq:zpf-reduced}
\end{equation}
\begin{equation}
\hat{q} = i q_{\rm zpf}(\hat{a}^{\dagger} - \hat{a})= \frac{i}{2}\biggl (\frac{E_L}{2 E_C}\biggr)^{\nicefrac{1}{4}}(\hat{a}^{\dagger} - \hat{a}).
\end{equation}
\label{eq:zpf_resc}
\end{subequations}

\noindent\fbox{
    \parbox{\textwidth}{
        \centerline{\bf Why do we need a quantum description?} 
        The superconducting structures are made of metals such as aluminium (Al), niobium (Nb), tantalum (Ta) and niobium-titanium-nitride (NbTiN) patterned on, say, silicon substrates; these chips are cooled to tens of milli-Kelvins in dilution fridges, which is far below their superconducting transition temperature $T_c$. Due to being a superconducting instead of a normal metal, very little dissipation occurs; an oscillator can thus oscillate for at least $Q=10^5$ rounds or more. The dilution fridge temperature is much lower than the microwave frequency (using the conversion 50mK $\approx$ 1GHz) which implies that thermal excitations have a relatively small effect. At the same time, microwave photons are much less energetic than the superconducting gap (say, $T_c=1.1$K for bulk Al at zero magnetic field and zero pressure \cite{kittel}). Thus if we apply microwaves at very low intensity ---done by strongly attenuating room-temperature generated pulses--- we don't break superconductivity. Hence, we are in the regime of few excitations and few photons, necessitating a quantum description employing quantized energy levels. See more about loss and noise in Section \ref{sec:loss} and Chapter \ref{chap:noise}.}}

\subsection{The Josephson junction and the Cooper pair box}
\label{subsec:joscpb}

\begin{figure}
\centering
\includegraphics[height=5cm]{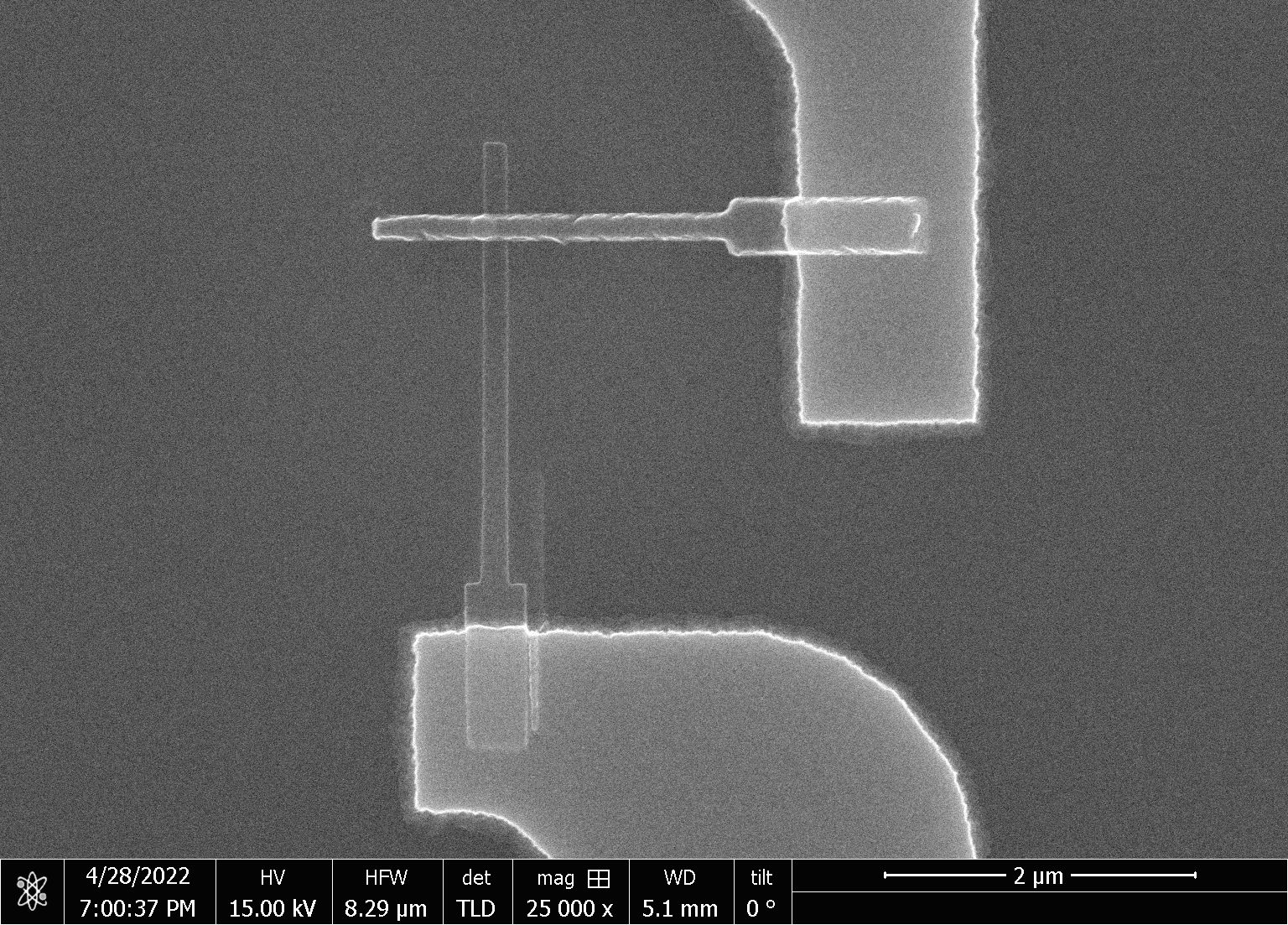}
\caption{So-called Manhattan-style Josephson junction from Ref.~\cite{nand}: the junction is formed where the two vingers overlap.}
\label{fig:JJ}
\end{figure}

A new inductive element for superconductors is the Josephson junction, see an example device in Fig.~\ref{fig:JJ}, for which the function $g(\cdot)$ in Eq.~\eqref{eq:indbranch} is nonlinear. A textbook introduction to the Josephson effect usually expresses the Josephson relations in terms of the superconducting phase difference $\varphi$ between two superconductors \cite{tinkham}, and thus not directly in terms of branch flux variables. In fact, the first Josephson relation reads
\begin{equation}
\ib=I_c \sin \varphi,
\label{eq:JJ-relation-phi}
\end{equation}
with $I_c >0$ the critical current, that is, the maximum current that can flow through the junction. However, the connection between the superconducting phase difference $\varphi$ and the branch flux variable $\Phi_{\mathfrak{b}}$ follows from the second Josephson relation:
\begin{equation}
    \frac{d\varphi}{dt}=\frac{2\pi}{\Phi_0}\vb. 
\end{equation}

If we identify $\varphi \equiv 2 \pi \Phi_{\mathfrak{b}}/\Phi_0$, we see that the second Josephson relation translates to the definition $\dot{\Phi}_{\mathfrak{b}}=\vb$ while the first Josephson relation gives a constitutive relation for the Josephson junction as a nonlinear inductor:
\begin{equation}
\ib=I_c \sin\left(\frac{2 \pi }{\Phi_0} \Phi_{\mathfrak{b}}\right).
\label{eq:JJ-relation}
\end{equation}
Observe that a possibly more precise modeling of a Josephson junction could only affect the constitutive relation in Eq.~\eqref{eq:JJ-relation} but not the second Josephson relation, as, given the relation of phase with flux, it is simply a definition. However, the identification $\varphi \equiv 2 \pi \Phi_{\mathfrak{b}}/\Phi_0$ cannot be entirely correct, as $\varphi$ is a phase in $[0,2\pi)$ while $\Phi$ takes values in $\mathbb{R}$. We will come back to this in Section~\ref{subsec:fp} in greater detail.
 
In~\cref{eq:JJ-relation-phi,eq:JJ-relation} the critical current $I_c$ of the Josephson junction depends on the junction's material, thickness and area. The critical current can be estimated from the so-called Ambegaokar-Baratoff formula \cite{ambegaokar1963} at zero temperature $I_c=\frac{\pi \Delta}{2 e R_n}$ with superconducting gap $\Delta$, and normal barrier resistance $R_n$. Variations in the barrier thickness ---the junction is about 100~nm~x~100~nm with only O(1)~nm thickness--- give variations in $R_n$ of $2\%$ or more~\cite{IBMannealing}, leading to variations in $I_c$ and thus the qubit parameters which depend on it.

The energy stored in a Josephson-junction branch can be derived using Eqs.~\eqref{eq:def-U},\eqref{eq:JJ-relation}:
\begin{equation}
U_J=-E_J\cos \biggl(\frac{2 \pi}{\Phi_0} \Phi_{\mathfrak{b}}\biggr),
\end{equation}
 with the Josephson energy
\begin{equation}
E_J=\frac{\Phi_0 I_c}{2\pi}.
\label{eq:defineEJ}
\end{equation}
Note that we have dropped a constant term $E_J$ in $U_J$. For later convenience we also define the Josephson inductance as
\begin{equation}\label{eq:jos_ind}
L_J = \frac{\Phi_0^2}{4 \pi^2 E_J}.
\end{equation}
If we Taylor expand the energy $U_J$ around $\Phi_{\mathfrak{b}}(t)=0$ up to second order and neglect a constant term we get $U_J=  \frac{\Phi_{\mathfrak{b}}^2}{2L_J}$, motivating the definition of $L_J$.

We can consider what happens when we replace the linear inductor by a Josephson junction, see Fig.~\ref{fig:LCb}, obtaining a so-called Cooper pair box (CPB). As in Section~\ref{subsec:lc} we can identify the branch flux $\Phi_{\mathfrak{b}}$ with the node flux $\Phi$ setting the ground node flux to zero, and write the Lagrangian of the CPB as
\begin{equation}
\lagrangian=\frac{1}{2} C \dot{\Phi}^2+E_J \cos\left(\frac{2\pi }{\Phi_0} \Phi\right),
\label{eq:Ltrans}
\end{equation}
with equation of motion (Euler-Lagrange equation)
\begin{equation}
C \ddot{\Phi}+I_c \sin\left(\frac{2\pi }{\Phi_0} \Phi \right)=0,
\end{equation}
which again corresponds to Kirchhoff's current law.

The system has an immediate mechanical analogy as a pendulum in a gravitational field. In fact, we can introduce a phase variable, i.e.,~$2\pi \Phi/\Phi_0=\varphi+2\pi k$ with phase $\varphi \in [0,2\pi)$ and integer `winding number' $k$. This system then is identical to a pendulum swinging in a gravitational field, where $\varphi$ is the angle between the pendulum and the $z$-axis, letting $F=-\frac{g}{l} \sin(\varphi)$ be the gravitational force with $l$ the length of the pendulum. The equation of motion of such a pendulum is
\begin{equation}
    \ddot{\varphi}=-\frac{g}{l}\sin \varphi.
\end{equation}
One can thus identify $\frac{2 \pi I_c}{\Phi_0 C}$ with $g/l$. The integer $k$ then keeps track of how many times the pendulum has swung full circle. 

To quantize the system in Eq.~\eqref{eq:Ltrans}, similar as for the LC oscillator, we can define the conjugate variable $Q =\partial \lagrangian/\partial \dot{\Phi}=C \dot{\Phi}$ and promote variables to operators satisfying the commutation relation $[\hat{\Phi}, \hat{Q}] = i \hbar \mathds{1}$. The quantum Hamiltonian then reads
\begin{equation}
H = \frac{\hat{Q}^2}{2 C} - E_J \cos \biggl(\frac{2 \pi}{\Phi_0} \hat{\Phi} \biggr).
\end{equation}

Introducing the rescaled flux $\hat{\phi}$ and the rescaled charge $\hat{q}$ operators, as in Section~\ref{subsec:lc}, the Hamiltonian can be written as
\begin{equation}\label{eq:hcpb}
H = 4 E_C \hat{q}^2 - E_J \cos \hat{\phi},
\end{equation}
where we have defined the charging energy $E_C$ as in Eq.~\eqref{eq:char_en}. The spectrum of the CPB Hamiltonian and its interpretation are treated in detail in Section~\ref{sec:cpbspectrum}. 

In Fig.~\ref{fig:LCb}, the Josephson junction is shown in parallel with a capacitance and the cross representing the junction only accounts for the Josephson potential. However, any Josephson junction always comes with its own capacitance $C_J$, independently of there being an additional shunting capacitance in parallel. In the literature, to depict this situation one often finds the Josephson junction represented as in Fig.~\ref{fig:jjcj_symbol}. One could thus have a shunting capacitor $C_{s}$ in parallel to increase the effective capacitance $C = C_J +C_s$. While from a circuit theory point of view nothing changes, these considerations matter in the design of a superconducting qubit, especially for the case of the transmon qubit: in order to enter the transmon regime characterized by a relatively large ratio $E_J/E_C$, the intrinsic capacitance $C_J$ is not sufficient and an additional capacitive shunt is necessary.  

\begin{figure}[htbp]
\centering
\includegraphics[height=4cm]{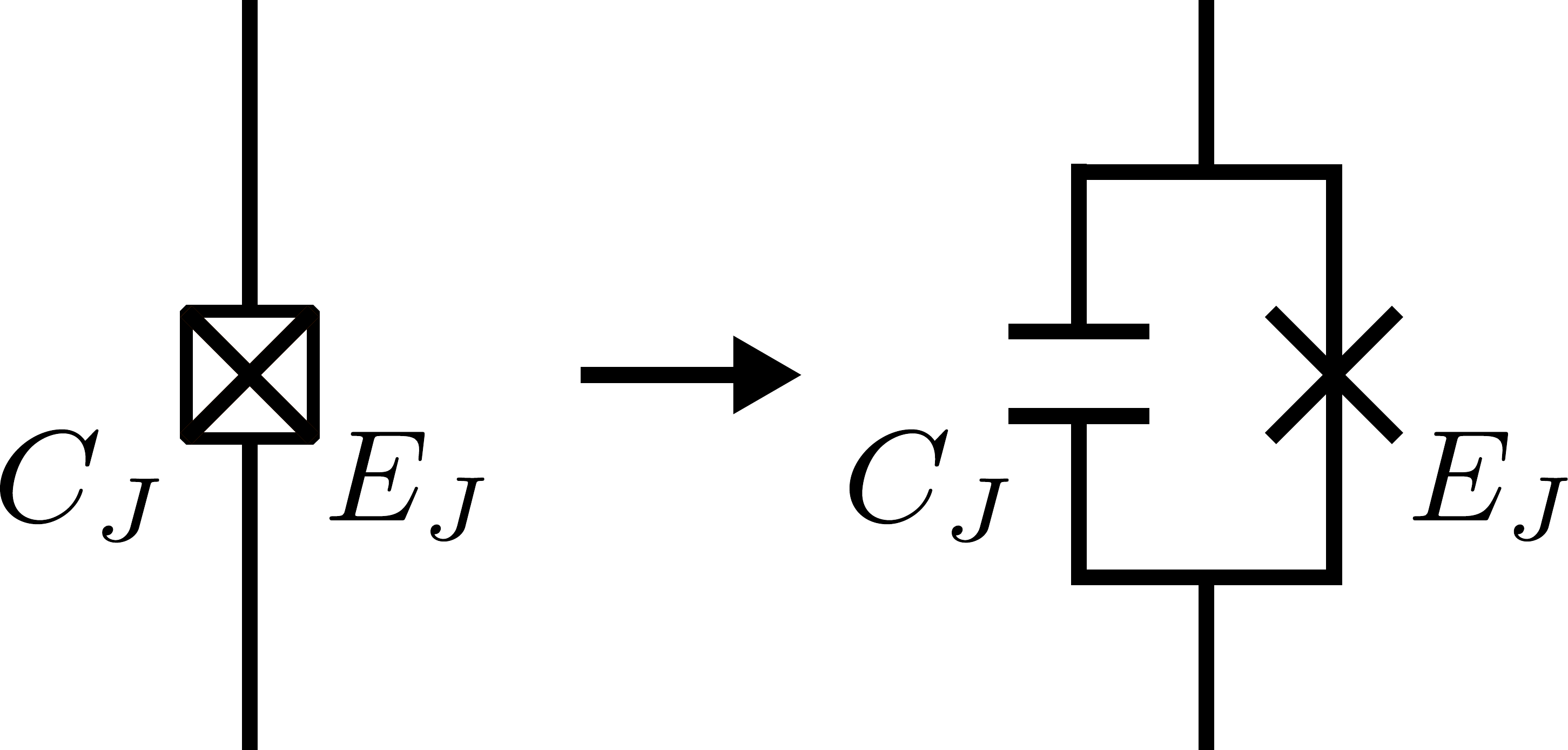}
\caption{Compact symbol representing a Josephson junction with Josephson energy $E_J$ with its intrinsic capacitance $C_J$.}
\label{fig:jjcj_symbol}
\end{figure}

    In the examples of the LC oscillator and the CPB, we have seen that we have identified the energy associated with the capacitive elements as the ``kinetic" energy, while the inductive contributions are seen as the ``potential" energy in the mechanical analogy. However, we point out that there is no a priori reason to make this identification. For example, in Ref.~\cite{UH:dual} a (dual) quantization method is described which takes classical charge variables $Q_{\mathfrak{b}}$ and their time-derivatives $\dot{Q}_{\mathfrak{b}}$ as dynamical variables in the Lagrangian in order to deal with a phase-slip junctions which have a non-convex energy contribution $\sim \cos(\pi Q_{\mathfrak{b}}(t)/e)$. 
Since the non-convex term then appears in the potential part of the Lagrangian, one can follow the canonical quantization method and associate a conjugate variable with $Q_{\mathfrak{b}}$. This however requires that the circuit contains no elements which are nonlinear in $\ib=\dot{Q}_{\mathfrak{b}}$, hence no Josephson junctions. We refer the reader to Refs.~\cite{osborne2023, parra2023} for a symplectic approach to quantize circuits with nonlinear elements in both flux and charge degrees of freedom. \\

\noindent\fbox{
    \parbox{\textwidth}{
        \centerline{\bf Size matters} 
        As you will see in some of the figures of devices in this book, cf.~Fig.~\ref{fig:riste}, superconducting qubits such as the transmon qubits, are not small, but have a linear footprint of hundreds of $\mu$meters. Compare this to modern CMOS chips with hundreds of millions of transistors per square mm! If one wants to make large chips with millions of qubits or more \cite{ibm:arch}, size becomes an issue. Size constraints on superconducting qubits arise from structures having to be resonant at microwave frequencies. To get a sense of scale, the wavelength corresponding to a frequency $f=10$GHz with speed of light $\mathfrak{c}$ is $\lambda=\frac{\mathfrak{c}}{f}\approx 3$ cm.}}

\section{Mutual inductances and the ideal transformer}
\label{sec:mi}

An electrical circuit can also include a mutual inductance between two inductive branches as in Fig.~\ref{fig:mutual_inductance}. The origin of such mutual inductance is simply the fact that two current-carrying wires can attract or repel each other by inducing a local magnetic field. We refer the reader to Ref.~\cite{alexanderSadiku}, Chapter~$13$ for a basic introduction to magnetically coupled circuits. Here we focus on how to treat them in the Lagrangian formalism. If have two branches $\mathfrak{b}_1$ and $\mathfrak{b}_2$, one has

\begin{eqnarray}
\left( \begin{array}{c}\Phi_{\mathfrak{b}_1}\\ \Phi_{\mathfrak{b}_2}\end{array}\right) & = & 
\left(\begin{array}{cc} L_{1} & M \\ 
M & L_2 \end{array} \right)
\left( \begin{array}{c}  \ibm{1} \\ \ibm{2} \end{array}\right),
\label{eq:mutual-2}
\end{eqnarray}
where $L_{1, 2} > 0$ are the self-inductances in Eq.~\eqref{eq:ind} and $M$ is the mutual inductance. The fact that the inductance matrix of the mutual inductor is symmetric is a consequence of the reciprocity of the circuit (see e.g. Ref.~\cite{Griffiths} for concrete expressions for $M$ for current-carrying wires).  We have $M=k\sqrt{L_{1} L_{2}}$ with coupling coefficient $0 \le k < 1$. 
This implies that the determinant of the $2 \times 2$ times matrix in Eq.~\eqref{eq:mutual-2} is $L_{1} L_{2}(1-k) > 0$. Since the trace of the $2 \times 2$ matrix, i.e., ~$L_{1}+L_{2}$, is also positive, this implies that both eigenvalues are positive and the matrix is invertible. The limit $k \rightarrow 1$ is referred to as the \emph{perfect coupling} limit. 

 If we have many inductive branches and several such pairwise interactions, we can thus write
\begin{equation}
\Phibvec=\mat{M} \ibvec,
\label{eq:mutual}
\end{equation}
where $\Phibvec$ is a column vector with all the branch fluxes $\Phi_{\mathfrak{b}}$, and the matrix $\mat{M}$ is a sum of pairwise interactions, each represented as a $2 \times 2$ submatrix, between inductive branches. We note that $\mat{M}$ is a symmetric, positive-definite matrix with positive eigenvalues as it is the sum over $2 \times 2$ submatrices with this property.

As energy contribution to the Lagrangian one obtains
\begin{equation}
U_L=\int_{-\infty}^t dt' \ibvec(t') \cdot \vbvec(t') =\int_{-\infty}^t dt' (\mat{M}^{-1} \Phibvec(t'))^T \frac{d \Phibvec}{dt'}=\frac{1}{2} \Phibvec^T(t) \mat{M}^{-1} \Phibvec(t).
\label{eq:mutual_l}
\end{equation}
We see that the invertibility of the matrix $\mat{M}$, which we have argued, is needed to obtain this expression. In particular, using the inverse of the matrix in Eq.~\eqref{eq:mutual-2}, one gets, for the circuit in Fig.~\ref{fig:mutual_inductance},
\begin{equation}
    U_L=\frac{1}{2(L_1L_2-M^2)}\begin{pmatrix}\Phi_{\mathfrak{b}_1} & \Phi_{\mathfrak{b}_2} \end{pmatrix}
    \left(\begin{array}{rr} L_{2} & -M \\ 
-M & L_1 \end{array}\right)
    \begin{pmatrix}
    \Phi_{\mathfrak{b}_1} \\ \Phi_{\mathfrak{b}_2} \end{pmatrix}.
    \label{eq:ul-ind}
\end{equation}
\\

\begin{figure}
\centering
\begin{subfigure}[t]{0.45 \textwidth}
\centering
\includegraphics[height=4 cm]{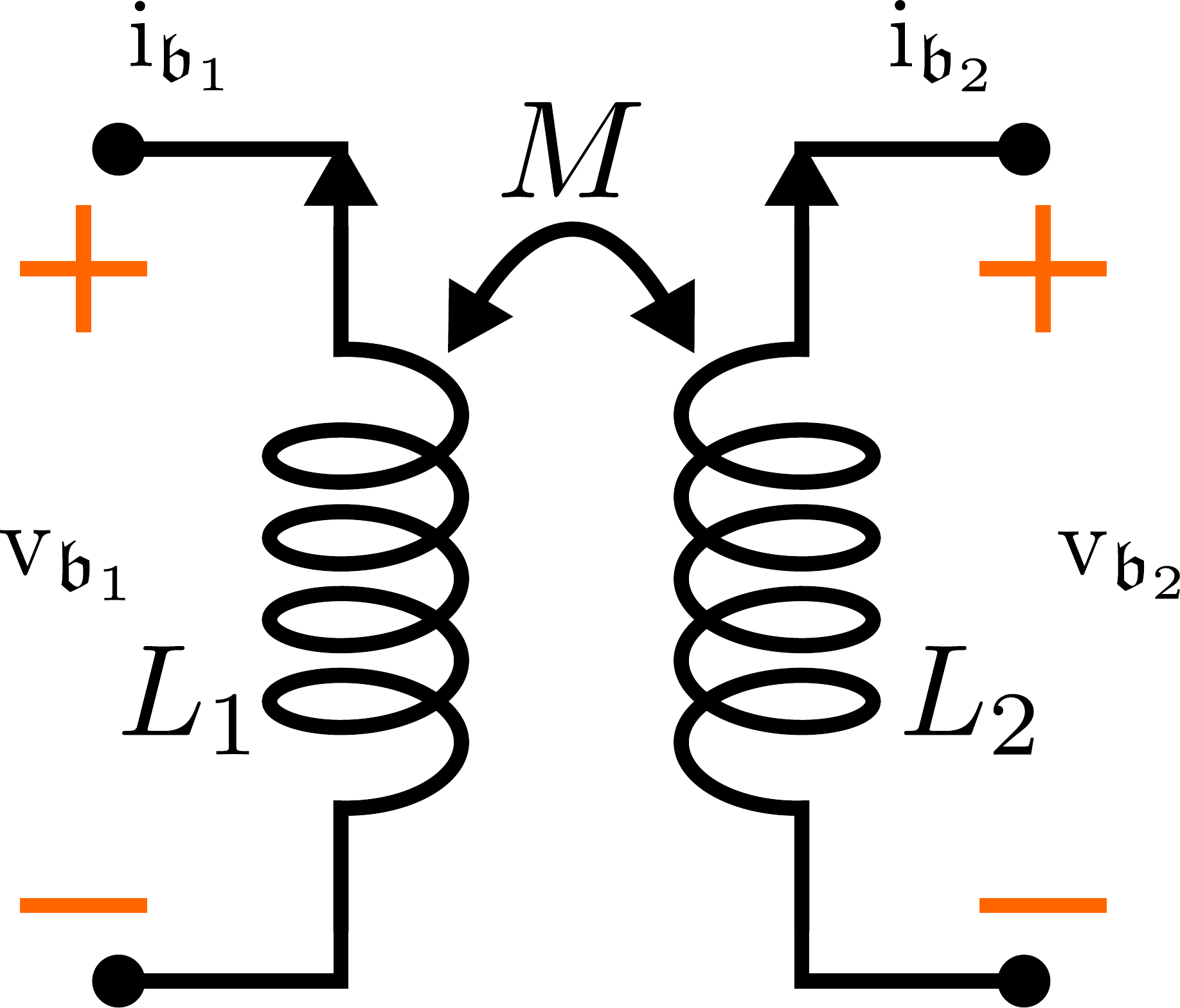}
\subcaption{}
\label{fig:mutual_inductance}
\end{subfigure}
\begin{subfigure}[t]{0.45 \textwidth}
\centering
\includegraphics[height=4 cm]{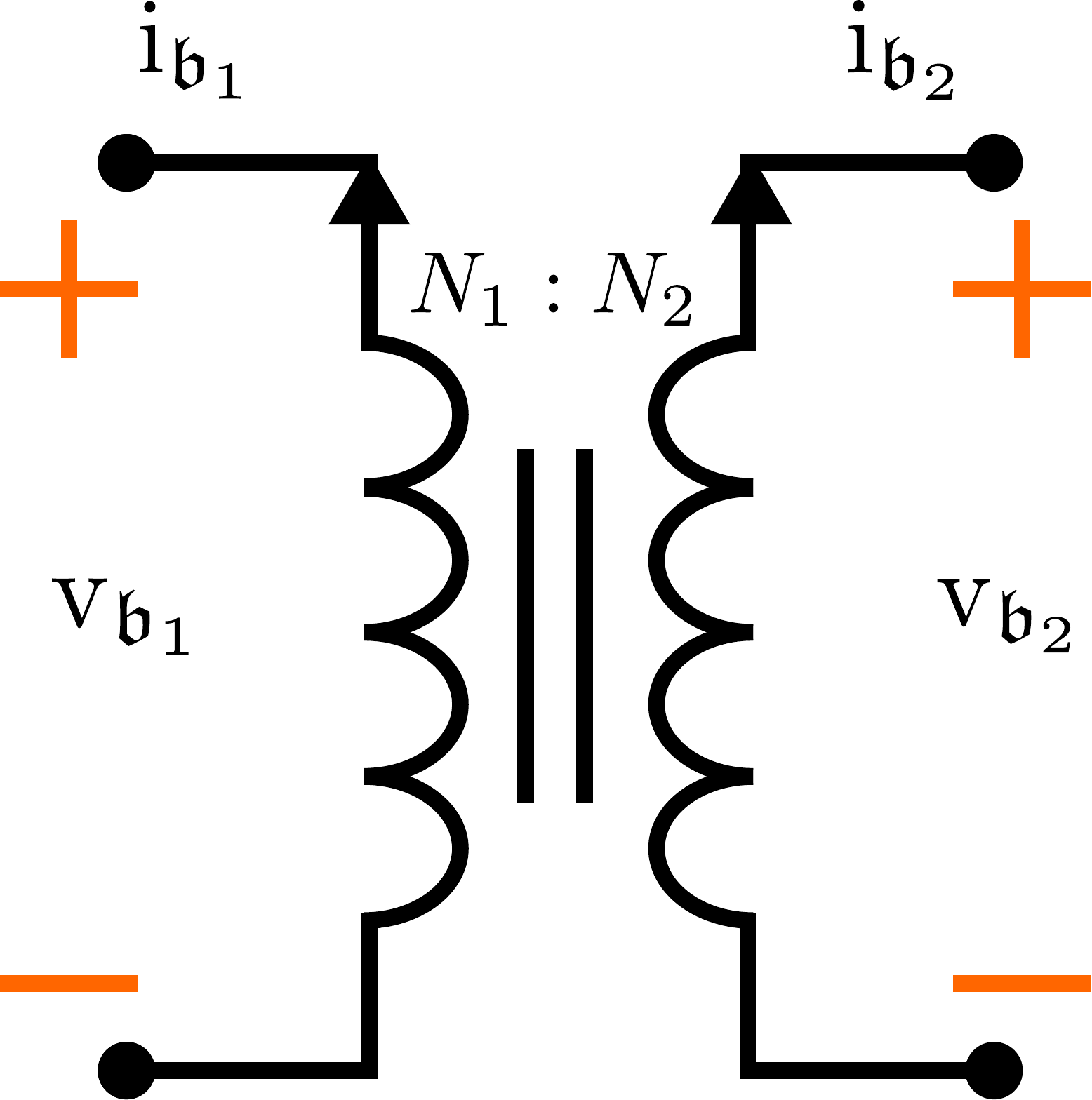}
\subcaption{}
\label{fig:ideal_transf}
\end{subfigure}
\caption{(a) Magnetically coupled inductors with self-inductances $L_{1, 2}$ and a mutual inductance $M$. (b) Circuit symbol of an ideal transformer with turns ratio $t_r = N_2/N_1$.}
\label{fig:mag_coupled_circuits}
\end{figure}

\par 

\subsection{Transformer}
\label{subsec:ideal-trafo}

We now introduce another circuit element, the ideal transformer, which is closely related to a mutual inductance between two inductive branches. Its circuit symbol is shown in Fig.~\ref{fig:ideal_transf}. The ideal transformer is a lossless element that stores no energy. This means that an instantaneous power balance between the two inductive branches must hold, which mathematically translates into the condition
\begin{equation}
\vv_{\mathfrak{b}_1}\ii_{\mathfrak{b}_1} =- \vv_{\mathfrak{b}_2} \ii_{\mathfrak{b}_2},
\label{eq:transform}
\end{equation}
that holds at any fixed time. 

The fact that there is no energy stored in the ideal transformer tells us that we should not expect any energy term associated with it in the Lagrangian formalism for electrical circuits. Instead, the ideal transformer imposes a simple algebraic constraint between the voltages across the terminals, and accordingly, on the corresponding fluxes, that is
\begin{equation}
\vv_{\mathfrak{b}_2} = t_r\vv_{\mathfrak{b_1}} \implies \Phi_{\mathfrak{b}_2} = t_r \Phi_{\mathfrak{b}_1},
\label{eq:id_transf}
\end{equation}
where $t_r \in \mathbb{R}$ is a constant called the turns ratio of the ideal transformer. An ideal transformer can be viewed as a mutual inductance in the limit $L_{1, 2} \rightarrow \infty$ and perfect coupling $k \rightarrow 1$, see Exercise \ref{exc:transf}. If the two inductors are made out of simple coils with number of turns $N_1$ and $N_2$, the parameter $t_r$ in the ideal transformer limit would be $t_r = N_2/N_1$, which justifies the name turns ratio. 

\begin{Exercise}[title={Ideal transformer},label=exc:transf]
Argue that the ideal transformer is indeed the limit $k \rightarrow 1$ and $L_i \rightarrow \infty$ of a mutual inductance circuit as in Fig.~\ref{fig:mutual_inductance} by considering its potential energy in Eq.~\eqref{eq:mutual_l}, in this limit.
\end{Exercise}

\begin{Answer}[ref={exc:transf}]
The eigenvalues of the symmetric inductance matrix $\mat{M}$ are 
\[\lambda_1=\frac{1}{2}(L_1+L_2-\sqrt{(L_1+L_2)^2+4(k^2-1)L_1L_2}) \approx 0,
\]
in the limit $k\rightarrow 1$ with eigenvector $\Phi_1=-\sqrt{\frac{L_2}{L_1}}\Phi_{\mathfrak{b}_1}+\Phi_{\mathfrak{b}_2}$, and $\lambda_2 \rightarrow L_1+L_2$ with eigenvector $\Phi_2=\sqrt{\frac{L_1}{L_2}}\Phi_{\mathfrak{b}_1}+\Phi_{\mathfrak{b}_2}$. So the potential in Eq.~\eqref{eq:mutual_l} equals $U_L=\frac{1}{2}\left(\frac{1}{\lambda_1} \Phi_1^2+\frac{1}{\lambda_2}\Phi_2^2\right)$ and hence there is one (infinitely) steep direction constraining its eigenvector $\Phi_1=0$, which leads to the condition $\Phi_{\mathfrak{b}_2}=\sqrt{\frac{L_2}{L_1}}\Phi_{\mathfrak{b}_1}$. Since $L_i \sim N_i^2$ for an inductor with $N_i$ coil turns, we find Eq.~\eqref{eq:id_transf} with $t_r=N_2/N_1$. Note that since the ideal transformer otherwise should contribute no potential energy, we require that $\lambda_2\rightarrow \infty$ which implies that $L_1 \rightarrow \infty$ and $L_2 \rightarrow \infty$ (but $L_2/L_1$ finite).
\end{Answer}

From Eq.~\eqref{eq:id_transf} we see that an ideal transformer simply imposes a constraint between the branch fluxes, thus reducing the number of degrees of freedom by one. While the ideal transformer is a mathematical abstraction, it is useful as an ideal limit of a real transformers in the context of circuit synthesis. In fact, Cauer's construction for multi-port networks, which we will discuss in Chapter~\ref{chap:ln} and Appendix~\ref{app:norm_mode}, can be interpreted as an ideal circuit of harmonic oscillators coupled to ports via ideal transformers. 

\section{Conservation laws in electrical circuit graphs}
\label{sec:dep}

Electrical circuits generally contain loops and the branch fluxes around a loop cannot be independent variables, as the voltage drop around a loop must be zero. Thus, the branch fluxes and their derivatives, which are useful to express the energy content in branches, are not always {\em independent} variables, while the canonical quantization method in Appendix~\ref{app:cc} requires the variables to be independent. In addition, in some loops one may apply some external magnetic flux (essentially by having another current-carrying circuit nearby and coupling to the loop via a mutual inductance). 
For example, one can have two parallel Josephson junctions in a loop ---a SQUID--- and some flux is threaded through the loop, see Fig.~\ref{fig:flux-JJ}.

In addition, an electrical circuit can contain active elements such voltage and current sources and we would also like to understand how to include these in the Lagrangian. 

\begin{figure}
\centering
\begin{subfigure}[t]{0.45 \textwidth}
\centering
\includegraphics[height=4 cm]{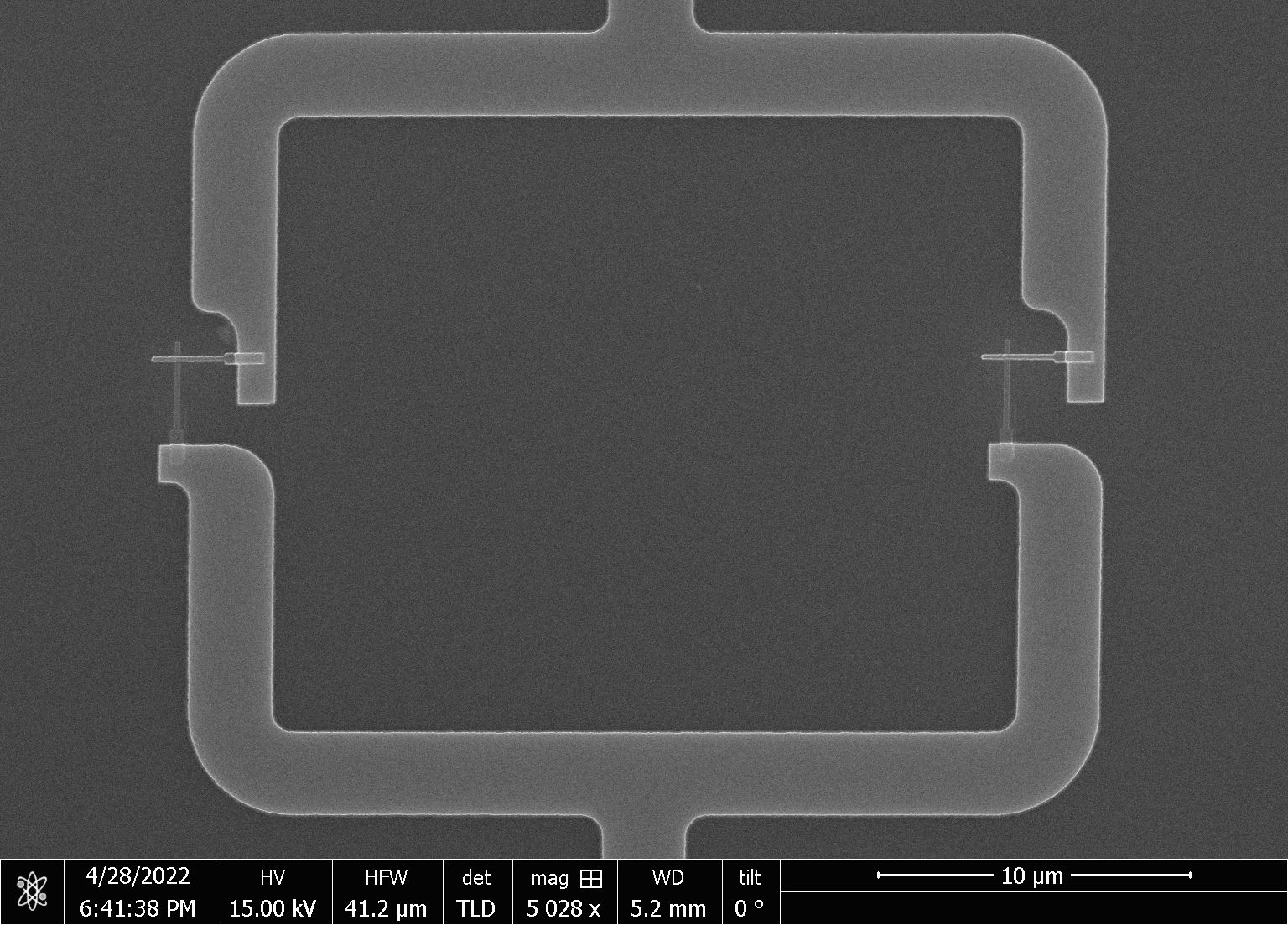}
\subcaption{}
\label{fig:squid}
\end{subfigure}
\begin{subfigure}[t]{0.45 \textwidth}
\centering
\includegraphics[height=4 cm]{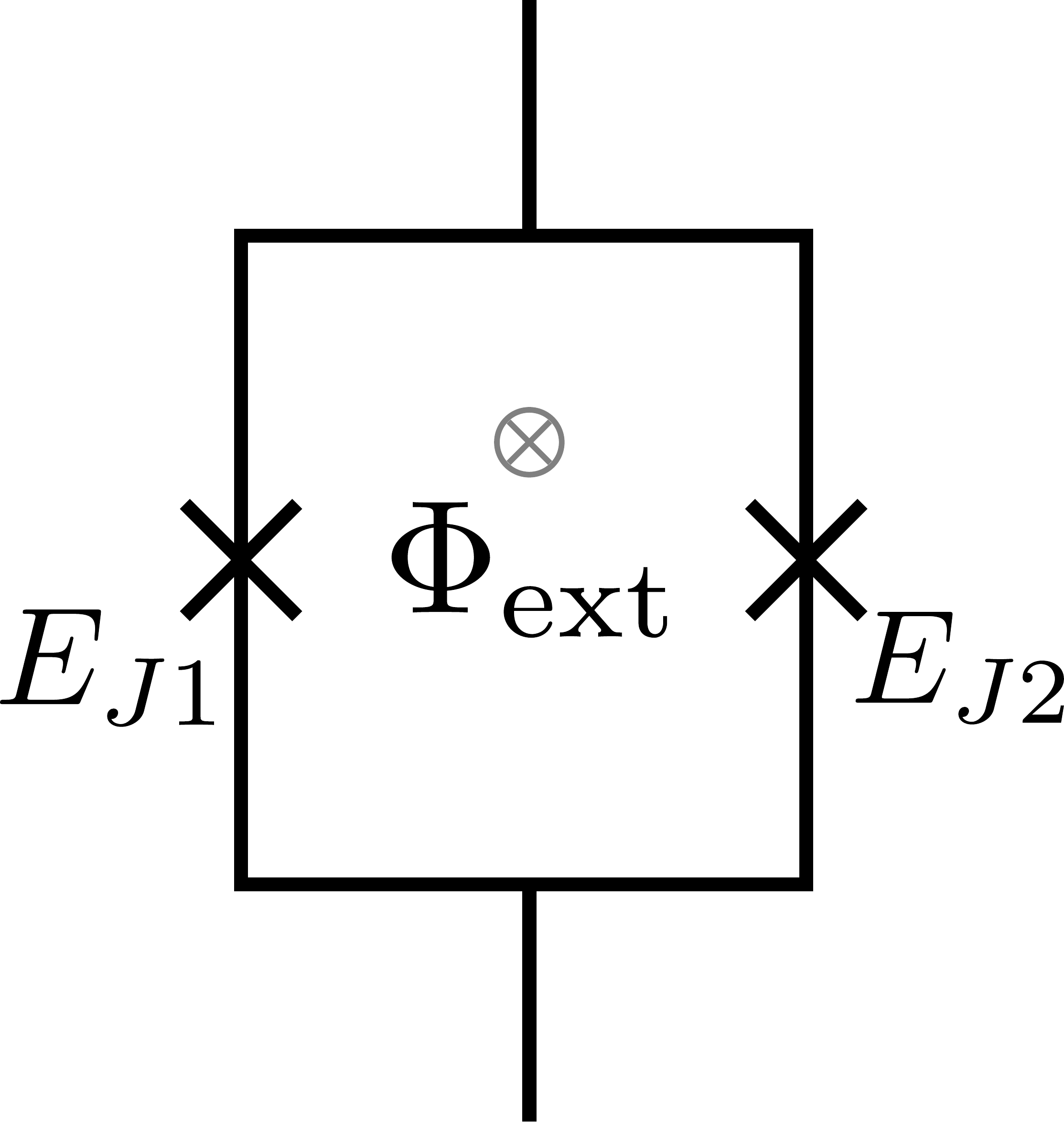}
\subcaption{}
\label{fig:fluxSQUID}
\end{subfigure}
\caption{(a) A SQUID device from Ref.~\cite{nand} using the Josephson junctions depicted in Fig.~\ref{fig:JJ}. (b) Corresponding electric circuit where some external flux could be threading the loop with the two Josephson junctions. The magnetic flux equals $\Phi_{\rm ext}$ {\em into} the plane of the page.}
\label{fig:flux-JJ}
\end{figure}

In this section, we show how one can identify a set of independent node-flux variables in an electrical circuit, using Kirchhoff's voltage conservation law. We then examine how we systematically represent current and voltage sources in the circuit Lagrangian and Hamiltonian in Section~\ref{sec:CV-source}. We discuss how to include external fluxes in Section~\ref{sec:ex-flux}.

Consider an electrical circuit, which can be viewed as an oriented graph $G$ with, say, $N+1$ nodes and $M$ oriented branches. Some branches may in addition have mutual-inductive or other couplings, but we assume that the graph is connected. One arbitrary node will be designated as ground node $n=0$, at which the voltage $\vv_0(t)=0$ at all times. Hence there are $N$ `live' nodes. The choice of which node is set to be the ground node is not relevant for the dynamics, as it just expresses that only voltage (and flux) differences are physically meaningful. 

As mentioned before, the orientation of each branch is in principle arbitrary, but we will fix it in a convenient way in a moment. One can define an $N \times M$ incidence matrix $\mat{A}$ with entries $a_{ij}$ given by:
\begin{eqnarray*}
a_{ij}& =& +1 \mbox{ if node $i$ is incident to branch $j$ with orientation pointing away from $i$},\\
 a_{ij}& =& -1 \mbox{ if node $i$ is incident to branch $j$ with orientation pointing towards $i$},\\
a_{ij}& =& 0  \mbox{ otherwise}.
\end{eqnarray*}

Kirchhoff's current law says that at every node the sum of the ingoing and outgoing currents is zero. Let the column vector $\vect{\ii}_{\mathfrak{b}}=(\ibm{1}, \ldots, \ibm{M})^T$ (where $T$ stands for transpose). Then Kirchhoff's law states that
\begin{equation}
\mat{A}\vect{\ii}_{\mathfrak{b}}=0\;\;(\mbox{Kirchhoff's current law})
\label{eq:KCL}
\end{equation}

The second condition on any electrical circuit is that the voltage drop is zero around any closed loop. In order to express this condition, we want to enumerate the number of independent loops in the electrical circuit graph, so that if the voltage drop is zero for any of these independent loops, it is zero for any loop. The enumeration of these independent, so-called fundamental loops, uses some basic facts of graph theory as follows. First, one chooses a spanning tree $T$ for the graph $G$; its defining feature is that, starting from a reference node ---the ground node--- it reaches every node and has no loops, see an example in Fig.~\ref{fig:tree}. Note that the choice of a spanning tree, as well as that of the reference node, is not unique.

Given such a tree $T$, any branch that is not included in the tree ---such a branch is called a chord---  will create a unique loop. The number of branches in the spanning tree is $N$ since it has to connect all nodes. This means that the number of branches which are not in the tree, i.e.,~the number of chords, is $M-N$, and this coincides with the number of fundamental loops.

Given our choice of ground node and spanning tree, we give the branches in the tree an orientation, that is, we simply take the orientations to point away from the ground node, see Fig.~\ref{fig:tree}. We leave the orientation of the chords free for the moment.

\begin{figure}[htbp]
\centering
	\includegraphics[height=5cm]{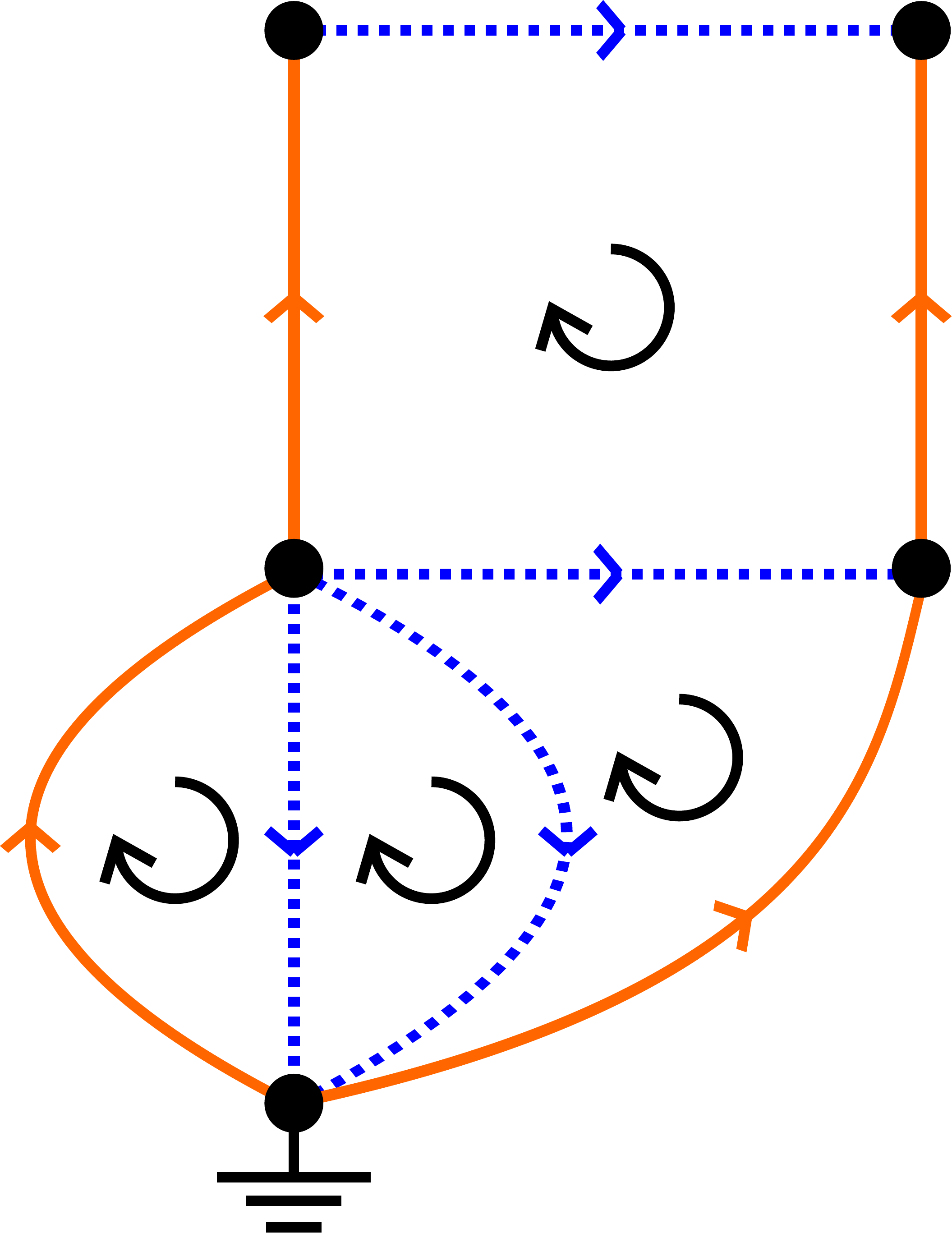}
	\caption{A graph $G$ with $N+1=5$ nodes, of which one is the designated ground node and $M=8$ branches, so that there are $M-N=4$ fundamental loops. The branches in the chosen spanning tree are in orange. Each fundamental loop is associated with a branch (a chord, in dotted blue) which is not in the tree. For convenience, we choose an orientation of each branch in the tree such that it is parallel to the path from the designated ground node to all other nodes.  We can choose an arbitrary (but consistently used) orientation of the chords, so that it aligns with the (arbitrary) orientation of the loops that they close. For a planar circuit graph, it is natural and simplest to take the orientation of the loops to be the same for every loop and we always choose a clockwise orientation as shown in the Figure.}
	\label{fig:tree}
\end{figure}

We also give each fundamental loop an orientation (clockwise or counterclockwise) as shown in the example in Fig.~\ref{fig:tree}. In principle, we can choose this orientation arbitrarily since it just sets a convention, but for simplicity it is best to standardize. This allows us to define a fundamental loop matrix $\mat{B}$. $\mat{B}$ is a $M-N \times N$ matrix with entries defined as
\begin{eqnarray*}
b_{ij}& =& +1 \mbox{ if branch $j$ is part of loop $i$ and has the same orientation},\\
b_{ij} & =&-1\mbox{ if branch $j$ is part of loop $i$ and has the opposite orientation},\\
b_{ij}&=&0 \mbox{ otherwise}.
\end{eqnarray*}
Kirchhoff's voltage law can be then expressed as a constraint on the branch voltage vector $\vbvec=(\vbm{1}, \ldots, \vbm{N})^T$, namely
\begin{equation}
\mat{B} \vbvec=0\;\; (\mbox{Kirchhoff's voltage law}).
\label{eq:KVL}
\end{equation}
One can read this as follows: each fundamental loop, given by a row in $\mat{B}$, imposes an independent constraint. Since $\vbvec(t)=\frac{d\vect{\Phi}_{\mathfrak{b}}}{dt}$ in Eq.~\eqref{eq:def-flux} and $\Phi_{\mathfrak{b}}(t=-\infty)=0$, it follows from this law that
\begin{equation}
\mat{B} \vect{\Phi}_{\mathfrak{b}}=0. 
\label{eq:KVL-flux}
\end{equation}

The law thus shows that the dynamical branch variables $\Phi_{\mathfrak{b}}(t)$ are not independent. In fact, the number of independent variables is simply the number of branches in the tree $T$: there are $M-N$ constraints on $M$ branch variables leaving $N$ independent variables. We also observe, that according to our definition of branch fluxes, the sum of branch fluxes along a loop (taken with appropriate signs) has to be zero. However, in the literature one usually sees the condition written as
\begin{equation}
\mat{B} \vect{\Phi}_{\mathfrak{b}}=\vect{\Phi}_{\mathrm{ext}}, 
\label{eq:KVL-flux_ext}
\end{equation}
with $\vect{\Phi}_{\mathrm{ext}}$ a vector of external fluxes associated with each fundamental loop. In drawings with $\Phi_{\rm ext}$ going into the plane of the loop, one has to choose a clockwise orientation of the loop in order to have Eq.~\eqref{eq:KVL-flux_ext} with $\vect{\Phi}_{\mathrm{ext}}$ (instead of $-\vect{\Phi}_{\mathrm{ext}}$). This is because we have to obey Lenz's law, given our conventions in Fig.~\ref{fig:el_conv} and Eq.~\eqref{eq:def-flux}, i.e., by increasing $\Phi_{\rm ext}$ linearly, one generates a current which generates a flux which opposes the increase.

As we will see in Section~\ref{sec:ex-flux}, we can always interpret Eq.~\eqref{eq:KVL-flux_ext} as Eq.~\eqref{eq:KVL-flux}, where an additional branch flux is forced to take a value equal to the external flux.

Given that there can be fewer independent degrees of freedom than the number of branch fluxes, we can find these independent degrees of freedom as {\em node fluxes} $\Phi_n$ at each node $n=1, \ldots N$ of the graph. We can convert each branch flux to a node flux by considering the unique path in the tree from the node $n$ to the ground node $n=0$. Since the orientations of the branches in this tree path are all the same, we have 
 \begin{equation}
\Phi_n=\sum_{\mathfrak{b} \in {\rm Path}(0 \rightarrow n)} \Phi_{\mathfrak{b}}, 
\end{equation}
so that for two adjacent nodes $n_1$ and $n_2$ with orientation pointing away from $n_1$, connected by a branch $\mathfrak{b}$, we have
\begin{equation}
\Phi_{n_2}-\Phi_{n_1}=\Phi_{\mathfrak{b}},
\label{eq:define-node}
\end{equation}
consistent with Eq.~\eqref{eq:def-orien}.
We can thus take Eq.~\eqref{eq:define-node} as the defining equation for the branch fluxes in the tree in terms of the independent variables.

In the absence of externally-applied magnetic fluxes, the branch-flux variable for each chord is then given as the difference of the two node fluxes of the chord, where an arbitrarily chosen orientation determines the sign. In practice, for convenience, the orientation of these chords can be chosen to align with the orientation of the loop which they close. Hence the sum of branch fluxes around each loop sums up to zero, satisfying Eq.~\eqref{eq:KVL-flux}. 

In the presence of an external flux, we need to satisfy Eq.~\eqref{eq:KVL-flux_ext} and thus all branch fluxes cannot simply be the difference between nodes fluxes, as one has to include the external flux constraint. Thus one can conveniently opt to include this constraint via the chord branch, since the chord branch is in one-to-one correspondence with a loop. This implies that for a chord closing a loop $\ell$ we have
\begin{equation}
    \Phi_{\mathfrak{b}_{\ell}}=\Phi_{n_2}-\Phi_{n_1}+\Phi_{\rm ext,\ell}.
    \label{eq:flux-l1}
\end{equation}

We will continue a discussion of handling external fluxes in Section~\ref{sec:ex-flux} after treating current sources.

We will sometimes omit the orientation of branches or loops in this book, but one can always derive expressions using our standard orientation convention with (1) edges in the spanning tree pointing away from ground, (2) clockwise orientation of loops, (3) chord orientation aligning with the loop which they close, and (4) $\Phi_{\rm ext}$ pointing into the plane.

\begin{Exercise}[title={Cooper pair box grounding variations}, label=exc:grounding]
\begin{figure}[htb]
\centering
\begin{subfigure}[t]{0.5\textwidth}
\centering
\includegraphics[height=5cm]{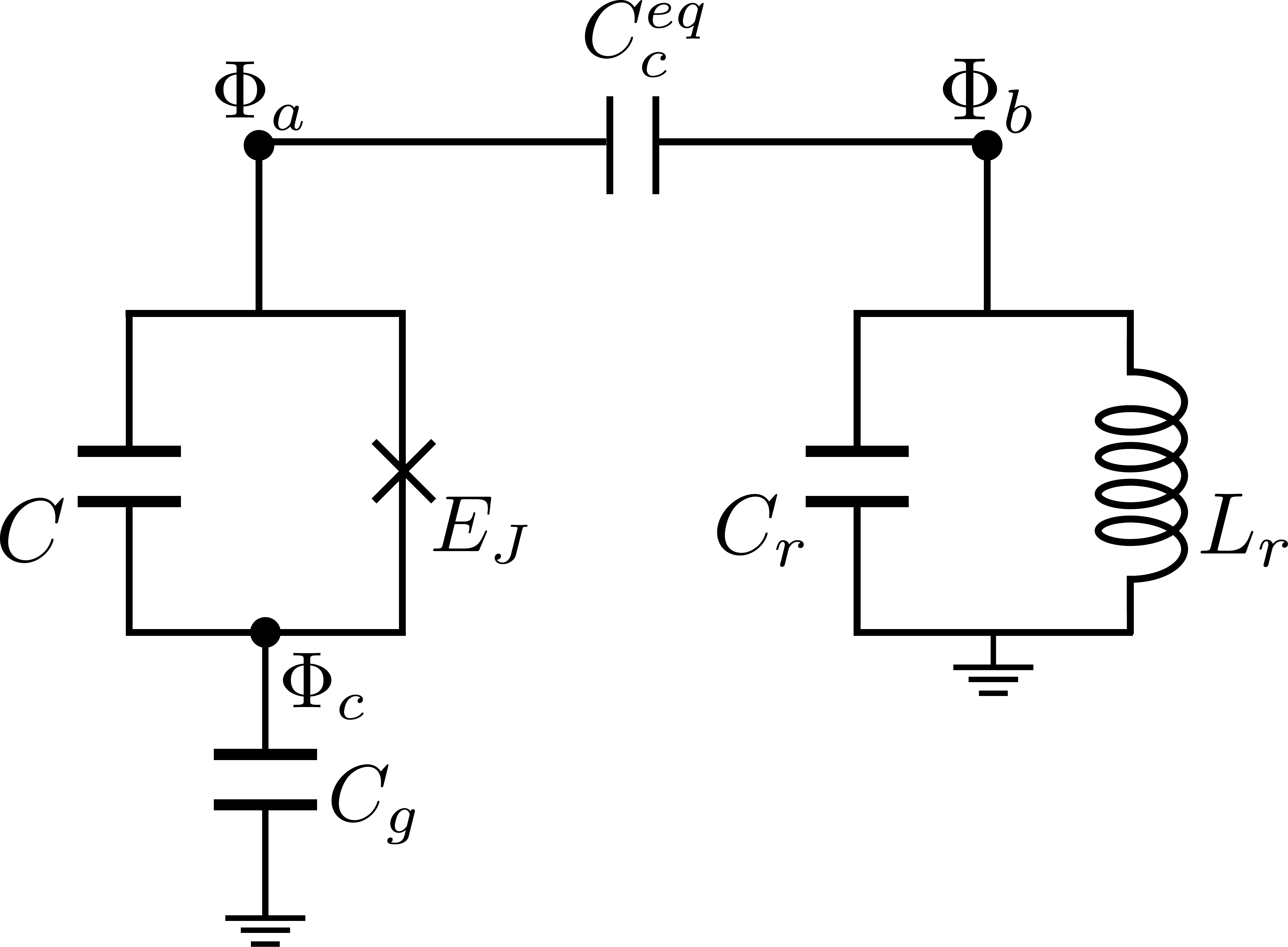}
\subcaption{}
\label{fig::unground}
\end{subfigure}
\begin{subfigure}[t]{0.5\textwidth}
\vspace{0.5cm}
\centering
\includegraphics[height=3.9cm]{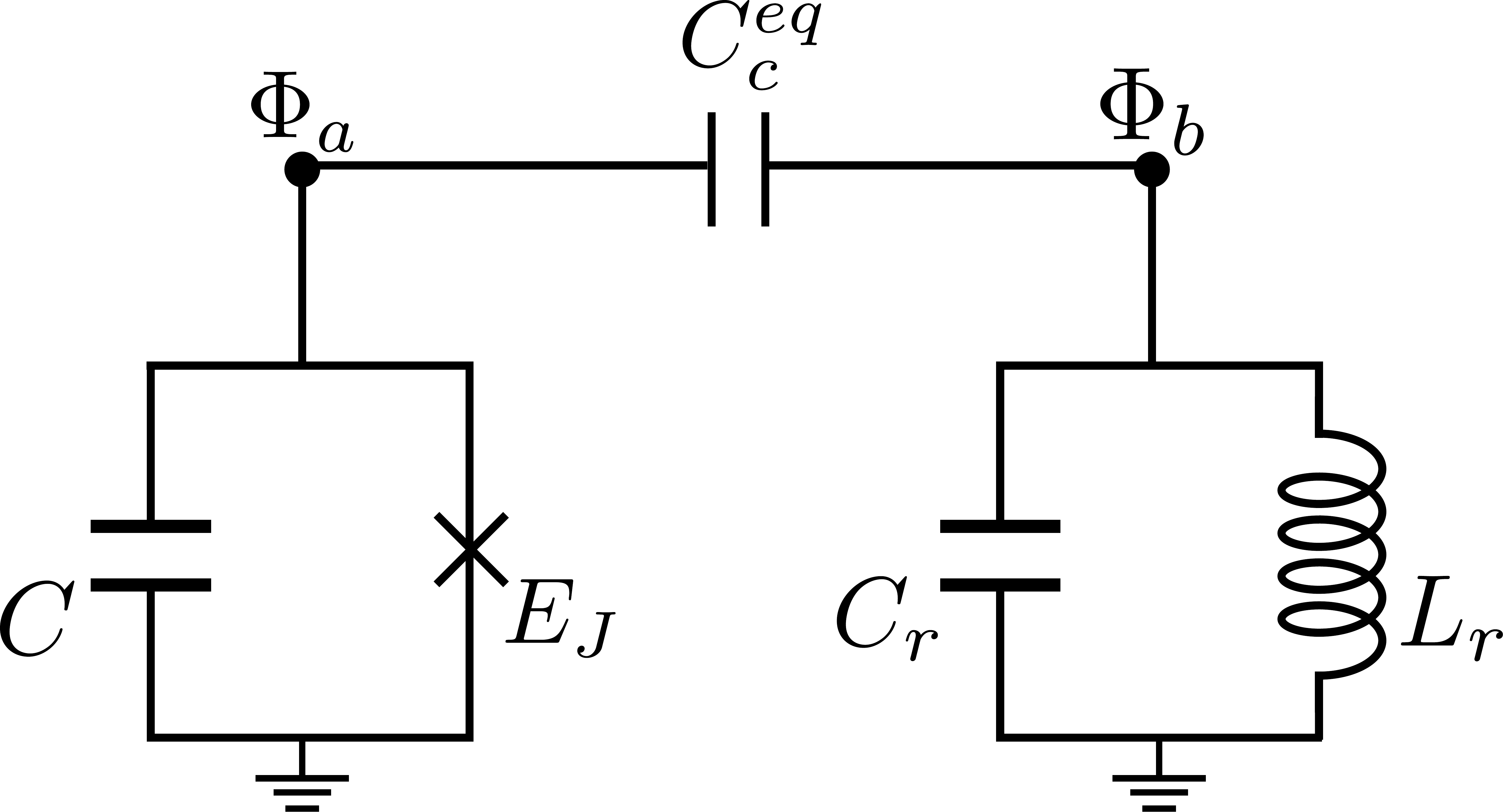}
\subcaption{}
\label{fig::ground}
\end{subfigure}
\caption{(a) Ungrounded transmon capacitively coupled to a LC resonator. (b) Equivalent grounded circuit.}
\end{figure}
Consider the circuit in Fig.~\ref{fig::unground}, showing a typical circuit QED setup with a Cooper pair box (or transmon) qubit capacitively coupled to a LC resonator. We see that the transmon does not share a common ground node with the resonator, but there is a capacitor~$C_g$ to ground. This has been the design choice for instance in the DiCarlo lab at QuTech in Delft, where the transmons are not short-circuited to the ground plane; see for example the two capacitive plates of the transmon in Fig.~8(a) in \cite{versluis}. Other groups such as the Wallraff group at ETH Z\"urich use grounded transmons; see for example Fig.~2(b) in \cite{andersen2020} showing a single (yellow) island (the location of node $\Phi_a$ in Fig.~\ref{fig::ground}). The choice is merely practical, since, as we show in this exercise, the physics is the same. In principle, the top transmon node will also have a capacitance to ground, but here we do not include it for simplicity.

With respect to~\cref{fig::unground}, write down the Lagrangian in terms of the variables $\Phi_1 = \Phi_a - \Phi_c$, $\Phi_2 = \Phi_b$, $\Phi_3 = \Phi_c$, and, using the Lagrangian formalism, show that we can arrive at an effective circuit where the transmon is also grounded, as in Fig.~\ref{fig::ground}. What is the interpretation of the effective coupling capacitance $C_{c}^{eq}$?
\end{Exercise}

\begin{Answer}[ref={exc:grounding}]
The Lagrangian reads
\begin{multline}\label{eq::lunground}
\mathcal{L}(\Phi_1, \Phi_2, \Phi_3; \dot{\Phi}_1, \dot{\Phi}_2, \dot{\Phi}_3 ) = \frac{C_g}{2} \dot{\Phi}_3^2 + \frac{C}{2} \dot{\Phi}_1^2 + \frac{C_r}{2} \dot{\Phi}_2^2 + \frac{C_c}{2} \bigl(\dot{\Phi}_1 + \dot{\Phi}_3- \dot{\Phi}_2 \bigr)^2 
+E_J \cos \biggl[\frac{2 \pi}{\Phi_0} \Phi_1\biggr] - \frac{\Phi_2^2}{2 L_r}.
\end{multline}
The Euler-Lagrange equation associated with $\Phi_3$ yields
\begin{equation}
C_g \ddot{\Phi}_3 + C_c \bigl(\ddot{\Phi}_3 + \ddot{\Phi}_1 - \ddot{\Phi}_2 \bigr) =0 \implies \dot{\Phi}_3 = \frac{C_c}{C_g + C_c} \bigl(\dot{\Phi}_2 - \dot{\Phi}_1 \bigr) + \lambda,
\end{equation}
where $\lambda$ is an arbitrary constant. Substituting into Eq.~\eqref{eq::lunground} we get
\begin{equation}
\label{eq::lground}
\mathcal{L}(\Phi_1, \Phi_2; \dot{\Phi}_1, \dot{\Phi}_2) = \frac{C}{2} \dot{\Phi}_1^2 + \frac{C_r}{2} \dot{\Phi}_2^2 + \frac{C_c^{eq}}{2} \bigl(\dot{\Phi}_1 - \dot{\Phi}_2 \bigr)^2 \\
+E_J \cos \biggl[\frac{2 \pi}{\Phi_0} \Phi_1\biggr] - \frac{\Phi_2^2}{2 L_r} + \frac{(C_c+C_g)}{2}\lambda^2,
\end{equation}
where the equivalent coupling capacitance is given by
\begin{equation}
C_c^{eq} = \frac{C_c C_g}{C_c + C_g},
\end{equation}
and where the constant term~$(C_c+C_g)\lambda^2/2$ can be dropped. Note that $C_c^{eq}$ is simply the equivalent capacitance of a series of $C_g$ and $C_c$. Switching back to the original node variables, we can write
\begin{align}
\mathcal{L}(\Phi_a-\Phi_c, \Phi_b; \dot{\Phi}_a-\dot{\Phi}_c, \dot{\Phi}_b) = 
 \frac{C}{2} (\dot{\Phi}_a-\dot{\Phi}_c)^2 + \frac{C_r}{2} \dot{\Phi}_b^2 + \frac{C_c^{eq}}{2} \bigl(\dot{\Phi}_a-\dot{\Phi}_c - \dot{\Phi}_b \bigr)^2 
+E_J \cos \biggl[\frac{2 \pi}{\Phi_0} (\Phi_a-\Phi_c)\biggr] - \frac{\Phi_b^2}{2 L_r}.
\end{align}
Note that by identifying the difference $\Phi_a-\Phi_c$ with the node $\Phi_a$ in Fig.~\ref{fig::ground}, we get the Lagrangian of the circuit in Fig.~\ref{fig::ground}. 
\end{Answer}

\section{Current and voltage sources}
\label{sec:CV-source}

\begin{figure}
\centering
\begin{subfigure}[t]{0.5\textwidth}
	\includegraphics[height=3.5cm]{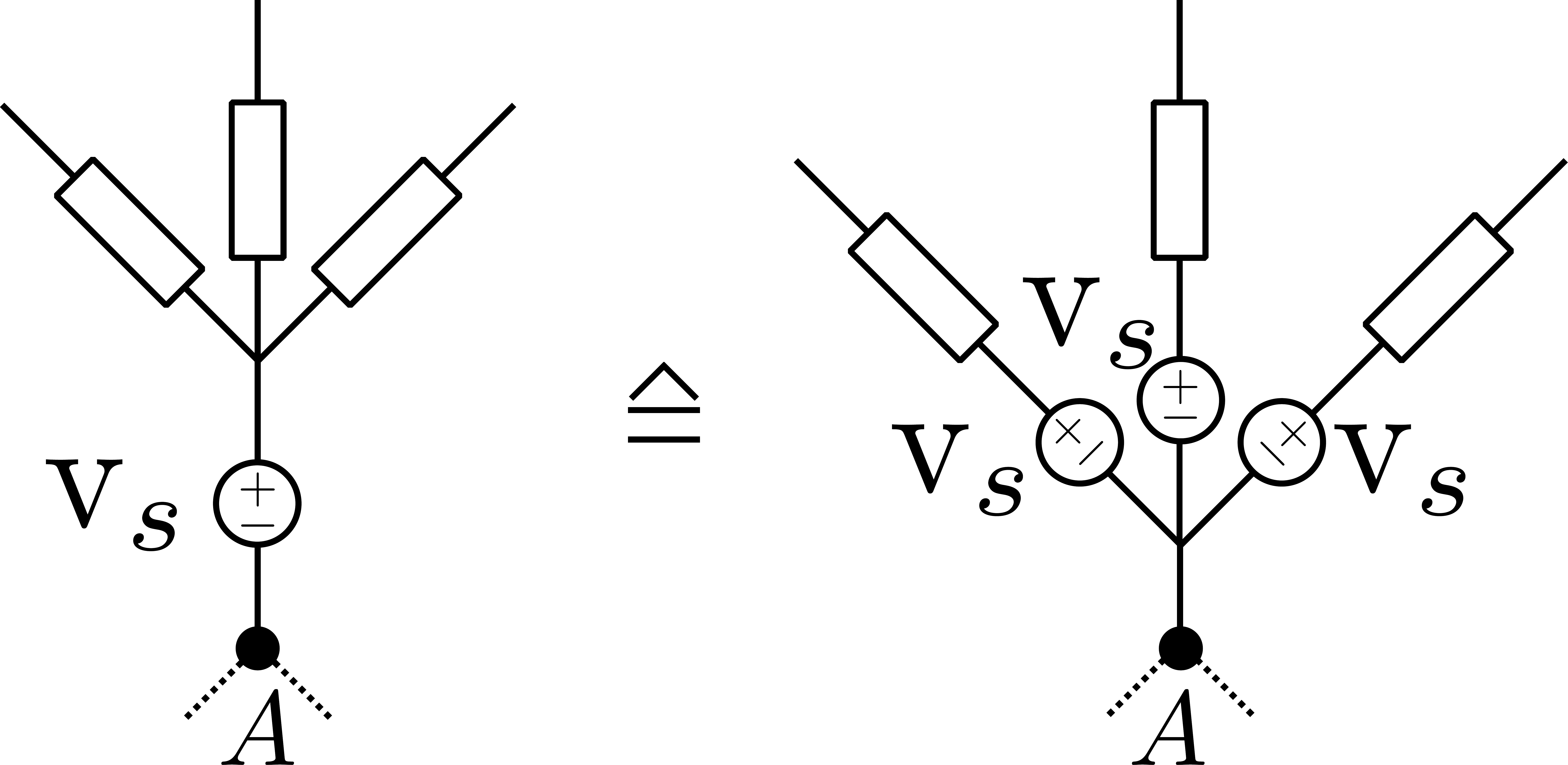}
	\subcaption{}
\end{subfigure}
\begin{subfigure}[t]{0.5\textwidth}
	\includegraphics[height=3.5cm]{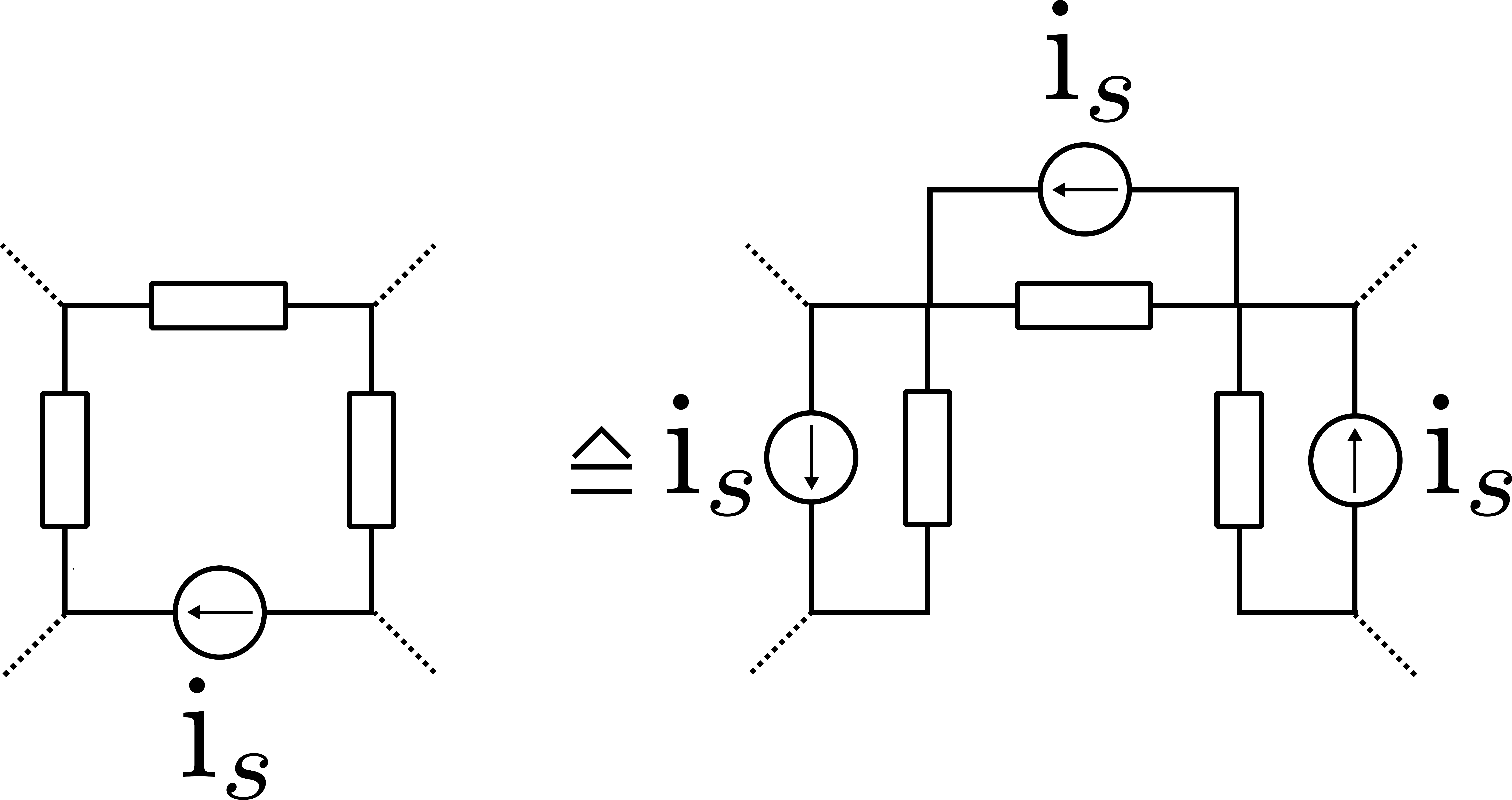}
	\subcaption{}
\end{subfigure}
	\caption{Simple circuit identities for making ideal voltage and current sources part of a branch, leading to a general branch of the form in Fig.~\ref{fig:gen-branch}. Boxes denote elements which can be inductors, capacitors etc.}

	\label{fig:circuit-ID}
\end{figure}

\begin{figure}[htbp]
\centering
	\includegraphics[height=4cm]{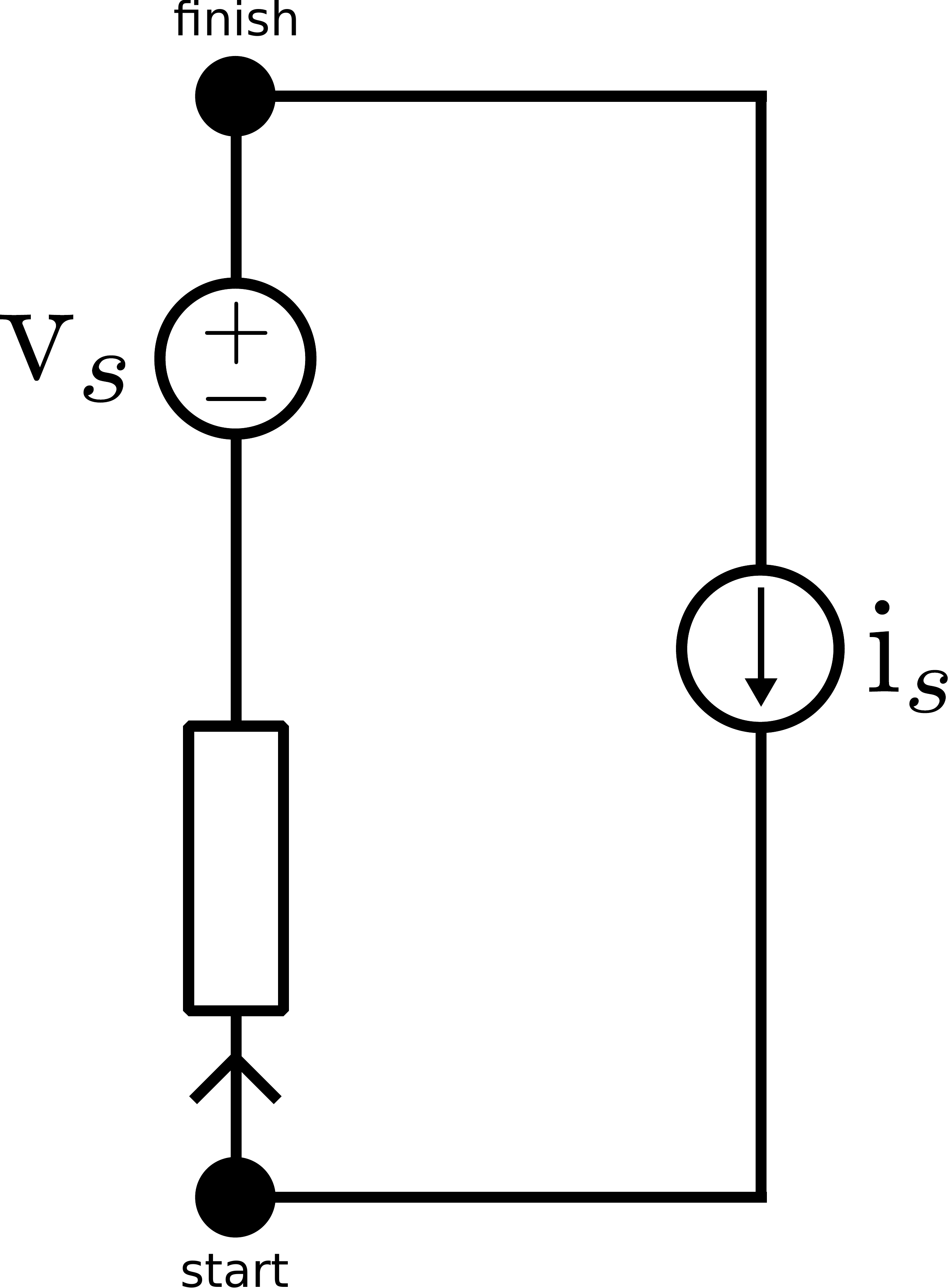}
	\caption{General form of a branch after applying circuit identities in Fig.~\ref{fig:circuit-ID}. The box is an element which can be an inductor, capacitor etc. and has some voltage drop $\vv_{\rm element}$ over it.}
	\label{fig:gen-branch}
\end{figure}

In an electrical circuit, active elements, such as voltage and current sources, can also be present in some of the branches. In this section, we explain how to treat these elements within the Lagrangian formalism. 
First of all, we define an ideal voltage source as a two-terminal element that imposes a possibly time-dependent voltage $\vv_s(t)$ across its terminals. Analogously, an ideal current source is a two-terminal element in which a current $\ii_s(t)$ flows from one terminal to the other. The electrical symbols for voltage and current sources shown for example in Fig.~\ref{fig:circuit-ID} always represent ideal voltage and current sources.

Let us also discuss some other general facts about electrical circuits. Given an electrical circuit, one can always modify it so that any voltage source is in series with another energy-conserving branch element, while each current source is in parallel with each branch element, see Fig.~\ref{fig:gen-branch}. The circuit identities which accomplish this are given in Fig.~\ref{fig:circuit-ID}. The left circuit identity follows simply by adding a countercurrent in parallel on each branch around the loop. This leads to the general form of a branch element of the form shown in Fig.~\ref{fig:gen-branch}, with an ideal current source $\ii_s(t)$ and an ideal voltage source $\vv_s(t)$.

\subsection{Voltage sources}
\label{subsec:capvol}

\begin{figure}[htbp]
\centering
\includegraphics[height=4 cm]{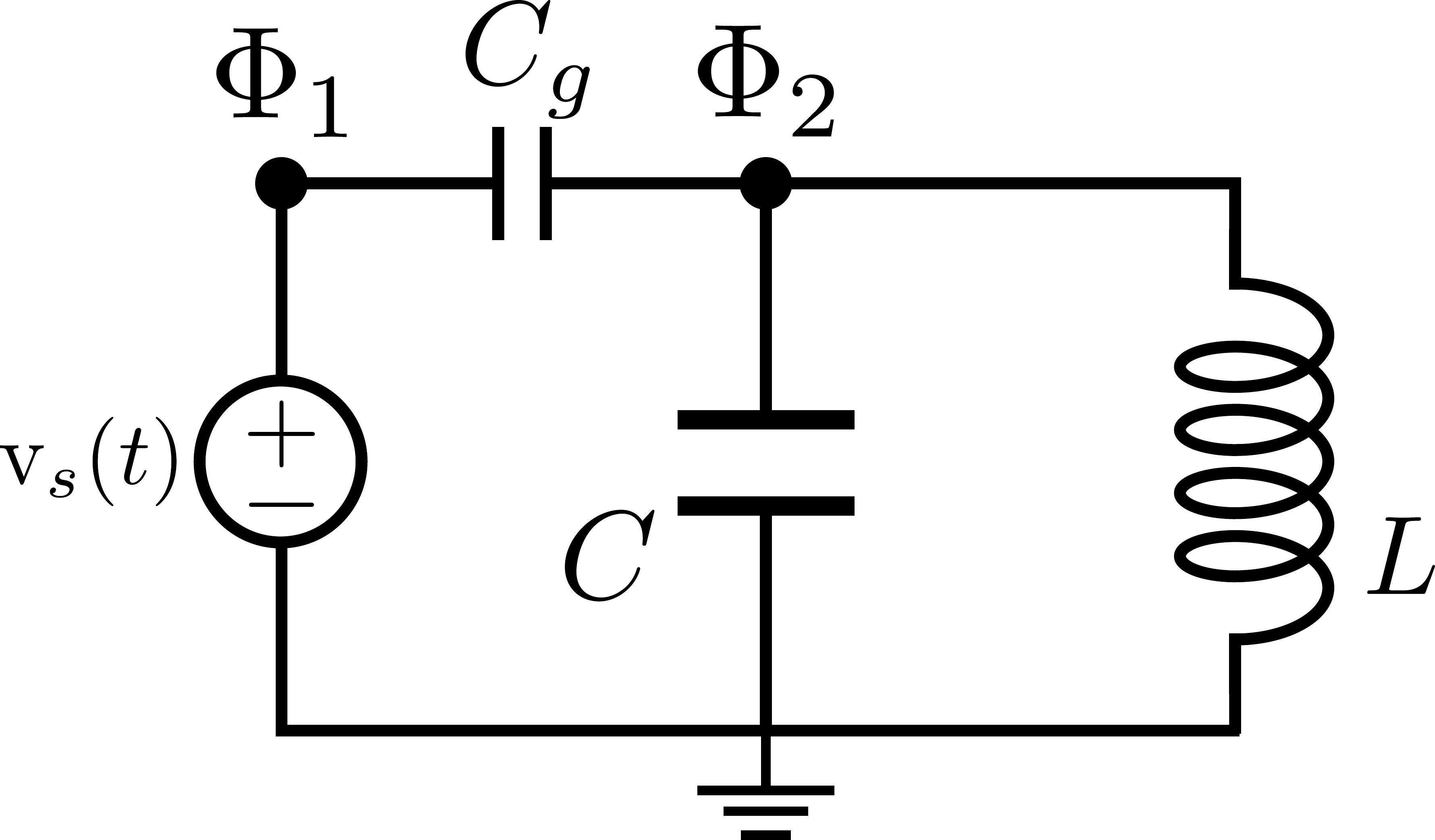}
\caption{Simple circuit with an ideal, time-dependent voltage source capacitively coupled to a LC circuit allowing the harmonic oscillator to be driven and get excited.}
\label{fig:driven_LC}
\end{figure}

Let us start by considering the implications for voltage sources and deal with current sources in Section~\ref{subsec:cs}. 
Considering the general branch in Fig.~\ref{fig:gen-branch}, we can express the voltage drop across such a branch as
\begin{equation}
\vbm{}(t)=\vv_{\rm element}(t)+\vv_s(t),
\label{eq:gen-el}
\end{equation}
where $\vv_{\rm element}(t)$ is the voltage drop across the element in the box. 
In terms of fluxes we have
 \begin{equation}
 \dot{\Phi}_{\mathfrak{b}}=\dot{\Phi}_{\rm element}+\vv_s(t) \implies \quad \Phi_{\mathfrak{b}}(t)=\Phi_{\rm element}(t)+\int_{-\infty}^{t} \vv_s(t') dt'.
\label{eq:branch-rel}
\end{equation}

If the element in Fig.~\ref{fig:gen-branch} is a capacitor $C$, it will add a kinetic energy $\frac{1}{2} C \dot{\Phi}_{\rm element}^2$ to the Lagrangian. We can then express this energy using Eq.~\eqref{eq:branch-rel} in terms of $\dot{\Phi}_{\mathfrak{b}}$ and the externally applied voltage $\vv_s(t)$, which could be time-dependent. If the element in series with the voltage source is inductive, it will add a potential energy $U(\Phi_{\rm element})$ to the Lagrangian and we can express $\Phi_{\rm element}$ in terms of $\Phi_{\mathfrak{b}}$ and the time-integral over the voltage using Eq.~\eqref{eq:branch-rel}. This can make the Lagrangian have an explicit time dependence and hence can lead to an explicitly time-dependent Hamiltonian. Fig.~\ref{fig:driven_LC} shows an example of an LC oscillator capacitively coupled to a time-dependent voltage source.

Note that the branch variable $\Phi_{\mathfrak{b}}$ or $\dot{\Phi}_{\mathfrak{b}}$ can be part of a larger network (say a loop) and hence other terms in the Lagrangian may depend on it. Using Eq.~\eqref{eq:branch-rel} we can reduce this to a dependence on $\Phi_{\rm element}$ and $\vv_s(t)$. In general, a voltage source can also be due to another quantized electrical circuit such as an electrical circuit representing a transmission line or a resonator which we wish to treat in a quantum manner. The point is then that one can replace the classical $\vv_s$ in the Hamiltonian of the circuit by the quantized voltage operator $\hat{\vv}_s$ of the other electrical circuit (with its own quantum degrees of freedom), giving a representation of their coupling.

Let us also consider an alternative approach in which we replace a constant voltage source $\vv_s$ by a capacitor $C_s$ and show that, in an appropriate limit, it leads to an identical result as the arguments above for the small circuit in Fig.~\ref{fig:driven_LC}. With the voltage source as capacitor, the circuit has Lagrangian
\begin{multline}
\lagrangian = \frac{1}{2} \begin{pmatrix}\dot{\Phi}_1 & \dot{\Phi}_2
\end{pmatrix} \underbrace{\begin{pmatrix}C_s+C_g & -C_g \\ -C_g & C+C_g\end{pmatrix} }_{\mat{C}}\begin{pmatrix}\dot{\Phi}_1 \\ \dot{\Phi}_2 \end{pmatrix}-\frac{1}{2L} \Phi_2^2 
=\frac{1}{2}\begin{pmatrix}\dot{\Phi}_1 & \dot{\Phi}_2
\end{pmatrix} \mat{C} \begin{pmatrix}\dot{\Phi}_1 \\ \dot{\Phi}_2 \end{pmatrix}-\frac{1}{2L} \Phi_2^2,
\label{eq:lagrangV}
\end{multline}
where we defined the capacitance matrix $\mat{C}$. Using conjugate variables $Q_1=\partial {\cal L}/\partial \dot{\Phi}_1=C_s \dot{\Phi}_1-C_g(\dot{\Phi}_2-\dot{\Phi}_1)$ and  $Q_2=\partial {\cal L}/\partial \dot{\Phi}_2=C \dot{\Phi}_2+C_g (\dot{\Phi}_2-\dot{\Phi}_1)$ gives the Hamiltonian (see Appendix~\ref{app:cc}):
\begin{equation}
\hamiltonian =\frac{1}{2} \begin{pmatrix}
Q_1 & Q_2 
\end{pmatrix}\mat{C}^{-1} \begin{pmatrix}Q_1 \\ Q_2 \end{pmatrix}+\frac{1}{2L} \Phi_2^2.
\end{equation}
When $C_s \gg C$ and $C_s \gg C_g$, the kinetic energy of this Hamiltonian reads  
\begin{multline}
T=\frac{1}{C C_g+ C_s(C+C_g)}\biggl[\frac{C+C_g}{2} Q_1^2+\frac{C_g+C_s}{2} Q_2^2 + C_g Q_1 Q_2\biggr] 
\approx
\frac{Q_1^2}{2 C_s}+\frac{Q_2^2}{2(C+C_g)} +\frac{C_g Q_1 Q_2}{C_s(C+C_g)}, 
\end{multline}
which can be rewritten, using $\vv_s \approx Q_1/C_s$ in the limit $C_s \gg C_g$, as 
\begin{equation}
T \approx \frac{1}{2(C+C_g)}(Q_2+C_g \vv_s)^2+ \mathrm{const.}
\label{eq:meth1}
\end{equation}

If instead we use Eq.~\eqref{eq:gen-el}, then $\dot{\Phi}_1=\vv_s$, and so the Lagrangian would be
\begin{equation}\label{eq:llcv}
\lagrangian =\frac{C_g}{2}(\dot{\Phi}_2-\vv_s)^2+\frac{C}{2} \dot{\Phi}_2^2-\frac{1}{2L} \Phi_2^2,
\end{equation}
which using $Q_2=(C+C_g)\dot{\Phi}_2-C_g \vv_s$ gives the Hamiltonian as
\begin{equation}
\hamiltonian =Q_2 \dot{\Phi}_2-{\cal L} =\frac{1}{2(C+C_g)}(Q_2+C_g \vv_s)^2 +  \frac{1}{2L} \Phi_2^2.
\label{eq:volt}
\end{equation}
This Hamiltonian has identical kinetic energy as in Eq.~\eqref{eq:meth1} (modulo the harmless constant energy term).

\begin{Exercise}[title={Voltage-driven LC oscillator},label=exc:driven-sol]
For the Hamiltonian in Eq.~\eqref{eq:volt}, what are the classical equations of motion for the conjugate variables $\Phi_2(t)=\Phi(t)$ and $Q_2=Q(t)$? 
Assume a harmonic drive $\vv_s(t)=V_0 \cos (\omega t)$. Solve the equations of motion for $Q(t)$ and $\Phi(t)$ for this drive by introducing the complex variable $a(t)=\frac{\Phi}{\sqrt{2 L\omega_r}}+i\frac{Q}{\sqrt{2C_{\Sigma}\omega_r}}$ (compare Eq.~\eqref{eq:annih}) with $\omega_r=\frac{1}{\sqrt{L C_{\Sigma}}}$ and $C_{\Sigma}=C+C_g$. How can one maximize the effect of the drive?
\end{Exercise}

\begin{Answer}[ref={exc:driven-sol}]
We use the notation $Q_d(t)\coloneqq C_g \vv_s(t)$. Hamilton's equations of motion in Eqs.~\eqref{eq:hameq1} and~\eqref{eq:hameq2} give the coupled equations of motion for $Q(t)$ and $\Phi(t)$. They read
$\dot{\Phi}=\frac{\partial \hamiltonian}{\partial Q}=\frac{Q(t)+Q_d(t)}{C_{\Sigma}}$ and $\dot{Q}=-\frac{\partial \hamiltonian}{\partial \Phi}=-\frac{\Phi}{L}$. 
For the variable $a(t)$, we then obtain the differential equation \[\dot{a}=-i \omega_r a+\sqrt{\frac{\omega_r}{2C_{\Sigma}}}Q_d(t),\]
where $Q_d(t)$ can be expressed in terms of the given drive $v_s(t)$. The solution for this linear, first-order, inhomogeneous equation is
\begin{align}
    a(t)&=a(0) e^{-i \omega_r t}+ \sqrt{\frac{\omega_r}{2C_{\Sigma}}} C_g V_0
        e^{-i \omega_r t}\int_0^t dt' \cos(\omega t') e^{i \omega_r t'}  \notag \\
    &= a(0) e^{-i \omega_r t}+
        \sqrt{\frac{\omega_r}{2C_{\Sigma}}} C_g V_0 \Biggl[ e^{i(\omega-\omega_r)t/2}  \frac{\sin\bigl((\omega+\omega_r)t/2\bigr)}{\omega+\omega_r} +
        e^{-i(\omega+\omega_r)t/2}  \frac{\sin\bigl((\omega-\omega_r)t/2\bigr)}{\omega-\omega_r}
        \Biggr].
\end{align}

Taking the complex conjugate to get the corresponding equation for~$a^*(t)$, we obtain
\begin{align}
    \Phi(t) &= \frac{\sqrt{2L\omega_r}}{2} \bigl(a(t)+a^*(t)\bigr) \notag \\
    &= \Phi(0)\cos(\omega_r t) + \sqrt{\frac{L}{C_\Sigma}}Q(0)\sin(\omega_r t) \nonumber\\
    &\quad +\frac{C_g V_0}{C_\Sigma} \Biggl[ \frac{\cos\bigl((\omega-\omega_r)t/2\bigr)\sin\bigl((\omega+\omega_r)t/2\bigr)}{\omega+\omega_r}
    + \frac{\cos\bigl((\omega+\omega_r)t/2\bigr)\sin\bigl((\omega-\omega_r)t/2\bigr)}{\omega-\omega_r}
    \Biggr],
\end{align}
and
\begin{align}
    Q(t) &= \frac{\sqrt{2C_\Sigma\omega_r}}{2i} \bigl(a(t)-a^*(t)\bigr) \notag \\
    &= -\sqrt{\frac{C_\Sigma}{L}}\Phi(0)\sin(\omega_r t) + Q(0)\cos(\omega_r t)  - 2C_g V_0 \frac{\omega_r^2}{\omega^2-\omega_r^2} 
    \sin\bigl((\omega-\omega_r)t/2\bigr)\sin\bigl((\omega+\omega_r)t/2\bigr).
\end{align}

The strength of the driven solution is linearly increasing with coupling capacitance $C_g$, with voltage drive strength $V_0$, and with how on-resonant the drive is (i.e.,~when $\omega\to\omega_r$). 
\end{Answer}

\subsection{Current sources}
\label{subsec:cs}

\begin{figure}[htbp]
\centering
	\includegraphics[width=8cm]{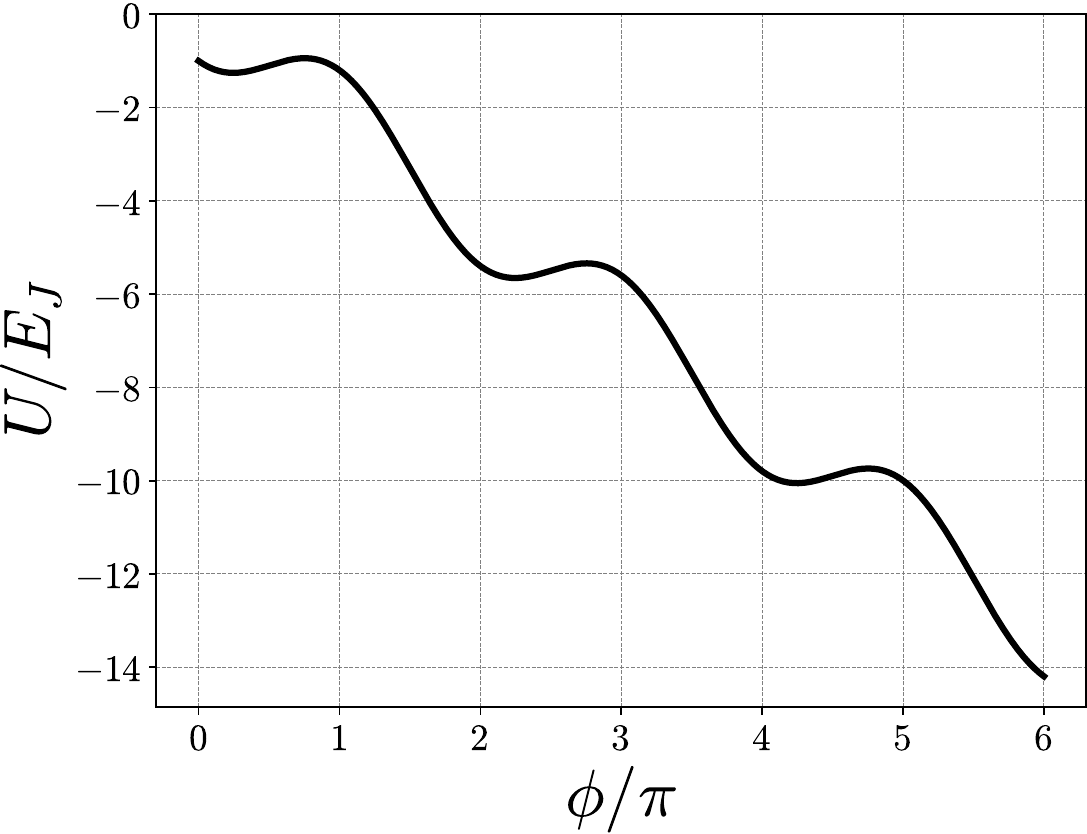}
	\caption{Example of a washboard potential $U(\phi)/E_J = \ii_s\phi/I_c - E_J \cos \phi$ as a function of the reduced (dimensionless) flux $\phi$ for $\ii_s/I_c=-0.7$.}
	\label{fig:washboard_potential}
\end{figure}

For a current source, one does not need to use any circuit identities, but can just treat the current source as a branch in the circuit, in principle. In any case, for the branch in Fig.~\ref{fig:gen-branch} we have the identity $\ibvec(t)=\ii_{s}(t)+\ii_{\rm element}(t)$, where $\ii_{\rm element}(t)$ is the current which goes through the element (the box in the Figure).

If we have a constant current source in parallel with an inductor, we can write the potential energy due to both the current source and the inductor as
\begin{equation}
U=\int_{-\infty}^t dt' \vbm{}(t') \ibm{}(t') =\ii_s \int_{-\infty}^t dt' \vbm{}(t') + U_L=\ii_s \Phi_{\mathfrak{b}}(t)+\frac{1}{2L} \Phi_{\mathfrak{b}}^2(t).
\label{eq:include-i}
\end{equation}
Similarly, a current-biased Josephson junction ---a junction in parallel with a current source--- has potential energy 
\begin{equation}
U=\ii_s \Phi_{\mathfrak{b}}(t)-E_J \cos\left(\frac{2 \pi }{\Phi_0} \Phi_{\mathfrak{b}}(t)\right).
\label{eq:current_source_def_potential}
\end{equation}
This latter potential has the character of a washboard, a linear term superposed with an oscillating cosine as shown in the example in Fig.~\ref{fig:washboard_potential}. One can determine the value of the critical current $I_c$ by varying the strength of the current source $\ii_s$ and observing when the dynamics of the flux variable $\Phi_{\mathfrak{b}}$ is such that it rolls down the washboard instead of being confined in a well (whose depth is set by $E_J$ in Eq.~\eqref{eq:defineEJ} and hence $I_c$) as was first done in \cite{DMC:cb}. The earliest observations of macroscopic quantum behavior in superconducting circuits by Martinis, Devoret and Clarke \cite{MDC:quant} were done for the current-biased junction circuit: the earliest superconducting qubit, the so-called phase qubit, is based on picking the lowest two-levels in one of the wells on the washboard. \\

One should observe that there is an asymmetry in how we deal with a current source versus a voltage source. For the current source we add the extra potential energy due to the source to the Lagrangian.
For the voltage source we only add the kinetic energy stored in the element to the Lagrangian and use $\vv_s$ to obey the constraints of Kirchhoff's voltage law. 

\begin{Exercise}[title={Current and flux-driven LC oscillator}, label=exc:flux]
\begin{figure}[htb]
\centering
\begin{subfigure}[t]{0.4\textwidth}
\centering
\includegraphics[height=3cm]{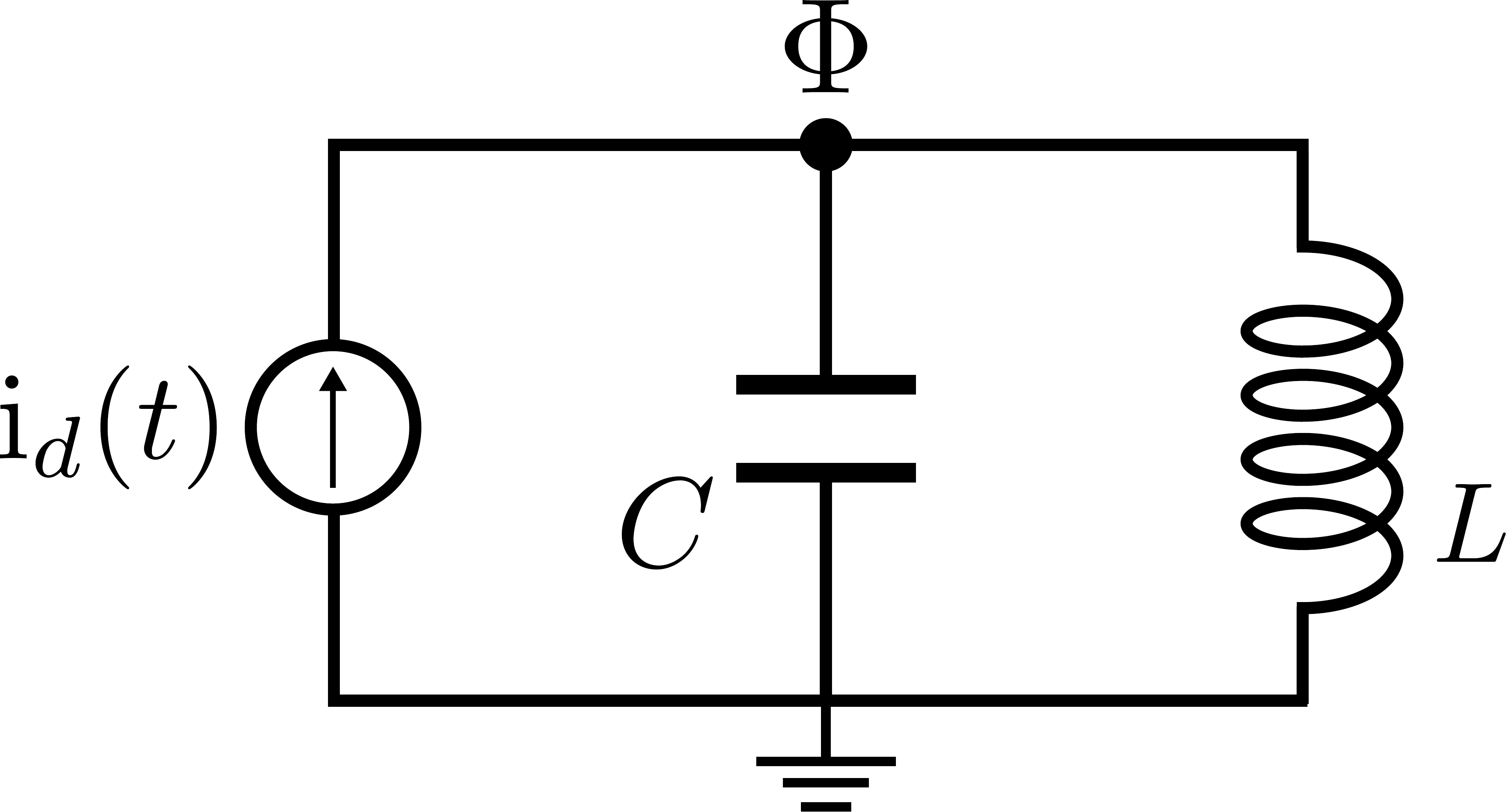}
\subcaption{}
\label{fig:LC-current}
\end{subfigure}
\begin{subfigure}[t]{0.5\textwidth}
\centering
\includegraphics[height=3cm]{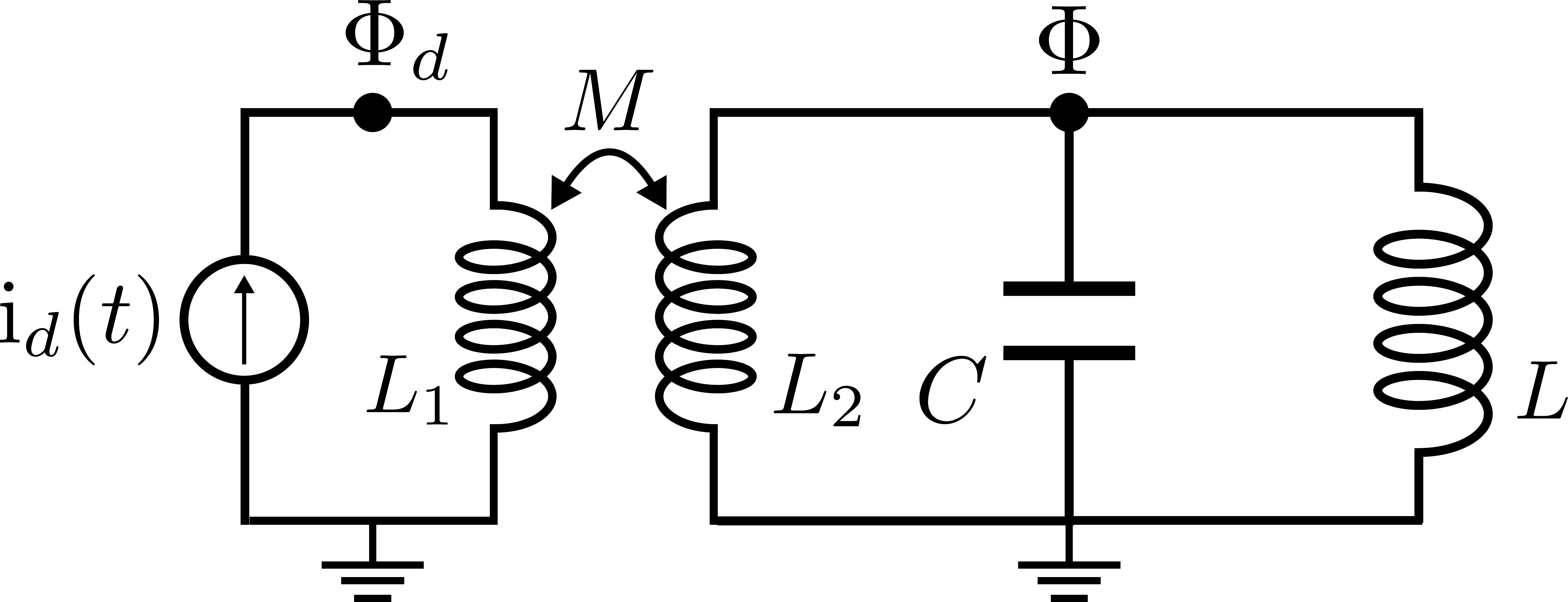}
\subcaption{}
\label{fig:LC-flux}
\end{subfigure}
	\caption{(a) Current-driven LC circuit; (b) flux-driven LC circuit.}
	\label{fig:LC-current_flux}
\end{figure}
\Question Consider the circuit in Fig.~\ref{fig:LC-current} connected to a current source $\ii_d$. Write down the equation of motion for the flux variable $\Phi(t)$, and from that recover the corresponding Lagrangian, consistent with using Eq.~\eqref{eq:include-i}. 
\Question Consider the LC circuit in Fig.~\ref{fig:LC-flux} coupled via a mutual inductance to a current source. Write down the Lagrangian of the circuit and derive the equations of motions for the variables $\Phi_d$ and $\Phi$. By eliminating the variable $\Phi_d$ using its Euler-Lagrange equation, demonstrate that the resulting equation for the variable $\Phi$ is equivalent to that of question~1. So the effect of the direct current source is the same as that of the flux induced by the inductively coupled circuit.
 \end{Exercise}

\begin{Answer}[ref={exc:flux}]
\Question Current conservation means $C \ddot{\Phi}+\frac{\Phi}{L}=\ii_d$ and can be obtained as the equation of motion from the Lagrangian $\lagrangian=\frac{C \dot{\Phi}}{2}-(\frac{\Phi^2}{2L}-\ii_d \Phi)$. \\
\Question Using Eq.~\eqref{eq:ul-ind} and Eq.~\eqref{eq:include-i}, the Lagrangian equals 
\[\lagrangian=\frac{C}{2}\dot{\Phi}^2-
 \frac{1}{\Delta}\left(\frac{L_2 \Phi_d^2}{2}+\frac{L_1 \Phi^2}{2}-M \Phi \Phi_d \right)
 -\frac{1}{2L}\Phi^2
+\ii_d \Phi_d,
\] with $\Delta=L_1 L_2 -M^2 > 0$. This results in the following equations of motion: $L_2 \Phi_d-M\Phi=\Delta \ii_d$ and $C \ddot{\Phi}+\frac{\Phi}{L}+\frac{\Phi L_1}{\Delta}-\frac{M \Phi_d}{\Delta}=0$. If we eliminate the variable $\Phi_d$, we obtain the equation of motion for $\Phi$ as $C\ddot{\Phi}+\left(\frac{1}{L}+\frac{1}{L_2} \right)\Phi=\frac{M}{L_2}\ii_d$ which is of identical form as in circuit (a) with an effective inductance $\frac{1}{L_{eq}}=\frac{1}{L}+\frac{1}{L_2}$ (which is the formula for effective inductances in parallel) and an effective current source $\frac{M}{L_2}\ii_d$.
\end{Answer}

\section{External fluxes}
\label{sec:ex-flux}

In this section, we explain what we mean by constant external fluxes applied to loops and how to include them in the description; what we want to show is how to pass from the general condition Eq.~\eqref{eq:KVL-flux} to Eq.~\eqref{eq:KVL-flux_ext}, and under which conditions this is valid. 
Let us start by considering the actual physical implementation of an external flux by means of the application of a nearby classical current source. To make this concrete, we can consider the case of an inductively shunted CPB shown in Fig.~\ref{fig:iscpb}, where on the left we have the physical circuit. We want to understand in which limit we can treat the current source and the mutual inductance as an effective external flux threading the loop formed by the Josephson junction and the inductance as depicted in Fig.~\ref{fig:iscpb} on the right.  

\begin{figure}
\centering
	\includegraphics[height=6 cm]{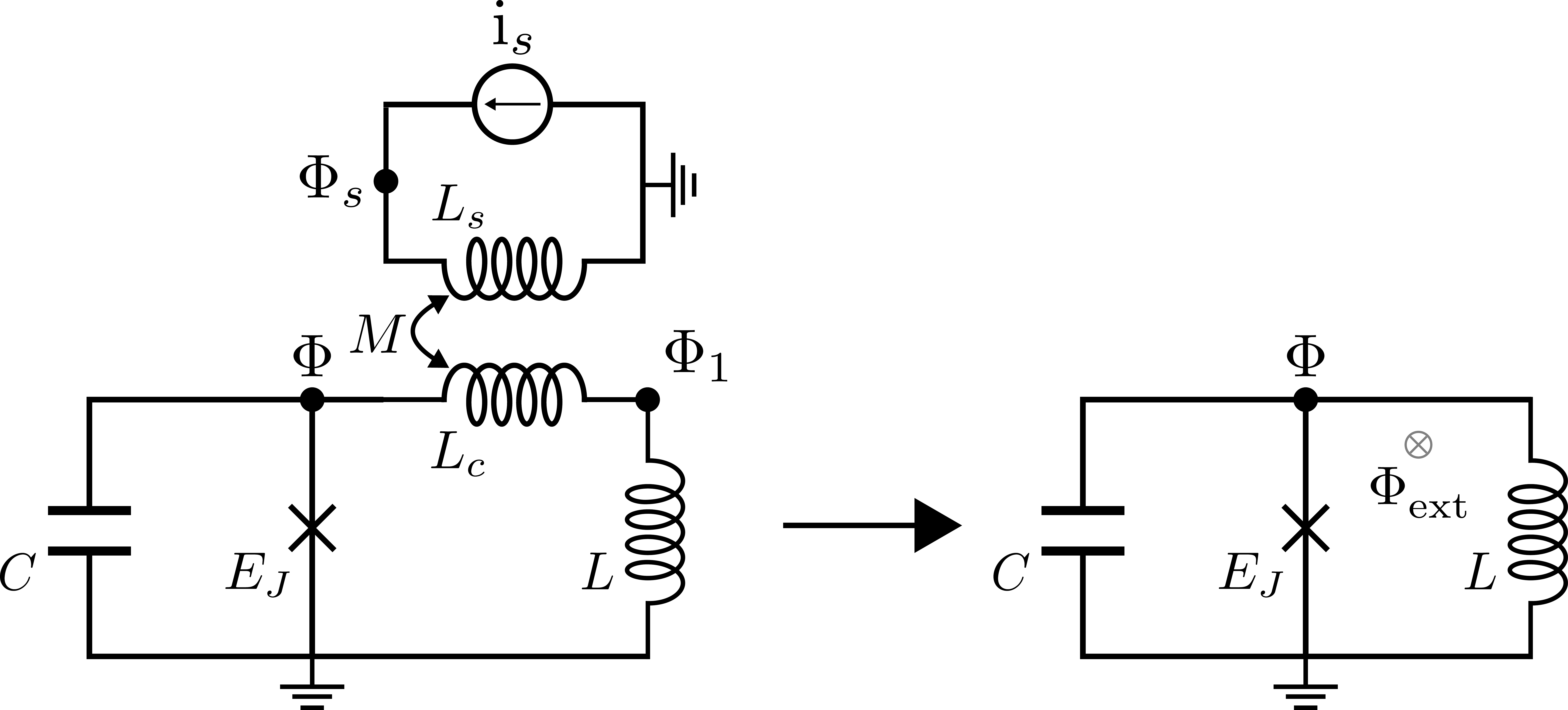}
	\caption{Circuit of an inductively shunted CPB with an applied external flux. (left) Full circuit with current source and mutual inductance. (right) Equivalent circuit with an effective external flux $\Phi_{\mathrm{ext}} = \lim_{M \rightarrow 0} M \mathrm{i}_s$.}
	\label{fig:iscpb}
\end{figure}

The current source $\ii_s$ is related to the flux $\Phi_{s}$ by $\ii_s=\Phi_s/L_s$. We now consider the limit in which the mutual inductance goes to zero, while the current source is adjusted so that their product is constant, i.e., let $\Phi_{\mathrm{ext}}=\lim_{M\rightarrow 0} M \ii_s$. This is modeling the absence of back action on the external current source: it is treated as a non-dynamical classical source. Using Eq.~\eqref{eq:ul-ind}, the potential energy associated with the self-inductances $L_c$ and $L_s$ and the mutual inductance $M$ equals
\begin{equation}
U=\frac{L_c}{2(L_s L_c-M^2)} \Phi_{c}^2+ \frac{L_s}{2(L_s L_c-M^2)} \Phi_s^2-\frac{M}{(L_s L_c-M^2)} \Phi_s \Phi_{c},
\label{eq:mi-exp}
\end{equation}
where we have called $\Phi_c = \Phi_1-\Phi$ the branch flux across the inductance $L_c$. 
In the limit $M \rightarrow 0$ and $\Phi_{\rm ext}=\lim_{M\rightarrow 0} M \ii_s$, this becomes
\begin{equation}
U(\Phi_{\rm ext})=\frac{1}{2L_c} \Phi_{c}^2+ \frac{L_s}{2} \ii_s^2-\frac{1}{L_c} \Phi_{\rm ext} \Phi_{c}=\frac{1}{2L_c}(\Phi_{c}-\Phi_{\rm ext})^2+\mathrm{constant}.
\end{equation}
 When we take $L_c \rightarrow 0$ ---so that it models a piece of material of zero length--- we see that this potential energy just enforces that $\Phi_{c}=\Phi_{\rm ext}$. Thus, in this limit, the external flux can be incorporated by letting the sum of the branch fluxes around the loop be zero, but include an extra branch with a mutual inductance coupled to a circuit with a current source, effectively giving Eq.~\eqref{eq:KVL-flux_ext}.
 
 While we took the case of an inductively shunted CPB as an example, this procedure can be repeated for the flux through any loop in an electrical circuit, i.e., by invoking a mutual inductive coupling to a current source in the proper limit \footnote{Obviously, a similar modeling would hold if one generates the magnetic flux directly via a magnet, which is, however, not done in practice.}. Thus, when we consider external fluxes, we can always keep this limiting procedure in mind.  

 \subsection{Time-dependent external fluxes}
 \label{subsec:t_dep_ext_fluxes}
 
\begin{figure}
\centering
\begin{subfigure}[h]{0.45 \textwidth}
\centering
\includegraphics[height=4cm]{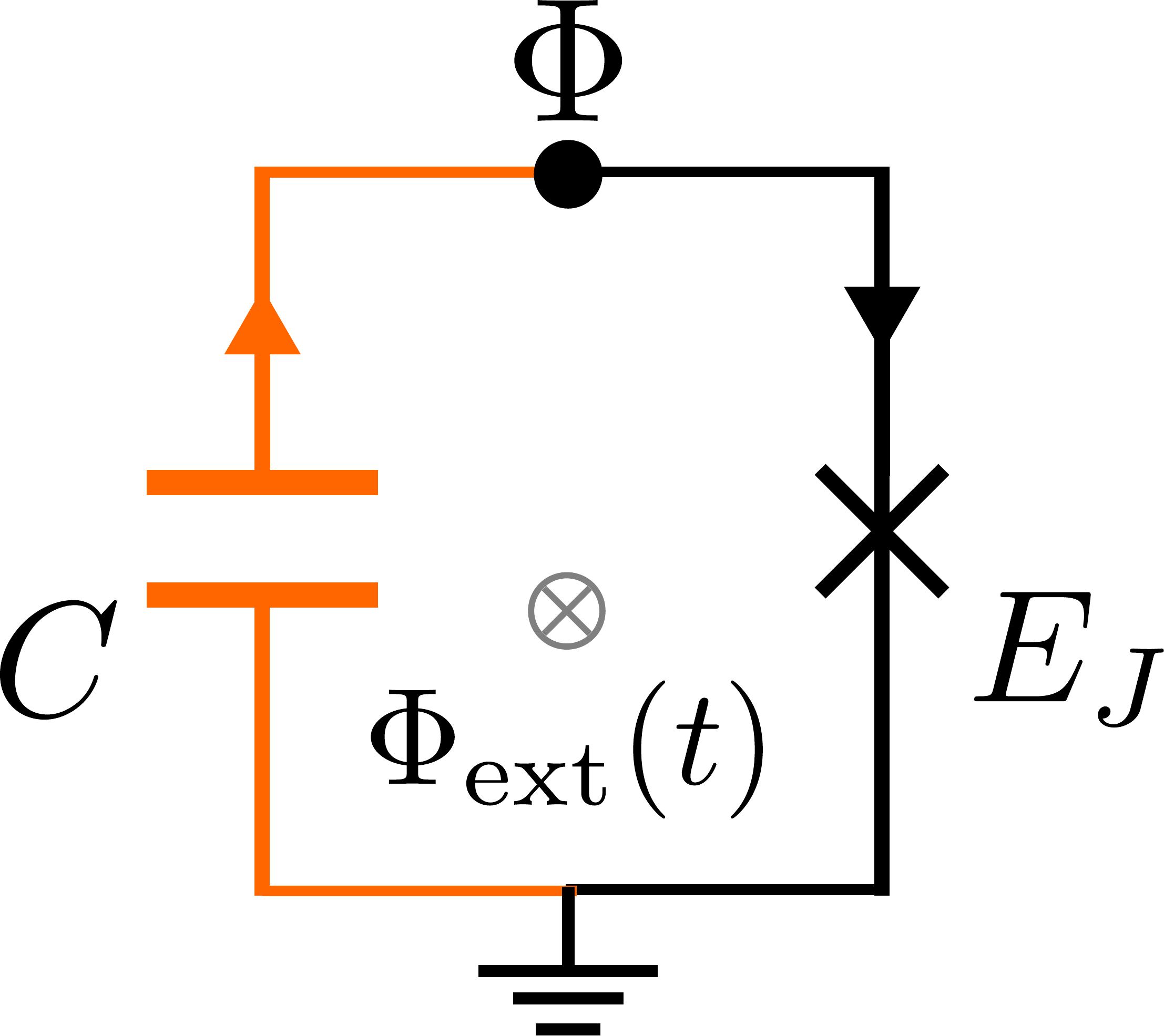}
\subcaption{}
\label{fig:2spana}
\end{subfigure}
\begin{subfigure}[h]{0.45 \textwidth}
\centering
\includegraphics[height=4cm]{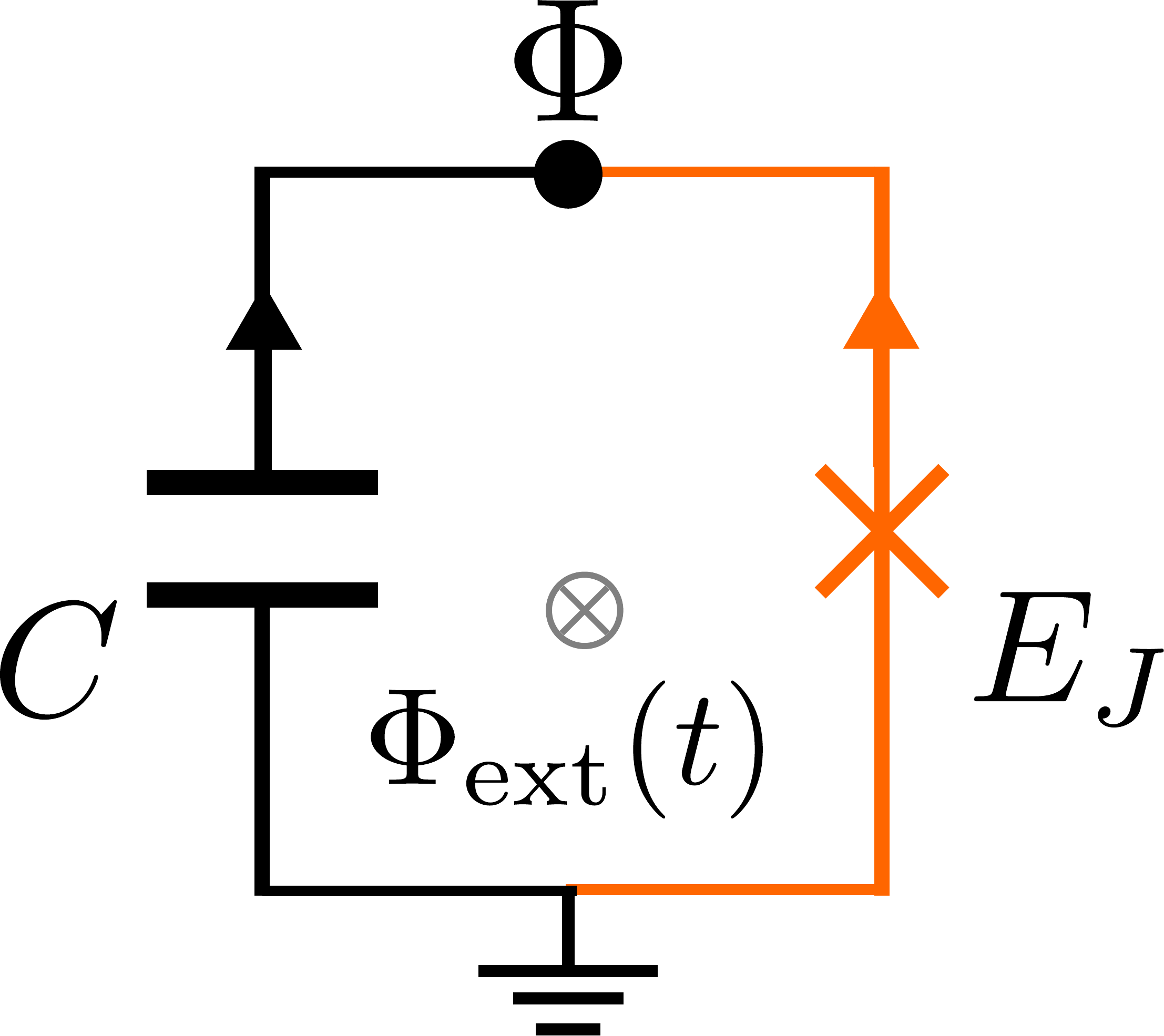}
\subcaption{}
\label{fig:2spanb}
\end{subfigure}
	\caption{Simple circuit in which the external flux $\Phi_{\rm ext}(t)$ in the loop is time-dependent.
	Two different spanning trees are shown in orange, with the respective branch orientations.}
	\label{fig:2span}
\end{figure}
 
 Let us briefly consider how to incorporate the effect of time-dependent external fluxes.
 If the externally applied flux in the loop is time-dependent, it generates an external `electromotive force' around the loop.
 This implies that Kirchhoff's voltage law, Eq.~\eqref{eq:KVL}, is modified to 
 \begin{equation}
    \mat{B} \vbvec =\vect{\vv}_{\rm ext}=\vect{\dot{\Phi}}_{\rm ext}.
 \end{equation}
 This implies that the chord-branch variable $\dot{\Phi}_{\mathfrak{b}_{\ell}}$ should be expressed in node-branch variables $\dot{\Phi}_{n_1}$ and $\dot{\Phi}_{n_2}$ and the externally induced voltage $\vv_{\rm ext}=\dot{\Phi}_{\rm ext,\ell}$ in the loop $\ell$.
 This is accomplished by taking the time-derivative of Eq.~\eqref{eq:flux-l1}, i.e.,
 \begin{equation}
    \dot{\Phi}_{\mathfrak{b}_{\ell}}=\dot{\Phi}_{n_2}-\dot{\Phi}_{n_1}+\dot{\Phi}_{\rm ext,\ell}.
    \label{eq:flux-l2}
\end{equation}
 
 Note that $\dot{\Phi}_{\rm ext,\ell}$ will only appear in the Lagrangian when the chord-branch $\mathfrak{b}_{\ell}$ is capacitive, as only in that case does its energy depend on the variable $\dot{\Phi}_{\mathfrak{b}_{\ell}}$.

Since the answer to the question, which branch is a chord and which one is not, depends on the choice of spanning tree, it might seem ambiguous at first sight whether $\Phi_{\rm ext}(t)$ or its time derivative enters the Lagrangian and eventually the Hamiltonian.
Consider for example the circuit in Fig.~\ref{fig:2span}, in which we have drawn two different spanning trees.
In the circuit in Fig.~\ref{fig:2spana}, the chord branch is capacitive and we have ${\Phi}_{\mathfrak{b}=C}+{\Phi}_{\mathfrak{b}=J}={\Phi}_{\rm ext}$, and ${\Phi}_{\mathfrak{b}=C}=\Phi$ according to the procedure around Eq.~\eqref{eq:define-node}, and the Lagrangian reads
\begin{equation}
    \lagrangian=\frac{C}{2}\dot{\Phi}^2 + E_J\cos\left(\frac{2\pi}{\Phi_0}(\Phi-\Phi_{\rm ext})\right).
    \label{eq:first}
\end{equation}
Alternatively, in the circuit in Fig.~\ref{fig:2spanb}, the chord branch is a Josephson junction. Since now $\mathfrak{b}=J$ is on the spanning tree, we have ${\Phi}_{\mathfrak{b}=J}=\Phi$ and ${\Phi}_{\mathfrak{b}=C}={\Phi}_{\rm ext}+{\Phi}_{\mathfrak{b}=J}={\Phi}_{\rm ext}+\Phi$, such that 
\begin{equation}
    \lagrangian=\frac{C}{2}(\dot{\Phi}+\dot{\Phi}_{\rm ext})^2 + E_J\cos \left(\frac{2\pi }{\Phi_0} \Phi \right).
    \label{eq:sec}
\end{equation}

The difference between Eq.~\eqref{eq:first} and Eq.~\eqref{eq:sec} may seem even more dramatic when~$\dot{\Phi}_{\rm ext}=0$, since then Eq.~\eqref{eq:first} has a dependence on the external flux, but this is absent in Eq.~\eqref{eq:sec}.
However, one can apply a simple change of variables~$\Phi\mapsto {\Phi}+{\Phi}_{\rm ext}$ to Eq.~\eqref{eq:first} and get Eq.~\eqref{eq:sec}.
Of course, this simple connection between the two Lagrangians also holds when~$\dot{\Phi}_{\rm ext}\neq 0$, so all seems fine.

\begin{figure}
\centering
\includegraphics[height=4 cm]{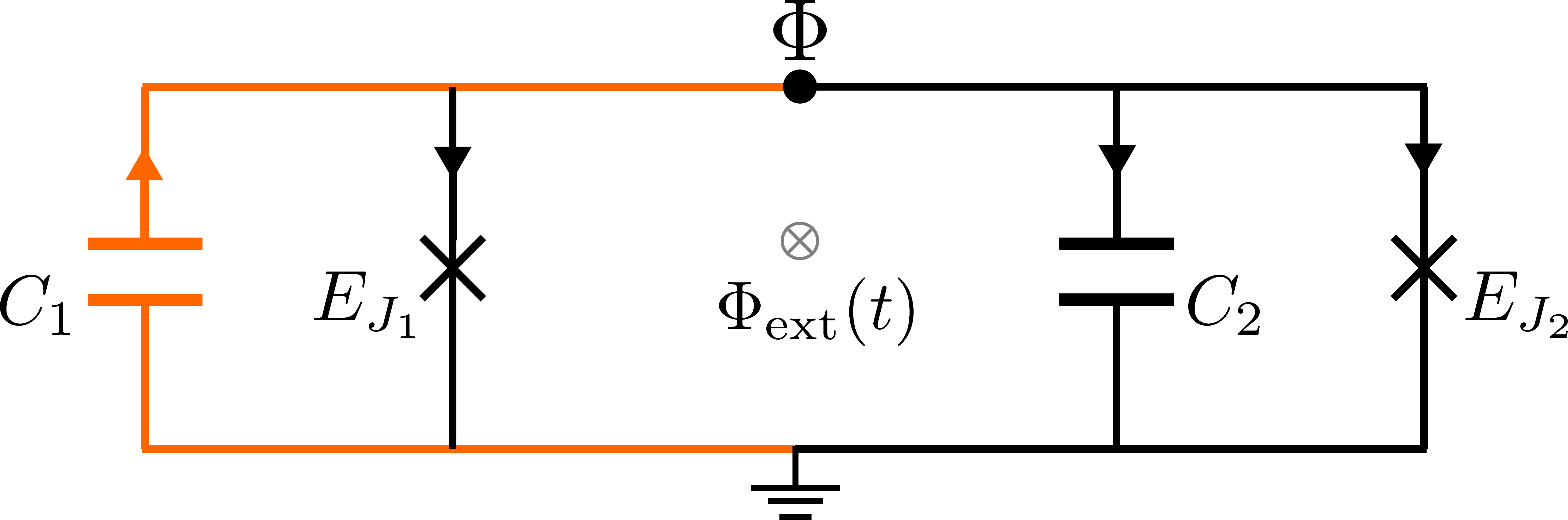}
	\caption{SQUID circuit in which the external flux $\Phi_{\rm ext}(t)$ in the central loop is time-dependent.
	The spanning tree is shown in orange, with the respective branch orientations.}
	\label{fig:squid_ext}
\end{figure}

Now let's look at a more involved circuit ---a SQUID loop---  see Fig.~\ref{fig:squid_ext}, as was done in Ref.~\cite{time-dep-flux1}. This circuit can represent, for example, a tunable transmon qubit, discussed in Section~\ref{sec:cpbspectrum}.
In this case we have the relations, ${\Phi}_{\mathfrak{b}=C_1}=\Phi$, ${\Phi}_{\mathfrak{b}=C_1}+{\Phi}_{\mathfrak{b}=C_2}= {\Phi}_{\rm ext}$, ${\Phi}_{\mathfrak{b}=C_1} =-{\Phi}_{\mathfrak{b}=J_1}$, ${\Phi}_{\mathfrak{b}=C_2} ={\Phi}_{\mathfrak{b}=J_2}$. We get the Lagrangian
\begin{equation}
    \lagrangian = \frac{C_1}{2}\dot{\Phi}^2 +\frac{C_2}{2} (\dot{\Phi}-\dot{\Phi}_{\rm ext})^2 + E_{J_1}\cos\left(\frac{2\pi }{\Phi_0} \Phi \right) + E_{J_2}\cos\Bigl(\frac{2\pi}{\Phi_0}(\Phi-{\Phi}_{\rm ext})\Bigr).
    \label{eq:lagr_squid_extflux}
\end{equation}
The conjugate charge is
\begin{equation}
    Q = \frac{\partial\lagrangian}{\partial \dot{\Phi}}
    = (C_1+C_2)\dot{\Phi} - C_2 \dot{\Phi}_{\rm ext}(t),
    \label{eq:cc-timedep}
\end{equation}
where we note the explicit time dependence via the external flux.
The Hamiltonian is then
\begin{equation}
    \hamiltonian = \frac{(Q+C_2\dot{\Phi}_{\rm ext})^2}{2(C_1+C_2)}
    - E_{J_1}\cos\Bigl(\frac{2\pi}{\Phi_0}\Phi\Bigr) - E_{J_2}\cos\Bigl(\frac{2\pi}{\Phi_0}(\Phi-{\Phi}_{\rm ext})\Bigr).
    \label{eq:hamil_squid_extflux_1}
\end{equation}
The dependence of the Hamiltonian on $\dot{\Phi}_{\rm ext}$, i.e., in the capacitive part, may not be desirable for computational purposes, and we can ask whether there is always a variable transformation in the Lagrangian which ensures that $\dot{\Phi}_{\rm ext}$ only enters the potential term. 

Consider the transformation $\Phi\mapsto \Phi + \alpha {\Phi}_{\rm ext}$, with a `gauge' parameter~$\alpha$ to be determined.
The Lagrangian in Eq.~\eqref{eq:lagr_squid_extflux} becomes
\begin{multline}
    \lagrangian = \frac{C_1+C_2}{2}\dot{\Phi}^2 + \bigl(C_1\alpha + (1-\alpha)C_2\bigr) \dot{\Phi}\dot{\Phi}_{\rm ext} + 
    E_{J_1}\cos\biggl(\frac{2\pi}{\Phi_0}(\Phi+ \alpha {\Phi}_{\rm ext})\biggr) + E_{J_2}\cos\biggl(\frac{2\pi}{\Phi_0}(\Phi-(1-\alpha){\Phi}_{\rm ext})\biggr),
\end{multline}
where in the last line we have dropped the terms that do not contain~$\dot{\Phi}$ or~${\Phi}$.
To get rid of~$\dot{\Phi}_{\rm ext}$ we can set $(C_1\alpha + (1-\alpha)C_2)\equiv 0 \implies \alpha=\frac{C_2}{C_2-C_1}$.
This is called the `irrotational gauge' in Ref.~\cite{time-dep-flux1}.
It follows directly that the Hamiltonian is
\begin{multline}
    \hamiltonian = \frac{Q^2}{2(C_1+C_2)}- E_{J_1}\cos\biggl(\frac{2\pi}{\Phi_0}\biggl(\Phi- \frac{C_2}{C_1-C_2} {\Phi}_{\rm ext}\biggr)\biggr) 
    - E_{J_2}\cos\biggl(\frac{2\pi}{\Phi_0}\biggl(\Phi- \frac{C_1}{C_1-C_2}{\Phi}_{\rm ext} \biggr)\biggr).
    \label{eq:hamil_squid_extflux_3}
\end{multline}
While one could in principle solve for the system dynamics using either~\cref{eq:hamil_squid_extflux_1} or \cref{eq:hamil_squid_extflux_3} depending on the chosen reference frame, Eq.~\eqref{eq:hamil_squid_extflux_3} seems to make this task easier.
Furthermore, Ref.~\cite{time-dep-flux1} makes the point that on a more fundamental level, only Eq.~\eqref{eq:hamil_squid_extflux_3} leads to consistent predictions of the qubit relaxation time~$T_1$ due to flux noise, using Fermi's golden rule. According to Fermi's golden rule (cf. Eq.~\eqref{eq:fermi}), one approximately has $T_1^{-1}\propto |\braket{0|\partial_{{\Phi}_{\rm ext}}H|1}|^2$, where $0,1$ are the ground and first-excited state of the qubit, respectively. Based on this formula, in the case of unequal junctions, the way in which the external flux is split across the two junctions leads to different predictions. 
While one approach could be to adapt this formula, the approach taken by Ref.~\cite{time-dep-flux1} is to adapt the circuit-quantization procedure.
In particular, the authors generalize the change of variables considered above to a systematic procedure to split the flux across multiple Josephson junctions in possibly multiple loops, effectively replacing the standard procedure in Eq.\eqref{eq:flux-l1} and Eq.~\eqref{eq:flux-l2} to assign the external (time-dependent) flux solely to the chord branch which closes the fundamental loop through which the flux is threading. The theoretical analysis of \cite{time-dep-flux1} has been recently experimentally confirmed in Ref.~\cite{byron2023}.

Ref.~\cite{time-dep-flux2} extends these results from the lumped-element case to the case of continuous structures.
In this case they identify the Coulomb gauge as the gauge for the electromagnetic field that restricts the dependence on external magnetic fluxes to the potential in the Hamiltonian (and not in the kinetic energy via $\dot{\Phi}_{\rm ext}$).
Besides, the results of \cite{time-dep-flux1} are validated in the sense that it is shown that the continuous limit of their method (based on the irrotational gauge) leads precisely to the Coulomb gauge.
Further considerations about treating time-dependent external fluxes in circuit quantization can be found in Refs.~\cite{time-dep-flux1, byron2023, time-dep-flux2}.

\subsection{External fluxes in superconductors and fluxoid quantization}

\begin{figure}
\centering
\includegraphics[height=4 cm]{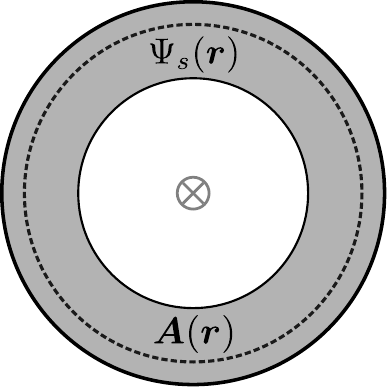}
	\caption{Superconducting loop (grey) pierced by a magnetic field. The dashed line depicts a possible integration contour in the bulk of the superconductor for the integrals in Eq.\eqref{eq:fluxoid_quant}.}
	\label{fig:suploop}
\end{figure}

The circuit theory developed so far applies to any lossless electrical circuit. The only way superconductivity has entered the discussion is via the introduction of the Josephson junction which does not exist for normal metals. However, we know that superconductors are different than normal metals with zero resistance and so this raises the question which other properties of superconductors could play a role in a lumped-element description. Two prominent superconducting effects should be considered: the Meissner effect and fluxoid quantization. The Meissner effect consists of the exclusion of the magnetic field out of a superconductor during the superconducting phase transition \cite{tinkham, girvin_yang_2019}. Thus, in a superconductor not only the electric field is zero, as in normal metals, but also the magnetic field. 
 
Fluxoid quantization instead plays a nontrivial role. For a superconductor let the superconducting order parameter be given as
\begin{equation}
\Psi_s(\vect{r}) = \lvert \Psi_s(\vect{r}) \rvert e^{i \varphi(\vect{r})}.
\label{eq:op}
\end{equation}
The superconducting current density $\vect{J}_s(\vect{r})$ then equals
\begin{equation}
\label{eq:supercurrent}
\vect{J}_s(\vect{r}) = - \frac{e}{m_e} |\Psi_s(\vect{r})|^2 \biggl( \hbar \nabla \varphi(\vect{r}) + 2 e \vect{A}(\vect{r}) \biggl).
\end{equation}
with vector potential $\vect{A}(\vect{r})$  \cite{tinkham, girvin_yang_2019, timm_notes}. Eq.~\eqref{eq:supercurrent} is the fundamental starting equation for deriving the Josephson effect, and, in the absence of magnetic field, the phase $\varphi$ in the first Josephson relation Eq.~\eqref{eq:JJ-relation-phi} coincides with the difference between the phases $\varphi_1$ and $\varphi_2$ of the superconducting order parameter of the two superconductors forming the junction (see Appendix A in \cite{clarke2006squid} for a simple derivation). Assuming $|\Psi_s(\vect{r})|\neq 0, \forall \vect{r}$ in the superconductor and integrating over a closed loop we obtain
\begin{equation}
\frac{m_e}{2 e^2} \oint d \vect{r} \cdot \frac{\vect{J}_s(\vect{r})}{|\Psi_{s}(\vect{r})|^2} +  \oint d \vect{r} \cdot \vect{A}(\vect{r}) = \frac{m_e}{2 e^2} \oint d \vect{r} \cdot \frac{\vect{J}_s(\vect{r})}{|\Psi_{s}(\vect{r})|^2} +  \Phi_{\mathrm{loop}} = m \Phi_0, \quad m \in \mathbb{Z},
\label{eq:fluxoid_quant}
\end{equation}
where we have identified the magnetic flux enclosed by the loop as $\Phi_{\mathrm{loop}} = \oint d \vect{r} \cdot \vect{A}(\vect{r})$. The term on the left-hand side of Eq.~\eqref{eq:fluxoid_quant} is called the \emph{fluxoid} and the condition in Eq.~\eqref{eq:fluxoid_quant} itself fluxoid quantization.

Let us analyze Eq.~\eqref{eq:fluxoid_quant}. We notice that the fluxoid is given by the sum of a magnetic flux $\Phi_{\mathrm{loop}}$ and a ``kinetic" term that is due to the line integral of the superconducting current density divided by the modulus squared of the order parameter. This term has not entered in our discussion about Lagrangians and Hamiltonians in electrical circuits. However, as argued in Ref.~\cite{tinkham}, the superconducting current density $\vect{J}_s(\vect{r})$ is only non-zero close to the surface of the superconductor, more precisely within the London penetration depth of the material. Thus, if our circuit is made out of superconductors that are thicker than the London penetration depth, we can always take the line integral in an inner region where $\bm{J}_s(\bm{r})\approx0$ (see Ref.~\cite{tinkham} for more details). If we neglect the kinetic term, we arrive at the condition  

\begin{equation}
\Phi_{\mathrm{loop}} = m \Phi_0, \quad m \in \mathbb{Z},
\label{eq:flux_quant}
\end{equation}
which, unsurprisingly, goes under the name of flux quantization. 

There is still one important question to be answered: what is the interpretation of the integer $m$, i.e., the number of fluxoids, in Eqs.~\eqref{eq:fluxoid_quant}, \eqref{eq:flux_quant}? Given our derivation, we immediately conclude that $m$ represents the (signed) number of times that the phase $\varphi(\bm{r})$ of the superconducting order parameter goes around $2\pi$ in the loop. Notice that it is only the fluxoid number that matters, and not the specific distribution of the phase $\varphi(\bm{r})$ around the path. Moreover, Eq.~\eqref{eq:flux_quant} seems to suggest that the general condition in Eq.~\eqref{eq:KVL-flux_ext} with superconductors should be modified as

\begin{equation}
\mat{B} \vect{\Phi}_{\mathfrak{b}}=\vect{\Phi}_{\mathrm{ext}} + \vect{m} \Phi_0,
\label{eq:KVL-fluxoid}
\end{equation}
where $\vect{m} \in \mathbb{Z}^{N-M}$ with $N+1$ the number of nodes and $M$ the number of branches in the circuit graph. Looking at Eq.~\eqref{eq:KVL-fluxoid}, if $\bm{m}$ is a constant and cannot change in time, we can interpret it simply as Eq.~\eqref{eq:KVL-flux_ext} with a new effective vector of external fluxes $\vect{\Phi}_{\mathrm{ext}}'= \vect{\Phi}_{\mathrm{ext}} + \vect{m} \Phi_0$. Thus, if $\bm{m}$ is constant our previous description is still valid with potentially shifted flux parameters. Also, if we only have Josephson junctions, or in general, $2 \pi$-periodic potentials, $\vect{m}\Phi_0$ would not even enter in the Lagrangian and Hamiltonian.

However, is it possible for $\vect{m}$ to vary in time in a superconductor? And if so, shall we treat it as an additional dynamical degree of freedom that we should somehow quantize and to which we assign orthogonal states $|\bm{m} \rangle$ in the Hilbert space? The answer to both questions is: in theory yes, but in most practical cases no. A change in the number of fluxoids $m$ in a superconducting loop goes under the name of phase slip. In order for the process to manifest itself one needs very thin superconducting nanowires \cite{Mooij2006, lau2001, zaikin1997}. At a temperature close to the critical temperature phase slips are thermally activated \cite{golubev2008} but at lower temperatures coherent tunneling between neighboring fluxoid states $|m \rangle \leftrightarrow |m \pm 1\rangle$ should be possible in nanowires. Although hard to observe experimentally, this has led to the notion that coherent phase slip tunneling is a dual effect to the Josephson effect in superconducting nanowires \cite{Mooij2006}. 
As we will not deal with nanowires and their treatment in this book, we can assume that the fluxoid degree of freedom is fixed to a certain value; specifically, we can assume $\bm{m} = 0$, since, if before cooling down the superconductor we do not apply any magnetic flux, then the $\bm{m}=0$ case is the one that minimizes the energy in a superconducting loop (see however Ref.~\cite{kirtley2003} for an experimental study of spontaneous fluxoid formation).

The previous discussion shows that it is quite tricky to model a piece of superconductor as a lumped-element and eventually theoretical difficulties in the description arise from this fact. We refer the reader to Ref.~\cite{semenov2016} for an example of a possible continuous model of a thin superconducting wire. Another possible far-reaching approach is to model the superconductor as a chain of Josephson junctions \cite{matveev2002}. In this case, phase slips correspond to changes by (approximately) $2 \pi$ in one of the junctions in the array and the coherent tunneling between them can be rigorously identified via a perturbative approach. This analysis is also relevant for the fluxonium that we will briefly introduce in Section~\ref{subsec:fluxonium}. We refer the reader to the original fluxonium references for a more detailed discussion \cite{ManucharyanPhd, Manucharyan113}. We conclude by remarking that there is still some debate in the literature about the correct, effective treatment of superconductors in the presence of phase slips (see Refs.~\cite{le_grimsmo2019, koliofoti2023}). 

\newpage
 
\begin{Exercise}[title={Circuit equivalences using the Lagrangian formalism}, label=exc:series]
\begin{figure}[htb]
\centering
\begin{subfigure}[h]{0.45 \textwidth}
\centering
\includegraphics[height=5cm]{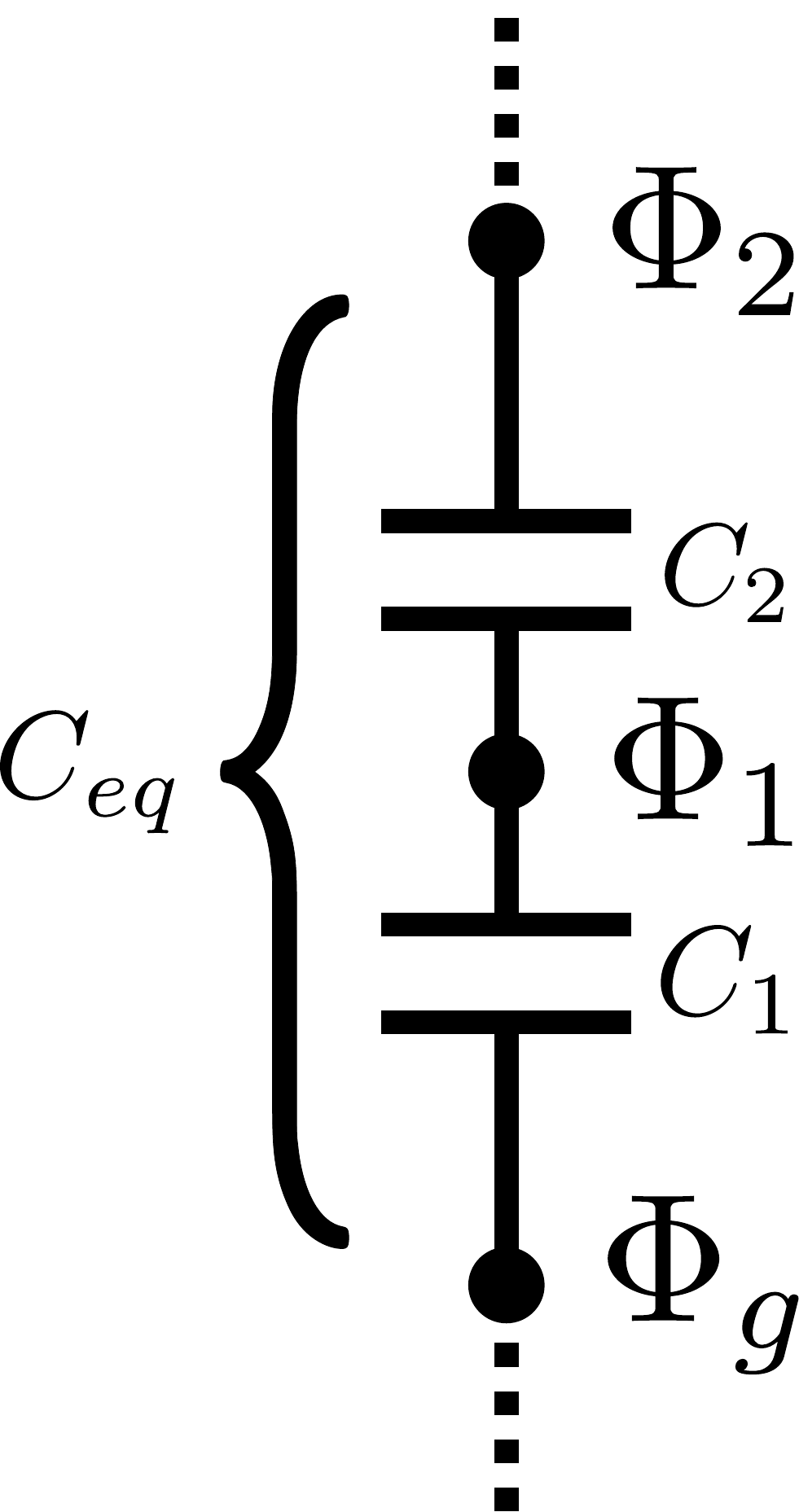}
\subcaption{}
\label{fig:series_cap}
\end{subfigure}
\begin{subfigure}[h]{0.45 \textwidth}
\centering
\includegraphics[height=5cm]{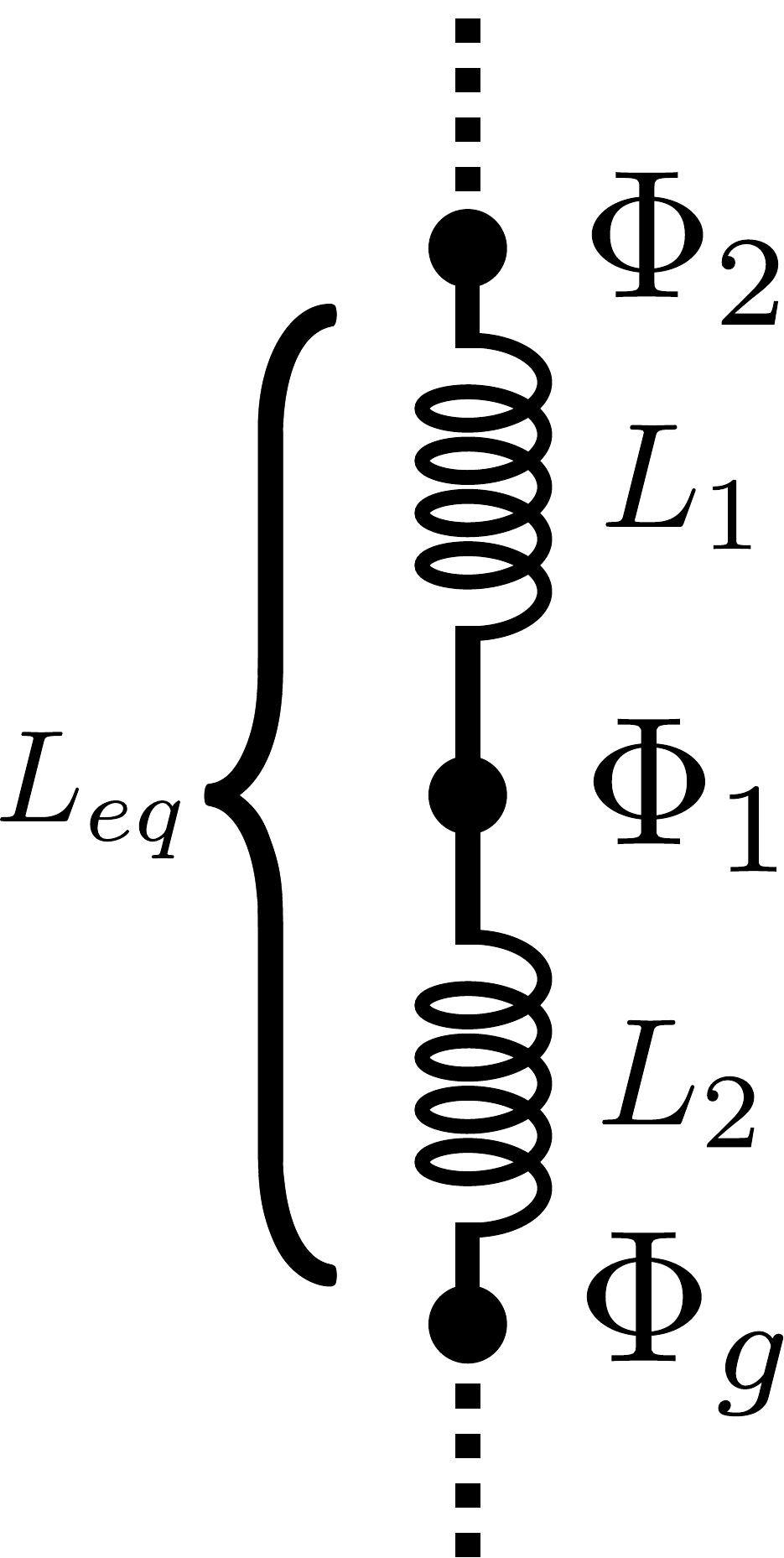}
\subcaption{}
\label{fig:series_ind}
\end{subfigure}
\caption{(a) Series of capacitances (b) series of inductances.}
\label{fig:series_cap_ind}
\end{figure}

In this exercise we verify that the well-known circuit equivalences can be derived using the Lagrangian formalism. Consider the series of capacitances in Fig.~\ref{fig:series_cap} which can be part of a larger circuit at the nodes $\Phi_g$ and $\Phi_2$ (but not at node $\Phi_1$). Write down the Lagrangian for the circuit and use the Euler-Lagrange equations to show that the capacitances can be substituted by an equivalent capacitance $C_{eq}$ with $1/C_{eq}=1/C_1+1/C_2$. Do the same for the series of inductances in Fig.~\ref{fig:series_ind}, which can again be part of a larger circuit, and obtain the equivalent inductance $L_{eq}=L_1+L_2$. 
\end{Exercise}

\begin{Answer}[ref={exc:series}]
The Lagrangian is $\lagrangian=\lagrangian_{\rm else}(\Phi_2,\Phi_g)+\frac{C_2}{2}(\dot{\Phi}_2-\dot{\Phi}_1)^2+\frac{C_1}{2}(\dot{\Phi}_1-\dot{\Phi}_g)^2$ as these two branches are in any spanning tree, since they are the only way to reach node $\Phi_1$.  $\lagrangian_{\rm else}(\Phi_2,\Phi_g)$ represents the other part of the circuit. Hence the Euler-Lagrange equation for $\Phi_1$ reads $\ddot{\Phi}_1=\frac{C_1 \ddot{\Phi}_g+C_2 \ddot{\Phi}_2}{C_2+C_1}$, which allows one to eliminate $\dot{\Phi}_1(t)=\int_{-\infty}^t \ddot{\Phi}_1(t')dt'$ in $\lagrangian$, resulting in $\lagrangian_{\rm else}(\Phi_2,\Phi_g)+\frac{C_{eq}}{2}(\dot{\Phi}_2-\dot{\Phi}_g)^2$. Similarly, for the inductances we have $\lagrangian=\lagrangian_{\rm else}(\Phi_2,\Phi_g)+\frac{1}{2L_1}(\Phi_2-\Phi_1)^2+\frac{1}{2L_2}(\Phi_1-\Phi_g)^2$, which gives rise to $\Phi_1=\frac{L_2 \Phi_2+L_1 \Phi_g}{L_2+L_1}$ which is used to eliminate $\Phi_1$. This reasoning is completely correct for classical dynamics; it has problems in the quantum theory because of zero-point fluctuations. This issue is treated in detail in Section~\ref{subsec:patho} and Appendix~\ref{sec:elim}.
\end{Answer}

\chapter[Applying canonical quantization]{Applying canonical quantization to an electrical circuit}
\label{chap:cq-app}

In Chapter~\ref{chap:lagr_ham} we have outlined a general procedure to obtain the Hamiltonian of an electrical circuit and in this chapter we will show how to apply this procedure to analyze various qubits. 

The idea is that we express the Lagrangian in terms of the energy contributions using branch variables. Then, given the electrical circuit and its associated graph we identify the independent variables in the circuit by finding the spanning tree and we re-express the Lagrangian in terms of these independent node flux variables. 
Once we have the Lagrangian and a set of independent flux variables $\Phi_i$, we can write down the classical equations of motion
\begin{equation}
    \frac{d}{dt} \left( \frac{\partial \mathcal{L}}{\partial \dot{\Phi}_k} \right)
	- \frac{\partial \mathcal{L}}{\partial \Phi_k} = 0.
	\label{eq:genEL}
\end{equation}
When $\Phi_k$ are node fluxes, we observe that one can read these classical equations of motion as expressing Kirchhoff's current law in Eq.~\eqref{eq:KCL}: the sum of currents is zero at each node (note that the dimension of $\partial \lagrangian/\partial \Phi_k$ is that of ampere).

\section{Invertibility of the capacitance matrix}

In Appendix~\ref{app:cc} we have seen that we can properly define a Hamiltonian from a Lagrangian when the symmetric matrix $\mat{C}$ in Eq.~\eqref{eq:standard-form} is positive-definite, making it invertible. When the variables $\dot{x}_i$ are the variables $\dot{\Phi}_i$, the matrix $\mat{C}$ is called the capacitance matrix.
Here we consider under what conditions the capacitance matrix is indeed invertible.
 The capacitance matrix is found by considering all capacitive (with linear capacitors) branches in the graph, in order to write down the kinetic energy $T$ of the Lagrangian and then switching to node variables.

Let $G$ be a graph with $N+1$ nodes of which one is chosen as the ground node. Each capacitive branch flux can be written as $\Phi_{\mathfrak{b}}=\pm(\Phi_n-\Phi_{n'})$ with $\mathfrak{b}=(n n')$ where the $\pm$ depends on orientation, so that $\dot{\Phi}_{\mathfrak{b}}=\pm(\dot{\Phi}_n-\dot{\Phi}_{n'})$ and $\dot{\Phi}_{n=0}=0$ as it is the designated ground node. We have
\begin{equation}
\label{eq:gencapenergy}
T=\frac{1}{2}\sum_{{\rm cap. branch} \;\; \mathfrak{b}} C_{\mathfrak{b}} \dot{\Phi}_{\mathfrak{b}}^2=\frac{1}{2} \sum_{n,n'=1}^N \dot{\Phi}_n C_{nn'} \dot{\Phi}_{n'},
\end{equation}
defining the $N \times N$ symmetric capacitance matrix $\mat{C}$ with entries $C_{nn'}$.  We note that in Eq.~\eqref{eq:gencapenergy} the orientation sign $\pm$ is actually irrelevant. The matrix elements of the capacitance matrix are given by
\begin{eqnarray*}
\forall n,m=1,\ldots, N, \colon C_{nn}=\sum_{n'=0\colon n\neq n'}^{N} C_{\mathfrak{b}=(nn')}\\
{\mbox{$m\neq n$,}}\;\; C_{nm}= C_{mn}=- C_{\mathfrak{b}=(mn)} < 0.
\end{eqnarray*}
Thus the diagonal entries ($C_{nn}$) sum over all the capacitances of the capacitive branches in which the node $n$ participates, including a capacitance to the ground node. $C_{nm}$ is minus the capacitance of the branch $(mn)$.
An example is the $2 \times 2$ matrix in Eq.~\eqref{eq:lagrangV}.

Let us now prove the following simple proposition.

\begin{lem}
The capacitance matrix $\mat{C}$ is invertible if the graph $G$ has a spanning tree (connected to the ground node) which consists of only capacitive branches $\mathfrak{b}$, each with non-zero capacitance $C_{\mathfrak{b}} >0$.
\label{lem:captree}
\end{lem}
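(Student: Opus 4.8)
The plan is to prove invertibility by showing that the symmetric matrix $\mat{C}$ is in fact positive definite. First I would rewrite the kinetic energy in Eq.~\eqref{eq:gencapenergy} as a sum of rank-one contributions, one per capacitive branch: to a branch $\mathfrak{b}$ between two live nodes $n,n'$ associate the vector $\vect{a}_{\mathfrak{b}}=\vect{e}_n-\vect{e}_{n'}\in\mathbb{R}^N$, and to a branch between a live node $n$ and the ground node associate $\vect{a}_{\mathfrak{b}}=\vect{e}_n$; then $\dot{\Phi}_{\mathfrak{b}}=\vect{a}_{\mathfrak{b}}^T\dot{\vect{\Phi}}$ up to the irrelevant orientation sign, so that
\begin{equation*}
\mat{C}=\sum_{\mathrm{cap.\ branch}\ \mathfrak{b}} C_{\mathfrak{b}}\,\vect{a}_{\mathfrak{b}}\vect{a}_{\mathfrak{b}}^T ,
\end{equation*}
which one checks against the explicit entries of $\mat{C}$ listed above (a ground branch contributes only to the diagonal term $C_{nn}=\sum_{n'}C_{\mathfrak{b}=(nn')}$, which indeed includes $n'=0$). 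Since every $C_{\mathfrak{b}}>0$, each summand is positive semidefinite, hence $\mat{C}\succeq\mat{C}_T:=\sum_{\mathfrak{b}\in T}C_{\mathfrak{b}}\,\vect{a}_{\mathfrak{b}}\vect{a}_{\mathfrak{b}}^T$, where $T$ is the hypothesised capacitive spanning tree. It therefore suffices to show that $\mat{C}_T$ is positive definite.

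The core of the argument is the claim that the $N$ vectors $\{\vect{a}_{\mathfrak{b}}:\mathfrak{b}\in T\}$ are linearly independent, and hence form a basis of $\mathbb{R}^N$ (recall a spanning tree has exactly $N$ branches). Granting this, $\vect{x}^T\mat{C}_T\vect{x}=\sum_{\mathfrak{b}\in T}C_{\mathfrak{b}}(\vect{a}_{\mathfrak{b}}^T\vect{x})^2$ vanishes only if $\vect{a}_{\mathfrak{b}}^T\vect{x}=0$ for all $\mathfrak{b}\in T$, i.e.\ only if $\vect{x}=0$, so $\mat{C}_T\succ0$ and thus $\mat{C}\succ0$. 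To prove the claim I would induct on the number of live nodes: a tree on $\ge2$ nodes has at least two leaves, hence a leaf $n$ distinct from the ground node, and the unique tree-edge $\mathfrak{b}_0$ incident to $n$ is the only one whose vector $\vect{a}_{\mathfrak{b}}$ has a nonzero $n$-th component. So in any relation $\sum_{\mathfrak{b}\in T}c_{\mathfrak{b}}\vect{a}_{\mathfrak{b}}=0$ the coefficient $c_{\mathfrak{b}_0}$ must be $0$; deleting $n$ and $\mathfrak{b}_0$ leaves a spanning tree of the remaining (connected) graph that still contains the ground node, and the inductive hypothesis finishes the job, the base case being a single edge with vector $\vect{e}_1\ne0$. (Equivalently one may just invoke the standard fact that the reduced incidence matrix of a connected graph on $N+1$ nodes has rank $N$.)

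I do not anticipate a genuine obstacle; the points requiring care are the bookkeeping in identifying $\mat{C}$ with $\sum_{\mathfrak{b}}C_{\mathfrak{b}}\vect{a}_{\mathfrak{b}}\vect{a}_{\mathfrak{b}}^T$, and, in the induction, checking that removing a non-ground leaf really does return a spanning tree of a connected subgraph containing ground so that the hypothesis applies. It is worth closing with a remark on why the hypothesis is the natural one: if no capacitive spanning tree exists, some node (or set of nodes) is joined to the rest of the circuit only through non-capacitive branches, the corresponding $\dot{\Phi}$-direction is a null vector of $\mat{C}_T$ (indeed of $\mat{C}$ if that node has no capacitance at all), and the Legendre transform to a Hamiltonian breaks down — precisely the pathology analysed later in Section~\ref{subsec:patho}.
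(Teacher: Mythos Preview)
Your argument is correct and complete. Both you and the paper show that $\mat{C}$ is positive definite, but the routes differ in flavour. The paper identifies $\mat{C}$ as the weighted graph Laplacian of the capacitive subgraph $G_{\rm cap}$ with the ground row and column deleted, and then invokes the standard spectral fact that a connected weighted Laplacian has a one-dimensional kernel spanned by the all-ones vector, which is precisely what the deletion removes. You instead write $\mat{C}=\sum_{\mathfrak{b}}C_{\mathfrak{b}}\,\vect{a}_{\mathfrak{b}}\vect{a}_{\mathfrak{b}}^T$ explicitly as a sum of rank-one pieces, bound it below by the tree contribution $\mat{C}_T$, and prove directly by a leaf-peeling induction that the $N$ vectors $\vect{a}_{\mathfrak{b}}$ for $\mathfrak{b}\in T$ are linearly independent. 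Your approach is more self-contained---it does not quote Laplacian folklore---and makes transparent exactly where the spanning-tree hypothesis enters; the paper's approach is terser and situates the result within a known framework. Your parenthetical remark that this is ``equivalently'' the rank-$N$ property of the reduced incidence matrix is exactly the bridge between the two presentations.
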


\begin{proof}
Let $G_{\rm cap} \subseteq G$ be the capacitive subgraph of $G$ with the same number of nodes as $G$ such that two nodes in $G_{\rm cap}$ are connected if there is a capacitive branch between them. If $G$ has a spanning tree which consists of capacitive branches, then $G_{\rm cap}$ is connected (naturally $G_{\rm cap}$ can consist of more branches than those in the spanning tree).
The capacitance matrix $\mat{C}$ is (almost!) the weighted Laplace matrix of the form $\mat{L}=\mat{D}-\mat{A}$ where $\mat{A}$ is a symmetric (adjacency) matrix of weights with $A_{ii}=0$ and $\mat{D}$ is the sum of weights on each row of $\mat{A}$. Here $\mat{A}$ is the adjacency matrix of the capacitive subgraph $G_{\rm cap} \subseteq G$. The `almost' relates to the fact that the capacitance matrix $\mat{C}$ is a $N \times N$ matrix, while the weighted Laplace matrix $\mat{L}$ is a $(N+1) \times (N+1)$ matrix: we obtain $\mat{C}$ from $\mat{L}$ by just removing the row and column corresponding to the ground node $\Phi_{n=0}$.
A weighted Laplace matrix $\mat{L}$ is positive-semidefinite and it has a unique eigenvector $(1,1,\dots,1)$ with zero eigenvalue, when the underlying graph, in this case $G_{\rm cap}$, is connected and includes each node in the underlying graph $G$. 
We observe that under the constraint $V_0=\dot{\Phi}_0=0$, this unique zero-eigenvalue vector is the zero vector with $\dot{\Phi}_n=V_n=0$ $\forall n$ (all voltages same as ground). 
Moving from $\mat{L}$ to $\mat{C}$ exactly eliminates this zero eigenvector. This implies that when $G_{\rm cap}$ is a connected graph and it is connected to the ground node, $\mat{C}$ only has positive eigenvalues and hence is invertible. 
\end{proof}

As an example in which the condition of the proposition is not fullfilled, image that $G_{\rm cap}$ has two connected components, one connected to ground, the other one `freely floating'. In that case $\mat{C}$ breaks up into two submatrices, one of which has a zero eigenvalue. The zero eigenvector of the submatrix corresponding to the freely floating subgraph corresponds to a constant voltage on the nodes. By grounding this `freely floating' subgraph, one removes the zero eigenvalue and makes both submatrices invertible. 

It is not a necessity to find a capacitive spanning tree between independent node variables, as we will see in some of the examples in Section~\ref{sec:examples}. What matters is that we identify the set of {\em independent} variables in the circuit; we are free to choose the most convenient set ---leading to a clear interpretation of the properties of the circuit---  and we can eliminate degrees of freedom which have no dynamics `by hand'. So in fact, when there is no spanning tree, as in Proposition \ref{lem:captree}, we know that we should be able to reduce the number of independent dynamical variables. 

\section{Non-locality of capacitive interactions in the Hamiltonian}

Before we discuss some examples, it is interesting to make another observation on the form and non-local connectivity of the Hamiltonian due to the capacitive couplings in the electrical circuit. 

The information about capacitive couplings in an electrical circuit is stored in the capacitance matrix $\mat{C}$. In the Hamiltonian that we obtain, i.e., 
\begin{equation}
    \lagrangian=\frac{1}{2}\vect{\dot{\Phi}}^T\mat{C} \vect{\dot{\Phi}}-U(\vect{\Phi}) \rightarrow \hamiltonian=\frac{1}{2} \vect{Q}^T \mat{C}^{-1} \vect{Q} +U(\vect{\Phi}),
\end{equation}
the coupling is via the inverse of the capacitance matrix $\mat{C}^{-1}$. 
When $\mat{C}$ is sparse and `local' as a matrix, representing a (natural) small number of couplings per node, its inverse $\mat{C}^{-1}$ is not. Thus, the couplings in $\hamiltonian$ go beyond nearest-neighbor nodes which are capacitively coupled.

The nonlocality of $\mat{C}^{-1}$ is usually dealt with perturbatively, which is warranted when some couplings are much larger than others. A sufficiently fast fall-off of such non-nearest neighbor interactions is important for design, error control and cross-talk in quantum electrical circuits.

For example, if the capacitive coupling between a pair of nodes is strong (say, equal to $C_{\rm strong}$), while capacitive couplings to the other remaining nodes are weak (i.e.,~of strength $\epsilon=C_{\rm weak}$), we can use perturbative methods to approximate the inverse capacitance matrix. We can write
\begin{equation}
\mat{C}=\left(\begin{array}{cc} \mat{C}_{\rm strong}  & 0  \\ 0 & \mat{C}_{\rm rest} \end{array}\right)+\mat{E}\equiv \mat{C}_0+\mat{E}, 
\end{equation}
with e.g.~$\mat{C}_{\rm strong}=\left(\begin{array}{cc}C_{\rm strong} & -C_{\rm strong} \\ -C_{\rm strong} & C_{\rm strong} \end{array}\right)$ and $\mat{C}_{\rm rest}$ is the capacitance matrix among the remaining nodes, representing couplings among themselves. The perturbative matrix $\mat{E}$ represents capacitive couplings between the two subsets of nodes (the pair and the rest) and the matrix entries in $\mat{E}$ are at most $O(\epsilon)$.   We have 
\begin{equation}
\mat{C}^{-1}=(\mat{C}_0(\mathds{1}+\mat{C}_0^{-1} \mat{E}))^{-1}=(\mathds{1}+\mat{C}_0^{-1} \mat{E})^{-1} \mat{C}_0^{-1} \approx 
(\mathds{1}-\mat{C}_0^{-1} \mat{E})\mat{C}_0^{-1}=\mat{C}_0^{-1}-\mat{C}_{0}^{-1} \mat{E} \mat{C}_0^{-1},
\label{eq:expansion}
\end{equation}
via Taylor expanding a matrix inverse. The inverse of the block matrix $\mat{C}_0$ is of course a block matrix with the same block connectivity structure.
More generally, if we divide the circuit into strongly-coupled small subsets of nodes, this coupling structure of the perturbatively expanded $\mat{C}^{-1}$ can be easily determined. The first-order correction proportional to $\mat{E}$ in Eq.~\eqref{eq:expansion} has connectivity set by the pertubatively weak capacitive couplings. A next-order correction $O(||\mat{E}||^2)$ then has connectivity determined by $\mat{E}^2$, i.e.,~there are non-zero couplings between nodes in different subsets which are weakly coupled via one intermediate node and so on.

\section{Examples}
\label{sec:examples}

In this section, we go through a variety of quantum circuits illustrating the circuit-QED quantization method and defining various qubits. We start with a case in which one cannot find a spanning tree with only capacitive branches in the electrical circuit, so one cannot apply Proposition~\ref{lem:captree}.

\subsection{A pathological case?}
\label{subsec:patho}

Consider the electrical circuit in Fig.~\ref{fig:patho} with two 
%independent 
node fluxes $\Phi_1$ and $\Phi_2$. This seemingly pathological case has been studied in Ref.~\cite{rymarz:msc}, with an extensive update in Ref.~\cite{rymarz:sing}. The Lagrangian reads
\begin{equation}
\lagrangian=\frac{C}{2} \dot{\Phi}_2^2-\frac{1}{2L}(\Phi_2-\Phi_1)^2+E_J\cos \biggl(\frac{2 \pi }{\Phi_0}\Phi_1 \biggr).
\label{lagpath}
\end{equation}
Observe that if we do not put a small capacitance $C_J$ in parallel with the Josephson junction, the circuit does not have a spanning tree with only capacitive branches. Note that in principle each Josephson junction comes with a small capacitance $C_J$ so this is physically warranted, but here we consider the case when it is absent.

The additional capacitance $C_J$ would add a kinetic energy $C_J \dot{\Phi}_1^2/2 $ to the Lagrangian. 
In the limit of very small $C_J$, the corresponding term in the Hamiltonian, proportional to $C_J^{-1}$, becomes very large and the mode associated with it has high energy (tiny mass). If one is used to working with perturbed Hamiltonians, it may be counterintuitive that a term which we omit as it is small in the circuit actually has a large strength in $\hamiltonian$. 

For the circuit in Fig.~\ref{fig:patho}, in the limit of $C_J\rightarrow 0$, the Euler-Lagrange equation for the variable $\Phi_1$ is 
\begin{equation}
C_J\ddot{\Phi}_1+\frac{\partial U}{\partial \Phi_1}=0 \rightarrow_{C_J \rightarrow 0} \frac{1}{L}(\Phi_2-\Phi_1)=I_c \sin \biggl(\frac{2\pi}{\Phi_0}\Phi_1 \biggr),
\label{eq:sol}
\end{equation}
suggesting that we should simply put $\Phi_1$ at the minimum of its potential. The last equation states that the current through the inductive branch should be the same as the current through the Josephson junction branch. Eq.~\eqref{eq:sol} 
is of the form
%means that 
$\Phi_2=f(\Phi_1)$; if we can invert this to form $\Phi_1=f^{-1}(\Phi_2)$, we can eliminate $\Phi_1$ from Eq.~(\ref{lagpath}) and have a Lagrangian only in terms of $\Phi_2$.

This classical approach obviously has some problems, in that the inverse function $f^{-1}$ may not even exist (i.e., be multi-valued). However, beyond this, this simple classical elimination is not obviously warranted when we quantize the system. Classically, the ground state of the high-energy variable $\Phi_1$ is the minimum of the potential, but quantumly the ground state has its zero-point energy. Thus it is found that treating the system quantum-mechanically, the classical elimination strategy of Eq.~\eqref{eq:sol} is never valid, see Ref.~\cite{rymarz:msc}. 
We revisit this problematic circuit, and give a full systematic treatment in Appendix~\ref{sec:elim-BO}, using a Born-Oppenheimer approximation.
 
More generally, a safe method is to explicitly include any small capacitance in the circuit (to make for a capacitive spanning tree) and determine whether the quantum fluctuations of the fast, to-be-eliminated, degrees of freedom affect the dynamics of the remaining degrees of freedom.

\begin{figure}[htbp]
\centering
	\includegraphics[height=4cm]{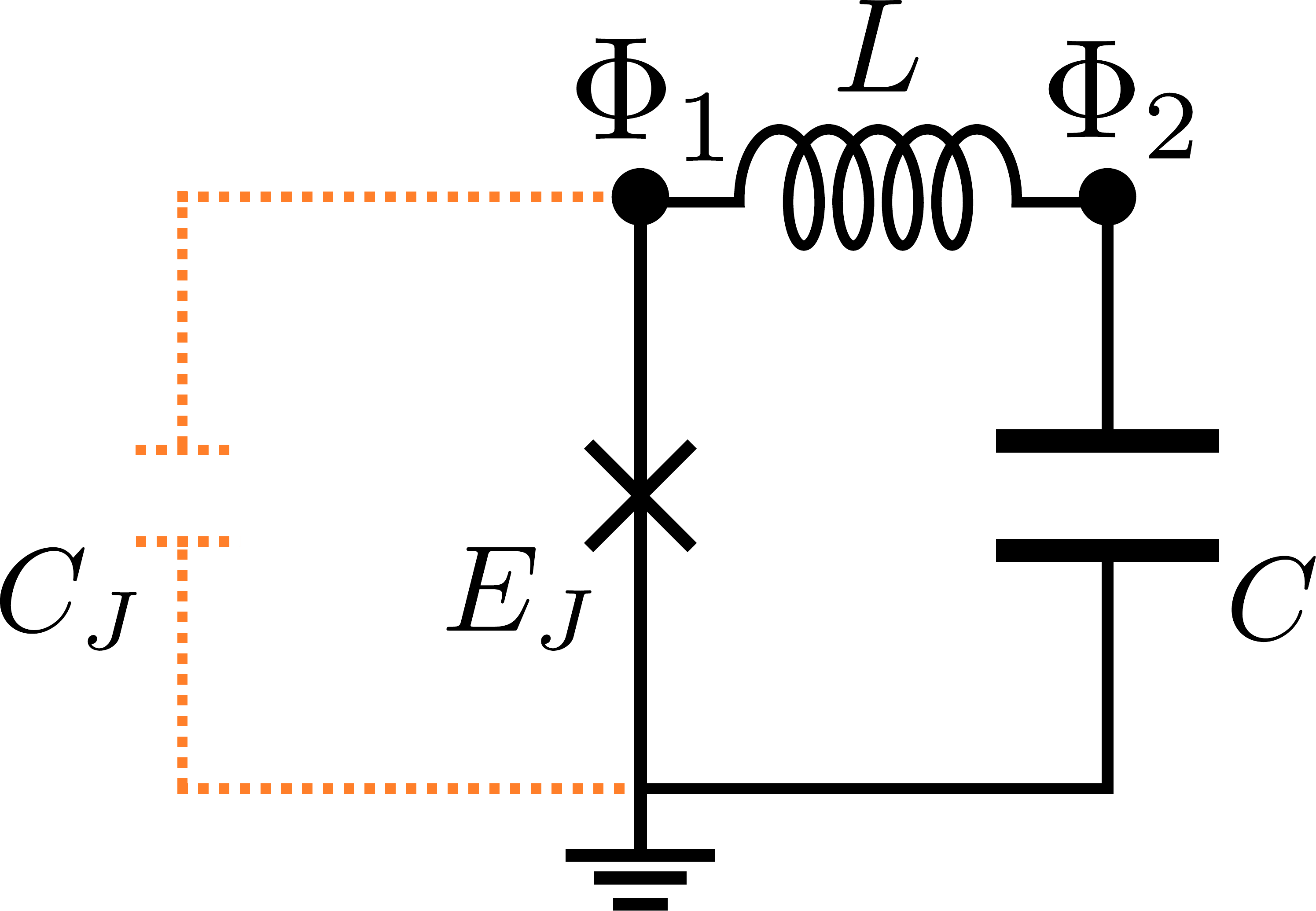}
	\caption{The spanning tree does not contain only capacitive branches, so the capacitance matrix is not invertible and a variable has to be removed by hand. How, though? }
	\label{fig:patho}
\end{figure}

\subsection{Two coupled flux qubits}

Consider the circuit in Fig.~\ref{fig:flux-qubits}, which represents two inductively coupled flux qubits in an rf SQUID configuration \cite{zagoskin_2011}. For both electrical circuit graphs $G_1$ and $G_2$ we choose a spanning tree.
In this case, these are the single capacitive branches $\mathfrak{b}=C_1$ and $\mathfrak{b}=C_2$ in orange in Fig.~\ref{fig:flux-qubits}. Thus, the kinetic energy equals
\begin{equation}
T=\frac{1}{2} C_{J_1} \dot{\Phi}^2_{C_1}+\frac{1}{2} C_{J_2} \dot{\Phi}^2_{C_2}.
\end{equation} 
Now consider all the branches which are not in the tree. No flux is threading through the fundamental loop which is made by the branch associated with the Josephson junction; hence, for those branches we have $\Phi_{J_1}=-\Phi_{C_1},\Phi_{J_2}=\Phi_{C_2}$ (given the orientations) and potential energy $U_{J}=-E_{J_1} \cos\left(2 \pi \Phi_{J_1}/\Phi_0 \right)-E_{J_2} \cos\left(2 \pi \Phi_{J_2}/\Phi_0 \right)$.  

The energy from the mutual inductances is of the form in Eq.~\eqref{eq:ul-ind} and since the branches are associated with loops which thread some external flux, we need to re-express the branch variables $\Phi_{L_1}$ and $\Phi_{L_2}$ as
\begin{equation}
\Phi_{L_1}=-\Phi_{C_1}+\Phi_{{\rm ext}}^{(1)}, \; \Phi_{L_2}=\Phi_{C_2}+\Phi_{{\rm ext}}^{(2)}.
\end{equation}
All this looks a bit heavy-handed in this simple example; since there is only one independent branch variable per circuit, we can directly replace it by the node fluxes $\Phi_{i}'=\Phi_{C_i}$, $i=1,2$ and convert the total Lagrangian to a Hamiltonian:
\begin{equation}
\hamiltonian=\sum_{i=1,2} \frac{ Q_i^2}{2C_{J_i}}-\sum_{i=1,2} E_{J_i} \cos\left(\frac{2\pi}{\Phi_0} \Phi_i'\right)
+\frac{1}{2} \begin{pmatrix}
    -\Phi_1' + \Phi_{\mathrm{ext}}^{(1)} & \Phi_2' + \Phi_{\mathrm{ext}}^{(2)}  
\end{pmatrix} \mat{M}^{-1} \begin{pmatrix}
    -\Phi_1' + \Phi_{\mathrm{ext}}^{(1)} \\ \Phi_2' + \Phi_{\mathrm{ext}}^{(2)}  
\end{pmatrix},
\end{equation}
with, cf. Eq.~\eqref{eq:ul-ind},
\begin{equation}
\mat{M}^{-1}=\frac{1}{L_1 L_2-M^2} \left(\begin{array}{cc} L_2 & -M \\ -M & L_1
\end{array}\right).
\label{eq:inverseL}
\end{equation}

\begin{figure}[htbp]
\centering
	\includegraphics[height=4 cm]{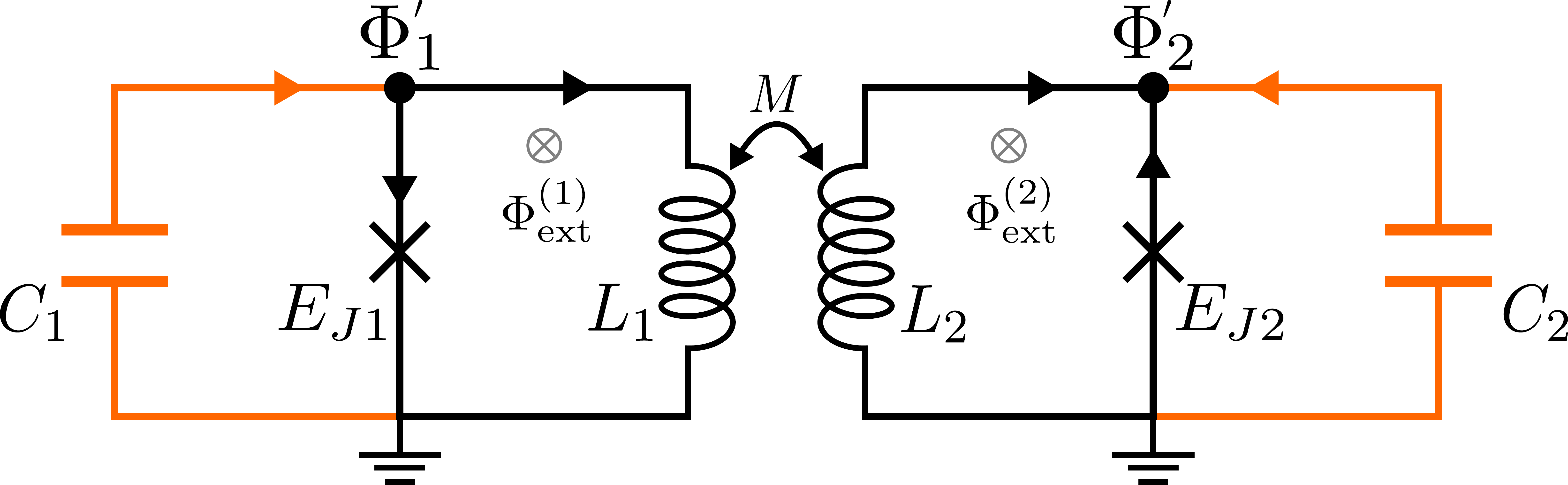}
	\caption{Two inductively coupled rf SQUID flux qubits, each representing one independent degree of freedom. The arrows indicate our choice for the orientation of the branches.}
	\label{fig:flux-qubits}
\end{figure}

Let us first neglect the mutual inductance $M$, i.e.,~take $M=0$, so that we have two uncoupled circuits. After the change of variables $\Phi_1 = -\Phi_1' + \Phi_{\rm ext}^{(1)}$ and $\Phi_2 = \Phi_2'+\Phi_{\rm ext}^{(2)}$ and setting both external fluxes to half a flux quantum $\Phi_{\rm ext}^{(2)} = \Phi_{\rm ext}^{(1)}=\Phi_0/2$, the Hamiltonian of the single flux qubits takes the form
\begin{equation}
\label{eq:fqhamiltonian}
    \mathcal{H}_{i} = \frac{Q_i^2}{2 C_i} + E_{J_i} \cos \left( \frac{2 \pi}{\Phi_0} \Phi_i \right) + \frac{\Phi_i^2}{2 L_i} 
    = 4 E_{C_i} q_i^2 + E_{J_i} \cos \phi_i + \frac{E_{L_i}}{2} \phi_i^2, \quad i=1,2,
\end{equation}
where we introduced dimensionless variables as well as charging and inductive energy as in Section~\ref{subsec:lc}. The advantage of working at half-flux quantum is that the potential is first-order insensitive to flux noise around this point due to the vanishing first derivative of the $\cos(\cdot)$ function, see also Section~\ref{subsec:fluxss}.

For large $E_J/E_L$, the cosine in the  potential $U(\phi)=E_J \cos \phi +E_L \phi^2/2$ creates double wells symmetrically around $\phi=0$ and more wells come into play for smaller $E_L$ versus $E_J$.
\begin{figure}
\centering
\begin{subfigure}[h]{0.49 \textwidth}
\centering
\includegraphics[width=7cm]{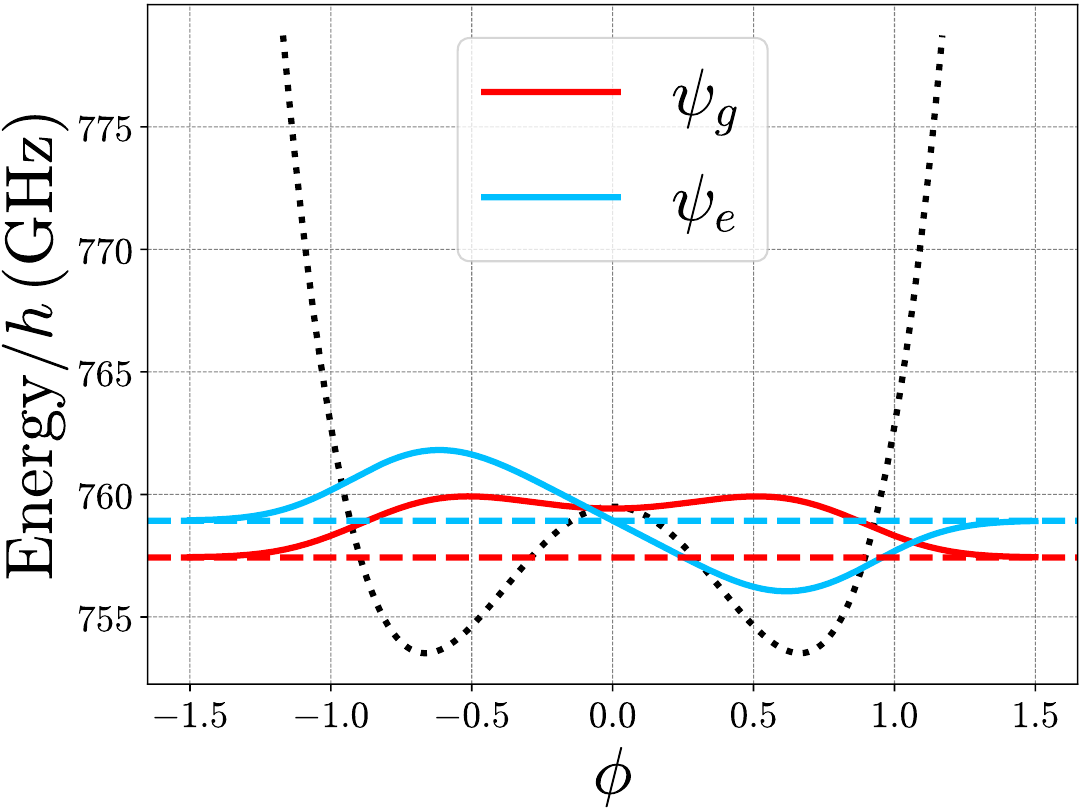}
\subcaption{}
\label{fig:ge}
\end{subfigure}
\begin{subfigure}[h]{0.49 \textwidth}
\centering
\includegraphics[width=7cm]{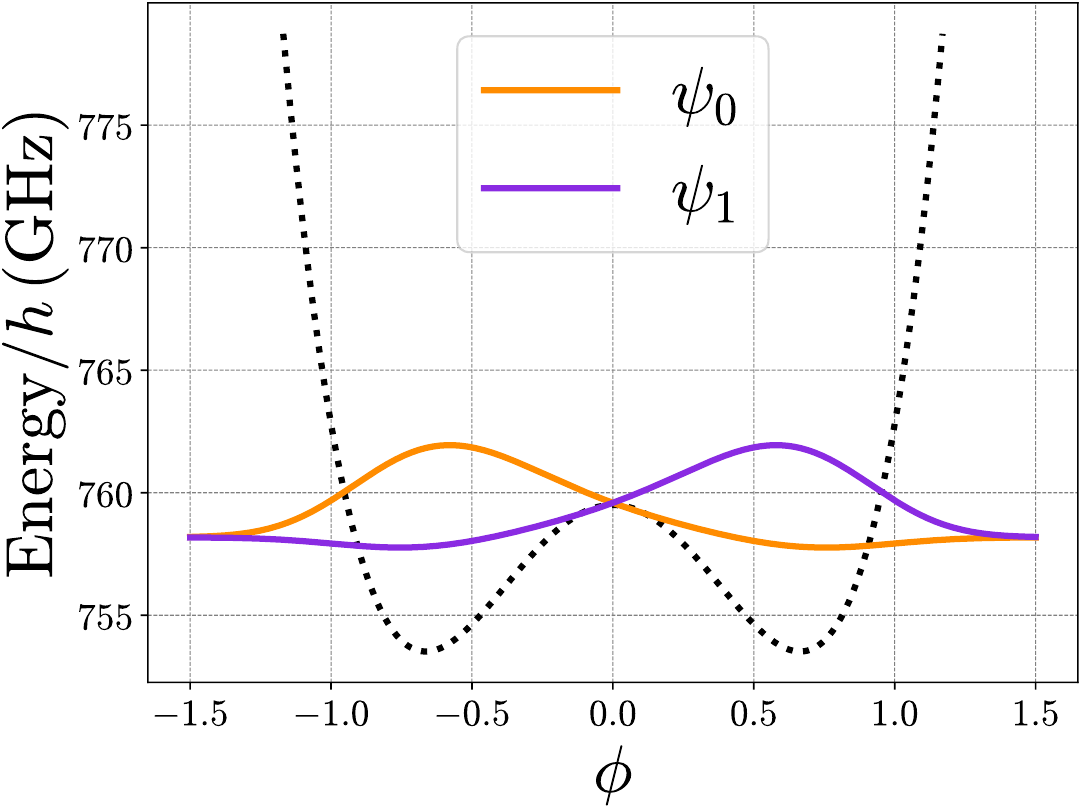}
\subcaption{}
\label{fig:cbasis}
\end{subfigure}
\caption{Flux qubit with symmetric potential. (a) Ground and first-excited wavefunctions as a function of $\phi$ for the Hamiltonian in Eq.~\eqref{eq:fqhamiltonian}, and their energies (dashed lines).
(b) Computational basis states obtained as symmetric and anti-symmetric combination of the first two eigenstates, that is, $\ket{0} = (\ket{g} + \ket{e})/\sqrt{2}$, $\ket{1} = (\ket{g} - \ket{e})/\sqrt{2}$ . The relevant parameters are taken as $E_C/h = 0.124 \mathrm{GHz}$, $E_J/E_C = 6129$, $E_J/E_L = 1.08$, which are typical parameters for rf SQUID flux qubits (see for instance Ref.~\cite{ozfidan2020}). Figure is reproduced from Ref.~\cite{cianiStoq}. }
\label{fig:fqplot}
\end{figure}
The flux qubit operates in the regime where we just have one double well. Given a finite tunnel barrier ---the relative height of the tunnel barrier is determined by $E_C/E_J$---  the lowest two energy eigenstates are symmetric and anti-symmetric superpositions of states localized at in each well, as shown in Fig.~\ref{fig:ge}. By taking linear combinations of these eigenstates, one can define a computational basis of finite flux states, see Fig.~\ref{fig:cbasis}, corresponding to a current running clockwise or counterclockwise through the SQUID loop. These states are sometimes called fluxons. When the relative height of the tunnel barrier increases, $E_C/E_J \rightarrow 0$, the energy eigenstates get more degenerate, and thus the fluxon states become (degenerate) eigenstates. 

In case $E_L$ is sufficiently small ---for this limit, see also Section~\ref{subsec:fluxonium} on the fluxonium qubit---  and more than two wells come into play, there could be multiple states inside a single well; these states are often called plasmons. Such plasmons have a characteristic `plasma eigenfrequency' if one models the well as a quadratic (harmonic) potential.

In the flux qubit case, one can alter the double well potential, e.g.~make it asymmetric, by applying additional magnetic flux through the loop closed by the Josephson branch. You can consider for yourself how this changes the potential energy, but we point out that for typical flux-qubit parameters such as those reported in the caption of Fig.~\ref{fig:fqplot}, the potential is extremely sensitive to fluctuations in the external flux.

\subsubsection{Inductive coupling}

Consider now the two coupled flux qubits operated at half a flux quantum with $M\neq 0$. The mutual inductance adds an additional coupling potential which one can calculate as 
\begin{equation}
U_{\rm coupl}(\Phi_1, \Phi_2)=-\frac{M}{L_1 L_2-M^2} \Phi_1 \Phi_2.
\end{equation}
After quantizing the system, we can project the coupling potential into the computational flux qubit basis $\ket{0_i}, \ket{1_i}$ so that its action in this subspace equals
\begin{equation}
\Pi_{\rm comp} U_{\rm coupl} \Pi_{\rm comp}=-\frac{M }{(L_1 L_2-M^2)}\bra{0_1} \hat{\Phi}_1 \ket{0_1} \bra{0_2} \hat{\Phi}_2 \ket{0_2}Z_1 Z_2.
\end{equation}
Here $\Pi_{\rm comp}$ is the projector onto the two-qubit subspace and we have used that $\bra{0_i} \hat{\Phi}_i \ket{1_i}=0$ and $\bra{1_i} \hat{\Phi}_i \ket{1_i} = -\bra{0_i} \hat{\Phi}_i \ket{0_i}$. These facts follow from the double-well symmetry, see Exercise \ref{exc:fluxq} in Section~\ref{sec:sym-prot}. We see that the mutual inductance generates an entangling $ZZ$~coupling between the flux qubits and we see that the strength of the $ZZ$ coupling depends on the expectation value of the flux operator with respect to the $\ket{0}$ state. The state for, say, qubit $1$, is characterized by a `persistent' current $\ii_1=\bra{0_1}\hat{
\Phi}_1\ket{0_1}/L_1$ and it is the strength of this current that quantifies the coupling strength. This also follows from a purely classical perspective on the coupling: it is the fact that $\ket{0}$ and $\ket{1}$ correspond to opposing currents and these currents produce opposing magnetic fields in the other flux qubit loop which causes the magnetic coupling.

 For large $E_J/E_L$, the double-well minima are approximately at $\Phi_0/2$, in which case we can write
\begin{equation}
\Pi_{\rm comp} U_{\rm coupl} \Pi_{\rm comp}\approx-\frac{M \Phi_0^2}{4(L_1 L_2-M^2)} Z_1 Z_2.
\end{equation}

Here we have discussed the simplest flux qubit, while there are other types of flux qubits which distinguish themselves by the number of junctions and the parameter regime. In particular, flux qubits were first realized not by using an inductive branch, but replacing this branch by two small Josephson junctions. This qubit will be treated in the next section, and is usually referred to as either the persistent-current flux qubit or the capacitively-shunted flux qubit \cite{Yan2016}, depending on the parameter regime. In Section~\ref{sec:sym-prot} we discuss some basic symmetries of these flux qubits. See Ref.~\cite{Krantz_2019} and Refs.~\cite{orlando1999, harris2010, younori, Yan2016} and references there in these papers to learn more about flux qubits. 

\subsection{Flux qubit: replacing the inductor by two Josephson junctions}

\begin{figure}
\centering
	\includegraphics[height=4cm]{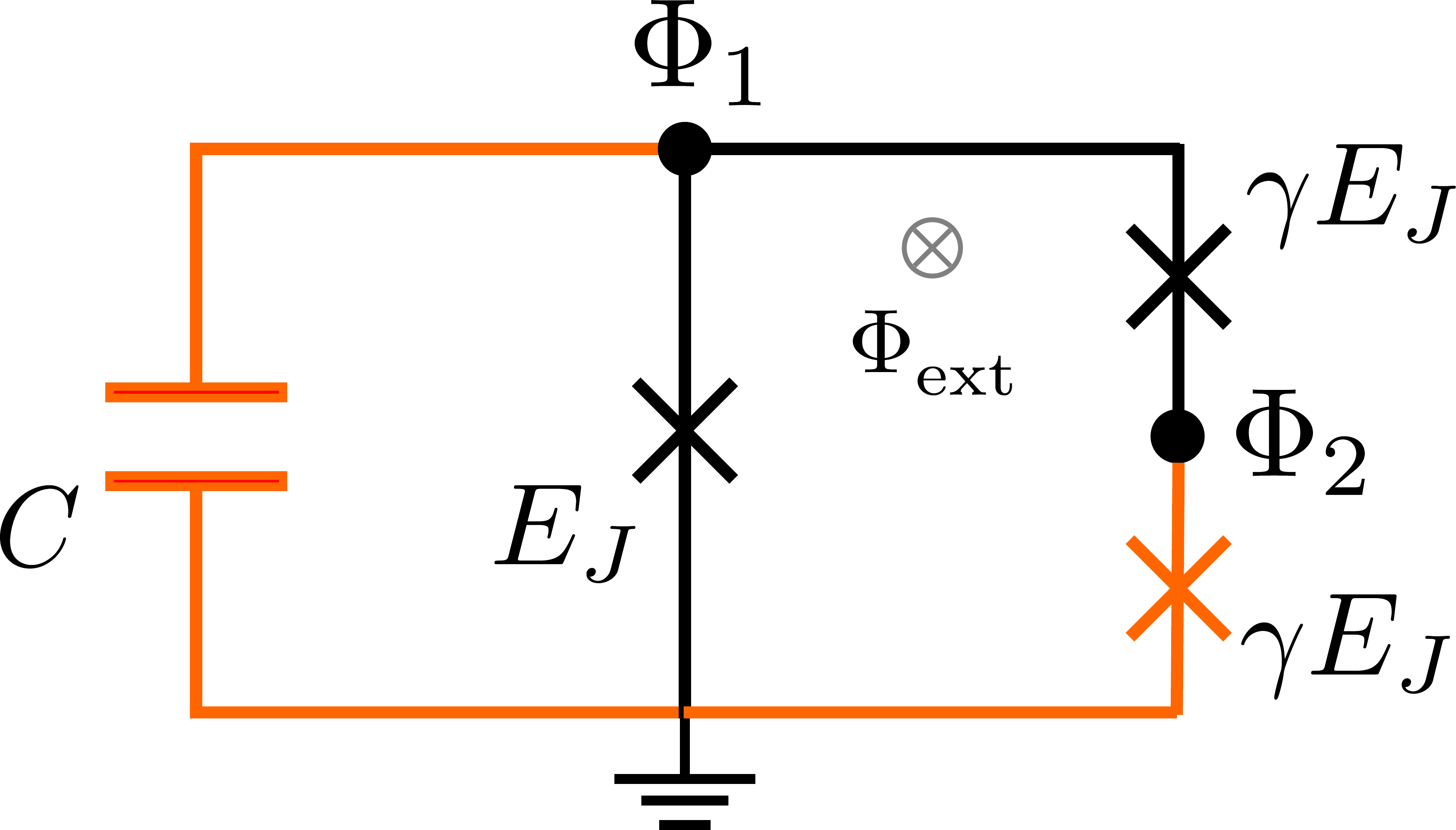}
	\caption{A flux qubit with three Josephson junctions.}
	\label{fig:flux2JJ}
\end{figure}

Flux qubits have been realized using the circuit shown in Fig.~\ref{fig:flux2JJ}. Here the two Josephson junctions in series, each with Josephson energy $\gamma E_J$, play the role of `inductive' branch. If we do not include a small capacitance in parallel with the Josephson junctions, we see that the circuit does not contain a capacitive spanning tree. In terms of the node variables $\Phi_1$ and $\Phi_2$, we have 
\begin{equation}
\lagrangian=\frac{C}{2}\dot{\Phi}_1^2+E_J \cos\biggl(\frac{2 \pi}{\Phi_0}\Phi_1 \biggr)+\gamma E_J \cos\biggl(\frac{2\pi}{\Phi_0} \Phi_2 \biggr)+\gamma E_J \cos\biggl(\frac{2\pi}{\Phi_0} (\Phi_2-\Phi_1+\Phi_{\rm ext})\biggr).
\end{equation}
As a first approach, we try to find a constraint which eliminates the variable $\Phi_2$ in $\mathcal{L}$ and get a Hamiltonian for just the remaining variable $\Phi_1$. Introducing the dimensionless variables $\phi_i$, the Euler-Lagrange equation for the variable $\phi_2$ reads 
\begin{equation}
\frac{\partial U}{\partial \phi_2}=0 \Rightarrow \sin \phi_2+\sin(\phi_2-\phi_1+\phi_{\rm ext})=0.
\label{eq:ELflux2}
\end{equation}
Note that the second derivative gives
\begin{equation*}
\frac{\partial^2 U}{\partial \phi_2^2} \propto \cos \phi_2+\cos(\phi_2-\phi_1+\phi_{\rm ext}).
\end{equation*}
There are two solutions to Eq.~\eqref{eq:ELflux2} labeled by $k=0,1$ namely $\phi_2^{k=0,1}=\frac{1}{2}(\phi_1-\phi_{\rm ext})+k \pi$. Each one is stable, i.e., $\frac{\partial^2 U}{\partial \phi_2^2}\big\vert_{\phi_2^k}\geq 0$ in a different region for $\phi_1$. Namely, the $k=0$ solution is stable when $|\phi_1-\phi_{\rm ext}|\leq \pi$ and the $k=1$ solution is stable when $|\phi_1-\phi_{\rm ext}+2\pi|\leq \pi$. Setting $\phi_2=\phi_2^{k=0}$ gives the potential $U=-E_J \cos \phi_1-2\gamma E_J \cos((\phi_1-\phi_{\rm ext})/2)$ which, defining $2\phi \equiv \phi_1-\phi_{\rm ext}$, gives rise to the one-dimensional Hamiltonian \begin{equation}
    \hamiltonian=E_C q^2-E_J \cos(2\phi+\phi_{\rm ext})-2 \gamma E_J \cos(\phi).
    \label{eq:fluxH}
\end{equation}
Note the factor of $4$ difference as compared to a standard charging energy term in Eq.~\eqref{eq:cpb_Ham}.
As long as $E_C\ll \gamma E_J$ (requiring a large shunt capacitance), $\phi$ will have small fluctuations ensuring stability of the working point as long as $|\phi|<  \pi/2$. The other solution $\phi_2^{k=1}$ leads to the same Hamiltonian with the variable change $\phi \rightarrow \phi+\pi$ and stability of the working point means that $|\phi+\pi|< \pi/2$. 

Thus, we obtain a reasonable reduction to a one-dimensional Hamiltonian, warranted in the regime $E_C \ll E_J$. But the careful reader may worry that there is some problem here, similar to the situation featured in Section~\ref{subsec:patho}. That is, it is safest to check if the result we have just derived is consistent with the scenario in which we retain a small capacitance $C_2$ connected to node $\Phi_2$. This is most properly treated using the Born-Oppenheimer procedure, which can be found in Appendix~\ref{sec:elim}. We can briefly state the physics of the situation: the capacitance $C_2$ should not be too large; in particular we require $C_2 \ll C$. Otherwise, the system would have two fully independent degrees of freedom. At the same time, the Euler-Lagrange procedure that we just outlined, requires the variable $\phi_2$ to behave classically, in the sense that its zero-point fluctuations, $\phi_{2,\text{zpf}}$ should be small,
$\phi_{2, \text{zpf}} \ll1$. Using Eq.~\eqref{eq:zpf-reduced}, we find 
\begin{equation}
\phi_{2,\text{zpf}}=
\biggl(\frac{2 E_{C_2}}{\gamma E_J}\biggr)^{1/4}. 
\label{eq:phi2}
\end{equation}
To satisfy our constraint, $C_2$ should actually not be too small; even for $C_2=C/10$, the condition $\phi_{2,\text{zpf}}\ll 1$ is not so well satisfied. However, the consequences of this are not too serious; if $\phi_{2,\text{zpf}}\sim 1$, we find \cite{rymarz:sing} that Eq.~\eqref{eq:fluxH} is very nearly correct, except that the last term becomes $-2\gamma\tilde{E}_J \cos(\phi)$, where the coefficient is somewhat reduced compared to its ``bare" value: $\tilde{E}_J\lesssim E_J$. We say that the effective Josephson energy in this term is ``renormalized by quantum fluctuations'' of $\phi_2$. A problem really arises if $C_2$ is very small, since then our coefficient $\tilde{E}_J$ really becomes zero, and all contributions (of the branch with two Josephson junctions) to the circuit dynamics are entirely lost. 

A final word about the Hamiltonian in Eq.~\eqref{eq:fluxH}: for the operating point $\phi_{\rm ext}=\pi$, Eq.~\eqref{eq:fluxH} has a single well when $\gamma \geq 2$ (and double wells when $\gamma < 2$) and then represents the so-called (anharmonic) C-shunted flux qubit. 

A similar, but stronger, reduction in the number of variables is performed to describe the fluxonium qubit in the next section.

\subsection{The fluxonium qubit}
\label{subsec:fluxonium}

\begin{figure}[htbp]
\centering
	\includegraphics[width=8cm]{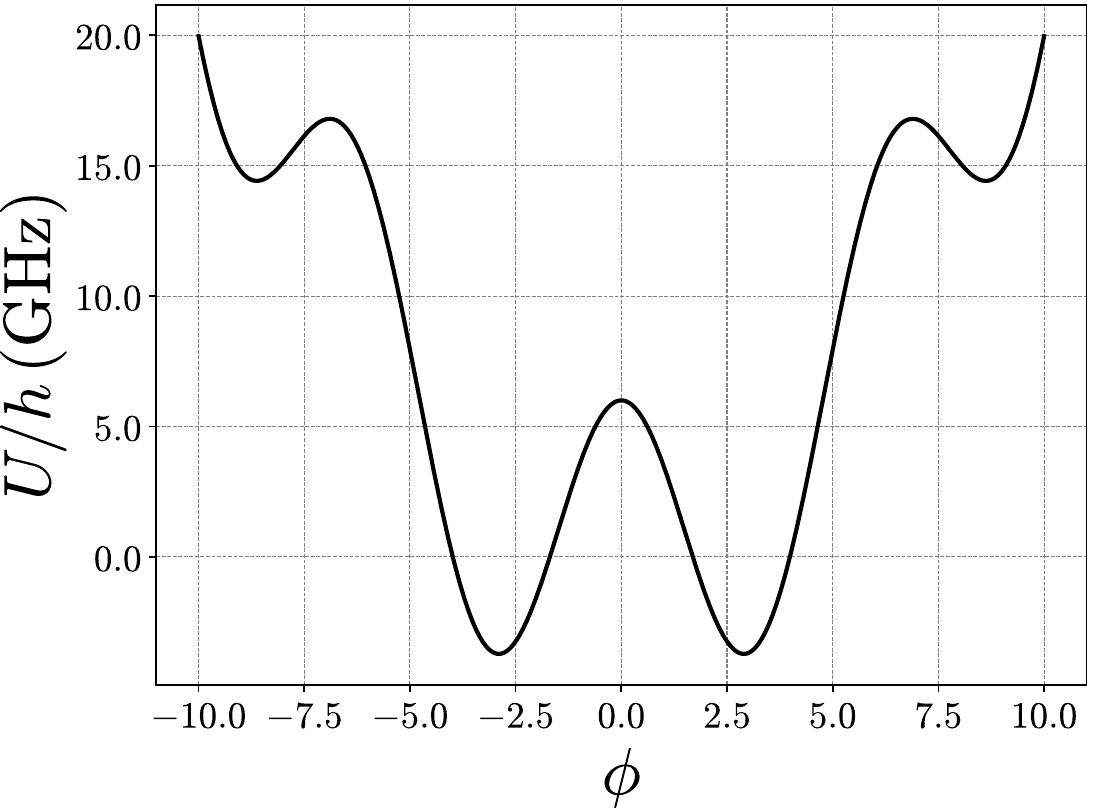}
	\caption{Example of a fluxonium potential $U = - E_J \cos(\phi + \phi_{\mathrm{ext}}) + E_L \phi^2/2$ as a function of the reduced (dimensionless) flux $\phi$ for typical fluxonium parameters $E_J/h = 6.0 \, \mathrm{GHz}$, $E_L/h=0.5 \, \mathrm{GHz}$, $\phi_{\mathrm{ext}} = \pi$.}
	\label{fig:fluxonium_potential}
\end{figure}

Let us again consider an inductively shunted CPB as in Fig.~\ref{fig:iscpb} on the right. Its basic Hamiltonian can be written (in terms of dimensionless variables) as
\begin{equation}
\label{eq:hfluxonium-start}
\hamiltonian = 4 E_C q^2 - E_J \cos(\phi) + \frac{E_L}{2} (\phi-\phi_{\mathrm{ext}})^2,
\end{equation} 
or, with a change of variables
\begin{equation}
\label{eq:hfluxonium}
\hamiltonian = 4 E_C q^2 - E_J \cos(\phi+\phi_{\rm ext}) + \frac{E_L}{2} \phi^2.
\end{equation} 

There are three energy scales at play, namely the charging energy $E_C$ defined in Eq.~\eqref{eq:en-cap}, the Josephson energy $E_J$, and the inductive energy $E_L$ defined in Eq.~\eqref{eq:ind_en}. We can ask what happens when we vary these energy scales. In particular, the fluxonium regime is defined as the regime where $E_L/E_J \ll 1$ while $1 \lesssim E_J/E_C \lesssim 10$ but not higher~\cite{nguyen2019}. 

In this regime, the small parabolic potential due to $E_L$ is not very confining and multiple wells are present due to the Josephson potential. An example of the fluxonium potential for typical parameters is shown in Fig.~\ref{fig:fluxonium_potential}. In order to work in the fluxonium regime, we need to have a large effective inductance $L=L_{\mathrm{eff}}$, which should not be accompanied by an additional large effective capacitance $C_{\mathrm{eff}}$ in parallel with it. This is because we would like the wavefunctions to be delocalized in the flux degree of freedom. A spurious, large shunting capacitance would instead have the tendency to localize the wavefunctions and make the system, in each well, only weakly anharmonic, similar to the transmon qubit, which we treat extensively in Chapter~\ref{chap:transmon} \cite{nguyen2019}. In other words, in the circuit the Josephson junction should have an element with a large impedance $Z=\sqrt{L_{\rm eff}/C_{\rm eff}}$ in parallel with it and we call this element a `superinductance'. Ideally, the dimensionless quantity $Z/R_Q \gtrsim 1$ for this element\footnote{In current designs \cite{nguyen2019} fluxonium qubits have $Z \ge 1 \, \mathrm{k\Ohm}$, and thus $Z \lesssim R_Q$.}, where $R_Q$ is the resistance quantum 
\begin{equation}
  R_Q = \frac{h}{(2 e)^2} \approx 6.5 \, \mathrm{k\Omega}.
  \label{eq:resist-quant}
\end{equation}  
Physically, it is nontrivial to obtain a superinductance, in particular when the inductance is solely due to the (geometric) self-inductance of the material: an inductance which grows with the size of a piece of superconducting material (like the number of coils) also comes with a large capacitance, reducing the impedance $Z$.

A solution is to use a material with a large kinetic inductance such as granular aluminum to get an effective superinductance~\cite{Maleeva2018}. Another option is to engineer an effective inductance by using an array of $N$ ($N \gg 1$) Josephson junctions in series. In fact, one can also model a material such as granular aluminum (and the origin of its kinetic inductance) as a random array of a macroscopic number of superconducting islands connected by Josephson junctions \cite{Maleeva2018}.

\begin{figure}[htbp]
\centering
\includegraphics[width=5cm]{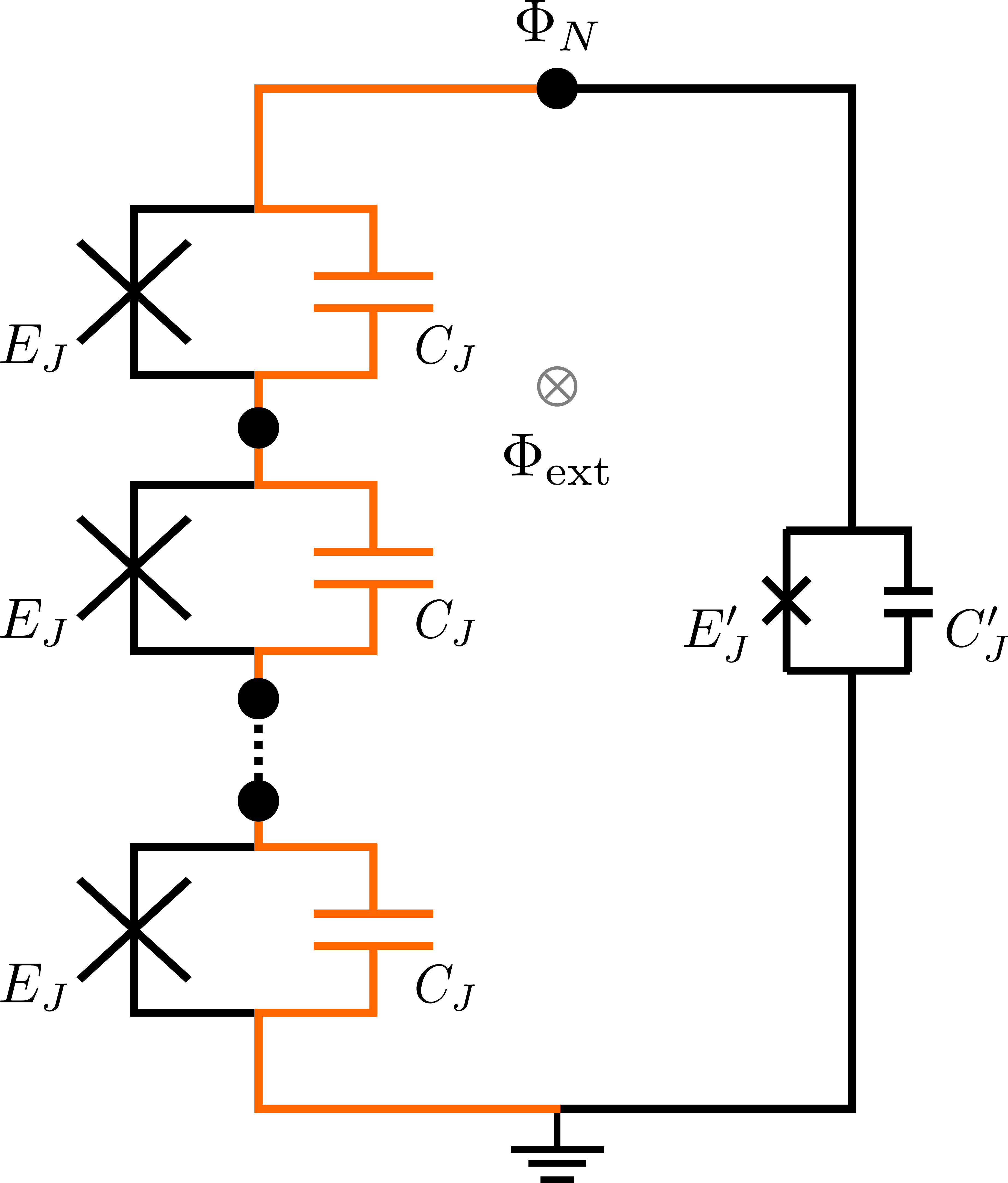}
\caption{An array of $N$ identical Josephson junctions in series, in parallel with a single junction with $E_J \gg E'_{J}$. It provides an effective large inductance used in the fluxonium qubit. See Fig.~\ref{fig:fluxonium-fab}(b) for a fabricated array.}
\label{fig:fluxonium}
\end{figure}

\begin{figure}[htbp]
\centering
	\includegraphics[width=5cm]{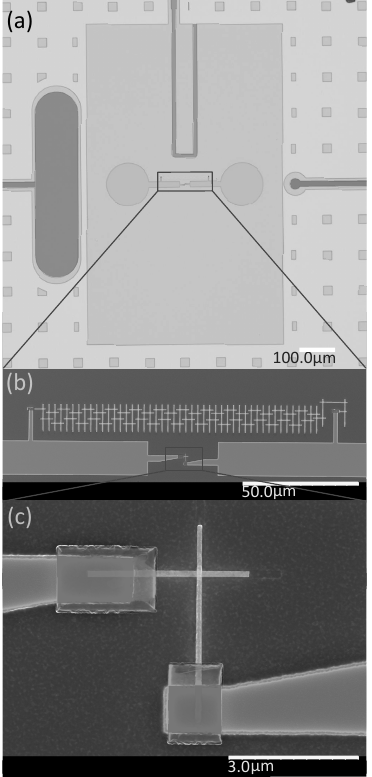}
	\caption{Optical micrograph of a chip with a fluxonium qubit (Andersen Lab, Delft University of Technology, 2023; figure courtesy of Figen Yilmaz). (a) The round pads provide the larger shunting capacitance. On the left (stadium-shaped pad) is the capacitive coupling to a readout resonator, on the top one sees a flux line for controlling the amount of magnetic flux through the loop, and on the right one sees a `charge' line for driving the fluxonium qubit. (b) The long array of junctions consists of 100 junctions, each 410 x 410 nm in size, in parallel with a single junction. (c) Zoom-in of the single junction.}
	\label{fig:fluxonium-fab}
\end{figure}

The electrical circuit is given in Fig.~\ref{fig:fluxonium} and a fabricated fluxonium qubit is shown in Fig.~\ref{fig:fluxonium-fab}. To analyze the circuit, we use the tree with orange branches as indicated in the circuit with nodes $1, 2, \ldots, N$. We denote the capacitive branch flux across the $i$-th junction in the array by $\Phi_{C_i}$ . The Lagrangian in terms of these variables reads
\begin{equation}
T=\frac{1}{2}C_J  \sum_{i=1}^{N}  \dot{\Phi}_{C_i}^2+\frac{1}{2}C_{J'}\dot{\Phi}_{N}^2,
\end{equation}
with $\Phi_{N}=\sum_{i=1}^N \Phi_{C_i}$. The potential energy equals
 \begin{equation}
 U=-E'_{J} \cos\left(\frac{2\pi}{\Phi_0} \Phi_{\mathfrak{b}} \right)-E_J \sum_{i=1}^{N}
\cos\left(\frac{2 \pi}{\Phi_0} \Phi_{C_i}\right),
\end{equation}
with $\Phi_{\mathfrak{b}}=-\Phi_{N}+\Phi_{\rm ext}$. We assume that the branch fluxes in the array all represent `heavy' (large `mass' $C_J$) degrees of freedom with $\phi_{\rm zpf}< 1$. Mathematically, this means that we assume $E_J/E_{C_{J}} > 1$, and we approximate the potential (modulo a dropped constant) as
\begin{equation}
U \approx U_{\rm approx}=-E'_{J} \cos\left(\frac{2\pi}{\Phi_0} (\Phi_{N}-\Phi_{\rm ext}) \right)+\frac{1}{2 L_J}\sum_{i=1}^N\Phi_{C_i}^2,
\end{equation}
with Josephson inductance defined in Eq.~\eqref{eq:jos_ind}. %We can introduce node flux variables $\Phi_{\mathfrak{b}_i}=\Phi_i-\Phi_{i-1}$ with $\Phi_0=0$.
Each potential $\frac{\Phi_{C_i}^2}{2 L_J}$ with small $L_J$ provides a tight confinement for the branch flux $\Phi_{C_i}$. Thus, given that $\Phi_{N}=\sum_{i=1}^N \Phi_{C_i}$ while $\Phi_{N}$ is favored to take the value $\Phi_{\rm ext}$ due to the parallel junction with scale $E'_{J}$, it is natural to consider a low-energy approximation in which each variable $\Phi_{C_i}=\Phi_{N}/N$ has no further independent dynamics by itself. This corresponds to a local minimum of the energy of the Josephson-junction array, but there are other minima as well, see Exercise \ref{exc:jjarray}. We refer the reader to Ref.~\cite{ManucharyanPhd} for an excellent discussion of Josephson junction arrays.

In this approximation the potential energy reads
\begin{equation}
U_{\rm approx}=-E'_{J} \cos\left(\frac{2\pi}{\Phi_0} (\Phi_{N}-\Phi_{\rm ext})\right)+\frac{1}{2 N L_J}\Phi_{N}^2, 
\end{equation}
where we see that one has obtained a large effective inductance $L_{\rm eff} \equiv N L_J$. In this approximation the kinetic energy equals
\begin{equation}
T=\frac{1}{2}\left(C'_{J} +\frac{C_J}{N}  \right)\dot{\Phi}_{N}^2 \rightarrow  \frac{Q_N^2}{2(C'_{J}+C_J/N)}.
\end{equation} 
Thus, we see that the array adds an effective capacitance $C_{\mathrm{eff}} \approx C_J/N$ in parallel to the capacitance of the small junction $C'_{J}$. Note that $C_{J}/N$ corresponds to the equivalent capacitance of $N$ capacitances $C_J$ in series.
 The effective characteristic impedance of the array is then $Z_{\mathrm{array}} = \sqrt{L_{\mathrm{eff}}/C_{\mathrm{eff}}}= N \sqrt{L_J/C_J}$ and it increases linearly with the number of junctions, making it possible to reach the superinductance regime. 

 \begin{Exercise}[title={Josephson junction array}, label=exc:jjarray]
\begin{figure}[htb]
\centering
\includegraphics[scale=0.5]{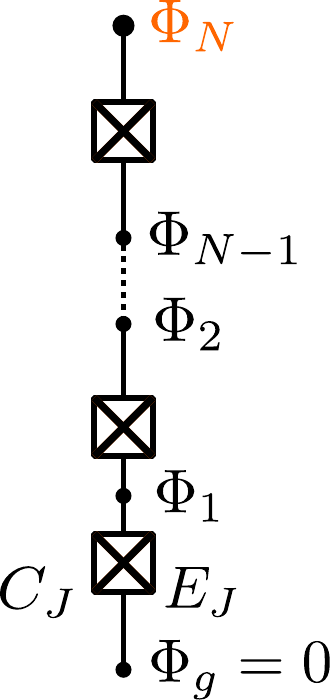}
\caption{Array of $N$ Josephson junctions.}
\label{fig:jjarray}
\end{figure}

Consider the circuit in Fig.~\ref{fig:jjarray} representing an array of $N$ equal Josephson junctions, each with capacitance $C_J$ and Josephson energy $E_J$. Arrays of junctions are used to realize the large inductance needed for the fluxonium qubit. In this exercise we will obtain an effective Lagrangian representing the circuit, which will be a function of the dimensionless node flux at the end of the array $\phi_N = 2 \pi \Phi_N/\Phi_0$ only, as sketched by the arguments above.
\Question First of all, write down the Lagrangian in terms of dimensionless node fluxes $\phi_k = 2 \pi \Phi_k /\Phi_0$ ($k=1, \dots, N$). The problem is more easily understood by introducing the dimensionless branch fluxes across each junction $\theta_k= \phi_{k+1}-\phi_k$ and writing the Lagrangian in terms of these new variables. For compactness we will write $\vec{\theta}=\{\theta_1, \dots, \theta_N\}$.
\Question We now assume that $\phi_N$ is set externally and enters as a parameter in the problem. In terms of the variables $\theta_k$ this translates into the constraint: $\phi_N = \sum_{k=1}^N \theta_k$. 
In the limit of $E_J/E_{C_J} \gg 1$ we expect that {\em classically} the system is found in one of the (metastable) minima of the potential. Formulate the problem as a constrained optimization and confirm that the point 
\begin{equation}
\label{eq:uniformPhase}
\vec{\theta}^0\colon \forall k,\;\theta_k^{0}= \frac{\phi_N}{N},
\end{equation}
and the points labeled by $\pm,l=1, 2, \dots, N$: 
\begin{equation}\label{eq:otherphases}
\vec{\theta}^{\pm,l}\colon \theta_l^{\pm 1}= \frac{\phi_N \mp 2 \pi}{N} \pm 2 \pi,\quad \theta_{k\neq l}^{\pm 1}= \frac{\phi_N \mp 2\pi}{N},
\end{equation}
are extrema of the potential. Notice that the configurations labeled by different $l$ are indistinguishable as they lead to identical contributions in the Lagrangian, and so we may label them only by $\pm$. 
\emph{Comment: These are not the only extrema of the potential, as we show in the answer.}
\Question Show that $\vec{\theta}^0$ is a minimum of the potential when $\phi_N/N$ is sufficiently small.
\Question Quantum-mechanically, we expect to have localized states at such a minimum $\vec{\theta}^0$. In which parameter limit do you expect the tunnelling from one minimum to another to be suppressed? 
\Question We now promote $\phi_N$ to a dynamical variable. Let's assume that the system initially is in the minimum given by Eq.~\eqref{eq:uniformPhase}. If the variable $\phi_N$ varies relatively slowly, we can assume that the system follows the minimum adiabatically. Within this approximation, obtain an effective Lagrangian for the system in terms of the variable $\phi_N(t)$. Why does this system behave as a large inductance in the large $N$ limit? 
\end{Exercise}

\begin{Answer}[ref={exc:jjarray}]
\Question The Lagrangian in terms of dimensionless node fluxes reads 
\begin{equation}
\mathcal{L}(\phi_1, \dots, \phi_{N}; \dot{\phi}_1, \dots, \dot{\phi}_{N})= \sum_{k=0}^{N-1} \left(\frac{\hbar^2}{16 E_{C_J}} (\dot{\phi}_{k+1}-\dot{\phi}_k)^2+E_J \cos(\phi_{k+1}-\phi_{k})\right),
\end{equation}
where $\phi_0=0=\dot{\phi}_0$, or in terms of the newly-defined phases:
\begin{equation}
\label{eq:lagrArray}
\mathcal{L}(\theta_1, \dots, \theta_{N}; \dot{\theta}_1, \dots, \dot{\theta}_{N})= \sum_{k=1}^{N} \left(\frac{\hbar^2}{16 E_{C_J}} \dot{\theta}_k^2+E_J \cos \theta_k \right).
\end{equation}
We also identify the potential
\begin{equation}
U(\theta_1, \dots, \theta_{N})= -\sum_{k=1}^{N} E_J \cos \theta_k.
\end{equation}
\Question We formulate the problem as a constrained optimization problem using the method of Lagrange multipliers where the constraint is given by
\begin{equation}
g(\vec{\theta})= \sum_{k=1}^N \theta_k-\phi_N=0.
\end{equation}
We introduce the Lagrange function (not to be confused with the Lagrangian of the system):
\begin{equation}
\Lambda(\vec{\theta}, \lambda)=  \lambda g(\vec{\theta})+U(\vec{\theta}), 
\end{equation} 
with Lagrange multiplier $\lambda$. Thus we just have to check that the gradient of the Lagrange function is zero in Eqs.~\eqref{eq:uniformPhase} and \eqref{eq:otherphases}. Setting the gradient to zero we get the following equations
\begin{subequations}
\begin{equation}
\label{eq:zeroGrad1}
\frac{\partial \Lambda}{\partial \theta_k}=0 \implies E_J \sin \theta_k + \lambda=0,
\end{equation}
\begin{equation}
\label{eq:zeroGrad2}
\frac{\partial \Lambda}{\partial \lambda}=0 \implies \phi_N - \sum_{k=1}^N \theta_k=0.
\end{equation}
\label{eq:zeroGrad}
\end{subequations}

Plugging Eq.~\eqref{eq:uniformPhase} in Eqs.~\eqref{eq:zeroGrad2} indeed gives an extremum when $\lambda=-E_J \sin (\phi_N/N)$.~Plugging Eq.~\eqref{eq:otherphases} into Eq.~\eqref{eq:zeroGrad1} gives
\begin{equation}
E_J \sin \biggl( \frac{\phi_N \mp 2 \pi}{N} \pm 2 \pi \delta_{kl}\biggr)+\lambda=0 \implies E_J \sin \biggl(\frac{\phi_N \mp 2 \pi}{N}\biggr)+\lambda=0,
\end{equation}
which is satisfied if $\lambda= -E_J \sin [ (\phi_N \mp 2 \pi))N)]$. Additionally, Eq.~\eqref{eq:zeroGrad2} is easily checked:
\begin{equation}
\phi_N - \sum_{k=1}^N \left( \frac{\phi_N \mp 2 \pi}{N} \pm 2 \pi \delta_{kl}\right)= \phi_N-\phi_N \mp 2 \pi \pm 2 \pi=0. 
\end{equation}
A fully general solution of Eqs.~\eqref{eq:zeroGrad} is obtained by picking for each $k$ either
\begin{equation}
    \theta_k=\alpha+2\pi m_k \mbox{ or } \theta_k=\pi-\alpha+2\pi m_k,
\end{equation}
and determining the choice of $\alpha$ which satisfies Eq.~\eqref{eq:zeroGrad2}. Assume we pick $\theta_k=\alpha+2\pi m_k$ for a set $S$ $\theta_k$s, and the other solution for the rest. This leads to
\begin{equation}
    \alpha=\frac{\phi_N}{2|S|-N}-\pi \frac{2m+N-|S|}{2|S|-N},
\end{equation}
with $m = \sum_{i=1}^N m_k$. Hence a fully general solution can be labeled by $\vec{m}=(m_1,m_2,\ldots,m_N)$ with $m_i \in \mathbb{Z}$ and a set $S\subseteq [N]$:
\begin{align}
\label{eq:phasSlipSol}
\vec{\theta}^{\vec{m},S}\colon \begin{cases}
\theta_{k\in S} = \frac{\phi_N-2 \pi m+\pi(|S|-N)}{2|S|-N}+2 \pi m_k ,  \\
\theta_{k\notin S} = \frac{-\phi_N+2 \pi m+\pi|S|}{N}+2 \pi m_k. 
\end{cases}
\end{align}
 Note that these configurations are in principle only distinguished by the total number $m \in \mathbb{Z}$ and $|S|$, since the Lagrangian in Eq.~\eqref{eq:lagrArray} does not have any dependence on the particular $m_k$ (as long as these numbers have no dynamics) nor $S$ when the junctions are all identical.
When $|S|=N$, Eq.~\eqref{eq:phasSlipSol} simplifies to
\begin{align}
\label{eq:phasSlipSolsimp}
\vec{\theta}^{\vec{m}}\colon 
\theta_{k} = \frac{\phi_N-2 \pi m}{N}+2 \pi m_k.  
\end{align}
This represents a configuration with $m$ phase slips in total, with junction $k$ having a $m_k\in \mathbb{Z}$ phase slips. 
\Question We have to verify whether the $N-1 \times N-1$ Hessian matrix $H\geq 0$, with matrix elements $i,j=1,\ldots N-1$:
\begin{align}
H_{ij}/E_J=\frac{1}{E_J}\frac{\partial^2 U}{\partial \phi_j \partial \phi_i}=-\frac{\partial}{\partial \phi_j}\left(\sin(\phi_{i+1}-\phi_i)-\sin(\phi_i-\phi_{i-1})\right)= \notag \\ -\delta_{j,i+1}\cos(\phi_{i+1}-\phi_i)+\delta_{ij}(\cos(\phi_{i+1}-\phi_i)+\cos(\phi_i-\phi_{i-1}])-\delta_{j,i-1}\cos(\phi_i-\phi_{i-1}]),\notag
\end{align}
where $\phi_0=0$ and $\phi_N$ is given. At the point $\vec{\theta}^0$ this gives
\begin{align}
    H_{ij}/E_J=-\cos(\phi_N/N)(\delta_{j,i+1}-2\delta_{ij}+\delta_{j,i-1}) \approx -\delta_{j,i+1}+2\delta_{ij}-\delta_{j,i-1}+O\left(\frac{\phi_N^2}{N^2}\right),\notag 
\end{align}
which represents a positive-semidefinite matrix.
 \Question We expect the tunneling to be suppressed in the limit in which no tunneling can exist, i.e. in the classical limit $E_J/E_{C_J} \gg 1$ or $\phi_{\rm zpf}$ in Eq.~\eqref{eq:zpf-reduced}, where $E_L$, taken as $E_{L_J}$, is small. The suppression is indeed exponential in this ratio $\sim \exp(-\sqrt{8 E_J/E_{C_J}})$.
\Question Within the assumption explained in the text, this approximation just assumes that the relation in Eq.~\eqref{eq:uniformPhase} is always satisfied even when the variable $\phi_N$ varies with time. Note that this is a low-energy approximation and quantum-mechanically it crucially relies on the fact that tunneling between different minima is suppressed, as explained in the previous point. The effective Lagrangian in terms of the phase across the array is obtained by plugging Eq.~\eqref{eq:uniformPhase} into Eq.~\eqref{eq:lagrArray} to get
\begin{equation}
\label{eq:effLagr}
\mathcal{L}_{\mathrm{eff}}= \frac{\hbar^2 }{16 E_{C} N} \dot{\phi}^2 +N E_J \cos \frac{\phi}{N}, 
\end{equation}
where for simplicity we simply called $\phi_N= \phi$. From Eq.~\eqref{eq:effLagr} we notice that in terms of the collective variable $\phi_N$, the array provides an effective capacitance $C_J/N$, which is just the series of all the capacitances and an effective potential
\begin{equation}
U_{\mathrm{eff}}= -N E_J \cos \frac{\phi}{N}.
\end{equation}
In order to understand why we expect this to behave like a large inductance, we just expand the previous equation up to second order. Neglecting the constant term
\begin{equation}
U_{\mathrm{eff}} \approx  \frac{E_J \phi^2}{2  N}.
\end{equation}
So while the equivalent capacitance goes down as $1/N$, the equivalent inductance increases proportionally to $N$. 
\end{Answer}

\subsection{Circuits for two Cooper pair tunneling:  \texorpdfstring{$0$-$\pi$}{TEXT} qubit}
\label{subsec:0pi}

The idea of a $0$-$\pi$ qubit is to effectively engineer a Hamiltonian of the approximate form 
\begin{equation}
H=4E_C \hat{q}^2-E_J^{\rm eff} \cos(2\hat{\phi}),
\label{eq:H0pi}
\end{equation}
so that the potential term corresponds to the tunneling of `two Cooper pairs'. As we will discuss in Section~\ref{subsec:fp}, the term $\cos \hat{\phi}=\frac{1}{2}(e^{i\hat{\phi}}+e^{-i \hat{\phi}})$ can be viewed as describing the tunneling of a single Cooper pair from one island to the other and vice versa. 

When $E_C \ll E_J^{\rm eff}$, the Hamiltonian has two approximately degenerate ground states with wavefunctions $\psi_{0,1}(\phi)$ localized at $\phi=0$ and $\phi=\pi$, where they attain their minimal potential energy. Since these wavefunctions $\psi_{0,1}(\phi)$, being localized at very different values of $\phi$, have little overlap, there is an intrinsic protection against noise, see the discussion in Section~\ref{sec:bip}. Many features of the $0$-$\pi$ qubit are discussed in Ref.~\cite{Paolo_2019}. 
 
The derivation of the effective potential in Eq.~\eqref{eq:H0pi} starts by writing a full Hamiltonian and then, as in other examples, reducing the number of degrees of freedom. 

\begin{figure}[htbp]
\centering
	\includegraphics[width=9cm]{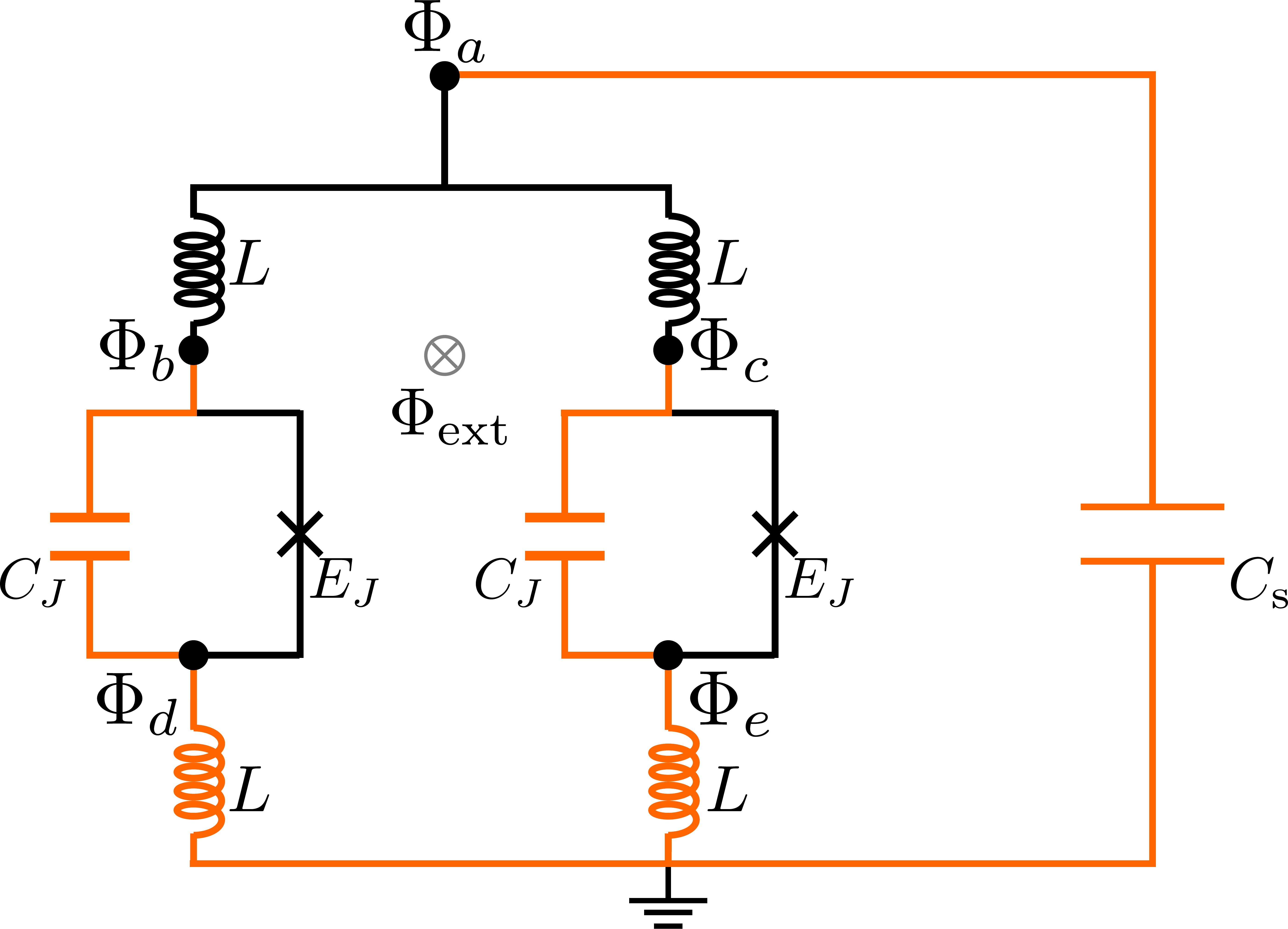}
	\caption{The superconducting circuit analyzed in Ref.~\cite{smith2019}: a spanning tree is drawn in orange.}
	\label{fig:0pi}
\end{figure}

Here we first discuss an example of a circuit designed to craft a $\cos(2\phi)$ potential which was proposed in Ref.~\cite{smith2019}.
The circuit is shown in Fig.~\ref{fig:0pi}. We will not focus on the particular interesting features of this two-Cooper-pair qubit here, but just look at the technical steps of getting the Hamiltonian. Exercise \ref{exc:0pi} will analyze the standard circuit for the $0$-$\pi$ qubit introduced in Ref.~\cite{brooks2013}.

We note that for the circuit in Fig.~\ref{fig:0pi}, there is no spanning tree in the graph where all branches are capacitive (required for Proposition \ref{lem:captree}), and hence we will work with independent branch-flux variables. Using the spanning tree highlighted in orange in Fig.~\ref{fig:0pi}, we see that there are five independent node fluxes, $\Phi_a, \ldots, \Phi_e$. The capacitance matrix is not invertible \footnote{In this case, adding a very small capacitance between, e.g., nodes $a$ and $b$, to make the capacitance matrix invertible will do nothing pathological; it will only add an innocuous high-frequency degree of freedom.} in terms of these variables, so we use the following branch variables

\begin{equation}
\Phi_1=\Phi_b-\Phi_d, \quad \Phi_2=\Phi_c-\Phi_e, \quad \Phi_3=\Phi_a, \quad \Phi_4=\Phi_d, \quad \Phi_5=\Phi_e.
\end{equation}
In terms of these variables, the Lagrangian is
\begin{multline}
\lagrangian =\frac{C_s}{2} \dot{\Phi}_3^2+ \frac{C_J}{2}(\dot{\Phi}_1^2+\dot{\Phi}_2^2)+E_J \cos \biggl(\frac{2\pi }{\Phi_0} \Phi_1 \biggr)+E_J \cos \biggl(\frac{2\pi \Phi_2}{\Phi_0} \Phi_2 \biggr)  \\
-\frac{1}{2L}[\Phi_4^2+\Phi_5^2+(\Phi_3-\Phi_1-\Phi_4+\Phi_{\rm ext})^2+(\Phi_3-\Phi_5-\Phi_2)^2].
\end{multline}

The variables $\Phi_4$ and $\Phi_5$ have no dynamics (their kinetic energy is zero); hence, we set these variables to the minimum of their potential energy (a stable extremum), i.e.,~$\frac{\partial \lagrangian}{\partial\Phi_4}=0$ and $\frac{\partial \lagrangian}{\partial\Phi_5}=0$. This leads to  
\begin{equation}
\Phi_4=\frac{1}{2}(\Phi_3-\Phi_1+\Phi_{\rm ext}), \quad \Phi_5=\frac{1}{2}(\Phi_3-\Phi_2),
\end{equation}
which can be put back into $\lagrangian$ to remove the dependence on these variables. The remaining three variables are again linearly transformed to new independent and rescaled variables
\begin{equation}
\phi =\frac{2 \pi}{\Phi_0}(\Phi_1-\Phi_2), \quad \chi=\frac{2 \pi}{\Phi_0}\biggl[\frac{1}{2}(\Phi_1+\Phi_2) \biggr], \quad \theta=\frac{2 \pi}{\Phi_0} \biggl[\frac{1}{2}(2\Phi_3-\Phi_2-\Phi_1+\Phi_{\rm ext}) \biggr].
\end{equation}
Hence the kinetic energy $T$ and the potential energy $U$ are equal to
\begin{eqnarray}
T& =& \frac{C_s}{2} \frac{\Phi_0^2}{4 \pi^2}(\dot{\theta}+\dot{\chi})^2+C_J \frac{\Phi_0^2}{4 \pi^2}\dot{\chi}^2+\frac{C_J}{4}\frac{\Phi_0^2}{4 \pi^2}\dot{\phi}^2,\nonumber \\
U& =& -E_J \cos\biggl(\chi + \frac{\phi}{2} \biggr)-E_J \cos\biggl(\chi - \frac{\phi}{2} \biggr)+\frac{E_L}{2} \left[\frac{1}{4} (\phi-\phi_{\rm ext})^2+\theta^2\right],
\end{eqnarray}
where $\phi_{\mathrm{ext}} = 2 \pi \Phi_{\mathrm{ext}}/\Phi_0$; we introduced the inductive energy $E_L$ defined in Eq.~\eqref{eq:ind_en}.
The capacitance matrix associated with the $\phi, \chi$ and $\theta$ coordinates is thus
\begin{equation}
\mat{C}=\left(\begin{array}{ccc} \frac{C_J}{2} & 0 & 0 \\
0 & C_s+2 C_J & C_s \\
0 & C_s & C_s \end{array}\right) \rightarrow \mat{C}^{-1}=\frac{1}{C_J}\left(\begin{array}{ccc}2 &0 & 0 \\ 
0 & \frac{1}{2} & -\frac{1}{2} \\
0 & -\frac{1}{2} & \frac{1}{2}+\frac{C_J}{C_s}
\end{array}\right).
\end{equation}
Using the inverse capacitance matrix $\mat{C}^{-1}$, the definition of the dimensionless conjugate momenta, 
\begin{equation}
q_{\phi}=\frac{1}{\hbar}\frac{\partial \lagrangian}{\partial \dot{\phi}}, \quad q_{\chi}=\frac{1}{\hbar}\frac{\partial \lagrangian}{\partial \dot{\chi}}, \quad  q_{\theta}=\frac{1}{\hbar}\frac{\partial \lagrangian}{\partial \dot{\theta}},
\end{equation} 
and Eq.~\eqref{eq:Hnondiag}, we get the final Hamiltonian
\begin{equation*}
\hamiltonian=4 E_{C_J}\left[4 q_{\phi}^2+(q_{\chi}-q_{\theta})^2+2 \frac{C_J}{C_s} q_{\theta}^2 \right]-2 E_J \cos\left(\chi\right) \cos\left(\frac{\phi}{2}\right)+\frac{E_L}{2} \left[\frac{1}{4} (\phi-\phi_{\rm ext})^2+\theta^2\right],
\end{equation*}
with $E_{C_J}$ the charging energy associated with $C_J$ (see Eq.~\eqref{eq:char_en}). In Ref.~\cite{smith2019}, it is further shown that in the fluxonium-like limit and for $\phi_{\rm ext}=\pi$, the system effectively behaves as an element with a $\cos(2 \chi)$ potential. 

\begin{Exercise}[title={The $0$-$\pi$ qubit}, label=exc:0pi]
\begin{figure}[htb]
\centering
\includegraphics[height=7cm]{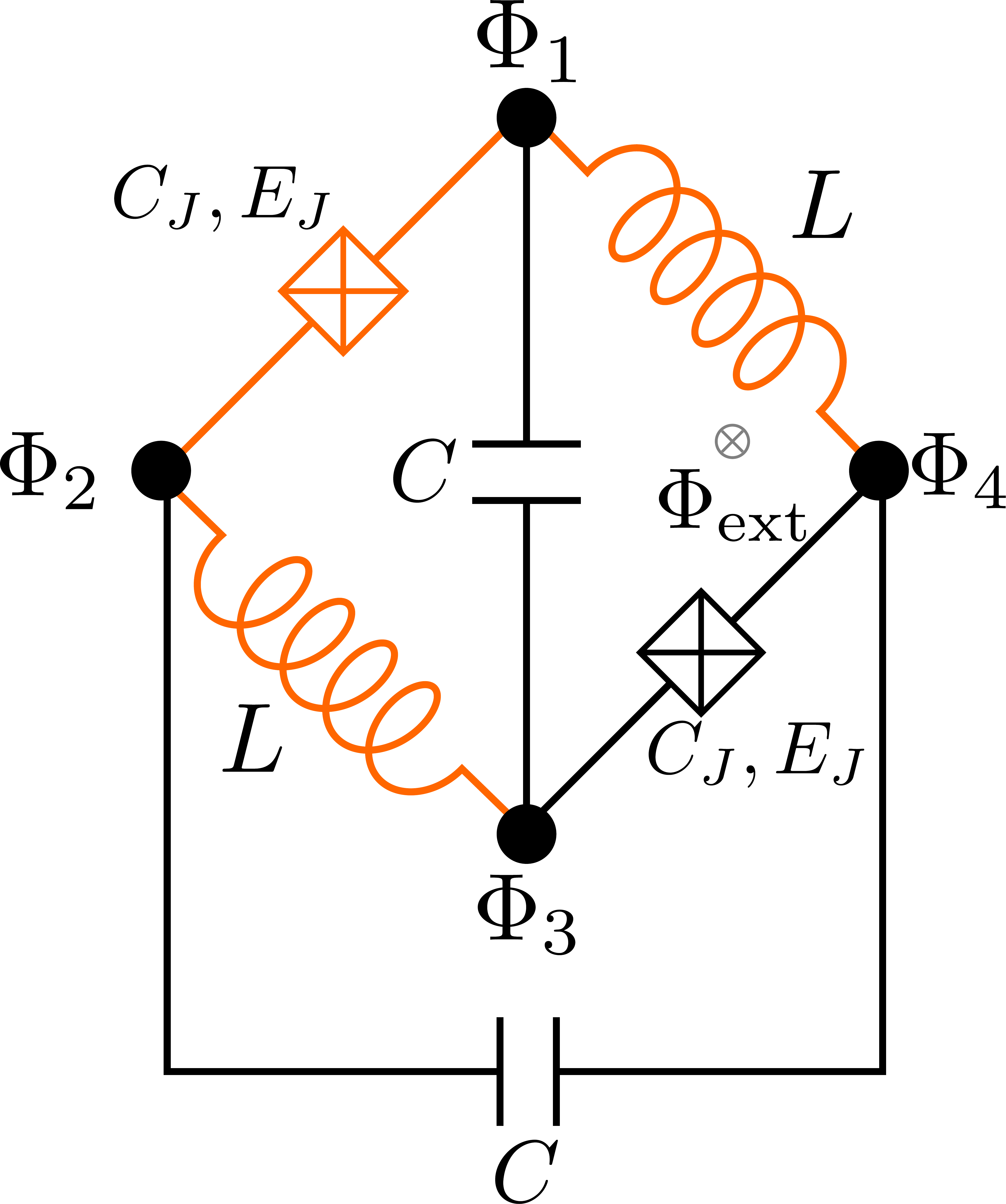}
\caption{Circuit of the $0$-$\pi$ qubit in Exercise~\ref{exc:0pi}.}
\label{fig::zeropi}
\end{figure} 
In this exercise we study the original circuit of the $0$-$\pi$ qubit shown in Fig.~\ref{fig::zeropi}.
\begin{enumerate}
\item Obtain the Lagrangian of the circuit, taking the spanning tree in orange and considering node $3$ as conventional ground node, i.e., setting $\Phi_3 = 0$. 
%\par \textbf{[2 points]}
\item Rewrite the previous Lagrangian in terms of the following variables 
\begin{subequations}\label{eq:newv}
\begin{equation}
\Phi = \frac{1}{2}\bigl[(\Phi_2 - \Phi_3) + (\Phi_4 - \Phi_1)\bigr],
\end{equation}
\begin{equation}
X = \frac{1}{2}\bigl[(\Phi_2 - \Phi_3) - (\Phi_4 - \Phi_1)\bigr],
\end{equation}
\begin{equation}
\Theta = \frac{1}{2}\bigl[(\Phi_2 - \Phi_1) - (\Phi_4 - \Phi_3)\bigr].
\end{equation}
\end{subequations}
Obtain the Hamiltonian as well. How do you interpret the problem when expressed in these variables? Which degree of freedom should be treated as periodic (see also Section~\ref{subsec:fp} for a more formal mapping from real to periodic variables)? What do you expect to happen if you have disorder in the parameters and they are not all perfectly equal as assumed in Fig.~\ref{fig::zeropi}?
\item Considering rescaled phase variables $\phi = 2 \pi  \Phi/\Phi_0 $ and $\theta = 2 \pi \Theta/\Phi_0 $, $\phi_{\mathrm{ext}} = 2 \pi \Phi_{\mathrm{ext}}/\Phi_0$, plot the potential $U(\phi, \theta)$ for these variables with parameters $E_J = 1, E_J/E_L = 150$, $E_L = \frac{\Phi_0^2}{4 \pi^2 L}$, $\Phi_{\mathrm{ext}}=0$ and in the range $\phi \in [-20, 20]$, $\theta \in [-\pi/2, 3 \pi/2]$. 
\item The $0$-$\pi$ qubit is operated in the regime $C \gg C_J$. Which variables have large kinetic energy and which have small kinetic energy (and can thus be approximately put at the minimum of their potential energy, as their $\phi_{\rm zpf}$ is small)?
\item Looking at the potential, qualitatively explain how you expect the low-energy wavefunctions to be distributed and why the circuit could be said to realize a $0$-$\pi$ qubit.
\end{enumerate}
\end{Exercise}

\begin{Answer}[ref={exc:0pi}]
\Question In terms of node fluxes the Lagrangian of the circuit reads
\begin{multline}
\mathcal{L}(\Phi_1, \Phi_2, \Phi_4; \dot{\Phi}_1, \dot{\Phi}_2, \dot{\Phi}_3) = \frac{C}{2} \dot{\Phi}_1^2 + \frac{C}{2} \bigl(\dot{\Phi}_2 - \dot{\Phi}_4 \bigr)^2 + \frac{C_J}{2} \dot{\Phi}_4^2 + \frac{C_J}{2} \bigl(\dot{\Phi}_1 - \dot{\Phi}_2 \bigr)^2 \\ 
- \frac{\Phi_2^2}{2 L} - \frac{1}{2 L} \bigl(\Phi_1 - \Phi_4 \bigr)^2 + E_J \cos \biggl[\frac{2 \pi}{\Phi_0}\bigl(\Phi_1 - \Phi_2 \bigl) \biggr] + E_J \cos \biggl[\frac{2 \pi}{\Phi_0}\bigl(\Phi_4 - \Phi_{\mathrm{ext}}\bigl) \biggr].
\end{multline}
\Question By inverting Eq.~\eqref{eq:newv} we obtain 
\begin{subequations}
\begin{equation}
\Phi_1 = X-\Theta,
\end{equation}
\begin{equation}
\Phi_2 = \Phi+X,
\end{equation}
\begin{equation}
\Phi_4 = \Phi-\Theta.
\end{equation}
\end{subequations}
We rewrite the Lagrangian in terms of the new variables as
\begin{multline}
\mathcal{L}(\Phi ,X, \Theta; \dot{\Phi}, \dot{X}, \dot{\Theta}) = C \dot{X}^2 + C_J \dot{\Phi}^2 + (C+C_J) \dot{\Theta}^2 \\ - \frac{X^2}{L} - \frac{\Phi^2}{L} + E_J \cos \biggl[\frac{2 \pi}{\Phi_0}\bigl(\Phi + \Theta \bigl) \biggr] + E_J \cos \biggl[\frac{2 \pi}{\Phi_0}\bigl(\Phi - \Theta -\Phi_{\mathrm{ext}}\bigl) \biggr].
\end{multline}
We see that the variable substitution diagonalizes the kinetic energy term, i.e., we do not have terms $\dot{X} \dot{\Phi}$ for example. In addition, the variable $X$ is decoupled from the other two, and it simply gives rise to a harmonic oscillator. By defining the conjugate variables as
\begin{subequations}
\begin{equation}
Q_{\Phi} = \frac{\partial \mathcal{L}}{\partial \dot{\Phi}} = 2 C_J \dot{\Phi},
\end{equation}
\begin{equation}
Q_{X} = \frac{\partial \mathcal{L}}{\partial \dot{X}} = 2 C \dot{X},
\end{equation}
\begin{equation}
Q_{\Theta} = \frac{\partial \mathcal{L}}{\partial \dot{\Theta}} = 2 (C + C_J) \dot{\Theta},
\end{equation}
\end{subequations}
the Hamiltonian reads
\begin{multline}
\mathcal{H}(\Phi, X, \Theta; Q_{\Phi}, Q_{X},Q_{\Theta} ) = \frac{Q_X^2}{4 C} + \frac{Q_{\Phi}^2}{4 C_J} + \frac{Q_{\Theta}^2}{4 (C+C_J)} \\ 
+ \frac{X^2}{L} + \frac{\Phi^2}{L} - E_J \cos \biggl[\frac{2 \pi}{\Phi_0}\bigl(\Phi + \Theta \bigl) \biggr] - E_J \cos \biggl[\frac{2 \pi}{\Phi_0}\bigl(\Phi - \Theta - \Phi_{\mathrm{ext}}\bigl) \biggr].
\end{multline}
The Hamiltonian is periodic in $\Theta$ with period $\Phi_0$. This means that we can effectively treat $\Theta$ as a periodic (compact) variable. Note that this is due to the fact that the circuit of the $0$-$\pi$ qubit is made out of two superconducting islands. If we have disorder in the parameters we expect that the $X$ variable is not perfectly decoupled from the other two, but some spurious coupling will be present. 
\Question In terms of the new variables we identify the potential $U(\phi, \theta)$ as
\begin{equation}
U(\phi, \theta) = E_L \phi^2 -E_J \cos(\phi + \theta) - E_J \cos(\phi-\theta - \phi_{\mathrm{ext}}).
\end{equation}
\begin{center}
\includegraphics[scale=0.4]{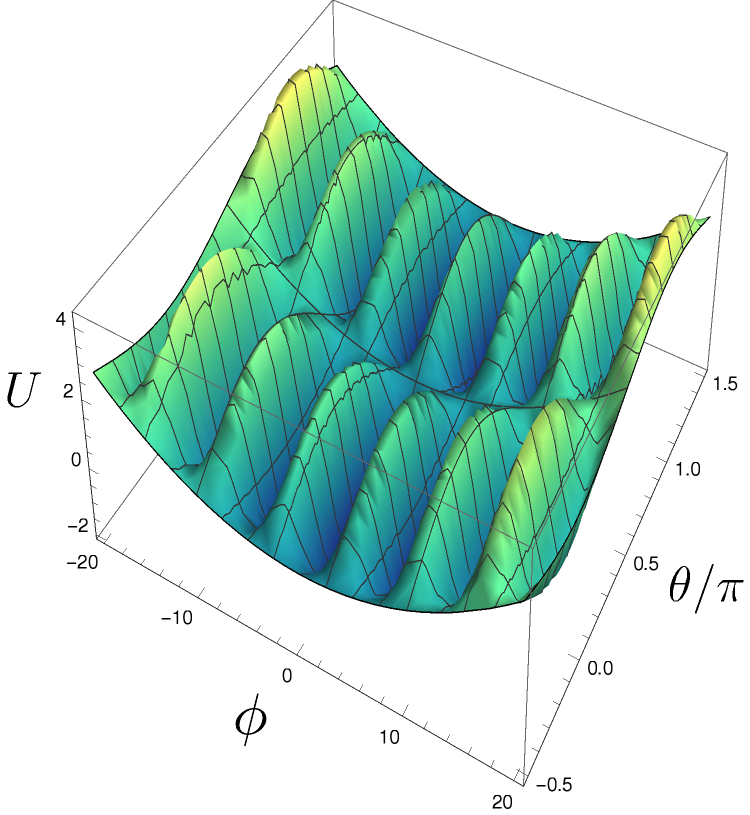}
\includegraphics[scale=0.4]{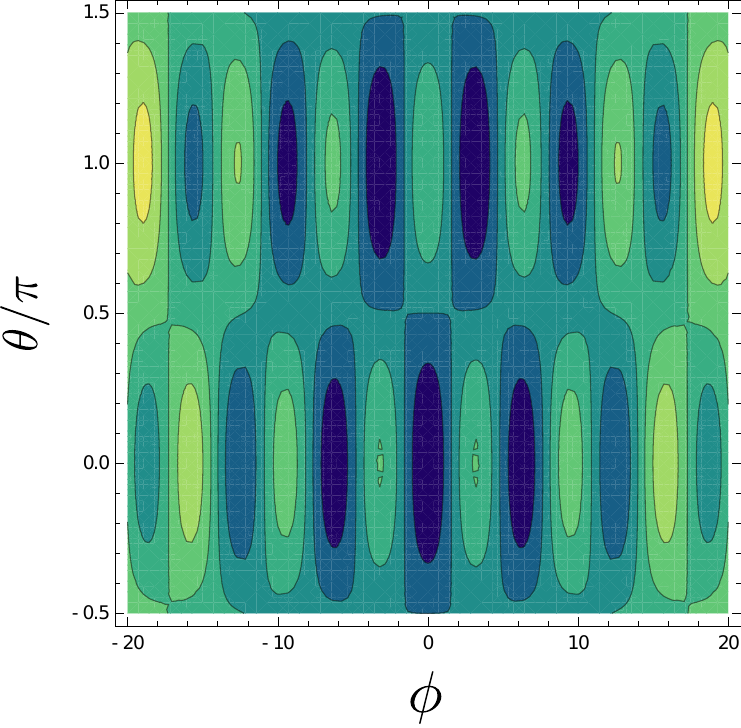}
\captionof{figure}{Potential of the $0$-$\pi$ qubit.}
\label{fig::zeropiPot}
\end{center}
We plot the potential for the required parameters in Fig.~\ref{fig::zeropiPot}. 
\Question When $C \gg C_J$, the variables $X$ and $\Theta$ have relatively low kinetic energy (also depending on $E_J$ and $1/L$), while $\Phi$ has relatively high kinetic energy. 
\Question We see that the potential possesses two collections of local minima at $\theta = 0$ and $\theta = \pi$. This suggests that if $C \gg C_J$, the low-energy wavefunctions will also split into two sets localized at $\theta=0, \pi$, but delocalized in the $\phi$ variable. This justifies the name $0$-$\pi$ qubit.   
\end{Answer}

\subsection{The Möbius-strip circuit}
\label{subsec:mobius}

Due to the fact that superconducting chips are fabricated on 2D~substrates, typical electrical circuit graphs are planar graphs. However, one can consider non-planar electrical circuits, which might require fabricating off-planar Josephson junctions \cite{blochnium}, or using air-bridges, or non-planar couplers \cite{ibm:arch}. For example, one can imagine long-range couplers which are planar, and thus not crossing, on both sides of a 2D chip, as in a so-called biplanar graph.

Leaving fabrication issues aside, one can consider non-planar structures and ask whether they allow for novel qubits with enhanced protection, possibly by exploiting symmetries.
In this section and in Section \ref{subsec:tetrahedron} we consider two possibly non-planar example circuits from the literature to which we apply the method of obtaining the circuit Lagrangian.

\begin{figure}[htb]
\centering
\includegraphics[width=5cm]{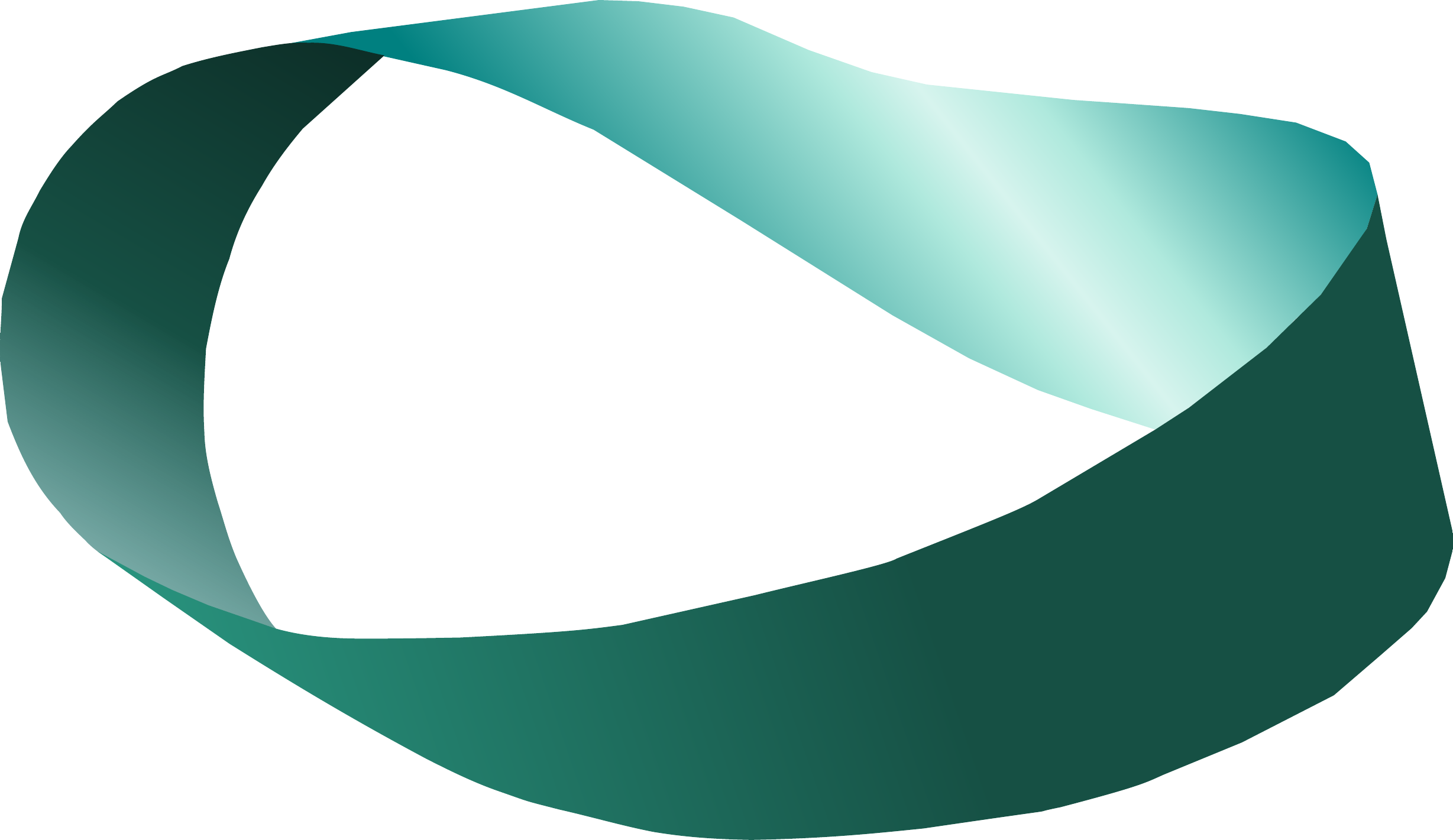}
\caption{A Möbius strip.}
\label{fig:mobius_strip}
\end{figure}

\begin{figure}[htb]
\centering
\includegraphics[scale=0.07]{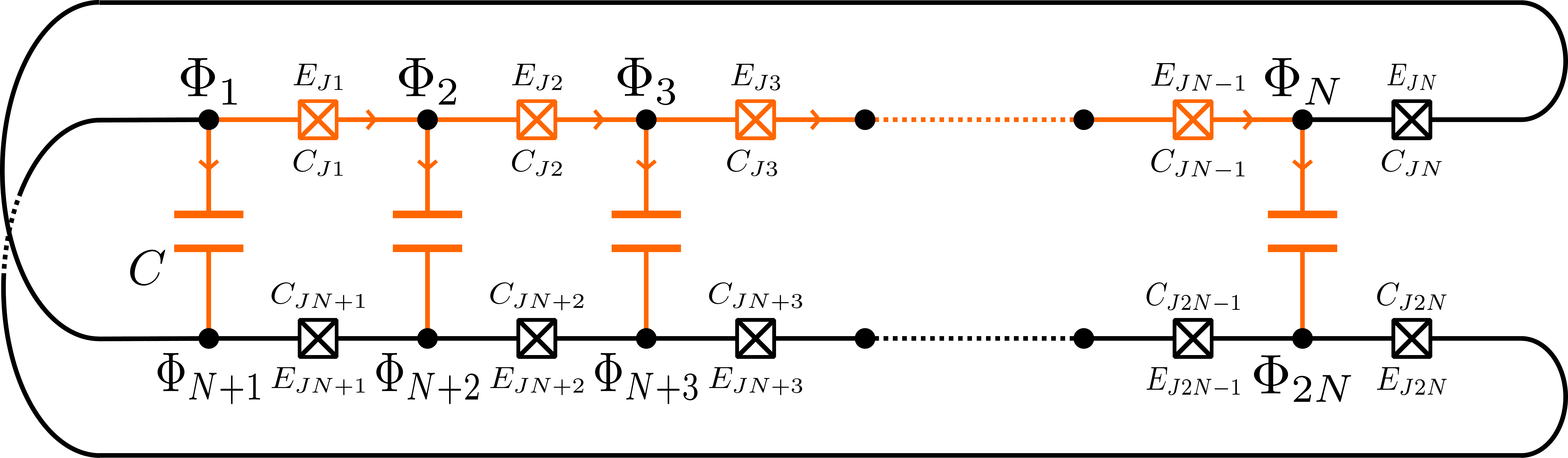}
\caption{Besides the branches shown in this circuit, each node is connected to ground via a capacitance $C_g$. The spanning tree is in orange and is assumed to start at the ground node via a capacitive branch connecting $\Phi_1$ to ground. The orientation of the branches in the tree is shown explicitly.}
	\label{fig:mobius}
\end{figure}

The Möbius strip, see~\cref{fig:mobius_strip}, is a well-known non-orientable surface with a single boundary curve.
We consider a circuit with the topology of a Möbius strip  in which the boundary curve is made of a series of Josephson junctions, further connected by capacitances across the strip, as in Fig.~\ref{fig:mobius}. This circuit was first introduced in~\cite{Kitaev06} and analyzed in detail in \cite{Weiss19}.
We consider a total of~$2N$ junctions with Josephson energy~$E_{J_j}$ and intrinsic capacitance~$C_{J_j}$, each between the nodes~$j$ and~$j+1$.
The nodes~$j$ and~$j+N$ are connected by another capacitance~$C_j$, for~$j\in\{1,\ldots,N\}$. In addition, each node is capacitively coupled to ground via a capacitance $C_g$ (not shown in the figure).

The Lagrangian equals
\begin{align}
\lagrangian =  \sum_{j=1}^{N} \frac{C}{2} (\nodeflux[N+j]{dot}-\nodeflux[j]{dot})^2+\sum_{j=1}^{2N}\frac{C_g}{2}\nodeflux[j]{dot}^2 +\sum_{j=1}^{N-1} \lagrangian_{J_j}(\nodeflux[j+1]{}-\nodeflux[j]{}) 
 +  \lagrangian_{J_N}(\nodeflux[N+1]{}-\nodeflux[N]{}) \notag \\
+ \lagrangian_{J_{N+1}}(\nodeflux[N+2]{}-\nodeflux[N+1]{}) 
+ \sum_{j=N+2}^{2N-1} \lagrangian_{J_j}(\nodeflux[j+1]{}-\nodeflux[j]{})  
+ \lagrangian_{J_{2N}}(\nodeflux[1]{}-\nodeflux[2N]{}),  
\end{align}
where we have defined
\begin{align}
\lagrangian_{J_j}(\nodeflux{}) \coloneqq \frac{C_{J_j}}{2} \nodeflux[]{dot}^2 
+ E_{J_j} \cos\left(\frac{2\pi}{\Phi_0}\Phi \right).
\end{align}
The Hamiltonian can be found analytically by inverting the corresponding capacitance matrix, introducing $Q_j=\frac{\partial \lagrangian}{\partial \dot{\Phi}_j}$ with $[\hat{\Phi}_i,\hat{Q}_j]=i\hbar \delta_{ij} \mathds{1}$.

If this M\"obius circuit were realized by a 3D structure and subjected to an external magnetic field, one would have to additionally consider the magnetic flux $\Phi_{\rm ext}$ through some of the loops. 

This system, proposed first in Ref.~\cite{Kitaev06} and analyzed in detail in Ref.~\cite{Weiss19}, constitutes an alternative construction of a protected $0$-$\pi$ qubit, see Section~\ref{subsec:0pi}. Specifically, assuming~$E_{J_j}=E_J$, $C_{J_j}=C_J$ for all~$j$, in the regime where
\begin{align}
N\gg 1, \qquad E_J \ll E_{C_J}, \qquad C \gg C_J, C \gg C_g, 
\end{align}
there are two lowest-energy states (conventionally labeled as~$\ket{0}$ and~$\ket{1}=\ket{\pi}$) which are exponentially degenerate (in $N$) and non-overlapping as wavefunctions.

One can understand the emergence of this degeneracy as follows. When $C \gg C_g$, it is energetically favorable when a Cooper pair tunnels from, say, node $\Phi_{N+1}$ to node $\Phi_{N+2}$, that it is accompanied by another Cooper pair tunneling from node $\Phi_1$ to $\Phi_2$ so as to keep the charge balance on the capacitive rungs of the ladder. Such a two-Cooper-pair tunneling process is effectively described (and can be perturbatively obtained \cite{Weiss19, VCT:homo}) as $-E_{J_{\rm eff}}\cos\left(\frac{2\pi}{\Phi_0} (\hat{\Phi}_{N+2}-\hat{\Phi}_{N+1}+\hat{\Phi}_1-\hat{\Phi}_2\right)$ with some $E_{J_{\rm eff}}$ which can be estimated with perturbation theory. At the end of the ladder, due to the twisted boundary of the M\"obius strip, the effective two-Cooper-pair tunneling term is $-E_{J_{\rm eff}}\cos\left(\frac{2\pi}{\Phi_0} (\hat{\Phi}_{1}-\hat{\Phi}_{2N}+\hat{\Phi}_N-\hat{\Phi}_{N+1}\right)$. All these potential terms, one for each square face of the strip, commute with the operator $e^{i\frac{\pi}{2e}\sum_{i=1}^N \hat{Q}_i}$; in particular, the twisted face at the end only commutes with $e^{i\frac{\theta}{2e}\sum_{i=1}^N \hat{Q}_i}$ when $\theta=\pi$. The Hamiltonian will thus commute with this product of $\pi$ shifts and its spectrum will have a double-degeneracy reflecting this symmetry.
In \cite{VCT:homo} it is argued that this $0$-$\pi$ qubit can be viewed as an example of a homological rotor code which can encode a single qubit due to the non-orientability of the underlying two-dimensional manifold.

\section{Symmetries and forbidden transitions}
\label{sec:sym-prot}

In a superconducting device, certain energy transitions can be forbidden due to symmetries or properties of the Hamiltonian, just like in atoms. In the next section, we discuss an example showing how this might lead to useful, qubit-encoding, degeneracies. Here, instead, we mention some simple symmetry facts which apply to basic qubits, such as transmon and flux qubits. These allow one to understand how to couple to these qubits and which transitions are forbidden.

One can consider the following unitary transformation $U_{\pi}$ on a single set of conjugate variables
\begin{equation}
    U_{\pi} \colon \hat{\phi} \rightarrow -\hat{\phi},
    \,\hat{q} \rightarrow -\hat{q}.
    \label{eq:upi}
\end{equation}
This can be viewed as a spatial parity transformation if $\hat{\phi}$ and $\hat{q}$ are seen as (dimensionless) positions and momenta, which in phase space corresponds to a rotation of $180$ degrees. The unitary $U_{\pi}$ is known as the parity operator and one can verify that it is defined as 
\begin{equation}
    U_{\pi} = i e^{-i \frac{\pi}{2}(\hat{\phi}^{2} + \hat{q}^{2})} = e^{- i \pi \hat{a}^{\dagger} \hat{a}} = (-1)^{\hat{a}^{\dagger} \hat{a}}. 
\end{equation}
Note that $U_{\pi}$ is not only a unitary, but also a Hermitian, transformation, satisfying $U_{\pi}^2=\mathds{1}$, or equivalently $U_{\pi}= U_{\pi}^{\dagger}$.

A Hamiltonian which only contains even functions of $\hat{q}$ and $\hat{\phi}$ is thus invariant under this transformation, implying that its eigenvectors $\ket{\psi_k}$ can be chosen as eigenstates of $U_{\pi}$ with eigenvalues $\pm 1$, i.e., the eigenstates either have even or odd parity. The symmetry implies that for two eigenstates $\ket{\psi_k}$ with the same symmetry (both odd or both even), one has
\begin{equation}
    \langle \psi_k | \hat{q} |\psi_l \rangle=
    \langle \psi_k | U_{\pi}^{\dagger}  \hat{q} U_{\pi}|\psi_l \rangle=-\langle \psi_k | \hat{q} |\psi_l \rangle=0.
    \label{eq:Q0}
\end{equation}
Similarly, $\langle \psi_k | \hat{\phi} |\psi_l \rangle=0$ between eigenstates with the same symmetry. More generally, 
\begin{align}
    \langle \psi_k | f_{\rm odd}(\hat{\phi}) |\psi_l \rangle=0,
    \label{eq:sym0-2}
\end{align} 
between eigenstates with the same symmetry where $f_{\rm odd}()$ is an odd function.

Hamiltonians which depend on $\hat{q}^2$ and any even potential $U(\phi)$ are thus parity-symmetric; examples are the transmon qubit, the flux qubit and fluxonium qubit Hamiltonian, both at $\phi_{\rm ext}=0$ or $\pi$, see e.g. the potential in  Fig.~\ref{fig:fluxonium_potential}. It implies that driving the qubit `via its dipole', i.e., coupling via the operator $\hat{q}$, will only lead to transitions from even to odd and vice-versa: due to Eq.~\eqref{eq:Q0}, the other transitions are forbidden. The same holds when one drives inductively, via the $\hat{\phi}$ operator. This is important to know when one considers how to manipulate the qubit, or estimate matrix elements of $\hat{\phi}$ and $\hat{q}$ in the qubit basis. For example, for those qubits one has $\bra{g} \hat{\phi} \ket{g}=\bra{e} \hat{\phi} \ket{e}=\bra{g} \hat{q} \ket{g}=\bra{e} \hat{q} \ket{e}=0$. 

Further considerations can be made for the operator $\hat{q}$ when the Hamiltonian only depends on $\hat{q}^2$, also when the potential is not parity-symmetric, i.e., $U(\phi) \neq U(-\phi)$. Since in the flux representation $\hat{q}=-i \frac{d}{d \phi}$, any Hamiltonian which solely depends on $\hat{q}^2$ acts as a real-valued operator in the flux basis. This implies that the eigenfunctions $\psi_k(\phi)$  can be taken as real-valued functions and, if $\psi_k(\phi)$ are normalizable, this leads to
\begin{subequations}
\begin{equation}
    \bra{\psi_k} \hat{q} \ket{\psi_k} =  -i \int_{-\infty}^{\infty} d\phi  \,\psi_k^*(\phi) \frac{d}{d\phi}\psi_k(\phi)=0, 
    \end{equation}
    \begin{equation}
     \bra{\psi_k} \hat{q} \ket{\psi_l} = -  \bra{\psi_l} \hat{q} \ket{\psi_k}. 
     \label{eq:pure-im}
     \end{equation}
\end{subequations}
This property is formally equivalent to a time-reversal invariance.  The flux qubit or fluxonium qubit Hamiltonian in Eq.~\eqref{eq:hfluxonium} at arbitrary external flux is an example, showing that driving via $\hat{q}$ induces no $Z$-component if the computational states are taken as the qubit levels $\ket{0},\ket{1}$. In addition, $
\hat{q}$ then acts proportional to Pauli $Y$ (and not $X$) on the qubit due to Eq.~\eqref{eq:pure-im}.

\begin{Exercise}[label=exc:fluxq]
Argue why $\bra{0} \hat{\phi} \ket{1}=0$ for the flux qubit at $\phi_{\rm ext}=\pi$ where the potential is parity-symmetric in $\phi$ and $\ket{0}=\frac{1}{\sqrt{2}}(\ket{g}+\ket{e})$ and $\ket{1}=\frac{1}{\sqrt{2}}(
\ket{g}-\ket{e})$ with ground and first-excited states $\ket{g}$ and $\ket{e}$.
\end{Exercise}

\begin{Answer}[ref={exc:fluxq}]
Due to parity symmetry we have $\bra{g} \hat{\phi} \ket{g}=\bra{e} \hat{\phi} \ket{e}=0$. Thus $\bra{0} \hat{\phi} \ket{1}=\frac{1}{2}(
-\bra{e} \hat{\phi} \ket{g}+\bra{g} \hat{\phi} \ket{e})=0$ as the wavefunctions $\psi_e(\phi)$ and $\psi_g(\phi)$ are real.
\end{Answer}

\subsection{Symmetries: discussion of a tetrahedral qubit}
\label{subsec:tetrahedron}

In this section, we use the example of a tetrahedral circuit~\cite{Feigelman04} to discuss symmetry-protected subspaces. We first review the consequences of symmetries of a Hamiltonian more generally, as it is applied in atomic, molecular and condensed-matter quantum physics.

Recall that a representation of a group $G$ is a group homomorphism in which each element~$g$ of the group~$G$ is mapped to a matrix or operator~$D_g$ such that $g_1g_2=g_3\implies D_{g_1}D_{g_2}=D_{g_3}$ (we refer the reader to Ref.~\cite{zee}) for the basics of group theory).
If for a subspace~$V$, $\forall g, D_g(V)\subseteq V$, i.e.,~vectors in $V$ are mapped to vectors in $V$, then $V$ is called an invariant subspace.
If a representation does not possess any non-trivial invariant subspaces, it is said to be irreducible.
A symmetry of a quantum Hamiltonian $H$ is a representation such that $[D_g,H]=0$ for all~$g$.
An important consequence is that if~$\ket{\psi}$ is an eigenstate of~$H$, i.e.,~$H\ket{\psi}=E\ket{\psi}$, then
\begin{align}
H (D_g\ket{\psi}) = D_g H\ket{\psi} = E(D_g\ket{\psi}),
\end{align}
i.e.,~$D_g\ket{\psi}$ for any $g \in G$ is an eigenstate as well, degenerate with $\ket{\psi}$. Every group representation $D_g$ thus breaks down into a direct sum of irreducible representations, and the associated invariant subspaces form degenerate eigenspaces of the Hamiltonian.

\begin{figure}[htb]
\centering
	\includegraphics[scale=0.12]{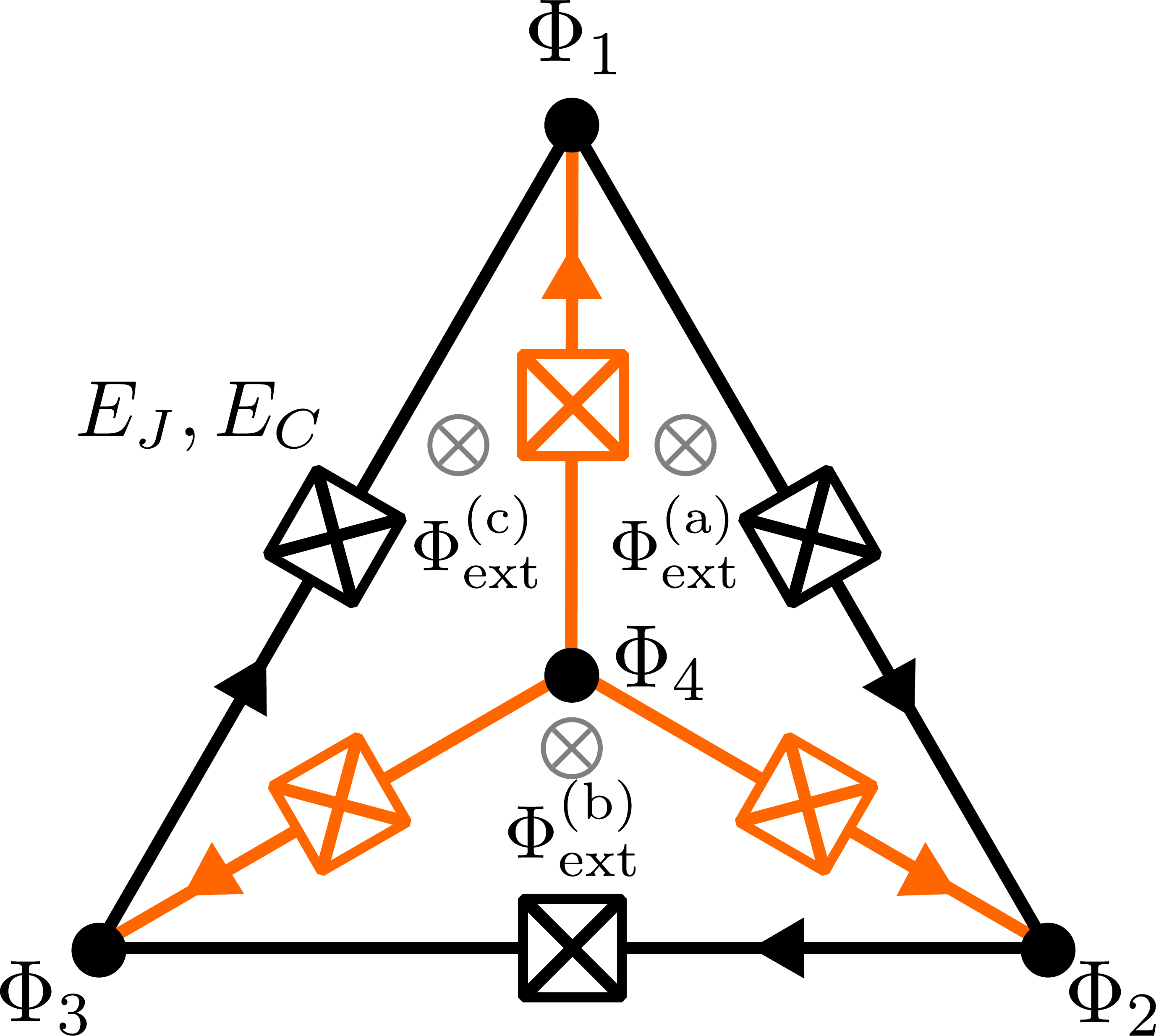}
	\caption{A flattened tetrahedron. We assume all Josephson junctions to have the same~$E_J$ and~$E_C$. The considered spanning tree is shown in orange and $\Phi_i$ are the node fluxes (we assume that node $\Phi_4$ is grounded). The orientation of the branches is shown explicitly and we consider all loops to have a clockwise orientation.
	}
	\label{fig:tetrahedron}
\end{figure}

In practice, spectral degeneracy based on symmetry can be useful to engineer `protected' qubits, assuming that the noise does not commute with the symmetry. If the degenerate space is two-dimensional, we can say that this space encodes a qubit, and one can define logical $X$ and $Z$ operators which act on this qubit. Consider a perturbative term in the Hamiltonian that does not commute with the symmetry.
Such a term typically maps states in the degenerate subspace to states outside of it. At lowest order in perturbation theory, see e.g. Eq.~\eqref{eq:firstorderSW} in Section~\ref{sec:SW}, such an `off-diagonal' term does not break the degeneracy in the spectrum. This implies that errors due to such a term are suppressed at this order. In fact, only errors which can generate the logical operators of the encoded qubit can lead to a breaking of the degeneracy, at some order in perturbation theory. By choosing an encoding such that these logical operators relate to unlikely processes ---affecting, say, many degrees of freedom, hence generated only at high-order in perturbation theory, as in the case of topological order \cite{BHM:topo,VCT:homo}--- one can thus gain protection. Furthermore, if one could activate logical processes at will, while the environment cannot, then one can perform gates on the qubit while it remains protected. In Chapter~\ref{sec:bip} we discuss noise protection in more detail.

Coming back to the symmetry group, we note that if the invariant subspaces are one-dimensional, then $D_g\ket{\psi}\propto\ket{\psi}$ and there is no degeneracy.
Hence, to find a symmetry-protected subspace that can host a qubit or a qudit, it is necessary that the group has at least one two-dimensional irreducible representation.
If the symmetry group is Abelian (all group elements commute), then all irreducible representations are one-dimensional, hence no protected qubit can be found. Note that in the theory of stabilizer quantum error correction, an example of the non-Abelian group $G$ is the centralizer of the stabilizer group which is generated by the Abelian stabilizer group plus the (non-commuting) logical operators of the encoded qubits \cite{thesis:gottesman}.

If the electrical circuit Hamiltonian possesses a non-Abelian symmetry, like the (spatial) symmetry group of a three-dimensional tetrahedron for example, then a degenerate subspace of dimension larger than one is a possibility. More generally, one could imagine Hamiltonians of electrical circuits which have the symmetry group of, say, three-dimensional polyhedra or hyperbolic surfaces.

Obviously, it may be overall easier to find electrical circuits which only {\em approximately} possess a symmetry, i.e., parameters are chosen such that some terms can be neglected or degrees of freedom removed to get an effective Hamiltonian with the targeted symmetry.

We now consider a tetrahedron with one Josephson junction per edge, where all junctions have the same $E_J$ and $E_C$; see a flattened representation of this tetrahedron in \cref{fig:tetrahedron}. In the absence of external magnetic fields, the electrical circuit has tetrahedral symmetry due to all branch elements being equal. 

However, if this is genuinely a three-dimensional tetrahedron and external magnetic fields are present, then the net flux through the tetrahedron must be zero by the Maxwell equation $\nabla \cdot \vect{B}=0$.
This implies that if the flux is positive through some faces, it has to be negative through at least one other face.
As a consequence, in the presence of external magnetic fields, the tetrahedral symmetry is generally broken, since not all faces can be pierced by the same flux. However, note that this is not true if the threading flux is a half-flux quantum, since $\phi_{\rm ext}=\pm \pi$ gives the same contribution to the Hamiltonian.  

In addition, in the absence of magnetic fields, if one of the flux nodes is grounded, we also lose full tetrahedral symmetry. Indeed, as we will see, by grounding one node, the number of degrees of freedom is only three and the symmetry group of the Lagrangian is reduced. One could also consider a circuit in which we have an {\em additional} ground node to which each of the four nodes is coupled capacitively via some capacitance $C_g$. In that scenario, without external fluxes or at $\phi_{\rm ext}=\pm \pi$, the Lagrangian and Hamiltonian will possess tetrahedral symmetry.

Here we will restrict ourselves to the case where one node is grounded, say, $\nodeflux[4]{}=0$. We can then just consider a tetrahedron collapsed onto its base, as in~\cref{fig:tetrahedron}. Having a two-dimensional structure also makes an experimental implementation more feasible. The external fluxes in \cref{fig:tetrahedron} are then simply piercing the plane and, for symmetry, we will set $\extflux[(\mathrm{a})]{}=\extflux[(\mathrm{b})]{}=\extflux[(\mathrm{c})]{}\equiv\extflux[]{}$. Let us explicitly derive the Hamiltonian of this structure. In our derivation we will first include $\nodeflux[4]{}$ explicitly in the Lagrangian and then set it to zero.

Given the choice of spanning tree in~\cref{fig:tetrahedron}, the loop conditions with the external fluxes, see Eq.~\eqref{eq:KVL-flux_ext}, are
\begin{align}
\branchflux[4,1]{} + \branchflux[1,2]{}-\branchflux[4,2]{}  &= \extflux[(\mathrm{a})]{}, \nonumber \\
\branchflux[4,2]{}+\branchflux[2,3]{}-\branchflux[4,3]{}   &= \extflux[(\mathrm{b})]{}, \nonumber \\
\branchflux[4,3]{}+\branchflux[3,1]{} - \branchflux[4,1]{}  &= \extflux[(\mathrm{c})]{},
\end{align}
where for branches on the spanning tree we use the definition  $\branchflux[j,k]{}=\nodeflux[k]{}-\nodeflux[j]{}$, as in Eq.~\eqref{eq:define-node}. It follows that
\begin{align}
\branchflux[1,2]{} &= \nodeflux[2]{} - \nodeflux[1]{} + \extflux[(\mathrm{a})]{}, \nonumber \\
\branchflux[2,3]{} &= \nodeflux[3]{} - \nodeflux[2]{} + \extflux[(\mathrm{b})]{}, \nonumber \\
\branchflux[3,1]{} &= \nodeflux[1]{} - \nodeflux[3]{} + \extflux[(\mathrm{c})]{}.
\end{align}
Assuming $\extflux[(\mathrm{a})]{}=\extflux[(\mathrm{b})]{}=\extflux[(\mathrm{c})]{}\equiv\extflux[]{}$, the Lagrangian is
\begin{align}
\lagrangian &= \lagrangian_J(\nodeflux[1]{} - \nodeflux[4]{}) + \lagrangian_J(\nodeflux[2]{} - \nodeflux[4]{}) + \lagrangian_J(\nodeflux[3]{} - \nodeflux[4]{}) \nonumber\\
&\quad + \lagrangian_J(\nodeflux[2]{} - \nodeflux[1]{} + \extflux[]{}) + \lagrangian_J(\nodeflux[3]{} - \nodeflux[2]{} + \extflux[]{}) + \lagrangian_J(\nodeflux[1]{} - \nodeflux[3]{} + \extflux[]{}),
\end{align}
where we have defined
\begin{align}
\lagrangian_{J}(\nodeflux{}) = \frac{C_{J}}{2} \nodeflux[]{dot}^2 
+ E_{J} \cos\Bigl(\frac{2\pi}{\Phi_0}\Phi\Bigl).
\end{align}
We also assume~$\extflux[]{dot}=0$ in the following and set $\Phi_4=0$. 
This means that there are only three actual degrees of freedom, and the symmetry group of the Lagrangian is $S_3$, the permutation group of three elements. In particular, it is easy to see that~$\lagrangian$ for $\Phi_{\rm ext} \neq 0$, $\Phi_4=0$, is invariant under permutations of the three node fluxes as long as we add a negative sign to the odd permutations. For example, the transformation
\begin{align}
\nodeflux[1]{} \to - \nodeflux[2]{}, \qquad \nodeflux[2]{} \to - \nodeflux[1]{}, \qquad
\nodeflux[3]{} \to  -\nodeflux[3]{}, 
\end{align}
corresponds to a {\em signed} permutation of~$1$ and~$2$. The classical Hamiltonian in terms of dimensionless variables $\phi_j$, $q_j$, defined as in Eq.~\eqref{eq:dimless}, is found by inverting the capacitance matrix:
\begin{align}
\hamiltonian &= 2 E_{C_J} \Bigl( q_1^2 + q_2^2 + q_3^2 + q_1 q_2 + q_2 q_3 + q_3 q_1 \Bigr) \nonumber\\
&\quad - E_J\Bigl[ \cos \nodephi[1]{} + \cos\nodephi[2]{} + \cos\nodephi[3]{} 
 + \cos(\nodephi[2]{} - \nodephi[1]{} + \extphi[]{}) + \cos(\nodephi[3]{} - \nodephi[2]{} + \extphi[]{}) + \cos(\nodephi[1]{} - \nodephi[3]{} + \extphi[]{}) \Bigr].
\label{eq:hamil_tetrahedron}
\end{align}
The classical Hamiltonian $\hamiltonian$ can be quantized to $H$, imposing the standard commutation relations.

\subsubsection{Degenerate eigenspaces and irreducible representations}
 
The group~$S_3$ has six elements, namely the identity~$e$, the transpositions $(12),(23),(31)$ and the cyclic permutations $(123),(132)$.
Furthermore, it is non-Abelian, since, for example, $(12)(23)\neq(23)(12)$. We consider this symmetry in the flux basis. First, we write any state as 
\[
\ket{\psi}=\int_{\mathbb{R}^3}
 d\phi_1 d\phi_1 d\phi_2 \,\psi(\phi_1,\phi_2,\phi_3) \ket{\phi_1,\phi_2,\phi_3},
\]
and we wish to find properties of eigenfunctions $\psi(\phi_1,\phi_2,\phi_3)\in \ell^2(\mathbb{R}^3)$ which arise solely from symmetry. 

The Hamiltonian $H$ is clearly invariant under signed transformations $D_{\sigma}$
for any permutation $\sigma\in S_3$ on the three spaces. More precisely, the action of the group elements $D_{\sigma}$ is
\begin{align}
\ket{\phi_1,\phi_2,\phi_3} \: \overset{D_e}{\mapsto} \:\:\,&\ket{\phi_1,\phi_2,\phi_3}, \nonumber \\
\overset{D_{(12)}}{\mapsto} &\ket{-\phi_2,-\phi_1,-\phi_3}, \nonumber\\ 
\overset{D_{(23)}}{\mapsto} &\ket{-\phi_1,-\phi_3,-\phi_2}, \nonumber\\
\overset{D_{(31)}}{\mapsto} &\ket{-\phi_3,-\phi_2,-\phi_1}, \nonumber\\
\overset{D_{(123)}}{\mapsto} &\ket{\phi_2,\phi_3,\phi_1}, \nonumber\\
\overset{D_{(132)}}{\mapsto} &\ket{\phi_3,\phi_1,\phi_2}.
\label{eq:six}
\end{align}
We note that for specific values of $\phi_1, \phi_2, \phi_3$, e.g. $\phi_i=0$, $\phi_i=\phi_j$, some of the basis states in Eq.~\eqref{eq:six} may not be distinct. This fact will restrict the support of the found eigenwavefunctions (the wavefunctions will have some zeros or `nodes'). We will not pursue this analysis here, but will consider the consequences of this group action on basis states $\ket{\phi_1,\phi_2,\phi_3}$ for general positions $\phi_1,\phi_2,\phi_3$ and comment on this issue at the end of the analysis.

We can view each signed transposition in Eq.~\eqref{eq:six} as the application of the parity operator $U_{\pi}^{\otimes 3}$ in Eq.~\eqref{eq:upi} on the three spaces, followed by the transposition (swap) of a pair of spaces. 

For each choice of~$\phi_1,\phi_2,\phi_3$, and under the assumptions $\phi_i \neq 0$, and $\phi_i \neq \pm \phi_j$, the six vectors in Eq.~\eqref{eq:six} determine a six-dimensional space. This six-dimensional (regular) representation contains once the one-dimensional trivial representation, once the one-dimensional sign representation, and twice the two-dimensional representation of $S_3$, so together $2+2+1+1=6$ (this is a property of a regular representation, see page 107 in \cite{hamermesh}). This means we can label eigenvectors and (degenerate) eigenspaces by these four possible irreducible representations (irreps).

For example, the eigenstates corresponding to the trivial irreducible representation (each group element acting as the identity) are of the form
\begin{align}
\ket{\psi^{(S)}}=\int_{\mathbb{R}^3}
 d\phi_1 d\phi_1 d\phi_2 \,\psi_{
 \rm sym}(\phi_1,\phi_2,\phi_3) \ket{\phi_1,\phi_2,\phi_3},
 \label{eq:S}
\end{align}
where, since $\forall \sigma,\; D_{\sigma}\ket{\psi^{(S)}}=\ket{\psi^{(S)}}$, we have $\forall \sigma,\;\psi_{\rm sym}(\phi_1,\phi_2,\phi_3)=\psi_{\rm sym}(D_{\sigma}(\phi_1,\phi_2,\phi_3))$. The other one-dimensional irreducible representation applies $-1$ for odd permutations $\sigma$ and $+1$ for even permutations and corresponds to eigenvectors of the form
\[
\ket{\psi^{(AS)}}=\int_{\mathbb{R}^3}
 d\phi_1 d\phi_1 d\phi_2 \,\psi_{
 \rm asym}(\phi_1,\phi_2,\phi_3) \ket{\phi_1,\phi_2,\phi_3},
\]
with $\psi_{\rm asym}(\phi_1,\phi_2,\phi_3)={\rm sign}(\sigma)\psi_{\rm asym}(D_{\sigma}(\phi_1,\phi_2,\phi_3))$ for all $\sigma \in S_3$. Alternatively, this state can be written as
\begin{align}
\ket{\psi^{(AS)}}& =\int_{\mathbb{R}^3}
 d\phi_1 d\phi_1 d\phi_2 \,\chi(\phi_1,\phi_2,\phi_3)\:\Bigl( \ket{\phi_1,\phi_2,\phi_3}-\ket{-\phi_1,-\phi_3,-\phi_2}
 \nonumber\\
&\qquad
-\ket{-\phi_2,-\phi_1,-\phi_3}-\ket{-\phi_3,-\phi_2,-\phi_1}+
 \ket{\phi_3,\phi_1,\phi_2}+\ket{\phi_2,\phi_3,\phi_1}\Bigr),
 \label{eq:SS}
\end{align}
where the wavefunction $\chi(\phi_1,\phi_2,\phi_3)$ is still entirely free. Since the state space of $H$ is infinite-dimensional, one expects an infinite number of eigenvalues and eigenwavefunctions (such as $\chi(\phi_1,\phi_2,\phi_3)$) for each sector labelled by an irrep. 

\begin{Exercise}[label=exc:ortho-irreps]
Verify that $D_{\sigma}\ket{\psi^{(AS)}}={\rm sign}(\sigma)\ket{\psi^{(AS)}}$ with $\ket{\psi^{(AS)}}$ as in Eq.~\eqref{eq:SS}. 
Expand $\ket{\psi^{(S)}}$ in Eq.~\eqref{eq:S} in a similar way as Eq.~\eqref{eq:SS}, using $\forall \sigma,\; D_{\sigma}\ket{\psi^{(S)}}=\ket{\psi^{(S)}}$. Show explicitly that $\bra{\psi^{(S)}}\psi^{(AS)}\rangle=0$, using the fact that these states transform differently under some $D_{\sigma}$. 
\end{Exercise}
 
\begin{Answer}[ref=exc:ortho-irreps]
We can verify that $D_{\sigma=(12)}\ket{\psi^{(AS)}}=-\ket{\psi^{(AS)}}$ etc.
We have 
\begin{align}
\ket{\psi^{(S)}}& =\int_{\mathbb{R}^3}
 d\phi_1 d\phi_2 d\phi_3 \,\tilde{\chi}(\phi_1,\phi_2,\phi_3)\:\Bigl( \ket{\phi_1,\phi_2,\phi_3}+
 \ket{\phi_3,\phi_1,\phi_2}+\ket{\phi_2,\phi_3,\phi_1} 
 \nonumber\\
&\qquad\qquad\qquad\qquad+
\ket{-\phi_1,-\phi_3,-\phi_2}+\ket{-\phi_2,-\phi_1,-\phi_3}+\ket{-\phi_3,-\phi_2,-\phi_1}\Bigr),
\end{align}
where $\tilde{\chi}(\phi_1,\phi_2,\phi_3)$ is still free again. It is easy to see that the eigenspaces labelled by irreps are orthogonal, since $\bra{\psi^{(S)}}\psi^{(SS)}\rangle=0$ as $\bra{\psi^{(S)}}\psi^{(SS)}\rangle=\bra{\psi^{(S)}}D_{\sigma}\ket{\psi^{(SS)}}=-\bra{\psi^{(S)}}\psi^{(SS)}\rangle$ for odd permutations $\sigma$.
\end{Answer}

We label the two two-dimensional irreps by $k=1,2$. Let's build these irreps. Take a vector
\begin{align}
\ket{\psi_1^{(k=1)}} & \propto \int_{\mathbb{R}^3} d\phi_1d\phi_2d\phi_3 \: \chi'(\phi_1,\phi_2,\phi_3) \:
\Bigl( \ket{\phi_1,\phi_2,\phi_3} - \ket{-\phi_1,-\phi_3,-\phi_2} \nonumber\\
&\qquad\qquad\qquad\qquad\qquad\qquad\qquad\qquad\quad+ \ket{-\phi_2,-\phi_1,-\phi_3} - \ket{\phi_2,\phi_3,\phi_1} \Bigr).
\end{align}
This vector is orthogonal to $\ket{\psi^{(S)}}$ and $\ket{\psi^{(AS)}}$, assuming that $\chi(\phi_1,\phi_2,\phi_3), \chi'(\phi_1,\phi_2,\phi_3)$ etc. are such that 
all states over which we integrate are orthogonal for different values of $\phi_1,\phi_2,\phi_3$. Applying $D_{(13)}$ to this vector gives
\begin{align}
D_{(13)}\ket{\psi_1^{(k=1)}} & \propto \int_{\mathbb{R}^3} d\phi_1d\phi_2d\phi_3 \: \chi'(\phi_1,\phi_2,\phi_3) \:
\Bigl( \ket{-\phi_3,-\phi_2,-\phi_1} - \ket{\phi_2,\phi_3,\phi_1} \nonumber\\
&\qquad\qquad\qquad\qquad\qquad\qquad\qquad\qquad\quad+ \ket{\phi_3,\phi_1,\phi_2} - \ket{-\phi_1,-\phi_3,-\phi_2} \Bigr),
\end{align}
which is not orthogonal to $\ket{\psi_1^{(k=1)}}$, but one can Gram-Schmidt orthogonalize and obtain an orthogonal vector which is the partner to $\ket{\psi_1^{(k=1)}}$ in the two-dimensional irrep:
\begin{align}
\ket{\psi_2^{(k=1)}} & \propto  \int_{\mathbb{R}^3} d\phi_1d\phi_2d\phi_3 \: \chi'(\phi_1,\phi_2,\phi_3) \:
\Bigl( \ket{\phi_1,\phi_2,\phi_3} + \ket{-\phi_1,-\phi_3,-\phi_2} \nonumber\\
&\qquad\qquad\qquad\qquad\qquad\qquad\qquad\qquad\qquad+ \ket{-\phi_2,-\phi_1,-\phi_3} 
- 2\ket{-\phi_3,-\phi_2,-\phi_1} \nonumber\\
&\qquad\qquad\qquad\qquad\qquad\qquad\qquad\qquad\qquad- 2\ket{\phi_3,\phi_1,\phi_2} + \ket{\phi_2,\phi_3,\phi_1}\Bigr).
\end{align}

\begin{Exercise}[label=exc:ortho]
In the simplified language of three qutrit states (which is applicable here), verify the orthogonality of $\ket{a}=\ket{123}-\ket{132}+\ket{213}-\ket{231}$ and 
$\ket{b}=\ket{123}+\ket{132}+\ket{213}-2\ket{321}-2\ket{312}+\ket{231}$ and the orthogonality of both these states, $\ket{a}$ and $\ket{b}$, with respect to $\ket{S}=\ket{123}+\ket{312}+\ket{231}+\ket{132}+\ket{213}+\ket{321}$ and 
$\ket{AS}=\ket{123}+\ket{312}+\ket{231}-\ket{132}-\ket{213}-\ket{321}$. Verify that $(13)\ket{a}=\ket{321}-\ket{231}+\ket{312}-\ket{132}$ lies in the span of $\ket{a}$ and $\ket{b}$.
\end{Exercise}
 
\begin{Answer}[ref=exc:ortho]
Use Mathematica, MATLAB or Python software to represent vectors as 6-dimensional vectors and verify.
\end{Answer}

To construct the basis functions of the last two-dimensional irrep, we start with a vector orthogonal to all previous ones, $\ket{\psi_{1,2}^{(k=1)}},\ket{\psi^{(S)}},\ket{\psi^{(AS)}}$, using simple math as in Exercise \ref{exc:ortho} to represent orthogonality. Note that the $\pm$ signs in the states $\ket{\pm \phi_1,\pm \phi_2,\pm \phi_3}$ play no role in these orthogonality facts, since they are fixed by the state being an even or odd permutation from $\ket{\phi_1,\phi_2,\phi_3}$. Thus, we can take
\begin{align}
\ket{\psi_1^{(k=2)}} & \propto \int_{\mathbb{R}^3} d\phi_1d\phi_2d\phi_3 \: \chi''(\phi_1,\phi_2,\phi_3) \:
\Bigl( \ket{\phi_1,\phi_2,\phi_3} - \ket{-\phi_2,-\phi_1,-\phi_3} \nonumber\\
&\qquad\qquad\qquad\qquad\qquad\qquad\qquad\qquad\quad+ \ket{-\phi_3,-\phi_2,-\phi_1} - \ket{\phi_3,\phi_1,\phi_2} \Bigr),
\end{align}
and its Gram-Schmidt orthogonalized partner is
\begin{align}
\ket{\psi_2^{(k=2)}} & \propto  \int_{\mathbb{R}^3} d\phi_1d\phi_2d\phi_3 \: \chi''(\phi_1,\phi_2,\phi_3) \:
\Bigl( \ket{\phi_1,\phi_2,\phi_3} + 2\ket{-\phi_1,-\phi_3,-\phi_2} \nonumber\\
&\qquad\qquad\qquad\qquad\qquad\qquad\qquad\qquad\qquad- \ket{-\phi_2,-\phi_1,-\phi_3} 
- \ket{-\phi_3,-\phi_2,-\phi_1} \nonumber\\
&\qquad\qquad\qquad\qquad\qquad\qquad\qquad\qquad\qquad +\ket{\phi_3,\phi_1,\phi_2} - 2\ket{\phi_2,\phi_3,\phi_1}\Bigr).
\end{align}

We have assumed that the wavefunction only has support on states $\ket{\phi_1,\phi_2,\phi_3}$ 
which avoid any indistinguishability by the group action, to derive all the invariant subspaces and their basis vectors. But what if $\chi(\phi_1,\phi_2,\phi_3)$ only has support on states for which, say, $\phi_1=\phi_2=\phi_3$? It means that the action of $D_{\sigma}$ in Eq.~\eqref{eq:six} is only that of a two-dimensional representation, composed of the trivial irrep and the sign irrep; one can see that the other vectors $\ket{\psi_{1,2}^{(k)}}$ become null vectors and drop out as possible eigenstates.

If we wish to further analyze this circuit, we could focus on these four different symmetry sectors and project the Hamiltonian onto these spaces to solve for the $\chi(\psi_1,\psi_2,\psi_3)$ etc. wavefunctions. 

An interesting working point to execute this analysis could be at 
$\phi_{\mathrm{ext}}=\pi$ in $\hamiltonian$ in~\cref{eq:hamil_tetrahedron}, where the potential has a continuous set of minima~\cite{Feigelman04} for
\begin{align}
\nodephi[1]{} = 0, \qquad \nodephi[2]{} - \nodephi[3]{} = \pi, \qquad \nodephi[3]{} \:\: \text{arbitrary}
\end{align}
Further studies of the tetrahedron Hamiltonian can be found in Ref.~\cite{Feigelman04}.

\chapter{The transmon qubit, resonators and their coupling}
\label{chap:transmon}

In this chapter, we analyze the transmon qubit in detail and discuss the mathematical treatment of resonators to which it can be coupled. At the time of writing, the transmon qubit can be considered the most successful superconducting qubit. In fact, all superconducting, multi-qubit (between, say, $10$ and $1000$ qubits) chips that are currently in use all employ transmon qubits. The transmon qubit consists of a large capacitance shunting a single Josephson junction (fixed-frequency transmon) or a SQUID (flux-tunable transmon) as shown in Fig.~\ref{fig:ft_transmon}. Thus, the transmon, from a circuit theory point of view, is essentially a CPB as described in Section~\ref{subsec:joscpb}, but operated in a specific parameter regime. 

Transmons have been called artifical atoms, as the circuit functions somewhat like an atom with electronic states which have an eletric dipole coupling to the electromagnetic field, stored, say, in a resonator. The advantage of the transmon as compared to genuine atom-light interactions (cavity QED) is that in circuit QED this coupling can be engineered to be very strong. The strength of $g/2\pi$ in the extended Jaynes-Cummings model (see Eq.~\eqref{eq:gen_jc}) is typically of the order of $100 \, \mathrm{MHz}$, so that for a transmon qubit with typical frequency of $5 \, \mathrm{GHz}$, $g/\omega=0.02$. Importantly for entangling operations, the coupling is also much stronger than the transmon and resonator decay rates, which are typically less than $1/(10 \, \mathrm{\mu s})=10^{-1} \, \mathrm{MHz}$ (strong coupling regime).

The deserved success of the transmon is mainly due to the following reasons:
\begin{itemize}
    \item The simplicity of its circuit, which comes with reliable and well-established fabrication techniques.
    \item The resilience against charge and flux noise, leading to relatively long relaxation times $T_1$ (varying from $10-100\,\mathrm{\mu s}$ or even higher, depending on material properties) and dephasing times $T_2$.
    \item The possibility to perform single- and two-qubit gates, as well as measurements with relatively simple protocols.
\end{itemize}

In what follows, we introduce commonly-used analytical and numerical methods to study transmon qubits. We start by studying its Hamiltonian and the associated spectrum. 

\begin{figure}[h]
    \centering
    \includegraphics[height=4 cm]{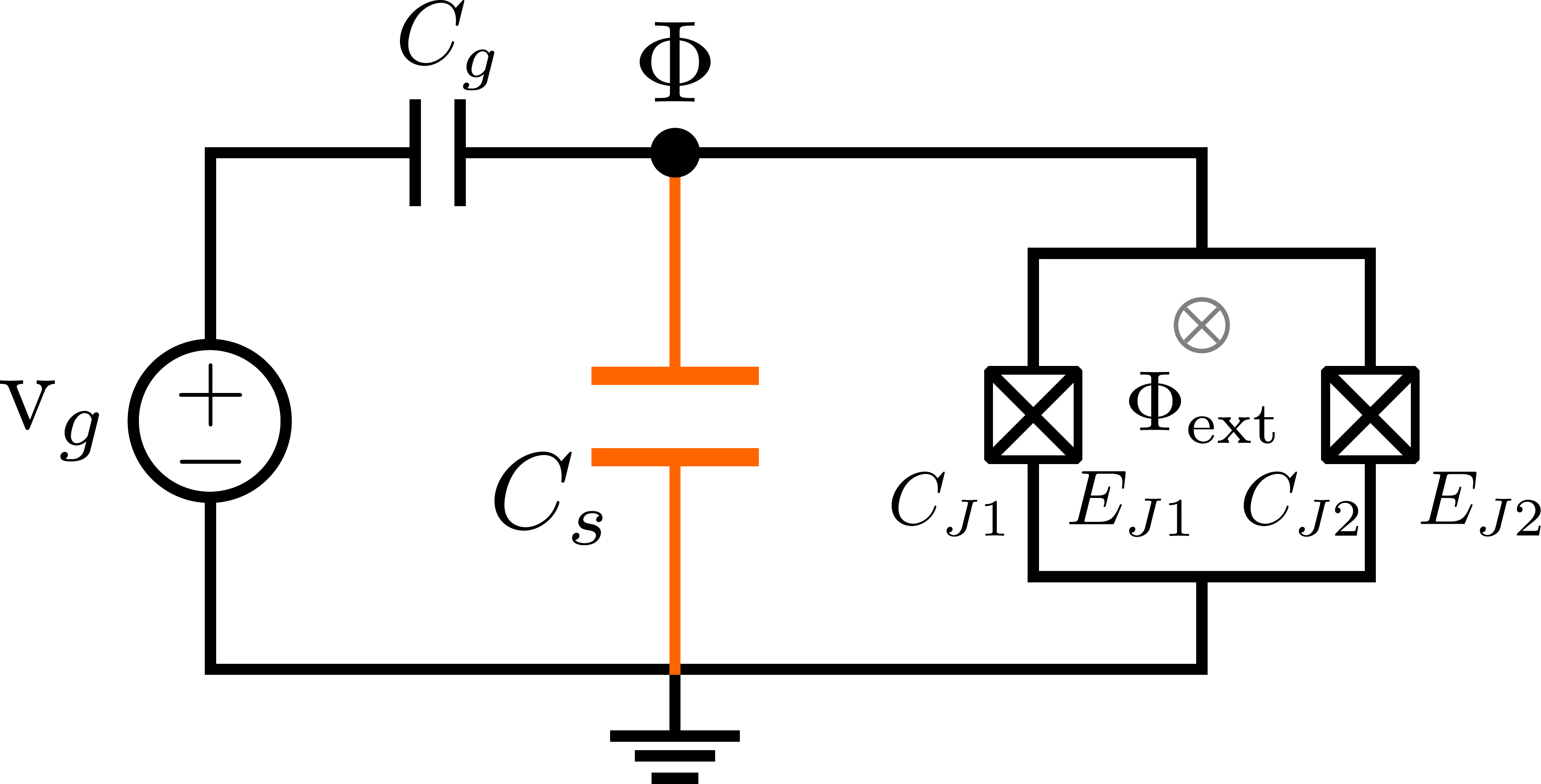}
    \caption{Circuit of a flux-tunable transmon. The voltage source models the effect of charge noise or an externally-applied voltage via, say, a transmission line. Our choice of spanning tree is highlighted in orange.}
    \label{fig:ft_transmon}
\end{figure}

\begin{figure}[h]
\centering
\includegraphics[height=5cm]{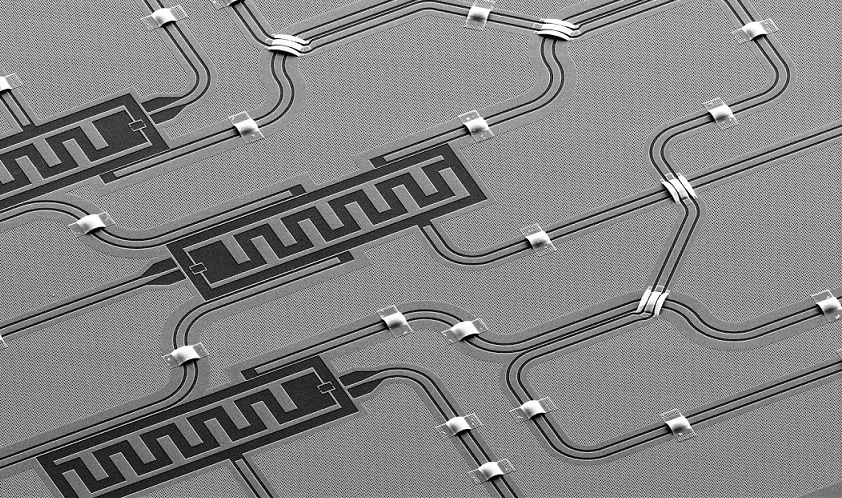}
\caption{Processor (DiCarlo lab, Delft University of Technology, 2016) with three transmon qubits each with an interdigitated capacitor (for a large shunting capacitance), in parallel with a tiny SQUID. One can see the capacitive coupling to co-planar resonators, transmission lines, and flux lines next to the SQUID. One can see airbridges ('bandages') to ensure a well-connected ground plane and cross-overs to let lines cross each other in the 2D plane. This image is of the processor in Ref.~\cite{asaad}. See also Fig.~\ref{fig:riste} for a different chip with its input lines.}
\label{fig:dicarlo-transmon-reson}
\end{figure}

\section{The CPB Hamiltonian and its spectrum}\label{sec:cpbspectrum}

We consider the circuit of a flux-tunable transmon shown in Fig.~\ref{fig:ft_transmon} in the general case in which the Josephson energies of the two junctions (SQUID) are not necessarily equal, but some asymmetry may be present. This is practically always the case due to inaccuracies of the fabrication process leading to uncertainty in $E_J$. Nonetheless, we will see that the essential form of the Hamiltonian will not change. \par 
Using the spanning tree highlighted in orange in Fig.~\ref{fig:ft_transmon}, the Lagrangian of the circuit reads
\begin{equation}\label{eq:trans_lagr}
\lagrangian = \frac{\tilde{C}}{2} \dot{\Phi}^2 + \frac{C_g}{2} \bigl( \dot{\Phi} - \vv_g)^2 + E_{J1} \cos \biggl( \frac{2 \pi}{ \Phi_0} \Phi \biggr) + E_{J2} \cos \biggl( \frac{2 \pi}{ \Phi_0}\bigl( \Phi - \Phi_{\mathrm{ext}}\bigr) \biggr),
\end{equation}
with $\tilde{C}$ the total shunting capacitance $\tilde{C} = C_s + C_{J1} + C_{J2}$. The kinetic term has two contributions: one due to the capacitance $\tilde{C}$ and the other due to the gate capacitance $C_g$.
We assume that the external flux bias $\Phi_{\mathrm{ext}}$ is a constant over time. In order to bring the Lagrangian in Eq.~\eqref{eq:trans_lagr} to a standard form with an effective Josephson energy, we perform the following change of variables: $\Phi \mapsto \Phi +\Phi_{\mathrm{ext}}/2$. After some trigonometric manipulations, the Lagrangian becomes
\begin{equation}
\lagrangian = \frac{\tilde{C}}{2} \dot{\Phi}^2 + \frac{C_g}{2} \bigl( \dot{\Phi} - \vv_g)^2 + E_{J} \cos \biggr[\frac{2 \pi}{\Phi_0} \bigl(\Phi - \Phi_{\mathrm{ext}}^{(d)} \bigr) \biggl],
\end{equation}
where the effective Josephson energy of the system is given by
\begin{equation}
E_J = E_{J}(\Phi_{\mathrm{ext}}) = (E_{J1} + E_{J2}) \cos \biggl(\frac{\pi \Phi_{\mathrm{ext}}}{\Phi_0} \biggr) \sqrt{1 + d^2 \tan^2 \biggl(\frac{\pi \Phi_{\mathrm{ext}}}{\Phi_0} \biggr)},  
\end{equation}
with $d=\frac{E_{J1} - E_{J2}}{E_{J1} + E_{J2}}$ capturing the asymmetry of the junctions, and where the flux $\Phi_{\mathrm{ext}}^{(d)}$ is defined as
\begin{equation}
 \Phi_{\mathrm{ext}}^{(d)} = \arctan \biggl[ d \tan \biggl( \frac{\pi \Phi_{\mathrm{ext}}}{\Phi_0} \biggr) \biggr].  
\end{equation}
Finally, with the additional shift $\Phi \mapsto \Phi + \Phi_{\mathrm{ext}}^{(d)}$ the Lagrangian becomes
\begin{equation}\label{eq:trans_lagr2}
\lagrangian = \frac{\tilde{C}}{2} \dot{\Phi}^2 + \frac{C_g}{2} \bigl( \dot{\Phi} - \vv_g)^2 + E_{J}\cos \biggr(\frac{2 \pi}{\Phi_0} \Phi \biggl).
\end{equation}

The Lagrangian in Eq.~\eqref{eq:trans_lagr2} is the general Lagrangian of a CPB, similar to the one in Eq.~\eqref{eq:Ltrans} with only the addition of the voltage term, as in Section~\ref{subsec:capvol}. Following identical steps as in Section~\ref{subsec:capvol}, we identify the conjugate variable 
\begin{equation}
Q = \frac{\partial \mathcal{L}}{\partial \dot{\Phi}} = C \dot{\Phi} - C_g \vv_g,
\end{equation} 
where we have defined the effective capacitance $C = \tilde{C} + C_g$. We obtain the Hamiltonian (neglecting constant terms):
\begin{equation}
\hamiltonian = \frac{(Q + C_g \vv_g)^2}{2 C} - E_{J}\cos \biggr(\frac{2 \pi}{\Phi_0} \Phi \biggl).
\end{equation}
Using dimensionless variables as defined in Eq.~\eqref{eq:dimless} we write the quantum Hamiltonian as
\begin{equation}\label{eq:h_cpb}
H = 4 E_{C} (\hat{q} + n_g)^2 - E_{J} \cos \hat{\phi},
\end{equation}
with the reduced gate charge $n_g= \frac{C_g \vv_g}{2 e}$ and charging energy $E_C$ as defined in Eq.~\eqref{eq:char_en}. Note that the operator $\hat{\phi}$ can take values in $\mathbb{R}$ and so it should not be identified directly with the phase variable $\hat{\varphi}$ as mentioned in Section~\ref{subsec:joscpb}. We will discuss this topic in the next section.

\subsection{From flux to phase}
\label{subsec:fp}

  For a superconducting material, the superconducting phase $\varphi({\bf r})$, which varies inside the material, is given in Eq.~\eqref{eq:op}. Then why do we take the dimensionless flux variable $\phi \in \mathbb{R}$ as the fundamental degree of freedom of an electrical circuit to be quantized? We do so because the electrical circuit is a course-grained description of the material: we capture how, between two points ${\bf r}_1$ and ${\bf r}_2$ in the material, the phase $\varphi(\bf{r})$ varies, {\em including} its windings. This variable thus takes values in $\mathbb{R}$ and equals the (dimensionless) branch flux $\phi_{\mathfrak{b}}$ between nodes ${\bf r}_1$ and ${\bf r}_2$. When the electrical network contains only periodic terms in the branch fluxes, one can revert back to phase variables. In this section, we take a mathematical excursion to explain this in detail: it can be viewed as working in a restricted Hilbert space.

First of all, let us introduce the translation operators in flux and charge
\begin{equation}
T_{\phi}(r) = e^{-i r \hat{q}}, \quad T_{q}(r) = e^{i r \hat{\phi}}.
\end{equation}
For any operators $A$ and $B$ it holds that
\begin{equation}\label{eq:bch}
e^{A} B e^{-A} = B + [A, B] + \frac{1}{2!} [A,[A, B]] + \frac{1}{3!} [A, A,[A, B]]] + \dots = \sum_{n=0}^{+\infty} C_n,
\end{equation}
where we define 
\begin{subequations}
\begin{align}
C_{0} &= B, \\
C_n &= [A,  C_{n-1}], \quad n \ge 1.
\end{align}
\end{subequations}
From Eq.~\eqref{eq:bch} and the commutation relation Eq.~\eqref{eq:cr}, one can readily show that
\begin{subequations}
\begin{align}
T_{\phi}(r)^{\dagger} \hat{\phi} T_{\phi}(r) &= \hat{\phi} + r, \\
T_{q}(r)^{\dagger} \hat{q} T_{q}(r) &= \hat{q} + r.
\end{align}
\end{subequations}
Using the previous equation we see that $T_{\phi}(2 \pi)^{\dagger} H T_{\phi}(2 \pi) = H$ for the CPB Hamiltonian $H$ in Eq.~\eqref{eq:h_cpb}. Thus, the translation operator $T_{\phi}(2 \pi)$ commutes with $H$, which implies that we can simultaneously diagonalize the operators $H$ and $T_{\phi}(2 \pi)$~\cite{sakurai}. This fact essentially leads to Bloch's theorem for our Hamiltonian~\cite{Ashcroft76}.  We start by rewriting the operator $\hat{q}$ as 
\begin{equation}
\hat{q} = \hat{n} + \hat{n}_{\alpha},
\end{equation} 
where the operators $\hat{n}$ and $\hat{n}_{\alpha}$ act on a charge eigenstate $\ket{q \in \mathbb{R}}$ as
\begin{equation}
\hat{n} \ket{q} = n \ket{q}= \mathrm{floor}(q) \ket{q}, \quad \hat{n}_{\alpha} \ket{q} = n_{\alpha} \ket{q} = (q\!\!\! \mod 1) \ket{q}.
\end{equation}
In this way, we can label the eigenstates of $\hat{q}$ as $\ket{q} = \ket{n, n_{\alpha}}$, satisfying $\hat{q}\ket{n, n_{\alpha}} = (n + n_{\alpha})\ket{n, n_{\alpha}}$, with $n_{\alpha} \in [0, 1)$, $n \in \mathbb{Z}$. We proceed similarly for the flux operator and write 
\begin{equation}
\hat{\phi} = \hat{\varphi} + 2 \pi \hat{k}.
\end{equation} 
We can label the eigenstates of $\hat{\phi}$ as $\ket{\phi \in \mathbb{R}} = \ket{\phi=\varphi+2\pi k}$ with $\varphi \in [0, 2 \pi)$, $k \in \mathbb{Z}$. As the Hamiltonian commutes with the translation operator $T_{\phi}(2 \pi)$, we can restrict ourselves to subspaces with fixed eigenvalues of $T_{\phi}(2 \pi)$. 

Let us consider the eigensubspace $\mathcal{H}_{n_{\alpha}}$ with eigenvalue $e^{-i 2\pi n_{\alpha}}$ of $T_{\phi}(2 \pi)$, i.e., 
\begin{equation}\label{eq:hnalpha}
\mathcal{H}_{n_{\alpha}} = \{\ket{\psi} | \quad T_{\phi}(2 \pi) \ket{\psi} = e^{-i 2 \pi n_{\alpha}} \ket{\psi} \},
\end{equation} 
which we will call a rotor subspace. Note that the condition specified by Eq.~\eqref{eq:hnalpha} can be interpreted as a boundary condition on the wavefunction. A basis for the rotor subspace is given by the (unnormalized) states
\begin{equation}
\ket{\varphi, n_{\alpha}} = \sum_{k \in \mathbb{Z}} e^{i 2 \pi n_{\alpha}k} \ket{\phi=\varphi+2\pi k},
\end{equation}
since
\begin{multline}
T_{\phi}(2 \pi)\ket{\varphi, n_{\alpha}} = e^{-i 2\pi \hat{q}} \sum_{k \in \mathbb{Z}} e^{i 2 \pi n_{\alpha} k} \ket{\varphi+2\pi k} 
= \sum_{k \in \mathbb{Z}} e^{i 2 \pi n_{\alpha} k} \ket{\varphi+ 2\pi(k+1)} = e^{-i 2 \pi n_{\alpha}} \ket{\varphi, n_{\alpha}}.
\end{multline}
Note also that the projector onto this subspace is
\begin{equation}\label{eq:rotor_proj}
\Pi_{n_{\alpha}} = \int_{0}^{ 2 \pi} d \varphi \ket{\varphi, n_{\alpha}}\bra{\varphi, n_{\alpha}},
\end{equation}
and that by integrating over all these projectors we obtain the standard identity on the total Hilbert space
\begin{equation}
\int_{0}^1 d n_{\alpha} \Pi_{n_{\alpha}} = \int_{-\infty}^{+ \infty} d \phi \ket{\phi}\bra{\phi} = \mathds{1}.
\end{equation}
It is also useful to define a \emph{charge} basis for the rotor subspaces. In fact, since the translation operator $T_{\phi}(2 \pi)$ has eigenvalues $e^{-i 2 \pi n_{\alpha}}$, the subspace $\mathcal{H}_{n_{\alpha}}$ must be spanned by eigenkets of the operator $\hat{n}$ with eigenvalue $n + n_{\alpha}$, i.e., the states $\ket{n, n_{\alpha}}$. As a consequence, the projector onto the rotor subspaces given in Eq.~\eqref{eq:rotor_proj} can also be written as
\begin{equation}\label{eq:proj_n}
\Pi_{n_{\alpha}} = \sum_{n \in \mathbb{Z}} \ket{n, n_{\alpha}}\bra{n, n_{\alpha}},
\end{equation}
a result that one can derive mathematically by writing
\begin{multline}\label{eq:nc_basis}
\ket{n, n_{\alpha}}= \ket{n + n_{\alpha}} = \frac{1}{\sqrt{2 \pi}} \int_{- \infty}^{\infty} d \phi \,e^{i (n + n_{\alpha}) \phi} \ket{\phi} \\ = \frac{1}{\sqrt{2 \pi}} \int_{0}^{2 \pi} d \varphi \sum_{k \in \mathbb{Z}} e^{i (n + n_{\alpha}) (\varphi + 2 \pi k)} \ket{\varphi+2\pi k} = \frac{1}{\sqrt{2 \pi}} \int_{0}^{2 \pi} d \varphi e^{i (n + n_{\alpha}) \varphi}\ket{\varphi, n_{\alpha}},
\end{multline}
and using the following representation of the Dirac comb
\begin{equation}
\sum_{k \in \mathbb{Z}} e^{i k x} = 2 \pi \sum_{k \in \mathbb{Z}} \delta (x -2 \pi k). 
\end{equation}
From this it also follows that the states $\ket{\varphi, n_{\alpha}}$ can be written as a linear combination of the $\ket{n, n_{\alpha}}$ states:
\begin{equation}
\ket{\varphi, n_{\alpha}} = \frac{1}{\sqrt{2 \pi}} \sum_{n \in \mathbb{Z}} e^{-i (n+n_{\alpha}) \varphi} \ket{n, n_{\alpha}}. 
\end{equation}

\subsection{CPB spectrum}
\label{subsec:cpbspectrum}

We now study the eigenvalue problem associated with the CPB Hamiltonian in Eq.~\eqref{eq:h_cpb}.
To do so, we can project onto each rotor subspace to get the projected Hamiltonians
\begin{equation}
H_{n_{\alpha}} = \Pi_{n_{\alpha}} H\, \Pi_{n_{\alpha}},
\end{equation}
parametrized by 
%the parameter 
$n_{\alpha}$. Once we project onto this subspace, the problem is equivalent to that of a pendulum or rotor, where the discrete charge $\hat{n}$ plays the role of the angular momentum, and the phase $\hat{\varphi}$ the role of the angle. In the discrete charge basis defined in Eq.~\eqref{eq:nc_basis}, using Eq.~\eqref{eq:proj_n}, and since
\begin{equation}
e^{i \hat{\varphi}} \ket{n, n_{\alpha}} =  \ket{n + 1, n_{\alpha}},
\label{eq:tun}
\end{equation}
we get 
\begin{multline}\label{eq:h_cpb_n}
H_{n_{\alpha}} = \sum_{n \in \mathbb{Z}} 4 E_{C} (n + n_{\alpha} + n_{g})^2 \ket{n, n_{\alpha}}\bra{n, n_{\alpha}} 
- \frac{E_J}{2} \sum_{n \in \mathbb{Z}} ( \ket{n+1, n_{\alpha}}\bra{n, n_{\alpha}} + \ket{n, n_{\alpha}}\bra{n+1, n_{\alpha}}).
\end{multline}
In the last term here we see that the operator $\cos(\hat{\varphi})$ of $\Pi_{n_{\alpha}}H \Pi_{n_{\alpha}}$ acts as a Cooper pair tunneling process, with $n$ being the difference in number of Cooper pairs on one of the superconducting islands. Eq.~\eqref{eq:h_cpb_n} also shows that the effect of the offset charge $n_g$ is equivalent to changing the rotor subspace. The spectrum of the Hamiltonian $H_{n_{\alpha}}$ restricted to the rotor subspace $\mathcal{H}_{n_{\alpha}}$ depends on $n_g$. 
The spectrum of the full Hamiltonian is the union of all the spectra obtained within the rotor subspaces and it {\em does not depend} on $n_g$. Eq.~\eqref{eq:h_cpb_n} also shows that given a fixed $n_{\alpha} \in [0, 1)$, the spectrum of $H_{n_{\alpha}}$ as a function of $n_g$ must be periodic with period $1$. Eq.~\eqref{eq:h_cpb_n} immediately suggests a numerical method to diagonalize $H_{n_{\alpha}}$: we can simply consider a finite number of charge states $\ket{n, n_{\alpha}}$ and obtain a finite matrix that we can numerically diagonalize \footnote{Of course, these states shall be taken to be symmetric around the state $\ket{n, n_{\alpha} }$ for which $(n+n_{\alpha} + n_g)^2$ is at a minimum.}. 

The energy levels of $H_{n_{\alpha}}$ for $n_{\alpha}=0$ as a function of $n_g$ are shown in Fig.~\ref{fig:energyng}. These plots can also be interpreted as the band structure of the CPB, given the equivalence between $n_g$ and $n_{\alpha}$. The quantum information is usually encoded in the first two levels (bands), which define the qubit subspace. One immediately notices that the energy levels are less sensitive to the parameter $n_g$ as the parameter $E_J/E_C$ is increased, and they are practically insensitive for $E_J/E_C = 50$. Thus, in this parameter regime of the CPB, even the low-lying levels of the Hamiltonians $H_{n_{\alpha}}$ show a very weak dependency on $n_g$. This is clearly a desirable property of the system, since the dephasing rate associated with charge noise, i.e., noise in the parameter $n_g$, is connected to derivatives of the energy levels as a function of this parameter~\cite{koch2007} (see also the concept of ``sweet spots" in Section~\ref{subsec:fluxss}). This is the important feature that makes the transmon resilient to charge noise. The qualitative reason behind this phenomenon is that for $E_J/E_C \gg 1$ the system behaves more and more as a \emph{heavy} harmonic oscillator whose low-lying eigenstates are well localized in the minimum of the cosine potential. This implies that at least these low-lying levels \emph{do not see} the effect of the boundary condition in Eq.~\eqref{eq:hnalpha}, since the wavefunction quickly decays to zero away from the potential minimum. As we will see in Section~\ref{sec:tr_approx}, this justifies the so-called transmon approximation in which we treat the CPB as an anharmonic Duffing oscillator for $E_J/E_C \gg 1$. As a side effect, the system will be weakly anharmonic as appears from Fig.~\ref{fig:energyng_d}, which leads to the problem of leakage out of the computational qubit subspace.\\     \par 
\begin{figure}[htb]
\centering
\begin{subfigure}[t]{0.45 \textwidth}
\includegraphics[scale=0.3]{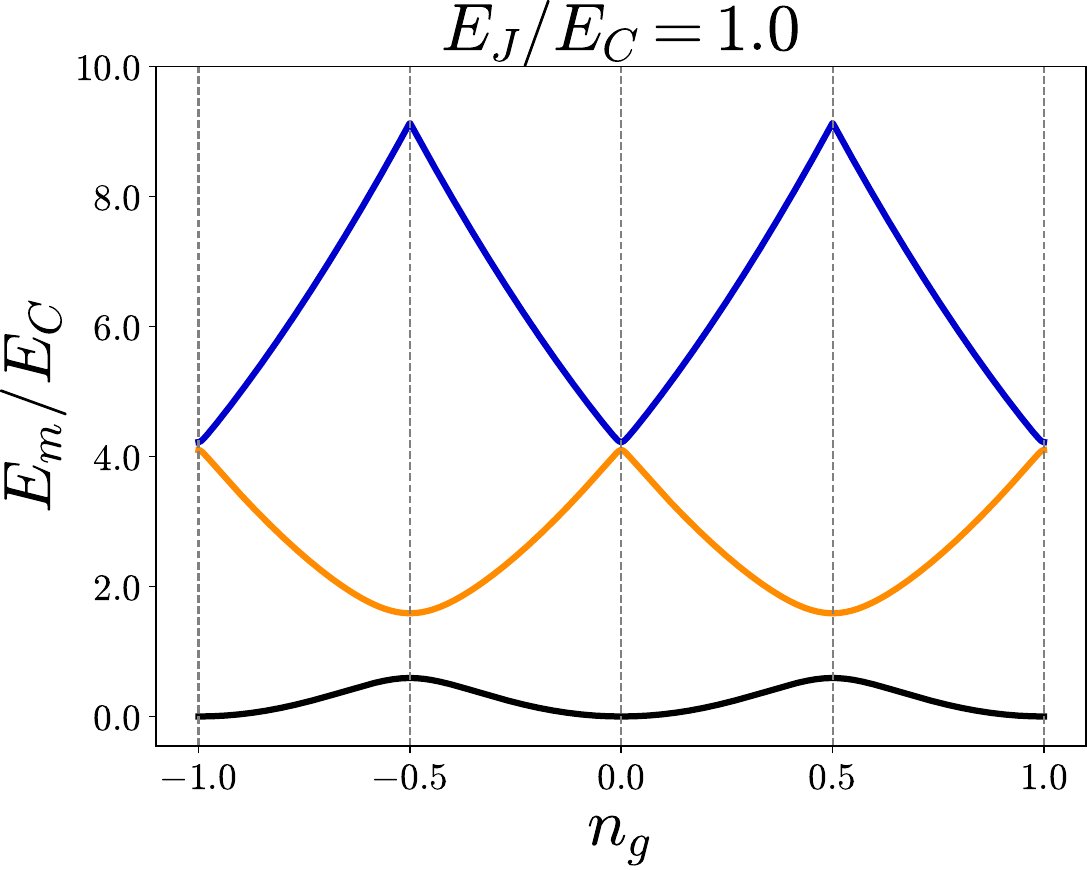}
\subcaption{}
\end{subfigure}
\begin{subfigure}[t]{0.45\textwidth}
\includegraphics[scale=0.3]{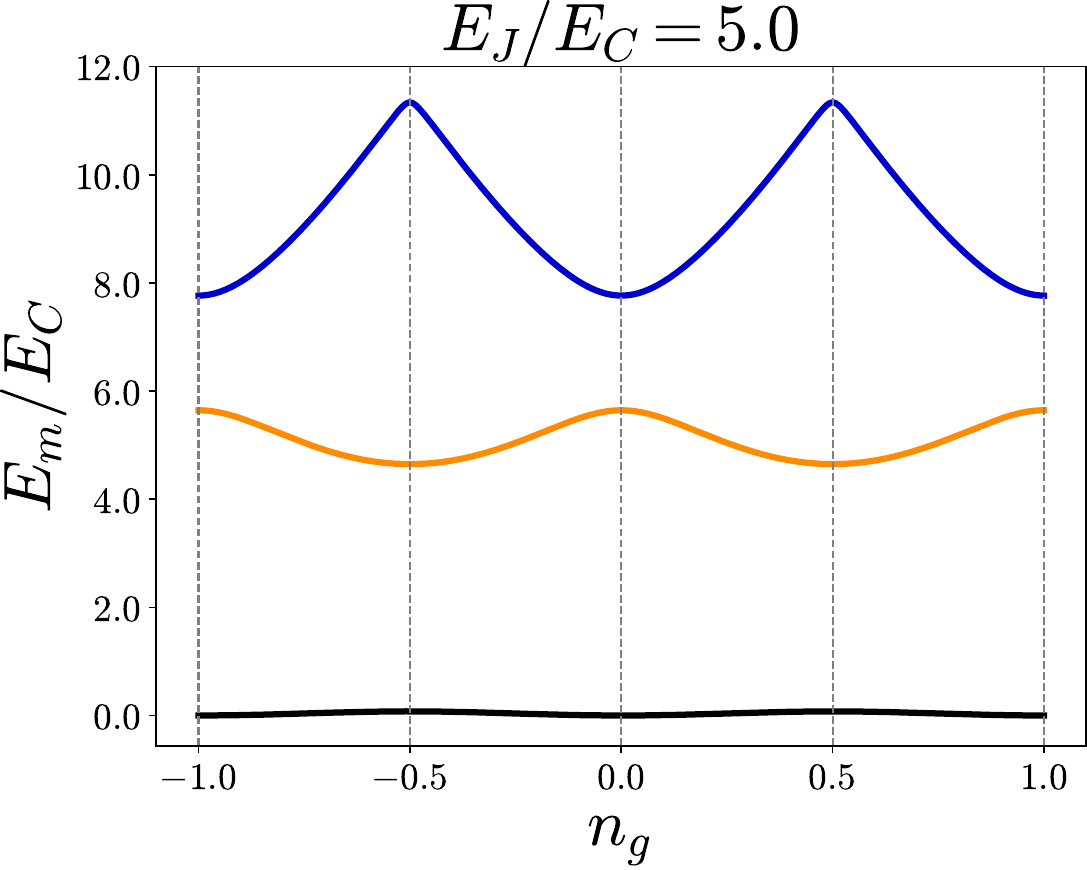}
\subcaption{}
\end{subfigure}
\begin{subfigure}[t]{0.45 \textwidth}
\includegraphics[scale=0.3]{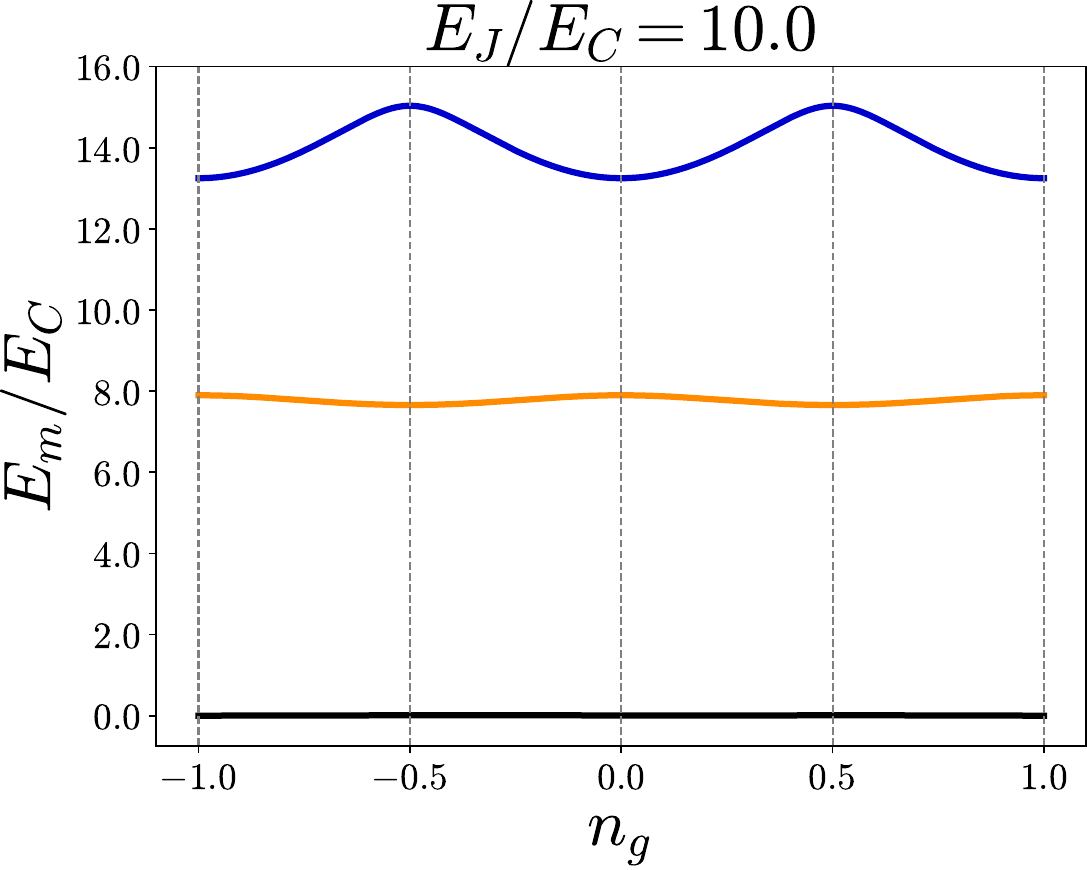}
\subcaption{}
\end{subfigure}
\begin{subfigure}[t]{0.45 \textwidth}
\includegraphics[scale=0.3]{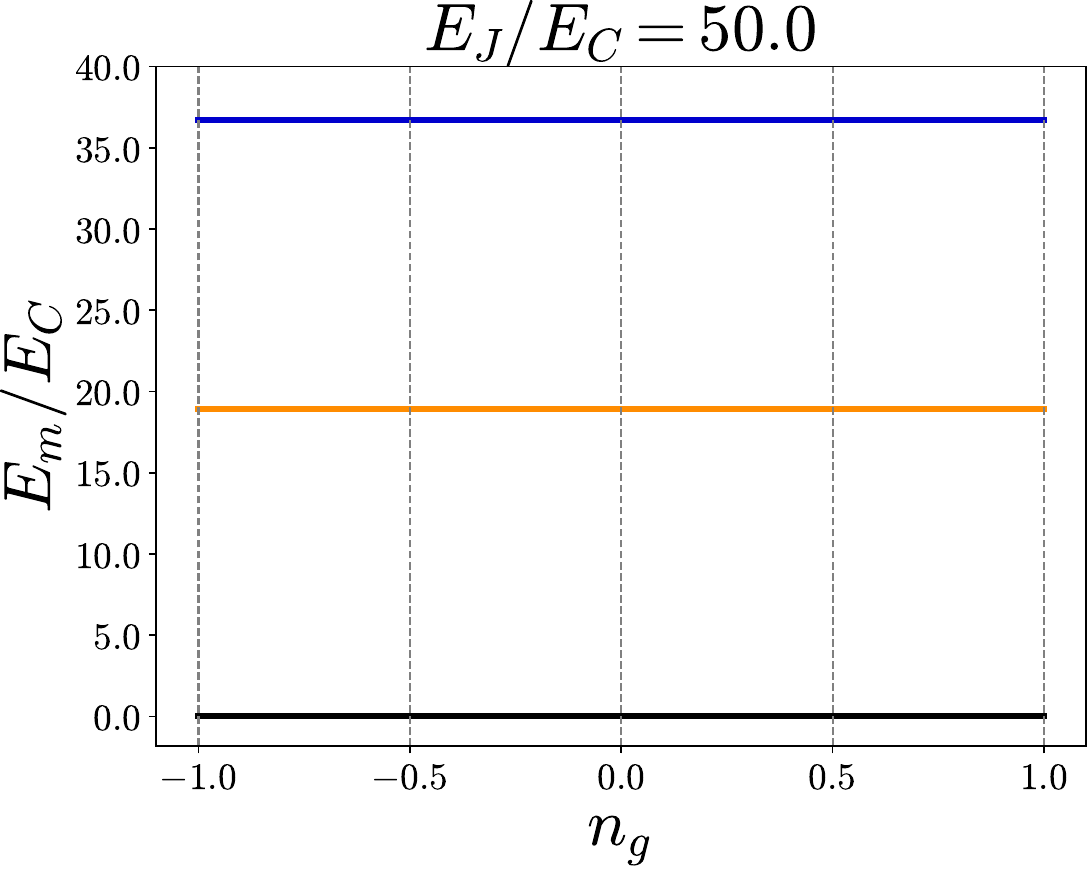}
\subcaption{}
\label{fig:energyng_d}
\end{subfigure}
\caption{First three energy levels of the CPB Hamiltonian restricted to the rotor subspace Eq.~\eqref{eq:h_cpb_n} with $n_{\alpha}=0$ as a function of the offset charge $n_g$ and for different values of the parameter $E_J/E_C$. In each plot the zero of the energy is taken to be the minimum of the energies of the lowest level. Note how for large $E_J/E_C$, e.g., $E_J/E_C = 50.0$, the spectrum loses its dependency on $n_g$, at the price of being only weakly anharmonic.}
\label{fig:energyng}
\end{figure} 

It is worth discussing the eigenvalue problem in the phase basis $\ket{\varphi, n_{\alpha}}$, which reads
\begin{equation}
H_{n_{\alpha}} \ket{\varphi, n_{\alpha}} = E \ket{\varphi, n_{\alpha}}  \implies \biggl[4 E_C \biggl(-i \frac{d}{d \varphi} + n_{\alpha} + n_g \biggr)^2 - E_{J}\cos \varphi \biggr] \psi_{n_{\alpha}}(\varphi) = E \psi_{n_{\alpha}}(\varphi), 
\end{equation}
with boundary condition
\begin{equation}
T_{q}(2 \pi) \ket{\varphi, n_{\alpha}} = e^{-i 2 \pi n_{\alpha}} \ket{\varphi, n_{\alpha}}  \implies \psi_{n_{\alpha}}(\varphi + 2 \pi) = e^{-i 2 \pi n_{\alpha}} \psi_{n_{\alpha}}(\varphi).
\end{equation}
This eigenvalue problem has an analytical solution in terms of Mathieu functions~\cite{koch2007}. In particular, for $n_g \in [0, 1/2)$ and $n_{\alpha} = 0 $, the eigenenergies are given by 
\begin{equation}
E_m(n_g) = E_C \mathcal{M}_A\biggl(k_f(m, n_g), - \frac{E_{J}}{2 E_C} \biggr), \quad m \in \mathbb{N},
\label{eq:mathieu}
\end{equation}
where $\mathcal{M}_A(r,q)$ is the Mathieu characteristic value for even Mathieu functions \footnote{We use the same definition as the Wolfram Mathematica software.} and where we defined
\begin{equation}
k_f(m, n_g) = m + 1 - (m+1 \mod 2) + 2 n_g (-1)^m.
\end{equation}
Note that from the previous equation we can reconstruct the eigenenergies for any $n_{\alpha}\in[0, 1)$. 

\subsubsection{Dynamics in the phase basis}

At this point we should ask ourselves how we should treat a CPB. Shall we assume that the system is allowed to explore the whole oscillator Hilbert space or should we work in a specific subspace $\mathcal{H}_{n_{\alpha}}$? The answer is that it actually depends on how the system is coupled to other circuit elements, and also on the history of the CPB. If the CPB is never coupled to circuit elements that do not commute with the operator $T_{\phi}(2 \pi)$, which can cause mixing between the rotor subspaces, then the state of the CPB will naturally be confined to one of these subspaces. This is what happens when we cool down a simple CPB, which is capacitively coupled to other parts of the circuit. Since the system is confined to the rotor subspace, one should accordingly consider a thermal state defined in this subspace. Also, assuming that one works in the regime $E_J/E_C \gg 1$ it is not relevant to know in which specific rotor subspace $\mathcal{H}_{n_{\alpha}}$ we are working, since the spectrum will have a small dependency on it. If $E_{J}/E_C \lesssim 1$ instead, we can understand in which subspace $\mathcal{H}_{n_{\alpha}}$ we are if we know $n_g$ and we are able to do spectroscopy. Practically one can always see the problem as the one defined in $\mathcal{H}_{n_{\alpha}=0}$ with a shifted $n_g$. This is the reason why in usual treatments of the CPB one usually assumes immediately that $n_{\alpha} = 0$. \par 

Nevertheless, there are situations in which there can be mixing between the rotor subspaces and we should consider states that are normalized in the oscillator Hilbert space. We provide two examples in Fig.~\ref{fig:tr_li}. In Fig.~\ref{fig:tr_li_a} a CPB is biased by a current source. We have already encountered this circuit in Section~\ref{subsec:cs}, where we saw that it gives rise to the washboard potential shown in Fig.~\ref{fig:washboard_potential}. The current source adds a term proportional to $I \hat{\phi}$ to the CPB Hamiltonian which does not commute with $T_{\phi}(2 \pi)$. In this case, the charge on the capacitor is not forced to assume discrete values and the whole oscillator Hilbert space can be explored. Another example is provided in Fig.~\ref{fig:tr_li_b}, where the CPB is shunted by an inductance, which adds a term $E_L\hat{\phi}^2/2$. Depending on the parameter regime the system would be called a flux qubit or a fluxonium. Again, the inductance allows an arbitrary charge on the capacitor, and the system would have solutions spread out in the oscillator space. Let us suppose that the system in the presence of the inductor is in a state $\ket{\psi}$, which is a superposition of states lying in different rotor spaces labeled by $n_{\alpha}$. At this point we suddenly switch off the inductance. The system is now again a CPB, but the initial state is not confined to one rotor subspace. Its time dynamics can be described in parallel in each rotor subspace and it does not mix states in different rotor subspaces. This case is also discussed in Ref.~\cite{le2020}.

\begin{figure}[htb]
\centering
\begin{subfigure}[t]{0.45 \textwidth}
\centering
\includegraphics[height=4 cm]{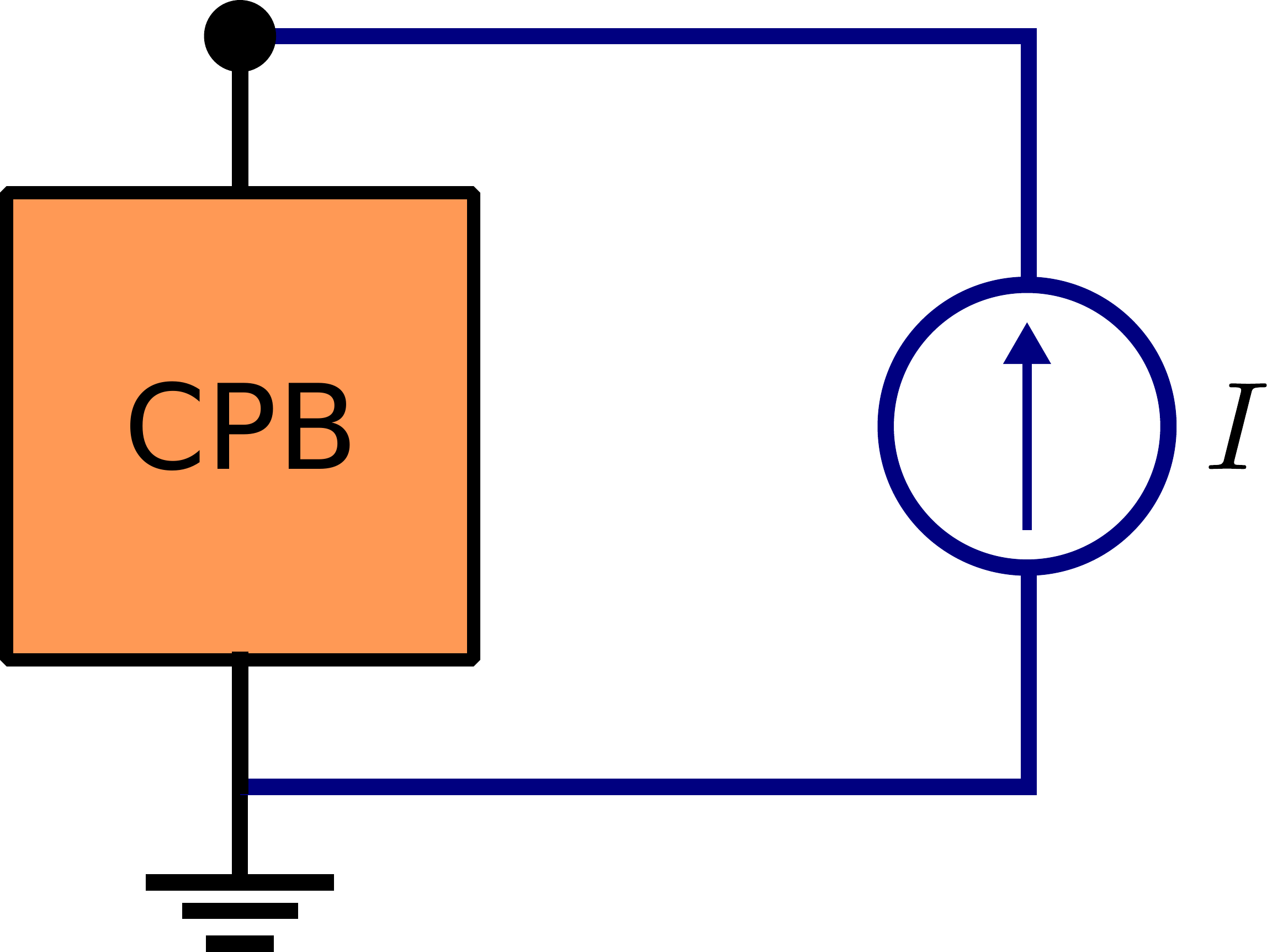}
\subcaption{}
\label{fig:tr_li_a}
\end{subfigure}
\begin{subfigure}[t]{0.45\textwidth}
\centering
\includegraphics[height=4 cm]{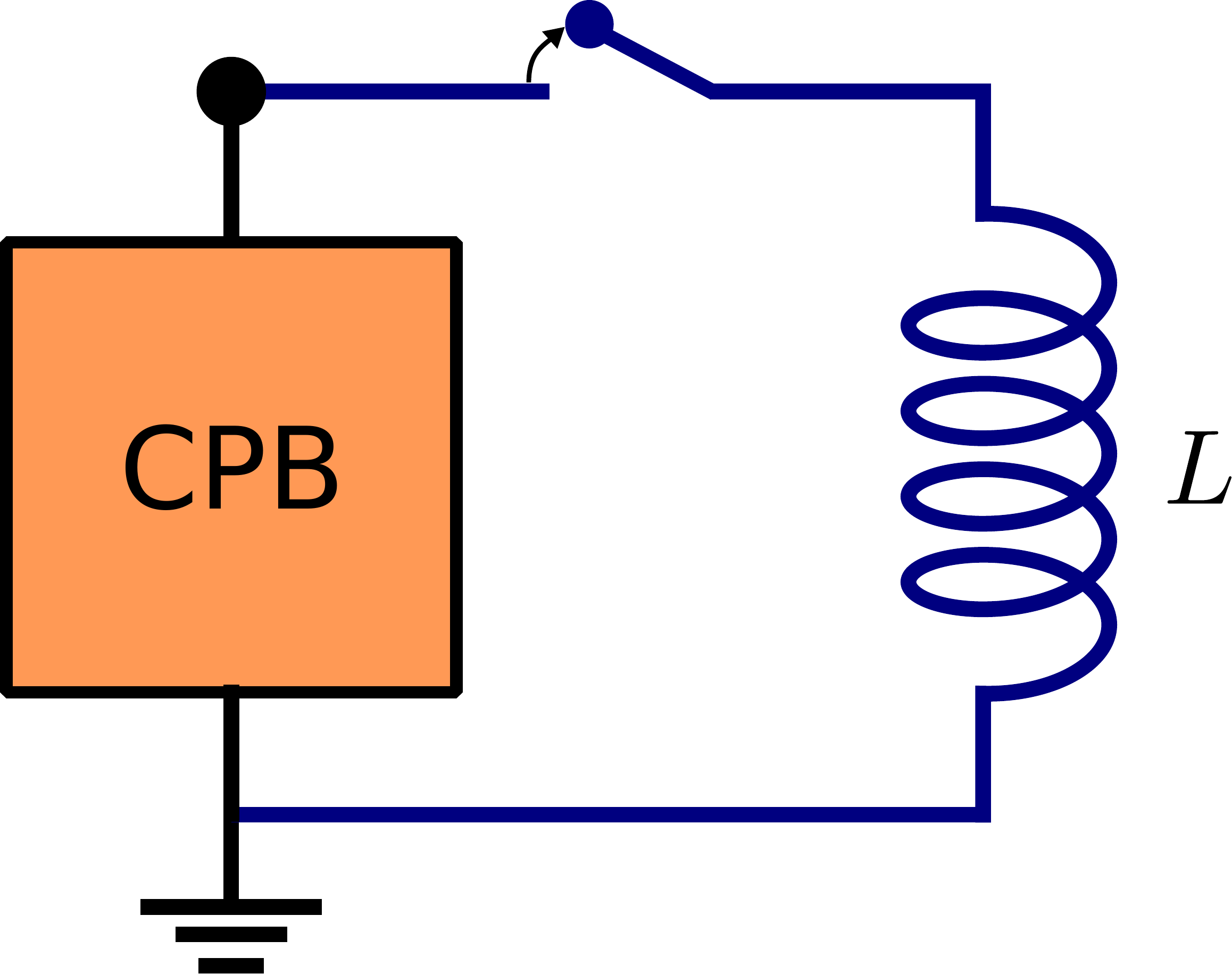}
\subcaption{}
\label{fig:tr_li_b}
\end{subfigure}
\caption{Examples of circuits that cause mixing between the rotor subspaces: (a) CPB with a current source. (b) CPB with a switchable inductance.}
\label{fig:tr_li}
\end{figure} 

\section{The anharmonic approximation}
\label{sec:tr_approx}

In the previous section, we have formulated in detail the eigenvalue problem of the CPB. We have also argued that in the regime $E_J/E_C \gg 1$ the system behaves as a harmonic oscillator (at low energies) and its energy levels show a weak dependency on the parameter $n_g$. This observation suggests an approximation that we discuss here. Let us consider the Hamiltonian in Eq.~\eqref{eq:h_cpb} and let us immediately neglect $n_g$. In the regime $E_J/E_C \gg 1$ the low-energy wavefunctions (in each rotor subspace) will be localized very close to $\phi = 0$. It is thus reasonable to Taylor expand the cosine around $\phi=0$, or equivalently around any $\phi = 2 \pi k$, $k \in \mathbb{Z}$. In order to take into account nonlinear effects we Taylor expand up to fourth order, to obtain the transmon Hamiltonian
\begin{equation}\label{eq:h_trans}
H_{\rm tr} = 4 E_C \hat{q}^2 + \frac{E_J}{2} \hat{\phi}^2 - \frac{E_J}{24}\hat{\phi}^4,
\end{equation} 
where we also omitted constant terms. We introduce annihilation and creation operators that diagonalize the quadratic part of the Hamiltonian as
\begin{subequations}
\label{eq:secqu}
\begin{equation}
\hat{\phi}= \biggl(\frac{2 E_C}{E_J}\biggr)^{\nicefrac{1}{4}}(\hat{b}+\hat{b}^{\dagger}),
\end{equation}
\begin{equation}
\hat{q}= \frac{i}{2}\biggl (\frac{E_J}{2 E_C}\biggr)^{\nicefrac{1}{4}}(\hat{b}^{\dagger}-\hat{b}),
\end{equation}
\end{subequations}
with $[\hat{b}, \hat{b}^{\dagger}] = \mathds{1}$. For completeness, we also write the original operators $\hat{\Phi}$ and $\hat{Q}$ in terms of annihilation operators as
\begin{subequations}
\begin{equation}
\hat{\Phi} = \sqrt{\frac{\hbar Z_t}{2}} (\hat{b} + \hat{b}^{\dagger}), 
\end{equation}
\begin{equation}
\label{eq:zpf_trans}
\hat{Q} =  i \sqrt{\frac{\hbar }{2 Z_t}} (\hat{b}^{\dagger} - \hat{b}),\end{equation}
\end{subequations}
with $Z_t = \sqrt{L_J/C}$ the characteristic impedance of the transmon. The introduction of annihilation and creation operators here is of course completely equivalent to the case of an LC oscillator discussed in Section~\ref{subsec:lc}. 

Substituting Eqs.~\eqref{eq:secqu} into Eq.~\eqref{eq:h_trans}, and omitting again constant terms, one gets
\begin{equation}\label{eq:h_tr4}
H_{\rm tr}= \sqrt{8 E_J E_C}\hat{b}^{\dagger} \hat{b} -\frac{E_C}{12}(\hat{b}^{\dagger}+\hat{b})^4.
\end{equation}
We now perform a further approximation. The quartic term in Eq.~\eqref{eq:h_tr4} can be seen as a perturbation on top of the harmonic oscillator Hamiltonian $\sqrt{8 E_J E_C}\hat{b}^{\dagger} \hat{b}$. Since we are assuming $E_J/E_C \gg 1$, then also
$\frac{E_C}{\sqrt{8 E_J E_C}} \ll 1$ and we can treat the perturbation using lowest-order perturbation theory, which amounts to keeping only diagonal terms from $(\hat{b} + \hat{b}^{\dagger})^4$, i.e., terms with an equal number of annihilation and creation operators, which do not change the eigenstates of the unperturbed oscillator Hamiltonian. This approximation is sometimes also called the Rotating Wave Approximation (RWA), since if we work in the interaction or rotating-frame picture it amounts to neglecting fast-rotating terms. 

Keeping only diagonal terms in the expansion of $(\hat{b}^{\dagger}+\hat{b})^4$, and iteratively applying the commutation relation $[\hat{b}, \hat{b}^{\dagger}] = \mathds{1}$, we obtain
\begin{equation}
(\hat{b}^{\dagger}+\hat{b})^4 \overset{\rm RWA}{\approx} 6 \hat{b}^{\dagger} \hat{b}^{\dagger} \hat{b} \hat{b}+12 \hat{b}^{\dagger} \hat{b}.
\end{equation}
Hence, we approximate the transmon Hamiltonian with the Hamiltonian of a so-called Duffing oscillator
\begin{equation}
\label{eq:hduff}
H_{\rm tr} \overset{\rm RWA}{\approx} H_{\mathrm{Duffing}}= \hbar \Omega \hat{b}^{\dagger} \hat{b} + \hbar \frac{\delta}{2} \hat{b}^{\dagger} \hat{b}^{\dagger} \hat{b} \hat{b},
\end{equation}
where, denoting by $E_n$ the eigenenergies associated with the Fock state $\ket{n}$, we have defined the transmon frequency
\begin{equation}
\Omega= \frac{E_1-E_0}{\hbar} = \frac{ \sqrt{8 E_J E_C}-E_C}{\hbar},
\label{eq:freqt}
\end{equation} 
and the anharmonicity
\begin{equation}
\delta= \frac{(E_2 - E_1)- (E_1 - E_0)}{\hbar}=-\frac{E_C}{\hbar}.
\label{eq:anharmont}
\end{equation}  
It is worth pointing out that the Hamiltonian in Eq.~\eqref{eq:hduff} is in diagonal form, which allows us to immediately obtain all the energy levels. Thus, for $E_J/E_C \gg 1$ the anharmonicity of the CPB is negative, which one can also notice in Fig.~\ref{fig:energyng_d}. The relative anharmonicity $\delta_r = \delta/\Omega$ quantifies how harmonic the system is, and as expected $\delta_r \rightarrow 0$ for $E_J/E_C \rightarrow \infty$, corresponding to completely harmonic behaviour. 

If the system is too harmonic, the transmon suffers from the problem of leakage. If we start from the ground state and apply a drive at angular frequency approximately $\Omega$, the system would not be confined to the qubit subspace, but higher levels will be populated as well. In particular, a driven harmonic oscillator evolves to a coherent state $\ket{\alpha}$ \cite{book:Gerry.Knight:QuantumOptics}:
\begin{equation}
    \ket{\alpha}=e^{-|\alpha|^2/2}\sum_{n=0}^{\infty}\frac{\alpha^n}{\sqrt{n!}}\ket{n},
    \label{eq:coh-state}
\end{equation}
where the coherent amplitude $\alpha \in \mathbb{C}$ depends on drive amplitude, frequency and phase. This leakage problem, which can also occur due to two-qubit gates, qubit measurement or cross-qubit driving, is a serious issue, since standard qubit-based quantum error correcting codes, such as the surface code~\cite{Terhal.2013:QECReview}, are not designed to correct against this kind of errors. 

Transmon qubits are roughly operated in the regime $40 \le E_J/E_C \le 100$, for which we have a compromise between charge noise sensitivity and anharmonicity.  In addition, the transmons are designed to have frequencies in the microwave regime, i.e., $\Omega/2 \pi = 4$-$8 \, \mathrm{GHz}$, while the charging energy is usually chosen to be $E_C/h = 200$-$300 \, \mathrm{MHz}$, corresponding to an effective capacitance $C$ of $C=60-80$ fF.

\begin{Exercise}[title={CPB in the charge basis},label=exc:coop]
Consider the Hamiltonian of the CPB in the discrete charge basis as in Eq.~\eqref{eq:h_cpb_n}. For convenience, we will set $n_{\alpha}=0$.
\Question Considering only a finite number of discrete charge states, and taking them to be symmetric around the state with zero charge, obtain the eigen-energies numerically reproducing the plots in Fig.~\ref{fig:energyng_d}.
\emph{Hint: include at least 21 charge states. You can take $E_C=1$ as your unit of energy.}
\Question Suppose that we work at the \emph{sweet spot} $n_g=1/2$. Plot the relative anharmonicity $\delta_r=(E_{12}-E_{01})/E_{01}$, with $E_{ij}=E_j-E_i$ as a function of $E_J/E_C \in [5, 80]$. What happens when you increase the ratio $E_J/E_C$?
\end{Exercise}

\begin{Answer}[ref={exc:coop}]
\Question See Fig.~\ref{fig:energyng_d}.
\Question The plot is shown in Fig.~\ref{fig:relAnhFig}. In the large $E_J/E_C$ limit, the anharmonicity approaches zero through negative values. This means that the system behaves more and more as a harmonic oscillator.
\begin{center}
\includegraphics[width=8cm]{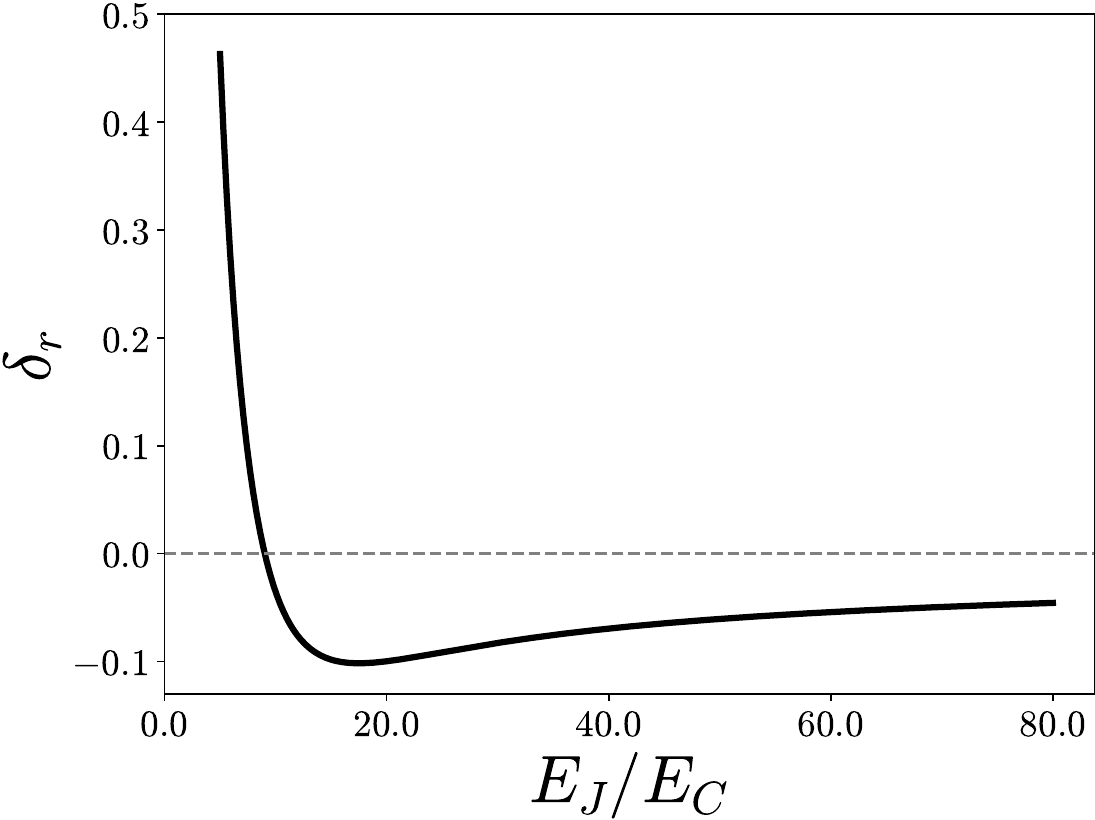}
\captionof{figure}{Relative anharmonicity of the CPB as a function of $E_J/E_C$ at $n_g=1/2$ and $n_{\alpha}=0$ for the Hamiltonian in Eq.~\eqref{eq:h_cpb_n}.}
\label{fig:relAnhFig}
\end{center}
\end{Answer}

\section{Driving a transmon qubit}
\label{sec:tr_drive}

Let us consider a CPB as in Fig.~\ref{fig:ft_transmon}, where now the voltage source $\vv_g \rightarrow \vv_d(t)$ is time-dependent. The system is coupled to it via a capacitance $C_g \rightarrow C_d$. Assuming that the CPB is operated in the transmon regime and performing the approximations we described in Section~\ref{sec:tr_approx}, we obtain the Hamiltonian
\begin{equation}
\frac{H(t)}{\hbar} = \Omega \hat{b}^{\dagger} \hat{b} + \frac{\delta}{2} \hat{b}^{\dagger} \hat{b}^{\dagger} \hat{b} \hat{b}+ i \varepsilon_d(t)(\hat{b}^{\dagger} - \hat{b}),
\end{equation}
where we have defined
\begin{equation}
\varepsilon_d(t) = 4 \frac{E_C}{\hbar} \biggl(\frac{E_J}{2 E_C} \biggr)^{\nicefrac{1}{4}} \frac{C_d \vv_d(t)}{2 e}.
\end{equation}
In order to drive transitions between the ground and first excited state, we would ideally like a drive with a single frequency~$\Omega_d$. Thus, let us consider a voltage drive of the form 

\begin{equation}
\label{eq:vdriveform}
\vv_d(t) = \vv_d^{\mathrm{max}} \cos(\Omega_d t).
\end{equation}
In this case, we obtain
\begin{equation}
\frac{H(t)}{\hbar}  =  \Omega \hat{b}^{\dagger} \hat{b} +  \frac{\delta}{2} \hat{b}^{\dagger} \hat{b}^{\dagger} \hat{b} \hat{b} + \frac{i}{2} \mathcal{E} \bigl(e^{i \Omega_d t} + e^{-i \Omega_d t}\bigr)(\hat{b}^{\dagger} - \hat{b}),
\end{equation}
with $\mathcal{E} = \varepsilon_d (0)$.

\begin{figure}
\centering
\begin{subfigure}[h]{0.4 \textwidth}
\centering
\includegraphics[width=5cm]{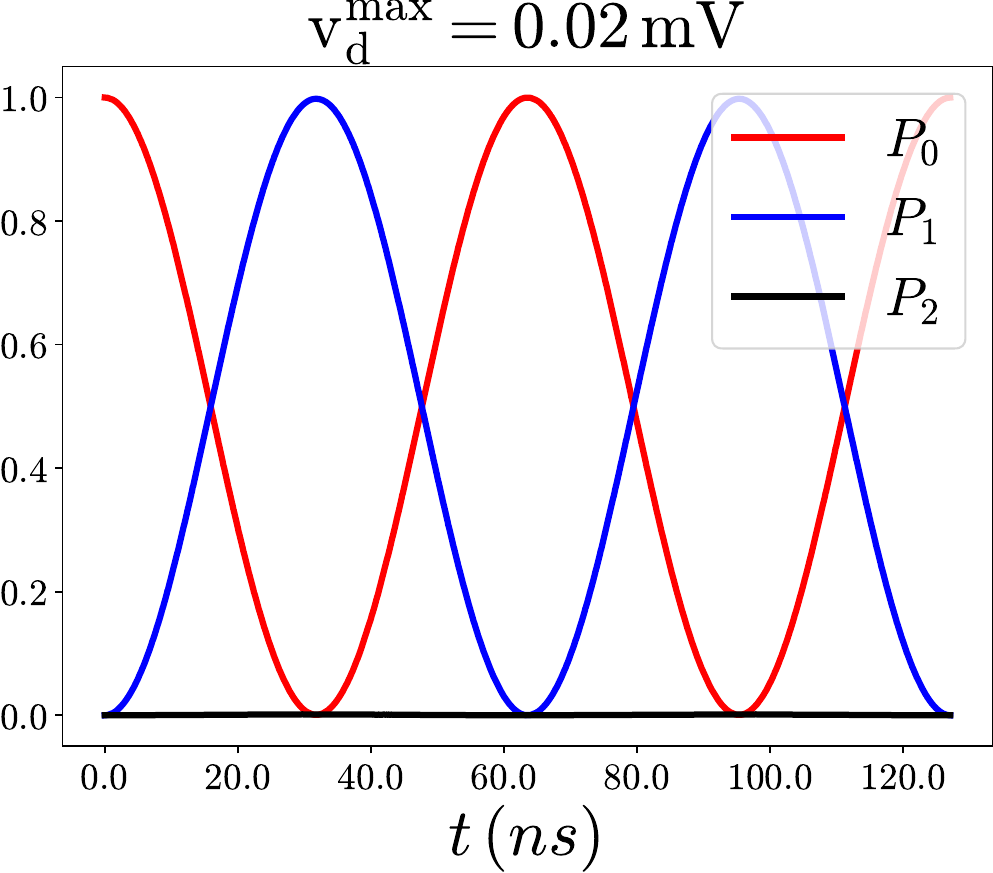}
\subcaption{}
\label{fig:tr_drive_a}
\end{subfigure}
\begin{subfigure}[h]{0.4\textwidth}
\centering
\includegraphics[width=5cm]{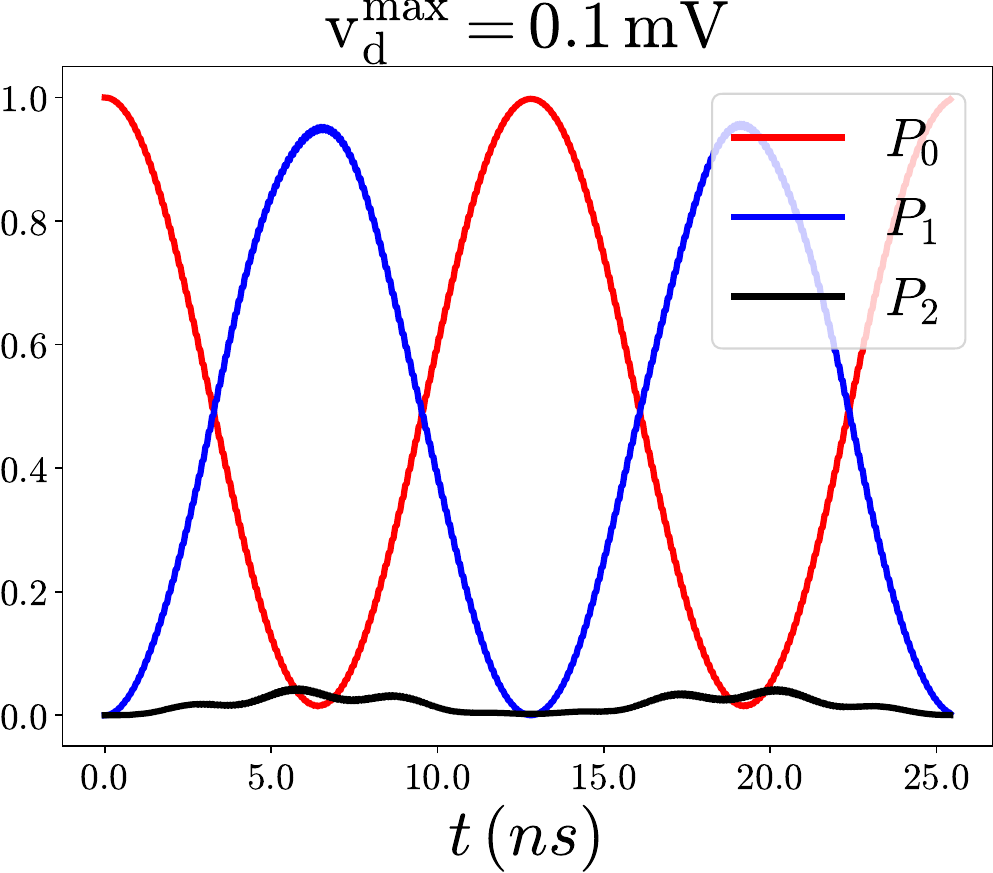}
\subcaption{}
\label{fig:tr_drive_b}
\end{subfigure}
\begin{subfigure}[h]{0.4\textwidth}
\centering
\includegraphics[width=5cm]{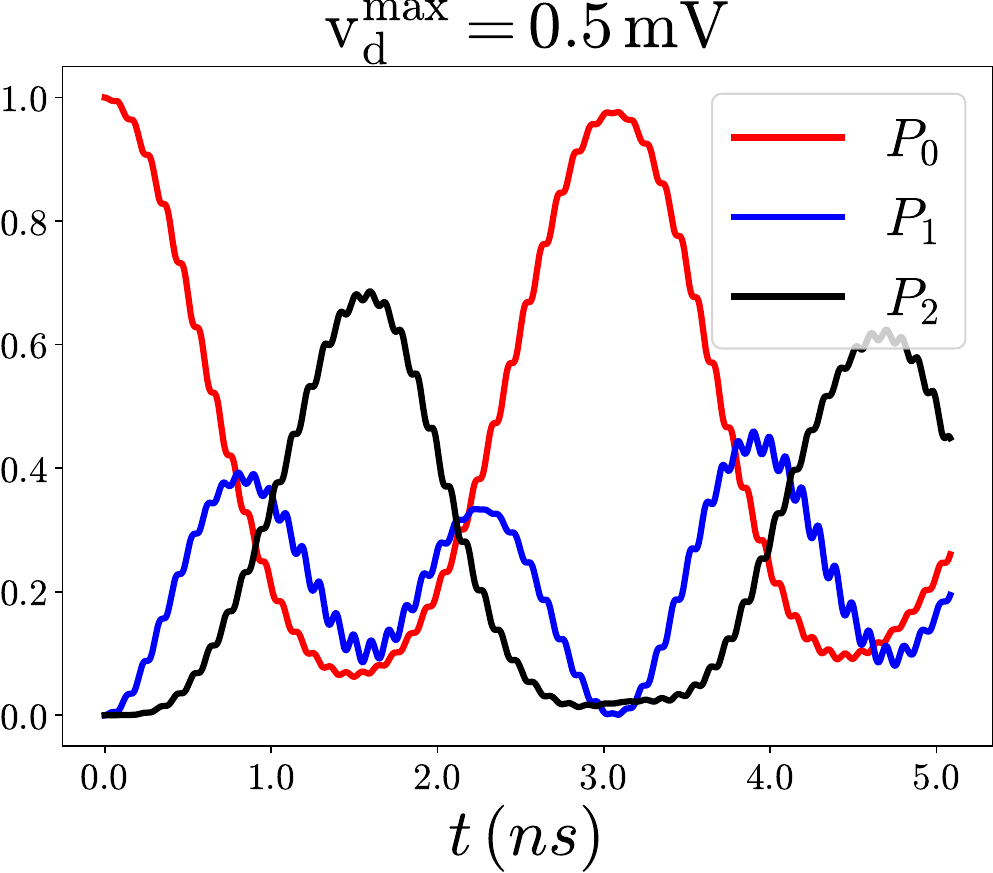}
\subcaption{}
\label{fig:tr_drive_c}
\end{subfigure}
\caption{Driven transmon qubit. We approximate the transmon as a Duffing oscillator with frequency $\Omega/2 \pi = 5 \, \mathrm{GHz}$ and anharmonicity $\delta/2 \pi = - 300 \, \mathrm{MHz}$. The transmon is coupled to the drive line via a capacitance $C_d= 0.1 \, \mathrm{fF}$. We initialize the transmon in the ground state $\ket{0}$ and drive it with a resonant drive of the form given in Eq.~\eqref{eq:vdriveform}. The plots show the probabilities $P_0, P_1, P_2$ of finding the system in $\ket{0}, \ket{1}, \ket{2}$, respectively, as a function of time. We see that by increasing the maximum voltage $\vv_d^{\mathrm{max}}$ the probability of leaking out of the computational subspace, i.e., populating the $\ket{2}$ state, increases as expected. Practically, this sets a lower bound on the gate time of single-qubit gates, which are limited to be at least $10-20 \, \mathrm{ns}$. We also observe that the curves show small wiggles for large $\vv_d$, which is due to the fast-oscillating terms in Eq.~\eqref{eq:hit}.}
\label{fig:tr_drive}
\end{figure}

We now make an excursion to the interaction picture taking the Hamiltonian $H_0 = \hbar \Omega \hat{b}^{\dagger} \hat{b}$ as reference Hamiltonian. Said differently, we go to the rotating frame of the oscillator at frequency~$\Omega$. The Hamiltonian in the interaction picture is (see Chapter~$3$ in Ref.~\cite{petruccione} for instance or Exercise \ref{exc:rot-frame}):
\begin{align}
\frac{\tilde{H}(t)}{\hbar} &= \frac{1}{\hbar} \biggr[e^{i H_0 t/\hbar}H(t) e^{-i H_0 t/\hbar}-H_0 \biggr] \notag \\
&= \frac{\delta}{2} \hat{b}^{\dagger} \hat{b}^{\dagger} \hat{b} \hat{b} + \frac{i}{2} \mathcal{E}\bigl(e^{i (\Omega - \Omega_d)t } \hat{b}^{\dagger} - e^{-i (\Omega - \Omega_d)t } \hat{b} \bigr)  
+ \frac{i}{2} \mathcal{E}\bigl(e^{i (\Omega + \Omega_d)t } \hat{b}^{\dagger} - e^{-i (\Omega + \Omega_d)t } \hat{b} \bigr).
\end{align}
If $\Omega_d = \Omega$, we obtain
\begin{equation}\label{eq:hit}
\frac{H(t)}{\hbar} = \frac{\delta}{2} \hat{b}^{\dagger} \hat{b}^{\dagger} \hat{b} \hat{b} + \frac{i}{2} \mathcal{E}\bigl( \hat{b}^{\dagger} - \hat{b} \bigr) 
+ \frac{i}{2} \mathcal{E}\bigl(e^{i 2 \Omega t } \hat{b}^{\dagger} - e^{-i 2\Omega t } \hat{b} \bigr).
\end{equation}
We see that in this picture the last drive term is fast-oscillating with a frequency that is twice the qubit frequency. This means that its effect will be to give small but fast oscillations of the transition amplitudes, which can be neglected to first approximation (for strong drives, one can go to next-order terms in a Magnus expansion, see e.g.~Ref.~\cite{Zeuch_2020}, which leads to a well-known Bloch-Siegert shift in the qubit frequency).

If we further approximate the evolution of the system to be restricted to the computational subspace defined by the first two levels, we obtain
\begin{equation}
\frac{\tilde{H}_{q}}{\hbar} = \frac{\mathcal{E}}{2} Y.
\end{equation}
Thus, the drive Hamiltonian will cause rotations around the $Y$-axis of the Bloch sphere in the interaction picture. Choosing a different phase for the drive, i.e., taking $\varepsilon_d(t) = \mathcal{E} \cos(\Omega_d t + \theta)$, makes it possible to choose an arbitrary rotation axis in the $X$-$Y$ plane of the Bloch sphere. However, if the relative ratio between the anharmonicity and the drive power $\lvert \delta_r/\mathcal{E} \rvert$ is not large enough, the previous qubit approximation is not warranted, and the system can leak to higher computational states $\ket{2},\ket{3}$ etc. This means that the anharmonicity limits the drive amplitude, and accordingly, the gate speed in transmon qubits. We see this explicitly in Fig.~\ref{fig:tr_drive}. 
In practice, leakage during single-qubit gates is reduced by carefully designing the pulse shape, for example using the DRAG technique~\cite{drag}, which makes it possible to achieve single-qubit gate times of the order of a few $\mathrm{ns}$, with error rates $10^{-3}$ or less, without the dramatic leakage that we see in Fig.~\ref{fig:tr_drive_c}. 

\begin{Exercise}[title={Rotating or interaction frame},label=exc:rot-frame]
A common method of analysis and operation in quantum computation is to consider the dynamics of qubits or oscillators in a rotating reference frame in which (part of) their self-evolution is cancelled. Given is a quantum system with Hamiltonian $H$ and dynamics according to the Schr\"odinger-von Neumann equation
\[
i\hbar \frac{d\ket{\psi(t)}\bra{\psi(t)}}{dt}=[H, \ket{\psi(t)}\bra{\psi(t)}].
\]
Given a time-independent reference Hamiltonian $H_{\rm ref}$, generating the unitary $U_{\rm ref}=e^{-i t H_{\rm ref}/\hbar}$, show that the state vector $\ket{\tilde{\psi}(t)}=U_{\rm ref}^{\dagger}\ket{\psi(t)}$ evolves according to the Schr\"odinger equation with `rotating frame or interaction' Hamiltonian $\tilde{H}$ given by
\[
\tilde{H}=U_{\rm ref}^{\dagger} H U_{\rm ref}+i \hbar\frac{dU_{\rm ref}^{\dagger}}{dt} U_{\rm ref}=U_{\rm ref}^{\dagger} H U_{\rm ref}-H_{\rm ref}.
\]
If $H=H_0+V(t)$, and $H_{\rm ref}=H_0$, what is $\tilde{H}$? Here $H_0$ may be the Hamiltonian of a set of uncoupled qubits and oscillators and $V(t)$ a coupling which is used to realize qubit dynamics for computation.
\end{Exercise}

\begin{Answer}[ref={exc:rot-frame}]
We have 
\begin{multline}
    i\hbar \frac{d\ket{\tilde{\psi}(t)\bra{\tilde{\psi}(t)}}}{dt}=i\hbar \frac{dU_{\rm ref}^{\dagger}}{dt}\ket{\psi(t)}\bra{\psi(t)}U_{\rm ref}+ i\hbar U_{\rm ref}^{\dagger}\ket{\psi(t)}\bra{\psi(t)}\frac{dU_{\rm ref}}{dt}+ \\
U_{\rm ref}^{\dagger} [H, \ket{\psi(t)}\bra{\psi(t)}] U_{\rm ref}
=\left[i\hbar \frac{dU_{\rm ref}^{\dagger}}{dt}U_{\rm ref}+U_{\rm ref}^{\dagger} H U_{\rm ref},\ket{\tilde{\psi}(t)}\bra{\tilde{\psi}(t)}\right],
\end{multline}
by inserting $U_{\rm ref} U_{\rm ref}^{\dagger}=\mathds{1}$, using the Schr\"odinger-von Neumann equation and $i\hbar U_{\rm ref}^{\dagger} \frac{dU_{\rm ref}}{dt}=H_{\rm ref}$.
The rotating frame Hamiltonian is $\tilde{H}=e^{i H_0t/\hbar} V(t)e^{-it H_0 t/\hbar}$.
\end{Answer}

\section{Transmission lines and co-planar resonators}
\label{sec:tl-cp}

In this section, we discuss the mathematical treatment of transmission lines and co-planar resonators. The first are used for (microwave) input and output on the chip, whereas the latter are used to couple qubits, store quantum information and perform qubit measurements. Some pictures of these structures and transmon qubits are shown in Figs.~\ref{fig:dicarlo-transmon-reson}, \ref{fig:riste}, \ref{fig:reson} and \ref{fig:chip-trans}.

Besides co-planar resonators, very high-$Q$ 3D microwave cavities are also widely used in circuit QED \cite{paik2011}: finite-element electromagnetic simulations (using software such as Ansys or COMSOL) rather than circuit theory are used to model the eigenmodes and corresponding structure of the electric field inside these cavities \cite{reagorPhd}; we will come back to how to use this modeling as input to constructing a Hamiltonian in Chapter~\ref{chap:ln}. 

\begin{figure}[htb]
\centering
\includegraphics[height=3.5cm]{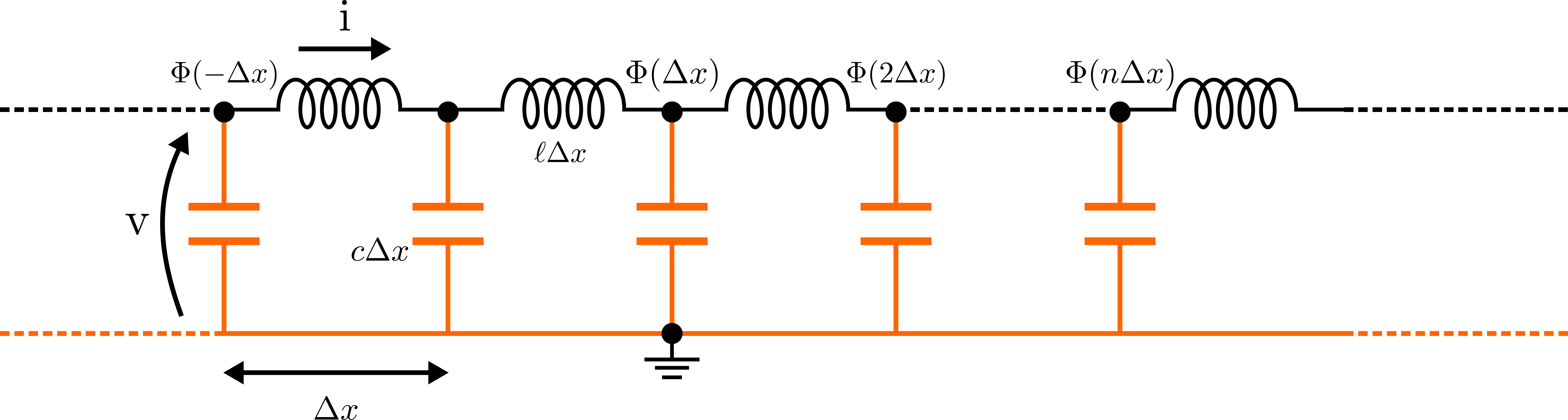}
\caption{Discrete model of an (infinite) transmission line.}
\label{fig:inftl}
\end{figure}

Consider the discrete model of an infinite, lossless, transmission line with capacitance per unit length denoted as $c$ (not the speed of light!), and inductance per unit of length denoted as $\ell$, shown in Fig.~\ref{fig:inftl}. Let us apply the method of circuit quantization to this circuit.

If we take the spanning tree highlighted in orange in Fig.~\ref{fig:inftl}, we can write down the Lagrangian for this circuit following the standard prescription. 
The Lagrangian of the discrete circuit in terms of the node fluxes is
\begin{equation}
\lagrangian=\sum_{k=-\infty}^{+\infty} \frac{c \Delta x}{2} \dot{\Phi}(k \Delta x, t)^2-\sum_{k=-\infty}^{+\infty} \frac{1}{2 \ell \Delta x} [\Phi((k+1) \Delta x, t)-\Phi(k \Delta x, t)]^2.
\end{equation}
In the limit of $\Delta x \rightarrow 0$ we can represent the collection of fluxes $\{\Phi(k\Delta x,t)\}$, $k\in \mathbb{Z}$ effectively as a field $\Phi(x,t)$ and $\sum_{k\in \mathbb{Z}} \Delta x f(k \Delta x) \rightarrow \int_{-\infty}^{\infty} dx f(x)$. In the limit of $\Delta x \rightarrow 0$, one recovers the Lagrangian of a scalar, one-dimensional Klein-Gordon field $\Phi(x,t)$, i.e.,
\begin{equation}
\lagrangian= \int_{-\infty}^{+\infty} dx \,\mathfrak{L}(\partial_x \Phi(x,t), \partial_t \Phi(x,t)) ,
\label{eq:KG}
\end{equation}
with Lagrangian density
\begin{equation}
\mathfrak{L}(\partial_x \Phi(x,t), \partial_t \Phi(x,t)) = \frac{c}{2} \biggl(\frac{\partial \Phi}{\partial t}\biggr)^2-\frac{1}{2 \ell} \biggl(\frac{\partial \Phi}{\partial x}\biggr)^2.
\end{equation}
Physically, the current in each inductive branch is proportional to the branch flux and hence becomes in this limit 
\begin{equation}
    \ii(x,t)=-\frac{1}{\ell} \frac{\partial \Phi}{\partial x}.
    \label{eq:iphi}
\end{equation}
By definition, see Eq.~\eqref{eq:def-flux}, the voltage along the line is given by the time-derivative of the flux, i.e.,~
\begin{equation}
\label{eq:defVI}
\vv(x,t)= \frac{\partial \Phi}{\partial t}.
\end{equation}

The Euler-Lagrange equations for a general scalar field~$\Phi(x,t)$, see the derivation of Eq.~\eqref{eq:ELfield} in Exercise~\ref{exc:field}, are given by
\begin{equation}
\frac{\partial }{\partial t} \biggl(\frac{\partial \mathfrak{L} }{\partial(\partial_t \Phi)}\biggr)+\frac{\partial }{\partial x} \biggl(\frac{\partial \mathfrak{L}}{\partial(\partial_x \Phi)}\biggr)-\frac{\partial \mathfrak{L}}{\partial \Phi}=0,
\end{equation}
and this leads to
\begin{equation}
\label{eq:waveEq}
\frac{\partial^2 \Phi}{\partial t^2}-v_p^2 \frac{\partial^2 \Phi}{\partial x^2}=0.
\end{equation}
This is the one-dimensional wave equation with phase velocity 
\begin{align}
v_p= \frac{1}{\sqrt{\ell c}}.
\end{align}
For a co-planar transmission line resonator on a silicon substrate $v_p \approx \mathfrak{c}/2.5$, where $\mathfrak{c}$ is the speed of light. 

From Eqs.~\eqref{eq:iphi},~\eqref{eq:defVI} and the wave equation itself one can readily see that
\begin{equation}
    \frac{\partial \vv}{\partial x}=-\ell \frac{\partial \ii}{\partial t}, \;\; \frac{\partial \ii}{\partial x}=-c\frac{\partial \vv}{\partial t}.
\end{equation}
These are known as the Telegrapher's equations (see Ref.~\cite{pozar}). By rewriting the wave equation as
\begin{equation}
(\partial_t-v_p \partial_x)(\partial_t+v_p \partial_x)\Phi(x,t)=0,
\end{equation}
we realize that solutions of the transport equations 
\begin{equation}
\frac{\partial \Phi}{\partial t} \pm v_p \frac{\partial \Phi}{\partial x}=0,
\label{eq:transp}
\end{equation}
are also solutions of the wave equation. These are the right-propagating solutions $\Phi^{\rightarrow}(x,t)=\Phi^{\rightarrow}(t-x/v_p)$, and the left-propagating solutions $\Phi^{\leftarrow}(x,t)=\Phi^{\leftarrow}(t+x/v_p)$, that solve Eq.~\eqref{eq:transp} when the middle sign is plus or minus, respectively. The general solution $\Phi(x,t)$, by linearity, is of course a linear combination of $\Phi^{\rightarrow}(x,t)$ and $\Phi^{\leftarrow}(x,t)$. In analogy with Eqs.~\eqref{eq:iphi} and~\eqref{eq:defVI}, we can define right- and left-propagating currents and voltages as

\begin{subequations}
    \begin{equation}
        \ii^{\rightarrow}(x, t) = -\frac{1}{\ell} \frac{\partial \Phi^{\rightarrow}}{\partial x}, \quad \ii^{\leftarrow}(x, t) = -\frac{1}{\ell} \frac{\partial \Phi^{\leftarrow}}{\partial x},
    \end{equation}
\begin{equation}
\label{eq:vleftrightdef}
        \vv^{\rightarrow}(x, t) = \frac{\partial \Phi^{\rightarrow}}{\partial t}, \quad \vv^{\leftarrow}(x, t) = -\frac{\partial \Phi^{\leftarrow}}{\partial t}.
    \end{equation}
\end{subequations}

Using Eqs.~\eqref{eq:iphi} and~\eqref{eq:defVI}, Eq.~\eqref{eq:transp} implies 
\begin{equation}
    \vv^{\rightarrow}(x,t)=Z_0 \ii^{\rightarrow}(x,t),\;\; \vv^{\leftarrow}(x,t)=-Z_0 \ii^{\leftarrow}(x,t),
    \label{eq:char-imp}
\end{equation}
where we defined the characteristic impedance of the transmission line $Z_0=\sqrt{\ell/c}$. This characteristic impedance equals $50\,\Omega$ for commonly-used transmission lines and depends on the geometric properties of the center conductor and the distance to the outer conductor \cite{pozar}.

The transmitted power of the transmission line is given by
\begin{equation}P(x,t)=\vv^{\rightarrow}(x,t)\ii^{\rightarrow}(x,t)+\vv^{\leftarrow}(x,t)\ii^{\leftarrow}(x,t)=\frac{1}{Z_0}({\vv^{\rightarrow}}^2-{\vv^{\leftarrow}}^2),
\end{equation}
with contributions from left- and right-flowing energy fluxes. 

A general real solution of the wave equation can be written as
\begin{equation}
\label{eq:expPhi}
\Phi(x,t)=\Phi^{\rightarrow}(x,t)+\Phi^{\leftarrow}(x,t)= \sqrt{\frac{\hbar}{4 \pi c}} \int_{-\infty}^{+\infty} d k \frac{1}{\sqrt{\omega(k)}} \biggl(b_ke^{-i \omega(k)t+i kx}+b^*_k e^{i \omega(k)t-i kx}\biggr),
\end{equation}
with angular frequency $\omega(k)=v_p |k|$, wavenumber $k$ and complex coefficients $b_k$. Here we have pulled out some prefactors so that the complex coefficient $b_k$ has the same dimension as $k^{-1/2}$.

The right-propagating field associated with $k>0$ is thus
\begin{equation}
\Phi^{\rightarrow}(x,t)= \sqrt{\frac{\hbar}{4 \pi c}} \int_{0}^{+\infty} d k \frac{1}{\sqrt{\omega(k)}} \biggl( b_ke^{-i \omega(k)(t-x/v_p)}+\mathrm{c.c.}\biggr),
\label{eq:rightprop}
\end{equation}
while the left-propagating field associated with $k< 0$ equals
\begin{equation}
\Phi^{\leftarrow}(x,t)= \sqrt{\frac{\hbar}{4 \pi c}} \int_{-\infty}^{0} d k \frac{1}{\sqrt{\omega(k)}} \biggl(b_ke^{-i \omega(k)(t+x/v_p)}+\mathrm{c.c.} \biggr).
\label{eq:leftprop}
\end{equation}
Bosonic quantization means that we replace the real and imaginary value of a complex number by the expectation of two conjugate Hermitian operators. In other words, for every $k \in \mathbb{R}$ we promote
\begin{eqnarray}
    b_k \rightarrow \hat{b}_k, \; b^*_k \rightarrow \hat{b}^{\dagger}_k,\;
[\hat{b}_k,\hat{b}^{\dagger}_{k'}]=\delta(k-k')\mathds{1}, \;[\hat{b}_k,\hat{b}_{k'}]=0.
\label{eq:comm}
\end{eqnarray}
From this definition we see again that $\hat{b}_k$, and thus $b_k$, has the same dimension as $k^{-1/2}$, as the delta function has the dimension of $k^{-1}$.

The classical Hamiltonian is also readily obtained by defining the conjugate field
\begin{equation}
Q(x,t)= \frac{\partial \mathfrak{L}}{\partial (\partial_t \Phi)}= c \frac{\partial \Phi(x,t)}{\partial t},
\label{eq:chargedef}
\end{equation}
which physically represents the charge per unit of length. This leads to the classical Hamiltonian
\begin{equation}
\hamiltonian =\int_{- \infty}^{+\infty} dx\;  Q(x,t) \frac{\partial \Phi(x,t)}{\partial t}-\lagrangian=  \int_{-\infty}^{+\infty} dx \biggl\{ \frac{Q^2(x,t)}{2 c} +\frac{1}{2 \ell} \biggl(\frac{\partial \Phi}{\partial x}\biggr)^2 \biggl\}.
\label{eq:ham-ift}
\end{equation}
The classical Hamiltonian expresses the energy in a field configuration at a certain moment in time using the explicit time dependence of $Q(x,t)$ and $\Phi(x,t)$. Since energy is conserved, the energy is a time-independent quantity of course. If we quantize the system, the time-dependent field $\Phi(x,t)$ represents the operator $\hat{\Phi}(x)$ in the Heisenberg representation, where $\hat{\Phi}(x)$ is defined as
\begin{equation}
    \hat{\Phi}(x)=\sqrt{\frac{\hbar}{4 \pi c}} \int_{-\infty}^{+\infty} d k \frac{1}{\sqrt{\omega(k)}} \biggl(\hat{b}_ke^{i kx}+\hat{b}^{\dagger}_k e^{-i kx}\biggr).
    \label{eq:phix}
\end{equation}
Similarly, the quantized charge operator $\hat{Q}(x)$ can be obtained by using the classical relation Eq.~\eqref{eq:chargedef} and then removing the time dependence, leading to
\begin{equation}
    \hat{Q}(x)= \sqrt{\frac{\hbar c}{4 \pi}} \int_{\infty}^{+\infty} dk \sqrt{\omega(k)} \biggl(-i \hat{b}_ke^{ikx}+i \hat{b}^{\dagger}_ke^{-i kx}\biggr).
    \label{eq:qx}
\end{equation}

Indeed, if the quantized Hamiltonian $H$ is as in Eq.~\eqref{eq:ham-ift-sim} in Exercise \ref{exc:ift} below, then the Heisenberg operators have the time dependence
\begin{equation}
\hat{b}(k,t)=e^{i H t/\hbar} \hat{b}_k e^{-i H t/\hbar}=\hat{b}_ke^{-i \omega(k) t}, \;\hat{b}^{\dagger}(k,t)=\hat{b}^{\dagger}_ke^{i \omega(k) t}.
\end{equation}
The Heisenberg evolution of a bosonic operator will be explicitly verified in Eq.~\eqref{eq:bosheis} in Exercise \ref{exc:JCmod}.

\begin{Exercise}[title={Quantized Infinite Transmission Line}, label={exc:ift}]
\Question Verify that if we apply quantization to the classical Hamiltonian in Eq.~\eqref{eq:ham-ift}, i.e.,~replace $b_k \rightarrow \hat{b}_k$ and $b^*_k \rightarrow \hat{b}^{\dagger}_k$, the quantized Hamiltonian $H$ indeed reads  
\begin{equation}
H= \int_{-\infty}^{+\infty}dk \, \hbar \omega(k) \left(\hat{b}^{\dagger}_k \hat{b}_k+\frac{\delta(0)}{2}\right).
\label{eq:ham-ift-sim}
\end{equation}
You can do this by entering the operators $\hat{\Phi}(x)$ and $\hat{Q}(x)$ in the quantized (time-independent) version $H$ of $\hamiltonian$ in Eq.~\eqref{eq:ham-ift}. Due to the Dirac delta function $\delta(0)$ the vacuum energy is infinite, in principle.
\Question Verify that 
\begin{equation}
[\hat{\Phi}(x), \hat{Q}(x \sp{\prime})]=i \hbar \delta(x-x \sp{\prime}) \mathds{1},
\end{equation}
using the expressions for $\hat{Q}(x)$ and $\hat{\Phi}(x)$ in Eqs.~\eqref{eq:phix} and~\eqref{eq:qx}. That is, verify that they satisfy the canonical commutation relations as expected.
\end{Exercise}

\begin{Answer}[ref={exc:ift}]
\Question Use $\frac{1}{2\pi}\int_{-\infty}^{\infty} dx \,e^{i (k-k')x}=\delta(k-k')$, $\omega(k)=v_p|k|=\frac{1}{\sqrt{\ell c}} |k|$ and the commutation relations in Eq.~\eqref{eq:comm}.
\Question Use the commutation relations in Eq.~\eqref{eq:comm} and 
\begin{align}
    \int_{-\infty}^{\infty} dk\, \left(e^{i k (x-x')}+e^{-i k (x-x')}\right)=4 \pi \delta(x-x').
\end{align}
\end{Answer}

\begin{figure}[htb]
    \centering
    \includegraphics[height=6cm]{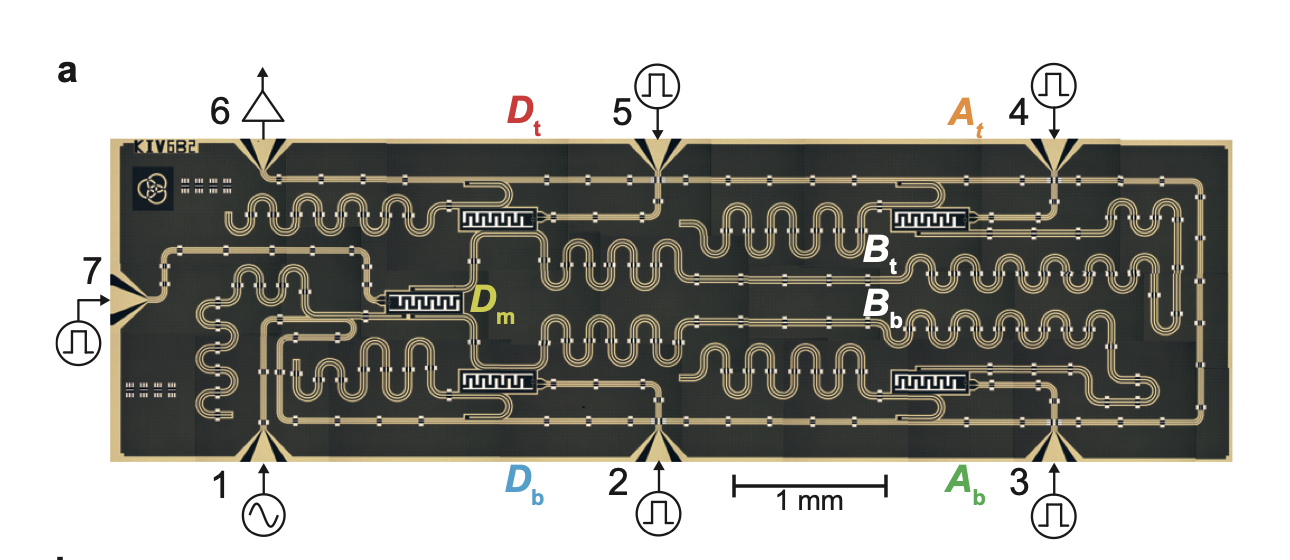}
    \caption{Optical micrograph of a five transmon qubit chip from Ref.~\cite{Riste.etal.2015:BitflipChip} (Figure taken from \href{https://arxiv.org/abs/1411.5542}{ArXiv version of the paper}). Visible are one multiplexed (AC) transmission line coming on- and off-chip at 1 and 6, five individual (DC) flux-lines, each controlling one qubit, three transmon data qubits ($D_t,D_m,D_b$), two transmon ancilla qubits ($A_t$, $A_b$) and two coupling bus resonators $(B_t,B_b)$ coupling each ancilla qubit with two data qubits. In addition, each qubit has its own readout resonator.}
    \label{fig:riste}
\end{figure}

\subsection{Boundaries and resonators}
\label{subsec:br}

If the transmission line can indeed be modeled as infinite, we see that any $k \in \mathbb{R}$ is allowed and the solutions are traveling plane waves characterized by wavenumber $k$, with which we can associate a bosonic mode with annihilation operator $\hat{b}_k$.
In this scenario it is incorrect to associate a single annihilation operator with a frequency $\omega=v_p|k|$, as there are two different mode operators, $\hat{b}_k$ and $\hat{b}_{-k}$, which have that frequency.

Consider now a semi-infinite transmission line which extends to infinity in one direction, but has an open-circuit boundary condition at $x=0$, meaning that 
\begin{equation*}
\ii(x=0,t)=-\frac{1}{\ell}\left.\frac{\partial \Phi}{\partial x}\right\vert_{x=0}=0.
\end{equation*}
We see that if we choose $b_{-k}=-b_k$ and thus $b^*_{-k}=-b^*_k$ in Eq.~\eqref{eq:expPhi}, then for any $k
\in \mathbb{R}$,
\begin{align}
    \Phi_k(x,t) = & b_k[e^{-i \omega(k)t+i kx}-e^{-i \omega(k)t-i kx}]+b^*_k[e^{i \omega(k)t-i kx}-e^{i \omega(k)t+i kx}]    \notag \\
     = &  2 i \cos(k x)(b_ke^{-i \omega(k)t}-b^*_ke^{i \omega(k) t}),
\end{align}
is a solution which satisfies this boundary condition. If we quantize this system, we associate an operator $\hat{b}_k$ with $b_k$, but $b_{-k}$ is simply replaced by $-\hat{b}_k$. Therefore, we can replace the wavenumber $k$ with the angular frequency, as there is only one mode at each frequency. Physically, this means that the mode that travels to the right towards $x=0$ with amplitude $a$ is perfectly reflected at this boundary and continues with negative amplitude $-a$ towards the left. 
We can also consider a semi-infinite transmission line which is grounded at $x=0$, i.e.,~$\vv(x=0,t)=0$. In this case, any solution with $b_k=b_{-k}$ is valid, and physically it means that the reflected wave does not acquire a $-1$ phase shift at this boundary, but retains its amplitude.

If instead the transmission line has a finite length, we need to impose some suitable boundary conditions at its ends. In this case, the set of possible solutions labeled by $k$ becomes discrete corresponding to resonant modes. The most common boundary conditions at a boundary point $x_b$ are the  open-circuit boundary condition $\ii(x_b, t)=0$, and the short-circuit boundary condition $\vv(x_b, t)=0$. A transmission line with open-circuit boundaries on both ends is called a $\lambda/2$ resonator, as the lowest discrete wavevector $k=2\pi/\lambda$ corresponds to a wavelength $\lambda=2 d$ with $d$ the length of the line, see Exercise~\ref{exc:ftl}.

If we terminate a transmission line by one grounded boundary and an open circuit at the other end, one obtains a so-called $\lambda/4$-resonator. Let the length of the transmission line be $d$. 
First, the grounded boundary restricts the classical solutions to the general form
    \begin{equation}
\label{eq:expPhi-l4}
\Phi(x,t)= \sqrt{\frac{\hbar}{4 \pi c}} \int_{0}^{+\infty} d k \frac{\sin(kx)}{\sqrt{\omega(k)}} \biggl(b_ke^{-i \omega(k)t}+b^*_k e^{i \omega(k)t}\biggr).
\end{equation}
Then, the condition 
\begin{equation*}
\ii(x=d,t)=-\frac{1}{\ell}\left.\frac{\partial \Phi}{\partial x}\right\vert_{x=d}=0,
\end{equation*}
implies that $k$ is restricted such that $\cos(k d)=0$, or $k d=\pi/2+n\pi$ with $n\in \mathbb{Z}$, allowing for angular frequencies 
\begin{equation}
 \label{eq:freql4}
    \omega_n=\frac{v_p \pi}{2 d}+\frac{v_p n \pi}{d}.
\end{equation}
Hence the largest wavelength which is supported ---the lowest energy mode--- is $\lambda=4 d$. 

Both $\lambda/4$ and $\lambda/2$ co-planar resonators can be used and one usually omits the description of all but one mode, namely the mode whose frequency is closest to the coupled qubits of interest. The co-planar resonator can be coupled to a transmon qubit, as will be discussed in Section~\ref{sec:JC-coupling}, and when used as a readout resonator, it is in turn coupled to a transmission feedline, see Fig.~\ref{fig:reson} and Fig.~\ref{fig:riste}. 

\begin{figure}[htb]
    \centering
    \includegraphics[height=4cm]{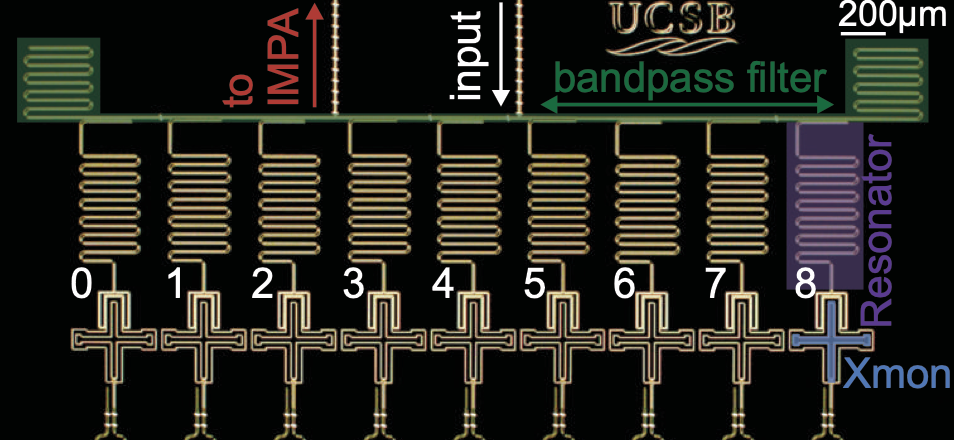}
    \caption{False-colored optical micrograph from Ref.~\cite{kelly:rep-code} with nine transmon qubits (Xmons), each with its $\lambda/2$ readout resonator which are all coupled to one common `feed' transmission line at the top of the chip. Modern industrial superconducting qubit architectures can use a sandwich of two separate but coupled chips, for example one hosting qubits and (tunable) couplers, the other, control, trip hosting readout resonators and I/O lines. Such architecture allows for multi-qubit scalability in 2D, see e.g. Google's Sycamore chip discussed on  \href{https://youtu.be/IWQvt0RBclw}{YouTube}.}
    \label{fig:reson}
\end{figure}

We can also consider semi-infinite transmission lines that are terminated by a so-far unspecified other linear network. Using Eqs.~\eqref{eq:vleftrightdef},~\eqref{eq:rightprop}, and \eqref{eq:leftprop} we have 
\begin{subequations}
\label{eq:lr-volt}
\begin{equation}
    \vv^{\rightarrow}(x,t)=v_p \int_0^{\infty} dk \frac{\vv_0^{\rightarrow}(k)}{2\pi} e^{i v_p |k| (t-x/v_p)}+{\rm c.c.},    
\end{equation}
\begin{equation}
     \vv^{\leftarrow}(x,t)=v_p \int_{-\infty}^0 dk \frac{\vv_0^{\leftarrow}(k)}{2\pi} e^{i v_p |k| (t+x/v_p)}+{\rm c.c.},
\end{equation}
\end{subequations}
where the amplitudes $\vv_0^{\rightarrow}(k)$,  $\vv_0^{\leftarrow}(k)$ are defined as 
\begin{subequations}
\begin{equation}
    \vv_0^{\rightarrow}(k)=i\sqrt{\frac{\hbar k Z_0}{4\pi }}b_k2\pi, \quad k>0,
\end{equation}
\begin{equation}
    \vv_0^{\leftarrow}(k)=i\sqrt{ -\frac{\hbar k Z_0}{4\pi }}b_k2\pi, \quad k <0.
\end{equation}
\end{subequations} 
When the transmission line is connected on one side to another electrical network, say, a linear network which can only reflect the incoming voltage signal, then $\vv^{\rightarrow}(x,t)$ relates to $\vv^{\leftarrow}(x,t)$ at the location $x$ where the reflection occurs (see Exercise~\ref{exc:2port-TL}). 

\begin{Exercise}[title={$\lambda/2$-transmission line resonator},label=exc:ftl]
Consider a finite transmission line of length $d$ ($x \in [0, d]$) with open-circuit boundary conditions: $\ii(0,t)=\ii(d,t)=0$. We first consider the problem completely classically.
\Question Consider the wave equation Eq.~\eqref{eq:waveEq} for $\Phi(x,t)$ and look for solutions of the form $\Phi(x,t) =\xi(t) f(x)$. Show that the boundary conditions impose some constraints on the form of the functions $f(x)$ and that we can find solutions of the previous form $\Phi_n(x,t)=\xi_n(t) f_n(x)$ parametrized by an integer $n=0,1,2, \dots$ and 
\begin{equation}
    \omega(k_n)=\omega_n=v_p k_n=\frac{v_p \pi n}{d},
    \label{eq:oml2}
\end{equation}
supporting a largest wavelength $\lambda=2 d$.
\Question Due to linearity we can write a general solution as
\begin{equation}
\label{eq:phiFiniteExp}
\Phi(x,t)=\sum_{n=0}^{+\infty} \xi_n(t) f_n(x).
\end{equation}
Plug this expansion into Eq.~\eqref{eq:KG} and integrate over~$x$ to get the Lagrangian for the generalized dynamical variables $\xi_n$, $\dot{\xi}_n$. Obtain the Hamiltonian in terms of $\xi_{n}$ and the related canonical momenta $q_n$. \par 
\emph{Hint: remember that for $n, m$ integers}
\begin{equation}
\frac{2}{\pi} \int_0^{\pi} dx \cos (n  x) \cos (m  x)= \frac{2}{\pi} \int_0^{\pi} dx \sin (n  x) \sin (m  x)= \delta_{n m}.
\end{equation}
\Question Quantize and write the Hamiltonian in terms of bosonic annihilation and creation operators.
Additionally, write the expression for the voltage $\hat{\vv}(x)$ in the Schr{\"o}dinger picture in terms of these operators. 
\Question Suppose you want to capacitively couple a system, say a transmon, to the resonator. In particular, you want it to interact strongly with the mode $n=2$. At which position $x_0 \in[0,d]$ would you put the coupling capacitance $C_c$?  
\end{Exercise}

\begin{Answer}[ref={exc:ftl}]
\Question
We solve the 1D wave equation with standard methodology. We look for non-trivial solutions (non-zero) of the form $\Phi(x,t)= \xi(t)f(x)$. Inserting this into Eq.~\eqref{eq:waveEq} and dividing by $\Phi(x,t)$ we get
\begin{equation}
\frac{1}{v_p^2 \xi(t)} \frac{d^2 \xi}{dt^2}= \frac{1}{f(x)} \frac{d^2 f(x)}{dx^2}.
\end{equation}
We have two expressions that have to be equal but are functions of different variables; this is possible only if they are equal to a constant which we call $-k^2$ with real $k$. We then get the two equations
\begin{subequations}
\begin{equation}
\frac{d^2 \xi}{dt^2}= -k^2 v_p^2 \xi(t),
\end{equation}
\begin{equation}
\label{eqf}
\frac{d^2 f}{dx^2}= -k^2  f(x).
\end{equation}
\end{subequations}
The open-circuit boundary conditions imply that $\ii(x,t)=\ii(d,t)=0$ at all $t$ and thus
\begin{equation}
\frac{\partial \Phi}{\partial x}(0,t)= \frac{\partial \Phi}{\partial x}(d,t),
\end{equation}
translating into conditions for $f'(x)$
\begin{equation}
f'(0)=f'(d)=0.
\end{equation}
The general solution of Eq.~\eqref{eqf} is
\begin{equation}
f(x)= A \cos(k x)+B \sin (k x),
\end{equation}
with $A, B$ real constants. The condition $f'(0)=0$ implies $B=0$. Consequently, the condition $f'(d)=0$ reads
\begin{equation}
-A k \sin(k  d)=0,
\end{equation}
which gives non trivial solutions only if 
\begin{equation}
k=k_n= \frac{\pi n}{d} \quad n=\{0, 1,2, \dots\},
\end{equation}
leading to Eq.~\eqref{eq:oml2}.
Note that you can restrict $n$ to be positive, as the negative solutions are the same as the positive ones since the cosine is an even function. We have obtained a family of solutions parametrized by an integer
\begin{equation}
\Phi_n(x,t)= \xi_n(t) f_n(x),
\end{equation}
$n=\{0, 1,2, \dots\}$, with
\begin{equation}
f_n(x)= A_n \cos(k_n x).
\end{equation}
The choice of the constants $A_n$ is arbitrary. However, we can take them such that the $\ell^2$-norm of $f_n(x)$ is equal to one. The $\ell^2$-norm is defined as
\begin{equation}
\lVert f_n \rVert= \biggl( \frac{1}{d} \int_{0}^{d} dx \, f_{n}(x)^2 \biggr)^{1/2}.
\end{equation}
Since for $n \ge 1$
\begin{equation}
\frac{1}{d} \int_{0}^{d} dx \, \cos^2 \biggl( \frac{\pi n}{d}\biggr)= \frac{1}{2},
\end{equation}
we take $A_n=\sqrt{2} \quad \forall n \ge 1$, while for $n=0$, we take trivially $A_0=1$. Compactly
\begin{subequations}
\begin{equation}
f_0(x)=1,
\end{equation}
and
\begin{equation}
f_n(x)= \sqrt{2} \cos \biggl(\frac{\pi n}{d} x \biggr).
\end{equation}
\end{subequations}\par 
From the linearity of the wave equation a linear combination of solutions is again a solution, so that we can write a generic solution as \footnote{Basically, the interpretation of this expansion is that the functions $f_n(x)$ form an orthonormal basis for the vector space of $\ell^2$ (square-integrable) functions defined in $[0,d]$ which satisfy the previous boundary conditions.}
\begin{equation}
\label{expPhiRes}
\Phi(x,t)= \sum_{n=0}^{+\infty} \xi_n(t) f_n(x).
\end{equation} 
Note that the $\xi_n(t)$ now satisfy
\begin{equation}
\frac{d^2 \xi_n}{dt^2}=-\omega_n^2 \xi_n(t),
\label{eq:xi}
\end{equation}
which is the equation of motion of a harmonic oscillator with characteristic frequency $\omega_n = v_p k_n= v_p \pi n/d$. We expect that this is reproduced when we plug our expansion in the Lagrangian. 
\Question Before plugging the expansion Eq.~\eqref{expPhiRes} into the Lagrangian, we notice two properties of the functions $f_n(x)$
\begin{subequations}
\label{innerProducts}
\begin{equation}
\frac{1}{d} \int_0^{d} dx f_n(x) f_m(x) = \delta_{nm},
\end{equation}
\begin{equation}
\frac{1}{d} \int_0^{d} dx \frac{d f_n}{dx} \frac{d f_m}{dx}  = k_n^2 \delta_{nm}.
\end{equation}
\end{subequations} 
These are useful facts when we plug the expansion, Eq.~\eqref{expPhiRes}, into the Lagrangian in Eq.~\eqref{eq:KG}:
\begin{equation}
\lagrangian= \int_0^d dx \; \biggl\{\frac{c}{2} \biggl(\sum_{n=0}^{+\infty} \dot{\xi}_n(t) f_n(x) \biggr)^2 -\frac{1}{2 \ell} \biggl(\sum_{n=1}^{+\infty} \xi_n(t) \frac{d f_n}{dx} \biggr)^2 \biggr\}\,,
\end{equation}
which, using Eqs.~\eqref{innerProducts}, gives
\begin{equation}
\lagrangian= \frac{1}{2}d c \sum_{n=1}^{+\infty}(\dot{\xi}_n^2-\omega_n^2 \xi_n^2) +\frac{1}{2}d c \dot{\xi}_0^2.
\end{equation}
Note that mode $0$ is a free particle, corresponding to a constant voltage (charge) and thus a linearly increasing flux (in time), as $\omega_r=0$ implies the solution $x_0(t)=C_1+_2 t$ in Eq.~\eqref{eq:xi}. We will neglect it from now on, meaning that our flux is always the difference between $\Phi(x,t)$ and the linearly increasing flux. Anyway, in most cases this DC voltage is really zero. Neglecting the free-particle mode, we obtain the conjugate variables
\begin{equation}
q_n= \frac{\partial \lagrangian}{\partial \dot{\xi}_n}= d c  \,\dot{\xi}_n,
\end{equation}
and the Hamiltonian
\begin{equation}
\hamiltonian= \sum_{n=1}^{+\infty} \frac{q_n^2}{2 d  c}+\frac{d c}{2} \omega_n^2 \xi_n^2.
\end{equation}
\Question We now quantize the theory as usual by promoting variables to operators denoted by hats
\begin{equation}
H= \sum_{n=1}^{+\infty} \frac{\hat{q}_n^2}{2 d c}+\frac{d c}{2} \omega_n^2 \hat{\xi}_n^2 ,
\end{equation}
and imposing commutation relations between conjugate variables
\begin{equation}
[\hat{\xi}_n, \hat{q}_n]= i \hbar \mathds{1}.
\end{equation}
We observe that we just have the Hamiltonian of a collection of independent harmonic oscillators with equal `mass' $dc$ (equal to the total capacitance) and frequencies $\omega_n$. We then introduce annihilation and creation operators
\begin{subequations}
\begin{equation}
\hat{\xi}_n= \sqrt{\frac{\hbar}{2 d c \omega_n}}(\hat{a}_n+\hat{a}_n^{\dagger}),
\end{equation}
\begin{equation}
\hat{q}_n= i\sqrt{\frac{\hbar}{2 }d c \omega_n}(\hat{a}_n^{\dagger}-\hat{a}_n),
\end{equation}
\end{subequations}
which satisfy the commutation relations $[\hat{a}_n, \hat{a}_m^{\dagger}]= \delta_{nm}\mathds{1}$. The Hamiltonian becomes
\begin{equation}
H= \sum_{n=1}^{+\infty}\hbar \omega_n \left(\hat{a}_n^{\dagger}\hat{a}_n+ \frac{1}{2}\right).
\end{equation}
In the Heisenberg picture we have
\begin{equation}
\hat{\Phi}(x,t) = \sum_{n=1}^{+\infty} \sqrt{\frac{\hbar}{d c \omega_n}} \cos(k_n x) \biggl( \hat{a}_n e^{-i \omega_n t}+\hat{a}_n^{\dagger}e^{i \omega_n t} \biggr),
\end{equation}
from which we can deduce the expression for the voltage operator in the Heisenberg picture
\begin{equation}
\hat{\vv}(x,t)= \frac{\partial \hat{\Phi}}{\partial t}= \sum_{n=1}^{+\infty} i \sqrt{\frac{\hbar \omega_n}{d c}} \cos(k_n x)\biggl(\hat{a}_n^{\dagger} e^{i \omega_n t} -\hat{a}_n e^{-i \omega_n t} \biggr),
\end{equation}
which in the Schr{\"o}dinger picture equals 
\begin{equation}
\hat{\vv}(x)= \sum_{n=1}^{+\infty} i \sqrt{\frac{\hbar \omega_n}{d c}} \cos(k_n x)\biggl(\hat{a}_n^{\dagger}  -\hat{a}_n  \biggr). 
\end{equation}
\Question If we couple capacitively it means that we are basically coupling to the voltage field $\hat{\vv}(x)$ at a specific position $x_0$. If we want to couple strongly to the mode $n=2$, we would like to couple at a position that maximizes $f_2(x)= \sqrt{2}\cos(2 \pi x/d)$ in modulus. This happens for $x_0=\{0, d/2, d\}$, so the smartest thing to do is to couple either at the end or in the middle of the transmission line resonator.
\end{Answer}

\section{Input-output formalism: Heisenberg-Langevin equations}
\label{sec:HL_derive}

If we have a transmon qubit or a resonator, or any other system of modes corresponding to those of an electrical circuit, we can couple a transmission line to it, for instance via a capacitive coupling. The transmission line is used for sending electromagnetic pulses to the system that can either serve to control it or to obtain information about it by performing measurements on a reflected and/or transmitted signal.

One can write down the Heisenberg equations of motion of the modes of the system which include the effect of such input or output fields on a transmission line where, by definition, the input field represents the incoming signal (launched at some early time $t_0$) and the output field is the outgoing signal at some later time $t_1 > t_0$.
These Heisenberg-Langevin equations are useful in modeling the quantum measurement of a qubit, as well as modeling amplifiers which transform the input field into an amplified output field (see Exercises \ref{exc:deg_par_amp},\ref{exc:JRM} in Chapter~\ref{chap:add}). This formalism, which focuses on input and output fields, was originally developed for the description of input-output dynamics of damped quantum optical cavities \cite{GW:input-output}.

In the measurement of a (transmon) qubit, the qubit is dispersively coupled to a resonator, shifting the resonant frequency of the cavity depending on the qubit state. The cavity is then probed by the input field and the output field is phase-shifted depending on the state of the qubit. We refer a reader to \cite{Blais_2021} for more detailed modeling of the transmon qubit measurement. In Exercise~\ref{exc:disp-meas} we examine a simple qubit model to study some features of this dispersive measurement, omitting amplification.

In this section, we derive the Heisenberg-Langevin equations (see also Ref.~\cite{book:WM}). We imagine a semi-infinite transmission line capacitively coupled to a resonator with annihilation operator $\hat{a}$. The resonator can in turn be part of a larger electrical circuit, together described by some system Hamiltonian $H_{\rm sys}$. As discussed in Section~\ref{subsec:br}, for the semi-infinite transmission line, there is only one mode with frequency $\omega$ on the transmission line, so we use operators $\hat{b}_{\omega}$, labelled with frequency $\omega$ instead of $k$. We start with the Hamiltonian $H=H_{\rm sys}+H_{\rm TL} + H_{\rm coupl}$ with
\begin{equation}
H_{\rm TL}=\int_0^{\infty} d\omega \,\hbar \omega \,\hat{b}^{\dagger}_{\omega} \hat{b}_{\omega}, \quad
H_{\rm coupl}=i\hbar \sqrt{\frac{1}{2\pi}} \int_0^{\infty} d\omega \sqrt{\kappa(\omega)} \left(\hat{a} \hat{b}^{\dagger}_{\omega}- \hat{a}^{\dagger} \hat{b}_{\omega} \right).
\end{equation}
Here the decay rate $\kappa(\omega)$ effectively models the capacitive coupling of the transmission line through the resonator. Taking $\kappa(\omega)$ to be frequency-independent, i.e., $\kappa(\omega)=\kappa$, is a simple Markovian approximation that we take from now on. 
We have the Heisenberg equation for $\hat{b}_{\omega}(t)$:
\begin{equation}
\frac{d \hat{b}_{\omega}}{dt}=\frac{i}{\hbar}[H,\hat{b}_{\omega}(t)]=-i \omega \hat{b}_{\omega}(t)+\sqrt{\frac{\kappa}{2\pi}}\hat{a}(t), 
\end{equation}
which can be formally solved backwards and forwards in time, i.e., for $t > t_0$ and $t < t_1$ we have
\begin{align}
    \hat{b}_{\omega}(t)=e^{-i \omega(t-t_0)}\hat{b}_{\omega}(t_0)+\sqrt{\frac{\kappa}{2\pi}}\int_{t_0}^t d\tau \,e^{-i \omega(t-\tau)} \hat{a}(\tau), \label{eq:for} \\
\hat{b}_{\omega}(t)=e^{-i \omega(t-t_1)}\hat{b}_{\omega}(t_1)-\sqrt{\frac{\kappa}{2\pi}}\int_{t}^{t_1} d\tau \,e^{-i \omega(t-\tau)} \hat{a}(\tau). 
\label{eq:back}
\end{align}
Now one {\em defines} the input field $\hat{b}_{\rm in}(t)$ and output field $\hat{b}_{\rm out}(t)$ as
\begin{align}
    \hat{b}_{\rm in}(t)=-\sqrt{\frac{1}{2\pi}}\int_0^{\infty}d\omega e^{-i\omega(t-t_0)} \hat{b}_{\omega}(t_0),\;\;\hat{b}_{\rm out}(t)=\sqrt{\frac{1}{2\pi}}\int_0^{\infty}d\omega e^{-i\omega(t-t_1)} \hat{b}_{\omega}(t_1).
    \label{eq:define-inout}
\end{align}
The reason for the sign-flip in these definitions is that in the absence of any resonator, the incoming wave simply reflects at the end of the transmission line and flips the sign of its amplitude \footnote{One is however free to choose a different convention.}.

We can consider the Heisenberg equation for $\hat{a}(t)$
\begin{align}
\frac{d \hat{a}(t)}{dt}=\frac{i}{\hbar}[H_{\rm sys},\hat{a}(t)]-\sqrt{\frac{\kappa}{2\pi}}\int_0^{\infty} d\omega \,\hat{b}_{\omega}(t),    
\end{align}
and use Eqs.~\eqref{eq:for},\eqref{eq:back}, the definitions of $\hat{b}_{\rm in}(t)$ and $\hat{b}_{\rm out}(t)$ and $\int_{t_0}^t d\tau \,\delta(t-\tau) \hat{a}(\tau)=\frac{1}{2} \hat{a}(t)$ to obtain 
\begin{align}
    \frac{d\hat{a}(t)}{dt}=\frac{i}{\hbar} [H_{\rm sys},\hat{a}(t)]-\frac{\kappa}{2} \hat{a}(t)+\sqrt{\kappa} \hat{b}_{\rm in}(t).
    \label{eq:in}
\end{align}
and 
\begin{align}
    \frac{d\hat{a}(t)}{dt}=\frac{i}{\hbar} [H_{\rm sys},\hat{a}(t)]+\frac{\kappa}{2} \hat{a}(t)-\sqrt{\kappa} \hat{b}_{\rm out}(t).
    \label{eq:out}
\end{align}
Together, these equations lead to
\begin{align}
    \hat{b}_{\rm in}(t)+\hat{b}_{\rm out}(t)=
    \sqrt{\kappa}\hat{a}(t).
    \label{eq:inout}
\end{align}

Now, if one knows $\langle b_{\rm in}(t) \rangle$, then Eq.~\eqref{eq:in} and Eq.~\eqref{eq:inout} allow one to resolve the dynamics of the resonator mode, using $H_{\rm sys}$, and find the outgoing field $\langle b_{\rm out}(t) \rangle$. For a simple undriven resonator ($\langle b_{\rm in}(t) \rangle =0$), Eq.~\eqref{eq:in}, with the resonator starting in a coherent state $\ket{\alpha}$ defined in Eq.~\eqref{eq:coh-state}, expresses the decay of the amplitude of the coherent state $\alpha(t)=\bra{\alpha} \hat{a}(t) \ket{\alpha}$ at rate $\kappa$.

\section{Capacitively coupling a transmon to a resonator}
\label{sec:JC-coupling}

\begin{figure}[htb]
    \centering
    \includegraphics[height=4cm]{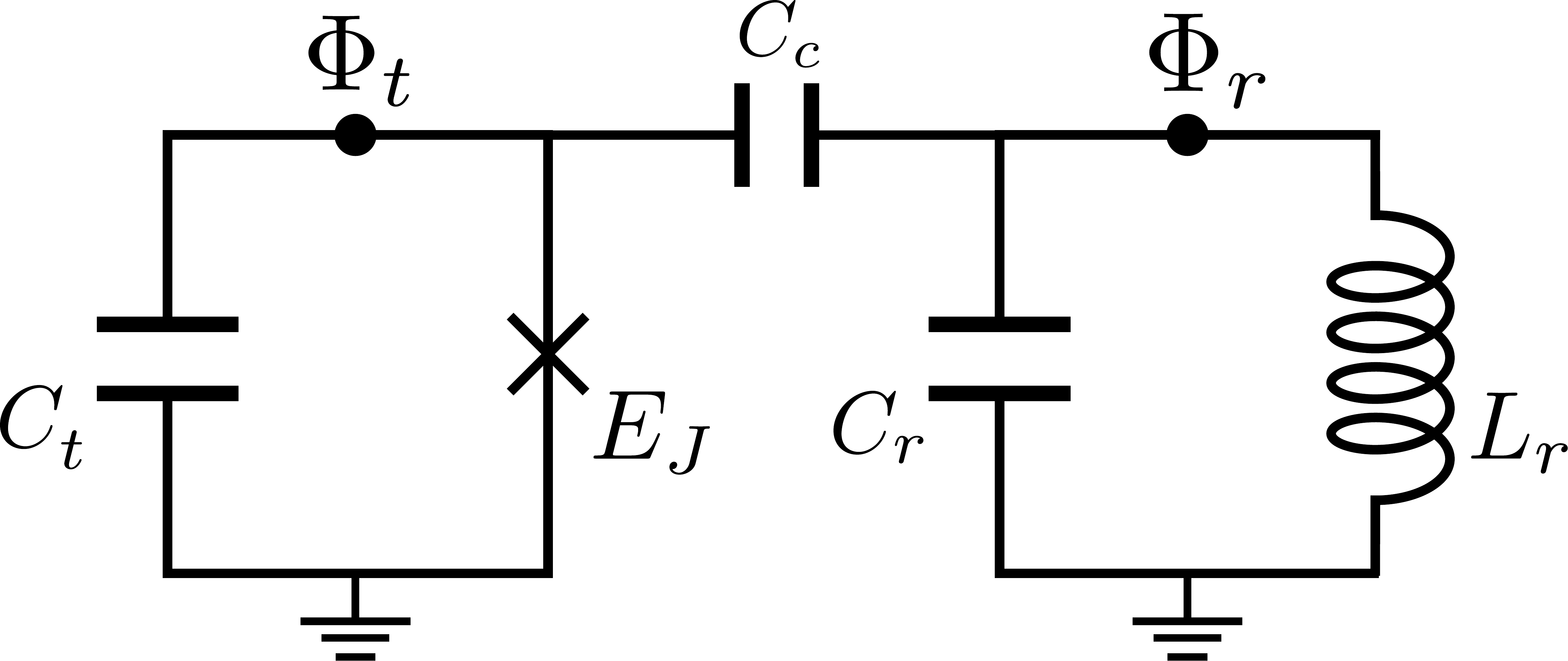}
    \caption{A transmon capacitively coupled to a resonator.}
    \label{fig:transmon_res}
\end{figure}

In superconducting circuits for quantum information processing, transmons are routinely capacitively coupled to microwave resonators or to other transmons. Besides a capacitive coupling, galvanic coupling and inductive coupling also occur in some setups, in particular for flux qubits. By galvanic coupling is meant that the qubit and its coupled system (which could be a resonator or something modeled as another qubit) form connected pieces of superconducting material, unlike for a capacitive or mutual inductive coupling. 

Capacitive couplings between transmons and resonators occur in three main circumstances:
\begin{enumerate}
\item The resonator is used to perform measurements of the qubit state via the so-called dispersive readout technique~\cite{blais_2004, Mallet2009, Blais_2021}. In this case one usually talks about a readout resonator. 
\item Two (or more) transmons are coupled to a resonator that mediates the interaction between them (see for instance the setup in Ref.~\cite{DiCarlo2009} and Fig.~\ref{fig:riste}). In this case, the resonator is called a `bus' resonator. Two transmons can also be capacitively coupled via an intermediate (flux-tunable) transmon so that changes in the resonant frequency of the intermediate transmon can enhance or decrease the effect of such a coupler \cite{martinis:tune, dial:tune}.
\item The transmon is used to steer the state of the resonator, in which the quantum information is encoded, or a transmon is used to couple two resonators which store quantum information. In this type of setup the information-containing states in the resonators can be advantageously chosen to suppress and correct errors using the general paradigm of bosonic quantum error correction (see e.g. Refs.~\cite{Terhal_2020, MA20211789} for some review). 
\end{enumerate}
Here we study the fundamental circuit behind these different applications; it is shown in Fig.~\ref{fig:transmon_res} where a transmon is capacitively coupled to a lumped LC oscillator. While this circuit contains all the physics that we need, in real-world applications the resonator is not physically realized as a lumped-element circuit. Instead, the resonator mode is usually selected as one of the modes of a finite, 2D transmission line or of a 3D microwave cavity. By selecting we mean that the relevant mode is closer in frequency to the characteristic frequency of the transmon, compared to all other modes, and in addition exhibits a large coupling to it. This allows us to neglect the other modes that are present in a resonator, and effectively map the system to that shown in Fig.~\ref{fig:transmon_res} with some effective capacitance and inductance characterizing the selected mode. We refer the reader to Refs.~\cite{gely2017, malekakhlagh2017, Parra-Rodriguez_2018} for several examples that go beyond this simple model and take into account the multi-mode character of the resonators. 

\begin{figure}[htb]
    \centering
    \includegraphics[width=0.4\textwidth]{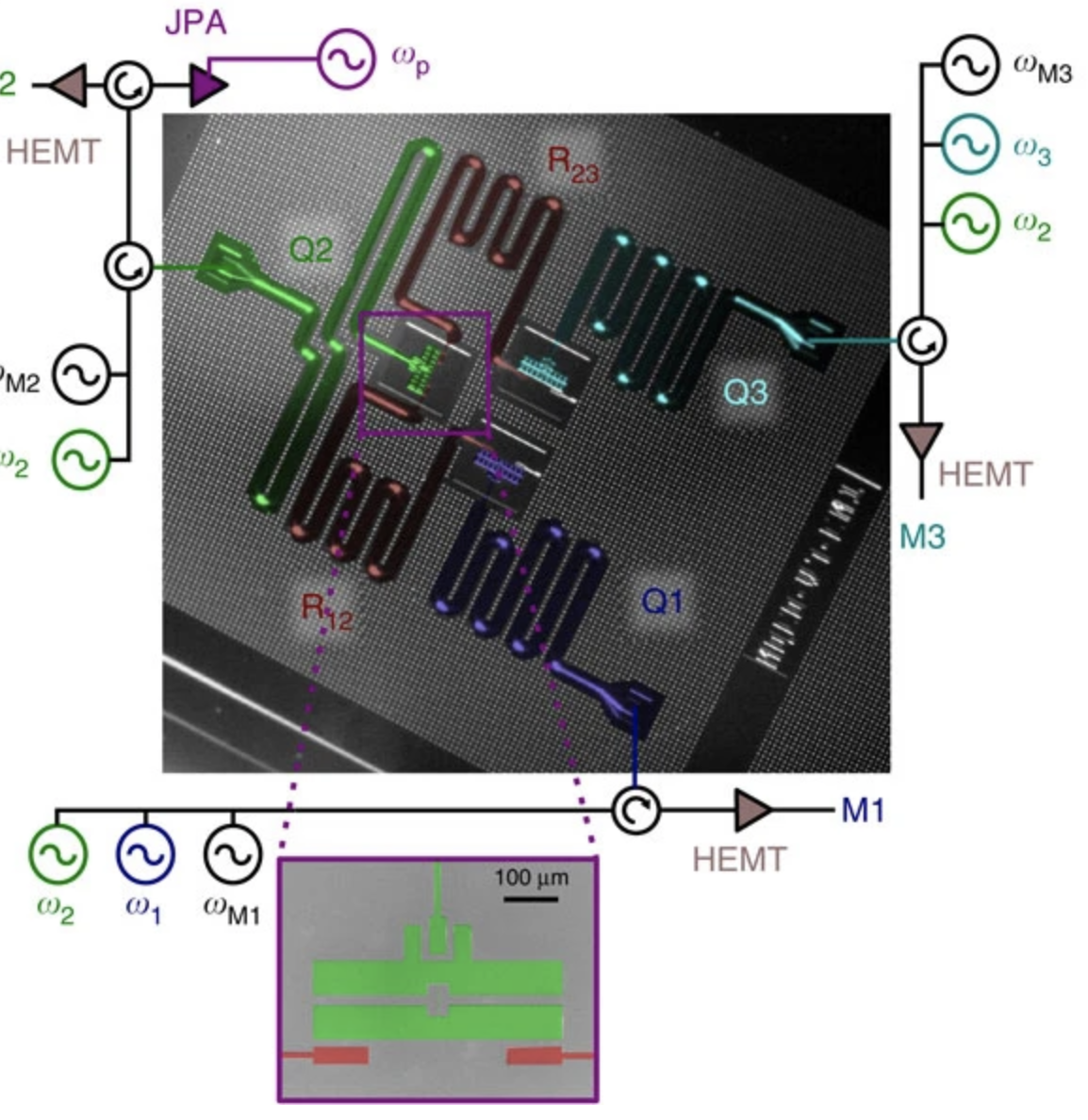}
    \caption{False-colored optical micrograph (from Ref.~\cite{chow:strand}) with three transmon qubits (boxes) $Q_1,Q_2,Q_3$, each with its own readout resonator in purple, green and turquoise respectively. Qubit $Q_2$ is coupled via the red bus resonators to $Q_1$ and $Q_3$. Zoomed in is the green transmon, consisting of two capacitive islands (connected via a Josephson junction, not visible) and its capacitive coupling to the green readout resonator and the red bus resonator.
    }
    \label{fig:chip-trans}
\end{figure}

We follow the general prescription for obtaining the Lagrangian of an electrical circuit developed in Section~\ref{sec:canq_el} to obtain the Lagrangian of the circuit in Fig.~\ref{fig:transmon_res}, which reads

\begin{equation}
\lagrangian = \frac{C_t}{2} \dot{\Phi}_t^2 + \frac{C_r}{2} \dot{\Phi}_r
 + \frac{C_c}{2} \bigl( \dot{\Phi}_t - \dot{\Phi}_r \bigr)^2 + E_J \cos \biggl( \frac{2 \pi}{\Phi_0} \Phi_t \biggr) - \frac{\Phi_r^2}{2 L_r}.
\end{equation}

Given our choice of independent variables, the capacitance matrix of the circuit is
\begin{equation}
\mat{C} =  \begin{pmatrix} C_t + C_c & -C_c \\
-C_c & C_r+ C_c\end{pmatrix}.
\end{equation}
which gives the definition of the conjugate variables 
\begin{equation}
\begin{pmatrix} Q_t \\
Q_r 
\end{pmatrix}= \mat{C} \begin{pmatrix} \dot{\Phi}_t \\
\dot{\Phi}_r 
\end{pmatrix}.
\end{equation}
We obtain the Hamiltonian of a transmon (or more generally of a CPB) capacitively coupled to a resonator, which in quantized form reads
\begin{equation}
H = \frac{\hat{Q}_t^2}{2 \tilde{C}_t} + \frac{\hat{Q}_r^2}{2 \tilde{C}_r} + \frac{\hat{Q}_r \hat{Q}_t}{\tilde{C}_c}  - E_J \cos \biggl( \frac{2 \pi}{\Phi_0} \hat{\Phi}_t \biggr) + \frac{\hat{\Phi}_r^2}{2 L_r}.
\end{equation}
Here we define the `equivalent' capacitances as 
\begin{subequations}
\begin{equation}
\tilde{C}_t = \frac{\mathrm{det}(\mat{C})}{C_r + C_c},
\end{equation}
\begin{equation}
\tilde{C}_r = \frac{\mathrm{det}(\mat{C})}{C_t + C_c}, 
\end{equation}
\begin{equation}
\tilde{C}_c = \frac{\mathrm{det}(\mat{C})}{C_c},
\end{equation}
\end{subequations}
with $\mathrm{det}(\mat{C}) = C_r C_t + C_c C_t + C_c C_r$ the determinant of the capacitance matrix. As one can easily check, the equivalent transmon capacitance $\tilde{C}_t$ corresponds to the capacitance that would shunt the transmon if we replace the inductance $L_r$ of the resonator with an open circuit, and apply the simple rules for capacitances in series and in parallel. Analogously, the equivalent resonator capacitance $\tilde{C}_r$ can be obtained by replacing the Josephson junction with an open circuit using the same procedure. 

We can rewrite the Hamiltonian in terms of rescaled variables as
\begin{equation}
H = 4 E_{C_t} \hat{q}_t^2 + 4 E_{C_r} \hat{q}_r^2 + 8 E_{C_c} \hat{q}_r \hat{q}_t -E_J \cos \hat{\phi}_t + \frac{E_{L_r}}{2} \hat{\phi}_r^2,
\end{equation}
where $E_{C_t}$, $E_{C_r}$ and $E_{C_c}$ are the charging energies associated with $\tilde{C}_t, \tilde{C}_r$ and $\tilde{C}_c$, respectively, and $E_{L_r}$ the inductive energy of the resonator.

We now perform the transmon approximation described in Section~\ref{sec:tr_approx}, valid for $E_J/E_{C_t} \gg 1$. Note that in order to evaluate the validity of this approximation it is the total transmon capacitance $\tilde{C}_t$ (which defined $E_{C_t}$) that matters, and not only the ``physical" shunting capacitance $C_{t}$.  Introducing annihilation and creation for the transmon $\hat{b}, \hat{b}^{\dagger}$ as in Section~\ref{sec:tr_approx}, and for the oscillator $\hat{a}, \hat{a}^{\dagger}$ as in  Section~\ref{subsec:lc}, we obtain the approximate Hamiltonian (neglecting constant terms)

\begin{equation}
H \approx \tilde{H}_{\rm tr+osc} = \hbar \Omega_t \hat{b}^{\dagger} \hat{b} + \hbar \frac{\delta_t}{2} \hat{b}^{\dagger} \hat{b}^{\dagger} \hat{b} \hat{b} + \hbar \omega_r \hat{a}^{\dagger} \hat{a} - \hbar g (\hat{a}^{\dagger} - \hat{a} ) (\hat{b}^{\dagger} - \hat{b}),
\label{eq:tr-osc-first}
\end{equation}
with $\hbar \Omega_t = \sqrt{8 E_J E_{C_t}} -E_{C_t} $, $\hbar \delta_t = -E_{C_t}$, $\hbar \omega_r = \sqrt{8 E_{L_r} E_{C_r}} = \hbar /\sqrt{\tilde{C}_r L_r} $, and where we define the coupling coefficient $g$ as 
\begin{equation}
\label{eq:g_coup}
\hbar g = \frac{\hbar}{2 \tilde{C}_{c} \sqrt{Z_r Z_t}} = 2 E_{C_c} \biggl (\frac{E_J}{2 E_{C_t}}\biggr)^{\nicefrac{1}{4}} \biggl (\frac{E_{L_r}}{2 E_{C_r}}\biggr)^{\nicefrac{1}{4}},
\end{equation}
with $Z_{r (t)} = \sqrt{L_{r (t)}/\tilde{C}_{r (t)}}$ the characteristic impedance of the resonator (resp. transmon). Note that the parameter dependence on the energy scales is the same as for the uncoupled transmon in Eqs.~\eqref{eq:freqt} and \eqref{eq:anharmont}, except that now the scales use the equivalent capacitance $\tilde{C}_t, \tilde{C}_r,\tilde{C}_c$. Eq.~\eqref{eq:g_coup} shows that given a certain equivalent coupling capacitance $\tilde{C}_{c}$, the coupling coefficient $g$ depends only on the characteristic impedances of the resonator and the transmon, or equivalently on the ratios $E_J/E_{C_t}$ and $E_{L_r}/E_{C_r}$. In particular, $g$ decreases with $Z_{r}$ and $Z_t$. 

To gain further insight, we also rewrite Eq.~\eqref{eq:g_coup} as
\begin{equation}
\hbar g = \frac{\hbar \sqrt{\tilde{C}_r  \tilde{C}_t \omega_r (\Omega_t - \delta_t)}}{2 \tilde{C}_c}.
\end{equation}
This shows that given a fixed capacitive network, the coupling coefficient increases with the resonator frequency $\omega_r$. However, its effect on the eigenvalues of the problem vanishes if we keep the transmon frequency fixed, since then $\lim_{\omega_r \rightarrow \infty}\frac{g}{\lvert \omega_r - \Omega_t \rvert} = 0$. 

Finally, similarly to Sections~\ref{sec:tr_approx},~\ref{sec:tr_drive}, we neglect terms with unequal numbers of annihilation and creation operator (also called the `nonsecular' terms) to obtain the Hamiltonian of an extended Jaynes-Cummings model~\cite{koch2007,bishop:vacuum-rabi}:

\begin{equation}
\frac{H}{\hbar} \overset{\mathrm{RWA}}{\approx}\frac{H_{\rm tr+osc}}{\hbar} =  \Omega_t \hat{b}^{\dagger} \hat{b} + \frac{\delta_t}{2} \hat{b}^{\dagger} \hat{b}^{\dagger} \hat{b} \hat{b} +  \omega_r \hat{a}^{\dagger} \hat{a} +  g (\hat{a}^{\dagger} \hat{b} + \hat{a} \hat{b}^{\dagger} ).
\label{eq:gen_jc}
\end{equation}

If we truncate the anharmonic transmon mode to a qubit Hilbert space, one obtains the regular Jaynes-Cummings (JC) model
\begin{equation}
    \frac{H_{\rm JC}}{\hbar}=\frac{\Omega_t}{2} \sigma_z + \omega_r \hat{a}^{\dagger}\hat{a}+g (\hat{a}^{\dagger} \sigma^-+\hat{a}\sigma^+),
    \label{eq:jc}
\end{equation}
with $\sigma^{+} = \ket{e}\bra{g} = (\sigma^-)^{\dagger}$ and $\sigma_z= \ket{e}\bra{e}-\ket{g}\bra{g}$ \footnote{We use $\sigma_x,\sigma_y,\sigma_z$ for the Pauli matrices instead of $X,Y,Z$ when we use levels $\ket{g}$ and $\ket{e}$ instead of $\ket{0}$ and $\ket{1}$. In the quantum information convention, i.e. $Z=\ket{0}\bra{0} - \ket{1}\bra{1}$, we would need an additional sign in front of $\Omega_t/2$.}.
Note that we can shift the spectrum upwards by adding $\frac{\Omega_t}{2} \mathds{1}\otimes \mathds{1}$ to the Hamiltonian so that the state $\ket{0,g}$ has zero energy. The model in which we truncate the transmon space to a qubit while also keeping the nonsecular terms is called the Rabi model. The Rabi model has less symmetry than the Jaynes-Cummings model but is still integrable \cite{braak}, and has been probed experimentally using a transmon qubit in e.g. \cite{langford}.

\begin{Exercise}[label=exc:JCmod]
\Question Show explicitly that the operator associated with the total number of excitations $\hat{n} = \hat{a}^{\dagger} \hat{a} + \hat{b}^{\dagger} \hat{b}$ is an invariant under the dynamics generated by the Hamiltonian in Eq.~\eqref{eq:gen_jc}. 
\Question Use Eq.~\eqref{eq:bch} to show that 
\begin{equation}
e^{i \theta \hat{a}^{\dagger} \hat{a}} \hat{a} e^{-i \theta \hat{a}^{\dagger}\hat{a} } = e^{-i \theta} \hat{a},
\label{eq:bosheis}
\end{equation}
and the same is valid for $\hat{b}$. Use this result to argue that the spectrum of the Hamiltonian in Eq.~\eqref{eq:gen_jc} only depends on the absolute value of $g$. \par 
\Question Suppose that, starting from the Hamiltonian in Eq.~\eqref{eq:gen_jc}, you want to use the first transition of the transmon, i.e., between ground $\ket{g}$ and first-excited state $\ket{e}$, to emulate a two-level system coupled to a harmonic oscillator via a Jaynes-Cummings Hamiltonian in Eq.~\eqref{eq:jc}. First of all, what is the parameter $g_{JC}$ in this case? How would you choose the resonator frequency to approximately realize this model (explain why)? In particular, what could happen if you chose $\omega_r < \Omega_t$? \par 
\emph{Comment: note that the JC Hamiltonian also preserves the number of excitations $\ket{e}\bra{e} + \hat{a}^{\dagger} \hat{a}$. }
\Question What is the spectrum of the Jaynes-Cummings Hamiltonian in Eq.~\eqref{eq:jc} in the resonant case $\omega_r = \Omega_t$ in the single-excitation sector (the observation of these two energy levels is called vacuum Rabi splitting). What is it in the $n$-excitation sector? How could you experimentally determine $|g_{JC}|$? 
\end{Exercise}

\begin{Answer}[ref={exc:JCmod}]
\Question The Heisenberg equation of motion for a generic operator $\hat{A}$ evolving under Hamiltonian $H$ reads
\begin{equation}
\frac{d \hat{A}}{dt} = \frac{i}{\hbar} [H,\hat{A}(t)],
\end{equation}
 Thus, an operator is an invariant of the motion if at any time it commutes with the Hamiltonian. If the Hamiltonian is time-independent, it suffices to show this at a certain time. 
In our case we simply need to show that $\hat{n}=\hat{a}^{\dagger} \hat{a} + \hat{b}^{\dagger} \hat{b}$ commutes with $H_{\rm tr+osc}$ in Eq.~\eqref{eq:gen_jc}. Obviously, $\hat{n}$ commutes with $\hat{a}^{\dagger} \hat{a}$ and $\hat{b}^{\dagger} \hat{b}$. Let us analyze the two remaining terms. First consider namely $\hat{b}^{\dagger} \hat{b}^{\dagger} \hat{b} \hat{b}$ and $\hat{a}\hat{b}^{\dagger} + \hat{a}^{\dagger}\hat{b}$:
\[
[\hat{n}, \hat{b}^{\dagger} \hat{b}^{\dagger} \hat{b} \hat{b}] = [\hat{n}, \hat{b}^{\dagger} (\mathds{1}-\hat{b} \hat{b}^{\dagger}) \hat{b}]=0.
\]
Now consider
\begin{multline}
[\hat{n}, \hat{b}^{\dagger} \hat{a} + \hat{a}^{\dagger} \hat{b}] = [\hat{a}^{\dagger} \hat{a}, \hat{b}^{\dagger} \hat{a} + \hat{a}^{\dagger} \hat{b}] + [\hat{b}^{\dagger} \hat{b}, \hat{b}^{\dagger} \hat{a} + \hat{a}^{\dagger} \hat{b}] = \\
[\hat{a}^{\dagger} \hat{a}, \hat{a}^{\dagger}] \hat{b} + [\hat{a}^{\dagger} \hat{a}, \hat{a}] \hat{b}^{\dagger} + \hat{a}^{\dagger}[\hat{b}^{\dagger} \hat{b}, \hat{b}] + \hat{a}[\hat{b}^{\dagger} \hat{b}, \hat{b}^{\dagger}] = \hat{a}^{\dagger} \hat{b} - \hat{a}\hat{b}^{\dagger} -\hat{a}^{\dagger} \hat{b} + \hat{a} \hat{b}^{\dagger} = 0.
\end{multline}
This shows that $[\hat{n}, H_{\rm tr+osc}] = 0$. This result means that we can simultaneously diagonalize $\hat{n}$ and $H_{\rm tr+osc}$, or equivalently block-diagonalize $H_{\rm tr+osc}$ within subspaces with a fixed eigenvalue of $\hat{n}$. 
\Question Note that
\begin{equation}
[\hat{a}^{\dagger} \hat{a}, \hat{a}] = -\hat{a}.
\end{equation}
Thus,
\begin{equation}
C_n = [i \theta \hat{a}^{\dagger} \hat{a}, C_{n-1}] = (i \theta)^n (-1)^n \hat{a},
\end{equation}
with $C_0 = \hat{a}$. Hence,
\begin{equation}
e^{i \theta \hat{a}^{\dagger} \hat{a}} \hat{a} e^{-i \theta \hat{a}^{\dagger}\hat{a} } = \hat{a} \sum_{n=0}^{+ \infty} \frac{(-i\theta )^n}{n !} = e^{-i \theta} \hat{a}.
\end{equation}
This shows that applying the unitary $e^{i \pi \hat{a}^{\dagger} \hat{a}}$ to $H_{\rm tr+osc}$ we can change the sign of $g$. Since the spectrum of a Hamiltonian is invariant under unitary transformations, this shows that it can only depend on $\lvert g \rvert$. 
\Question In order to obtain a Jaynes-Cummings Hamiltonian using the first two levels we need to apply the projector $\Pi = (\ket{e}\bra{e} + \ket{g} \bra{g}) \otimes I$ to $H_{\rm tr+osc}$. We get 
\begin{equation}
\frac{\Pi H_{\rm tr+osc} \Pi}{\hbar} = \frac{\Omega_t}{2} \ket{e}\bra{e} -\frac{\Omega_t}{2} \ket{g}\bra{g} + \omega_r \hat{a}^{\dagger} \hat{a} + g \bigl(\hat{a}^{\dagger} \sigma^- +\hat{a}\sigma^+  \bigr).
\end{equation}
Thus, we get $g = g_{JC}$. In order for the previous approximation to be valid, we need to assume that the eigenstates of the oscillator are mostly hybridized with the first two transmon eigenstates $\ket{g}$ and $\ket{e}$. This is the case if we take $\omega_r$ relatively close to the transmon frequency $\Omega_t$. If we take $\omega_r < \Omega_t$ we run the risk of having the resonator frequency very close to the second transition of the transmon between the first-excited state $\ket{e}$ and second-excited state $\ket{f}$. In fact, this transition has frequency $\Omega_t + \delta_t < \Omega_t$ since the anharmonicity of the transmon is negative. When $\omega_r \approx \Omega_t + \delta_t$ we would rather obtain a JC model using the levels $\ket{e}$ and $\ket{f}$. 

\Question Since the JC Hamiltonian preserves the total number of excitations, it can be diagonalized in each subspace with a fixed number of excitations. If we shift the spectrum upwards so that $\ket{0,g}$ has zero energy, then, in the single-excitation manifold in the resonant case, we get
\begin{equation}
H_{\rm JC}^{(1)}/\hbar = \omega_r \ket{0,e} \bra{0,e} + \omega_r \ket{1,g} \bra{1,g} + g_{JC} (\ket{1,g} \bra{0,e} + \mathrm{h.c}). 
\end{equation}
Explicitly in matrix form in the $\ket{0,e}, \ket{1,g}$ basis we have 
\begin{equation}
H_{\rm JC}^{(1)}/\hbar = \omega_r \mathds{1} + g_{JC} \sigma_x,
\end{equation}
with $\mathds{1}$ the $2 \times 2$ identity matrix and $\sigma_x$ is the Pauli matrix. Trivial diagonalization of this matrix gives eigenvalues $E_{\pm }=\hbar(\omega_r\pm g_{JC})$ whose difference is $2 \hbar g_{JC}$. Hence one can find $g_{JC}$ by measuring the energy difference between these two energy levels in the single-excitation manifold, for example using spectroscopy, that is, driving the transmon+resonator system and observing enhanced absorption or transmission at those frequencies, depending on the set-up. Instead, in the $n$-excitation manifold, in the resonant case, we have
\[
H_{\rm JC}^{(n)}/\hbar = n\omega_r \ket{n-1,e} \bra{n-1,e} + n\omega_r \ket{n,g} \bra{n,g} + \sqrt{n}g_{JC} (\ket{n,g} \bra{n-1,e} + \mathrm{h.c}). 
\]
with eigenvalues $E_{\pm}=\hbar(n\omega_r \pm \sqrt{n}g_{JC})$ and thus an energy level splitting equal to $2\hbar \sqrt{n} g_{JC}$.
\end{Answer}

\section{Perturbative analysis tools}
\label{sec:SW}
In circuit QED it is often useful to derive effective Hamiltonians describing low-energy dynamics in the presence of perturbations. One such tool is the Born-Oppenheimer approximation discussed in Appendix~\ref{sec:elim}. Another tool is Schrieffer-Wolff (SW) perturbation theory which is commonly used for the analysis of two-qubit gates and couplers. Here we review some of the basic features, while we refer to \cite{BDL:SW, TW:nonadiabatic, petrescu:nonRWA} for a general mathematical analysis and some applications.

Given is a Hamiltonian $H=H_0+\epsilon V$ where $H_0$ is easy to (block)-diagonalize in some number of $m$ blocks. Typically $H_0$ is the Hamiltonian of some uncoupled qubits and oscillators where one block represents the low-energy uncoupled states. Let $\Delta$ be the minimal gap between the spectra in different blocks of $H_0$. $V$ is a perturbation which can be a sum of a block-off-diagonal perturbation $V_{\rm OD}$ and a block-diagonal perturbation $V_{\rm D}$. In principle, one assumes that $\epsilon ||V|| < \Delta/2$ for the perturbation theory to apply, but for many-body systems in which $||V||$ grows with system-size while $\Delta$ is constant, one has to work under relaxed constraints \cite{BDL:SW}.

Pictorially we have for, say, two blocks:
\begin{equation*}
H_0=\begin{pmatrix}
   \colorbox{black}{\,} &   \\    
   & \colorbox{black}{\,}  \\    
\end{pmatrix}, 
V_{\rm OD}=\begin{pmatrix}
   & \colorbox{black}{\,}  \\    
  \colorbox{black}{\,} & \\    
\end{pmatrix},
V_{\rm D}=\begin{pmatrix}
   \colorbox{black}{\,} &   \\    
   & \colorbox{black}{\,}  \\    
\end{pmatrix},
\end{equation*}
where the black boxes only contain non-zero matrix elements. The goal is to absorb the effect of the perturbation $V$ by performing a unitary rotation $U=\exp(S)$ on $H$ such that 
\begin{equation}
H_{\rm eff}=U H U^{\dagger},
\label{eq:defeff}
\end{equation}
is again block-diagonal in the same way as $H_0$. Additionally, $S$ itself is block-off-diagonal and anti-Hermitian, satisfying $S=-S^{\dagger}$ by definition. The purpose of this is to obtain an $H_{\rm eff}$ which has a relatively simple description, allowing one to analytically understand the effect of the perturbation. This is useful in particular for many-body systems which resist numerical investigation.
The unitary $U$ is the Schrieffer-Wolff (SW) transformation. This unitary maps the uncoupled qubit basis set by $H_0$ to a so-called `dressed' basis.
$S$ can be perturbatively determined, that is, one takes
\begin{equation}
    S=\sum_{k=1}^{\infty} \epsilon^k S^{(k)},
\end{equation}
and requires that each $S^{(k)}$ is block-off diagonal, i.e., for two blocks $S^{(k)}=\begin{pmatrix}
   & \colorbox{black}{\,}  \\    
  \colorbox{black}{\,} & \\    
\end{pmatrix}$. The requirement that $H_{\rm eff}$ is fully block-diagonal at each perturbative order constrains, and allows one to resolve, $S^{(k)}$ at each order in $\epsilon$. This goes as follows.
From Eq.~\eqref{eq:defeff}, we have 
\[
H_{\rm eff}=H+[S,H]+\frac{1}{2}[S,[S,H]]+\ldots
\]
In this expansion, we consider terms of order $\epsilon^0$, of order $\epsilon^1$, of order $\epsilon^2$ etc. separately. At zeroth order we have $H_{\rm eff}^{(0)}=H_0$ and $U=I$. In first order we have 
\begin{equation}
H_{\rm eff}^{(1)}=H_0+\epsilon V+\epsilon [S^{(1)}, H_0],
\end{equation}
which needs to be block-diagonal. Since $H_0$ is block-diagonal (D) and $S^{(1)}$ is block-off-diagonal (OD), $[S^{(1)}, H_0]$ is block-off-diagonal, in short $[{\rm D},{\rm OD}]={\rm OD}$. Thus, to cancel any block-off-diagonal part at first order we need to have 
\begin{equation}
    V_{\rm OD}=[H_0,S^{(1)}].
\end{equation}
This allows one to determine $S^{(1)}$, say, by evaluating this expression in the eigenbasis $\{\ket{\psi_k}\}$ of $H_0$, i.e.,
\[
\bra{\psi_k}  S^{(1)} \ket{\psi_l}=\frac{\bra{\psi_k} V_{\rm OD} \ket{\psi_l}}{E_k-E_l}.
\]
Thus in first order we have 
\[
H_{\rm eff}^{(1)}=H_0+\epsilon V_{\rm D}.
\label{eq:firstorderSW}
\] In second order we have
\begin{multline}
 e^{S}(H_0+\epsilon V) e^{-S}= H_{\rm eff}^{(1)} +\epsilon^2 [S^{(1)}, V_{\rm OD}]+\epsilon^2 [S^{(1)},V_{\rm D}]+ \frac{\epsilon^2}{2}[S^{(1)},\underbrace{[ S^{(1)}, H_0]]}_{-V_{\rm OD}} +\epsilon^2 [S^{(2)},H_0] +O(\epsilon^3) \\ = H_0+\frac{\epsilon^2}{2}[S^{(1)}, V_{\rm OD}] +\epsilon^2 [S^{(1)},V_{\rm D}]+\epsilon^2 [S^{(2)},H_0] +O(\epsilon^3).
\end{multline}
Again we impose that $H_{\rm eff}^{(2)}$ must be block-diagonal at order $\epsilon^2$. 
We can now use that $[{\rm D},{\rm OD}]={\rm OD}$ and {\em only} for two blocks $[{\rm OD},{\rm OD}]={\rm D}$. Since $S^{(2)}$ is block-off-diagonal, this implies that for the case of two blocks one has
\[
H_{\rm eff}^{(2)}=H_0+\frac{\epsilon^2}{2}[S^{(1)}, V_{\rm OD}],
\]
and
\begin{align}
[S^{(1)},V_{\rm D}]+[S^{(2)},H_0]=0,
\label{eq:s2}
\end{align}
which can be solved to determine $S^{(2)}$ (when $V_{\rm D}=0$, $S^{(2)}=0$). If we have more than two blocks, we can split $[S^{(1)},V_{\rm OD}]$ into a block-diagonal part which contributes to $H_{\rm eff}^{(2)}$ and a block-off-diagonal part which is added to Eq.~\eqref{eq:s2}. One can continue with order $\epsilon^3$ etc. to develop a perturbative expansion of $H_{\rm eff}$. 

It is important not to forget that the effective Hamiltonian is the Hamiltonian of the degrees of freedom in the {\em dressed} basis. For example, unwanted weak couplings can often lead to effective Hamiltonians with $ZZ$ crosstalk between transmon qubits, so this should be interpreted as $ZZ$ crosstalk between dressed qubits. Dressed qubits are the computational qubits in which the logic of the computation should take place, as the perturbative couplings cannot be removed. When time-dependent microwave drives or DC current or voltages pulses are present, then one should consider the effect of these temporal drives in this computational basis.

As a simple example we can apply a SW analysis to the Jaynes-Cummings model in which two qubits and a resonator are off-resonantly coupled, see Exercise~\ref{exc:SW}. Note that the eigenstates and eigenspectrum of the Jaynes-Cummings model are in principle straightforward to determine analytically, as the Hamiltonian is block-diagonal in two-dimensional sectors with a fixed number of excitations, so in principle no perturbation theory is needed here. The strengths of frequency shifts and couplings that one obtains in this manner deviate from a more complete analysis in which one does not truncate the transmon to a two-level system, but uses Eq.~\eqref{eq:gen_jc} instead in a SW analysis, see e.g. the expressions in Refs.~\cite{Blais_2021, juelich-lec}. 

\begin{Exercise}[title={Schrieffer-Wolff transformation for two qubits off-resonantly coupled to a resonator},label=exc:SW]
We consider the problem of two (transmon) qubits linearly coupled to a resonator. In this calculation we directly use a two-level approximation of the transmon qubit and consider the qubits to have Hamiltonian
\begin{equation}
\frac{H_{k}}{\hbar}= \frac{\Omega_k}{2} \sigma_k^z, \quad k=1,2,
\end{equation}
where $\Omega_k$ is the characteristic frequency of the qubit $k$. Here we assume the quantum optics convention $\sigma_{k}^z = \ket{e}\bra{e}_k - \ket{g}\bra{g}_k$, $\sigma_{k}^+ = \ket{e}\bra{g}_k$, $\sigma_{k}^- = \ket{g}\bra{g}_k$. The interaction with the resonator is described by the bilinear term
\begin{equation}
\frac{V_{k}}{\hbar}= g_k (\hat{a}^{\dagger} \sigma_k^{-} + \hat{a} \sigma_k^{+} ).
\end{equation}
We thus consider the total Hamiltonian
\begin{equation}
\frac{H}{\hbar}= H_r+ H_1+ H_2 + V_1+V_2 = \omega_r \hat{a}^{\dagger} \hat{a} +  \frac{\Omega_1}{2} \sigma_1^z +  \frac{\Omega_2}{2} \sigma_2^z + g_1 (\hat{a}^{\dagger} \sigma_1^{-} + \hat{a} \sigma_1^{+}) + g_2 (\hat{a}^{\dagger} \sigma_2^{-} + \hat{a} \sigma_2^{+}),
\end{equation}
with $\omega_r$ the resonator frequency. We want to construct an effective Hamiltonian using the Schrieffer-Wolff (SW) transformation, i.e., $H_{\rm eff}= e^{S} H e^{-S}$ via a perturbative expansion of the operator $S$. In this exercise we ignore $\hbar$ from now (set it to 1) for notational convenience.
\Question  What are the small (dimensionless) parameters of the perturbation in our case when the qubit frequencies and the resonator frequency are far-detuned? 
\Question Recalling the perturbative expansion of the $S$ operator of the SW transformation,
\begin{equation}\label{eq:SW1}
V=[H_0, S^{(1)}],
\end{equation}
confirm that the first order approximation of the operator $S$ equals
\begin{equation}
S^{(1)} = \sum_{k=1}^{2} S_k^{(1)}=\sum_{k=1}^{2} \lambda_k \bigl(\sigma_k^{+} \hat{a} - \hat{a}^{\dagger} \sigma_k^{-} \bigr), 
\end{equation}
and determine the value of $\lambda_{k}$. Notice that the subscript $k$ denotes that the operator acts only on the $k$-th qubit. 
\Question Obtain the effective Hamiltonian corresponding to the first order approximation $S^{(1)}$ of $S$ and show that this, amongst others, includes a coupling between the two qubits of the form
\begin{equation}
J \bigl(\sigma_1^+ \sigma_2^- + \sigma_1^- \sigma_2^+ \bigr), J = \frac{g_1 g_2}{2} \biggl(\frac{1}{\Delta_1} + \frac{1}{\Delta_2} \biggr), 
    \label{eq:ff}
\end{equation}
where $\Delta_k \equiv \Omega_k-\omega_r$. How do you interpret the result, i.e., what is the meaning of all the new terms? In your results, use the definition $\chi_k \equiv \frac{g_k^2}{\Delta_k}$.
\end{Exercise}

\begin{Answer}[ref={exc:SW}]
\Question Going to the rotation frame at frequency $\omega_r$ for each qubit and the resonator (see Exercise \ref{exc:rot-frame}) gives the Hamiltonian
\begin{equation}
    \frac{\tilde{H}}{\hbar}=\sum_{k=1,2} H_{0,k}+V_k,
    \end{equation}
with $H_{0,k}=\frac{\Omega_k-\omega_r}{2}\sigma_k^z$. In this rotating frame the perturbative parameters become clear. We can apply perturbation theory when the gap in $H_{0,k}$ is much larger than the strength of the perturbation $V_k$. The gap is $|\omega_r-\Omega_k|$ while $||V_k||=\max_{\psi, ||\psi||=1} \sqrt{\bra{\psi} V_k^{\dagger} V_k\ket{\psi}}=|g_k|\sqrt{n}$, i.e., the norm increases if we evaluate it on higher Fock states.
    Thus one requires
\begin{equation}
\biggl \lvert \frac{\sqrt{n} g_k}{\omega_r -\Omega_k}  \biggr \rvert \ll 1,\quad k=1,2.
\end{equation} 
in order for the approximation to work when the resonator has $n$ excitations. A less accurate answer is that at least one should have 
\begin{equation}
\biggl \lvert \frac{g_k}{\omega_r -\Omega_k} \biggr \rvert \ll 1, \quad k=1,2.
\end{equation}
\Question 
We do the SW analysis in the original frame and get 
\begin{equation}
[H_0, S^{(1)}]= \biggl[\omega_r \hat{a}^{\dagger} \hat{a} +\sum_{k=1}^{2} H_{k}, \sum_{k=1}^2 S_{k}^{(1)} \biggr]= \sum_{k=1}^2 \left([\omega_r \hat{a}^{\dagger} \hat{a}, S_{k}^{(1)}]+[H_{k}, S_{k}^{(1)}]\right),
\end{equation}
and we need to compute two kinds of commutators
\begin{equation}
[\omega_r \hat{a}^{\dagger} \hat{a}, S_{k}^{(1)}]= [\omega_r \hat{a}^{\dagger} \hat{a}, S_{k}^{(1)}] = \bigl[\omega_r \hat{a}^{\dagger} \hat{a}, \lambda_k \bigl(\sigma_k^{+} \hat{a} - \hat{a}^{\dagger} \sigma_k^{-} \bigr) \bigr] = -\omega_r \lambda_k \bigl(\sigma_k^{+} \hat{a} + \hat{a}^{\dagger} \sigma_k^{-} \bigr),
\end{equation}
and
\begin{equation}
[H_k, S_{k}^{(1)}] = \biggl[ \frac{\Omega_k}{2} \sigma_k^z, \lambda_k \bigl(\sigma_k^{+} \hat{a} - \hat{a}^{\dagger} \sigma_k^{-} \bigr) \biggr] =  \Omega_k \lambda_k \bigl(\sigma_k^{+} \hat{a} + \hat{a}^{\dagger} \sigma_k^{-} \bigr).
\end{equation}
Thus, in order for Eq.~\eqref{eq:SW1} to hold, we need to set
\begin{equation}
\lambda_k = \frac{g_k}{\Omega_k - \omega_r} = \frac{g_k}{\Delta_k}, \quad k=1,2,
\end{equation}
so that $\chi_k = g_k \lambda_k$.
\Question The effective Hamiltonian corresponding to the first-order expansion $S^{(1)}$ is given by
\begin{equation}
H_{\rm eff}= H_0+\frac{1}{2} [S^{(1)}, V].
\end{equation}
 Thus we need to compute the commutator
\begin{equation}
[S^{(1))}, V]= \sum_{k=1}^{2} \sum_{k'=1}^{2} [S_k^{(1)}, V_{k'}].
\end{equation}

Again we have to compute two kinds of commutators:
\begin{multline}
[S_k^{(1)}, V_k] =  \chi_k \bigl[ \sigma_k^{+} \hat{a} - \hat{a}^{\dagger} \sigma_k^{-}, \sigma_k^{+} \hat{a} + \hat{a}^{\dagger} \sigma_k^{-} \bigr] = \chi_k \hat{a} \hat{a}^{\dagger}[\sigma_k^+, \sigma_k^-] - \chi_k\hat{a}^{\dagger}\hat{a} [\sigma_k^-, \sigma_k^+] = \\
\chi_k \hat{a} \hat{a}^{\dagger}\sigma_k^z + \chi_k \hat{a}^{\dagger}\hat{a} \sigma_k^z = \chi_k \sigma_k^z + 2 \chi_k \sigma_k^{z} \hat{a}^{\dagger}\hat{a}. 
\end{multline}
and for $k' \neq k$
\begin{equation}
[S_k^{(1)}, V_{k'}] =  \frac{g_{k'} g_k}{\Delta_k}\bigl[ \sigma_k^{+} \hat{a} - \hat{a}^{\dagger} \sigma_k^{-}, \sigma_{k'}^{+} \hat{a} + \hat{a}^{\dagger} \sigma_{k'}^{-} \bigr] =  \frac{g_{k'} g_k}{\Delta_k} \bigl( \sigma_k^{+} \sigma_{k'}^- + \sigma_k^{-} \sigma_{k'}^+  \bigr).
\end{equation}
We obtain the effective Hamiltonian 
\begin{multline}
H_{\rm eff} = \frac{1}{2} ( \Omega_1 + \chi_1) \sigma_1^{z} + \frac{1}{2} ( \Omega_2 + \chi_2 ) \sigma_1^{z} +  (\omega_r + \chi_1 \sigma_1^z + \chi_2 \sigma_2^z ) \hat{a}^{\dagger} \hat{a} 
+ J \bigl(\sigma_1^+ \sigma_2^- + \sigma_1^- \sigma_2^+ \bigr),
\end{multline}
where we have defined the coupling 
\begin{equation}
J = \frac{g_1 g_2}{2} \biggl(\frac{1}{\Delta_1} + \frac{1}{\Delta_2} \biggr). 
\label{eq:Jcoup}
\end{equation}
The interpretation is the following. Both qubits get a shift of the original frequency by $\chi_{1}$ resp. $\chi_2$. This shift is usually called the `Lamb shift' as it is a shift in the frequency of a (hydrogen) atom coupled to the electromagnetic field, which occurs even when the electromagnetic field is in its vacuum state. In addition, the resonator frequency is shifted by $\pm \chi_k$ depending on the state of the $k$th qubit. This term is called the dispersive shift. Alternatively, this term can be viewed as a frequency shift of the qubit depending on the number of excitations in the resonator; then, it is sometimes called an AC-Stark shift. In addition, the resonator mediates an exchange (flip-flop) coupling between the qubits with coupling parameter $J$ whose strength varies with $g_i$ in Eq.~\eqref{eq:g_coup} and the detuning as shown in Eq.~\eqref{eq:Jcoup}.
\end{Answer}

In the previous exercise you have seen that resonators can be used to couple qubits via a flip-flop $J (\sigma^+_1 \sigma^-_2+\sigma^-_1 \sigma^+_2)$ interaction. For this flip-flop coupling between transmons, either through a direct interaction or via a bus resonator, one targets $J/2\pi=5-20$ MHz when the coupling should be `on'. This strength indirectly determines the speed at which one can apply a CZ gate between two transmons coupled via a bus-resonator as in, for example, Ref.~\cite{rol} where the resulting CZ gate time is 40 ns. 

\begin{Exercise}[title=Dispersive readout of a superconducting qubit, label=exc:disp-meas]
In Eq.~\eqref{eq:jc} in Section~\ref{sec:JC-coupling} we have seen that the linear interaction of a two-level system with a resonator mode can be described by a Jaynes-Cummings model. If we further assume that qubit and resonator are far-detuned (as in Exercise \ref{exc:SW}) the Jaynes-Cummings Hamiltonian in Eq.~\eqref{eq:jc} can be approximated to second order in perturbation theory by the dispersive Hamiltonian 
\begin{equation}
 H_{\rm disp}= \frac{\hbar}{2} (\Omega + \chi ) \sigma_z + \hbar (\omega_r+\chi \sigma_z )\hat{a}^{\dagger} \hat{a},
 \label{eq:dispersH}
 \end{equation}
with $\chi=g^2/\Delta$ with $\Delta= \Omega-\omega_r$.
In practice, the dispersive shift ranges from $\chi/2\pi=0.5-10$ MHz, so quite small compared to the transmon frequency.
Let us consider the scenario in which the resonator mode also weakly couples capacitively to a semi-infinite transmission line and we can use the input-output formalism discussed in Section~\ref{sec:HL_derive}, resulting in Eqs.~\eqref{eq:in}, \eqref{eq:out}, \eqref{eq:inout} which we will analyze here.
\Question Consider the two cases in which the qubit is either in the excited state $\ket{e}$ or in the ground state $\ket{g}$. By taking the Fourier transform of Eqs.~\eqref{eq:in} and \eqref{eq:inout} show that the Fourier transform of the output field $\hat{b}_{\rm out}[\omega]$ is related to the Fourier-transformed input field as 
\begin{equation}
\hat{b}_{\rm out}[\omega]= r_{e,g}(\omega) \hat{b}_{\rm in}[\omega],
\end{equation}
where the reflection coefficient $r_{e, g}(\omega)$ has modulus 1. Give the phase of the reflection coefficient depending on the state of the qubit. Here we define the Fourier transform of a generic Heisenberg operator $A(t)$ \footnote{Please note the difference between the Fourier transform of a Heisenberg operator and the mode label $\omega'$ of a (non-Heisenberg) operator such as $\hat{b}_{\omega'}$. Switching to the Heisenberg picture, one has $\hat{b}_{\omega'}(t)$ which can be Fourier-transformed as $\hat{b}_{\omega'}[\omega]\propto \delta(\omega-\omega')$.} as  
\begin{equation}
\hat{A}[\omega]=\frac{1}{\sqrt{2 \pi}} \int_{-\infty}^{+\infty} dt \,e^{i \omega t} \hat{A}(t).
\label{eq:defFT}
\end{equation}
 \Question Suppose that the input state at $t=t_0=0$ is some multi-mode coherent state defined by the equation $\ket{\Psi_{\rm in}}=\ket{\{\alpha(\omega)\}}= D(\{\alpha(\omega)\}) \ket{0}$ where we use the continuous displacement operator
\begin{equation}
D(\{\alpha(\omega)\}) = \exp \biggr[ \int_{0}^{\infty} d\omega \biggl(\alpha(\omega)\hat{b}^{\dagger}_{\omega}-\alpha^*(\omega) \hat{b}_{\omega}  \biggr) \biggl],
\end{equation}
with $\alpha(\omega)$ an envelope function and $\ket{0}$ the vacuum state for all modes. Given the result of question 1, compute $\langle b_{\rm out}[\omega]\rangle $ when the qubit is either in the state $e$ or $g$. Around which value $\omega_c$ would you center the envelope function $\alpha(\omega)$ so that the output field has a large dependence on the state of the qubit?  
 \par 
\emph{Hint: remember the property
\begin{equation}
D^{\dagger}(\{\alpha(\omega)\}) \hat{b}_{\omega \sp{\prime}} D(\{\alpha(\omega)\})= \hat{b}_{\omega \sp{\prime}} + \alpha(\omega\sp{\prime}).
\label{eq:hint_disp}
\end{equation}
}
\Question By using the Fourier transform of Eq.~\eqref{eq:in}, assuming the qubit is in either the state $e$ or $g$, determine an expression for $\hat{a}^{\dagger}(t) \hat{a}(t)$ whose expectation is the number of photons in the resonator as a function of time, i.e., $\bar{n}(t)=\langle \hat{a}^{\dagger}(t) \hat{a}(t) \rangle$. Determine $\bar{n}(t)$ in case $\alpha(\omega)=\alpha \sqrt{2\pi} \delta(\omega-\omega_c)$ in the multimode input state $\ket{\Psi_{\rm in}}$, using the result from the previous question.
\end{Exercise}

\begin{Answer}[ref={exc:disp-meas}]
 \Question Let the qubit be in the excited state $\ket{e}$. In this subspace the resonator mode still behaves as a harmonic oscillator but with a shifted frequency $\omega_e= \omega_r+\chi$. If instead the qubit is in the state $\ket{g}$, the resonator has frequency $\omega_g= \omega_r-\chi$. Consequently, in the Heisenberg picture we expect the operator $\hat{a}$ to evolve differently depending on the state of the qubit and to produce a qubit-state dependent output field. \par 
Taking the Fourier transform of the evolution equation in the Heisenberg picture for $\hat{a}(t)$ we readily get
\begin{equation}
-i \omega \hat{a}[\omega]= -i \omega_{e/g} \hat{a}[\omega]-\frac{\kappa}{2} \hat{a}[\omega]+\sqrt{\kappa} \hat{b}_{\rm in}[\omega],
\end{equation}
which gives immediately
\begin{equation}
\hat{a}[\omega]= \frac{\kappa \hat{b}_{\rm in}[\omega]}{i (\omega_{e/g}-\omega)+\kappa/2}.
\label{eq:FTa}
\end{equation}
From the Fourier-transformed input-output relation $\hat{b}_{\rm out}[\omega]= \sqrt{\kappa} \hat{a}[\omega] -\hat{b}_{\rm in}[\omega]$ we get
\begin{equation}
\hat{b}_{\rm out}[\omega] = \frac{\kappa/2-i(\omega_{e/g}-\omega)}{\kappa/2+i(\omega_{e/g}-\omega)} \hat{b}_{\rm in}[\omega] = r_{e/g}(\omega) \hat{b}_{\rm in}[\omega],
\end{equation}
where we define the reflection coefficient
\begin{equation}
r_{e/g}(\omega) = \frac{\kappa/2+i(\omega-\omega_{e/g})}{\kappa/2-i(\omega-\omega_{e/g})}.
\end{equation}
The reflection coefficient clearly has modulus one, being of the general form $r_e=(C+i D)/(C-i D)$ with phase $\theta= \arctan [-2 C D/(C^2-D^2)]= -\arctan [2 C D/(C^2-D^2)]$. We can use the formula 
\begin{equation}
\frac{1}{2}\arctan \biggl(\frac{2 x}{1-x^2}\biggr) = \arctan x
\end{equation} 
to get 
\begin{equation}
r_{e/g}(\omega)= e^{i \theta_{e/g} (\omega)},
\end{equation}
with
\begin{equation}
\theta_{e/g} (\omega)= -2 \arctan \biggl[ \frac{2(\omega-\omega_{e/g})}{\kappa} \biggr]= -2 \arctan \biggl[ \frac{2 (\omega-\omega_r)}{\kappa} \pm 2 \frac{\chi}{\kappa} \biggr].
\end{equation}
\Question Using Eq.~\eqref{eq:hint_disp} we have 
\[
\bra{\Psi_{\rm in}} \hat{b}_{\omega}(t_0=0) \ket{\Psi_{\rm in}}=\alpha(\omega),
\]
since $\bra{0} \hat{b}_{\omega'} \ket{0}=0$.
Then we write down and work out the expectation of the operator $\hat{b}_{\rm out}[\omega]$
\begin{multline}
    \bra{\Psi_{\rm in}} \hat{b}_{\rm out}[\omega] \ket{\Psi_{\rm in}}=
\bra{\Psi_{\rm in}}  r_{e/g}(\omega) \hat{b}_{\rm in}[\omega]  \ket{\Psi_{\rm in}} = \\
\frac{-r_{e/g}(\omega)}{2 \pi} \int_{-\infty}^{+\infty} dt e^{i \omega t} \int  d \omega \sp{\prime}   e^{-i \omega \sp{\prime}t}  \bra{\Psi_{\rm in}}\hat{b}_{\omega'}(t_0=0)  \ket{\Psi_{\rm in}}. \notag
\end{multline}
where we used the definition in Eq.~\eqref{eq:in} and the Fourier transform.

With $\frac{1}{2\pi}\int_{-\infty}^{+\infty} dt e^{i\omega t}= \delta(\omega)$, we get
\begin{equation}
\langle \hat{b}_{\rm out}[\omega]\rangle =  -r_{e/g}(\omega) e^{i \omega t_0} \alpha(\omega).
\end{equation}
From the result of the first point of the exercise we know that the functions $r_{e,g}(\omega)$ are particularly sensitive to the state of the qubit for values of $\omega$ close to the bare cavity frequency $\omega_r$. For this reason we would like the output field to preserve this dependency. We conclude that we would like the envelope function $\alpha(\omega)$ to be particularly peaked at values close to $\omega_r$. Typically, one chooses the carrier (center) frequency to be the bare cavity frequency $\omega_r$ and a narrow frequency profile around it, corresponding to the pulse shape.
\Question From Eq.~\eqref{eq:FTa} and Fourier-transforming back, we get
\[
\bar{n}(t)=\langle \hat{a}^{\dagger}(t) \hat{a}(t) \rangle = \frac{1}{2\pi}\int d\omega \int d\omega' e^{i t (\omega-\omega')} \frac{\kappa\langle \hat{b}_{\rm in}^{\dagger}[\omega]]\hat{b}_{\rm in}[\omega']\rangle}{
(\kappa/2 - i(\omega_{e/g}-\omega))(\kappa/2 + i(\omega_{e/g}-\omega'))}.
\]
We again use 
\[
\hat{b}_{\rm in}[\omega]=\frac{-1}{2\pi}\int dt e^{i \omega t}\int d\omega' e^{-i \omega(t-t_0)}\hat{b}_{\omega'}(t_0)=-\hat{b}_{\omega}(t_0=0).
\]
(note that the minus sign is just due to the convention of how we define $b_{\rm in}$). Then we have 
\begin{multline}
   \bra{\Psi_{\rm in}} \hat{b}^{\dagger}_{\omega}(t_0=0) \hat{b}^{\dagger}(\omega')(t_0=0) \ket{\Psi_{\rm in}}=\bra{0} (\hat{b}^{\dagger}_{\omega}+\alpha^*(\omega))(\hat{b}_{\omega'}+\alpha(\omega'))\ket{0}=\\
   \alpha^*(\omega)\alpha(\omega')=|\alpha|^2 2\pi \delta(\omega-\omega_c) \delta(\omega'-\omega_c).
 \end{multline}
This gives
\[
\bar{n}(t)=\frac{\kappa |\alpha|^2}{\kappa^2/4+(\omega_{e/g}-\omega_{c})^2}.
\]
which is a steady-state number of photons in the resonator. When $\omega_c$ is chosen as $\omega_e$ (qubit in $e$) or $\omega_g$ (when qubit is in $g$), this number of photons is maximal (driving at resonance), and only depends on $\kappa$ and the amplitude $\alpha$ of the plane wave.
\end{Answer}

\chapter{Linear networks and black-box quantization}
\label{chap:ln}

In order to obtain the Lagrangian and the Hamiltonian of a circuit, we have so far made the tacit assumption that a lumped-element circuit representation of our system is available and accurate. This is clearly not always the case, for instance when a transmon qubit is put inside a 3D rectangular microwave cavity~\cite{paik2011}. 

At this point, it should not be immediately obvious how a Hamiltonian for this system can be derived and most of all that an electrical-circuit representation of the problem exists. The theory underlying the superconducting devices and chips is in principle quantum electrodynamics ---with a continuous set of modes---  interacting with superconducting matter. On the other hand, a finite electrical circuit always has a finite number of modes, although one can of course consider the limit of the number of modes going to infinity, like for a transmission line as in the previous chapter.

Having a physics or electrical-engineering background, one should know that continuous microwave structures are usually described in terms of scattering parameters, a concept that we review in Section~\ref{subsec:scat}. By numerically simulating the designed microwave structures, using electromagnetic analysis software tools, one can determine parameters such as the impedance (matrix) as a function of frequency, which directly relate to the scattering parameters. These parameters represent complete knowledge of the behaviour of the system and, accordingly, we expect that if a Hamiltonian or Lagrangian formulation exists, it must be possible to derive it from them. In this chapter we show how this can be done via a method called black-box quantization (Section~\ref{sec:bb}), extended here in comparison to its first introduction in Ref.~\cite{Nigg.etal.2012:BlackBoxCqed}. In this method, we do not need to pass necessarily through the Lagrangian of the system, but can derive the Hamiltonian directly. 

IBM has developed software for quantum device design, known as Qiskit Metal, that uses electromagnetic simulation as input, available at \url{https://qiskit.org/metal/}. It is based on the participation ratio modeling proposed in Ref.~\cite{Minev2021} which we will discuss in Section~\ref{subsec:epratio}. Another software tool for converting parameters obtained in electromagnetic simulations to a Hamiltonian description including losses was developed in Ref.~\cite{Gely_2020}; there are some exercises using this software in Section~\ref{sec:qucat}.

\section{LTI networks}
\label{sec:mport}

In this section, we provide a short introduction to {\em linear time-invariant} (LTI) networks. While the concept can be applied to any network in which we have some inputs and some outputs, we only consider the case in which inputs and outputs are either currents or voltages. \par  

In Fig.~\ref{fig:el_network} we depict a general $N$-port electromagnetic network. We can think of a port as a pair of terminals with a well-defined voltage between them, such that the current that goes in at one terminal is equal to the current that goes out at the other terminal. We refer the reader to Section~\ref{sec:mtn} and to Chapter~$4$ of Ref.~\cite{pozar} for a more thorough discussion of the concept of a port. Let $\vect{\vv}(t)$ and $\vect{\ii}(t)$ denote the $N$-dimensional vectors of port voltages and port currents respectively at time $t$. A general electromagnetic network relates $\vect{\vv}(t)$ and $\vect{\ii}(t)$ by some rule. When the currents are the inputs and the voltages are the outputs, an electrical LTI network relates currents and voltages by a convolution, i.e.,
\begin{equation}\label{eq:lti}
\vect{\vv}(t) = \int_{-\infty}^{+\infty} d \tau \,\mat{\zz}(t - \tau) \vect{\ii}(\tau).
\end{equation}
The $N \times N$ matrix $\bm{\zz}(t)$ is called the {\em impedance} matrix in the time domain (unit: ohm/sec.). The fact that $\mat{\zz}$ depends only on $t-\tau$ captures the time-invariance of the response. If, instead, we exchange the role of inputs and outputs, we have
\begin{equation}\label{eq:lti2}
\vect{\ii}(t) = \int_{-\infty}^{+\infty} d \tau \,\mat{\yy}(t - \tau) \vect{\vv}(\tau),
\end{equation}
where $\mat{\yy}(t)$ is called the {\em admittance} matrix in the time domain (units: siemens/sec. or inverse (ohm*sec.)).

The matrix elements of $\mat{\zz}(t)$ have a simple interpretation. Namely, $\zz_{kl}(t)$ is the voltage $\vv_{k}(t)$ when a Dirac-delta current $\delta(t)$ is applied at port $l$, while all other port currents are zero (hence open-circuited). Thus, $\mat{\zz}(t)$ has the interpretation of an impulse-response function. Similarly, 
$y_{kl}(t)$ is the current $i_k(t)$ when a Dirac delta voltage $\delta(t)$ is applied at port $l$ while all other ports voltages are zero (hence short-circuited). The defining equation of an LTI network, Eq.~\eqref{eq:lti}, aims at capturing delay effects in the system, while still keeping the relation linear in the input.

\begin{figure}
    \centering
    \includegraphics[scale=0.15]{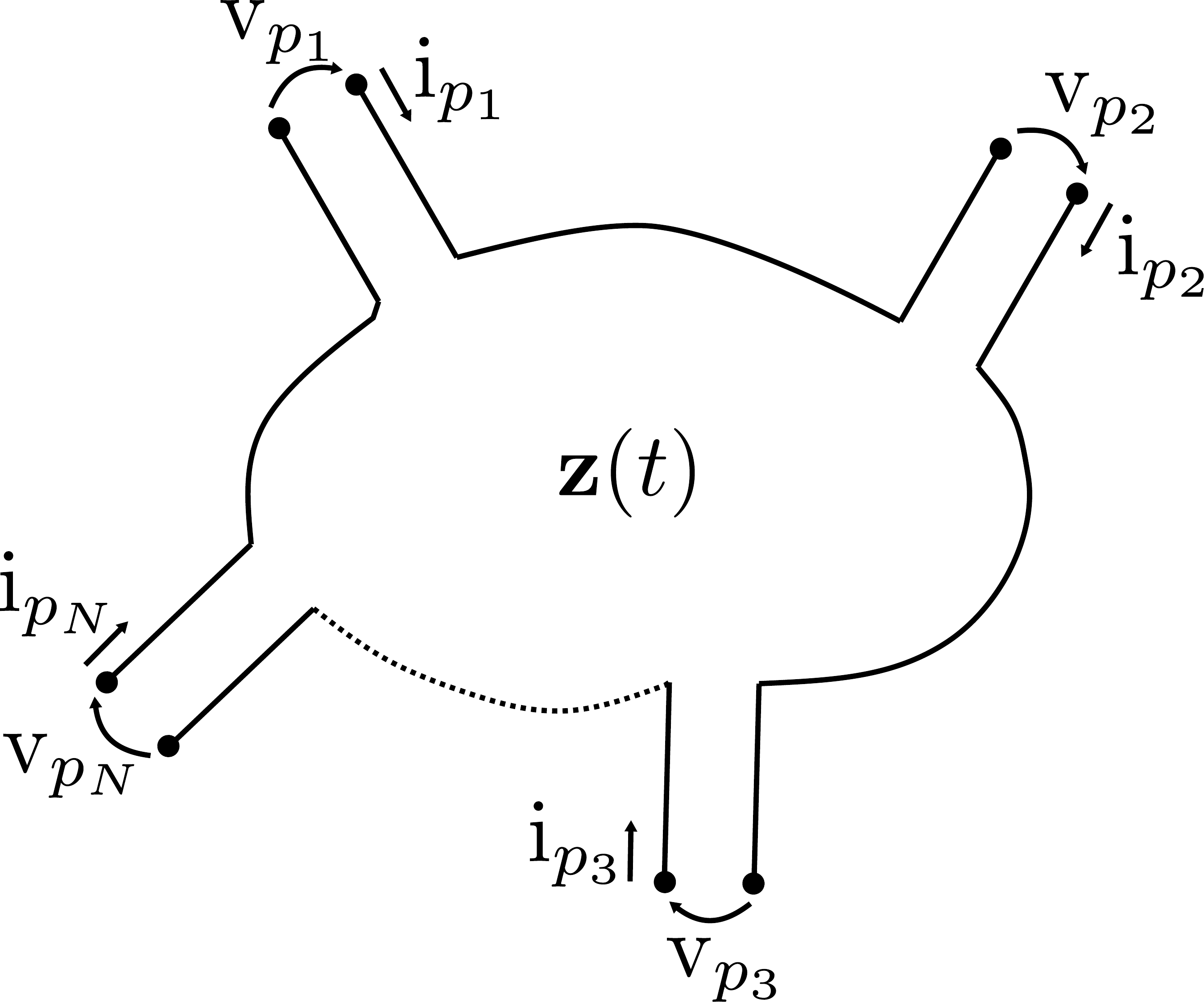}
    \caption{Electrical $N$-port network with voltage $\vv_{p_i}$ across port $p_i$ and current $\ii_{p_i}$ entering and exiting port $p_i$.}
    \label{fig:el_network}
\end{figure}

We can argue that a relation like Eq.~\eqref{eq:lti} for general inputs and outputs can be a good description of many systems\footnote{Many input-output systems, not only electromagnetic, can be treated to a good approximation as LTI systems. Let us consider a phenomenon that was influencing our daily lives a while ago: as input we take a variable that quantifies the government measures to limit the spread of COVID-19, while as output we have a variable which quantifies the effect of the measures, like the reduction of the number of cases. As we know from our experience, if the government applies a measure at time $t$ it will only have an effect on the reduction of cases after some characteristic time $t_c$. Moreover, we somehow expect that at least for \emph{soft} measures the system will behave linearly in the input, while one should expect a saturation when the measures are \emph{strong}. Therefore, if the measures are soft, we expect a convolution relation to be able to capture the behaviour of the system that can thus be approximately treated as an LTI system. There are many other examples where similar reasoning applies, which explains why LTI networks are ubiquitous in nature.}. However, Eq.~\eqref{eq:lti} still allows a property that is completely nonsensical, namely we do not expect that an input at time $t$ can have an effect on the output at time $t'<t$, while this is still allowed by Eq.~\eqref{eq:lti}. Consequently, we add the additional requirement that the network is causal, which can be formulated in terms of the impulse response function as \footnote{Note that here we are assuming \emph{instantaneous} causality, while in principle one should also take the speed of light into account.}
\begin{equation}\label{eq:z_caus}
\forall t<0,\;\mat{\zz}(t) = 0,  \;\; \forall t<0,\;\mat{\yy}(t) = 0.
\end{equation}
Notice that this implies that the upper limit of the convolution integrals in Eqs.~\eqref{eq:lti} and~\eqref{eq:lti2} can be taken to be $t$ and not $+\infty$. 
The interested reader can read more on causality in LTI networks in Ref.~\cite{triverio2007}. We note that, since the response of the electrical network is linear, it precludes the use of Josephson junctions: these are exactly the elements that one can attach at the ports. The network can otherwise be lossy (include resistances) or lossless.

\subsection{Laplace and Fourier domain}
\label{subsec:LF}

Instead of time-dependent impedance and admittance matrices, it is more common to use these matrices in the Laplace and Fourier domain which we define here. Let $\vect{f}(t)$ be a generic time-dependent vector with coefficients $f_i: \mathbb{R} \rightarrow \mathbb{R}^N$. The bilateral Laplace transform of $\vect{f}(t)$ is defined as 
\begin{equation}\label{eq:laplace}
\vect{F}(s) = \int_{-\infty}^{\infty} dt\, e^{-s t} \vect{f}(t), \quad s = \sigma + i \omega \in \mathbb{C}.
\end{equation}
The Laplace transform is defined only in the so-called region of convergence (ROC), i.e., the region of the complex plane in which the integral in Eq.~\eqref{eq:laplace} converges. 
The Fourier domain can be seen as a particular case of the Laplace domain with $\sigma=0$. We will use the notation
\begin{equation}
\vect{\mathcal{F}}(\omega) = \vect{F}(i \omega), \quad \omega \in \mathbb{R}.
\end{equation}

Using the convolution theorem for the Laplace or Fourier transform, Eq.~\eqref{eq:lti} in the Laplace domain becomes
\begin{equation}
\vect{V}(s) = \mat{Z}(s) \vect{I}(s),
\end{equation}
and similarly Eq.~\eqref{eq:lti2} gives
\begin{equation}
\vect{I}(s) = \mat{Y}(s) \vect{V}(s),
\end{equation}
from which one can deduce that 
\begin{equation}
\vect{Z}(s)=\vect{Y}^{-1}(s).
\label{eq:inverseY}
\end{equation}
A note on units: one sees that the units of $V(s)$ are volts*sec.~({\em not} volts!), units of $I(s)$ are amps*sec., units of $Z(s)$ are ohms, and units of $Y(s)$ are 1/ohms.

In close analogy with the time domain case, we can give an interpretation to the matrix elements $Z_{kl}(s)$ of the impedance matrix in the Laplace domain as the ratio between $V_{k}(s)$ and $I_{l}(s)$, when all but the $l$th port are open-circuited (currents $I_{k\neq l}(s)=0$).  Mathematically, this translates to
\begin{equation}
Z_{kl}(s) = \frac{V_k(s)}{I_{l}(s)} \biggl \lvert_{I_{k}(s)=0, \, k \neq l}.
\label{eq:def_impedance_mat_elems}
\end{equation}
Similarly, when all but the $l$th port are short-circuited ($V_{k\neq l}(s)=0$) and a voltage $V_l(s)$ is applied, its current response is captured by
\begin{equation}
    Y_{kl}(s)=\frac{I_k(s)}{V_l(s)}\biggl\vert_{V_k(s)=0, \,k \neq l}.
\end{equation}
It should also be mentioned that an impedance or admittance matrix does not always exist for an arbitrary $N$-port circuit. For example, the circuit can be such that $\ii_l(t) \neq 0$ also implies that $\ii_k(t) \neq 0$ (like for a transformer, a two-port element, see Section~\ref{subsec:ideal-trafo}), making Eq.~\eqref{eq:def_impedance_mat_elems} ill-defined and thus the impedance matrix non-existent. In Section~\ref{sec:mtn}, we discuss the physical interpretation of the matrix elements of the admittance matrix in the Fourier domain in relation to the property of non-reciprocity. 

\subsection{Single port}
\label{subsec:sp}

In case there is a single port $N=1$, the network could be a single two-terminal branch, or it can represent a more general network to which we simply have access through this single port. We thus have
\begin{equation*}
V(s)=Z(s) I(s),
\end{equation*}
and 
\begin{equation*}
I(s)=Y(s) V(s),
\end{equation*}
so that function-wise $Z(s)=Y^{-1}(s)$. Let us first review the impedance and admittance of some simple elements and how these are constructed.
In case the branch is an inductor $L$ with $\vv(t)=d\Phi/dt=L d\ii/dt$, its impedance can be found as $Z(s)=s L$ by a Laplace transform and thus $Y(s)=\frac{1}{sL}$. For a capacitor $C$, $\ii(t)=C d\vv(t)/dt$, and thus, $Y(s)=sC$, while $Z(s)=\frac{1}{s C}$. We see that for these branches $Z(i \omega)=\mathcal{Z}(\omega)$ is purely imaginary. For a resistor $\vv(t)=\ii(t) R$, so $\mathcal{Z}(\omega)=R$ is real.

For a general linear network, its impedance can be constructed from its constituent components. For this, we should remember that impedances $Z_1$ and $Z_2$ (like resistors) add when we place them in series, i.e.,~$Z=Z_1+Z_2$. When placed in parallel, we can use $1/Z=1/Z_1 + 1/Z_2$, to obtain $Z = \frac{Z_1 Z_2}{Z_1 + Z_2}$.  The reverse is naturally true for admittances. It follows that a general lossless, passive linear network, i.e., one composed of inductors and capacitors, $Z(i \omega)$ is purely imaginary.  

\begin{Exercise}[label=exc:admit-LC]
Check for yourself that the admittance of an LC oscillator equals
\begin{equation}
\mathcal{Y}(\omega)=i\omega C+\frac{1}{i \omega L}=\frac{1-\omega^2 LC}{i \omega L}.
\label{eq:singleLC}
\end{equation}
\end{Exercise}

\begin{Answer}[ref=exc:admit-LC]
The admittances of the inductor and the capacitor add. 
\end{Answer}

For the LC oscillator, we see that $\mathcal{Y}(\omega)$ has a zero at the resonant frequency $\omega_r=1/\sqrt{LC}$, and, equivalently, $\mathcal{Z}(\omega)=\mathcal{Y}^{-1}(\omega)$ has a pole at this resonant frequency.
As it turns out, a single-port lossless, passive linear network can always be represented as a series of LC oscillators (Foster's theorem) ---we come back to this in Section~\ref{subsec:ssr} and Appendix~\ref{app:norm_mode}--- but the idea is that the zeros of $\mathcal{Y}(\omega)$ are the resonant frequencies of these LC oscillators.

\subsection{Scattering matrix}
\label{subsec:scat}
So far we have not made any assumption about the nature of the system that is attached to the ports of the network. This system could simply be a current source, or it could be a nonlinear element, like a Josephson junction, as we will consider in the black-box quantization method in Section~\ref{sec:bb}. 

However, it is useful to first consider the case in which we attach transmission lines to the ports of the network as depicted in Fig.~\ref{fig:el_network_tl}. In that case the linear network will scatter incoming EM waves to outcoming EM waves, captured by a scattering matrix $\vect{s}$ which is defined as follows. 

\begin{figure}
    \centering
    \includegraphics[scale=0.15]{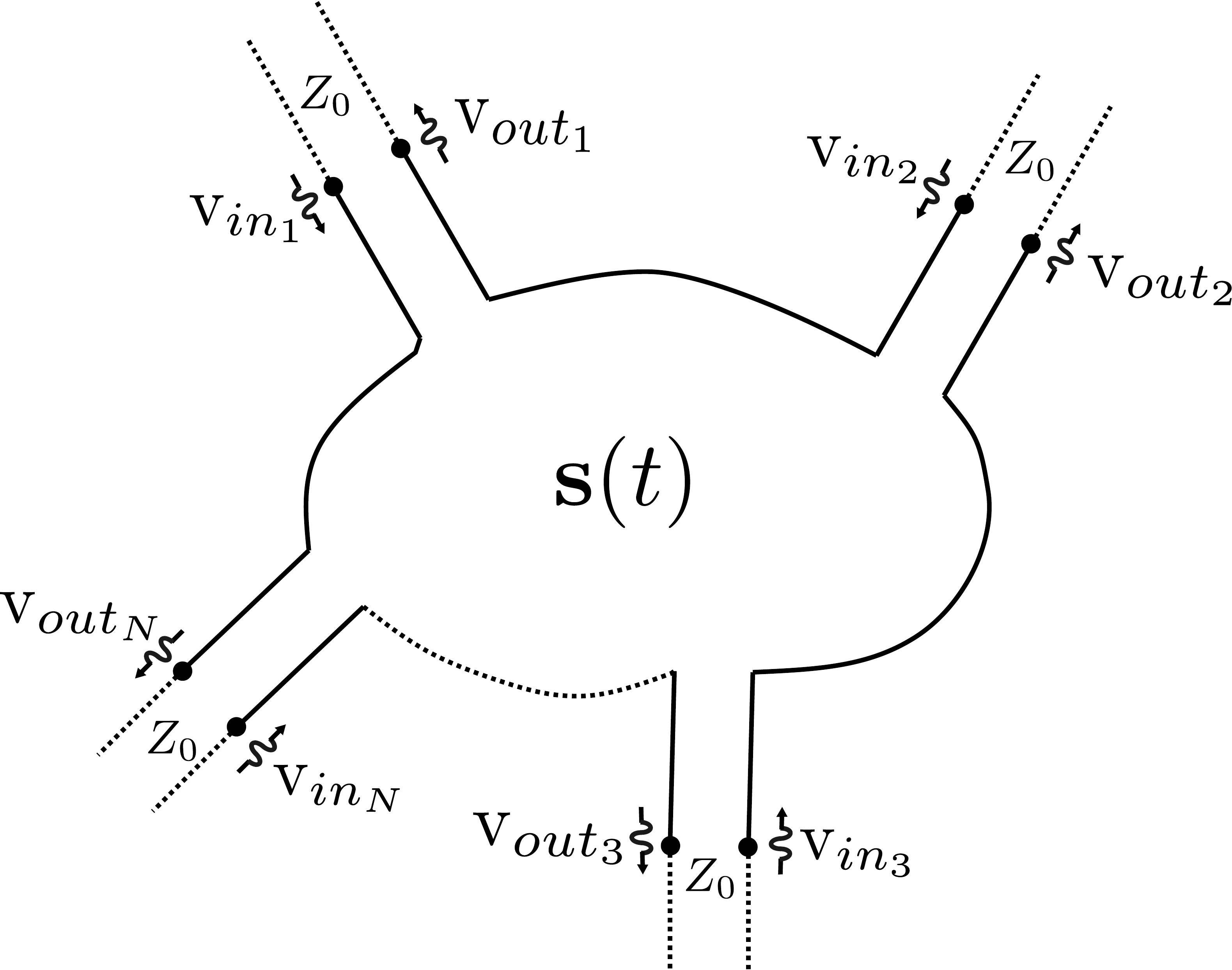}
    \caption{Electrical $N$-port network with transmission lines attached at the ports. The response of the network to incoming radiation is characterized by its scattering matrix $\mat{s}(t)$. The voltage on transmission line $k$ is a sum of an incoming voltage $\vv_{{\rm in},k}(t)$ plus an outgoing voltage $\vv_{{\rm out},k}(t)$, as discussed in Section~\ref{sec:tl-cp}.}
    \label{fig:el_network_tl}
\end{figure} 

Let $\vect{\vv}_{\rm in}(t)$, $\vect{\vv}_{\rm out}(t)$ be the vectors of ingoing and outgoing voltages on the transmission lines that are attached to the ports. Output voltages $\vect{\vv}_{\rm out}(t)$ and input voltages $\vect{\vv}_{\rm in}(t)$ are related by a convolution relation
\begin{equation}\label{eq:scat}
\vect{\vv}_{\rm out}(t) = \int_{-\infty}^t d \tau \,\mat{\scat}(t - \tau) \vect{\vv}_{\rm in}(t),
\end{equation}
where $ \mat{\scat}(t)$ is called the scattering matrix in the time domain. In the Laplace domain, Eq.~\eqref{eq:scat} gives 
\begin{equation*}
\vect{V}_{\rm out}(s)=\mat{S}(s)\vect{V}_{\rm in}(s).
\end{equation*}

Let us us now relate the scattering matrix $\mat{S}(s)$
to the impedance $\mat{Z}(s)$ and admittance matrix $\mat{Y}(s)$ of the network. At a single port, we have the total voltage $\vv(t)=\vv_{\rm in}(t)+\vv_{\rm out}(t)$ and the total current $\ii(t)=\ii_{\rm in}(t)-\ii_{\rm out}(t)$ (here we take the convention that $\ii_{\rm in}$ resp. $\ii_{\rm out}$ is the current associated with the ingoing resp. outgoing wave). For a transmission line the relation between voltage and current is: $\vv_{\rm in}(t)=Z_0 \ii_{\rm in}(t)$ and $\vv_{\rm out}(t)=Z_0 \ii_{\rm out}(t)$ where $Z_0$ is the characteristic impedance, as shown in Eq.~\eqref{eq:char-imp} in Section~\ref{sec:tl-cp} \footnote{Comparing the analysis here with that of the single-port setting of Section~\ref{sec:tl-cp}, note that $\vv_{\rm in}$ and $\vv_{\rm out}$ are denoted there as $\vv^{\rightarrow}$ and $\vv^{\leftarrow}$, see Eqs.~\eqref{eq:char-imp}. Further, note that there is a sign difference in Eq.~\eqref{eq:char-imp}, namely $\vv^{\leftarrow}(x,t)=-Z_0 \ii^{\leftarrow}(x,t)$, since in this derivation the current of the left-going, i.e. outgoing, wave is taken with respect to the current definition in Eq.~\eqref{eq:iphi} and $i(t)=i^{\rightarrow}(t)+i^{\leftarrow}(t)$. This different convention gives of course the same Eqs.~\eqref{eq:vin_out}.}. This implies the (vectorial) relations for all ports:
 \begin{subequations}\label{eq:vin_out}
\begin{align}
\vect{\vv}_{\rm in}(t) & = 
\frac{\vect{\vv}(t) + Z_0 \vect{\ii}(t)}{2}, \\
\vect{\vv}_{\rm out}(t) & = 
\frac{\vect{\vv}(t) - Z_0 \vect{\ii}(t)}{2}.
\end{align}
\end{subequations}
Using Eq.~\eqref{eq:lti} and Eqs.~\eqref{eq:vin_out}, we can obtain the following integral relation between $\vect{\vv}_{\rm in}(t)$ and $\vect{\vv}_{\rm out}(t)$ in a (causal) LTI network
\begin{equation}\label{eq:zs_td}
\int_{- \infty}^t d \tau [\mat{\zz}(t-\tau) + Z_0\delta(t -\tau) \mathds{1} ] \vect{\vv}_{\rm out}(\tau) = \int_{- \infty}^t d \tau [\mat{\zz}(t-\tau) - Z_0\delta(t -\tau) \mathds{1} ]\vect{\vv}_{\rm in}(\tau).
\end{equation}

By Laplace transforming Eq.~\eqref{eq:zs_td} we obtain the following relation between impedance and scattering matrices in the Laplace domain (see also
page~52 in Ref.~\cite{newcomb}):
\begin{equation}\label{eq:scat-lap}
\mat{S}(s) = \left(\mat{Z}(s) + Z_0 \mathds{1} \right)^{-1} \left(\mat{Z}(s) - Z_0 \mathds{1} \right) = -\left(Z_0 \mat{Y}(s) + \mathds{1} \right)^{-1} \left(Z_0 \mat{Y}(s) - \mathds{1} \right).
\end{equation}
Here we have assumed that the matrices $\mat{Z}(s) + Z_0 \mathds{1}$ and $Z_0 \mat{Y}(s) + \mathds{1}$ are invertible. But an interesting fact is that $\mat{S}$ always exists for an electrical structure, even when $\mat{Z}$ or $\mat{Y}$ do not (e.g., for a transformer), but then an analysis independent of Eq.~\eqref{eq:scat-lap} is required.

\begin{Exercise}[label=exc:SZ]
Verify that equivalently $\mat{S}(s)=\left(\mat{Z}(s) - Z_0 \mathds{1} \right)\left(\mat{Z}(s) + Z_0 \mathds{1} \right)^{-1}$, i.e the matrices on the right-hand side of Eq.~\eqref{eq:scat-lap} commute. Verify the last equality in Eq.~\eqref{eq:scat-lap}.
\end{Exercise}

\begin{Answer}[ref=exc:SZ]
Using Eq.~\eqref{eq:scat-lap}, we have 
\begin{equation*}
(\mat{Z}(s) + Z_0 \mathds{1} ) \mat{S}(s) (\mat{Z}(s) + Z_0 \mathds{1} )=\mat{Z}^2(s)-Z_0^2 \mathds{1}=(\mat{Z}(s) + Z_0 \mathds{1} )(\mat{Z}(s) -Z_0 \mathds{1} ),
\end{equation*}
which leads to the expression, see~\cite{newcomb}. Alternatively, use the formal expression \begin{equation*}
(\mathds{1}+\mat{E})^{-1}=\sum_{k=0}^{\infty} (-\mat{E})^k,
\end{equation*}
which shows that the inverse $\left(\mat{Z}(s) + Z_0 \mathds{1} \right)^{-1}$ only contains powers of $\mat{Z}(s)$ and hence commutes with $\mat{Z}(s) - Z_0 \mathds{1}$. To check the last equality, we write 
\begin{equation*}
(\mat{Z}(s)+Z_0 \mathds{1})^{-1}=(\mat{Z}(s) (\mathds{1}+Z_0 \mat{Y}(s)))^{-1}=(Z_0 \mat{Y}(s)+\mathds{1})^{-1} \mat{Y}(s),
\end{equation*}
and absorb $\mat{Y}(s)$ in the next factor.
\end{Answer}

\subsection{Additional properties of LTI networks}

We discuss some additional properties of electrical networks. The first one is reciprocity. A linear network is called reciprocal if and only if its impedance matrix (assuming it exists) is symmetric 
\begin{equation}
\mat{Z}(s)=\mat{Z}^T(s).
\label{eq:def-rec}
\end{equation}
This property holds of course in any domain and would hold also for the admittance matrix. From Eq.~\eqref{eq:scat-lap} we see that in a reciprocal network the scattering matrix is also symmetric, i.e., $\mat{S}(s)=\mat{S}^T(s)$. Linear networks that are nonreciprocal will be extensively discussed in Chapter~\ref{chap:nonrec}. Linear networks involving capacitances, self- and mutual inductances, and resistances are reciprocal. Another important notion is that of a lossless linear network. This notion can be easily formulated in the Fourier domain, when the network is also reciprocal. If we remove resistances from reciprocal networks, we obtain linear networks that are reciprocal and lossless. A linear network is reciprocal and lossless if and only if its impedance matrix in the Fourier domain $\mat{\mathcal{Z}}(\omega)$ is symmetric and purely imaginary \cite{pozar}. This clearly generalizes the single-port case in Section~\ref{subsec:sp}. Also, one can show that the scattering matrix in the Fourier domain, $\mat{\mathcal{S}}(\omega)=\mat{S}(i\omega)$, of a reciprocal, lossless linear network is possibly complex, but always {\em unitary}. You can verify this by checking that
$\vect{\mathcal{S}}^{\dagger}(\omega) \vect{\mathcal{S}}(\omega)=\mathds{1}$ using Eq.~\eqref{eq:scat-lap}.

\begin{Exercise}[{label=exc:refl}]
A transmission line is terminated by a load impedance $Z_{\rm load}(s)$ where the load represents some linear electrical network.
\Question Show that the reflection coefficient $\Gamma(s)$ connecting the incoming to the outgoing voltage on the transmission line is given by
\begin{equation*}
 \Gamma(s)= \frac{Z_{\mathrm{load}}(s)-Z_0}{Z_{\mathrm{load}}(s)+Z_0}.
\end{equation*}
\Question Show that $|\Gamma(s=i\omega)|^2=1$ when $Z_{\rm load}(s=i\omega)$ represents a lossless linear network. Why would you expect this? Show that $|\Gamma(s)|^2 \leq 1$ more generally.
\end{Exercise}

\begin{Answer}[{ref={exc:refl}}]
\Question Since we have a single port to which a transmission line is attached, the scattering matrix is one-dimensional, namely a single complex number $S(s)=\Gamma(s)$. In this case, Eq.~\eqref{eq:scat-lap} becomes
\begin{equation*}
 \Gamma(s)= \frac{Z_{\mathrm{load}}(s)-Z_0}{Z_{\mathrm{load}}(s)+Z_0}.
\end{equation*}
\Question When the load is lossless, $Z_{\rm load}(s=i\omega)$ is purely imaginary. Since then the magnitudes of the real and imaginary parts of the numerator and denominator are equal, we see that $|\Gamma(i\omega)|^2=1$. If there is no loss, all radiation should be reflected. More generally, let $Z_{\rm load}(s)=a(s)+i b(s)$, with $a(s), b(s) \in \mathbb{R}$. Then $|\Gamma(s)|^2=\frac{(a(s)-Z_0)^2+b^2(s)}{(a(s)+Z_0)^2+b^2(s)}\leq 1$, representing the fact that the outgoing radiation can have lower intensity due to absorption and loss in the load.
\end{Answer}

In the next exercise we treat the finite-length transmission line as a two-port linear network itself investigate its impedance and scattering matrix. This is of interest if we would connect the resonator to another transmission line or other non-linear elements or qubits.

\begin{Exercise}[title={Impedance matrix of a finite two-port transmission line},label=exc:2port-TL]
\begin{figure}[htb]
 \centering
 \includegraphics[scale=0.2]{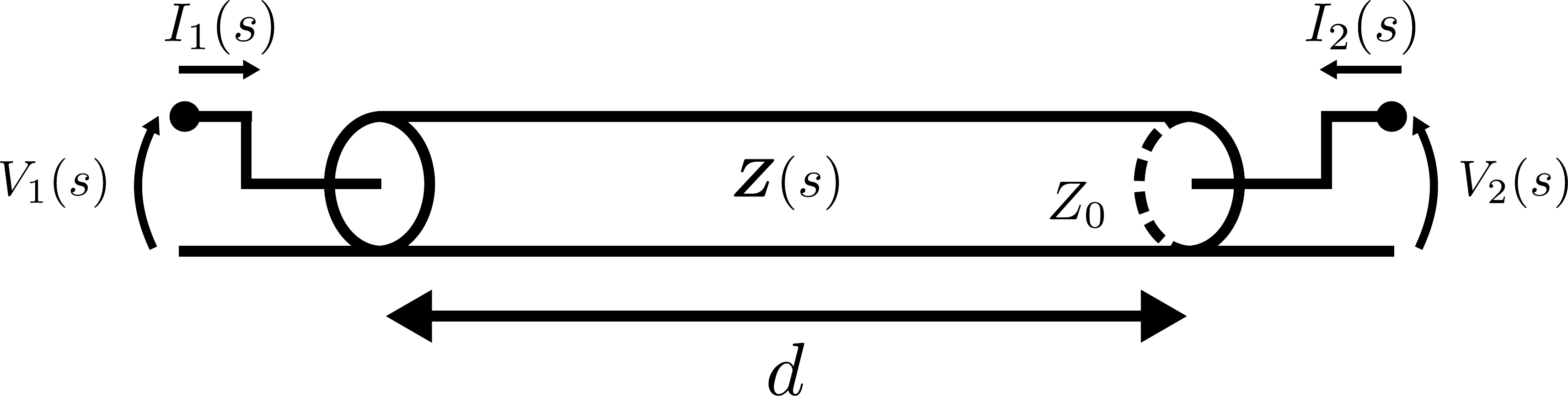}
 \caption{Two-port transmission-line network. A fictitious linear network represented by a $Z_{\rm load}(s)$ can be attached at, say, the right end.}
 \label{fig::2PTL}
 \end{figure}

In Fig.~\ref{fig::2PTL} you see a schematic of a two-port transmission-line resonator of length $d$ with characteristic impedance $Z_0$, i.e., the transmission line itself can be represented by the network in Fig.~\ref{fig:inftl}.
\Question Using Eqs.~\eqref{eq:char-imp} and \eqref{eq:lr-volt} show that the voltage and current in the Fourier domain (only consider $\omega > 0$) along the line are given by
 \begin{subequations}
\begin{equation}\label{eq::V}
 V(x, s=i\omega) = \vv_0^{\rightarrow}(\omega \sqrt{lc}) \biggl(e^{i\sqrt{\ell c} \omega x} + \Gamma(i\omega) e^{-i\sqrt{\ell c} \omega x} \biggr),
 \end{equation}
 \begin{equation}\label{eq::I}
 I(x, s=i\omega) = \frac{\vv_0^{\rightarrow}(\omega \sqrt{lc})}{Z_0} \biggl(e^{i\sqrt{\ell c} \omega x} - \Gamma(i\omega) e^{-i\sqrt{\ell c} \omega x} \biggr),
 \end{equation}
 \end{subequations}
 where $x$ denotes the distance from a fictitious load on the right-hand-side of the line (located at $x=0$), $\Gamma(s)$ is the reflection coefficient at this load and $Z_0$ is the characteristic impedance of the transmission line. These formulas hold more generally in the Laplace domain.
 \Question Which properties do you expect for the impedance matrix $\bm{\mathcal{Z}}(\omega)$ (in the Fourier domain) of this network?
  \Question Obtain the $2 \times 2$ impedance matrix $\bm{Z}(s)$ in the Laplace domain, using the definition in \cref{eq:def_impedance_mat_elems} and the fact that a zero current at a port can be modeled by an infinite load $Z_{\rm load}(s)$. Estimate the diagonal and off-diagonal elements of this matrix separately, using Eqs.~\eqref{eq::V} and \eqref{eq::I} and the expression for the reflection coefficient $\Gamma(s)$ in the answer of Exercise \ref{exc:refl} depending on $Z_{\rm load}(s)$. 
  \Question Obtain the scattering matrix $\bm{S}(s)$ in the Laplace domain as well. Which properties do you expect it to have in the Fourier domain?  
  \end{Exercise}

\begin{Answer}[ref={exc:2port-TL}]
\Question 
Fourier-transforming Eqs.~\eqref{eq:lr-volt} and considering only $\omega > 0$ (the $\omega < 0$ information is redundant):
\begin{align}
    V^{\rightarrow}(x,s=i\omega)=\vv_0^{\rightarrow}(\omega \sqrt{\ell c})e^{-i \sqrt{\ell c}\omega x}, \\
    V^{\leftarrow}(x,s=i\omega)= \,\vv_0^{\leftarrow}(\omega \sqrt{\ell c}) e^{i \sqrt{\ell c} \omega x}.
\end{align}

Due to the fictitious load at $x=0$ we need to have the reflected voltage $V^{\leftarrow}(x=0,s)=\Gamma(s) V^{\rightarrow}(x=0,s)$ and hence $\vv_0^{\leftarrow}(\omega \sqrt{lc})=\Gamma(i\omega)\vv_0^{\rightarrow}(\omega \sqrt{lc})$, relating the incoming and outgoing amplitudes.
Since the total voltage in the Fourier domain is $V(x,i\omega)=V^{\rightarrow}(x,i\omega)+V^{\leftarrow}(x,i\omega)$, and we are considering a position $x \in [-d,0]$, we obtain Eq.~\eqref{eq::V} where $x > 0$ now signifies the distance. Using the relation between $\vv^{\rightarrow}(x,t)$ and $\ii^{\rightarrow}(x,t)$ in Eq.~\eqref{eq:char-imp} which also holds in the Fourier domain, we get $I(x,i\omega)=I^{\rightarrow}(x,i\omega)+I^{\leftarrow}(x,i\omega)=\frac{1}{Z_0}(V^{\rightarrow}(x,i\omega)-V^{\leftarrow}(x,i\omega))$ from which Eq.~\eqref{eq::I} follows.
\Question We expect the impedance matrix in the Fourier domain to be symmetric and purely imaginary, since the network is reciprocal and lossless.
\Question As defined in~\cref{eq:def_impedance_mat_elems}, the matrix elements of the impedance matrix are given by for $k,l=1,2$
\begin{equation}
Z_{kl}(s)= \frac{V_{k}(s)}{I_{l}(s)} \biggl \rvert_{I_{k}=0, k \neq l}.
\end{equation}
Having zero current at a port (open circuit) is equivalent to having a load impedance $Z_\mathrm{load} \rightarrow + \infty$. We thus perform two thought experiments to determine the diagonal and the off-diagonal terms of the impedance matrix:
\begin{itemize}
\item Diagonal term for $k=1$ (at the left)
\begin{equation}
Z_{11}(s)= \frac{V(d,s)}{I(d,s)}= \lim_{Z_\mathrm{load} \rightarrow + \infty} Z_{0}  \frac{Z_{\rm }(s)+Z_0 \tanh (\sqrt{\ell c} s d)}{Z_0+Z_{\rm load}(s) \tanh (\sqrt{\ell c} s d)}= Z_0 \coth	(\sqrt{\ell c} s d).
\end{equation}
Note that by symmetry $Z_{22}(s)=Z_{11}(s)$ (i.e., by imagining the load attached to the left at port 1).
\item Off-diagonal terms 
\begin{equation}
Z_{21}(s)= \frac{V(0,s)}{I(d, s)}= \lim_{Z_\mathrm{load} \rightarrow + \infty} Z_0 \frac{1+\Gamma(s)}{e^{\sqrt{\ell c} s d} - \Gamma(s) e^{-\sqrt{\ell c} s d}} = Z_0 \, \mathrm{csch} (\sqrt{\ell c} s d).
\end{equation}
Again by symmetry $Z_{21}(s)=Z_{12}(s)$ (as  the transmission line is lossless).
\end{itemize}
Thus the impedance matrix is
\begin{equation}
\mat{Z}(s)= Z_0 \left(\begin{array}{cc}
\coth	(\sqrt{\ell c} s d) & \mathrm{csch} (\sqrt{\ell c} s d) \\
\mathrm{csch} (\sqrt{\ell c} s d) & \coth	(\sqrt{\ell c} s d)
\end{array}\right).
\end{equation}
\Question We expect the scattering matrix to be symmetric and unitary in Fourier domain. The scattering matrix can be obtained via plugging $\mat{Z}(s)$ in Eq.~\eqref{eq:scat-lap} and one gets
\begin{equation}
\bm{S}(s)= \left(\begin{array}{cc}
0 & e^{-\sqrt{\ell c} d s} \\
e^{-\sqrt{\ell c} d s}  & 0
\end{array}\right).
\end{equation}
Note that it is correctly symmetric and in Fourier domain ($s = i \omega$) it is unitary, as expected.
\end{Answer}

\begin{Exercise}[title={Impedance of a $\lambda/4$ resonator},label={exc:lambda_4}]
Consider a resonator which is grounded at one end, as discussed in Section~\ref{subsec:br}, while the other end functions as a single port. Show that this network has an impedance equal to
\begin{equation}
\mathcal{Z}(\omega)=i Z_0 \tan\left(\frac{d \omega}{v_p}\right)=i Z_0 \tan\left(\omega \sqrt{\ell c}d\right).
\label{eq:impl4}
\end{equation}
%\label{exc:imped-TL}
To derive this, you can use Eqs.~\eqref{eq::V} and \eqref{eq::I} on two-port transmission lines and impose the right condition representing the grounded port.
\end{Exercise}

\begin{Answer}[ref={exc:lambda_4}]
The grounded port can be represented by a `load' with $Z_{\rm load}(s)=0$ (short circuit) and hence $\Gamma(s)=-1$. This implies that
\[
\mathcal{Z}(\omega)=Z_{11}(s=i\omega)=Z_0 \frac{e^{i \sqrt{\ell c}\omega d}-e^{-i \sqrt{\ell c}\omega d}}{e^{i \sqrt{\ell c}\omega d}+e^{-i \sqrt{\ell c}\omega d}},
\]
from which the result follows.
\end{Answer}

\section{Black-box quantization}
\label{sec:bb}

\begin{figure}
\centering
\includegraphics[scale=0.15]{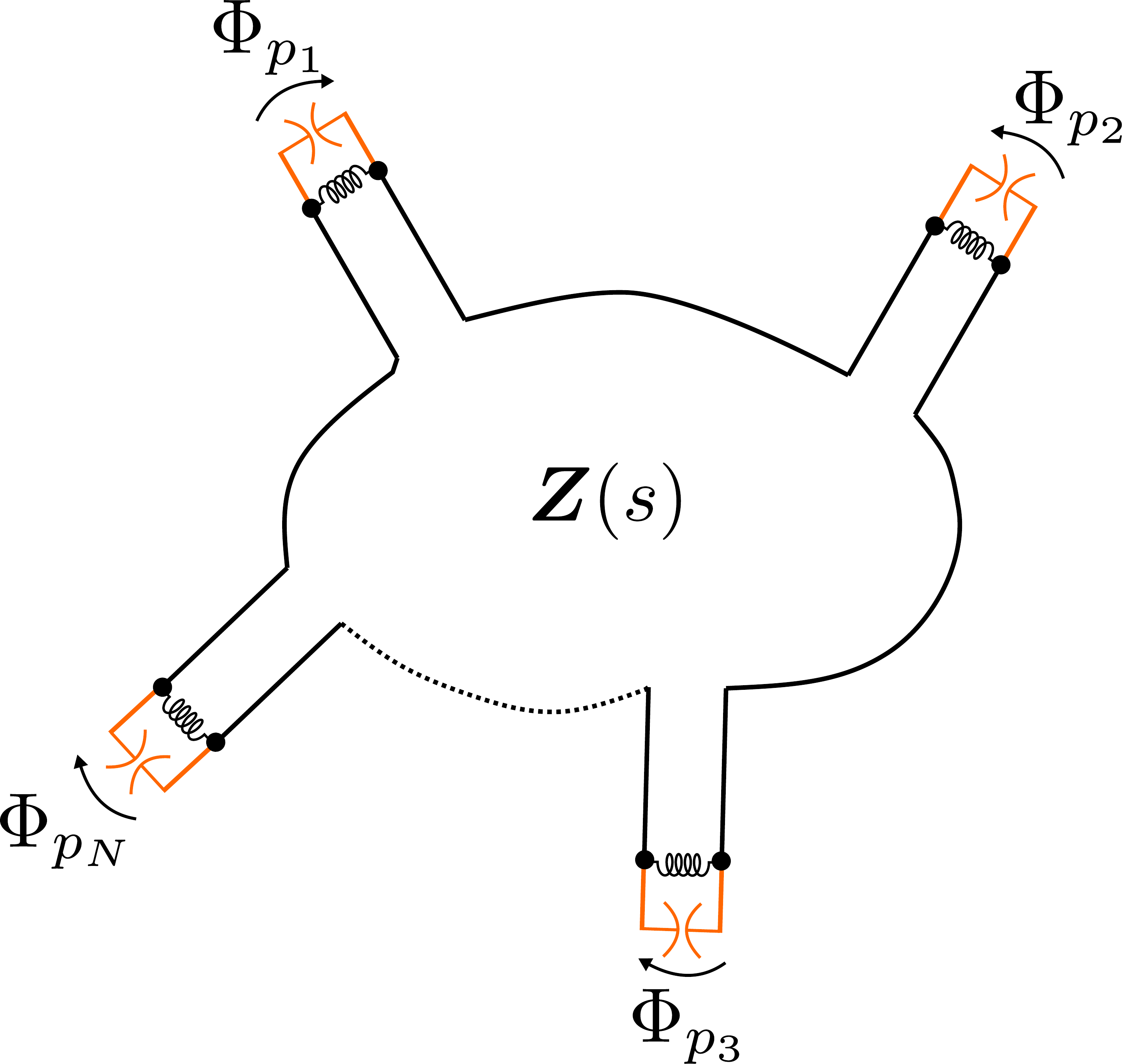}
\caption{Josephson junctions coupled to a generic microwave network. $\mat{Z}(s)$ is the impedance matrix in the Laplace domain of the linear network, including the linear part of the Josephson junction, i.e., the Josephson inductance.}
\label{fig:jj_net}
\end{figure}
The goal of black-box quantization is to obtain the Hamiltonian of a circuit composed of Josephson junctions coupled to a generic microwave network, and thus quantize the system. The underlying idea is that the only non-linear elements in this network are the Josephson junctions and that a representation of the system in terms of the degrees of freedom of the linear system (which are then affected by the Josephson junctions) is warranted. Of course, the analysis of the linear system is computationally efficient. In fact, as a quantum computational system, such a linear system has no power whatsoever beyond that of classical devices.

Black-box quantization gives a prescription to write a Hamiltonian in the normal mode basis of the linear system, even if we do not know the detailed circuit model of the network, but we only have access to its impedance or scattering matrix. These matrices can be the result of classical electromagnetic simulations of the linear part of the network which can be performed efficiently. 

Let us consider a system of Josephson junctions that are coupled to a passive, lossless and reciprocal microwave network. An example is the typical setup for transmon qubits on a superconducting chip, in which the transmons are coupled to bus and readout resonators. We can interpret the two terminals of a Josephson junction as a port of the linear microwave network. The original black-box approach of Ref.~\cite{Nigg.etal.2012:BlackBoxCqed} includes all the linear parts of the circuit in the linear network, even the linear part of the Josephson potential, i.e., the Josephson inductance. This is depicted in Fig.~\ref{fig:jj_net} for the case of $N$ junctions, where the black part of the system is described by the $N \times N$ impedance matrix $\mat{Z}(s)$ of the linear network, while the nonlinear part of the junction  is left out. More precisely, the potential of the $n$th Josephson junction is written as
\begin{equation}
U_{Jn} = -E_{Jn} \cos\biggl(\frac{2 \pi}{\Phi_0}  \Phi_{p_n}\biggr) = U_{\mathrm{spider}, n}(\Phi_{p_n}) + \frac{\Phi_{p_n}^2}{2 L_{Jn}} ,
\end{equation}
where $L_{Jn}$ is the Josephson inductance and $\Phi_{p_n}$ is the flux across the junction port, i.e., 
\begin{equation}\label{eq:port_fluxes}
\Phi_{p_n}(t) = \int_{-\infty}^t dt' \vv_{p_n}(t').
\end{equation}
The potential $U_{\mathrm{spider}, n}$ is the nonlinear part of the Josephson potential, which we depict with the spider element in orange in Fig.~\ref{fig:jj_net} while the inductors (in black) are added to the definition of $\mat{Z}(s)$.
\par 
The idea of black-box quantization using the normal-mode approach is the following. As we show in detail in Appendix~\ref{app:norm_mode} and Fig.~\ref{fig::cauer_circuit}, the linear part of the network is essentially a system of coupled oscillators, which can always be diagonalized at the classical level by finding the so-called normal modes. Thus, if we have $M$ normal modes, it must be possible to write the linear part of the Hamiltonian as
\begin{equation}
H_{\mathrm{lin}} = \sum_{m=1}^M \hbar \omega_m \hat{a}_m^{\dagger} \hat{a}_m,
\end{equation}
where $\omega_m$ is the resonant frequency of mode $m$ and $\hat{a}_m, \hat{a}_m^{\dagger}$ its annihilation and creation operators, respectively. In addition, it is possible to write the port flux operators $\hat{\Phi}_{p_n}$ as a linear combination of normal-mode fluxes $\hat{\Phi}_m$, see Eq.~\eqref{eq::phi_exp}, i.e.

\begin{equation}\label{eq:phi_port_norm}
\hat{\Phi}_{p_n} = \sum_{m=1}^M t_{mn} \hat{\Phi}_{m} =  \sum_{m=1}^M t_{mn} \sqrt{\frac{\hbar Z_m}{2}} \bigl(\hat{a}_m  + \hat{a}_m^{\dagger}\bigr),
\end{equation}
where $Z_m$ is the characteristic impedance of mode $m$ given by
\begin{equation}\label{eq:z0m}
Z_m = \frac{1}{C_0 \omega_m} \equiv \sqrt{\frac{L_m}{C_0}},
\end{equation}
where $C_0$ is a reference capacitance (needed for proper dimensionality), which is taken to be equal for all modes. This is equivalent to defining an effective inductance of the mode $m$ as 
\begin{equation}
\label{eq:eff_lm}
L_m = \frac{1}{C_0 \omega_m^2}.
\end{equation}
For later convenience we collect the coefficients $t_{mn} \in \mathbb{R}$ in Eq.~\eqref{eq:phi_port_norm} in $N$-dimensional real column vectors
\begin{equation}\label{eq:vec_turn}
\bm{t}_m = \left(\begin{array}{c}
t_{m1} \\ \vdots \\ t_{mN}
\end{array}\right), \quad m \in \{1, \dots, M\}.
\end{equation}
For reasons that will become clear later, these are called the vectors of turn ratios. \par 

Including the nonlinear potential, the total Hamiltonian of the system in the normal-mode basis equals
\begin{multline}
\label{eq:h_nm}
H = H_{\mathrm{lin}} + \sum_{n=1}^N U_{\mathrm{spider}, n}(\Phi_{p_n}) 
= \sum_{m=1}^M \hbar \omega_m \hat{a}_m^{\dagger} \hat{a}_m 
+ \sum_{n=1}^N U_{\mathrm{spider}, n}\biggl[\sum_{m=1}^M t_{mn} \sqrt{\frac{\hbar Z_m}{2}} \bigl(\hat{a}_m  + \hat{a}_m^{\dagger}\bigr)\biggr].
\end{multline}  
Thus, the linear part of the system is solved exactly, while the nonlinear part provides coupling and the possibility to generate entanglement between the modes. In addition, in order to get the Hamiltonian in the normal-mode basis, we need to extract:
\begin{itemize}
\item the normal mode frequencies $\omega_m$;
\item the vectors of turn ratios $\vect{t}_{m}$;
\item the characteristic impedances $Z_m$;
\item the Josephson energies $E_{Jn}$. One cannot obtain $E_{Jn}$ by classical electromagnetic simulations, but one could estimate each $E_{Jn}$ by estimating the critical current $I_c$ of the $n$th junction. Alternatively, one can estimate $E_J$, for a single transmon given the fabrication process of its junction, from its spectrum (using e.g. Eqs.~\eqref{eq:freqt} and \eqref{eq:anharmont}, and extrapolate this to identically-fabricated junctions.
\end{itemize}
This should be done at least for the modes that are relevant in our problem, while modes that are far away from the frequencies one is interested in could be excluded from the analysis as an approximation. 

The Hamiltonian in the normal-mode basis in Eq.~\eqref{eq:h_nm} can be obtained for any superconducting circuit involving transmons, flux qubits or fluxoniums. However, as we will argue in Section~\ref{sec:network-tr}, the normal-mode basis is more convenient when treating systems of transmon qubits, or more generally qubits with a weak nonlinearity. This is because the normal modes, with the addition of the weak nonlinearity, are, to a very good approximation, the degrees of freedom associated with the qubits. Nonetheless, the normal-mode approach has also been employed to study circuits with fluxonium qubits \cite{smith2016}. \par 

There are two typical situations in which we would like to employ black-box quantization and write the Hamiltonian as in Eq.~\eqref{eq:h_nm}. In the first case we have a lumped-element representation of the circuit. In this case, while we could use standard circuit quantization and obtain the Hamiltonian, we may still prefer to work in the normal-mode basis. This could be because the low energy spectrum of the circuit can be described by weakly anharmonic modes, for which the normal-mode basis is convenient, or because we would like to work systematically with a Hamiltonian of the form of Eq.~\eqref{eq:h_nm}. In the second case, we do not have a lumped-element representation of the circuit, but instead we have its impedance matrix in a certain frequency range, obtained, for instance, through electromagnetic simulations.

\begin{figure}
\centering
\includegraphics[height=5cm]{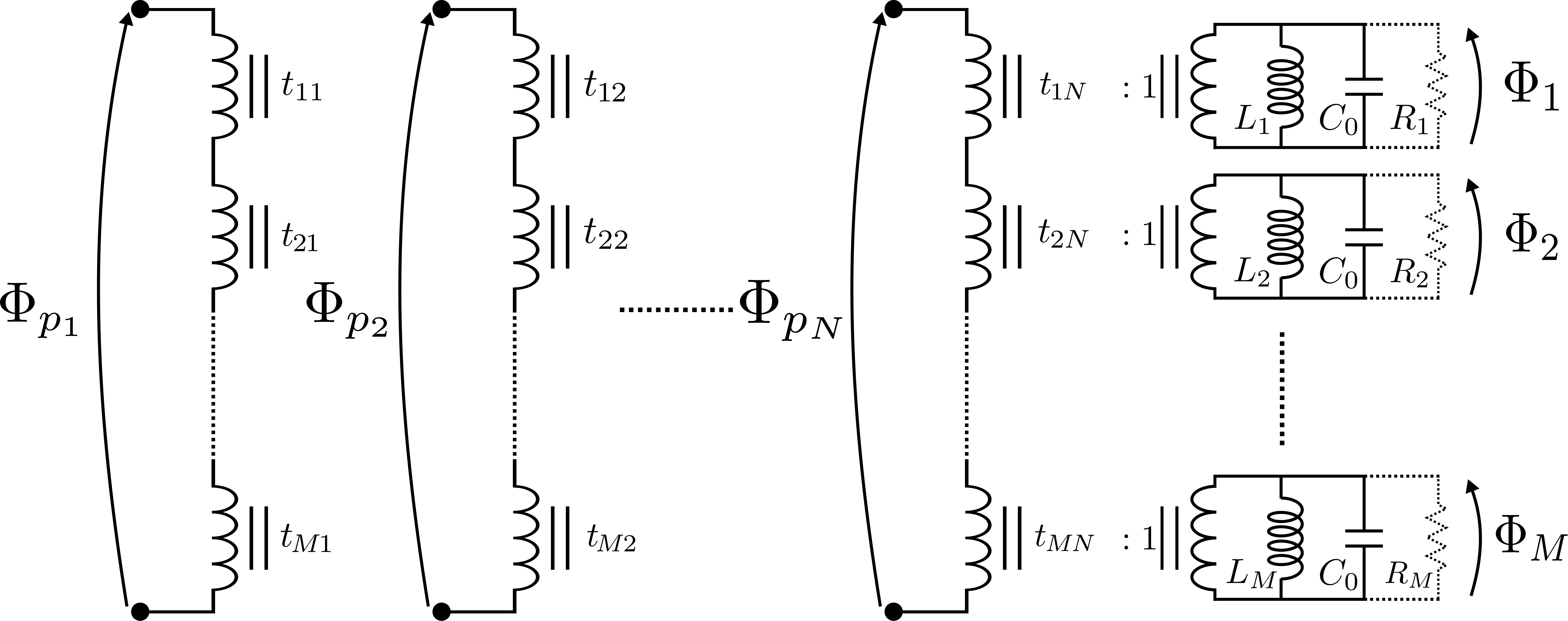}
\caption{Circuit equivalence of a $N$-port linear network, described in Chapter~$7$ of \cite{newcomb}. The inductances associated with the normal modes are given by $L_m = \frac{1}{C_0 \omega_m^2}$. When $N=1$ we see that the network can be represented by a series of $M$ LC oscillators with inductances $L_m$. When the network is weakly lossy, we can approximate it by including a resistor $R_m$ in parallel with each $m$th LC oscillator, as shown by the additional dotted branches (see the chapter of E. R. Beringer in \cite{purcell_microwave}).}
\label{fig::cauer_circuit}
\end{figure}

In order to write the Hamiltonian in the normal-mode basis, we make use of Cauer's construction that we depict in Fig.~\ref{fig::cauer_circuit}. Algebraically, this theorem implies that the impedance matrix of a passive, lossless and reciprocal LTI network can always be written as
\begin{equation}\label{eq:z_foster}
\mat{Z}(s) = \sum_{m=1}^M \frac{1}{2 C_0} \biggl(\frac{1}{s + i \omega_m} + \frac{1}{s-i \omega_m} \biggr)\mat{R}_m,
\end{equation}
where $C_0$ is the aforementioned reference capacitance. When all frequencies $\omega_m \geq 0$ are distinct, the matrices $\mat{R}_m$ ($\mat{R}$ for ``residue") are symmetric, rank-1, $N \times N$ matrices, which are fixed by the vectors of turn ratios $\vect{t}_m$, i.e.,
\begin{equation}
\mat{R}_m = \bm{t}_m \vect{t}_m^T, \;\; (\mat{R}_m)_{kl}=t_{mk}t_{ml}.
\label{eq:turn}
\end{equation}
We note that $\mat{R}_m$ leaves the overall sign of the vector, $\pm {\bm t}_m$, free, and these $M$ $\pm$ signs, one for each $\bm{t}_m$, are not physically meaningful.

We derive Cauer's construction using the Lagrangian formalism in Appendix~\ref{app:norm_mode} (see also Chapter~$7$ in~\cite{newcomb}), showing that obtaining the form of the impedance matrix in Eq.~\eqref{eq:z_foster} is, de facto, equivalent to obtaining the normal modes of the circuit.  We refer the reader to Refs.~\cite{newcomb, purcell_microwave, russer} for further discussions on this construction and microwave networks in general. \par 

Let us now explain how to extract the parameters needed to write the Hamiltonian in normal-mode form if the impedance matrix in the frequency domain $\mat{\mathcal{Z}}(\omega)$ is available. The resonant frequencies can be found by finding the poles of $\mat{\mathcal{Z}}(\omega)$ or equivalently the zeros of the admittance matrix $\mat{\mathcal{Y}}(\omega)= \mat{\mathcal{Z}}^{-1}(\omega)$.

Once we know the $M$ resonant frequencies $\omega_m$, we can also determine the vectors $\vect{t}_m$ in two ways. First of all, note that, using Eqs.~\eqref{eq:z_foster} and \eqref{eq:turn}, we have
\begin{equation}
\label{eq:turn_lim}
t_{mk} t_{ml} = \lim_{\omega \to \omega_m} 2 C_0(i \omega - i \omega_m) \mathcal{Z}_{kl}(\omega),
\end{equation}
which allows us to determine the elements of the vectors of turn ratios (up to the overall sign), assuming we are able to evaluate the limit. Given an analytical expression for $\mathcal{Z}_{kl}(\omega)$, one can use this expression to determine the $t_{km}$ coefficients.
Alternatively, for $\omega \approx \omega_m$ the  elements of the impedance matrix are
\begin{equation}\label{eq:znnp}
\mathcal{Z}_{kl}(\omega) \approx \frac{1}{2 C_0} \frac{t_{mk} t_{ml}}{i (\omega-\omega_m) }.
\end{equation}
Let $\tilde{\mathcal{Y}}_{k l}(\omega)=\frac{1}{\mathcal{Z}_{kl}(\omega)}$ (which is {\em not} the matrix element $\mathcal{Y}_{kl}( \omega)$ of the admittance matrix!). From Eq.~\eqref{eq:znnp} we obtain that
\begin{equation}
\label{eq:turn_adm}
t_{mk} t_{ml} = \frac{2 C_0}{ \mathrm{Im}\left(\frac{d\tilde{\mathcal{Y}}_{k l}}{d\omega}\Bigl|_{\omega=\omega_m}\right)}.
\end{equation}

\subsection{The energy-participation ratio approach to black-box quantization}
\label{subsec:epratio}
In this section, we discuss the energy-participation ratio method for black-box quantization introduced in Ref.~\cite{Minev2021}, which is also implemented in the Python package Qiskit Metal developed by IBM. The method is in principle completely equivalent to what we wrote in the previous section. However, it does not require extracting the vectors of turn ratios from the admittance or impedance matrix elements as in Eqs.~\eqref{eq:turn_lim} \eqref{eq:znnp}, \eqref{eq:turn_adm}. 
Instead, it shows how to determine the coefficients $\frac{2\pi}{\Phi_0} t_{mn} \sqrt{\frac{\hbar Z_m}{2}}$ in Eq.~\eqref{eq:h_nm} as an energy participation ratio of the mode $m$ into the $n$th Josephson junction obtained directly by the classical EM simulations.

 In order to highlight the fact that the energy-participation ratio is a classical concept that can thus be derived via standard electromagnetic simulations, we keep the discussion completely classical. Each normal mode flux $\Phi_{m}(t)$ satisfies the differential equation of a harmonic oscillator with frequency $\omega_m$:

\begin{equation}
\frac{d^2 \Phi_m(t)}{dt^2} = - \omega_{m}^2 \Phi_{m}(t), \quad m=1, \dots, M.
\end{equation}

Let us now consider the situation in which only the normal mode $m$ is initially ``excited". Mathematically, this translates into the following initial conditions

\begin{subequations}
\begin{equation}
\Phi_{m}(0) = \Phi_{\mathrm{in}}, \quad \frac{d \Phi_m}{dt} \biggl \lvert_{t=0} = \Phi_{\mathrm{in}}'(t),
\end{equation}
\begin{equation}
\Phi_{m'}(0) = 0, \quad \frac{d \Phi_{m'}}{dt} \biggl \lvert_{t=0} = 0, \quad m' \neq m,
\end{equation}
\end{subequations}

which give the solution

\begin{subequations}
\begin{equation}
\Phi_{m}(t) = \Phi_{\mathrm{in}} \cos(\omega_m t) + \frac{\Phi_{\mathrm{in}}'}{\omega_m} \sin(\omega_{m} t),
\end{equation}
\end{subequations}
while $\Phi_{m'}(t) =0$ for $m' \neq m$. Let us now define the energy-participation ratio $p_{m n}$ of mode $m$ at junction $n$ as the ratio between the total inductive energy stored at junction $n$ due to only mode $m$, and the total inductive energy of mode $m$ averaged over a period $2 \pi/ \omega_m$:

\begin{equation}
p_{m n} = \frac{\frac{1}{2 L_{Jn}}\int_{0}^{2 \pi/\omega_m} dt\, \Phi_{p_n}^2(t)}{\frac{1}{2 L_{m}}\int_{0}^{2 \pi/\omega_m} dt\, \Phi_{m}^2(t)} = \frac{L_m}{L_{Jn}} t_{mn}^2 = \frac{t_{mn}^2}{C_0 \omega_m^2 L_{Jn}},
\end{equation}

where we used the definition of the effective inductance $L_m$ of mode $m$ in Eq.~\eqref{eq:eff_lm}. This leads to the identification

\begin{equation}
t_{mn} = s_{mn} \omega_{m} \sqrt{C_{0} L_{Jn} p_{mn}},
\end{equation}
where $s_{mn} = \pm 1$. Thus, we see that that the energy-participation ratios allow us to obtain the turn ratios (and again in the $s_{mn}$ there is an overall irrelevant sign freedom). Ref.~\cite{Minev2021} discusses how to calculate the signs $s_{mn}$ from the classical field solutions. We refer the reader to Ref.~\cite{Minev2021b} for a more modular approach to black-box quantization based on the energy-participation ratio method.

\subsection{Single-port black-box quantization}
\label{subsec:ssr}

As an example, we obtain the transmon qubit Hamiltonian, the Hamiltonian of the circuit in Fig.~\ref{fig:LCb}, by viewing the transmon qubit as a single-port ($N=1$) network composed of an LC oscillator to which a Josephson junction is attached. The inductance of the Josephson junction $L_J$ is lumped together with the shunting capacitor to represent an LC oscillator. 

For the LC oscillator with fundamental frequency $\omega_J = 1/\sqrt{L_J C}$, the admittance $\mathcal{Y}(\omega)$ is given in Eq.~\eqref{eq:singleLC} and thus 
\begin{equation*}
\frac{d\mathcal{Y}(\omega)}{d\omega}\biggl|_{\omega=\omega_J}=2i C,
\end{equation*}
implying that $t_{11}^2=C_0/C$. Note that the irrelevant reference capacitance $C_0$ just drops out when we then apply Eq.~\eqref{eq:h_nm} with $Z_m$ in Eq.~\eqref{eq:z0m}, leading to
\begin{equation}
    H=\hbar \omega_J \hat{a}^{\dagger}\hat{a}+U_{\rm spider}\left(\sqrt{\frac{\hbar}{2C \omega_J}}(\hat{a}+\hat{a}^{\dagger})\right),
    \label{eq:single-p-ham}
\end{equation}
as a transmon Hamiltonian.

More generally, when we have single Josephson junction port $N=1$ attached to a network with $M$ frequencies, we have 
\begin{equation}
   t_{m1}=\sqrt{\frac{2C_0}{{\rm Im}(\frac{d\mathcal{Y}}{d\omega})\vert_{\omega=\omega_m}}}.
\end{equation}

We can introduce an effective mode capacitance $\tilde{C}_m$ so that the Hamiltonian reads
\begin{equation}
   H=\sum_m \hbar \omega_m \hat{a}_m^{\dagger}\hat{a}_m+U_{\rm spider}\left(\sum_{m=1}^M \sqrt{\frac{\hbar}{2\omega_m \tilde{C}_m}}(\hat{a}_m^{\dagger}+\hat{a}_m)\right),
   \label{eq:hcm}
\end{equation}
where the effective mode capacitance is thus defined as 
\begin{equation}
    \tilde{C}_m=\frac{1}{2}{\rm Im}\left(\frac{d\mathcal{Y}}{d\omega}\biggl\vert_{\omega=\omega_m}\right).
    \label{eq:eff-c}
\end{equation} Hence, instead of using a reference capacitance $C_0$, we also can rephrase the dependence on dimensionless turn ratios using the frequencies $\omega_m$ and effective capacitances $\tilde{C}_m$. Turn ratios are, however, naturally used in the statement of Cauer's construction (see Appendix~\ref{app:norm_mode}). Fig.~\ref{fig::smlossy_foster} shows the equivalence between these two approaches in the single-port case. We note that the steeper the slope is at $\mathcal{Y}(\omega_m)$, the weaker the effect of the nonlinear spider term is on the mode $m$, and thus the less anharmonic mode $m$ is. By plotting ${\rm Im}(\mathcal{Y})$ as a function of $\omega$ one can see which modes inherit most of the nonlinearity of the Josephson junction.

\begin{figure}
\centering
\includegraphics[height=5cm]{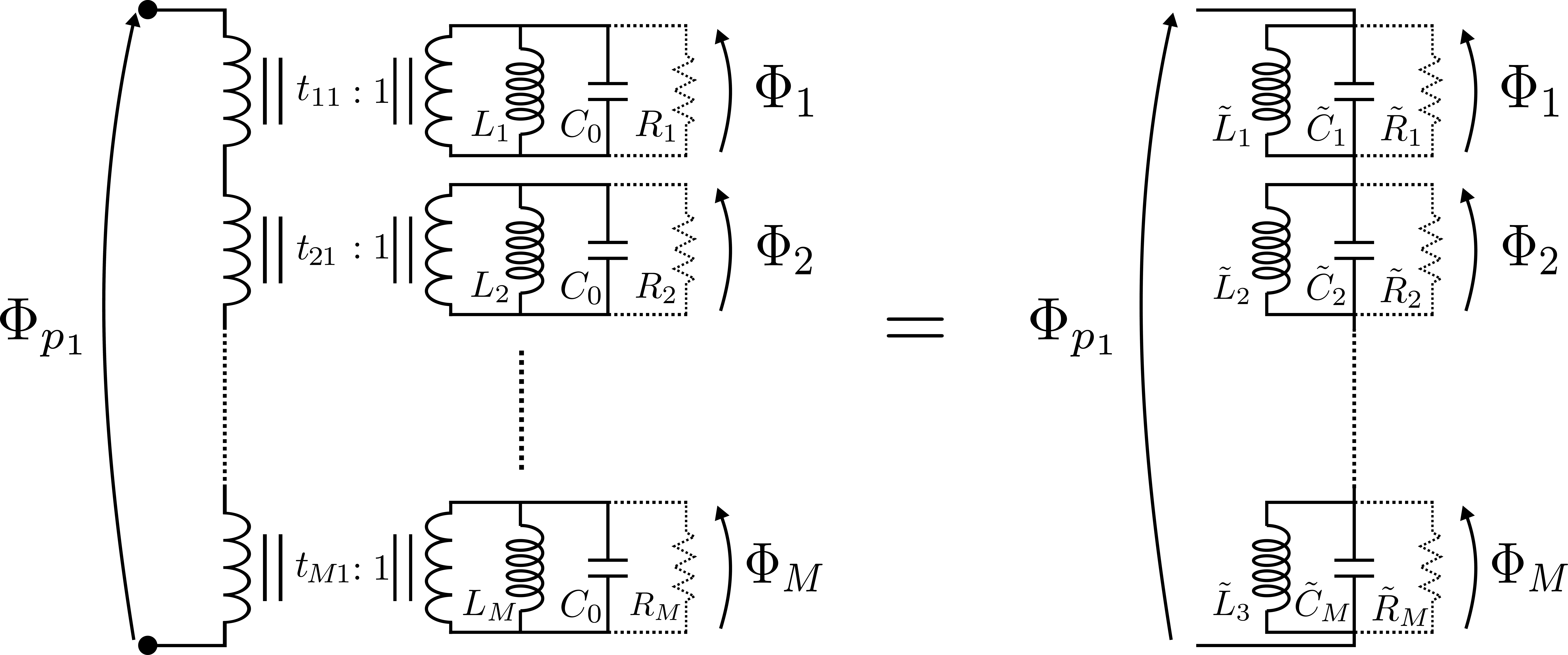}
\caption{Equivalence between the single-port Cauer circuit and the Foster circuit. The parameters in the circuits are related as follows: $\tilde{C}_m = C_0/t_{m1}^2$, $\tilde{L}_m = L_m t_{m1}^2 =\frac{1}{C_m \omega_m^2}$. In the weakly lossy case, the resistances (dotted branches) $R_m$ and $\tilde{R}_m$ are related as $\tilde{R}_m = R_m t_{m1}^2$.}
\label{fig::smlossy_foster}
\end{figure}

\section{Lossy networks}
\label{sec:loss}

When the network is lossy, the entries of its impedance matrix $\mat{\mathcal{Z}}(\omega)$ are no longer purely imaginary. The loss can be included in a lumped-element description of the network by including resistors. Note that resistors may not necessarily model all sources of loss or noise for superconducting devices, as noise can be frequency-dependent, like $1/f$ flux noise (see Chapter~\ref{chap:noise}). An example of loss which can be modeled as a resistor is the dielectric loss inside a capacitor (like the capacitor of the Josephson junction). To model this, we can place a resistor $R$ in parallel with the capacitor. In this case, a large resistance $R$ means, of course, low loss.

When the electric field supported by any of the computationally-relevant modes is present in some of the dielectric on the chip (an oxide, or the silicon substrate), dielectric loss will occur. The amount of loss in a mode is determined by what fraction of the electric field energy associated with the mode is contained in the particular dielectric material (a quantity that we call the participation ratio $r_i$ for the $i$th material \footnote{Not to be confused with the energy-participation ratios of the mode at the junctions discussed in Section~\ref{subsec:epratio}.}), as well as the loss characteristics of the material itself. The latter quantity is determined by the dielectric constant $\epsilon$ of the material, which for lossless materials is real, but has an imaginary component for a lossy material \cite{book:jackson}. The loss tangent of a material is given as 

\begin{equation}
\tan(\delta)=\frac{{\rm Im}(\epsilon)}{{\rm Re}(\epsilon)}=\frac{1}{Q},
\end{equation}
where $Q$ is the intrinsic quality factor of the material due to the dielectric loss. Clearly, $Q$ diverges in case of no loss when ${\rm Im}(\epsilon)=0$. If we have several materials, contributing to the dielectric loss, we have
\begin{equation}
    \tan(\delta_{\rm tot})=\sum_i r_i \tan(\delta_i)\equiv \frac{1}{Q_{\rm tot}};
    \label{eq:tot-loss}
\end{equation}
see e.g. \cite{martinis:decoh, oconnell:di-loss, wang:surface} for more information.

A simple model of a lossy network, that is accurate when the network is not too lossy, is a lossy Cauer network in which a resistor is placed in parallel with each LC oscillator in Fig.~\ref{fig::cauer_circuit}. This model does not represent a completely general linear, passive lossy network as was shown in Refs.~\cite{SAD:bb, solgun2015}. These works prescribe a fully accurate network analysis, based on the work by Brune (as described in, e.g., \cite{guillemin1957synthesis}). Unlike in Eq.~\eqref{eq:z_foster}, for a lossy network the poles of the $\mat{Z}(s)$ (or the zeros of $\mat{Y}(s)$), occur at generally complex $s=i\omega+\sigma$. In what follows, we consider only the simple single-port, lossy Cauer network in detail, but a generalization to multi-port networks is straightforward.

\subsection{Single-port case}
\label{subsec:sportcase}

\subsubsection{RLC circuit}

Consider first the simple example of a single LC~oscillator in parallel with a resistor $R$, the RLC oscillator.  In the case of lossy elements, it is most convenient to work with admittances and impedances in the Laplace domain.

The resonant frequency of the bare LC oscillator is $\omega_r = 1/\sqrt{LC}$ and we define the decay rate of the oscillator as

\begin{equation}
\label{eq:kapparlc}
    \kappa = \frac{1}{RC}.
\end{equation}

Since the admittances add in parallel, we have
\begin{equation}
Y(s)=s C+\frac{1}{s L}+\frac{1}{R}= \frac{C}{s} \left(s^2 + s\kappa + \omega_r^2 \right).
  \label{eq:admRLC} 
\end{equation}
Hence, $Y(s)=0$ implies 
\begin{equation}
s=s_{\pm}=-\frac{\kappa}{2}\pm i\omega_r \sqrt{1-\frac{\kappa^2}{4 \omega_r^2}}.
\label{eq:spoles}
\end{equation}
The resonant frequency $\tilde{\omega}_r =\omega_r \sqrt{1 - \frac{\kappa^2}{4 \omega_r^2}}$ is slightly shifted away from that of the ``bare" LC oscillator. Additionally, the decay rate $\kappa$ defined in Eq.~\eqref{eq:kapparlc} can be identified as $\kappa = -2 \mathrm{Re}(s_{\pm})$. This decay rate (in angular frequency) determines the quality factor $Q=\omega_r/\kappa$ of the mode, i.e., the number of oscillations before damping. Hence, if we connect this back to the definition of the total loss tangent of the mode in Eq.~\eqref{eq:tot-loss}, then we obtain consistency by choosing the resistance $R$ as
\begin{equation}
    R=\frac{1}{C\omega_r \tan(\delta_{\rm tot})}.
\end{equation}

It is also instructive to rewrite the impedance of the parallel RLC circuit in the Laplace domain in terms of its partial fraction expansion and compare it to the general lossless form in Eq.~\eqref{eq:z_foster}. We obtain

\begin{equation}
    Z(s) = \frac{1}{Y(s)} = \frac{1}{C} \left(\frac{s_+}{s_+ - s_-} \frac{1}{s -s_+} - \frac{s_-}{s_+ - s_-} \frac{1}{s -s_-} \right).
\end{equation}
When $\kappa/\omega_r$ is small, we can Taylor expand the poles $s_{\pm}$ in Eq.~\eqref{eq:spoles} to first order to get

\begin{equation}
    s_{\pm} = -\frac{\kappa}{2} \pm i\omega_r + \mathcal{O}\left(\frac{\kappa^2}{\omega_r^2} \right). 
\end{equation}

Thus, in this limit, we can approximate the impedance as

\begin{equation}
\label{eq:zrlcapprox}
    Z(s) \approx \frac{1}{2 C \omega} \left[\frac{\omega + i \kappa/2}{s - i (\omega + i \kappa/2)} + \frac{ \omega - i \kappa/2}{s + i (\omega - i \kappa/2)} \right].
\end{equation}

Now, close to the poles, i.e., for $s \approx \pm i \omega_r$, we can further neglect the $i \kappa/2$ contributions in the numerator of Eq~\eqref{eq:zrlcapprox}, to obtain \footnote{To see this, one can just show that, close to the poles, Eqs.~\eqref{eq:zrlcapprox} and\eqref{eq:zrlcapproxfoster} yield the same first-order Taylor expansion in $\kappa/\omega_r$.} 

\begin{equation}
\label{eq:zrlcapproxfoster}
    Z(s) \approx \frac{1}{2 C } \left[\frac{1}{s - i (\omega_r + i \kappa/2)} + \frac{1}{s + i (\omega_r - i \kappa/2)} \right].
\end{equation}

In principle, Eq.~\eqref{eq:zrlcapproxfoster} is valid only close to the poles. However, far away from the poles the impedance is approximately zero anyway, and thus Eq.~\eqref{eq:zrlcapproxfoster} is, in this sense, a good approximation when the dissipation is small, $\kappa/\omega_r \ll 1$. Eq.~\eqref{eq:zrlcapproxfoster} is the so-called lossy Foster approximation of the impedance of a single RLC oscillator. The circuit construction in Fig.~\ref{fig::cauer_circuit} generalizes this to the lossy multi-mode multi-port case.  

\begin{Exercise}[label=exc:kap]
Give the admittance for an LC oscillator with a resistor $R$ placed {\em in series} with the capacitor, determine $\kappa$ and show that, using the definition for the quality factor $Q=\omega_r/\kappa$ and Eq.~\eqref{eq:tot-loss}, we can identify
\begin{equation}
    R \approx \frac{\tan(\delta_{\rm tot})}{C \omega_r},
\end{equation}
when the resistance $R$ is small.
\end{Exercise}

\begin{Answer}[ref=exc:kap]
We have
\begin{equation}
    Y(s)=\frac{1}{s L}+\frac{1}{R+\frac{1}{s C}},
\end{equation}
for which $Y(s)=0$ at $s_{\pm }=\frac{-R}{2L}\pm \frac{i}{\sqrt{LC}} \sqrt{1-\frac{C R^2}{4L}}$ and hence $\kappa=R/L=\omega_r/Q$. Thus $R=\tan(\delta_{\rm tot}) L \omega_r \approx \frac{\tan(\delta_{\rm tot})}{C \omega_r}$, using the fact that the new resonant frequency ${\rm Im}(s_+)=\frac{1}{\sqrt{LC}}\sqrt{1-\frac{CR^2}{4L}} \approx \omega_r$ when $R$ is small.
\end{Answer}

\subsubsection{Single-port ($N=1$) with $M>1$ lossy network}

The expression in Eq.~\eqref{eq:zrlcapproxfoster} can be shown to extend to a general single- or multi-port network: the poles of $\mat{Z}(s)$ in Eq.~\eqref{eq:z_foster} acquire a real part $\kappa/2$, see Eq.~\eqref{eq:z_lossyfoster} in Appendix~\ref{app:norm_mode}. We discuss the single-port case $N=1$ here. 

 Consider the circuit in Fig.~\ref{fig::smlossy_foster} (right) and let $\omega_m$ be the frequency of the $m$th RLC oscillator and let $\kappa_m$ be its decay rate. For such a network, let $\{s_m\}$ be the $M$ complex zeros of the admittance, i.e.,
\begin{equation}
   \forall s_m, \quad Y(s_m)=0. 
\end{equation} 
Since the network is a series of RLC oscillators, we know that 
\begin{equation}
Y(s)=\frac{1}{\sum_{m=1}^M \frac{1}{Y_m(s)}}=Y_n(s) \frac{1}{1+\sum_{m \neq n} \frac{Y_n(s)}{Y_m(s)}},
\label{eq:y-rlc}
\end{equation}
with $Y_m(s)$ the admittance of the $m$th RLC oscillator as in Eq.~\eqref{eq:admRLC}. Thus a $s=s_n$ such that $Y_n(s_n)=0$ implies that $Y(s_n)=0$ and vice versa: $Y(s)=0$ only if there is some $n$ for which $Y_n(s)=0$. Hence
\begin{equation}
    \omega_m={\rm Im}(s_m), \kappa_m=-2\,{\rm Re}(s_m).
\end{equation}
If we connect the single port to other circuit elements, such as a Josephson junction, we would also need the effective mode capacitance $\tilde{C}_m$ in Eq.~\eqref{eq:eff-c}. From Eq.~\eqref{eq:y-rlc} we can see that  
\begin{equation}
\frac{dY}{ds}\biggl\vert_{s=s_m}=\frac{dY_m}{ds}\biggl\vert_{s=s_m} \approx \frac{dY_m}{ds}\biggl\vert_{s=i/\sqrt{\tilde{L}_m \tilde{C}_m}}=2 \tilde{C}_m, 
\end{equation}
as expected. 

As we mentioned before, $Y(s)$ could have been obtained from numerical electromagnetic simulations and the goal is to fit the data with a series of RLC oscillators. However, only data for pure imaginary $s=iz$, i.e. only $\mathcal{Y}(z)$, may be available while $\kappa_m$ relates to the real part of the complex zero $s_m$. In that case ---as a low-loss approximation--- we can determine the real roots $z_m$ for which ${\rm Im}(\mathcal{Y}(z_m))=0$ and let $\omega_m=z_m$. The decay rate of the mode of frequency $\omega_m$ is, by analogy with a single RLC oscillator, given by $\kappa_m=\frac{1}{\tilde{R}_m \tilde{C}_m}$ where $\tilde{C}_m=\frac{1}{2}{\rm Im}(\mathcal{Y}'(\omega_m))$ and $\tilde{R}_m=1/{\rm Re}(\mathcal{Y}(z_m))$, leading to the expression
\begin{equation}
    \kappa_m \approx \frac{2 {\rm Re}(\mathcal{Y}(\omega_m))}{{\rm Im}(\mathcal{Y}'(\omega_m))} \ge 0.
    \label{eq:approx-kappa}
\end{equation}

In Section~\ref{sec:lindblad} we discuss how to use the decay rates $\kappa_m$ and the Hamiltonian $H$ in Eq.~\eqref{eq:hcm} to model the dynamics of the network by using a Lindblad master equation, see Eq.~\eqref{eq:lindblad}.

\section{Using the QuCAT software package}
\label{sec:qucat}

In this section, you will get acquainted with the QuCAT software package through two exercises. QuCAT stands for Quantum Circuit Analyzer Tool and it is an open source Python library which provides standard analysis tools for superconducting circuits, built around Josephson junctions, developed by Mario Gely while working at Delft University of Technology (Steele Lab). QuCAT features an intuitive graphical or programmatical interface to create circuits, the ability to compute their Hamiltonian, and a set of complimentary functionalities such as calculating dissipation rates or visualizing current flow in the circuit. More information on what can be done using QuCAT can be found at \url{https://qucat.org} and Ref.~\cite{Gely_2020}.

The first exercise is about understanding how to represent a transmon qubit by a different circuit. The second exercise \ref{exc:qucat2} has further analysis of this transmon and its coupling to a lossy resistor. 

\begin{Exercise}[title={Black-box quantization of a modified transmon circuit I},label=exc:qucat]
\begin{figure}[htb]
\centering
\begin{subfigure}[t]{0.4\textwidth}
\centering
\includegraphics[height=0.4\textwidth]{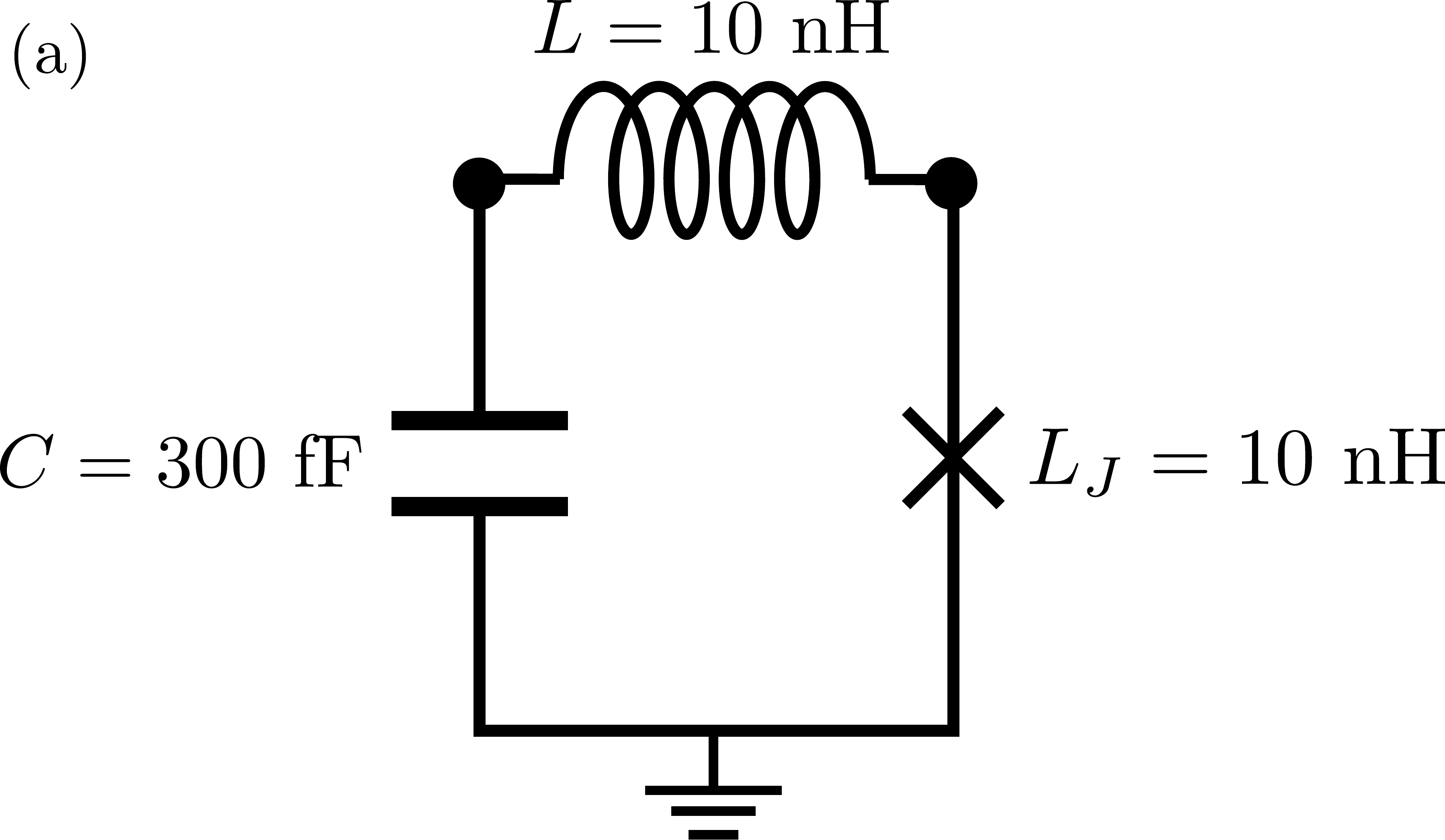}
\end{subfigure}
\begin{subfigure}[t]{0.4\textwidth}
\centering
\includegraphics[height=0.4\textwidth]{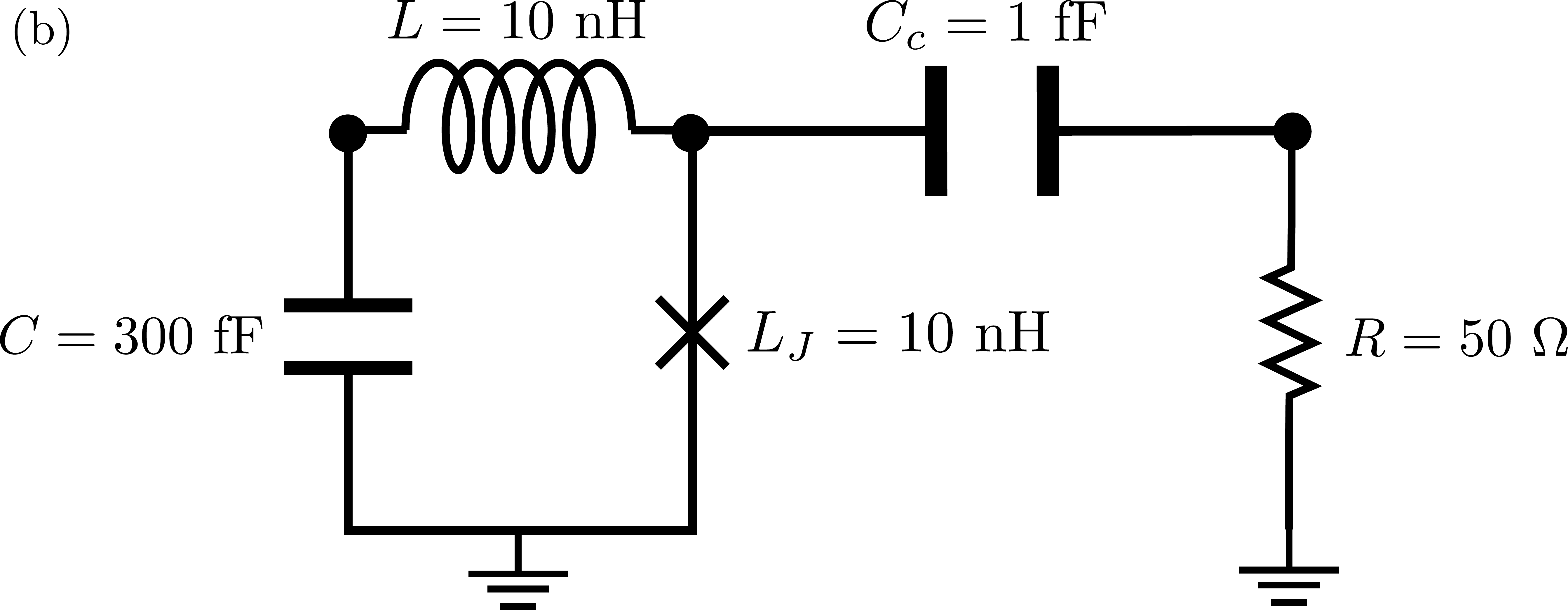}
\end{subfigure}
\caption{Modified transmon circuit in (a) with a coupling to a feedline which functions as a source of loss modeled as a resistor in (b).}
\label{fig:gely-ex}
\end{figure}
\Question Consider the circuit in Fig.~\ref{fig:gely-ex}(a). Temporarily replace the junction of this circuit with an inductor, with inductance equal to the Josephson inductance $L_J$. Write down the admittance $\mathcal{Y}(\omega)$ at the nodes of the inductor of this network.
\Question Determine the resonance frequency $\omega_m$ of this circuit at which $\mathcal{Y}(\omega_m)=0$. {\em Hint: this is easier if you first try to write $\mathcal{Y}$ as a rational function in $\omega$}.
\Question Prove that the effective capacitance of the circuit is 
\begin{align}
   C_m = \text{Im}\left(\mathcal{Y}'(\omega=\omega_m)\right)/2 =  C\frac{(L+L_J)^2}{L_J^2}, 
\end{align} see Eq.~\eqref{eq:eff-c}.
{\em Hint: you can simplify calculations by remembering that $\omega_m$ cancels the numerator of $\mathcal{Y}$ written in rational form.} 
\Question Using the resonance frequency and the effective capacitance, write down the Hamiltonian of the circuit as in Eq.~\eqref{eq:single-p-ham} with its spider contribution. The only operators in the Hamiltonian should be annihilation and creation operators. 
\Question Write the Hamiltonian that would result from expanding the cosine potential to fourth order and subsequently neglecting all non-diagonal terms in the Fock basis. Express the non-harmonic part of the Hamiltonian as a function of the anharmonicity
\begin{equation}
  \tilde{\delta} = -\frac{e^2}{2C}\frac{L_J^3}{(L+L_J)^3}.
\end{equation}
How does the anharmonicity of this circuit differ from the anharmonicity of the regular transmon (without an inductor) as derived in Section~\ref{sec:tr_approx}?
\end{Exercise}

\begin{Answer}[ref=exc:qucat]
\Question 
The capacitor and inductor are connected in series, so their impedances add up. This series combination is in parallel to the junction (which we have temporarily replaced with an inductor $L_J$), and so their admittances add up. As a consequence the total admittance at the nodes of the junction is
\begin{equation}
  \mathcal{Y}(\omega) = \frac{1}{iL_J\omega}+\frac{1}{iL\omega+\frac{1}{iC\omega}}.
\end{equation}
\Question We want to write $\mathcal{Y}$ as a rational function in $\omega$, so we rewrite $\mathcal{Y}(\omega)$ as
\begin{equation}
  \mathcal{Y}(\omega) = \frac{1-(L+L_J)C\omega^2}{iL_J\omega(1-LC\omega^2)}.
  \label{eq:derivative_from}
\end{equation}
We see that the numerator, and thus the admittance, is 0 when 
\begin{equation}
  \omega=\omega_m =\frac{1}{\sqrt{(L+L_J)C}}.
\end{equation}
\Question The derivative can be calculated from Eq.~\eqref{eq:derivative_from}. Applying standard formulas for the derivative of fractions, we get (without simplification)
\begin{equation}
  \mathcal{Y}'(\omega) = \frac{(-2(L+L_J)C\omega)(iL_J\omega(1-LC\omega^2))-\frac{\partial((iL_J\omega(1-LC\omega^2))}{\partial\omega}(1-(L+L_J)C\omega^2)}{(iL_J\omega(1-LC\omega^2))^2}.
\end{equation}
As we evaluate this function at $\omega_m$, the right part of the numerator will be equal to zero, so we do not need to calculate the derivative fully. Further simplification yields
\begin{equation}
  \mathcal{Y}'(\omega_m) = \frac{(-2(L+L_J)C\omega_m)}{(iL_J\omega_m(1-LC\omega_m^2))}.
\end{equation}
Filling in our expression for the resonance frequency and simplifying the expression further, we get
\begin{equation}
  \mathcal{Y}'(\omega_m) = 2iC\frac{(L+L_J)^2}{L_J^2}.
\end{equation}
Taking the imaginary part and dividing by two yields the effective capacitance of the circuit:
\begin{equation}
  C_m = C\frac{(L+L_J)^2}{L_J^2}.
\end{equation}
\Question The Hamiltonian is given by
\begin{equation}
  H = \hbar\omega_m\hat b^\dagger \hat b -E_J\left(\cos(\phi_\text{zpf}(\hat b+\hat b^\dagger))+\frac{(\phi_\text{zpf}(\hat b+\hat b^\dagger))^2}{2}\right),
\end{equation}
where
\begin{equation}
  \phi_\text{zpf} = \frac{\hbar}{2e}\sqrt{\frac{\hbar}{2\omega_m C_m}},
\end{equation}
identical to its definition in Eq.~\eqref{eq:zpf-reduced} (using the definition of $\Phi_0$).
\Question 
The expected form of the Hamiltonian is
\begin{equation}
   H/\hbar = \big(\omega_m+\tilde{\delta}\big)\hat b^\dagger \hat b+\frac{\tilde{\delta}}{2}\hat b^\dagger\hat b^\dagger \hat b \hat b.
\end{equation}
The anharmonicity expression only needs its parameters filled in:
\begin{multline}
  \tilde{\delta} = -\frac{E_J}{2\hbar}\phi_\text{zpf}^4 
  = -\frac{\left(\frac{\hbar}{2e}\right)^2}{2L_J\hbar}\left(\frac{2e}{\hbar}\right)^4\left(\frac{\hbar}{2\frac{1}{\sqrt{(L+L_J)C}} C\frac{(L+L_J)^2}{L_J^2}}\right)^2 
  =-\frac{e^2}{2C \hbar}\frac{L_J^3}{(L+L_J)^3}=-\frac{E_C}{\hbar} \frac{L_J^3}{(L+L_J)^3}.
\end{multline}
Comparing with Eq.~\eqref{eq:anharmont}, we see that $\tilde{\delta}$ has an additional factor $\frac{L_J^3}{(L+L_J)^3}$ as compared to $\delta$. Note also that the resonant frequency is modified.
\end{Answer}

In order to complete the next Exercise, you will have to download and install Python 3 on your computer (we recommend installing \href{https://www.anaconda.com/products/individual}{Anaconda}), as well as two packages, QuTiP and QuCAT. To install the two packages, simply open a command prompt and run:

\inline{pip install qutip}

and then

\inline{pip install qucat}

\begin{Exercise}[title={Black-box quantization of a modified transmon circuit II},label=exc:qucat2]
\Question Using QuCAT, build the circuit in Fig.~\ref{fig:gely-ex}(a) either \href{https://qucat.org/tutorials/basics.html#Building-a-circuit-with-the-GUI}{graphically} or \href{https://qucat.org/tutorials/basics.html#Building-the-circuit-programmatically}{programmatically},  using the indicated \textbf{numerical} circuit parameters. Verify your analytical calculations of the resonance frequencies and the anharmonicities using the functions \href{https://qucat.org/tutorials/basics.html#Eigen-frequencies}{\inline{eigenfrequencies}} and  \href{https://qucat.org/tutorials/basics.html#Anharmonicity}{\inline{anharmonicities}}.
\textit{Note 1: QuCAT gives all quantities in frequency units (not angular frequencies). This means that the function \inline{eigenfrequencies} will give $\omega_m/2\pi$, while the function \inline{anharmonicities} will give $\delta/2\pi$. Note that QuCat uses the letter $A_m$ for the anharmonicity of the mode at angular frequency $\omega_m$ instead of the symbol $\delta$.} 
\textit{Note 2: There have been reports of the QuCAT graphical user interface malfunctioning in recent versions of MacOS. The exercise can still be completed using a purely programmatic approach.} 
\Question In practice we are only interested in the first three levels of such a qubit. Give the analytical and numerical expression for the transition frequencies between these levels, using the diagonal Hamiltonian previously obtained analytically. A transition \textit{frequency} is the difference in eigen\textit{energies} of two levels, divided by Planck's constant.
\Question Using the \href{https://qucat.org/tutorials/basics.html#Hamiltonian,-and-further-analysis-with-QuTiP}{\inline{hamiltonian}} function of QuCAT (see also \href{https://qucat.org/API/circuit_functions/hamiltonian.html#qucat.Qcircuit.hamiltonian}{here}), calculate these transition frequencies in the case where we would not have discarded non-diagonal terms. This corresponds to the Hamiltonian returned by the \inline{hamiltonian} function with the Taylor expansion order specified to be 4 (\inline{order=4}). Include ten excitations for these calculations (\inline{excitations = 10}), you do not need to specify which modes to use in this exercise. Then, calculate the transition frequencies with the cosine Taylor expanded to order 6 (\inline{order=6}). Compare the different values you obtained with the perturbative, analytical calculation of the previous question by writing sufficient decimal points in order to see a difference between them. This should allow you to see the validity of the approximations made in our treatment of the cosine term. \textit{Note: QuCAT returns the Hamiltonian in units where h=1, such that eigenenergies subsequently computed with QuTiP will be in units of hertz.}
\Question Now, consider the case where this circuit would be capacitively connected to a lossy 50$\Omega$ feedline, represented by a resistor in the circuit of Fig.~\ref{fig:gely-ex}(b). Without carrying out any calculations, briefly explain how you would proceed to analytically determine the resulting loss rate.
\Question Construct the circuit of Fig.~\ref{fig:gely-ex}(b) in QuCAT and provide a numerical value for the \href{https://qucat.org/tutorials/basics.html#Loss-rates}{loss rate}. 
\end{Exercise}

\begin{Answer}[ref=exc:qucat2]
\Question 
\begin{lstlisting}
import qucat as qc 
import numpy as np
from scipy.constants import h,e

C = 300e-15
L = 10e-9
Lj= 10e-9

# Analytical results first 
print("analytical results")
print("w_m = %.2e Hz"%(1/np.sqrt((L+Lj)*C)/2/np.pi))
print("A = %.2e Hz"%(e**2/2/C*Lj**3/(L+Lj)**3/h))

# Now simulating the problem in QuCAT
cir = qc.Network([
  qc.C(0,1,C),
  qc.L(1,2,L),
  qc.J(2,0,Lj)
  ])
print("QuCAT results")
print("w_m = %.2e Hz"%(cir.eigenfrequencies()[0]))
print("A = %.2e Hz"%(cir.anharmonicities()[0]))
\end{lstlisting}

This gives the following results:

\begin{lstlisting}
analytical results
w_m = 2.05e+09 Hz
A = 8.07e+06 Hz
QuCAT results
w_m = 2.05e+09 Hz
A = 8.07e+06 Hz
[Finished in 5.3s]
\end{lstlisting}
\Question 
We are interested in the first three levels $|0\rangle$,$|1\rangle$,$|2\rangle$ of the system. The eigenenergies are going to be the expectation values of the Hamiltonian for these three states, namely $0$, $\hbar\omega_m-A$,and $2\hbar\omega_m-3A$. The transition frequencies are then (in frequency units) $\omega_m/2\pi-A/h$  and $\omega_m/2\pi-2A/h$.
Numerical values are \inline{2.04661e+09 Hz, 2.03854e+09 Hz}
\Question 
The following code addresses the questions:
\begin{lstlisting}
import qucat as qc 
import numpy as np
from scipy.constants import h,e

C = 300e-15
L = 10e-9
Lj= 10e-9

fm = 1/np.sqrt((L+Lj)*C)/2/np.pi
A = e**2/2/C*Lj**3/(L+Lj)**3/h

# Analytical results first 
print("analytical results")
print("%.5e Hz, %.5e"%(fm-A,fm-2*A))

# Now simulating the problem in QuCAT
cir = qc.Network([
  qc.C(0,1,C),
  qc.L(1,2,L),
  qc.J(2,0,Lj)
  ])
H = cir.hamiltonian(
    taylor = 4,
    excitations = 10)
ee = H.eigenenergies()
print("4th order diagonalization")
print("%.5e Hz, %.5e"%(ee[1]-ee[0],ee[2]-ee[1]))

H = cir.hamiltonian(
    taylor = 6,
    excitations = 10)
ee=np.real(H.eigenenergies())
print("6th order diagonalization")
print("%.5e Hz, %.5e"%(ee[1]-ee[0],ee[2]-ee[1]))
\end{lstlisting}

This yields the result
\begin{lstlisting}
analytical results
2.04661e+09 Hz, 2.03854e+09
4th order diagonalization
2.04655e+09 Hz, 2.03834e+09
6th order diagonalization
2.04661e+09 Hz, 2.03853e+09
[Finished in 9.9s]
\end{lstlisting}
\Question 
We would temporarily replace the junction of the circuit with an inductor (with an inductance equal to $L_J$). We would then write the admittance $\mathcal{Y}(\omega)$ at the nodes of this inductor $L_J$. We would then determine the complex number $\zeta_m$ which satisfies $\mathcal{Y}(\zeta_m)=0$. The loss rate is given by $\kappa_m = 2\,\text{Im}[\zeta_m]$.
\Question 

\begin{lstlisting}
import qucat as qc 

C = 300e-15
L = 10e-9
Lj= 10e-9
Cc = 1e-15
R = 50

cir = qc.Network([
  qc.L(1,2,L),
  qc.J(0,2,Lj),
  qc.C(1,0,C),
  qc.C(2,3,Cc),
  qc.R(3,0,R)
  ])
print("Loss rate = %.2e Hz"%(cir.loss_rates()[0]))
\end{lstlisting}

This gives:

\begin{lstlisting}
Loss rate = 1.11e+03 Hz
[Finished in 4.4s]
\end{lstlisting}
\end{Answer}

\section{Networks of transmon qubits in the dispersive regime}
\label{sec:network-tr}
In Section~\ref{sec:bb}, we have seen that the Hamiltonian of any microwave network involving $N$ Josephson junctions can be written in the normal-mode basis as in Eq.~\eqref{eq:h_nm}.  We now consider a circuit of CPBs capacitively coupled to a linear network \footnote{So here we exclude the case in which there is an inductive shunt, since otherwise we would not have a simple CPB, but a flux qubit or a fluxonium.}. We further assume that all degrees of freedom in the circuit are transmon-like and thus it is a good approximation to Taylor expand the cosine of the Josephson potential up to fourth~order, similar to the approach in Section~\ref{sec:tr_approx}. For the case of a single, isolated transmon we know that this expansion was justified when $E_J/E_C \gg 1$. However, in our case, in which transmons are connected to general networks, it is not immediately clear which capacitive energy scale we should consider for each port. This capacitance should be independent of the particular lumped-element representation of the circuit and thus cannot simply be the capacitance directly shunting the Josephson junction. In fact, the shunting capacitance that we should associate with port $n$ is the total so-called equivalent Th\'evenin capacitance seen by this port. This is defined as the equivalent capacitance that the port sees when all the inductances are removed, i.e., substituted with open circuits. In order to understand how to obtain the Th\'evenin capacitances, we can consider the general Cauer circuit in Fig.~\ref{fig::cauer_circuit} without the inductances. Assuming that the load at each port is also purely inductive, we should also assume that all the ports are open-circuited, apart from the port we are insterested in. In this case, the port $n$ effectively sees $M$ capacitances in series with capacitance $t_{m n}^2 C_0$ with $m=1, \dots, M$. Thus, the Th\'evenin capacitance seen by the $n$th port is given by the series combination of all these capacitances

\begin{equation}
    C_{Tn} = \biggl(\sum_{m=1}^M \frac{1}{t_{m n}^{2} C_0} \biggr)^{-1}, \quad n=1, \dots, N. 
\end{equation}
   
The Th\'evenin equivalent charging energy is then
\begin{equation}
E_{Cn} = \frac{e^2}{2 C_{Tn}}.
\end{equation}
All degrees of freedom are in the transmon regime if $E_{Jn}/E_{Cn} \gg 1$. \par 

By Taylor expanding Eq.~\eqref{eq:h_nm} up to fourth order, we obtain
\begin{equation}
H \approx  \sum_{m =1}^M \hbar \omega_m \hat{a}_m^{\dagger} \hat{a}_m 
 - \sum_{n=1}^N \frac{E_{Jn}}{24} \biggl(\frac{2 \pi}{\Phi_0} \biggr)^4 \biggl[ \sum_{m =1}^M t_{mn} \sqrt{\frac{\hbar Z_m}{2}} \bigl(\hat{a}_m + \hat{a}_m^{\dagger} \bigr) \biggr]^4.
\end{equation}
Now, in close analogy to what was done in Section~\ref{sec:tr_approx} and Exercise \ref{exc:qucat2}, we may opt to neglect all the contributions in the nonlinear term that do not preserve the number of excitations in each mode $m$. Assuming that the modes are not degenerate, this can be seen again as an instance of first-order perturbation theory. As an example, we will neglect terms proportional to $\hat{a}_m^{\dagger} \hat{a}_{m'} \hat{a}_{m'}^{\dagger} \hat{a}_{m'}$. Note that this is allowed only if we assume that the modes $m$ and $m'$ are sufficiently detuned \footnote{A less aggressive pruning of terms could be to only keep terms which preserve the total number of excitations in all or a subset of modes; for example, keep a term like $\hat{a}_m^{\dagger}\hat{a}_k^{\dagger}\hat{a}_l \hat{a}_n$, which is more proper when some frequencies are close together or matching.}. We will only keep resonant (energy-preserving) terms like $ \hat{a}_m^{\dagger} \hat{a}_{m} \hat{a}_{m'}^{\dagger} \hat{a}_{m'}$. There are only two kinds of terms which are kept
\begin{subequations}
\begin{equation}
\bigl(\hat{a}_m + \hat{a}_m^{\dagger} \bigr)^4 \overset{\rm RWA}{\approx}12 \hat{a}_m^{\dagger} \hat{a}_m + 6  \hat{a}_m^{\dagger} \hat{a}_m^{\dagger} \hat{a}_m \hat{a}_m, 
\end{equation}
\begin{equation}
\bigl(\hat{a}_m + \hat{a}_m^{\dagger} \bigr)^2 \bigl(\hat{a}_{m'} + \hat{a}_{m'}^{\dagger} \bigr)^2 \overset{\rm RWA}{\approx} 2 \hat{a}_m^{\dagger} \hat{a}_m + 2 \hat{a}_{m'}^{\dagger} \hat{a}_{m'} + 4 \hat{a}_m^{\dagger} \hat{a}_m \hat{a}_{m'}^{\dagger} \hat{a}_{m'}.
\end{equation}
\end{subequations}
Here, on the right-hand side, we have neglected constant terms and made use of the bosonic commutation relation $[\hat{a}_m, \hat{a}_{m'}^{\dagger}] = \delta_{m m'}\mathds{1}$ to reorder some terms. We can thus write 
\begin{equation}\label{eq:h_tr_circuit}
\frac{H}{\hbar} \approx \sum_{m =1}^M \Omega_m \hat{a}_m^{\dagger} \hat{a}_m + \frac{\delta_m}{2} \hat{a}_m^{\dagger} \hat{a}_m^{\dagger} \hat{a}_m \hat{a}_m + \sum_{\mathrm{all \, pairs}, \, m \neq m'}  \chi_{m m'}   \hat{a}_m^{\dagger} \hat{a}_m\hat{a}_{m'}^{\dagger} \hat{a}_{m'},
\end{equation}
where we have defined the effective mode frequencies
\begin{equation}
\Omega_m = \omega_m + \delta_m + \frac{1}{2} \sum_{m' =1, m' \neq m}^M \chi_{m m'},
\end{equation}
the anharmonicities 
\begin{equation}
 \delta_m = 
%\sum_{n=1}^N \delta_{m}^{(n)} =
- \biggl(\frac{2 \pi}{\Phi_0} \biggr)^4 \sum_{n=1}^N \hbar \frac{E_{Jn}}{8} t_{nm}^4 Z_m^2, 
\end{equation}
and the cross-Kerr coefficient
\begin{equation}\label{eq::chi_4rwa}
\chi_{m m'} =
%= \sum_{n =1}^N \chi_{m m'}^{(n)} = 
- \biggl(\frac{2 \pi}{\Phi_0} \biggr)^4 \hbar \sum_{n=1}^N \frac{E_{Jn}}{4} t_{nm}^2 t_{nm'}^2 Z_m Z_{m'}.
\end{equation}

It is worth making some comments on the Hamiltonian in Eq.~\eqref{eq:h_tr_circuit}. First of all, at first sight, it might seem unclear which of the modes should be considered a transmon and which a resonator (when the original coupled network contained resonators), since all modes are treated on the same footing. In this picture all modes inherit the anharmonicity due to the mutual coupling, but we can usually identify transmon-like modes to be those with the highest anharmonicity, for which the associated vector of turn ratios $\bm{t}_m$ has a large element corresponding to one of the ports. The other modes, with much weaker anharmonicity, can be considered resonator-like modes. 

\par A common alternative approach to describe coupled transmon-resonator systems is to introduce `local' transmon or resonator modes with annihilation and creation operators $\hat{b}_m, \hat{b}_{m}^{\dagger}$ and then add their capacitive couplings. In this case these local modes are coupled by a linear term like $-\hbar g(\hat{b}_m - \hat{b}_m^{\dagger})(\hat{b}_{m'} -\hat{b}_{m'}^{\dagger})$, with $g$ the coupling strength, as in Eq.~\eqref{eq:tr-osc-first}. These terms are not present in the Eq.~\eqref{eq:h_tr_circuit}, since we directly diagonalize them by obtaining the normal modes with annihilation and creation operators $\hat{a}_m, \hat{a}_{m}^{\dagger}$. 
%This apparently conflicting description is simply due to the fact that in the two pictures one is talking about coupling different systems, i.e., either the localized modes or the normal modes. \par 
It is interesting to note that when the normal modes are sufficiently detuned and we can use Eq.~\eqref{eq:h_tr_circuit} to describe the dynamics of the circuit, we are using a diagonal Hamiltonian, which is very easy to use. Also, if we consider a transmon-like and a resonator-like mode with annihilation operators $\hat{a}_t$, $\hat{a}_r$, respectively, the cross-Kerr term $\chi_{t r} \hat{a}_t^{\dagger} \hat{a}_t \hat{a}_r^{\dagger} \hat{a}_r$ can be understood as a dispersive interaction that can be used to measure the state of the transmon mode. Projecting the cross-Kerr interaction onto the first two levels of the transmon we get a term $ \frac{\chi_{t r }}{2} \sigma_z \hat{a}_r^{\dagger} \hat{a}_r$, which can be interpreted as a transmon-state-dependent shift by $\chi = \chi_{tr}/2$ of the resonator frequency, as in Eq.~\eqref{eq:dispersH} \footnote{Be mindful of the factor of $2$ between the definition of the cross-Kerr coefficients in Eqs.~\eqref{eq:h_tr_circuit} and \eqref{eq::chi_4rwa} and the dispersive shift in Eq.~\eqref{eq:dispersH}.}. By probing the (readout) resonator, we are able to identify the state of the normal, transmon mode, as discussed in Exercise \ref{exc:disp-meas}. 

If instead one wants to use `circuit-localized' modes to study the case in which each mode is far away in frequency from the other, then the standard approach would be to use a Schrieffer-Wolff (SW) transformation discussed in Section~\ref{sec:SW}. That analysis again leads to a diagonal dispersive Hamiltonian~\cite{Blais_2021} and the SW transformation \emph{delocalizes or dresses} the modes. We can thus see that the normal-mode approach and the local mode plus SW approach lead to similar expressions. The normal-mode approach, however, is efficiently generalized to circuits with many degrees of freedom, as we have seen, while the use of SW transformations with many subsystems is generally rather cumbersome. \par 
As a final comment, consider the cross-Kerr interaction between two transmon-like modes $1$ and $2$, i.e. $\chi_{12} \hat{a}_{1}^{\dagger}\hat{a}_{1} \hat{a}_{2}^{\dagger}\hat{a}_{2}$. If we project both modes onto their qubit subspaces, replacing $\hat{a}_1^{\dagger}\hat{a}_1$ by $\frac{1}{2}(\mathds{I}+Z_1)$ etc., we obtain a term $\frac{1}{4}\chi_{12} Z_1 Z_2$. This is the term responsible for the so-called $ZZ$ crosstalk (see Ref.~\cite{andersen2020} for instance). Hence, the crosstalk effect is naturally explained in the normal-mode approach.
Two-qubit gates between transmon gates, such as the CZ gate via external flux-tuning~\cite{Martinis_2014} to the avoided $\ket{11} \leftrightarrow \ket{02}$ crossing, can in principle also be analysed using the normal-mode approach ~\cite{msthesis-olivia}. In this picture, the entangling power and the on/off ratio of the CZ gate must be achieved by a strengthening of the entangling cross-Kerr interaction on the normal modes, whose value also depends on the external flux.

\chapter{Nonreciprocity}
\label{chap:nonrec}

The kit of circuit elements that we have used in these notes is incomplete in a significant way. All linear circuit elements that we have seen so far are {\em reciprocal}. We have formally defined reciprocity in Eq.~\eqref{eq:def-rec} and will discuss it in more intuitive detail in this chapter. In optical physics, reciprocity is captured by the saying, “We see the eyes that see us" (see Popular Summary of \cite{PhysRevX.3.031001}). This has a corresponding meaning for the impedance of multi-terminal circuits.  But the truth of the adage in wave physics is dependent on the medium through which the rays pass.  For certain media (e.g., those exhibiting the Faraday effect) we do {\em not} see the eyes that see us, and then the system is nonreciprocal.  

Reciprocity is generally not taken to be an applicable concept for nonlinear systems. A very good standard treatment of this may be found in Ref.~\cite{desoer1969basic}, Chapter $16.4$, especially pp. $694$-$696$. The treatment of Newcomb~\cite{newcomb}, Definition $2$-$7$, p. $29$, is unique in that it gives a more general treatment in which the definition of reciprocity can also be applied to nonlinear systems. When using the definition given by him, almost any nonlinear system, or networks containing sources, is non-reciprocal.  In particular, the single Josephson junction is non-reciprocal. One cannot rule out that a nonlinear system can be cleverly synthesized to mimic a linear system, and then exhibit nonreciprocity.  Videoconferencing on the internet is certainly an example of a very nonlinear optical transmission system. You can see that it does not mimic a reciprocal system by considering that the adage above is false: by looking at someone's eye on the screen, you are not making eye contact. 

Confining our attention to linear electrical circuits, as we do in the black-box quantization method in Chapter~\ref{chap:ln}, we will see that we have really missed something up until now. But to state more mathematically what we mean by the reciprocity (or not) of a linear electrical circuit, we must first digress and introduce a few further observations about multi-terminal electrical networks.

\section{Ports, terminals, and an important multi-terminal device}
\label{sec:mtn}

We have already discussed multi-port networks in  Section~\ref{sec:mport}. A port is a chosen pair of nodes to be connected to the outside world. Sometimes two different ports can share a single node, which is often the ground node.  Usually not all network nodes participate in ports. The subset of nodes which can participate in ports is referred to as {\em terminals}, or in some literature as {\em poles} \footnote{In some languages, a port is referred to as a two-pole (e.g., in Dutch {\em tweepool}).}. 

Terminals are nodes that are deemed suitable for communication with the rest of the world. A node will attain this status in two ways: 1) It is to be connected to a source, either in the form of a standard lumped current or voltage source, or in the form of a transmission-line cable. 2) Since we will focus here on the linear parts of networks, we can also consider a node to which a Josephson junction is attached to be a terminal, as we have done in Section~\ref{sec:bb}.

Another feature of a terminal pair comprising a port is that the instantaneous current entering one of these terminals from the outside world is equal to the current leaving the other terminal. Also, the response of our network should not depend on the voltage difference (or flux difference) between two terminals that are {\em not} in the same port. If there is such a dependence, then this pair should define a new port. Most of these constraints are obvious in actual examples, like the ones we will be considering shortly. 

 If we have singled out $N$ nodes in our linear network to be terminals, we can form an $N\times N$ response matrix in the following way. Consider a separate ground node not in the network.  Attach a voltage source, described by $V_l(s=i \omega)$, between terminal $l$ and the ground node, so that each node effectively represents a port as in Fig.~\ref{fig:el_network}.
 Recall from Section~\ref{subsec:LF} that the argument ``$s$'' indicates that we are in the Laplace domain, and that by setting $s=i \omega$ we are restricting to the Fourier domain. Connect a short-circuit wire from each of the other terminals to ground. Measure the currents $I_k(s=i\omega)$ in the Fourier domain in each of the wires (the particular current $I_k$ is just the Fourier transform of the current flowing through the wire at terminal $k$. For terminal $k=l$ it is the Fourier transform of the current through the voltage source). Due to the linearity of the system, we will get a well-defined number (complex and $\omega$-dependent), by taking the ratio:
\begin{equation}
    \frac{I_k(i \omega)}{V_l(i \omega)}\equiv Y_{kl}(i \omega).
\end{equation}
Thus, $Y_{kl}(i\omega)$ are the entries of the admittance matrix as in Section~\ref{sec:mport}. 
For ports we usually say that the impedance matrix is the inverse of admittance, see Eq.~\eqref{eq:inverseY}, but in fact this {\em terminal} admittance matrix never has an inverse (can you see why?). 

Using the terminal admittance matrix, we can straightaway state the condition for circuit reciprocity, generally in the Laplace domain:
\begin{equation}
\mbox{circuit reciprocity}\Longleftrightarrow Y_{ij}(s)=Y_{ij}(s).\label{eq:recipdef}
\end{equation}
In other words, in a reciprocal network, $Y$ is a complex {\em symmetric} matrix at all frequencies. It can be confirmed~\cite{desoer1969basic} that any network made of resistive, inductive and capacitive branches, with mutual inductances, and with transformers, is reciprocal.  Note that lossiness is perfectly compatible with reciprocity.

\section{The circulator and the gyrator}
\label{sec:gyr}

\subsection{Nonreciprocity in action}

\begin{figure}
    \centering
    \begin{subfigure}[t]{0.9 \textwidth}
\centering
\includegraphics[scale=0.40]{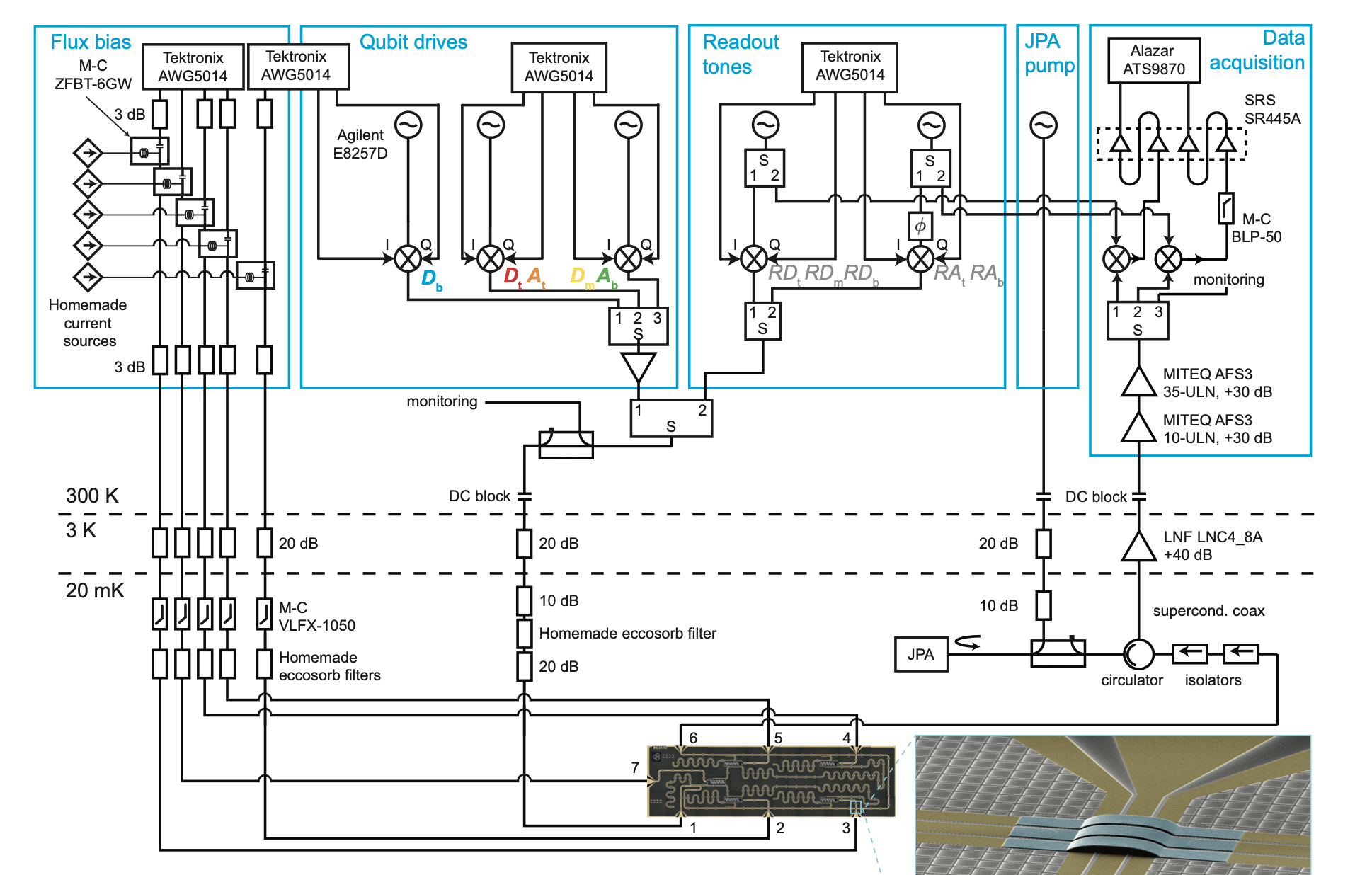}
\subcaption{Schematic diagram taken from the 5-transmon qubit experiment in Ref.~\cite{Riste.etal.2015:BitflipChip} showing a circulator and isolators besides many other electronic components. A circulator where one port is terminated by a resistor is called an isolator; it ensures that incoming noise/radiation into the readout line for outgoing signals is directed into this resistor and hence dissipates.}
    \label{fig:chip-with-isol}
    \end{subfigure}
    \begin{subfigure}[t]{0.9 \textwidth}
\centering
    \includegraphics[scale=0.25]{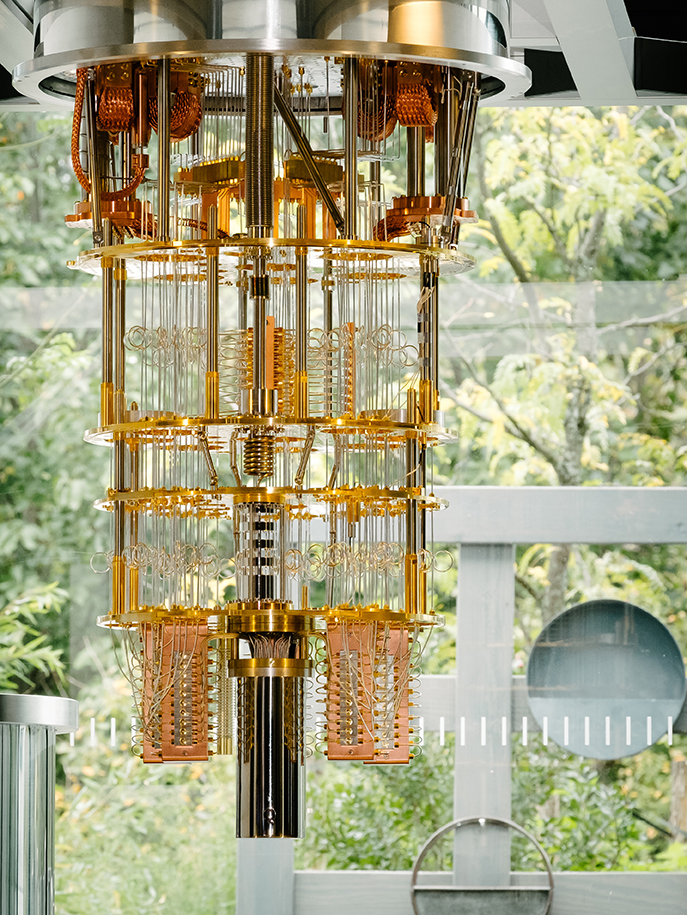}
    \subcaption{IBM cryostat hosting a 50-qubit chip (50Q system) from 2017 with stacks of circulators/isolators at the bottom left and right. Picture taken from  \href{https://admin02.prod.blogs.cis.ibm.net/blogs/think/2017/11/44283/}{IBM Quantum Blog ``The Future is Quantum" (Nov. 2017)}.}
    \label{fig:IBM-50}
\end{subfigure}
\caption{Where non-reciprocal elements are needed.}
\end{figure}

Virtually every solid-state qubit experiment at a quantum technology lab contains several instances of a microwave part called a {\em circulator}, see Fig.~\ref{fig:chip-with-isol}. Such circulators are used on each readout line and each readout line itself is used for the readout, in multiplexed-mode, of several transmon qubits. For multi-qubit chips this can amount to having tens or more circulators, see Fig.~\ref{fig:IBM-50}. 

Each of the circulators in use in these circuits is basically a rectangular box with a typical edge dimension of a few centimeters. While very large compared with inductors and capacitors, it nevertheless functions as, very nearly, a three-port, or six-terminal lossless device in some frequency band. The physical principle of these circulators is the propagation of microwave radiation in a cylindrical gallery containing a magnetized (ferrite) material. The field equations showing the nonreciprocal effect are thoroughly worked out in \cite{pozar}.

Several other approaches have been discussed for the physical implementation of the circulator. A clear motivation for this is that the present-day circulator is inconveniently large; in a quantum computer with a million qubits, one can estimate that the circulators will take up a volume, in the coldest part of the cryostat, approaching a cubic meter -- a very expensive architectural feature. 

One scheme for an alternative circulator is, on the face of it, very appealing. It involves a simple circuit, a ring of three Josephson junctions, with a non-zero external flux through the ring \cite{ 
PhysRevA.82.043811, navarathna}. The three nodes define the three ports of the device, with the other terminal of the ports in each case being ground. While experiments have now been done on this concept, it remains unclear whether this can satisfy the requirements of being a circulator; it will be clearly nonlinear, but there has been no convincing analysis or measurement that indicates that it will successfully mimic the desired linear device. The initial calculations show that when handling a single photon, it will route it in a way similar to a circulator. But this is far from demonstrating full functionality.

There are other proposed approaches using Josephson junctions that have been more fully elaborated, but still not demonstrated. They are parametrically-driven devices, closely related to amplifiers (amplifiers are only treated in the exercises in Chapter~\ref{chap:add} in these notes). They are designed so as to explicitly exhibit linear, nonreciprocal behavior. An unfortunate feature is that they would circulate only in a narrow band of frequencies; in the textbook circulator to be discussed in the next section, the device should work over an ideally infinite bandwidth.

Finally, there are a few concepts for getting real devices that are genuinely passive. Perhaps the most intriguing but least analyzed is a kind of transformer concept, with a highly nonreciprocal material response \cite{TellegenPatent2}, introduced by Tellegen, the inventor of the gyrator (see below). This is perhaps the only device concept that has a chance, in principle, of going to the ideal limit of infinite bandwidth, with the same issues similar to those for the transformer itself. The idea of using special materials also comes up in the concept of using a capacitive coupling to a quantum Hall, or quantum anomalous Hall, material to achieve a wide-bandwidth passive gyrator (but with a very large characteristic impedance $Z_0$); see Ref.~\cite{PhysRevX.4.021019} for a survey of this approach.
    
\subsection{Formalism of the circulator and the gyrator}

We will discuss the circulator by introducing first a fundamentally new three-terminal lossless device, with the ($s$-independent) terminal admittance matrix
\begin{equation}
    \mat{Y}_{c}(s)=G_c \left(\begin{array}{rrr}0&1&-1\\-1&0&1\\1&-1&0\end{array}\right),
    \label{circ}
\end{equation}
where $G_c$ is a constant which has units of inverse impedance.
For the purpose of these Lecture Notes, we use a special symbol for this three-terminal device shown in Fig.~\ref{fig:3t_circ}. We highlight the fact that the admittance matrix $\mat{Y}_{c}(s)$ is not invertible (the sum of all rows is zero), and so it does not admit an impedance representation. 

\begin{figure}
    \centering
    \includegraphics[scale=0.15]{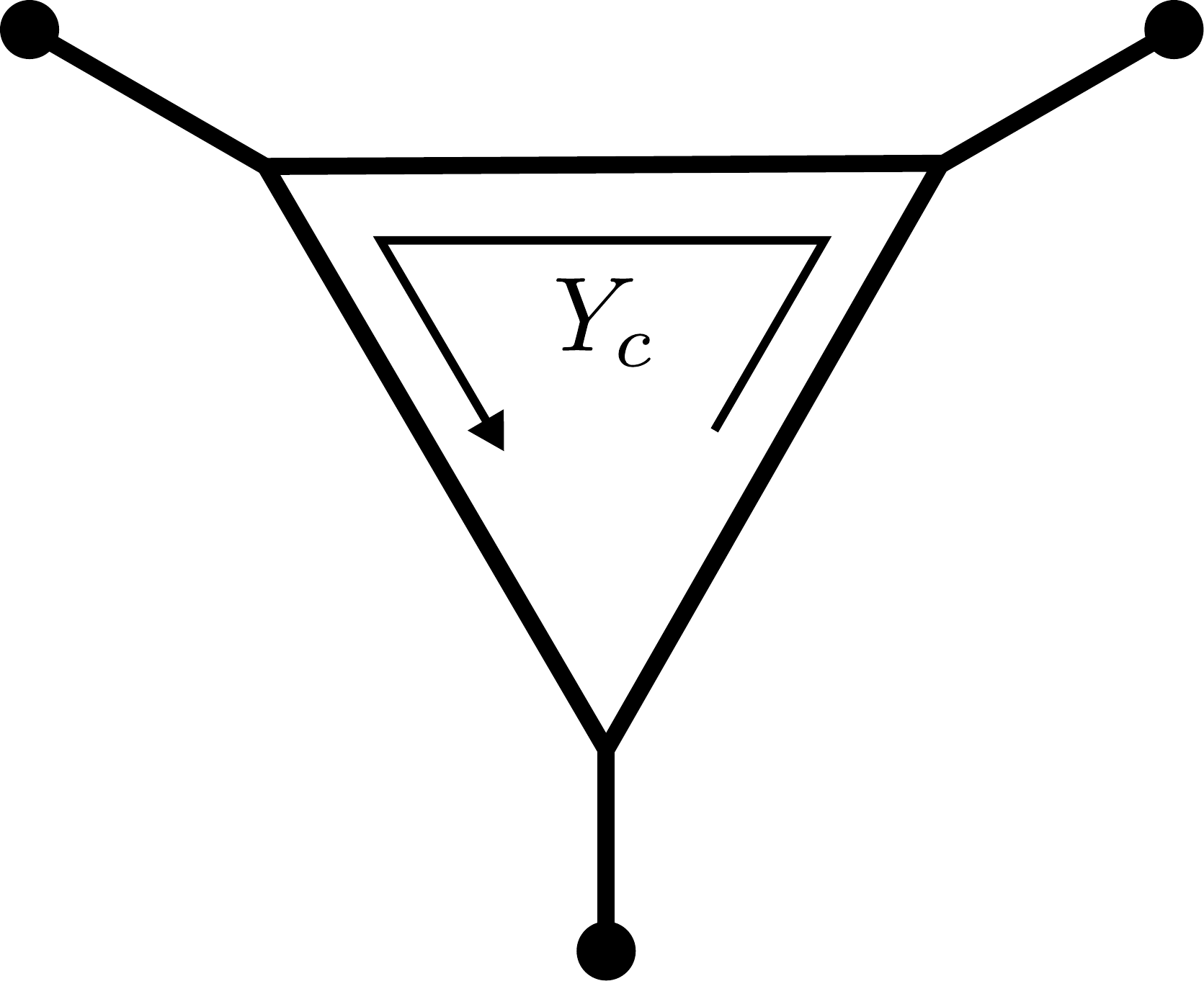}
    \caption{Symbol for the three-terminal element with response $\mat{Y}_c$.}
    \label{fig:3t_circ}
\end{figure}

Far from satisfying the reciprocity condition Eq.~\eqref{eq:recipdef}, this matrix is {\em anti}-symmetric, so we can see this as the most extreme possible violation of reciprocity -- a kind of anti-reciprocity. In the rest of this section, we will explore the consequences, for circuit-Hamiltonian theory, of the existence of such circuit elements.

In electromagnetism it is understood that reciprocity is only satisfied if the dielectric and diamagnetic material responses satisfy certainly symmetry properties, which are violated, for example, in magnetized materials (leading to the Faraday effect). Thus, it should not be surprising that discrete electrical circuits can also violate reciprocity under some circumstances. Still, it was not until 1947~\cite{tellegen} that Tellegen recognized theoretically that electrical networks were in need of such an element. 

\begin{figure}
    \centering
    \includegraphics[scale=0.15]{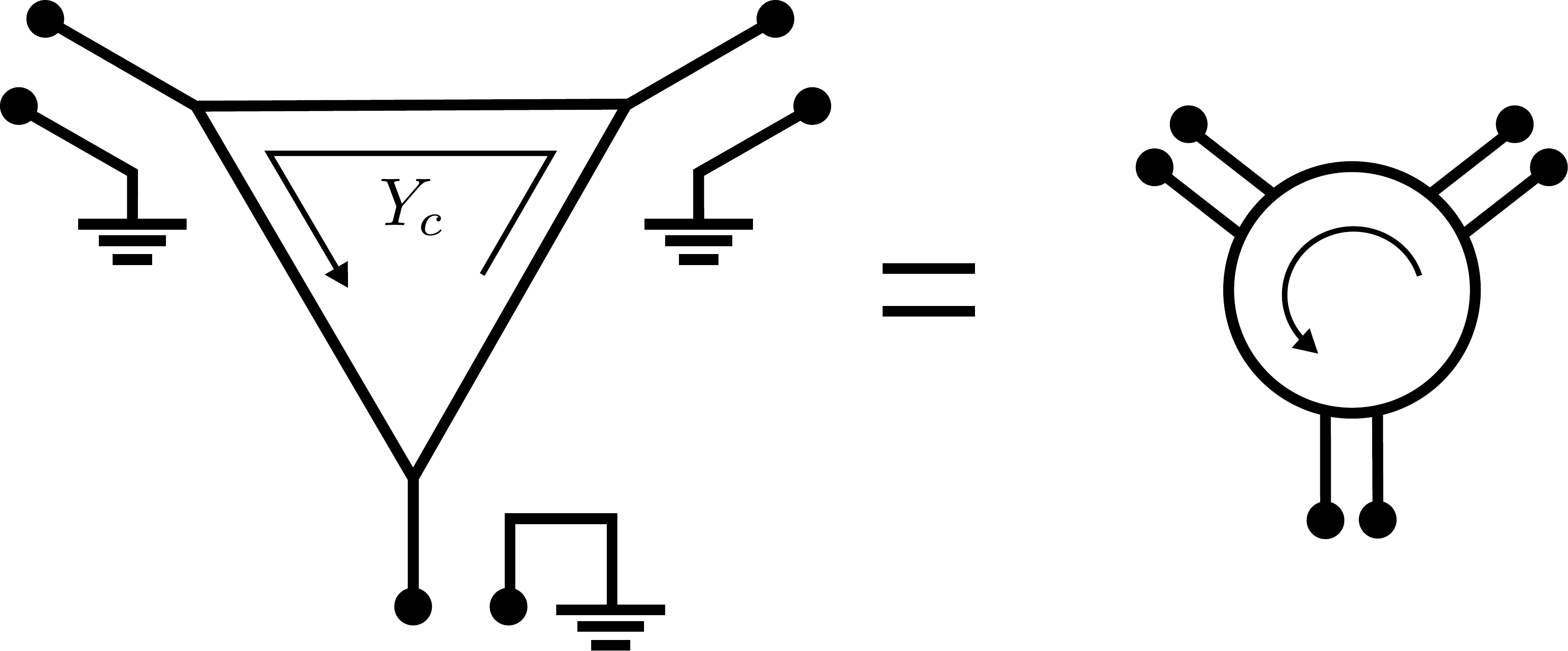}
    \caption{Defining the circulator by converting each terminal in  $\mat{Y}_{\mathrm{c}}$ to a port.}
    \label{fig:circ}
\end{figure}

Subsequently, the circulator was realized experimentally in 1952~\cite{Hogan}. Circulators in their present-day form were available by around 1960. A modern circulator is matched to the transmission system to which it is connected (i.e., $Z_0=50\, \mathrm{\Omega}$), and exhibits the response corresponding to $\mat{Y}_{c}$, in a sense that we will discuss shortly, over a wide but finite frequency range, e.g., $2$-$8 \, \mathrm{GHz}$. In principle, there is no physical reason why the circulator response could not occur for all frequencies, and the circuit theory assumes this. Thus, we will augment our circuit QED theory with the idealized device response of Eq.~\eqref{circ}. Modeling the finite-bandwidth response of the real circulator requires augmenting the model with conventional components (inductors and capacitors)~\cite{Bosco}.

To understand what exactly is ``circulating" in a circulator, it is handy to compute the scattering matrix $\mat{S}(s)$ matrix from the $\mat{Y}_c(s)$ matrix above. First we consider each of the three terminals as ports; this means pairing each of them with a ground, which is done quite naturally in the laboratory circulator, in which each port is a coaxial connection, see Fig.~\ref{fig:circ}. Then one can imagine connecting all three ports to transmission lines with characteristic impedance $Z_0$ (as in Fig.~\ref{fig:el_network_tl} for a general $N$-port network). If the circulator and the transmission lines are impedance matched, i.e., $G_c^{-1}=Z_0$, then using Eq.~\eqref{eq:scat-lap}, we obtain the scattering matrix of the circulator (verify this!)
\begin{equation}
    \mat{S}_{\rm c}(s)=\left(\begin{array}{rrr}0&0&1\\1&0&0\\0&1&0\end{array}\right).
    \label{eq:circs}
\end{equation}
The $\mat{S}_{\rm c}(s)$ matrix says that a signal from port $3$ exits at port $1$, $1$ exits at $2$, and $2$ exits at $3$, so, like in a roundabout, one can only circulate one way. This is the information conveyed in the conventional symbol for the circulator, as seen in practical engineering work shown in Fig.~\ref{fig:neg_res} (as described in the Wikipedia article \href{https://en.wikipedia.org/wiki/Circulator}{Circulator}.)

\begin{figure}
    \centering
\includegraphics[scale=0.15]{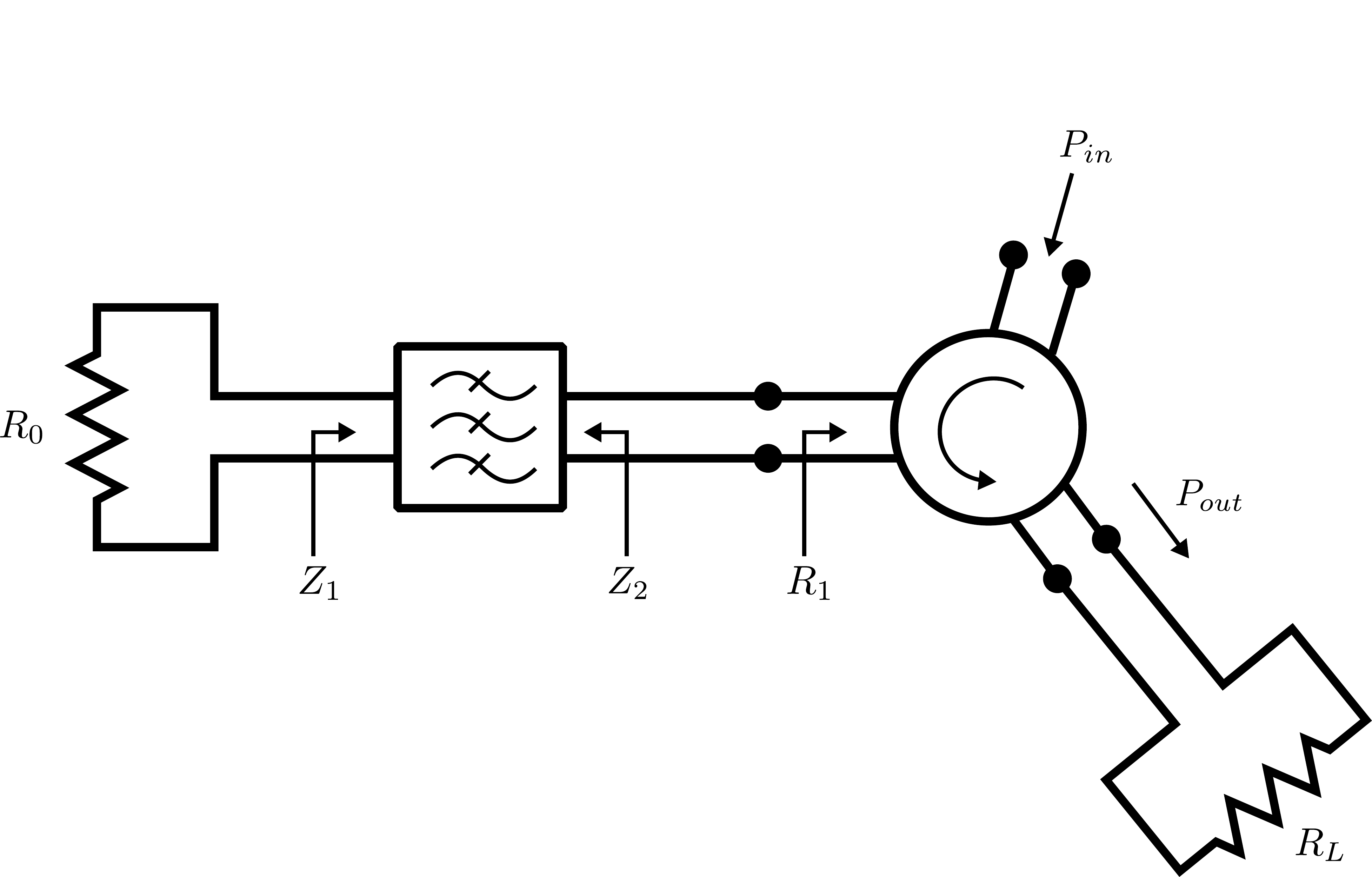}
    \caption{A circulator attached to three transmission lines. Example from real-life microwave engineering practice of the application of a circulator in a negative resistance amplifier. Here $R_0$ is a differential negative resistance, and the two-port provides impedance matching from $Z_1$ to $Z_2$. See \href{https://en.wikipedia.org/wiki/Negative_resistance\#Reflection_amplifier}{Wikipedia article on negative resistance} for details.} 
    \label{fig:neg_res}
\end{figure}

Tellegen, in fact, introduced a slightly different component, namely the {\em gyrator}, as the new electrical circuit element. The symbol for the gyrator is shown in Fig.~\ref{fig:gyr}. The word circulator, and the idea behind it, did not appear until four years after his work. We assert that these are two different manifestations of the same basic circuit response described by $\mat{Y}_{c}$. The difference is in how they are associated with a multi-port response. $\mat{Y}_{c}$ becomes a circulator if each of the three terminals is paired separately with ground, as was shown in Fig.~\ref{fig:circ}. In a gyrator, see 
Fig.~\ref{fig:gyr_circ}, one terminal of $\mat{Y}_{c}$ is singled out (shown here as ground),
and two ports are defined by a transformer coupling (recall Fig.~\ref{fig:mag_coupled_circuits}, and the discussion there). This construction has a feature that is often implied, but rarely stated as a feature of a multi-port device, namely that the ports are isolated from each other electrically.

Let us take a moment to comment on this idea of ``isolation". Mathematically the result of the insertions of transformers is that all four of the terminals are independent variables ---there is no common ground. In experiments, this permits one part of the circuit to have a ``floating voltage" with respect to the other. Depending on the experimental situation, it can be very advantageous to have this electrical isolation; on the other hand, sometimes the requirements of the experiment make it {\em crucial} that there be a common ground everywhere in the experiment.  Theorists, make sure to ask your experimentalist friends, when you are modeling their circuit, whether they employ a floating circuit or not! (Sometimes they are too shy to tell you.) See Exercise \ref{exc:grounding} for different groundings for transmon qubits, and also \href{https://en.wikipedia.org/wiki/Floating_ground}{Wikipedia entry on floating ground} for more on this issue.

\begin{figure}
    \centering
    \includegraphics[height=4cm]{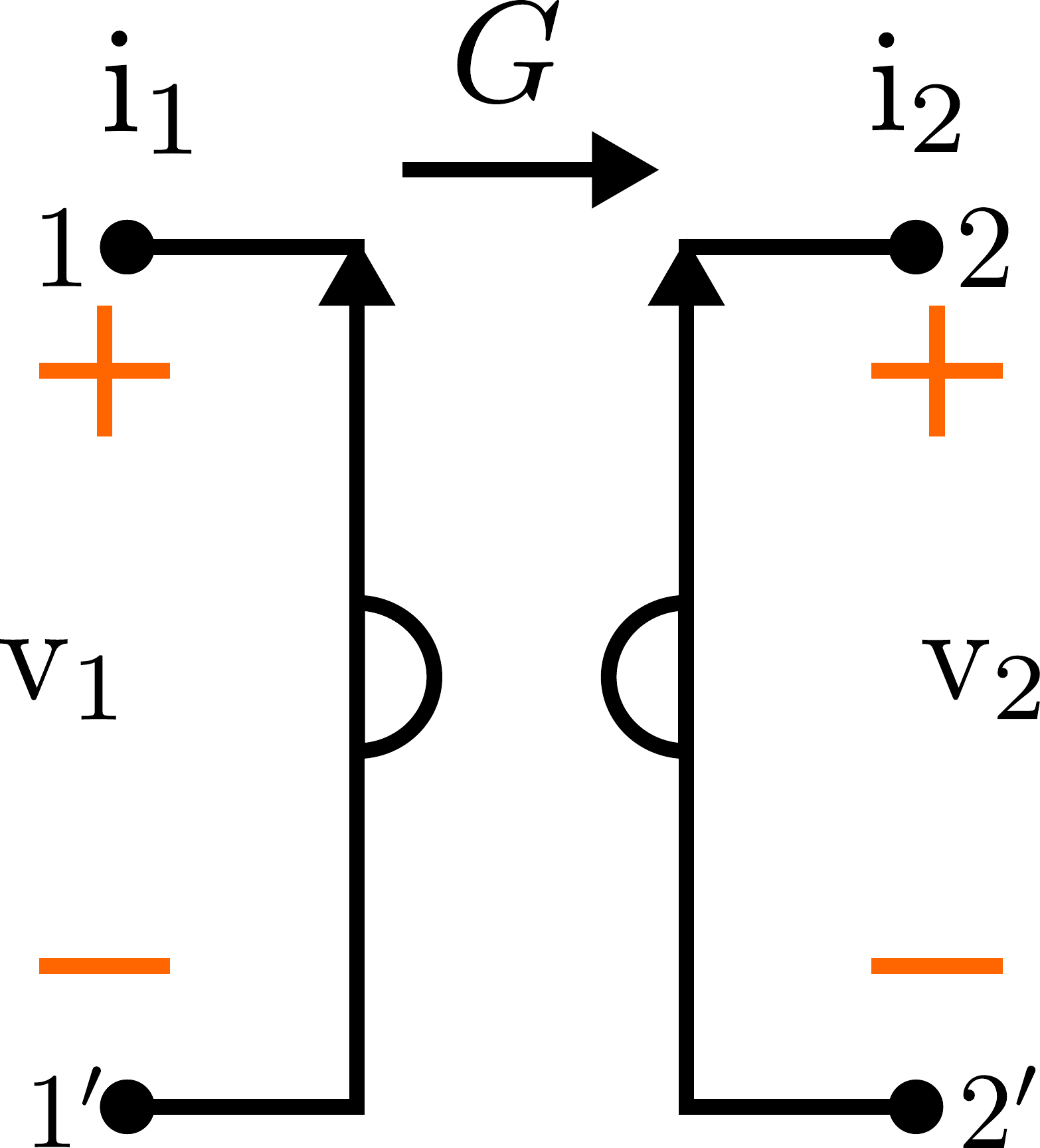}
    \caption{Circuit symbol of a gyrator.}
    \label{fig:gyr}
\end{figure}

\begin{figure}
    \centering
    \includegraphics[scale=0.13]{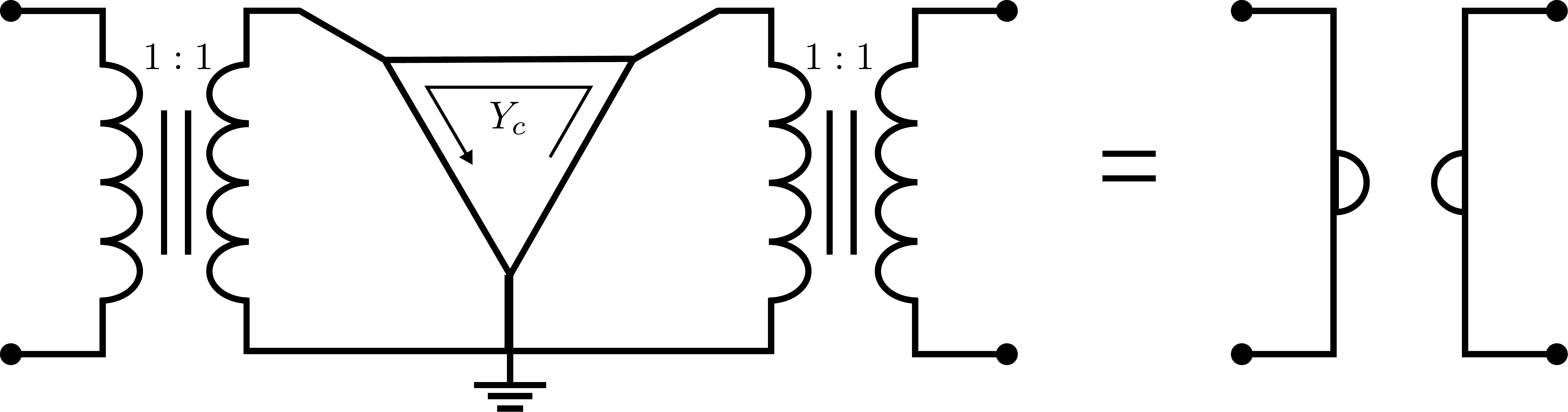}
    \caption{Obtaining a gyrator from $\mat{Y}_{\mathrm{c}}$ using ideal transformers.}
    \label{fig:gyr_circ}
\end{figure}

In Fig.~\ref{fig:gyr_circ_sym}, we show the symbols that you will often see for the circulator and the gyrator in microwave-circuit diagrams. They require some interpretation from our point of view: the single lines imply ports, not terminals, and you must find out for yourself what sort of grounding or isolation scheme is meant.

\begin{figure}
    \centering
    \includegraphics[scale=0.15]{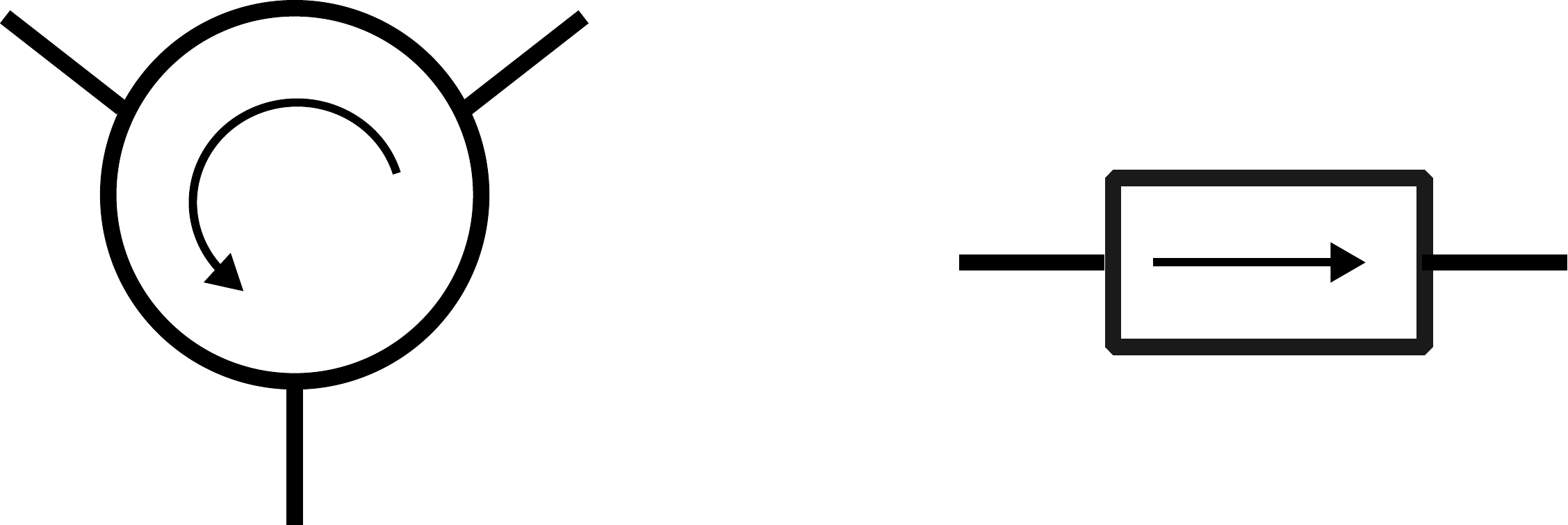}
    \caption{Representation of circulator (left) and gyrator (right), which are common in microwave-circuit diagrams.}
    \label{fig:gyr_circ_sym}
\end{figure}

This will have an influence on our main task, which we will embark on next, namely to introduce terms into the circuit Lagrangian and Hamiltonian that represent the effect of the gyrator or the circulator. In the fully-isolated case, we have remarked that the gyrator has four independent node fluxes. With additional transformer isolation, i.e., one adds transformers to each port in Fig.~\ref{fig:circ}, the circulator has six independent node fluxes. 

In many applications it is no problem if there is no port isolation, so that, for example, the lower two nodes of the gyrator are tied together. This is usually, but not always, shown explicitly in the gyrator circuit drawing when this is the intention. In the notation of microwave signal flows, where the conventional symbol for the gyrator is the one shown in Fig.~\ref{fig:gyr_circ_sym}, there is {\em no way} to show this distinction. To reiterate: this tying together is quite relevant for us, in that it can reduce the number of independent node fluxes in the circuit.

From this point onwards we will stick with the gyrator. The case of the circulator or the basic $\mat{Y}_{c}$ element can be covered similarly.

\section{Admittance matrix of the gyrator}

The presence of a gyrator clearly affects the dynamics of a circuit, and so it must somehow influence the construction of the Lagrangian and the Hamiltonian. However, we run into a paradoxical situation: When Tellegen introduced the gyrator in his first paper, one of the first things he proved about it is that it is lossless, but that it {\em stores no energy.} This means that we cannot apply the principle that we have used in Section~\ref{sec:ind-cap}, namely that we construct the Lagrangian by adding the energy contributions of each circuit element. We need another principle to deduce its contribution to the Lagrangian, as we did in the case of the transformer discussed in Section~\ref{subsec:ideal-trafo}.
We will see that it is possible to add a term that agrees with the observation that it does not influence the energy, in a classical sense, and yet has a real effect on the circuit dynamics, and also a real effect on the quantization of the circuit.

To proceed further, let us first state again what conditions the gyrator imposes on the circuit dynamics. Deducing from $\mat{Y}_{c}$ and using the representation of Fig.~\ref{fig:gyr_circ}, we find that the gyrator has a two-port impedance matrix
\begin{equation}
    \mat{Y}_{\mathrm{gyr}}(s)=\mat{Y}_{\mathrm{gyr}}=G \left( \begin{array}{rr} 0&-1\\1&0\end{array}\right).\label{Ygyr}
\end{equation}
with $G=G_c$. Note that being the admittance matrix of two ports, it is invertible, i.e.,
\begin{equation}
\mat{Z}_{\mathrm{gyr}}(s)=\mat{Y}_{\mathrm{gyr}}^{-1}(s) \propto \mat{Y}_{\mathrm{gyr}}(s)
\label{eq:Zgyr}
\end{equation}

\begin{Exercise}[label=exc:gyr-imp]
Verify Eq.~\eqref{Ygyr} using Fig.~\ref{fig:gyr_circ} and the action of ideal transformers given in Section~\ref{sec:mi} and $G=G_c$ (use the convention that the first row of the matrix $\vect{Y}_{\mathrm{gyr}}(s)$ labels the left port in Fig.~\ref{fig:gyr_circ} and the second row the right port).
\end{Exercise}

\begin{Answer}[ref=exc:gyr-imp]
Taking $\vect{I}(s)=\vect{Y}_c(s)\vect{V}(s)$, restricting the input $\vect{V}(s)$ to be of the form $\vect{V}^T(s)=\!\left( V_1(s), V_2(s), 0 \right)$ (note that the turns ratio is 1 and circulation goes from $1 \rightarrow 2 \rightarrow 3 \rightarrow 1$), gives the vector $\vect{I}^T(s)=G_c\left( V_2(s), - V_1(s), V_1(s)-V_2(s)\right)$ from which we can deduce Eq.~\eqref{Ygyr} by a change of port labeling.
\end{Answer}

For completeness, we write Eq.~\eqref{Ygyr} in components:
\begin{eqnarray}
I_1(s)&=&-G V_2(s),\nonumber\\
I_2(s)&=&\,\,\,\,\,G V_1(s).
\end{eqnarray}
This is a kind of a transformer characteristic, in which the input (port $1$) is ``stepped up" by $G$ to the output (port $2$). But it is in fact quite distinct from a transformer, see Eqs.~\eqref{eq:transform} in Section~\ref{sec:mi}, in that it ``gyrates" from the current to the voltage in going from input to output. 

\begin{Exercise}[label=exc:gyr0]
Give the admittance matrix of the gyrator in the time domain and show this implies 
\begin{eqnarray}
\ii_1(t)&=&-G \vv_2(t),\nonumber\\
\ii_2(t)&=&\,\,\,\,\,G \vv_1(t).
\label{eq:gyr-rel}
\end{eqnarray}
\end{Exercise}

\begin{Answer}[ref={exc:gyr0}]
If we take $s=i \omega$, we can use the inverse Fourier transform (see Eq.~\eqref{eq:laplace}) \begin{equation*}
\vect{Y}_{\mathrm{gyr}}(t)=\frac{1}{2\pi}\int_{-\infty} d\omega \,e^{i \omega t} \vect{Y}_{\mathrm{gyr}}(s=i \omega),
\end{equation*} 
to get
\begin{equation}
\vect{Y}_{\mathrm{gyr}}(t) = \delta(t)\mat{Y}_{\mathrm{gyr}},
\end{equation}
immediately implying Eq.~\eqref{eq:gyr-rel}.
\end{Answer}

If the gyrator is part of an impedance-matched circuit ---take $Z_0=G^{-1}$---  with two transmission lines at its ports as in Fig.~\ref{fig:el_network_tl}, then its scattering matrix $\mat{S}_{\rm gyr}(s)$ is very simple, namely
\begin{equation}
    \mat{S}_{\mathrm{gyr}}(s)=\left( \begin{array}{rr} 0&1\\-1&0\end{array}\right).\label{Sgyr}
\end{equation}
This can be derived, following the pattern for the circulator, from Eq.~\eqref{eq:scat-lap} applied to $\mat{Y}_{\rm gyr}(s)$ (verify this!).

\begin{figure}
    \centering
    \includegraphics[scale=0.35]{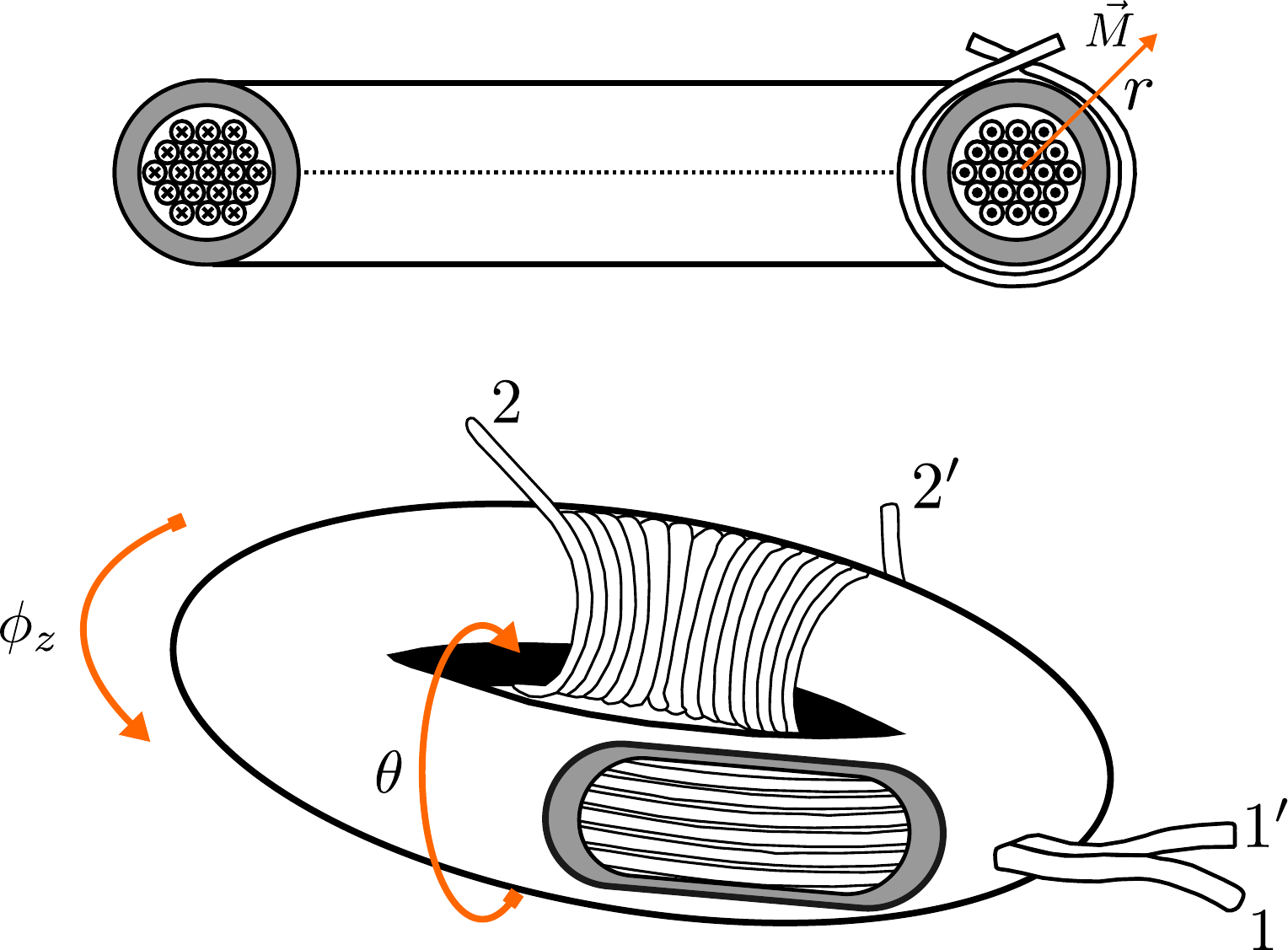}
    \caption{Tellegen's patented coil gyrator, reproduced from \cite{TellegenPatent2}.}
    \label{fig:coil_gyrator}
\end{figure}

\begin{Exercise}[label=exc:gyrT]
In this exercise we will perform a qualitative analysis of a physical gyrator proposed in 1954 by Tellegen \cite{TellegenPatent2}, see Fig.~\ref{fig:coil_gyrator}, with radial direction $\hat{r}$ and angular directions $\hat{\theta}$, $\hat{\phi}_z$ (toroidal coordinate system).
\Question Consider the two coils in Fig.~\ref{fig:coil_gyrator}. Using arguments of magnetostatics, show that the mutual inductance between coil $1$-$1'$ (port $1$) and coil $2$-$2'$ (port $2$) is zero, shown in the bottom Figure.
\Question Given the special properties of the material comprising the shell of the torus (in grey), there will in fact be a coupling between the two coils. For this we assume that material is an {\em anomalous quantum Hall} material. This means that the material is magnetized as indicated in the right part of Fig.~\ref{fig:coil_gyrator}(top), with the magnetization  vector $\vec{M}$ pointing in the $\hat{r}$ direction. As a consequence, the material exhibits the Hall effect: if an electric field is applied tangentially to the thin material, a current will flow perpendicular to this field, with a direction given by the right-hand rule. 

Given all this, recall that to analyze the off-diagonal components of the $2 \times 2$ impedance matrix $\mat{Z}(i\omega)$, one should apply a current at the first port, and determine an open-circuit voltage at the second port. Consider a time-dependent current $\sin(\omega t)$. Analyze the vectorial character of the induced fluxes, electromotive forces (electric fields) and induced currents. Do this for both choices of in and out ports, and show that $Z_{12}(i\omega)=-Z_{21}(i\omega)$. Also confirm that $Z_{12}$ is real, for any choice of $\omega$ \footnote{However, note that $Z_{11}(i\omega)$ and $Z_{22}(i\omega)$ are not zero due to self-inductance, so this does not represent the ideal gyrator of Eq.~\eqref{eq:Zgyr}.}. 
\end{Exercise}

\begin{Answer}[ref={exc:gyrT}]
\Question Coil $1$-$1'$ generates a magnetic field in the $\hat{\theta}$-direction which is orthogonal to the $\hat{\phi}_z$-direction and hence does not provide a magnetic field to which coil $2$-$2'$ can couple (and vice versa). 
\Question Suppose we inject current in port $1$ (coil $1$-$1'$) of the form $\sin(\omega t)$. This will produce a magnetic flux in the $\hat\theta$-direction, also proportional to $\sin(\omega t)$. This will induce an EMF in the Hall material in the $\hat\phi_z$-direction; due to the time derivative, this voltage will be of the form $\cos(\omega t)$. Due to the response tensor of the Hall material, there will be a current in the toroidal conducting shell, in the $\hat\theta$-direction, of functional form $\cos(\omega t)$. This current will induce a time-varying flux in the $\hat\phi_z$-direction of the form $\cos(\omega t)$. This will produce an EMF on the coil of the output port $2$ (coil $2$-$2'$), pointing in the $\hat\theta$-direction; with another time derivative, this is of the form $-\sin(\omega t)$. This is the open-circuit voltage at port $2$ (coil $2$-$2'$). The ratio of output voltage to input current, $Z_{21}(i\omega)=V_2(i\omega)/I(i\omega)$, is in-phase (so that $Z_{21}$ is real). 

The chain of causation is similar in the reverse direction: a current $\sin(\omega t)$ at port $2$ (coil $2$-$2'$) produces flux $\sin(\omega t)$ in the $\hat\phi_z$-direction, producing EMF $\cos(\omega t)$ in the $\hat\theta$-direction, producing Hall current $\cos(\omega t)$ in the $-\hat\phi_z$ direction, producing flux $\cos(\omega t)$ in the $-\hat\theta$-direction, producing an EMF $-\sin(\omega t)$ in the $-\hat\phi_z$-direction, or $\sin(\omega t)$ in the $\hat{\phi}_z$-direction, and this is the open-circuit voltage at port $1$ (coil $1$-$1'$). This gives $Z_{12}$ in the same way as $Z_{21}$, but with an extra minus sign arising from the anti-reciprocity of the Hall response.  
\end{Answer}

\section{The Lagrangian and Hamiltonian of the gyrator}
\label{sec:gyrator_Lagrangian}

Here we show that an acceptable prescription for the Lagrangian of a gyrator, when it appears in a superconducting circuit, is to add to the total Lagrangian the term
    \begin{equation}
        \lagrangian_{\mathrm{gyr}}=\frac{G}{2}\left[(\Phi_1-\Phi_{1'})(\dot{\Phi}_2-\dot{\Phi}_{2'})-(\Phi_2-\Phi_{2'})(\dot{\Phi}_1-\dot{\Phi}_{1'})\right],
        \label{eq:gyrdef}
    \end{equation}
with the labeling of the node fluxes $\Phi_1,\Phi_{1'}, \Phi_2, \Phi_{2'}$ as in Fig.~\ref{fig:gyr}.
This is clearly a new kind of object: we have not seen this first-order coupling term between `velocity' and `position' before. If you have done a lot of mechanics you may have seen it, but we will in any case explain below what it does. Whether all four of the node fluxes appearing in this Lagrangian term are independent will be determined by other parts of the network; this will be important for the Hamiltonian that we eventually get (as we only define momenta for independent variables), but not for the Lagrangian.

We will show the correctness of this prescription in a fairly general (but not the most general) setting; see more details in Refs.~\cite{rymarz:msc, placke:bsc, parra:gyrator}. Consider Fig.~\ref{fig:two_networks_gyrator} in which we have two general electrical networks with Lagrangians $\mathcal{L}_1$ and $\mathcal{L}_2$ which are coupled by a gyrator. We take the nodes $\Phi_{g1}$ and $\Phi_{g2}$ to be reference nodes in each of the two graphs (left and right) that compose the circuit and set $\Phi_{g1}=\Phi_{g2}=0$. 
The Lagrangians $\mathcal{L}_k$ of the networks are not just functions of $\Phi_k$ and $\dot{\Phi}_k$ but, in general, also of some additional internal degrees of freedom, which are, however, of no further importance in this analysis.
If the ideal gyrator has two of the ports connected to a common ground, i.e.,~$\Phi_1'=\Phi_2'=\dot{\Phi}_{1'}=\dot{\Phi}_{2'}=0$, Eq.~\eqref{eq:gyrdef} reduces to
\begin{equation}\label{eq:gyrator_lagrangian}
	\lagrangian_{\mathrm{gyr}} =
	 \frac{G}{2} (\Phi_1 \dot{\Phi}_2 - \dot{\Phi}_1 \Phi_2),
\end{equation}
so that the fluxes $\Phi_1, \Phi_2$ are the only independent node variables. 
 
\begin{figure}[h]
	\centering
	\includegraphics[height=4cm]{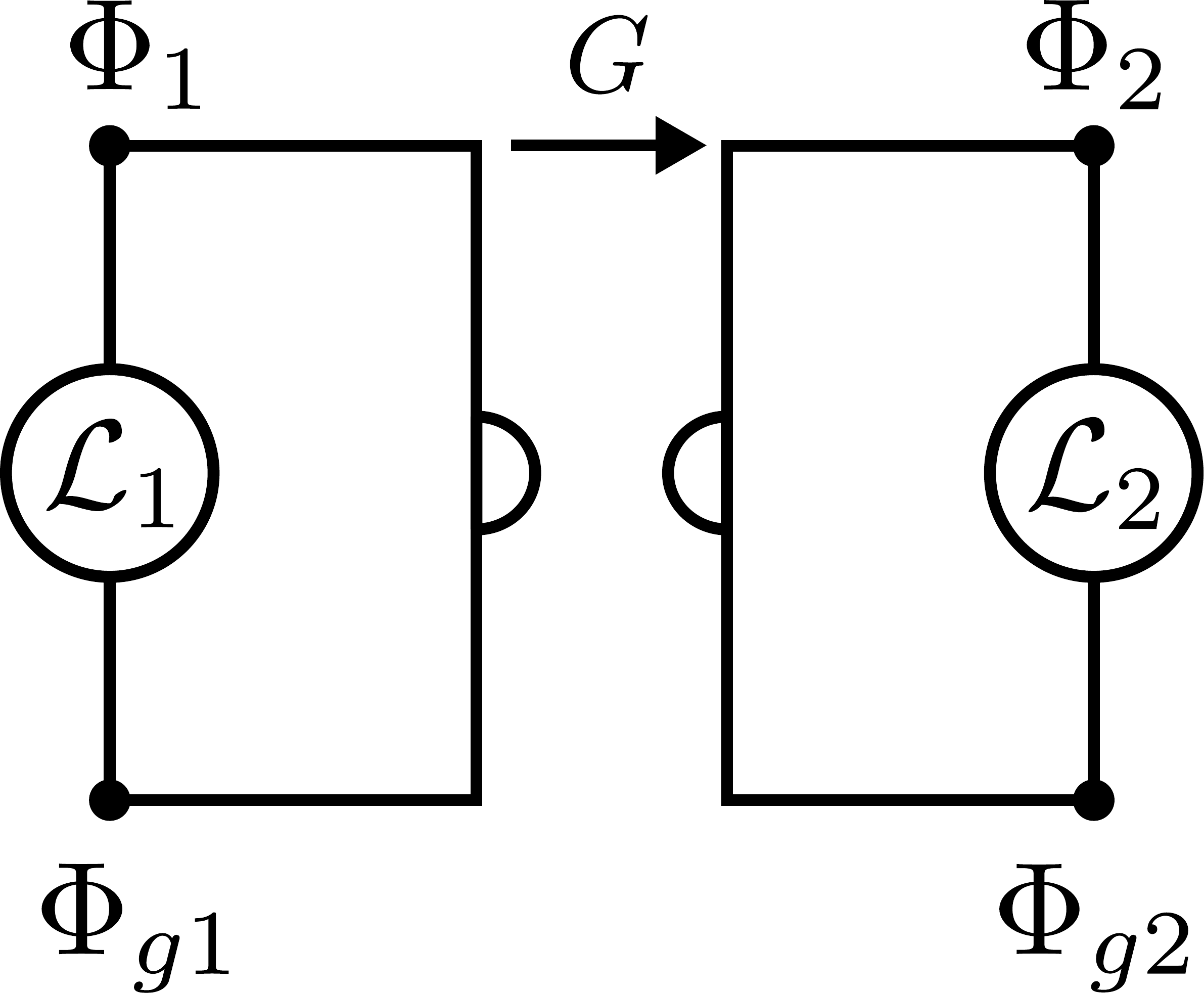}
	\caption{Two general electrical networks with Lagrangians $\mathcal{L}_1, \mathcal{L}_2$ which are coupled by a gyrator.}
	\label{fig:two_networks_gyrator}
\end{figure}

In this configuration, the total Lagrangian thus reads
\begin{equation}\label{eq_general_gyrator_network}
	\mathcal{L} = \mathcal{L}_1 + \mathcal{L}_2 + \mathcal{L}_\text{gyr} .
\end{equation}
 First of all, doing the classical mechanics, we note that the current passing through the electrical network with Lagrangian $\mathcal{L}_k$ is given by (do you see why?)
\begin{equation}
	\ii_{\lagrangian_k}(t) = \frac{d}{dt} \left( \frac{\partial \mathcal{L}_k}{\partial \dot{\Phi}_k} \right)
	- \frac{\partial \mathcal{L}_k}{\partial \Phi_k} = -\ii_k(t),
\end{equation}
with $\ii_k$ the current passing through the corresponding gyrator branch. By definition, its associated voltage reads
$\vv_k = \dot{\Phi}_k$. From that, we can conclude that the Euler-Lagrange equation of the total Lagrangian in Eq.~\eqref{eq_general_gyrator_network} with respect to the $\Phi_1$-variable yields
\begin{equation}
\begin{split}
	\frac{d}{dt} \left( \frac{\partial \mathcal{L}}{\partial \dot{\Phi}_1} \right) 
	- \frac{\partial \mathcal{L}}{\partial \Phi_1}
	&= 
	0, 	\\	
	\Leftrightarrow	\qquad
	\frac{d}{dt} \left( \frac{\partial \mathcal{L}_1}{\partial \dot{\Phi}_1} \right) 
	- \frac{\partial \mathcal{L}_1}{\partial \Phi_1}
	&=
	-\frac{d}{dt} \left( \frac{\partial \mathcal{L}_\text{gyr}}{\partial \dot{\Phi}_1} \right) 
	+ \frac{\partial \mathcal{L}_\text{gyr}}{\partial \Phi_1}, 				\\
	\Leftrightarrow	\hspace{81.5pt}
	 \ii_1(t) 
	&= - G \vv_2(t).
\end{split}	
\end{equation}

A similar evaluation of the Euler-Lagrange equation for $\Phi_2$ gives
\begin{equation}
	\ii_2(t) =  G \vv_1(t),
\end{equation}
such that both Euler-Lagrange equations together directly recover the admittance matrix characterizing the classical ideal gyrator in the time-domain in Eq.~\eqref{eq:gyr-rel}. Therefore, it is proven that the Lagrangian stated in Eq.~\eqref{eq:gyrator_lagrangian} indeed describes an ideal (frequency-independent) gyrator. \\

After this introduction of the gyrator Lagrangian description, we would like to recall that the Lagrangian formalism allows the addition of a total time derivative to the Lagrangian without affecting the classical equations of motion, see Exercise~\ref{exc:gauge} in Appendix~\ref{app:cc}. Therefore, it is instructive to consider a more general Lagrangian describing the ideal gyrator obtained by the transformation
\begin{equation}\label{equation_Lagrangian_timederivative}
	\lagrangian_{\mathrm{gyr}}
	\rightarrow \lagrangian_{\mathrm{gyr}} + a \frac{d}{dt} \left( \frac{G}{2} \Phi_1 \Phi_2 \right)
	= \frac{G}{2} \left[ (1+a) \Phi_1 \dot{\Phi}_2 - (1-a) \dot{\Phi}_1 \Phi_2 \right],
\end{equation}
with an arbitrary constant $a \in \mathds{R}$. We thus see that there is a continuum of related representations of the gyrator Lagrangian. In the next section, this will be interpreted as a {\em gauge} freedom in the description of the dynamics.

\subsection{``Magnetic field does no work" -- gyrator as magnetic field}

Since all terms in Eq.~\eqref{equation_Lagrangian_timederivative} are linear in $\dot{\Phi}_i$, the Lagrangian $\lagrangian_{\mathrm{gyr}}$ can be written and interpreted as the Lagrangian contribution of a `magnetic' vector potential
\begin{equation}
	\lagrangian_{\mathrm{gyr}} = \vect{A}(\vect{\Phi}) \cdot \dot{\vect{\Phi}}
	,	\qquad
	\vect{A}(\vect{\Phi})
	=
	\frac{G}{2}
	\begin{pmatrix}
		(a-1)\Phi_2		\\	
		(a+1)\Phi_1
	\end{pmatrix}.
\end{equation}
The two-dimensional vector potential $\vect{A}(\vect{\Phi})$ gives rise to a resulting magnetic field analog $\mat{B}=\vect{\nabla} \times \vect{\vect{A}}$, which is orthogonal to the $\Phi_1 , \Phi_2$ -plane with strength
\begin{equation}\label{eq_arbitrary_gauge}
	\vect{B}_3=(\vect{\nabla} \times \vect{\vect{A}(\vect{\Phi})})_3 
	= \frac{\partial A_2(\vect{\Phi})}{\partial \Phi_1} - \frac{\partial A_1(\vect{\Phi})}{\partial \Phi_2} = G,
\end{equation}
which is the gyration conductance. Given this analogy between the gyration conductance and a uniform magnetic field, the transformation considered in Eq.~\eqref{equation_Lagrangian_timederivative} is equivalent to a gauge transformation of the vector potential $\vect{A} \rightarrow \vect{A}+\vect{\nabla}\lambda$. The special cases of $a=\pm1$ and $a=0$ can be identified as two different Landau gauges and a symmetric gauge, respectively. Indeed, adding the total time derivative of any function $\lambda(\vect{\Phi})$ to $\lagrangian_{\mathrm{gyr}} $ is equivalent to adding $\vect{\nabla}\lambda(\vect{\Phi})$ to the vector potential, which does not affect its curl, since $\vect{\nabla} \times \vect{\nabla}\lambda(\vect{\Phi})=0$. Furthermore, the nonreciprocity of the passive gyrator is consistent with the interpretation that the gyrator acts formally like a static magnetic field, which breaks time-reversal symmetry and does not change the energy of (i.e., do any work on) the mechanical system to which it is coupled.

\subsection{Hamiltonian of the gyrator}\label{subsec:general_Legendre}
The Lagrangian contribution of the gyrator is the first step towards a quantized theoretical description of any electrical circuit containing any number of gyrators. Since we want to impose the quantization in the Hamiltonian formalism, we need to derive the corresponding Hamiltonian first. The Lagrangian of a general electrical network with $N+1$ nodes ---of which there is one ground node and $N$ independent ones---  which is built out of capacitances, any sort of inductances (linear, non-linear and mutual) and gyrators can be written as
\begin{equation}\label{eq_general_network_Lagrangian}
	\mathcal{L} 
	= \frac{1}{2} \dot{\vect{\Phi}}^{T} \mat{C} \dot{\vect{\Phi}} 
	+ \dot{\vect{\Phi}}^T \vect{A} (\vect{\Phi}) 
	- U(\vect{\Phi}).
\end{equation}
We assume that the capacitance matrix is invertible, meaning that $N$ node variables are independent; see Chapter~\ref{chap:cq-app} and Proposition \ref{lem:captree} for a discussion on invertibility. We point to Ref.~\cite{rymarz:msc} for a discussion of the many aspects of the singular non-invertible case with gyrators.
We note that with the inclusion of the gyrator contribution, the Lagrangian is still a convex function of the variables $\vect{\dot{\Phi}}$, so we can proceed with the Legendre transformation to obtain a Hamiltonian. 
%%%
The vector of conjugate variables is 
\begin{eqnarray}\label{eq_phi_dot_prime}
	\vect{Q} &=& \mat{C} \dot{\vect{\Phi}} + \vect{A} (\vect{\Phi} ), \nonumber\\
	\Rightarrow \vect{\dot\Phi} &=&  \mat{C}^{-1}(\vect{Q} - \vect{A} (\vect{\Phi}) ).
\end{eqnarray}
We have not seen this form of conjugate variable before, but, again, it is analogous to the mathematical description of the dynamics of an electron in a magnetic field with its conjugate variable, the momentum, given by $\vect{p}=m \dot{\vect{x}}+q\vect{A}$.

Completing the Legendre transformation, we get a general expression for the Hamiltonian:
\begin{equation}\label{eq_general_Legendre_transformation}
\begin{split}
	\hamiltonian
	&= \dot{\vect{\Phi}}^{T} \vect{Q} - \mathcal{L}										\\
	&= \dot{\vect{\Phi}}^{T} {\vect{Q}} 
		-\frac{1}{2} \dot{\vect{\Phi}}^{T} \mat{C} \dot{\vect{\Phi}}
		-	\dot{\vect{\Phi}}^T \vect{A} (\vect{\Phi})
		+ U ( \vect{\Phi} )													\\
	&= \frac{1}{2} ( \vect{Q} - \vect{A} (\vect{\Phi}) )^T \mat{C}^{-1} 
			( \vect{Q} - \vect{A} (\vect{\Phi}) )
		+ U ( \vect{\Phi} ),
\end{split}
\end{equation}
Exercise \ref{exc:gyr1} will show that this Hamiltonian, when worked out for the case of two shunting capacitors, is indeed just that of an electron confined to two dimensions in a uniform magnetic field.

In current research, we see many strange possibilities that arise if the capacitance is singular in this situation; for instance, by shunting the ports of the gyrator with Josephson junctions, one can get a kinetic energy of the form $\cos (\frac{\vect{Q}}{2e} )$~\cite{rymarz2021,PhysRevA.100.062321}. We believe that understanding the Hamiltonian description of such unusual circuits may also lead to suggestions for novel physical realizations of such circuits, which can then exhibit new physics.

\begin{Exercise}[title={Ideal gyrator},label=exc:gyr1]
\begin{figure}[htb]
\centering
\includegraphics[height=4cm]{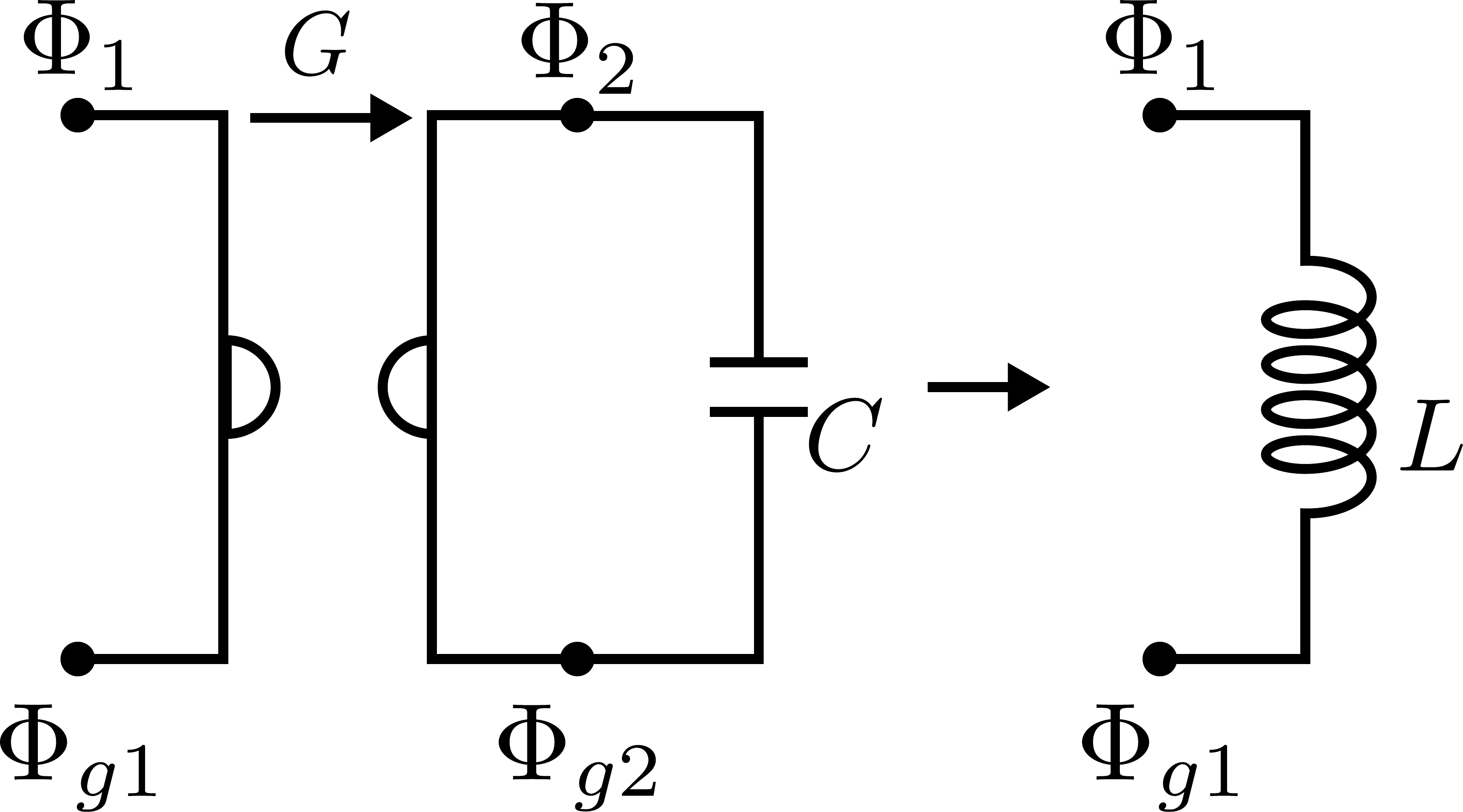}
\caption{Transforming a capacitor into an inductor using a gyrator.}
\label{fig:cgl}
\end{figure}
\Question Consider the circuit in Fig.~\ref{fig:two_networks_gyrator} with the particular case in which the Lagrangians $\mathcal{L}_1$ and $\mathcal{L}_2$ are the Lagrangians of LC~oscillators with inductance $L$ and capacitance $C$ in parallel. Assume the same inductance and the same capacitance on the two sides. Obtain the Hamiltonian in this case. What is a mechanical equivalent of this system? 
\Question Use the Lagrangian formalism to show that it is possible to transform a capacitor into an inductor using a gyrator as shown in Fig.~\ref{fig:cgl}. We can assume that $g1$ and $g2$ are reference nodes in each disconnected graph and set $\Phi_{g1}=\Phi_{g2}=0$. Obtain the value of the equivalent inductance $L$.
\end{Exercise}

\begin{Answer}[ref={exc:gyr1}]
\Question In the case of two LC oscillators the Lagrangian of Fig.~\ref{fig:two_networks_gyrator} reads
\begin{equation}
\mathcal{L}= \frac{C}{2} \dot{\Phi}_1^2 -\frac{\Phi_1^2}{2 L}+\frac{C}{2} \dot{\Phi}_2^2 -\frac{\Phi_2^2}{2 L} +\frac{G}{2} (\Phi_1 \dot{\Phi}_2-\dot{\Phi}_1 \Phi_2).
\end{equation}
To obtain the Hamiltonian we define the conjugate variables
\begin{subequations}
\begin{equation}
Q_1 = \frac{\partial \mathcal{L}}{\partial \dot{\Phi}_1}= C \dot{\Phi}_1-\frac{G}{2} \Phi_2,
\end{equation}
\begin{equation}
Q_2 = \frac{\partial \mathcal{L}}{\partial \dot{\Phi}_2}= C \dot{\Phi}_2+\frac{G}{2} \Phi_1.
\end{equation}
\end{subequations}
By taking the Legendre transform, we get
\begin{equation}
H = Q_1 \dot{\Phi}_1 +Q_2 \dot{\Phi}_2 -\mathcal{L}= \frac{(Q_1+\frac{G}{2} \Phi_2)^2}{2 C} +\frac{(Q_2-\frac{G}{2} \Phi_1)^2}{2 C} +\frac{\Phi_1^2}{2 L}+\frac{\Phi_2^2}{2 L}.
\end{equation}
We notice that this Hamiltonian has the same structure as the Hamiltonian of a charged particle moving in two dimensions in a constant magnetic field perpendicular to the plane, with, in addition, a paraboloid potential. This can be seen even more clearly if we define the fictitious vector potential
\begin{equation}
\vect{A}^T= (A_1,A_2)= \frac{G}{2} (
-\Phi_2,\Phi_1),
\end{equation}
and rewrite the Hamiltonian as
\begin{equation}
H= \frac{\lvert \vect{Q}-\vect{A} \rvert^2}{2 C}  +\frac{\Phi_1^2}{2 L}+\frac{\Phi_2^2}{2 L},
\end{equation}  
with $\vect{Q}=(Q_1 \quad Q_2)^T$. 
\Question
The Lagrangian of the circuit in Fig.~\ref{fig:cgl} reads
\begin{equation}\label{eq:lagr_cgl}
\mathcal{L} = \frac{G}{2} \bigl(\Phi_1 \dot{\Phi}_2 - \dot{\Phi}_1 \Phi_2 \bigr) + \frac{C}{2} \dot{\Phi}_2^2.
\end{equation}
The Euler-Lagrange equation associated with $\Phi_2$ reads
\begin{equation}
\frac{d}{dt} \biggl(\frac{\partial \mathcal{L}}{\partial \dot{\Phi}_2} \biggr) - \frac{\partial \mathcal{L}}{\partial \Phi_2} = G \dot{\Phi}_1 + C \ddot{\Phi}_2 = 0 \implies \dot{\Phi}_2 = -\frac{G}{C} \Phi_1 + \alpha,
\end{equation}
with $\alpha \in \mathbb{R}$. Substituting this into the Lagrangian Eq.~\eqref{eq:lagr_cgl}, we obtain
\begin{multline}
\mathcal{L} = \frac{G}{2} \Phi_1 \biggl(-\frac{G}{C} \Phi_1 + \alpha \biggr) - \frac{G}{2} \dot{\Phi}_1 \Phi_2 +\frac{C}{2} \dot{\Phi}_2 \biggl(-\frac{G}{C} \Phi_1 + \alpha \biggr) = \\
-\frac{1}{2 C/G^2} \Phi_1^2 -\frac{G}{2} \frac{d(\Phi_1 \Phi_2)}{dt} + \frac{\alpha G}{2} \Phi_1 + \frac{\alpha C}{2} \dot{\Phi}_2 = -\frac{1}{2 C/G^2} \Phi_1^2 -\frac{G}{2} \frac{d(\Phi_1 \Phi_2)}{dt} + \frac{C}{2} \alpha^2,
\end{multline}
which is the Lagrangian of an inductor with $L = C/G^2$ plus a total time derivative and a constant. The total time derivative does not change the Euler-Lagrange equations, see Exercise \ref{exc:gauge}, and can thus be omitted.
\end{Answer}

\begin{Exercise}[title={A circulator from a gyrator}, label=exc:circ-gyr]
\begin{figure}[htb]
\centering
\includegraphics[height=5cm]{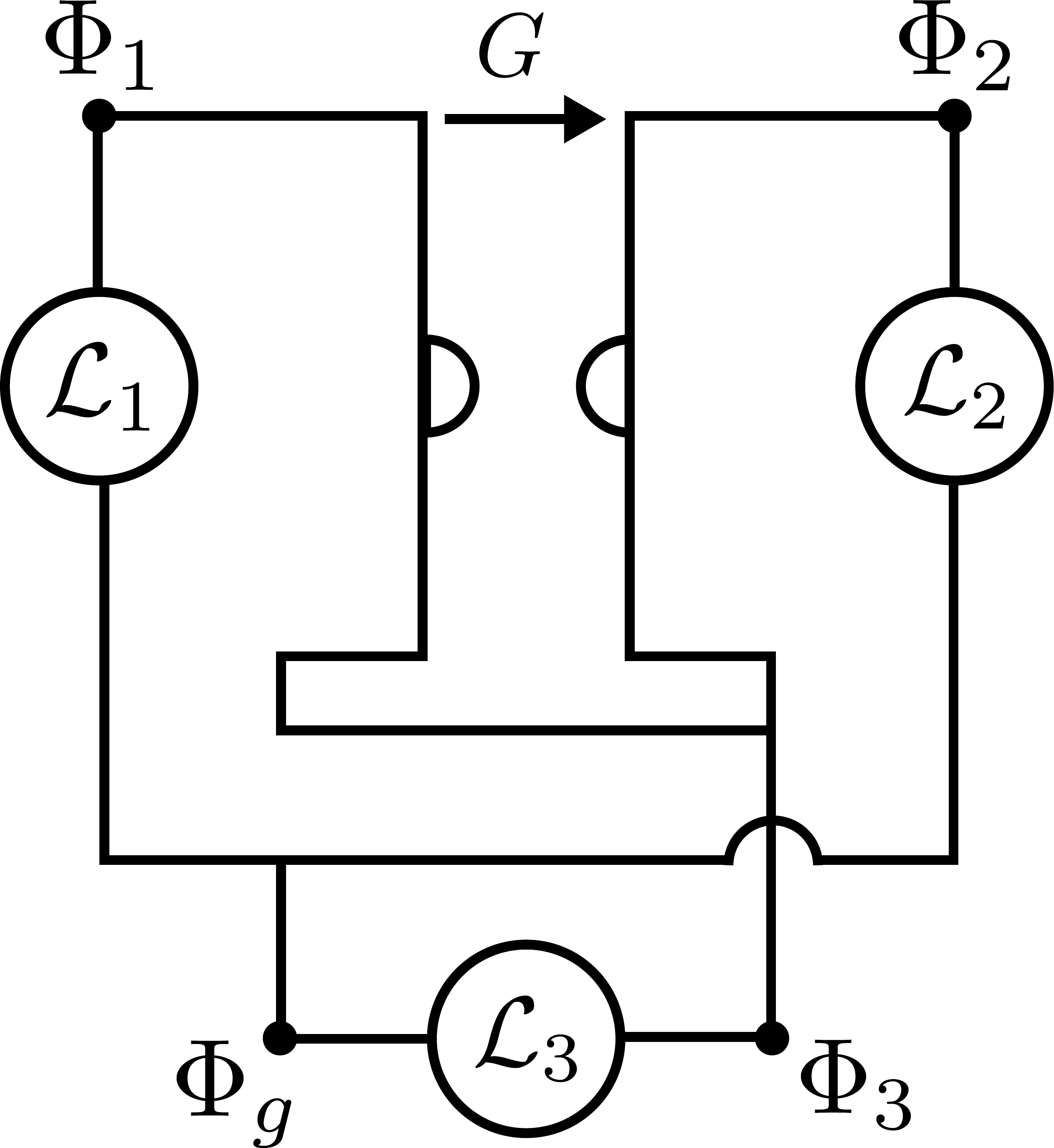}
\caption{Carlin's construction of a circulator. Here the three terminals are labelled and are supposed to identify with the terminals in Fig.~\ref{fig:3t_circ}.}
\label{fig:carlinfig}
\end{figure}

Let us consider the circuit depicted in Fig.~\ref{fig:carlinfig}. It represents a way to construct a circulator: the three-terminal element in Fig.~\ref{fig:3t_circ}, from a gyrator, known as Carlin's construction, connected to an external circuit with some Lagrangians $\lagrangian_i$. It is thus more or less the opposite of Fig.~\ref{fig:gyr_circ}, in which one obtains a gyrator from the element $\mat{Y}_c$.
\Question Starting from the Lagrangian of a gyrator introduced in Eq.~\eqref{eq:gyrator_lagrangian}, and setting $\Phi_g=0$ as reference node flux, show that the Lagrangian of the circulator can be written as
\begin{equation}
\lagrangian_{\mathrm{cir}}= G (\Phi_1 \dot{\Phi}_2+\Phi_2 \dot{\Phi}_3+\Phi_3 \dot{\Phi}_1).
\label{eq:lagr_circulator}
\end{equation} \par 
\emph{Hint: remember that summing a total time derivative to the Lagrangian does not change the equations of motion, see Exercise \ref{exc:gauge}.} 
\Question Building on the intuition developed in Exercise~\ref{exc:gyr1}, derive the admittance matrix $\vect{Y}_c(s)$ of a circulator (as in Eq.~\eqref{circ}) from the equations of motion. 
\Question From the admittance matrix, obtain the scattering matrix, assuming that all the transmission lines at the ports of the circulator have characteristic impedance $Z_0$. Obtain the gyration constant $G$ for which the scattering matrix becomes the scattering matrix of an ideal impedance-matched circulator $\mat{S}_c$ in Eq.~\eqref{eq:circs}. What happens if the transmission line is not impedance-matched?
\end{Exercise}

\begin{Answer}[ref=exc:circ-gyr]
\Question The Lagrangian of the circulator part of the circuit in~\cref{fig:carlinfig} follows immediately from the Lagrangian $\mathcal{L}_\mathrm{gyr}$ in Eq.~\eqref{eq:gyrdef}
\begin{equation}
\label{lcirc1}
\mathcal{L\sp{\prime}}_\mathrm{cir}=   \frac{G}{2} \left(\Phi_1 \dot{\Phi}_2 - \dot{\Phi}_1 \Phi_2+\Phi_2 \dot{\Phi}_3 - \dot{\Phi}_2\Phi_3+\Phi_3 \dot{\Phi}_1 - \dot{\Phi}_3 \Phi_1\right).
\end{equation}
We see that it is not of the form of~\cref{eq:lagr_circulator}, but it can be brought into that form by adding to the Lagrangian a total time derivative; i.e., an equivalent circulator Lagrangian is
\begin{equation}
\mathcal{L}_\mathrm{cir} = \mathcal{L\sp{\prime}}_\mathrm{cir} +\frac{G}{2} \frac{d}{dt}(\Phi_1 \Phi_2+ \Phi_2 \Phi_3+\Phi_3 \Phi_1).
\end{equation}
\Question We can now proceed to find the admittance matrix. Consider the total Lagrangian
\begin{equation}
\mathcal{L}= \mathcal{L}_1 + \mathcal{L}_2+\mathcal{L}_3 +\mathcal{L}_\mathrm{cir},
\end{equation}
with $\lagrangian_k= \lagrangian_k(\{\Phi_{k}, \dot{\Phi}_k\}), \, k=\{1,2,3\}$. The currents~$\ii_{\lagrangian_k}$ passing through the element with Lagrangian~$\lagrangian_k$ are defined as 
\begin{equation}
\ii_{\lagrangian_k} = \frac{d}{dt} \biggl(\frac{\partial \lagrangian_k}{\partial \dot{\Phi}_k} \biggr) - \frac{\partial \lagrangian_k}{\partial \Phi_k}.
\end{equation}
At the same time we have
\begin{equation}
    0=\frac{d}{dt} \biggl(\frac{\partial \lagrangian}{\partial \dot{\Phi}_k} \biggr) - \frac{\partial \lagrangian}{\partial \Phi_k}=\ii_{\lagrangian_k}+\frac{d}{dt} \biggl(\frac{\partial \lagrangian_{\rm cir}}{\partial \dot{\Phi}_k} \biggr) - \frac{\partial \lagrangian_{\rm cir}}{\partial \Phi_k}.
\end{equation}
Taking the convention that $\ii_k=\ii_{\lagrangian_k}$ as current going into the terminal, the Euler-Lagrange equations associated to the variables $\Phi_1, \Phi_2$ and $\Phi_3$ then read respectively
\begin{subequations}
\begin{equation}
\ii_1 +G \dot{\Phi}_3-G \dot{\Phi}_2 =0,
\end{equation}
\begin{equation}
\ii_2 +G \dot{\Phi}_1-G \dot{\Phi}_3 =0,
\end{equation}
\begin{equation}
\ii_3 +G \dot{\Phi}_2-G \dot{\Phi}_1 =0,
\end{equation}
\end{subequations}
which we can write compactly as
\begin{equation}
\left(\begin{array}{c}
\ii_1(t) \\
\ii_2(t) \\
\ii_3(t)
\end{array}\right)= 
\left(\begin{array}{cccc}
0 & G & -G \\
-G & 0 & G \\
G & -G & 0
\end{array} \right)
\left(\begin{array}{c}
\vv_1(t) \\
\vv_2(t) \\
\vv_3(t)
\end{array}\right).
\end{equation}
We can thus identify the admittance matrix of the circulator (in the Laplace domain) as
\begin{equation}
\mathbf{Y}_\mathrm{c}(s) =G \left(\begin{array}{cccc}
0 & 1 & -1 \\
-1 & 0 & 1 \\
1 & -1 & 0
\end{array}\right).
\end{equation}
\Question To get the scattering matrix from the admittance matrix, we use Eq.~\eqref{eq:scat-lap} where $Z_0$ is the characteristic impedance of the transmission lines. We get
\begin{equation}
\mat{S}_{\rm c}=\frac{1}{3G^2Z_0^2+1}
\left(
\begin{array}{ccc}
 1-G^2 Z_0^2 & 2 G Z_0 (G Z_0-1) & 2 G Z_0 (G Z_0+1) \\
 2 G Z_0 (G Z_0+1) & 1-G^2 Z_0^2 & 2 G Z_0 (G Z_0-1) \\
 2 G Z_0 (G Z_0-1) & 2 G Z_0 (G Z_0+1) & 1-G^2 Z_0^2 \\
\end{array}
\right),
\end{equation}
which only for $G= Z_0^{-1}$ becomes an impedance-matched circulator without any unwanted reflection as in Eq.~\eqref{eq:circs}. If there is no perfect impedance matching, we see that there is transmission from, say, port 1 to both port 2 and 3, as well as reflection.
\end{Answer}

\chapter{Noise, or all that can go wrong}
\label{chap:noise}

The physical sources of noise, dissipation and limited control of superconducting devices are manifold and a subject beyond this book. In this chapter we first discuss the modeling of noise in Sections~\ref{sec:lindblad} and \ref{sec:losscircuit}, and then, in Section~\ref{sec:ns}, some aspects of noise sensitivity and noise protection in superconducting qubits.

\section{Lindblad master equation model}
\label{sec:lindblad}

As we have seen in Section~\ref{sec:loss} and Appendix~\ref{app:norm_mode}, it is possible to obtain the Hamiltonian $H$ and the (nonnegative) decay rates $\kappa_m$associated with the modes ($\hat{a}_m$) of an $N$-port linear reciprocal network to which $N$ Josephson junctions are attached. A simple dynamical model in which to use these parameters is the Lindblad master equation:
\begin{equation}
\dot{\rho}=-\frac{i}{\hbar} [H,\rho]+\sum_{m=1}^M \kappa_m \mathcal{D}[\hat{a}_m](\rho),
    \label{eq:lindblad}
\end{equation}
using the definition ${\cal D}[A](\rho)=A\rho A^{\dagger}-\frac{1}{2}\{ A^{\dagger} A, \rho \}$ for any operator $A$. Here $H$ is the Hamiltonian such as in Eq.~\eqref{eq:h_nm} and $\hat{a}_m$ is the annihilation operator of mode $m=1,\ldots, M$ of the network.

Such a master equation implicitly assumes that the noise is Markovian: the environment which introduces the noise is sufficiently large to be unchanged by the weak interaction with the system whose internal dynamics is governed by $H$. If this approximation is not warranted, one has to use dynamical equations which include both system and environment generally. In some cases one can build models in which the environment is entirely classical, but non-Markovian, characterized by long temporal correlations, such as in random telegraph noise. In the Fourier domain, noise with long temporal correlations has strong low-frequency components. Generically, such noise is called $1/f$ noise, as its amplitude is often seen to scale as $1/f$. Examples are the noise due to an environment of so-called two-level defects, noise due to quasi-particle dynamics, or other temporally-correlated flux or charge noise. For temporally-correlated noise, one does not typically get an exponential decay of qubit coherence, but decay of the form $\exp(-\gamma t^{\alpha})$ with $\alpha > 1$; a power-law decay is also possible, see e.g. \cite{oliver:juelich, popp:solomon}.

In the master equation above, only energy {\em loss} is modeled, but if the frequency of the mode is relatively low compared to the dilution fridge temperature $T \approx 20 \, \mathrm{mK}$, one should include both loss and gain, which leads to
\begin{equation}
\dot{\rho}=-\frac{i}{\hbar} [H,\rho]+\sum_{m=1}^M \kappa_{m,-} \mathcal{D}[\hat{a}_m](\rho)+\sum_{m=1}^M \kappa_{m,+} \mathcal{D}[\hat{a}_m^{\dagger}](\rho).
    \label{eq:lindT}
\end{equation}
with $\kappa_{m,-}=\kappa_m(1+\overline{n}(\omega_m))$ and $\kappa_{m,+}=\kappa_m \overline{n}(\omega_m)$ with thermal occupation number
\begin{align}
\overline{n}(\omega_m)=\frac{1}{e^{\beta \hbar\omega_m}-1}, 
\label{eq:ntherm}
\end{align}
and thermodynamic beta $\beta = \frac{1}{k_B T}$.
Even though for a transmon qubit at, say, $5 \, \mathrm{GHz}$, $\overline{n}$ is tiny from a thermal perspective, additional sources of excitations are known to exist, see Exercise \ref{exc:temp}.

Now imagine that there is a single Josephson junction $N=1$ and hence we want to consider the decoherence of the qubit associated with that Josephson junction. Then Eq.~\eqref{eq:lindT} does not look like a very direct and efficient way of doing this, as the noise comes in through the $M$ modes to which the qubit is coupled. A more direct model starts from a qubit circuit connected to a lossy circuit branch and will be considered in Section~\ref{sec:losscircuit} and Appendix~\ref{app:lcys}. 

In the remainder of this section we discuss some aspects of a generic qubit Lindblad equation ---assuming it can be derived to model noise--- through two exercises.

A qubit version of such Lindblad equation in Eq.~\eqref{eq:lindT} with $H=-\frac{\hbar\Omega}{2} Z$ reads
\begin{equation}
\dot{\rho}=-\frac{i}{\hbar} [H, \rho]+\kappa_- \mathcal{D}[\sigma^-](\rho)+\kappa_+ \mathcal{D}[\sigma^+](\rho),
    \label{eq:lindQ}
\end{equation}
with $\kappa_+=\kappa_-e^{-\beta \hbar \Omega}$.
Solving the dynamics of the qubit Lindblad equation is straightforward, see Exercise \ref{exc:lind}. The qubit state $\rho(t)$ will equilibrate to the Gibbs state $\rho_{\beta}=e^{-\beta H}/{\rm Tr}(e^{-\beta H})$ at sufficiently long time $t$. The relaxation time $T_1$ sets the rate of decay of the diagonal elements $\bra{0} (\rho(t)-\rho_{\beta}) \ket{0} \propto e^{-t/T_1}$ where $T_1$ can be related to $\kappa_{\pm}$ as $1/T_1=\kappa_++\kappa_-$. The off-diagonal elements of $\rho(t)$ will decay as $\bra{0} \rho(t) \ket{1} \propto e^{-t/T_2}$ with $T_2=2T_1$.  
To model other sources of (fast, Markovian) dephasing than just dephasing due to relaxation, one can introduce a pure dephasing time $T_{\phi}$ defined through
\begin{align}
    \frac{1}{T_2}=\frac{1}{2T_1}+\frac{1}{T_{\phi}},
\end{align}
where $T_2$ can be taken as the `experimentally measured' dephasing time. In the qubit Lindblad equation, one can then model such additional dephasing by including an additional pure dephasing term $\gamma \mathcal{D}[Z]$ with $\gamma=\frac{1}{2T_{\phi}}$. In practice, the measured $T_1$ and $T_2$ times of superconducting qubits ---$T_2$ is often measured with and without applied echo or dynamical decoupling pulses which remove slow-varying noise components--- fluctuate considerably over time, and when a transmon is flux-tuned to different frequencies; see e.g. \cite{bylander:bench} and studies on the impact of cosmic rays \cite{thorbeck:cosmic} and references there. Clearly, much more detailed physical models are necessary to capture the non-equilibrium effects at varying time scales \cite{FC:quasip}.

\begin{Exercise}[title={Dephasing and relaxation in the qubit Lindblad equation},label=exc:lind]
\Question Observe that $\sigma^-=\ket{0}\bra{1}=\frac{1}{2}(X+iY)$.
Determine $\frac{d}{dt}\left({\rm Tr} \, \sigma^- \rho(t)\right)={\rm Tr}\, \sigma^- \dot{\rho}(t)$ with $\rho(t)$ the solution of the Lindblad equation in Eq.~\eqref{eq:lindQ}. Then solve for ${\rm Tr}\,\sigma^-(\rho(t)$ given ${\rm Tr}\,X \rho(t=0)=1$ and ${\rm Tr}\,Y \rho(t=0)=0$, i.e. $\rho(t=0)=\ket{+}\bra{+}$ . Verify that $\frac{2}{T_2}=\kappa_++\kappa_-$.
\Question Include the additional dephasing term $\gamma \mathcal{D}[Z]$ in the Lindblad equation and verify how $T_{\phi}$ relates to $\gamma$ by again examining $\frac{d}{dt}\left({\rm Tr} \, \sigma^- \rho(t)\right)$.
\Question Solve for ${\rm Tr}\,\ket{0}\bra{0} \rho(t)$ as well, confirming that $T_2=2T_1$.
\end{Exercise}

\begin{Answer}[ref={exc:lind}]
\Question We have 
\[
{\rm Tr}\, X \dot{\rho}(t)=\frac{i\Omega}{2}
{\rm Tr}\, X [Z,\rho(t)]-\frac{1}{2}(\kappa_-+\kappa_+){\rm Tr}\,X \rho(t)=
 \Omega {\rm Tr}\,Y \rho(t)-\frac{1}{2}(\kappa_-+\kappa_+){\rm Tr} \,X \rho(t),
\]
and
\[
{\rm Tr}\, Y \dot{\rho}(t)=\frac{i \Omega}{2}
{\rm Tr}\, Y [Z,\rho(t)]-\frac{1}{2}(\kappa_-+\kappa_+){\rm Tr} Y \rho(t)=
-\Omega {\rm Tr}X \rho(t)-\frac{1}{2}(\kappa_-+\kappa_+){\rm Tr}Y \rho(t),
\]
or
\[
\frac{d}{dt}\left({\rm Tr}\, \sigma^- \rho(t)\right)=(-i \Omega-\frac{1}{2}(\kappa_-+\kappa_+)){\rm Tr}\,(\sigma^- \rho(t)),
\]
which gives the solution ${\rm Tr}\,\sigma^- \rho(t)=\frac{1}{2}e^{-i\Omega t-(\kappa_++\kappa_-)t/2}$ with the given initial condition, showing the exponential decay with $T_2=\frac{2}{\kappa_++\kappa_-}$.
\Question We have $\mathcal{D}[Z](\rho(t))=Z\rho Z-\rho$ and thus
$\gamma{\rm Tr} X \mathcal{D}[Z](\rho)=-2\gamma {\rm Tr} X\rho(t)$ and the same for $\gamma{\rm Tr} \, Y \mathcal{D}[Z](\rho)=-2\gamma {\rm Tr} Y \rho(t)$. This gives an additional decay $e^{-2 \gamma t}\equiv e^{-t/T_{\phi}}$ so $T_{\phi}=\frac{1}{2\gamma}$.
\Question We have 
\[
\frac{d}{dt}({\rm Tr} \ket{0}\bra{0} \rho(t))=\kappa_- {\rm Tr}\ket{1}\bra{1}\rho(t)-\kappa_+ {\rm Tr} \ket{0}\bra{0} \rho(t)=\kappa_--(\kappa_++\kappa_-) {\rm Tr}\ket{0}\bra{0} \rho(t),
\]
or ${\rm Tr} \ket{0}\bra{0} \rho(t)=C e^{-(\kappa_++\kappa_-)t}+\frac{\kappa_-}{\kappa_++\kappa_-}$ for some constant $C$, confirming that $\frac{1}{T_1}=\kappa_++\kappa_-$.  Note that ${\rm Tr}\ket{0}\bra{0}\rho_{\beta}=\frac{\kappa_-}{\kappa_++\kappa_-}$.
\end{Answer}

\begin{Exercise}[title={Thermal equilibration?},label=exc:temp]
\Question A transmon qubit at $f=\Omega/2 \pi= 5$GHz is left alone to idle and we imagine that it equilibrates to the temperature of the dilution fridge which is kept at 20mK. What is the probability to find the transmon qubit in $\ket{0}$ after it equilibrates? Do the analysis both by treating the transmon as a qubit as well as a purely harmonic system (what is the difference in answers?). 
You can use that $50 {\rm mK} \approx 1 {\rm GHz}$. When you compare your answers with an experimentally estimated probability to find the transmon qubit in $\ket{0}$ - see e.g. Fig.~1 in \cite{RBLD:feedback} where the ground-state occupancy is estimated as $80\%$ - you see that this thermal model is far too simple: there are other sources of noise which make that the qubit is not measured as $\ket{0}$.
\Question Same question for a fluxonium qubit, treating it as a qubit at $f=\Omega/2\pi=300$MHz. What does this tell you about fluxonium qubit initialization? 
\end{Exercise}

\begin{Answer}[ref={exc:temp}]
\Question We imagine that the qubit equilibrates to the Gibbs state $\rho=\frac{e^{-\beta H}}{{\rm Tr} (e^{-\beta H})}$ with $\beta=\frac{1}{k_B T}$. If we treat the transmon as a qubit, we have $H=-\frac{\hbar \Omega}{2}Z$, $\hbar \Omega= hf$ and $\mathbb{P}(0)=\bra{0} \rho \ket{0}=\frac{1}{1+e^{-\beta \hbar \Omega}}=\frac{1}{1+e^{-5/(4/5)}} \approx 0.998$. If we treat the transmon as a harmonic system with $H=\hbar \Omega(\hat{n}+\frac{1}{2})$, we have $\mathbb{P}(0)=1-e^{-\beta \hbar \Omega}=1-e^{-5/(4/5)} \approx 0.998$ which is identical within this precision. 
\Question For the fluxonium qubit $\mathbb{P}(0)=\bra{0} \rho \ket{0}=\frac{1}{1+e^{-(3/10)/(4/5)}}=\frac{1}{1+e^{-3/8})} \approx 0.593$. Due to its low frequency, a fluxonium qubit needs to be actively measured (reset) to initialize the qubit to $\ket{0}$, as its thermal population in $\ket{1}$ is rather high.
\end{Answer}

\section{A qubit circuit with a lossy circuit branch}
\label{sec:losscircuit}

We imagine that in one of the qubit circuits we have been discussing in this book, there is also some lossy element, which can be described as an additional circuit branch between the qubit nodes  with admittance $\mathcal{Y}(\omega)=1/\mathcal{Z}(\omega)$. Since it is lossy, $\mathcal{Y}$ and $\mathcal{Z}$ will have non-zero real parts, as we have seen in the single-port discussion in Section~\ref{sec:loss}. This branch may be just a simple resistor $R$, but it can also be composed of several circuit elements. 
We will confine ourselves to the case where ${\rm Re}(\mathcal{Y})$ is small (so that $\mathcal{Z}$, or $R$, is large); this makes sense because we will only attempt to make a qubit in cases where the loss is small, so that the lossy branch is only a small perturbation on the circuit.

We are thus considering a scenario such as in Fig.~\ref{fig:rescirc} where an LC oscillator, or any other type of qubit like a transmon or flux qubit, is weakly capacitively coupled to such a resistive element. This resistive element can also physically be an (unobserved) transmission line. 

We observe that the `lossy Foster' analysis in Section~\ref{sec:loss} in which each LC oscillator has a resistance in parallel, does not fit the set up in Figure \ref{fig:rescirc} very well. The claim of the lossy Foster circuit is that one can write any $Z(s)$, for example $Z(s)$ in parallel with the LC oscillator in Fig.~\ref{fig:rescirc}, as a series of RLC circuits (see Fig.~\ref{fig::cauer_circuit} with $N=1$). However, in this case this is a somewhat singular representation, as we are missing some capacitors and inductors. A more direct approach, explained in Appendix~\ref{app:lcys} for an LC oscillator in parallel with an arbitrary, but small admittance, is more suitable. 
The physics approach to model the decoherence of the qubit is 
to treat the resistive element as a continuum bath of oscillators, with a certain spectral density, in a thermal state at a given temperature, and consider the system-bath Hamiltonian. This is a so-called Caldeira-Leggett model originally studied in Ref.~\cite{caldeiraleggett} (see also Refs.~\cite{Vool2016, parra2022} for the application to electrical circuits).  
Then to such a full system + bath Hamiltonian, one can apply a Born-Markov analysis \cite{petruccione} through which one obtains a Lindblad master equation for the qubit system only, treating the bath as a large, unperturbed, weakly-coupled environment. The dissipative terms in this Lindblad master equation which drive the qubit to a thermal state then set the relaxation time $T_1$ as in Eq.~\eqref{eq:lindQ}. For example, without going through the derivation, we quote a result from \cite{nguyen2019}, Appendix 3 (see its derivation as Eq.~(131) of \cite{BKD:circuit} and Section~$3$ of \cite{schoelkopf2003qubits}) for the decay rate:
\begin{equation}
    T_1^{-1}=\frac{1}{\hbar}|\langle 0|\hat\phi_i-\hat\phi_j|1\rangle|^2\left( \frac{\Phi_0}{2\pi}\right)^2 \Omega\mbox{Re}(\mathcal{Y}(\Omega))\left(\coth \frac{\beta \hbar\Omega}{2}+1\right)\label{BKDT1:alt}
\end{equation}
where $\hbar \Omega$ is the energy splitting between the state $\ket{0}$ and $\ket{1}$ of the qubit.
The nodes $i$ and $j$ are those between which the lossy impedance (with admittance $\mathcal{Y}(\omega)$) is attached as a branch.
Note that $\coth{\beta \hbar\Omega/2}=1+2\overline{n}(\Omega)$ with $\overline{n}(\Omega))$ with thermal occupation number as in Eq.~\eqref{eq:ntherm}. We see appearing here the operators for the dimensionless node fluxes of nodes $i$ and $j$; if one node is ground, the corresponding operator is simply omitted. 

Eq.~\eqref{BKDT1:alt} also holds when several qubit circuits in series are inserted between these nodes $\phi_i$ and $\phi_j$ (and it could be that the qubit wavefunctions are largely independent of $\phi_i$ and $\phi_j$). But it is worthwhile to consider the application of this equation to the very simplest example, that of a single LC circuit shunted by a resistance $R$. The evaluation of the matrix element proceeds using Eqs.~\eqref{eq:chargeflux_aadag},\eqref{phidef}, yielding
\begin{equation}
\langle 0|\hat\phi|1\rangle=\frac{2\pi}{\Phi_0}\sqrt{\frac{\hbar}{2}\sqrt{\frac{L}{C}}}.
\end{equation}
Plugging this into Eq.~(\ref{BKDT1:alt}), with the additional replacements $\Omega \mapsto \omega_r=1/\sqrt{LC}$ and $\mbox{Re}(\mathcal{Y}(\omega_r))=1/R$, and in the low-temperature limit ($\coth(\cdot)+1\rightarrow 2$), we get the simple result
\begin{equation}
\label{eq:t1rc}
T_1^{-1}=\frac{1}{RC}.
\end{equation}
This straightforward identification of $T_1$ with ``RC decay" is discussed further in \cite{PhysRevLett.101.080502}. In Appendix~\ref{app:lcys} we show that this identification holds more generally for LC circuits in parallel with a weakly perturbing admittance (see Eq.~\eqref{eq:t1ys}). 

It is illuminating to collect the constants differently in Eq.~(\ref{BKDT1:alt}) so that it is rewritten as
\begin{equation}
    T_1^{-1}=|\langle 0|\hat\phi_i-\hat\phi_j|1\rangle|^2
    \frac{\Omega}{2\pi} R_Q
    \mbox{Re}(\mathcal{Y}(\Omega))\left(\coth\frac{\beta\hbar\Omega}{2}+1\right).\label{BKDT1simp}
\end{equation}
where $R_Q$ is the quantum of resistance in Eq.~\eqref{eq:resist-quant}. This rewrite makes the units of the expression very transparent, and it makes clear that one aspect of $T_1^{-1}$ being ``small" is that $\mbox{Re}\,{\mathcal Y}(\Omega)\ll R_Q^{-1}$.

\begin{figure}
\centering
\begin{subfigure}[t]{1.0 \textwidth}
\centering
\includegraphics[height=4cm]{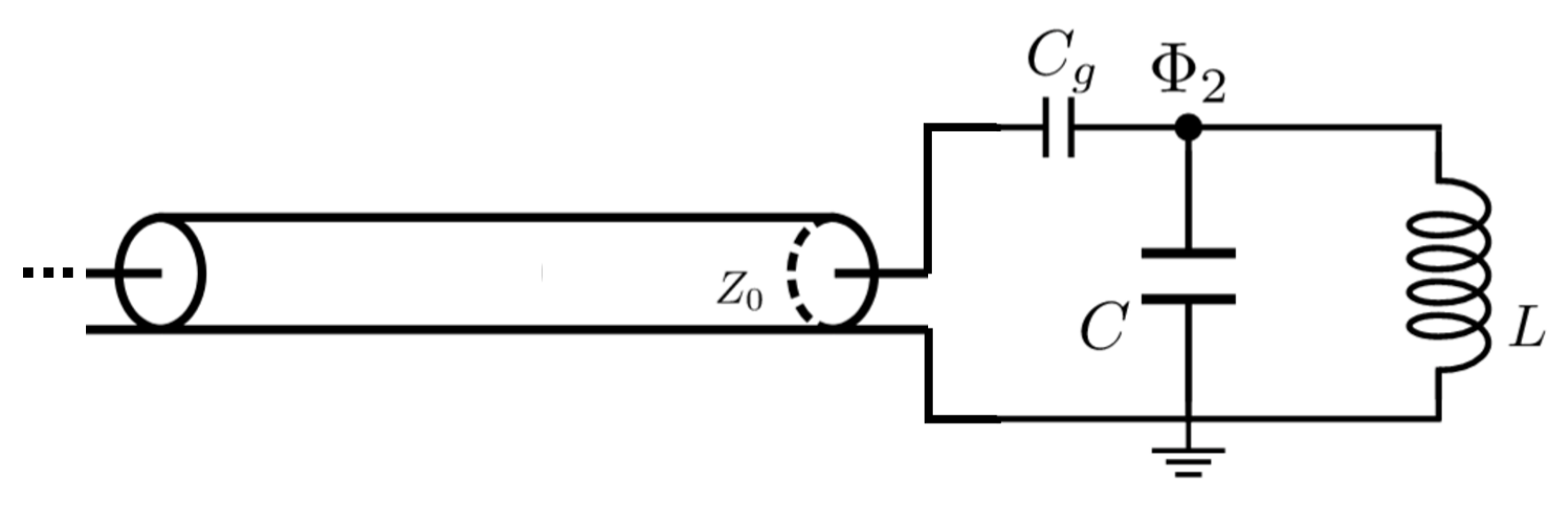}
\subcaption{}
\label{fig:rescirc_a}
\end{subfigure}
\begin{subfigure}[t]{1.0 \textwidth}
\centering
\includegraphics[height=4cm]{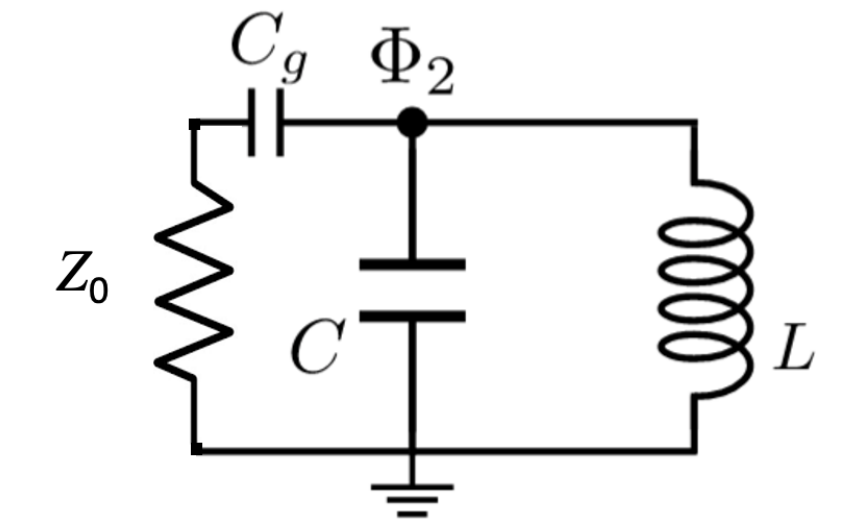}
\subcaption{}
\label{fig:rescirc_b}
\end{subfigure}
\caption{An unobserved transmission line (not driven) with a characteristic impedance $Z_0$ in (a) can be replaced by a resistor with resistance $Z_0$ in (b).}
\label{fig:rescirc}
\end{figure}

\begin{Exercise}[title={Qubit relaxation caused by a long transmission line},label=exc:transloss]
In Fig.~\ref{fig:driven_LC} we showed a simple resonant circuit driven by a voltage source. The real situation is usually more like that shown in Fig.~\ref{fig:rescirc}(a). The voltage source is actually many meters away, connected to the experiment by a long length of transmission line (cf.~Fig.~\ref{fig::2PTL}). As seen by the resonant circuit, it is well known that an infinite transmission line with wave impedance $Z_0$ has the same action as an actual resistor of that same resistance $Z_0$ (see e.g. \cite{pozar}). Of course, the transmission line is not infinite, but there is in reality, after about a meter of transmission line, a filter with a resistor with resistance $Z_0$ in it. So the situation is well represented by Fig.~\ref{fig:rescirc}(b). The actual resistor is still in the coldest part of the cryostat, so we can say $T=20 \, \mathrm{mK}$. Many aspects of this problem resemble the analysis given in Ref.~\cite{catpara:resist}, where an experiment is also presented.
Using the temperature given and the following parameters: $Z_0=50 \, \mathrm{\Omega}$, $E_C/h=250 \, \mathrm{MHz}$ (see Eq.~\eqref{eq:char_en}), $E_L/h=12.5 \, \mathrm{GHz}$ (see Eq.~\eqref{eq:ind_en}), and $C_g/C=0.005$ (reasonable, following \cite{doi:10.1063/1.3010859}), compute $T_1$, and compare it with the period of oscillation of the resonator. Treat the RC series combination on the left of Fig.~\ref{fig:rescirc}(b) as a single composite branch.
\end{Exercise}

\begin{Answer}[ref={exc:transloss}]
Note that $\omega_r=1/\sqrt{LC}$. The relevant quantities for our lossy branch are
\begin{eqnarray}
\mathcal{Z}(\omega_r)&=&Z_0+\frac{1}{i\omega_r C_g}, \notag\\
\mathcal{Y}(\omega_r)&=&\frac{i\omega_r C_g}{1+i\omega_r C_gZ_0}, \notag \\
\mbox {Re}(\mathcal{Y}(\omega_r))&=&\frac{\omega_r^2 C_g^2Z_0}{1+\omega_r^2 C_g^2Z_0^2}, \notag \\
\mbox {Re}(\mathcal{Y}(\omega_r))&=&\frac{C_g^2Z_0/CL}{1+ C_g^2Z_0^2/CL}.
\end{eqnarray}
Note that the factor in the denominator of the last equation can be expressed
\begin{equation}
 \frac{Z_0^2C_g^2}{LC}=Z_0^2\frac{C}{L}\left(\frac{C_g}{L}\right)^2=\frac{(2\pi)^2}{8}\left(\frac{Z_0}{R_Q}\right)^2\left(\frac{C_g}{C}\right)^2\frac{E_L}{E_C}\ll 1.
\end{equation}
This is true even though $E_L>E_C$, because the other factors are very small. Thus, also noting the magnitude of the temperature, we can approximate
\begin{equation}
\mbox {Re}({\mathcal Y}(\omega_r))\approx \frac{C_g^2Z_0}{CL}=\frac{(2\pi)^2}{8}\frac{1}{R_Q}\frac{Z_0}{R_Q}\left(\frac{C_g}{C}\right)^2\frac{E_L}{E_C},\,\,\,\,\mbox{ and }\,\,\,\,\coth{\beta \hbar\omega_r/2}\approx 1
\end{equation}
To evaluate the matrix element, we use the harmonic oscillator result from Eq.~\eqref{eq:zpf-reduced}, so that
\begin{equation}
   |\langle 0|\hat\phi_2|1\rangle|^2=\sqrt{\frac{2E_C}{E_L}}. 
\end{equation}
Collecting everything together, using Eq.~\eqref{BKDT1simp} gives
\begin{eqnarray}
 T_1^{-1}&=&\sqrt{\frac{2E_C}{E_L}}\cdot \frac{\omega_r}{2\pi} \cdot R_Q\cdot\frac{(2\pi)^2}{8}\frac{1}{R_Q}\frac{Z_0}{R_Q}\left(\frac{C_g}{C}\right)^2\frac{E_L}{E_C}\cdot 2 \notag \\
 &=&\frac{(2\pi)^2}{2\sqrt{2}}\sqrt{\frac{E_L}{E_C}}\frac{Z_0}{R_Q}\left(\frac{C_g}{C}\right)^2 \frac{\omega_r}{2\pi} \notag \\
 &=&0.00019\cdot 6 {\rm GHz}\approx 1.1{\rm MHz}.
\end{eqnarray}
This gives a reasonable, but not state of the art, decay rate of about $1 \, \mathrm{\mu s}$. Longer lifetimes are achieved by using smaller values of the coupling capacitance $C_g$, and sometimes by the use of a ``Purcell filter'', which strongly reduces ${\rm Re}(\mathcal{Y})$ at the frequency $\omega_r$.

\end{Answer}

\section{Noise sensitivity and protection}
 \label{sec:ns}

Imagine we wish to engineer a Hamiltonian $H_{\rm ideal}$ of a single superconducting qubit, using the conjugate operators $\hat{\phi}$ and $\hat{q}$. In practice, this system is coupled to other systems, quantum or classical, whose dynamics affects the system introducing noise, and let the Hamiltonian including the coupling to the environment be $\tilde{H}$. One can roughly classify noise as either charge noise or flux noise, meaning that it couples to either the charge variable $\hat{q}$ or the flux variable $\hat{\phi}$ in $\tilde{H}$. 

Both flux noise and charge noise can contribute to {\em both} qubit dephasing and qubit relaxation: what noise is dominant depends on coupling strengths and the character and symmetry of the qubit states. A simple model for the rate $1/T_1$ is given by Fermi's golden rule which says that
\begin{equation}
    \frac{1}{T_1}=\frac{2\pi}{\hbar}|\bra{0} \tilde{H} \ket{1}|^2 \rho(\Omega),
    \label{eq:fermi}
\end{equation}
where $\rho(\Omega)$ is the density of states of the environmental degrees of freedom at the qubit frequency $\Omega$. One can compare this expression with Eq.~\eqref{BKDT1:alt}, where the coupling is expressed in terms of the branch flux operator, the admittance $\mathcal{Y}(\Omega)$ and the thermal properties of the environment at frequency $\Omega$.
Clearly, whether flux and charge noise contribute to relaxation via Eq.~\eqref{eq:fermi} depends on whether terms which involve the charge $\hat{q}$ and/or flux operator $\hat{\phi}$ in $\tilde{H}$ induce transitions between $\ket{0}$ and $\ket{1}$.

\subsubsection{Noise sensitivity}

As we know, the transmon qubit is designed such that it is less sensitive to charge noise. This is expressed in the plots in Fig.~\ref{fig:energyng} where one sees a reduced sensitivity towards noise in the offset charge $n_g$ of the energies of all eigenstates $\ket{\psi_k}$. Let $\tilde{H}$ be the Hamiltonian when the offset charge $n_g$ is slightly different than some ideal value $n_{g}^{\rm ideal}$, i.e., $\tilde{H}(n_g=n_g^{\rm ideal})=H_{\rm ideal}$ and so the insensitivity is expressed as
\begin{equation}
   \bra{\psi_m} \tilde{H} \ket{\psi_m}\approx \bra{\psi_m} H_{\rm ideal} \ket{\psi_m},
\end{equation}
where $\ket{\psi_m}$ are eigenstates of $H_{\rm ideal}$. For the transmon this comes about due to the weak dependence on $n_g$ in $E_m(n_g)$ in Eq.~\eqref{eq:mathieu} for large $E_J/E_C$. This insensitivity lengthens the dephasing time ---limiting the $Z$ error strength--- of the qubit due to charge noise, since 
it implies that
\begin{equation}
   \bra{0} \tilde{H} \ket{0}-\bra{1} \tilde{H} \ket{1} \approx \bra{0} H_{\rm ideal} \ket{0}-\bra{1} H_{\rm ideal} \ket{1}.
\end{equation}
If a qubit is insensitive to dephasing errors, one expects that the dephasing $T_2$ is long and dominated by relaxation-induced dephasing, i.e., $\frac{1}{T_2}\gtrapprox \frac{1}{2T_1}$.

Observe that a different question is whether the matrix element $|\bra{0}\tilde{H}\ket{1}|$ is small. Note that by definition one has $|\bra{0}H_{\rm ideal}\ket{1}|=0$ since $\ket{0},\ket{1}$ are the ideal eigenstates, and one thus considers the effect of the (time-varying, stochastic) perturbation in $\tilde{H}$ which couples the qubit to the environment. Just from the fact that the transmon is in the regime $E_C \ll E_J$ and any charge noise comes with energy scale $E_C$, charge noise causing transitions between $\ket{0}$ and $\ket{1}$ ---bit flips $X$--- will be suppressed. For these general reasons, a flux-type qubit where the contributions for capacitance terms scaling with $E_C$ are relatively small is naturally protected against charge noise.

\subsection{Flux sweet spots}
\label{subsec:fluxss}

Any loop in a superconducting lumped-circuit representation of a device through which a current can run is affected by (stray, fluctuating) magnetic fields and can lead to flux noise. 

A common form of protection against classical flux noise is obtained when the ideal Hamiltonian $H_{\rm ideal}$ operates at a so-called external flux sweet spot. Imagine setting the potential $U(\phi,\phi_{\rm ext})$ to the point $\phi_{\rm ext}=\phi_{\rm ext}^{\rm ideal}$.  Given a sufficiently small amount of noise on the external flux $\phi_{\rm ext}$, we can then write
\[
\tilde{H}(\phi_{\rm ext})=H_{\rm ideal}+(\phi_{\rm ext}-\phi_{\rm ext}^{\rm ideal})\frac{\partial U}{\partial \phi_{\rm ext}}\biggl\vert_{\phi_{\rm ext}^{\rm ideal}}+\frac{1}{2}(\phi_{\rm ext}-\phi_{\rm ext}^{\rm ideal})^2 \frac{\partial^2 U}{\partial\phi_{\rm ext}^2}\biggl\vert_{\phi_{\rm ext}^{\rm ideal}}+\ldots
\]
Now we can consider how fluctuations in $\phi_{\rm ext}$ affect the qubit frequency, i.e.
\begin{multline*}
\bra{0} \tilde{H}(\phi_{\rm ext})\ket{0}-\bra{1} \tilde{H}(\phi_{\rm ext}) \ket{1}=\hbar \Omega^{\rm ideal} \notag \\
+(\phi_{\rm ext}-\phi_{\rm ext}^{\rm ideal})\left(\bra{0}\frac{\partial U}{\partial\phi_{\rm ext}}\biggl\vert_{\phi_{\rm ext}^{\rm ideal}}\ket{0}-\bra{1} \frac{\partial U}{\partial\phi_{\rm ext}}\biggl\vert_{\phi_{\rm ext}^{\rm ideal}} \ket{1}\right) +\mathcal{O}((\phi_{\rm ext}-\phi_{\rm ext}^{\rm ideal})^2),
\end{multline*}
where $\Omega^{\rm ideal}$ is the targeted transition frequency.  
We say that $\phi_{\rm ext}^{\rm ideal}$ is a flux sweet spot when 
\begin{equation}
\bra{0}\frac{\partial U}{\partial\phi_{\rm ext}}\biggl\vert_{\phi_{\rm ext}^{\rm ideal}}\ket{0}-\bra{1} \frac{\partial U}{\partial \phi_{\rm ext}}\biggl\vert_{\phi_{\rm ext}^{\rm ideal}} \ket{1}=0.
\label{eq:ss}
\end{equation}
A common case is when $U$ depends on $\cos(\phi-\phi_{\rm ext}^{\rm ideal})$ with $\phi_{\rm ext}^{\rm ideal}=k \pi$ for $k\in \mathbb{Z}$, such that
$\bra{\psi}\frac{\partial U}{\partial \phi_{\rm ext}}\big\vert_{\phi_{\rm ext}^{\rm ideal}} \ket{\psi} \propto \bra{\psi} \sin(\phi)\ket{\psi}=0$ for any eigenstate $\ket{\psi}$ due to the parity symmetry of the ideal Hamiltonian, as discussed in Section~\ref{sec:sym-prot}. Thus $\phi_{\rm ext}^{\rm ideal}=k \pi$ are then flux sweet spots where one expects the longest $T_2$ time.

\begin{Exercise}[label=exc:ss]
Show that small fluctuations in $\phi_{\rm ext}$ at a flux sweet spot, for which Eq.~\eqref{eq:ss} holds, can still generate bit-flip ($X$) errors and thus lower $T_1$, i.e.,
\begin{equation}
|\bra{0}\tilde{H}(\phi_{\rm ext})\ket{1}|,
\label{eq:offdiag}
\end{equation}
is not necessarily small when $U$ is parity-symmetric at $\phi_{\rm ext}^{\rm ideal}$, $U(\phi,\phi_{\rm ext}=\phi_{\rm ext}^{\rm ideal})= U(-\phi,\phi_{\rm ext}=\phi_{\rm ext}^{\rm ideal})$ {\em and} we wish to be able to drive the $0-1$ transition of the qubit.
\end{Exercise}

\begin{Answer}[ref=exc:ss]
To first order we have 
\begin{equation}
|\bra{0}\tilde{H}(\phi_{\rm ext})\ket{1}|=(\phi_{\rm ext}-\phi_{\rm ext}^{\rm ideal})
|\bra{0}\frac{\partial U}{\partial \phi_{\rm ext}}\biggl\vert_{\phi_{\rm ext}^{\rm ideal}}\ket{1}|.
\end{equation}
Since $H_{\rm ideal}$ is parity-symmetric, $\frac{\partial U}{\partial \phi_{\rm ext}}$
is an odd function $f_{\rm odd}(-\phi)=-f_{\rm odd}(\phi)$. The states $\ket{0}$ and $\ket{1}$ should have different parity symmetry to be able to drive the qubit (due to Eqs.~\eqref{eq:Q0} and \eqref{eq:sym0-2}), hence there is no {\em symmetry} or sweet-spot reason why Eq.~\eqref{eq:offdiag} should be small.
\end{Answer}

\subsection{Built-in protection}
\label{sec:bip}

Here we discuss forms of noise protection which are based on shaping the form of the wavefunction as in a quantum error-correcting code. For a general perspective on protected qubits we refer to Ref.~\cite{gyenis2021moving}.

One can generally write 
\[
\tilde{H}=H_{\rm ideal}+f(\hat{\phi})+g(\hat{q}),
\]
where $f(.)$ and $g(.)$ are some functions which can also depend on additional quantum or classical environment degrees of freedom, or be time-dependent. For the purpose of the next arguments, the form of this dependence is not relevant, so we don't specify it exactly.

Imagine we define a qubit with $\psi_0(\phi)$ and $\psi_1(\phi)$ the wavefunctions of the state $\ket{0}$ resp. $\ket{1}$, which are eigenstates of $H_{\rm ideal}$. If the wavefunctions $\psi_0(\phi)$ and $\psi_1(\phi)$ have approximately {\em disjoint} support, then 
\begin{equation}
    \bra{0} f(\hat{\phi}) \ket{1}=\int_{-\infty}^{\infty} d\phi\, \psi_0^*(\phi) f(\phi) \psi_1(\phi) \approx 0,
    \label{eq:disjoint-1}
\end{equation}
Similarly, we have for the function $g(\hat{q})=\hat{q}^k$ for any finite $k$:
\begin{equation}
    \bra{0} g(\hat{q}) \ket{1}=\int_{-\infty}^{\infty} d\phi\, \psi_0^*(\phi) \left(-i \frac{\partial}{\partial\phi}\right)^k \psi_1(\phi) \approx 0.
    \label{eq:disjoint-2}
\end{equation}
Note, however, that if $g(\hat{q})=e^{i \epsilon \hat{q}}$, that is, a displacement of sufficient magnitude $\epsilon$, then the support of $\psi_0(\phi)$ could be displaced to that of $\psi_1(\phi)$ and $\bra{0} g(\hat{q}) \ket{1}\neq 0$. However, we don't naturally expect such displacement terms in the Hamiltonian, since $\hat{q}$ represents the charge operator. 

Eqs.~\eqref{eq:disjoint-1} and \eqref{eq:disjoint-2} imply that the noisy Hamiltonian $\tilde{H}$ cannot cause any transition from $\ket{0}$ to $\ket{1}$ and vice versa, leading to the qubit being protected against bit-flip $X$ errors induced by charge or flux noise. In other words, energy exchange with the environment inducing $\ket{0} \leftrightarrow \ket{1}$ is suppressed, as we cannot couple into these transitions and one expects that the qubit relaxation time $T_1$ is very long for this reason.

An example of a qubit with such `built-in' $T_1$-protection \footnote{Of course, if we happen to call the states $\ket{0}$ and $\ket{1}$, $\ket{+}$ and $\ket{-}$ instead, then one would call this built-in protection against phase-flip $Z$ errors, leading to a long dephasing time.} is the $0$-$\pi$ qubit discussed in Section~\ref{subsec:0pi}, as the wavefunctions for $\ket{0}$ and $\ket{1}$ are localized in the well around $\phi=0$ and the well around $\phi=\pi$. It is clear that the requirement of having disjoint support requires the potential well structure to be sufficiently broad. For a qubit like the transmon, the wavefunctions definitely do not have disjoint supports (such as approximately two lowest-energy wavefunctions of the harmonic oscillator which involve Hermite functions with index 0 and 1).

\subsection{Double protection?}

We have seen that disjoint support of wavefunctions induces a protection against $X$ errors, but is it possible to shield a qubit from both $X$ as well as $Z$ errors in this manner? 

At first sight, this seems difficult as it would require that 
\begin{align}
    \bra{+} f(\hat{\phi}) \ket{-}=\int_{-\infty}^{\infty} d\phi\, \psi_+^*(\phi) f(\phi) \psi_-(\phi) \overset{?}{\approx} 0, \label{eq:trouble} \\
   \bra{+} g(\hat{q}) \ket{-}=\int_{-\infty}^{\infty} d\phi\, \psi_+^*(\phi) g\left(-i \frac{\partial}{\partial\phi}\right) \psi_-(\phi) \overset{?}{\approx} 0, \label{eq:fine}
\end{align}
where $\psi_{\pm}(\phi) \approx (\psi_0(\phi)\pm \psi_1(\phi))/\sqrt{2}$ are the wavefunctions of the states $\ket{\pm}=(\ket{0}\pm \ket{1})/\sqrt{2}$. It is clear that the states $\psi_{\pm}(\phi)$ generally do {\em not} have disjoint support, as they are superpositions. But still Eqs.~\eqref{eq:trouble} and \eqref{eq:fine} can be zero for another reason, as follows.  We can examine the $\ket{\pm}$ states using the wavefunctions $\tilde{\psi}_{\pm}(q)$ in the conjugate $\hat{q}$-basis. In fact, it is possible to design wavefunctions $\psi_{0,1}(\phi)$ such that the wavefunctions $\tilde{\psi}_{\rm \pm}(q)$ also have nonoverlapping support: this type of qubit is called a GKP qubit after its inventors Gottesman, Kitaev and Preskill in 2001 \cite{Gottesman.etal.2001:GKPcode}. For the GKP qubit, $\psi_0(\phi)$ is a wavefunction sharply peaked at $\phi=2k\alpha\sqrt{\pi}$
for $k\in \mathbb{Z}$, while $\psi_1(\phi)$ is supported near $\phi=(2k+1)\alpha\sqrt{\pi}$ for some chosen bias $\alpha$.
Fourier transforming implies that $\tilde{\psi}_+(q)$ has support on $q=2k \sqrt{\pi}/\alpha$ while $\tilde{\psi}_-(q)$ has disjoint support on $q=(2k+1) \sqrt{\pi}/\alpha$, hence with a gap of $\Delta q=\sqrt{\pi}/\alpha$ between the supports. Such disjoint support then suffices to claim that Eq.~\eqref{eq:fine} holds, i.e. we will have
\begin{equation}
    \bra{+} g(\hat{q}) \ket{-}=\int_{-\infty}^{\infty} dq\, \tilde{\psi}_+^*(q) g(q)\tilde{\psi}_-(q) \approx 0.
\end{equation}
However, {\em even} given disjoint support in the $\hat{q}$-basis, when, say, $f(\hat{\phi})\propto \cos(\hat{\phi})=(e^{i\hat{\phi}}+e^{-i\hat{\phi}})/2$, the support of $\tilde{\psi}_+(q)$ can get shifted by $\pm 1$ since
\begin{equation}
    e^{i\hat{\phi}}\ket{q}=\ket{q+1}.
\end{equation}
This implies that one should choose the bias $\alpha$ such that $\Delta q=\sqrt{\pi}/\alpha> 1$ in the presence of some additional $\cos \phi$ noise term. However, the crafting of a potential which has the GKP wavefunctions as ground states could already lead to terms which effectively act as, say, $\cos(2 \phi)$ etc.,  generating larger charge fluctuations, which require a small $\alpha$. Hence, it is not clearly possible to fulfill these conditions with the appropriate squeezing bias $\alpha$; what is possible also depends on the inductive and capacitive energy scales which play into the relative strength of the zero point fluctuation bias $q_{\rm zpf}$ versus $\phi_{\rm zpf}=\frac{1}{2q_{\rm zpf}}$ in Eq.~\eqref{eq:zpf_resc}. We refer to \cite{rymarz2021} for the construction of a Hamiltonian with approximate GKP qubit eigenstates using a gyrator. The active (driven) realization of GKP qubit states in a superconducting 3D cavity resonator was first realized in \cite{GKP:exp}.
  
  In general, we note that a built-in protection of a qubit against noise also makes the qubit harder to couple to.  Ideally, a Hamiltonian can be tuned from a protected to a less-protected regime to allow for gates to take place \cite{brooks2013}.

 When realizing a protected superconducting qubit in a system of many degrees of freedom, one can ask about how many degrees of freedom should be involved to make logical transitions from $\ket{0} \leftrightarrow \ket{1}$ and $\ket{+} \leftrightarrow \ket{-}$. For example, the M\"obius strip qubit \cite{Kitaev06}, partially discussed in Section~\ref{subsec:mobius}, can have a protection which grows with $N$ as it involves $N$ degrees of freedom to make some logical transitions. This quantum error-correcting code perspective is further developed in \cite{VCT:homo}.

\chapter{Models of superconducting amplifiers}
\label{chap:add}

In this chapter we give three `stand-alone' exercises on the topic of amplifiers. Amplifiers built from circuit QED components can be close to achieving the quantum limit in terms of adding noise. These amplifiers play an important role in the dispersive measurement of qubits, since the microwave signal which is used for the measurement is of low intensity and requires low-noise amplification before it can read as a classical voltage at room temperature. 
We refer to \cite{RD:amplifiers, CDGM:noiseRMP} for more background.

\section{Exercises on amplifiers}

\begin{Exercise}[title={Haus-Caves limit of a bosonic amplifier},label=exc:HC]
In this exercise we will derive the so-called {\em quantum limit} of a linear, phase-insensitive bosonic amplifier. A very simple model of such an amplifier is the relation between the operators of an output mode $\hat{b}$ and an input mode $\hat{a}$ as 
\begin{equation}
    \hat{b}=\sqrt{G} \hat{a}+\sqrt{G-1}\hat{c}^{\dagger}, 
    \label{eq:phase-insens}
\end{equation}
with gain $G$ and $\hat{c}^{\dagger}$ is the creation operator of a bath mode at some relevant frequency $\omega$ obeying $[\hat{c},\hat{c}^{\dagger}]=\mathds{1}$; $\hat{a}$ similarly obeys $[\hat{a},\hat{a}^{\dagger}]=\mathds{1}$. Such a relation between input and output fields could have been obtained using an actual circuit and the input-output formalism in Section~\ref{sec:HL_derive}, see Exercise \ref{exc:JRM}. We assume that the bath state is a thermal bosonic state with $\langle \hat{c} \rangle_T=0$ at all temperatures $T$. 
\Question Explain why quantum mechanics forbids the existence of an amplifier with $\hat{b}= \sqrt{G} \hat{a}$, where $G >1$, i.e., we need to have a contribution from another bath mode as in Eq.~\eqref{eq:phase-insens}. Verify that  $[\hat{b},\hat{b}^{\dagger}]=\mathds{1}$ from Eq.~\eqref{eq:phase-insens}.
\Question The reason that the amplifier is called phase-insensitive is that both input quadratures $X_1 =(\hat{a}+\hat{a}^{\dagger})/2$ and $X_2= i(\hat{a}^{\dagger}-\hat{a})/2$ (dropping hats on these new operators to simplify notation) get amplified by the same amount. Verify indeed that $\langle Y_1 \rangle^2=G \langle X_1 \rangle^2$ and similarly $\langle Y_2\rangle^2=G \langle X_2 \rangle^2$ with output quadratures $Y_1 =(\hat{b}+\hat{b}^{\dagger})/2$ and $Y_2= i(\hat{b}^{\dagger}-\hat{b})/2$. 
\Question Show that the variance of the $Y_1$ quadrature satisfies the inequality
\begin{equation}
\label{ineq}
\langle (Y_1-\langle Y_1 \rangle)^2 \rangle \equiv (\Delta Y_1)^2 \ge G (\Delta X_1)^2 + \frac{1}{4}\lvert G-1 \rvert.
\end{equation}
An analogous formula holds for $(\Delta Y_2)^2$. 
\Question  Show that the inequality in Eq.~\eqref{ineq} becomes an equality when the bath mode is in the vacuum state at $T=0$. 
{\em Comment}: A different way of expressing the Haus-Caves limit is obtained by dividing both sides of Eq.~\eqref{ineq} by $G$ such that
\[
\frac{(\Delta Y_1)^2}{G} \geq  (\Delta X_1)^2+A_N,
\]
with $A_N\geq \frac{1}{4}(1-G^{-1})$ the added noise number which is at least $1/4$. For a high-temperature bath in which thermally-added noise dominates, one can use the high-temperature approximation $k T \gg \hbar \omega$ to approximate $A_N(\omega) \approx \bar{n}(\omega) \approx \frac{kT}{\hbar \omega}$.
\end{Exercise}

\begin{Answer}[ref={exc:HC}]

\Question We know that the input and output modes are bosonic modes and so they must satisfy the bosonic commutation relation $[\hat{a}, \hat{a}^{\dagger}]=[\hat{b}, \hat{b}^{\dagger}]=\mathds{1}$. However, if $\hat{b}= \sqrt{G} \hat{a}$, then $[\hat{b}, \hat{b}^{\dagger}] = G\mathds{1}$, which forces $G=1$, i.e., there is no amplification nor de-amplification. To verify $[\hat{b}, \hat{b}^{\dagger}]=\mathds{1}$, we use the bosonic commutation relations of $\hat{a}$ and $\hat{c}$.
\Question We have $\langle Y_i \rangle=\sqrt{G} \langle X_i \rangle+\sqrt{G-1}\langle X_{i,{\rm bath}}\rangle_T=\sqrt{G} \langle X_i \rangle$ due to $\langle \hat{c} \rangle_T=0$ (using $X_{1,{\rm bath}}=(\hat{c}+\hat{c}^{\dagger})/2$ and $X_{2,{\rm bath}}=i(\hat{c}^{\dagger}+\hat{c})/2$).
\Question We consider $(\Delta Y_1)^2=\langle Y_1^2 \rangle-\langle Y_1 \rangle^2$ and $\langle Y_i^2 \rangle=G \langle X_i^2 \rangle +2\sqrt{G(G-1)} \langle X_i \rangle \langle X_{i,{\rm bath}}\rangle_T+(G-1) \langle X_{i,{\rm bath}}^2\rangle_T=G \langle X_i^2\rangle+(G-1) \langle X_{i,{\rm bath}}^2 \rangle_T$ while $\langle Y_i\rangle$ is given in the previous question. We have 
\[
\langle X_{i,{\rm bath}}^2 \rangle_T=\frac{1}{4}\langle \hat{c} \hat{c}^{\dagger}+\hat{c}^{\dagger} \hat{c}\rangle_T=\frac{1}{4}[1+2 \langle \hat{c}^{\dagger} \hat{c}\rangle_T]=\frac{1}{4}[1+ 2\bar{n}(\omega)],
\]
with thermal occupation number $\bar{n}(\omega)=\frac{1}{e^{\frac{\hbar \omega}{kT}}-1} \geq  0$. This leads to 
the lower bound in Eq.~\eqref{ineq}.
\Question In the vacuum $\bar{n}(\omega)=0$ for any $\omega$.
\end{Answer}

\begin{Exercise}[title={Degenerate parametric amplifier},label=exc:deg_par_amp]
\begin{figure}[htb]
\centering
\includegraphics[height=4cm]{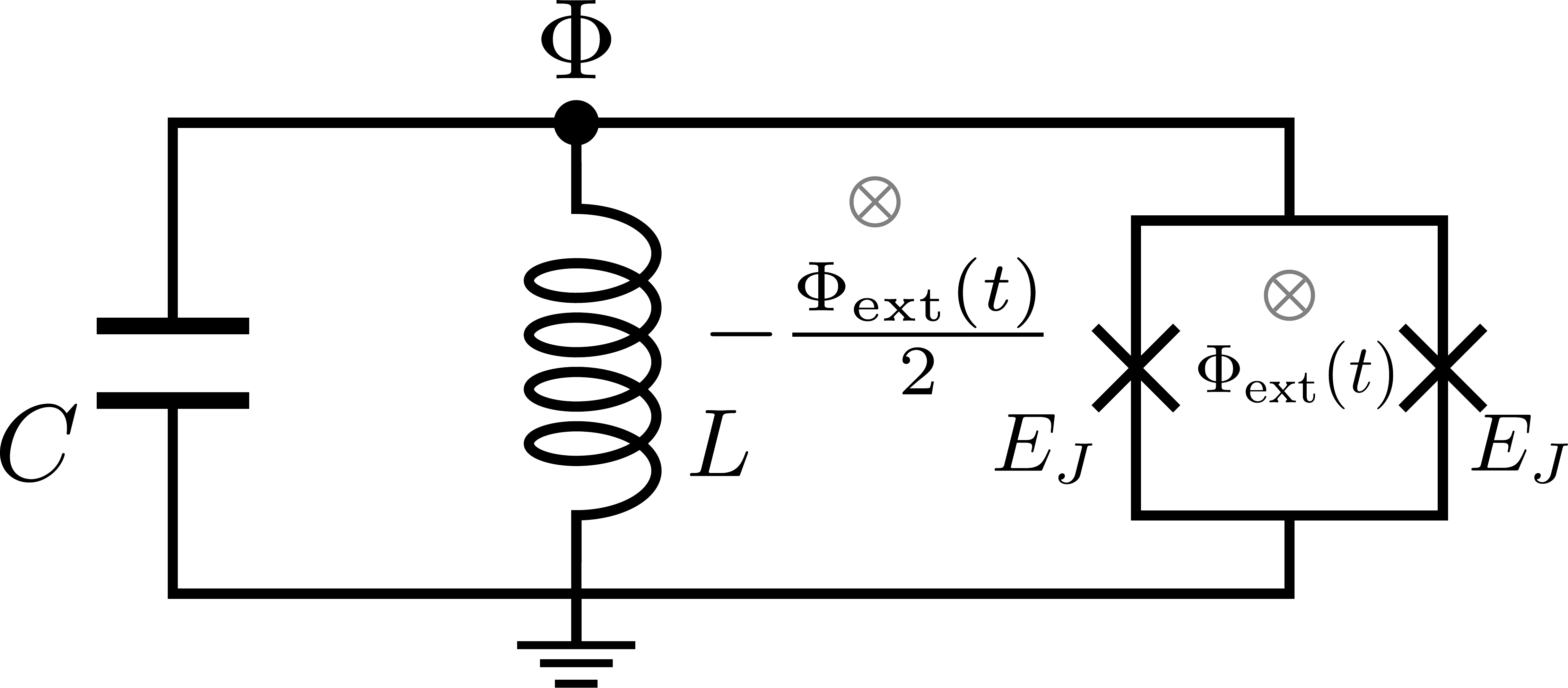}
\caption{LC oscillator with a SQUID loop threaded by two time-dependent fluxes. }
\label{fig:LCsquid}
\end{figure}

In this exercise we consider an amplifier which realizes phase-sensitive amplification or single-mode squeezing. Squeezing and other quantum optical transformations are well described in~\cite{book:Gerry.Knight:QuantumOptics}. Consider the circuit in Fig.~\ref{fig:LCsquid}. 
\Question Obtain the Lagrangian and the Hamiltonian of the system. 
\Question Consider a time-dependent flux $\Phi_{\mathrm{ext}}(t)= \mathcal{E}_P \cos \omega_P t$ and write the potential term in the Hamiltonian, using
the first-order 
\href{https://en.wikipedia.org/wiki/Jacobi}
{Jacobi-Anger expansion}
\begin{equation}
\cos(z \cos \theta) \approx	J_0(z)-2 J_2(z) \cos(2 \theta),
\end{equation}
where $J_0(z)$ (resp. $J_2(z)$) are the 0th (resp.~2nd) Bessel function of the first kind. Also, expand the potential to second order.
Then, introduce annihilation and creation operators for the mode. Show that for a specific value of $\omega_P$ (what value?), one gets the Hamiltonian of a degenerate parametric amplifier
\begin{equation}
\label{Hdpa}
H_{\rm DPA} = \frac{ \hbar \varepsilon}{2} (\hat{a}^2+\hat{a}^{ \dagger 2}),
\end{equation} 
in the interaction (rotating frame) picture of the resonant frequency of the mode. Determine $\varepsilon$. 
\Question The time evolution operator associated to $H_{\rm DPA}$ has the form of a unitary squeezing operator 
\begin{equation}
S(\xi)= \exp \biggl[\frac{1}{2}\biggl(\xi^{*} \hat{a}^2 - \xi \hat{a}^{\dagger 2} \biggr) \biggr],
\end{equation}  
where generically $\xi= r e^{i \theta}$ is a complex number. It can be shown, see e.g. Wikipedia for a  \href{https://en.wikipedia.org/wiki/Squeeze_operator}{proof}, or \cite{book:Gerry.Knight:QuantumOptics} that for such an operator one has
\begin{equation}
S^{\dagger}(\xi) \hat{a} S(\xi)= \hat{a} \cosh r - \hat{a}^{\dagger} e^{i \theta} \sinh r. 
\label{eq:sq}
\end{equation}
Using Eq.~\eqref{eq:sq}, compute the average and the variance of the two quadratures $X_{\theta/2}$ and $X_{\theta/2+\pi/2}$ (using the definition 
$X_{\varphi}= (\hat{a} e^{-i \varphi}+\hat{a}^{\dagger}e^{i \varphi})/2$)
 in a vacuum squeezed state $\ket{\xi}= S(\xi) \ket{0}$. 
 Why does this model realize a phase-sensitive amplifier?
  According to these results, what happens when the system starts in the vacuum and undergoes a time evolution with the Hamiltonian in Eq.~\eqref{Hdpa}? 
 \Question Consider now attaching a transmission line to the system. Write down the Heisenberg-Langevin equation for the annihilation and creation operators $\hat{a}(t)$ and $\hat{a}^\dagger(t)$ (as in Section~\ref{sec:HL_derive}), and solve the system in the steady state, i.e., express $\hat{b}_{\rm out}(t)$ in terms of the input fields. What parameters determine the amount of squeezing in $\hat{b}_{\rm out}$? Give an expression for $\cosh(r)$.
 {\em Comment}: This analysis can be improved by Fourier transforming the Langevin equations and considering $\hat{b}_{\rm out}[\omega]$ as a function of $\hat{b}_{\rm in}[\omega]$ and $\hat{b}_{\rm in}^{\dagger}[\omega]$, allowing one to understand the frequency-dependence of the squeezing gain.
\end{Exercise}

\begin{Answer}[ref={exc:deg_par_amp}]
\Question  The Lagrangian reads
\begin{equation}
\mathcal{L}= \frac{C}{2} \dot{\Phi}-\frac{\Phi^2}{2 L}+E_J \cos \biggl[\frac{2 \pi}{\Phi_0}\biggl(\Phi + \frac{\Phi_{\mathrm{ext}}(t)}{2} \biggr)\biggr]+E_J \cos \biggl[\frac{2 \pi}{\Phi_0}\biggl(\Phi -\frac{\Phi_{\mathrm{ext}}(t)}{2} \biggr)\biggr].
\end{equation}
Identifying the conjugate variable $Q= \partial  \mathcal{L}/ \partial \dot{\Phi}$ we obtain the Hamiltonian
\begin{equation}
H= \frac{Q^2}{2 C}+ \frac{\Phi^2}{2 L}- 2 E_J \cos \biggl [\frac{\pi}{\Phi_0} \Phi_{\mathrm{ext}}(t) \biggr] \cos \biggl [\frac{2 \pi}{\Phi_0} \Phi \biggr]. 
\end{equation}
\Question We now consider $\Phi_{\mathrm{ext}}(t)= \mathcal{E}_P \cos (\omega_P t)$. We have the potential
\begin{equation}
U= \frac{\Phi^2}{2 L} - 2 E_J \cos \biggl [\frac{\pi}{\Phi_0} \mathcal{E}_P \cos (\omega_P t)\biggr] \cos \biggl [\frac{2 \pi}{\Phi_0} \Phi \biggr],
\end{equation}
which, using the Jacobi-Anger expansion of $\cos(z \cos \theta)$, expanding the second cosine to second order in $\Phi$ and neglecting terms that do not alter the Euler-Lagrange equations, becomes
\begin{equation}
\label{Ujac}
U=  \frac{\Phi^2}{2 L} + 2 E_J \biggl[J_0 \biggl ( \frac{\pi \mathcal{E}_P}{\Phi_0} \biggr)-2 J_2 \biggl ( \frac{\pi \mathcal{E}_P}{\Phi_0} \biggr) \cos(2 \omega_P t)\biggr] \biggl( \frac{2 \pi}{\Phi_0}\biggr)^2 \Phi^2.
\end{equation}
In order to keep a compact notation, we define an equivalent inductance $L_H$ via the relation
\begin{equation}
\frac{1}{L_H}= \frac{1}{L}+4 E_J J_0 \biggl ( \frac{\pi \mathcal{E}_P}{\Phi_0} \biggr)\biggl( \frac{2 \pi}{\Phi_0}\biggr)^2,
\end{equation}
and a parameter 
\begin{equation}
\lambda= -4 E_J J_2 \biggl ( \frac{\pi \mathcal{E}_P}{\Phi_0} \biggr) \biggl( \frac{2 \pi}{\Phi_0}\biggr)^2,
\end{equation}
to rewrite the Hamiltonian as
\begin{equation}
H= \frac{Q^2}{2 C} + \frac{\Phi^2}{2 L_H}+\lambda \cos(2 \omega_P t) \Phi^2.
\end{equation}
We now introduce annihilation and creation operators ($Z= \sqrt{L_H/C}$, $\omega_r=1/\sqrt{L_H C}$)
\begin{subequations}
\begin{equation}
\Phi= \sqrt{\frac{\hbar Z}{2}} (\hat{a}+\hat{a}^{\dagger}),
\end{equation}
\begin{equation}
Q= i \sqrt{\frac{\hbar }{2 Z}} (\hat{a}^{\dagger}-\hat{a}),
\end{equation}
\end{subequations}
in terms of which the Hamiltonian reads
\begin{equation}
H= \hbar \omega_r \hat{a}^{\dagger}\hat{a} + \frac{ \hbar \varepsilon }{2} \biggl(e^{i 2 \omega_P t}+e^{-i 2 \omega_P t} \biggr)(\hat{a}^2 +\hat{a}^{\dagger 2}+1+2 \hat{a}^{\dagger}\hat{a}),
\end{equation}
where we define the parameter $\varepsilon= \lambda Z/2$. We now go to an interaction picture with $H_0= \hbar \omega_r \hat{a}^{\dagger} \hat{a}$. The Hamiltonian in the interaction picture becomes
\begin{align}
H_I(t)= H_{\rm DPA} = e^{i H_0 t/\hbar} He^{-i H_0 t/\hbar} -H_0= \frac{ \hbar \varepsilon }{2} \biggl(e^{i 2 \omega_P t}+e^{-i 2 \omega_P t} \biggr) \notag \\ \times \biggl (a^2 e^{-2 i \omega_r t}+a^{\dagger 2} e^{-2 i \omega_r t} +1+2 \hat{a}^{\dagger}\hat{a} \biggr).
\end{align}
In order to obtain the desired Hamiltonian, we need to choose a particular frequency $\omega_P$ so that only the desired term is time-independent, while the effect of the others averages out in a RWA. This frequency is clearly $\omega_P= \omega_r$. We finally get the Hamiltonian of a degenerate parametric amplifier (within the RWA approximation):
\begin{equation}
H \overset{\mathrm{RWA}}{=} \frac{\hbar \varepsilon}{2} (\hat{a}^2+\hat{a}^{\dagger 2}).
\end{equation}
\Question Using Eqs.~\eqref{eq:sq} and its Hermitian conjugate, it is immediately clear that the average of a generic quadrature in a vacuum squeezed state $\ket{\xi}= S(\xi) \ket{0}$ is zero. In fact, it is sufficient to show that the average of $\hat{a}$ is zero:
\begin{equation}
\bra{\xi} \hat{a} \ket{\xi} = \bra{0} S^{\dagger}(\xi) \hat{a} S(\xi) \ket{0}= \cosh r \bra{0} \hat{a} \ket{0}-e^{i \theta} \sinh r \bra{0} \hat{a}^{\dagger} \ket{0}=0,
\end{equation}
and similarly $\bra{\xi} \hat{a}^{\dagger} \ket{\xi}=0$. This implies that the variance of a generic quadrature $(\Delta X_{\varphi})^2$ is equal to the average of $X_{\varphi}^2$ in a vacuum squeezed state. In order to compute this variance, let us first derive how a generic quadrature $X_{\varphi}$ transforms under the action of the squeezing operator. Using again Eqs.~\eqref{eq:sq} and its Hermitian conjugate, we get
\begin{equation}
S^{\dagger}(\xi) X_{\theta/2} S(\xi) = [\cosh(r)-\sinh(r)]X_{\theta/2}, 
\end{equation}
and
\begin{equation}
S^{\dagger}(\xi) X_{\theta/2+\pi/2} S(\xi) = [\cosh(r)+\sinh(r)]X_{\theta/2+\pi/2}.
\end{equation}
We can now compute the variance of $X_{\theta/2}$ in a vacuum-squeezed state:
\begin{equation}
(\Delta X_{\theta/2})^2 = \bra{\xi} X_{\theta/2}^2 \ket{\xi}=[\cosh^2(r)+\sinh^2(r)-2\sinh(r)\cosh(r)]\bra{0} X_{\theta/2}^2 \ket{0}=\frac{1}{4}e^{-r}.
\end{equation}
Similarly, we can deduce the variance of the quadrature $X_{\theta/2+\pi/2}$
\begin{equation}
(\Delta X_{\theta/2+\pi/2})^2 = \frac{1}{4}e^r. 
\end{equation}
 We see that the variance of the quadrature $X_{\theta/2}$ is reduced (squeezed) compared to the vacuum, while the variance of the conjugate quadrature $X_{\theta/2+\pi/2}$ is increased (anti-squeezed). These two quadratures are special because they are the only quadratures for which the uncertainty relation $(\Delta X_{\theta/2})^2 (\Delta X_{\theta/2+\pi/2})^2\geq \frac{1}{16}$ is satisfied with equality sign for the squeezed vacuum, as you can check. The amplification is phase-sensitive since different quadratures (with different phases) are squeezed to a different extent.
 \par 
Let us consider the time evolution operator associated to the Hamiltonian Eq.~\eqref{Hdpa},
\begin{equation}
\label{timeEvOp}
U_I(t) = \exp \biggl[- \frac{i \varepsilon t}{2} (\hat{a}^2+\hat{a}^{\dagger 2})\biggr],
\end{equation} 
which is a squeezing operator with $\xi(t)= i \varepsilon t=\epsilon t e^{i \pi/2}$, hence $\theta= \pi/2$. We conclude that if we start from the vacuum, the time evolution operator of Eq.~\eqref{timeEvOp} generates a vacuum squeezed state in which the quadrature $X_{\pi/4}$ is squeezed, while the quadrature $X_{3 \pi/4}$ is anti-squeezed. The modulus $r$ of the squeezing parameter $\xi$ increases linearly with time.
\Question
The Langevin equations associated to $\hat{a}(t)$ and $\hat{a}^{\dagger}(t)$ (dropping time dependence for simplicity) are
\begin{subequations}
\begin{equation}
\frac{d \hat{a}}{d t} = \frac{1}{i \hbar} [\hat{a}, H_{\rm DPA}]-\frac{\kappa}{2}\hat{a}+\sqrt{\kappa} \hat{b}_{\rm in},
\end{equation}
\begin{equation}
\frac{d \hat{a}^{\dagger}}{d t} = \frac{1}{i \hbar} [\hat{a}^{\dagger}, H_{\rm DPA}]-\frac{\kappa}{2}\hat{a}^{\dagger}+\sqrt{\kappa} \hat{b}_{\rm in}^{\dagger},
\end{equation}
\end{subequations}
which give
\begin{subequations}
\begin{equation}
\frac{d \hat{a}}{d t} = -i \varepsilon \hat{a}^{\dagger}-\frac{\kappa}{2}\hat{a}+\sqrt{\kappa} \hat{b}_{\rm in},
\end{equation}
\begin{equation}
\frac{d \hat{a}^{\dagger}}{d t} =i \varepsilon \hat{a}-\frac{\kappa}{2}\hat{a}^{\dagger}+\sqrt{\kappa} \hat{b}_{\rm in}^{\dagger},
\label{eq:input-mode}
\end{equation}
\end{subequations}
with input-output relation
\begin{equation}
\label{inOut}
\hat{b}_{\rm out}= \sqrt{\kappa} \hat{a}-\hat{b}_{\rm in}.
\end{equation}
In the steady state they become algebraic equations
\begin{subequations}
\begin{equation}
-i \varepsilon \hat{a}^{\dagger}-\frac{\kappa}{2}\hat{a}+\sqrt{\kappa} \hat{b}_{\rm in}=0,
\end{equation}
\begin{equation}
i \varepsilon \hat{a}-\frac{\kappa}{2}\hat{a}^{\dagger}+\sqrt{\kappa} \hat{b}_{\rm in}^{\dagger}=0.
\end{equation}
\end{subequations}
Solving for $\hat{a}$ we get
\begin{equation}
\hat{a}= -\frac{\kappa}{\kappa^2-4 \varepsilon^2} \frac{4 i \varepsilon}{\sqrt{\kappa}} \hat{b}_{\rm in}^{\dagger}+ \frac{2 \kappa^2}{\kappa^2-4 \varepsilon^2} \frac{1}{\sqrt{\kappa}} \hat{b}_{\rm in},
\end{equation}
and using the input-output relations Eq.~\eqref{inOut} we obtain the output field annihilation operator
\begin{equation}
\label{outSteady}
\hat{b}_{\rm out}(t) = \frac{\kappa^2+4 \varepsilon^2}{\kappa^2-4 \varepsilon^2} \hat{b}_{\rm in}(t)-i \frac{4 \varepsilon \kappa}{\kappa^2-4 \varepsilon^2} \hat{b}_{\rm in}^{\dagger}(t).
\end{equation}
You can verify that this equation is of the form $\hat{b}_{\rm out}=\hat{b}_{\rm in} \cosh(r)-\hat{b}_{\rm in}^{\dagger} e^{i \theta}\sinh(r)$ and $\cosh(r)=\frac{\kappa^2+4\epsilon^2}{\kappa^2-4\epsilon^2}$, depending on the strength of the squeezing term $\epsilon$ and the coupling (decay) strength into the transmission line $\kappa$. Note that the amount of squeezing $r$ does not increase linearly in time, but is limited by the strength of $\kappa$ versus $\epsilon$.
\end{Answer}

\begin{Exercise}[title={Nondegenerate parametric amplifier: Josephson Ring Modulator},label=exc:JRM]
\begin{figure}[htb]
\centering
\includegraphics[height=8cm]{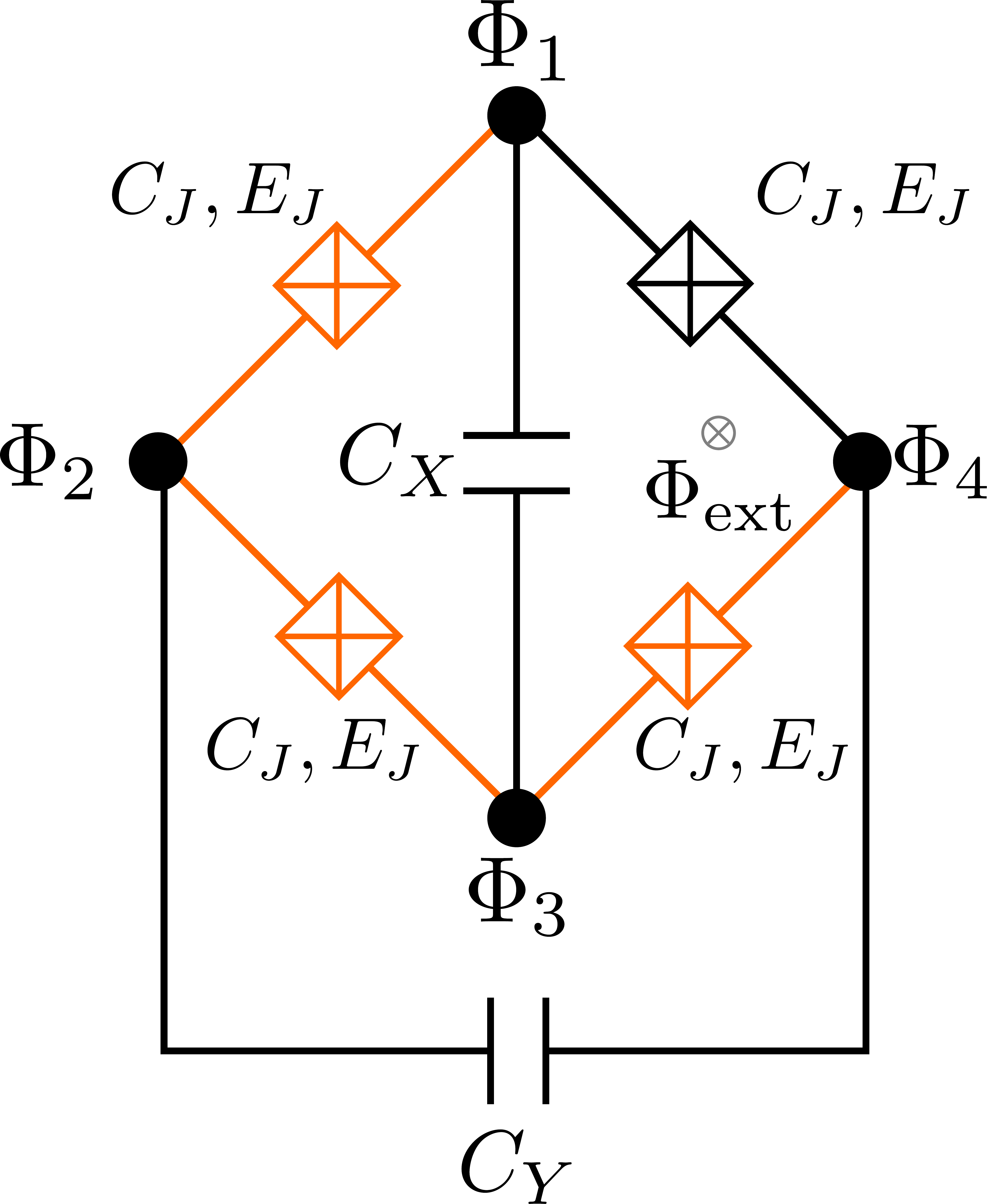}
\caption{Circuit of a Josephson Ring Modulator.}
\label{fig:JRM}
\end{figure}
Consider the circuit of a Josephson Ring Modulator (JRM) in Fig.~\ref{fig:JRM} with equal junctions and a loop threaded by a constant external magnetic flux $\Phi_{\mathrm{ext}}$. 
The goal of this exercise is to show how it can be used to realize the Hamiltonian of a non-degenerate parametric amplifier in Eq.~\eqref{eq:ndpa}. While the degenerate amplifier with Hamiltonian in Eq.~\eqref{Hdpa} in the previous Exercise enacts single-mode squeezing, the non-degenerate amplifier in Eq.~\eqref{eq:ndpa} enacts two-mode squeezing, that is, the two modes on which it acts are different/non-degenerate.
\Question Obtain the Lagrangian of this system, taking the tree shown in orange in Fig.~\ref{fig:JRM}, assuming $\Phi_3 =0$ as reference ground node. Then, introduce two consecutive changes of variables. First 
\begin{subequations}
\begin{equation}
\Phi_1- \Phi_2 = \tilde{\Phi}_1- \tilde{\Phi}_2+ \frac{\Phi_{\rm ext}}{4},
\end{equation}
\begin{equation}
\Phi_2 = \tilde{\Phi}_2+\frac{\Phi_{\rm ext}}{4},
\end{equation}
\begin{equation}
\Phi_4 = \tilde{\Phi}_4- \frac{\Phi_{\rm ext}}{4},
\end{equation}
\end{subequations} 
and afterwards
\begin{subequations}
\begin{equation}
\Phi_X=\tilde{\Phi}_1,
\end{equation}
\begin{equation}
\Phi_Y= \tilde{\Phi}_4-\tilde{\Phi}_2,
\end{equation}
\begin{equation}
\Phi_Z = \tilde{\Phi}_1- \tilde{\Phi}_2-\tilde{\Phi}_4.
\end{equation}
\end{subequations} 
Obtain the Hamiltonian in terms of $\Phi_X, \Phi_Y, \Phi_Z$ and their conjugate variables. By setting $\Phi_{\rm ext}= \Phi_0/2$ (half a flux quantum), show that the system approximately realizes the Hamiltonian of three harmonic oscillators coupled by the coupling term $H_c= \nu \Phi_X \Phi_Y \Phi_Z$: give the parameter $\nu$. \emph{Hint: You can expand the potential up to second order.}
Finally, introduce annihilation and creation operators and show that by a careful choice of the characteristic frequencies $\omega_X$, $\omega_Y$, $\omega_Z$ (what choice?) one obtains the three-wave mixing Hamiltonian 
\begin{equation}\label{eq::3waveEq}
H_{3 W}/\hbar \approx \omega_X \hat{a}_{X}^{\dagger}\hat{a}_{X}+  \omega_Y \hat{a}_{Y}^{\dagger}\hat{a}_{Y}+  \omega_Z \hat{a}_{Z}^{\dagger}\hat{a}_{Z}+\eta(\hat{a}_X^{\dagger}\hat{a}_{Y}^{\dagger}\hat{a}_{Z}+\mathrm{h.c.}),
\end{equation} 
with $\eta$ the three-wave mixing parameter. 
\Question Consider the Hamiltonian in Eq.~\eqref{eq::3waveEq}. In what follows we call mode $X$ the signal mode, i.e., $\hat{a}_{X}=\hat{a}_{S}$, mode $Y$ the idler mode, i.e., $\hat{a}_{Y}=\hat{a}_{I}$, and mode $Z$ the pump mode, i.e., $\hat{a}_Z= \hat{a}_P$. We drive the pump mode with a coherent drive at frequency $\omega_P$ with large amplitude so that we can substitute $\hat{a}_P$ with some time-dependent (approximate) expectation value $\langle \hat{a}_P \rangle=i \alpha_P e^{-i \omega_P t}$ with some real amplitude $\alpha_P$. Convert the Hamiltonian to the interaction picture (rotating frame) at the frequencies of the modes $X$ and $Y$ to get the two-mode squeezing Hamiltonian
\begin{equation}
\tilde{H}_{I 3W}/\hbar =  i \lambda \biggl(\hat{a}_{S}^{\dagger} \hat{a}_{I}^{\dagger} -\hat{a}_{S} \hat{a}_{I}\biggr).
\label{eq:ndpa}
\end{equation}
\Question We now assume the signal mode to be coupled to a transmission line with input field $\hat{b}_{S, \rm in}(t)$ (coupling set by decay rate $\kappa_S$), while the idler mode is coupled to another transmission line with $\hat{b}_{I, \rm in}(t)$ (coupling set by decay rate $\kappa_I$). Write down the Heisenberg-Langevin equations of motion for $\hat{a}_{S}(t)$ and $\hat{a}_{I}^{\dagger}(t)$ as in Section~\ref{sec:HL_derive}, and obtain the steady-state solution $\hat{b}_{S,\rm out}(t)$ as a function of input fields of signal and idler. Discuss the effect of the idler mode when it starts in the vacuum state (see Exercise \ref{exc:HC}).
{\em Comment}: This analysis can be improved by Fourier transforming the Langevin equations and considering $\hat{b}_{S,\rm out}[\omega]$ as a function of $\hat{b}_{S,\rm in}[\omega]$ and $\hat{b}_{I,\rm in}^{\dagger}[\omega]$, allowing one to understand the frequency dependence, see \cite{CDGM:noiseRMP}.
\end{Exercise}

\begin{Answer}[ref={exc:JRM}]
\Question 
The Lagrangian in terms of node variables reads
\begin{multline}
\mathcal{L}= \frac{C_J}{2} \dot{\Phi}_2^2 + \frac{C_J}{2} \dot{\Phi}_4^2 + \frac{C_J}{2} (\dot{\Phi}_1-\dot{\Phi}_2)^2+\frac{C_J}{2} (\dot{\Phi}_1-\dot{\Phi}_4)^2 +\frac{C_X}{2} \dot{\Phi}_1^2 +\frac{C_Y}{2}(\dot{\Phi}_4-\dot{\Phi}_2)^2 \\ + E_J \cos\biggl[\frac{2 \pi}{\Phi_0}\Phi_2 \biggr]+E_J \cos\biggl[\frac{2 \pi}{\Phi_0}\Phi_4 \biggr]+E_J \cos\biggl[\frac{2 \pi}{\Phi_0}(\Phi_1-\Phi_2) \biggr]+E_J \cos \biggl[\frac{2 \pi}{\Phi_0}(\Phi_1-\Phi_4-\Phi_{\mathrm{ext}}) \biggr].
\end{multline}
The first suggested change of variables results in
\begin{multline}
\mathcal{L}= \frac{C_J}{2} \dot{\tilde{\Phi}}_2^2 + \frac{C_J}{2} \dot{\tilde{\Phi}}_4^2 + \frac{C_J}{2} (\dot{\tilde{\Phi}}_1-\dot{\tilde{\Phi}}_2)^2+\frac{C_J}{2} (\dot{\tilde{\Phi}}_1-\dot{\tilde{\Phi}}_4)^2 +\frac{C_X}{2} \dot{\tilde{\Phi}}_1^2 +\frac{C_Y}{2}(\dot{\tilde{\Phi}}_4-\dot{\tilde{\Phi}}_2)^2 \\ + E_J \cos\biggl[\frac{2 \pi}{\Phi_0}\biggl(\tilde{\Phi}_2+\frac{\Phi_{\mathrm{ext}}}{4} \biggr) \biggr]+E_J \cos\biggl[\frac{2 \pi}{\Phi_0}\biggl(\tilde{\Phi}_4-\frac{\Phi_{\mathrm{ext}}}{4} \biggr)\biggr]+E_J \cos\biggl[\frac{2 \pi}{\Phi_0}\biggl(\tilde{\Phi}_1-\tilde{\Phi}_2+\frac{\Phi_{\mathrm{ext}}}{4} \biggr) \biggr] \\+E_J \cos \biggl[\frac{2 \pi}{\Phi_0}\biggl(\tilde{\Phi}_1-\tilde{\Phi}_4-\frac{\Phi_{\mathrm{ext}}}{4} \biggr) \biggr],
\end{multline}
while the second change of variables diagonalizes the kinetic part of the Lagrangian and brings it into the simple form
\begin{multline}
\mathcal{L} = \frac{C_J+C_X}{2} \dot{\Phi}_X^2 + \frac{C_J+C_Y}{2} \dot{\Phi}_Y^2 + \frac{C_J}{2} \dot{\Phi}_Z^2 \\ + 4  E_J\cos \biggl[ \frac{2 \pi}{\Phi_0} \frac{\Phi_{\mathrm{ext}}}{4} \biggr]\cos \biggl[ \frac{2 \pi}{\Phi_0} \frac{\Phi_X}{2} \biggr] \cos \biggl[ \frac{2 \pi}{\Phi_0} \frac{\Phi_Y}{2} \biggr] \cos \biggl[ \frac{2 \pi}{\Phi_0} \frac{\Phi_Z}{2} \biggr] \\ + 4 E_J \sin \biggl[ \frac{2 \pi}{\Phi_0} \frac{\Phi_{\mathrm{ext}}}{4} \biggr]\sin \biggl[ \frac{2 \pi}{\Phi_0} \frac{\Phi_X}{2} \biggr] \sin \biggl[ \frac{2 \pi}{\Phi_0} \frac{\Phi_Y}{2} \biggr] \sin \biggl[ \frac{2 \pi}{\Phi_0} \frac{\Phi_Z}{2} \biggr].
\end{multline}
From this Lagrangian it is easy to obtain the Hamiltonian
\begin{multline}
H= \frac{Q_X^2}{2 C_{\Sigma X}}+ \frac{Q_Y^2}{2 C_{\Sigma Y}} + \frac{Q_Z^2}{2 C_J} -4 E_J \cos \biggl[ \frac{2 \pi}{\Phi_0} \frac{\Phi_{\rm ext}}{4} \biggr]\cos \biggl[ \frac{2 \pi}{\Phi_0} \frac{\Phi_X}{2} \biggr] \cos \biggl[ \frac{2 \pi}{\Phi_0} \frac{\Phi_Y}{2} \biggr] \cos \biggl[ \frac{2 \pi}{\Phi_0} \frac{\Phi_Z}{2} \biggr] \\ - 4 E_J \sin \biggl[ \frac{2 \pi}{\Phi_0} \frac{\Phi_{\rm ext}}{4} \biggr]\sin \biggl[ \frac{2 \pi}{\Phi_0} \frac{\Phi_X}{2} \biggr] \sin \biggl[ \frac{2 \pi}{\Phi_0} \frac{\Phi_Y}{2} \biggr] \sin \biggl[ \frac{2 \pi}{\Phi_0} \frac{\Phi_Z}{2} \biggr],
\end{multline}
with the conjugate variables $Q_X= C_{\Sigma X} \dot{\Phi}_X$, $Q_Y= C_{\Sigma Y} \dot{\Phi}_Y$, $Q_Z= C_{J} \dot{\Phi}_Z$ and the abbreviations $C_{\Sigma X}= C_X+ C_J$, $C_{\Sigma Y}= C_Y+ C_J$. Setting $\Phi_{\mathrm{ext}}=\Phi_0/2$ and expanding the potential up to second order, as suggested, one obtains
\begin{equation}
H= \frac{Q_X^2}{2 C_{\Sigma X}}+ \frac{Q_Y^2}{2 C_{\Sigma Y}} + \frac{Q_Z^2}{2 C_J}+\frac{1}{2 L_{JRM}} (\Phi_X^2+\Phi_Y^2+\Phi_Z^2)+\nu \Phi_X \Phi_Y \Phi_Z,
\end{equation}
where we define a characteristic inductance of the JRM, $L_{JRM}= \Phi_0^2/(4 \sqrt{2} E_J \pi^2)$, and the coupling parameter $\nu=-2 \sqrt{2} E_J \pi^3/\Phi_0^3$. \\
In order to pass to a quantum description, we introduce annihilation and creation operators for the modes $X, Y$ and $Z$, as
\begin{subequations}
\begin{equation}
\Phi_K = \sqrt{\frac{\hbar Z_K}{2}} (\hat{a}_K+\hat{a}_K^{\dagger}),
\end{equation}
\begin{equation}
Q_K = i \sqrt{\frac{\hbar }{2 Z_K}} (\hat{a}_K^{\dagger}-\hat{a}_K),
\end{equation}
\end{subequations}
with $K=\{X, Y, Z\}$ and $Z_{K}$ the characteristic impedance of mode $K$, and we rewrite the Hamiltonian as
\begin{equation}
\label{H3c}
H= \hbar \omega_X \hat{a}_X^{\dagger}\hat{a}_X+\hbar \omega_Y \hat{a}_Y^{\dagger}\hat{a}_Y +\hbar \omega_Z \hat{a}_Z^{\dagger}\hat{a}_Z + \hbar \eta (\hat{a}_{X}+\hat{a}_{X}^{\dagger})(\hat{a}_{Y}+\hat{a}_{Y}^{\dagger}) (\hat{a}_{Z}+\hat{a}_{Z}^{\dagger}),
\end{equation}
where we introduce the characteristic frequencies of the resonators and the parameter
\begin{equation}
\hbar \eta = \nu \biggl(\frac{\hbar}{2} \biggr)^{3/2} \sqrt{Z_X Z_Y Z_Z}. 
\end{equation}
Let us take a closer look at Eq.~\eqref{H3c}. We would like to have an argument for keeping only the term $a_{X}^{\dagger} a_{Y}^{\dagger} a_{Z}$ and its Hermitian conjugate. Of course, what we need to impose is that this term is the only energy-preserving (resonant) term among all the terms that appear in the expansion of the coupling Hamiltonian. This is the case if the frequency of the mode $Z$ (the pump) is equal to the sum of the other two. So we set the condition $\omega_Z= \omega_X+\omega_Y$.
\Question 
Let us rewrite the Hamiltonian in the new notation, keeping only the three-wave-mixing terms
\begin{equation}
H_{3 W}/\hbar\approx \omega_S \hat{a}_S^{\dagger}\hat{a}_S+\omega_I \hat{a}_I^{\dagger}\hat{a}_I + \omega_P \hat{a}_P^{\dagger}\hat{a}_P + \eta (\hat{a}_{S}^{\dagger} \hat{a}_{I}^{\dagger} \hat{a}_{P}+\mathrm{h.c.}),
\end{equation} 
with the three-wave mixing condition $\omega_P= \omega_I+\omega_S$. By assuming that we strongly drive the pump, we replace the operator $\hat{a}_P$ with its average $i \alpha_P e^{-i \omega_P t}$ to get
\begin{equation}
H_{3 W}/\hbar \approx  \omega_S \hat{a}_S^{\dagger}\hat{a}_S+ \omega_I \hat{a}_I^{\dagger}\hat{a}_I + i  \lambda \biggl(\hat{a}_{S}^{\dagger} \hat{a}_{I}^{\dagger} e^{-i \omega_P t}-\hat{a}_{S} \hat{a}_{I} e^{+i \omega_P t} \biggr),
\end{equation}
where we called $\lambda=\eta \alpha_P$. Going to an interaction picture at the resonators' frequencies, i.e., with $H_0/hbar=  \omega_S \hat{a}_S^{\dagger}\hat{a}_S+\omega_I \hat{a}_I^{\dagger}\hat{a}_I$, 
we have $\tilde{H}_{3W}=e^{i H_0 t} H_{3 W}e^{-i H_0 t} -H_0$, obtaining Eq.~\eqref{eq:ndpa}.
\Question
We have the following Heisenberg-Langevin equations for $\hat{a}_{I}^{\dagger}$ and $\hat{a}_S$:
\begin{subequations}
\begin{equation}
\frac{d \hat{a}_I^{\dagger}}{dt} = \frac{1}{i \hbar}[\hat{a}_I^{\dagger}, \tilde{H}_{I 3W}] -\frac{\kappa_I}{2} \hat{a}_I^{\dagger} +\sqrt{\kappa_I} \hat{b}_{I, \rm in}^{\dagger},
\end{equation}
\begin{equation}
\frac{d \hat{a}_S}{dt} = \frac{1}{i \hbar}[\hat{a}_S, \tilde{H}_{I 3W}] -\frac{\kappa_S}{2} \hat{a}_S +\sqrt{\kappa_S} \hat{b}_{S, \rm in},
\end{equation}
\end{subequations} 
with input-output relations $\hat{b}_{I, \rm out}= \sqrt{\kappa_I} \hat{a}_I -\hat{b}_{I, \rm in}$ and $
\hat{b}_{S, \rm out}= \sqrt{\kappa_S} \hat{a}_S -\hat{b}_{S, \rm in}$. Computing the commutators we get
\begin{subequations}
\begin{equation}
\frac{d \hat{a}_I^{\dagger} }{dt} = \lambda 
\hat{a}_{S}-\frac{\kappa_I}{2} \hat{a}_I^{\dagger} +\sqrt{\kappa_I} \hat{b}_{I, \rm in}^{\dagger},
\end{equation}
\begin{equation}
\frac{d \hat{a}_S}{dt} = \lambda \hat{a}_{I}^{\dagger} -\frac{\kappa_S}{2} \hat{a}_S +\sqrt{\kappa_S} \hat{b}_{S,\rm in}.
\end{equation}
\end{subequations} 
The steady-state solutions are readily obtained as
\begin{subequations}
\begin{equation}
\hat{a}_S(t)= \frac{2 \lambda}{\kappa_S} \hat{a}_{I}^{\dagger}+\frac{2}{\sqrt{\kappa_S}} \hat{b}_{S, \rm in},
\end{equation}
\begin{equation}
\hat{a}_I^{\dagger}(t)= \frac{2 \lambda}{\kappa_I} \hat{a}_{S}+\frac{2}{\sqrt{\kappa_I}} \hat{b}_{I, \rm in}^{\dagger},
\end{equation}
\end{subequations}
which imply 
\begin{equation}
\hat{a}_S(t)= \frac{1}{1-Q^2} \frac{4 \lambda}{\kappa_S \sqrt{\kappa_I}}\hat{b}_{I, \rm in}^{\dagger}(t) + \frac{1}{1-Q^2}  \frac{2}{\sqrt{\kappa}_S} \hat{b}_{S, \rm in}(t),
\end{equation}
with $Q= 2 \lambda/(\sqrt{\kappa_1 \kappa_2})$. Using $\hat{b}_{S, \rm out}= \sqrt{\kappa_S} \hat{a}_S -\hat{b}_{S, \rm in}$ we can write the output field of the signal as
\begin{equation}
\hat{b}_{S, \rm out}(t)=  \frac{1+Q^2}{1-Q^2}\hat{b}_{S, \rm in}(t)+ \frac{2 Q}{1-Q^2} \hat{b}_{I, \rm in}^{\dagger}(t).
\end{equation}
We see that the idler mode will generate extra noise even when it is in the vacuum state, as discussed in the analysis of Eq.~\eqref{eq:phase-insens} in Exercise \ref{exc:HC}.
\end{Answer}

\appendix

\chapter{A review of canonical quantization}
\label{app:cc}

In this appendix, we review the construction of a Hamiltonian from a Lagrangian and the formalism of canonical quantization of classical conservative (energy-conserving) dynamics. We refer the reader to specialized texts on classical mechanics such as Ref.~\cite{goldstein} for more details.

\section{Principle of minimal action and gauge invariance}

We start with a phase space of $n$ independent variables $x_1, \ldots, x_n$ with $\dot{x_i}=d x_i /dt$ and each $x_i \in \mathbb{R}$. In a mechanical scenario the variables $x_1,\ldots x_n$ could be the positions of $n$ particles. In the electrical circuit case these could be a set of independent node fluxes, or independent node charges.
For an electrical circuit, we have freedom in this choice of variables, as long as we have verified that they form an independent set of variables.

Given is the Lagrangian $\lagrangian (x_1, \ldots, x_n, \dot{x}_1,\ldots, \dot{x}_n)$.  For simplicity, we assume that $\lagrangian$ has no explicit time dependence. For a general mechanical system, in the absence of a magnetic field, the Lagrangian will be a sum of the kinetic energy $T$ minus the potential energy $U$, i.e., 
\begin{equation}\label{eq:lagr_sc}
\lagrangian (x_1, \ldots, x_n, \dot{x}_1,\ldots, \dot{x}_n) = T(\dot{x}_1,\ldots,\dot{x}_n) - U(x_1, \ldots, x_n).
\end{equation} 
With the Lagrangian we can define the classical action as 
\begin{equation}
\mathcal{S} =\int_{t_1}^{t_2} dt\, \lagrangian (x_1, \ldots, x_n, \dot{x}_1, \ldots, \dot{x}_n).
\end{equation}
The principle of minimal action, i.e., $\delta S=0$, implies the classical equations of motion, called the {\em Euler-Lagrange equations}:
\begin{equation}\label{eq:EL}
\frac{d}{d t} \left(\frac{\partial \lagrangian }{\partial \dot{x}_i}\right)-\frac{\partial \lagrangian}{\partial x_i}=0, \quad \forall i = 1, \dots, n.
\end{equation}
Thus, the construction of the Lagrangian is determined by the classical dynamical equations and vice versa. However, the Lagrangian is not unique and it can be changed by a gauge transformation, namely 
\begin{equation}\label{eq:gauge}
   \lagrangian \rightarrow \lagrangian+\frac{df(x_1, \ldots, x_n)}{dt}, 
\end{equation} with $f(x_1, \ldots, x_n)$ an arbitrary function, without this affecting the equations of motion. Clearly, such change maps the action $\mathcal{S} \rightarrow \mathcal{S}+f(t_2)-f(t_1)$, which indeed does not change trajectory minimizes the action.

\begin{Exercise}[label=exc:gauge]
Verify that Eq.~\eqref{eq:EL} follows from 
$\frac{d}{d t} \left(\frac{\partial \lagrangian' }{\partial \dot{x}_i}\right)-\frac{\partial \lagrangian'}{\partial x_i}=0$ with $\lagrangian'=\lagrangian+\frac{df(x_1, \ldots, x_n)}{dt}$, using $\frac{df}{dt}=\sum_i \frac{\partial f}{ \partial x_i} \dot{x}_i$ as $f(.)$ only depends on $x_1, \ldots, x_n$.
\end{Exercise} 

\begin{Answer}[ref={exc:gauge}]
Consider 
$\frac{d}{dt} \frac{\partial}{\partial \dot{x}_i}\frac{df}{dt}=\frac{d}{dt}\frac{\partial f}{\partial x_i}=\sum_j \frac{\partial^2 f}{\partial x_j \partial x_i}\dot{x}_j=\frac{\partial }{\partial x_i}\frac{df}{dt}$. 
\end{Answer}

\section{Legendre transformation: the Hamiltonian}

To define a Hamiltonian, one introduces, for each independent variable $x_i$, a {\em conjugate} variable $p_i$ defined as 
\begin{equation}
p_i=\frac{\partial \lagrangian }{\partial \dot{x}_i}.
\end{equation}
For a simple mechanical system with kinetic energy given by $T =  \sum_i m_i \dot{x}_i^2/2$, the conjugate variables are momenta given by $p_i=m_i \dot{x}_i$. 

One can then define a new function, namely the Hamiltonian, which depends on the variables $\{x_i\}$ and their conjugates $\{p_i\}$, but no longer on $\{\dot{x}_i\}$. More precisely, the Hamiltonian $\hamiltonian (x_1, \dots, x_n, p_1, \dots, p_n)$ is defined as the Legendre transformation of the Lagrangian, i.e.,
\begin{equation}
\hamiltonian(x_1, \ldots, x_n,p_1, \ldots, p_n)=\sup_{\dot{x}_1, \ldots, \dot{x}_n} \left[\sum_{i=1}^n p_i \dot{x_i}-\lagrangian (x_1,\ldots,x_n, \dot{x}_1,\ldots, \dot{x}_n)\right].
\label{eq:defH}
\end{equation}

When $\lagrangian(x_1, \ldots, x_n, \dot{x}_1, \ldots, \dot{x}_n)$ is a {\em convex} function of the variables $\dot{x}_1, \ldots, \dot{x}_n$, the supremum in Eq.~\eqref{eq:defH} occurs at a unique point $\dot{x}_1^{\rm sup}, \ldots, \dot{x}_n^{\rm sup}$ since $\sum_i p_i \dot{x}_i-\lagrangian$ is concave in $\dot{x}_i$. At this supremum
\begin{equation}
\frac{\partial }{\partial \dot{x}_i} \left[ \sum_{i=1}^n p_i \dot{x}_i-\lagrangian(x_1,\ldots,x_n, \dot{x}_1,\ldots, \dot{x}_n)\right]=0 \implies p_i=\frac{\partial \lagrangian}{\partial \dot{x}_i}, \quad \forall i.
\end{equation}
Then we can derive the Hamiltonian in Eq.~\eqref{eq:defH} by determining $p_i$ and solving for $\dot{x}_i$ to eliminate the $\dot{x}_i$ variables in $\hamiltonian$.

As an example, we consider the general case in which the initial Lagrangian has a quadratic, but potentially `off-diagonal' kinetic energy. Let $\bm{x} = \left(
x_1, \dots, x_n\right)^T$. The Lagrangian reads
\begin{equation}
\lagrangian (\bm{x},\dot{\bm{x}})= \frac{1}{2} \dot{\bm{x}}^T \mat{C} \dot{\bm{x}} - U(\bm{x}) =\frac{1}{2}\sum_{i,j=1}^n \mat{C}_{ij} \dot{x}_i  \dot{x}_j-U(x_1, \ldots, x_n),
\label{eq:standard-form}
\end{equation}
where the matrix $\mat{C}$ is symmetric and we assume $\mat{C} > 0$, i.e., $\mat{C}$ has eigenvalues $d_i > 0$, $i=1, \ldots,n$. In this case, we can determine that $\bm{p} = \mat{C} \dot{\bm{x}}$ so that $\dot{\bm{x}}= \mat{C}^{-1}  \bm{p}$ at the supremum. Inserting this equality in Eq.~\eqref{eq:defH} gives compactly
\begin{equation}
\hamiltonian (\bm{x}, \bm{p} )=\frac{1}{2} \bm{p}^T \mat{C}^{-1} \bm{p} + U(\bm{x}).
\label{eq:Hnondiag}
\end{equation}

 In a different analysis of the same problem we could have first switched to new independent variables $\bm{y}=\mat{S} \bm{x}$, with $\mat{S}$ the matrix that diagonalizes $\mat{C}$, in order to obtain
\begin{equation}
\lagrangian (\bm{y}, \dot{\bm{y}})=\frac{1}{2} \dot{\bm{y}}^T \bm{D} \dot{\bm{y}} - \underbrace{U(\mat{S}^{-1} \bm{y})}_{\tilde{U}(\bm{y})}  = \frac{1}{2}\sum_{i=1}^n d_i \dot{y}_i ^2-\tilde{U}(y_1, \dots, y_n) ,
\end{equation}
with the diagonal matrix $\bm{D} = 
\mat{S}^T \mat{C} \mat{S}$ with diagonal entries $d_i$. Only then we could have introduced the conjugate variable as $\bm{p}=\partial \lagrangian/\partial \dot{\bm{y}} = \mat{D} \dot{\bm{y}}$ so that 
\begin{equation}
\hamiltonian(\bm{y}, \bm{p})= \frac{1}{2}\bm{p}^T \mat{D}^{-1} \bm{p} + \tilde{U}(\bm{y}) = \sum_{i=1}^n \frac{p_i^2}{2d_i} + \tilde{U}(y_1, \ldots, y_n).
\label{eq:Hdiag}
\end{equation}

It is instructive to discuss the case when $\mat{C} \ge 0$, meaning that some of the eigenvalues $d_i$ of $\mat{C}$ are zero, thus preventing $\mat{C}$ from being invertible. In this case, the associated variable $y_i$ (with $d_i=0$) only occurs in the potential $\tilde{U}$. Such cases can classically be solved by realizing that all coordinates $y_i$ without kinetic energy contribution are simply to be put at a value for which $\partial \tilde{U}(y_1,\ldots,y_n)/\partial y_i=0$ as this is the Euler-Lagrange equation for those coordinates. When the potential is convex, this configuration is one which minimizes the potential $\tilde{U}(y_1, \ldots,y_n)$ energy; more generally, one sets these values to a local minimum \footnote{Local maxima are unstable when $d_i$ is only slightly non-zero.}.

\begin{Exercise}[label=exc:relat]The Lagrangian of a relativistic particle with mass $m$ and spatial coordinate ${\bf x}=(x_1, x_2, x_3)$, is \[\lagrangian=-m\mathfrak{c}^2 \sqrt{1-\frac{\sum_{i=1}^3\dot{x}_i^2}{\mathfrak{c}^2}}-U(x_1, x_2, x_3),
\]
where $\mathfrak{c}$ is the speed of light and $U(x_1, x_2, x_3)$ is some potential. Verify that $\lagrangian$ is convex in $\dot{x_i}$ and derive the Hamiltonian. 
\end{Exercise}

\begin{Answer}[ref={exc:relat}]
By plotting $f(x)=-\sqrt{1-x^2}$ for $-1 \leq x \leq 1$ one can verify $f(x)$ is convex and so is $\lagrangian$. We have relativistic momenta $p_i=\frac{\partial \lagrangian}{\partial \dot{x}_i}=\frac{m \dot{x}_i}{\sqrt{1- v^2/\mathfrak{c}^2}}$ or $\sum_{i=1}^3\dot{x}_i^2=v^2=\frac{p^2}{m^2+p^2/\mathfrak{c}^2}$, so that $\hamiltonian=\frac{m v^2}{\sqrt{1-v^2/\mathfrak{c}^2}}+m\mathfrak{c}^2 \sqrt{1-v^2/\mathfrak{c}^2}+U(x_1, x_2, x_3)=\sqrt{p^2 \mathfrak{c}^2+m^2 \mathfrak{c}^4}+U(x_1, x_2, x_3)$.
\end{Answer}

In various physical situations, the Lagrangian is explicitly time-dependent, i.e., $\lagrangian({\bf x}, {\bf \dot{x}}, t)$. For example, the potential is one that is induced by adding a time-dependent driving force $F(t)=A\cos(\omega t)$ to a mass-spring system with position $x$. The potential will be $U(x)=\frac{1}{2} K x^2 - F(t) x$ corresponding to the total force $F_{\rm tot}(t)=-\partial U/\partial x=-K x+F(t)$. In such cases, when the explicit time dependence occurs in the potential energy $U$, it is simply carried over in the Hamiltonian $\hamiltonian(t)$, i.e.~for the driven mass-spring system we have $\hamiltonian=\frac{p^2}{2m}+\frac{1}{2} K x^2 - F(t) x$.

On the other hand, when the explicit time dependence affects the kinetic energy, it enters the Euler-Lagrange equation, Eq.~\eqref{eq:EL}, in its total time-derivative $d/dt$. For example, a mechanical system with a time-dependent mass $m(t)$ and a potential $U(x)$ obeys the Euler-Lagrange equation $m(t)\ddot{x}+\dot{x} \dot{m}= d U/dx$. In addition, when we switch from Lagrangian to Hamiltonian, such time dependence can enter the definition of the conjugate variables and through this makes the definition of the Hamiltonian time-dependent. For example, when treating time-dependent fluxes, the conjugate charge in Eq.~\eqref{eq:cc-timedep} in Section~\ref{sec:ex-flux} is explicitly time-dependent through $\Phi_{\rm ext}(t)$. For conjugate variables which are time-dependent (time is simply a classical parameter), one can still write down the Poisson bracket as in Eq.~\eqref{eq:poiss} and apply quantization.

\section{Poisson bracket and quantization}

Using conjugate variables one can define a Poisson bracket for functions $f(\bm{x}, \bm{p})$ and 
$g(\bm{x}, \bm{p})$ as follows
\begin{equation}
\{f, g \}=\sum_{i=1}^n \left[\frac{\partial f}{\partial x_i} \frac{\partial g}{\partial p_i}-\frac{\partial f}{\partial p_i} \frac{\partial g}{\partial x_i}\right].
\label{eq:poiss}
\end{equation}
It follows from this definition that
\begin{equation}
\{x_i, p_j\}=\delta_{ij}, \{x_i, x_j\} =\{p_i, p_j\} =0.
\label{eq:cc}
\end{equation}
One can also apply the Poisson bracket to the Hamiltonian function defined in Eq.~\eqref{eq:defH} and obtain  
\begin{equation}\label{eq:hameq1}
\{x_i, \hamiltonian\}=\frac{\partial \hamiltonian}{\partial p_i}=\dot{x}_i,
\end{equation}
i.e., the Poisson bracket with the Hamiltonian determines the time dynamics. 
The last equality follows from how $\hamiltonian$ is defined as a function of $p_i$. Similarly, we have 
\begin{equation}\label{eq:hameq2}
\{p_i, \hamiltonian\}=-\frac{\partial \hamiltonian}{\partial x_i}=\dot{p}_i,
\end{equation}
where we have used that $-\partial \hamiltonian/\partial x_i=\partial \lagrangian/\partial x_i$. In the last equality we used the Euler-Lagrange equation and the definition of $p_i$. Eqs.~\eqref{eq:hameq1} and~\eqref{eq:hameq2} are known as Hamilton's equations.

Quantization means that we promote the conjugate variables $x_i \in \mathbb{R}$ and $p_j \in \mathbb{R}$ to operators $x_i \rightarrow \hat{x}_i$, $p_j \rightarrow \hat{p}_j$, $\hamiltonian \rightarrow H$. Paul Dirac came up with the idea that the Poisson bracket should be replaced by the commutator due to `their great similarity', see minutes 15.30-23.00 in \href{https://www.youtube.com/watch?v=2GwctBldBvU}{Dirac Lecture 1 of 4 on Quantum Mechanics}. The mapping is
\begin{equation}
\{f,g\} \rightarrow \frac{1}{i \hbar} [\hat{f}, \hat{g}].
\end{equation}
This directly gives the canonical commutation relation 
\begin{equation}
[\hat{x}_i,\hat{p}_j]=i\hbar \delta_{ij}\mathds{1}.
\end{equation}
through Eq.~\eqref{eq:cc}. In addition, the quantization of Eq.~\eqref{eq:hameq1} gives $[\hat{x}_i(t), H]=i \hbar \frac{d \hat{x}_i(t)}{dt}$ where $\hat{x}_i(t)$ is the time-evolved Heisenberg operator $\hat{x}_i(t) \equiv e^{i t H/\hbar} \hat{x}_i e^{-i t H/\hbar}$.

\begin{Exercise}[label=exc:field]For a scalar (continuum) field of variables $\xi(x_{\mu})$ depending on space-time coordinates $x_{\mu}=(t,x,y,z)$, the (non-relativistic) action equals $\mathcal{S}=\int d^3x \int dt\, \lagrangian(\xi(x_{\mu}),\frac{\partial \xi}{\partial x_{\mu}})$.
Show that the principle of minimal action, $\frac{\delta \mathcal{S}}{\delta \xi(x_{\mu})}=0$ for the field configuration $\xi(x_{\mu})$, implies that $\xi(x_{\mu})$ obeys the equation of motion
\begin{equation}
\frac{\partial \lagrangian}{\partial \xi}-\frac{\partial}{\partial x_{\mu}}\frac{\partial \lagrangian}{\partial (\frac{\partial \xi}{\partial x_{\mu}})}=0.
\label{eq:ELfield}
\end{equation}
(Comment: non-relativistic means that we make no distinction between $x_{\mu}$ and $x^{\mu}$ while these are usually related as $x_{\mu}=g_{\mu \nu} x^{\nu}$ with summation implied and $g_{\mu \nu}$ is the metric, nor is the action here manifestly Lorentz-invariant.)
\end{Exercise}

\begin{Answer}[ref={exc:field}]
If we vary the field locally, i.e.,~$\xi'(x_{\mu})=\xi(x_{\mu})+\delta \xi(x_{\mu})$, it leads to \[
\mathcal{S}'=\int d^3x \int dt\,\lagrangian\left(\xi(x_{\mu})+\delta \xi,\frac{\partial (\xi+\delta \xi)}{\partial x_{\mu}}\right)=\mathcal{S}+\int d^3x \int dt\,\left[\frac{\partial \lagrangian}{\partial \xi}\delta \xi +\sum_{\mu} \frac{\partial \lagrangian}{\partial(\frac{\partial \xi}{\partial x_{\mu}})}\frac{\partial (\delta \xi)}{\partial x_{\mu}}\right],
\]
and partial integration of the last term,  assuming that $\delta \xi$ (and $\xi(x_{\mu})$) vanish at the boundaries of the space-time coordinates, we get
\[
\delta \mathcal{S}=\int d^3 x\int dt \,\delta \xi(x_{\mu}) \left[\frac{\partial \lagrangian}{\partial \xi}-\frac{\partial}{\partial x_{\mu}}\frac{\partial \lagrangian}{\partial (\frac{\partial \xi}{\partial x_{\mu}})}\right], 
\]
so that $\delta \mathcal{S}=0$ for any variation $\delta \xi(x_{\mu})$ leads to Eq.~\eqref{eq:ELfield}.
\end{Answer}

\chapter[Harmonic systems and beyond]{Harmonic systems and beyond: elimination of high-energy variables}
\label{app:anharmonic-born-opp}

In this appendix, we discuss how to extract the uncoupled normal modes from a classical Lagrangian of coupled modes in a quadratic potential. Such Lagrangians corresponding to `linear' systems ---the equations of motion are linear--- are ubiquitous in circuit QED. 
If the potential energy admits an expansion as a polynomial in the degrees of freedom and the terms beyond the quadratic potential are weak, one can first
 first solve the harmonic system and express the remaining nonlinear terms of the potential in terms of the normal modes of the harmonic system.
 
  In Section~\ref{sec:elim} we go beyond this and discuss the Born-Oppenheimer approximation, which allows one to eliminate `high-frequency' or `fast' variables to focus on obtaining low-energy effective dynamics with fewer degrees of freedom. This method can also be used to eliminate degrees of freedom in the Lagrangian which have very large kinetic energy (capacitance or `mass' going to zero).
  
\section{Normal modes of a harmonic system}

We first argue, through one simple Exercise, that any linear dependence in the potential energy on the variables $x_i$ can be absorbed in a quadratic dependence by a change of variables.

\begin{Exercise}
Given a time-independent Lagrangian $\lagrangian(\vect{x},\vect{\dot{x}})$ with potential
\begin{equation}
    U(\vect{x})= \vect{a}^T \vect{x}+\frac{1}{2} \vect{x}^T \mat{K} \vect{x},
\end{equation}
with symmetric matrix $\mat{K} > 0$ and any real vector $\vect{a}$, show how to write this as
\begin{equation}
    U'(\vect{x}')=\frac{1}{2} \vect{x}'^T \mat{K} \vect{x}',
\end{equation}
up to a constant, using $\vect{x}'=\vect{x}+\vect{b}$ for some $\vect{b}$.
\label{ex:lin}
\end{Exercise}

\begin{Answer}[ref={ex:lin}]
Transform $\vect{x}'=\vect{x}+\vect{b}$ so that $\vect{\dot{x}}'=\vect{\dot{x}}$ with $\vect{b}=\mat{K^{-1}}\vect{a}$, using the fact that $\mat{K}$ is invertible and symmetric, and drop the constant term $\frac{1}{2}\vect{b}^T \vect{b}$.
\end{Answer}

Now consider some linear dependence on $\vect{\dot{x}}$ in the Lagrangian $\lagrangian$, i.e.,~
\begin{equation}
    \lagrangian= \frac{1}{2} \vect{\dot{x}'}^T \mat{C} \vect{\dot{x}'}+\vect{a}^T \vect{\dot{x}}-U(\vect{x}),
    \label{eq:a}
\end{equation}
where $U(\vect{x})$ is arbitrary. By choosing $\lagrangian'=\lagrangian-\vect{a}^T \vect{\dot{x}}$, or $f(x_1, \ldots, x_n)=-\sum_i a_i x_i$ in Eq.~\eqref{eq:gauge}, such dependence can be gauged away entirely.

Hence besides special cases, e.g. when $\vect{a}$ in Eq.~\eqref{eq:a} is explicitly time-dependent or there are magnetic fields or gyrators at play, we can assume that the Lagrangian is of the following form
\begin{equation}
\label{eq::lagr_in}
\lagrangian (\vect{x}, \vect{\dot{x}}) = \frac{1}{2} \vect{\dot{x}}^T \mat{C} \vect{\dot{x}}-U(\vect{x}).
\end{equation}

The potential is assumed to be of the form
\begin{equation}
 U(\vect{x})=\frac{1}{2}\sum_{i,j} K_{ij} x_i x_j+U_{\rm nl}(x_1, \ldots, x_n),   
 \label{eq:upot}
\end{equation} where $\mat{K}\geq 0$ and $U_{\rm nl}$ is a higher-order (cubic or more) polynomial in the $x_i$. 
We thus assume that the potential $U(\vect{x})$ can be expanded as a polynomial in the variables $x_i$ and that we have transformed away any linear part.

\subsection{Diagonalization}
\label{sec:diag}

If $U_{\rm nl}$ in Eq.~\eqref{eq:upot} is zero, the system is harmonic. We then show how to transform the variables $\vect{\dot{x}}$ and $\vect{x}$ such that they represent a set of uncoupled harmonic oscillators, each with resonant frequency $\omega_i$, see Eq.~\eqref{eq:lagr_lin}.
We will perform the appropriate change of variables at the Lagrangian level so we don't have to worry about the preservation of canonical commutation relations, which will be automatically imposed when we define the conjugate variables.

We can write the quadratic part of the Lagrangian compactly as
\begin{equation}\label{eq::lagrl}
\lagrangian_{\rm lin}(\vect{x}, \vect{\dot{x}}) = \frac{1}{2} \vect{\dot{x}}^T \mat{C} \vect{\dot{x}} - \frac{1}{2} \vect{x}^T \mat{K} \vect{x}.
\end{equation}
The Euler-Lagrange equation, Eq.~\eqref{eq:EL}, associated with the Lagrangian $\lagrangian_{\rm lin}$ in Eq.~\eqref{eq::lagrl}, can be written as
\begin{equation}\label{eq::lagrl_el}
\mat{C} \frac{d^2 \vect{x}}{dt^2} = - \mat{K} \vect{x} \implies  \frac{d^2 \vect{x}}{dt^2} = -\mat{C}^{-1} \mat{K} \vect{x},
\end{equation}
and is manifestly linear in the variables (hence the nomenclature linear or harmonic systems).

Solving Eq.~\eqref{eq::lagrl_el} is equivalent to diagonalizing the matrix $\mat{C}^{-1} \mat{K}$. We remark that $\mat{C}^{-1} \mat{K}$ is not necessarily symmetric. We make use of the following theorem (adapted from Ref.~\cite{horn-johnson}, page 485) to diagonalize $\mat{C}^{-1} \mat{K}$:
\begin{thm}\label{th::diag}
Let $\mat{A}, \mat{B}$ be symmetric, real $n \times n$ matrices. 
\begin{enumerate}
\item If $\mat{A} > 0$, then there is a non-singular, real $n \times n$ matrix $\mat{S}$ such that $\mat{A} = \mat{S} \mat{S}^T$ and $\mat{B} = (\mat{S}^{-1})^{T} \mat{\Lambda} \mat{S}^{-1}$, where $\Lambda$ is a real, diagonal matrix.
\item If $\mat{A} > 0$, then $\mat{AB}$ is diagonalizable and its diagonal matrix is $\mat{\Lambda}$.
\end{enumerate} 
\end{thm}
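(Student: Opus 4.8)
The plan is to reduce both claims to the spectral theorem for real symmetric matrices, applied not to $\mat{B}$ directly but to a congruence-conjugated copy of it. First I would fix a factorization of the positive-definite matrix: since $\mat{A} > 0$ is symmetric, it has a nonsingular real factor $\mat{S}_0$ with $\mat{A} = \mat{S}_0 \mat{S}_0^T$ --- one may take $\mat{S}_0 = \mat{A}^{1/2}$, the (symmetric, positive-definite) square root obtained from the spectral decomposition of $\mat{A}$, or a Cholesky factor. The key observation is then that $\mat{S}_0^T \mat{B} \mat{S}_0$ is again a real symmetric matrix, so by the spectral theorem there exists an orthogonal $\mat{O}$ with $\mat{O}^T (\mat{S}_0^T \mat{B} \mat{S}_0) \mat{O} = \mat{\Lambda}$ a real diagonal matrix. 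Note in passing that, being congruent to $\mat{B}$, the matrix $\mat{\Lambda}$ inherits its signature; in the circuit application where $\mat{B} = \mat{K} \ge 0$ this gives $\mat{\Lambda} \ge 0$, as needed for the eigenvalues to be $\omega_i^2 \ge 0$.

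For Part 1 I would set $\mat{S} = \mat{S}_0 \mat{O}$ and verify the two identities by direct computation: $\mat{S}\mat{S}^T = \mat{S}_0 \mat{O}\mat{O}^T \mat{S}_0^T = \mat{S}_0 \mat{S}_0^T = \mat{A}$ using orthogonality of $\mat{O}$, and $\mat{S}^T \mat{B} \mat{S} = \mat{O}^T \mat{S}_0^T \mat{B} \mat{S}_0 \mat{O} = \mat{\Lambda}$ by construction; the latter rearranges, upon left-multiplying by $(\mat{S}^T)^{-1}$ and right-multiplying by $\mat{S}^{-1}$, to $\mat{B} = (\mat{S}^{-1})^{T} \mat{\Lambda} \mat{S}^{-1}$. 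This uses only that $\mat{S}$ is nonsingular, which holds since $\mat{S}_0$ and $\mat{O}$ are.

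For Part 2 I would feed in the factorizations just obtained: $\mat{A}\mat{B} = (\mat{S}\mat{S}^T)\bigl((\mat{S}^{-1})^{T}\mat{\Lambda}\mat{S}^{-1}\bigr) = \mat{S}\mat{\Lambda}\mat{S}^{-1}$, where the middle cancellation uses $\mat{S}^T(\mat{S}^{-1})^{T} = (\mat{S}^{-1}\mat{S})^{T} = \mathds{1}$. Hence $\mat{A}\mat{B}$ is similar to $\mat{\Lambda}$, so it is diagonalizable and its eigenvalues (its "diagonal matrix") are precisely those of $\mat{\Lambda}$, the same diagonal matrix produced in Part 1. Applying this with $\mat{A} = \mat{C}^{-1}$ (positive-definite because $\mat{C} > 0$) and $\mat{B} = \mat{K}$ then shows that $\mat{C}^{-1}\mat{K}$ appearing in Eq.~\eqref{eq::lagrl_el} is diagonalizable, which is what the normal-mode analysis requires.

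There is no serious obstacle in this argument; it is essentially bookkeeping around the spectral theorem. The only points that genuinely need care are (i) conjugating $\mat{B}$ by $\mat{S}_0$ rather than working with $\mat{A}^{-1}\mat{B}$, which need not be symmetric, so that an \emph{orthogonal} diagonalization is available and $\mat{S}\mat{S}^T = \mat{A}$ is preserved; and (ii) re-using the \emph{same} $\mat{S}$ in Part 2 that was constructed in Part 1, since this is exactly what links the simultaneous congruence-diagonalization of the pair $(\mat{A},\mat{B})$ to the similarity-diagonalization of the product $\mat{A}\mat{B}$.
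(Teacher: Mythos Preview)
Your proof is correct and follows essentially the same approach as the paper's: the paper constructs your $\mat{S}_0$ explicitly as $\mat{T} = \mat{O}_A \mat{D}^{1/2}$ from the spectral decomposition of $\mat{A}$, then orthogonally diagonalizes $\mat{T}^T \mat{B} \mat{T}$ and sets $\mat{S} = \mat{T}\mat{O}$, exactly as you do with $\mat{S} = \mat{S}_0 \mat{O}$. Your additional remark on the signature of $\mat{\Lambda}$ via congruence is a nice touch that the paper defers to a reference.
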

\begin{proof}
The matrix $\mat{A}$ is positive definite and thus there exists an orthogonal matrix $\mat{O}_A$ that diagonalizes $\mat{A}$, i.e., $\mat{O}_A^{T} \mat{A} \mat{O}_A = \mat{D}$ where $\mat{D}$ is a diagonal matrix with positive diagonal entries. Defining the matrix $\mat{T} = \mat{O}_A \mat{D}^{1/2}$, we have
$\mat{T}^{-1} \mat{A} (\mat{T}^{-1})^T = \mathds{1}$. The matrix $\mat{T}^T \mat{B} \mat{T}$ is real and symmetric and thus can also be diagonalized by an orthogonal matrix $\mat{O}$ such that $\mat{O}^{T} \mat{T}^T \mat{B} \mat{T} \mat{O} = \mat{\Lambda}$. Defining the nonsingular matrix $\mat{S}= \mat{T} \mat{O}$, we see that $\mat{S}^{-1} \mat{A} (\mat{S}^{-1})^T = \mathds{1}$ and $\mat{S}^T \mat{B} \mat{S} = \mat{\Lambda}$. The inversion of the previous formulas completes the proof of point 1. \\
Point 2 simply follows from point 1. In fact, we can write $\mat{A} = \mat{S} \mat{S}^T$ and $\mat{B} = (\mat{S}^{-1})^{T} \mat{\Lambda} \mat{S}^{-1}$ from which we get $\mat{A} \mat{B} = \mat{S} \mat{\Lambda} \mat{S}^{-1}$. 
\end{proof}
In addition, it can be shown that if $\mat{B}\geq 0$, then $\mat{\Lambda}$ has nonnegative eigenvalues (if $\mat{B} > 0$, $\mat{\Lambda}$ has positive eigenvalues). We omit the proof of this fact here, but we refer the interested reader to Theorems 4.5.8 and 7.6.1 in Ref.~\cite{horn-johnson}. Our matrices $\mat{C}^{-1}$ ($=\mat{A}$) and $\mat{K}$ ($=\mat{B}$) satisfy the assumptions of Theorem~\ref{th::diag} and let's call $\mat{\Lambda}=\mat{\omega^2}$ with nonnegative eigenvalues. Let $\mat{S}$ be the matrix that diagonalizes $\mat{C}^{-1} \mat{K}$, that is
\begin{equation}
\mat{S}^{-1} \mat{C}^{-1} \mat{K} \mat{S} = \mat{\Lambda},
\end{equation}
and such that $\mat{C}^{-1} = \mat{S} \mat{S}^T/C_0$ where $C_0>0$ is an arbitrary parameter with the same dimensions as the elements of $\mat{C}$ (we introduce the parameter $C_0$ to ensure that $\mat{S}$ is dimensionless). Notice that since $\mat{C}^{-1} \mat{K}$ is not necessarily symmetric, $\mat{S}$ is not necessarily an orthogonal matrix, although it must be invertible. 
Note that in case $\mat{K}$ does not have full rank, i.e.,~the quadratic potential is flat `in some directions', there will be corresponding zero eigenvalues in $\mat{\Lambda}$.

We now define new normal-mode variables as
\begin{equation}
\vect{X} = \mat{S}^{-1} \vect{x}.
\label{eq:mode-tr}
\end{equation}

Using the properties of Theorem~\ref{th::diag} the quadratic part of the Lagrangian Eq.~\eqref{eq::lagrl} can be written in terms of the normal-mode variables $\vect{X}$ as 
\begin{equation}\label{eq:lagr_lin}
\lagrangian_{\rm lin} (\vect{X}, \vect{\dot{X}}) = \frac{C_0}{2} \vect{\dot{X}}^T \vect{\dot{X}} - \frac{C_0}{2} \vect{X}^T \mat{\Lambda} \vect{X} =  \sum_{i=1}^n  \left[\frac{C_0}{2} \dot{X}_i^2 - \frac{C_0}{2} \omega_i^2 X_i^2\right],
\end{equation}
which is the Lagrangian of a collection of uncoupled harmonic oscillators with resonant frequencies $\omega_i$. 

Defining the conjugates to these normal modes as 
\begin{equation}
\vect{P} = \frac{\partial \mathcal{L}}{\partial \vect{\dot{X}}} = C_0 \vect{\dot{X}},
\end{equation}
we also immediately obtain the Hamiltonian in terms of the normal modes
\begin{equation}
\hamiltonian (\vect{X}, \vect{P}) = \frac{1}{2 C_0} \vect{P}^T \vect{P} + \frac{1}{2} C_0 \vect{X}^T \mat{\Lambda} \vect{X}. 
\end{equation}

Quantization implies that $[\hat{X}_i, \hat{P}_j]=i\hbar \delta_{ij}$. 
By introducing annihilation operators \[\hat{a}_i=\sqrt{\frac{C_0 \omega_i}{2\hbar}}\left(\hat{X}_i+\frac{i}{C_0 \omega_i}\hat{P}_i\right)
\]
with $[\hat{a}_i, \hat{a}_j^{\dagger}]=\delta_{ij} \mathds{1}$ we can write
the quantized Hamiltonian as
\begin{equation}
    H=\sum_{i=1}^n \hbar \omega_i \left(\hat{a}_i^{\dagger} \hat{a}_i+\frac{1}{2}\right).
\end{equation}

Now we can include the non-linear term directly into the classical Hamiltonian. Using the normal modes and their conjugate momenta, we have
\begin{equation}
\hamiltonian (\vect{X}, \vect{P}) = \frac{1}{2 C_0} \vect{P}^T \vect{P} + \frac{1}{2} C_0 \vect{X}^T \mat{\Lambda} \vect{X} + U_{\rm nl}(\vect{S} \vect{X}).
\end{equation}
Using $\hat{X}_i=\sqrt{\frac{\hbar}{2C_0 \omega_i}}(a_i +a_i^{\dagger})$ we can write the nonlinearity $U_{\rm nl}(\sum_j S_{ij} X_j)$ in terms of creation and annihilation operators; see also Section~\ref{sec:bb} and Appendix~\ref{app:norm_mode} for the use of this diagonalization method in black-box quantization in circuit QED.

\section{Eliminating high-frequency modes}
\label{sec:elim}

Having determined the normal modes of a system, 
we imagine that for some subset of modes, the frequencies $\omega_i$ are much larger than for the other modes. We can call these the high-energy modes $X_{i, {\rm high}}$ (versus low-energy modes $X_{i, {\rm low}}$). Then, as a very simple approximation, one may choose to put these high-energy modes directly in their ground vacuum state, and replace the operators $X_{i, {\rm high}}$ and $P_{i, {\rm high}}$ by taking expectation values with respect to the vacuum state. We then work in a reduced Hilbert space where one considers only the remaining low-energy modes. Such a substitution will of course also affect the nonlinearity $U_{\rm nl}(\mat{S}\vect{X})$ in Eq.~\eqref{eq:upot}, which will reduce to some nonlinearity on the remaining `active' modes. In this simple approach the effect of the nonlinearity is not included in determining the frequencies of the high-energy or the low-energy modes.

A more accurate approximation which goes by the name of Born-Oppenheimer approximation is as follows \footnote{See Ref.~\cite{bornOppenheimer1927} for the original work by Born and Oppenheimer (in German).}. This approximation is used in molecular physics where the `fast' high-energy (light mass) variables are the motional degrees of freedom of the electrons and the slow (heavy) variables are those of the nuclei. In what follows, we adapt it to our case of harmonic oscillators coupled by a nonlinear potential.
We first identify a set of low-frequency modes with positions $\{X_{i,{\rm low}}\}$ and high-frequency modes to be eliminated with positions $\{X_{i, {\rm high}}\}$. Then we return to the Lagrangian, obtained by applying Eq.~\eqref{eq:mode-tr} to Eq.~\eqref{eq:upot} and collecting the linear term in Eq.~\eqref{eq:lagr_lin}
\begin{equation}
    \lagrangian (\vect{X}, \vect{\dot{X}}) = \frac{C_0}{2} \vect{\dot{X}}^T \vect{\dot{X}} - \frac{C_0}{2} \vect{X}^T \mat{\omega}^2 \vect{X} -U_{\rm nl}(\mat{S} \mat{X}).
    \label{eq:laghighlow}
\end{equation}
We expand this Lagrangian to second order in the variables $X_{i,{\rm high}}$, keeping all dependence on $X_{j,{\rm low}}$:
\begin{equation}
   U_{\rm nl}(\mat{S} \mat{X})=\sum_j c_j(\{X_{i, {\rm low}}\}) X_{j,{\rm high}}+\frac{1}{2} \vect{X}_{\rm high}^T \mat{N}(\{X_{i, {\rm low}}\})\vect{X}_{\rm high}+\tilde{U}_{\rm nl}(\{X_{i,\rm low}\}) +O(|\vect{X}_{\rm high}|^3),
   \label{eq:nl-exp}
\end{equation} 
where $c_j(\{X_{i, {\rm low}\}}\})$ are some coefficients which depend on $X_{i, {\rm low}}$.
Here we have collected all terms which solely depend on $X_{i, {\rm low}}$ in $\tilde{U}_{\rm nl}(\{X_{i,\rm low}\})$. Thus the `positions' of the low-frequency (slow) modes create a potential for the high-frequency (fast) modes \footnote{If one expands around a point where $\mat{N}$ has negative eigenvalues, one has to include higher-order nonlinear terms to have a confining potential.}. With this expansion, the form of the Lagrangian for the vector of variables $\vect{X}_{\rm high}$ is that of a harmonic system, as the linear term in $\vect{X}_{\rm high}$ can be transformed away into a quadratic term (see Exercise~\ref{ex:lin}), assuming that the symmetric matrix $\mat{N}$ is invertible. More generally, in the Born-Oppenheimer approximation one solves for the ground-state problem of the high-energy variables (a possibly non-harmonic system).

Neglecting the higher-order dependence on $\vect{X}_{\rm high}$, we can diagonalize the Lagrangian for the high-energy modes and obtain a form as in Eq.~\eqref{eq:lagr_lin}, where the frequencies $\tilde{\omega}_k$ of these new uncoupled high-energy modes will depend on $\vect{X}_{\rm low}$. Then when we put these high-frequency modes in their ground state, they will contribute their vacuum energy $\frac{1}{2}\sum_{k, {\rm high}}  \hbar \tilde{\omega}_k(\{\hat{X}_{i, {\rm low}}\})$. The final Hamiltonian in this approximation reads
\begin{equation}
    H \approx \sum_{i, {\rm low}} \hbar \omega_i \left(\hat{a}_{i,{\rm low}}^{\dagger} \hat{a}_{i,{\rm low}}+\frac{1}{2}\right)+\frac{1}{2}\sum_{k, {\rm high}}  \hbar \tilde{\omega}_k(\{\hat{X}_{i, {\rm low}}\})+ \tilde{U}_{\rm nl}(\{\hat{X}_{i, {\rm low}}\}).
    \label{eq:B1H}
\end{equation}
Here the first term comes about through the first two terms in Eq.~\eqref{eq:laghighlow} only for the variables $X_{i,{\rm low}}$ with their associated operators $\hat{a}_{i,{\rm low}}$. 
Thus the high-energy modes are eliminated, inducing an effective (nonlinear) coupling between the previously decoupled low-energy modes. This method was used for handling electrical circuits in Ref.~\cite{DBK:BO}.

\begin{figure}[h]  %   Fig.3
	\centering
	\includegraphics[height=6cm]{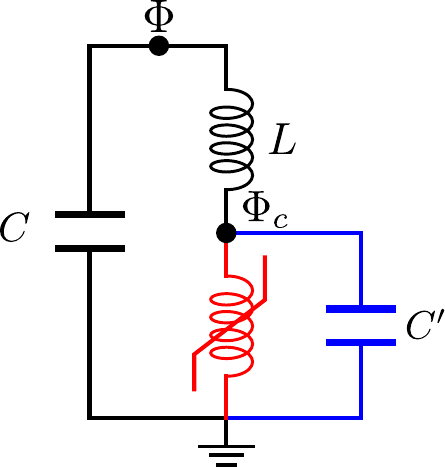}
	\caption{Series combination of a linear inductance $L$ and a nonlinear inductor (red) in parallel to a shunting capacitance $C$. The blue branch highlights the intrinsic capacitance $C'$ of the nonlinear inductor, which we consider to be either vanishingly small (regular case) or absent (singular case).}
	\label{fig_Non_Linear_Inductance}
\end{figure}

\subsection{Illustration of the Born-Oppenheimer method}
\label{sec:elim-BO}

We now go on to a simple example that illustrates the application of the Born-Oppenheimer approach. We consider the series combination of a linear inductance $L$ and a generic nonlinear inductor with intrinsic capacitance $C'$, all in parallel with a total shunting capacitance $C$; see Fig.~\ref{fig_Non_Linear_Inductance}. If $C'=0$ we have a singular circuit for which a Hamiltonian cannot be straightforwardly derived, as one degree of freedom only enters the potential energy  (see the discussion in Section~\ref{subsec:patho}). 

Here we imagine $C'$ is non-zero but very small. Then, we can straightforwardly transform the Lagragian 
\begin{equation}\label{eq_Lagrangian_non_linear_inductance}
	\mathcal{L} = \frac{C\dot{\Phi}^2}{2} + \frac{C'\dot{\Phi}_c^2}{2} - \frac{(\Phi-\Phi_c)^2}{2L} - U_{\rm nl}(\Phi_c),
\end{equation}
to obtain the quantized, circuit Hamiltonian:
\begin{equation}\label{eq_Hamiltonian_r}
	H_r = \frac{\hat{Q}^2}{2C} + \frac{\hat{Q}_c^2}{2C'} + \frac{(\hat{\Phi}-\hat{\Phi}_c)^2}{2L} + U_{\rm nl}(\hat{\Phi}_c).
\end{equation}
 The divergence of the second term of Eq.~\eqref{eq_Hamiltonian_r} for $C' \rightarrow 0$ indicates that $\Phi_c$ will be the `fast' high-energy variable.

The Born-Oppenheimer approximation \cite{DBK:BO} will allow us to derive an effective low-energy Hamiltonian as a function of $\Phi$ and $Q$ only. To this end, we first solve the stationary Schr\"odinger equation associated with the fast degree of freedom, $\Phi_c$, for fixed values of $\Phi$ and $Q$. Thus, we identify the high-energy part of $H_r$, exhibiting fast dynamics classically, as
\begin{equation}\label{eq_def_H_fast}
    H_\text{fast} = \frac{\hat{Q}_c^2}{2C'} + \frac{(\hat{\Phi}-\hat{\Phi}_c)^2}{2L} + U_{\rm nl}(\hat{\Phi}_c),
\end{equation}
and we solve
\begin{equation}\label{eq_fast_SE_BO}
	H_\text{fast} \psi_{\Phi,n}(\Phi_c) = E_{\Phi,n} \psi_{\Phi,n}(\Phi_c),
\end{equation}
for the eigenfunctions $\psi_{\Phi,n}(\Phi_c)$ and the associated eigenenergies $E_{\Phi,n}$, which both are labeled by $n \in \mathbb{N}_0$ and parametrized by $\Phi$. The ground-state energy ($n=0$) is then considered as an effective low-energy potential for the low-energy (or `slow') variable $\Phi$, whose dynamics is captured by the effective Hamiltonian
\begin{equation}\label{eq_H_r_eff}
	H_{r,\text{eff}} = \frac{\hat{Q}^2}{2C} + U_\text{BO}(\hat{\Phi}),
\end{equation}
with the Born-Oppenheimer potential given by the fast-variable ground-state energy:
\begin{equation}\label{eq_def_BO_potential}
    U_\text{BO}(\Phi) = E_{\Phi,n=0}-E_{\Phi=0,n=0}.
\end{equation}
It is handy to introduce here an energy offset of $U_\text{BO}(\Phi)$ such that $U_\text{BO}(0) = 0$ in order to avoid additive constants that are divergent as $C'\rightarrow 0$.

We will shortly make a particular choice for $U_{\rm nl}$. Of course, the nonlinear potential that we know is $-E_J\cos(\Phi_c)$. Since we are here only trying to illustrate the mathematical treatment of Born-Oppenheimer theory, we will make another choice that makes the analysis easier. For many situations, including when we have the Josephson potential, Born-Oppenheimer becomes a purely numerical exercise, which we want to avoid here. At the end of this section we will explain the relation of our simplified $U_{\rm nl}$ to the realistic case of the Josephson potential in Fig.~\ref{fig:patho}.

In what follows, it is convenient to think about the $\Phi$ degree of freedom as a classical variable, while only $\Phi_c$ is quantized. Thus, we will not use hats on the $\Phi$ variable, but we will quantize it only at the end of the procedure. Before fixing a particular $U_{\rm nl}$, it is informative to rescale the fast Hamiltonian. We introduce the ${\rm LC'}$ resonator frequency and a flux zero-point fluctuation parameter, defined as
\begin{equation}\label{eq_defs_ZPF_and_omega_LC}
    \omega'_r = \frac{1}{\sqrt{LC'}},
    \qquad
    \delta\Phi=\sqrt{\hbar}\biggl(\frac{L}{C'} \biggr)^{1/4},
\end{equation}
respectively, and we express $H_\text{fast}$ as
\begin{equation}\label{eq_H_fast_rewritten}
    H_\text{fast} =
    \hbar \omega'_r
    \left[
    \frac{\hat{p}^2+\hat{z}^2}{2} + \frac{U_{\rm nl}(\delta\Phi  \hat{z}+ \Phi)}{\hbar \omega'_r}
    \right].
\end{equation}
Note that the parameter $\delta\Phi$ is essentially $\Phi_{\rm zpf}$ of Eq.~\eqref{eq:zpf}, differing by factor $\sqrt{2}$.
The dimensionless conjugate variables $\hat{z}$ and $\hat{p}$ are defined as $\hat{z}=(\hat{\Phi}_c- \Phi)/\delta\Phi$ and $\hat{p} = \hat{Q}_c \delta\Phi / \hbar$, and they satisfy the canonical commutation relation $[\hat{z}, \hat{p}]=i\mathds{1}$.

We define the parameter $\epsilon = 1 / \sqrt{\hbar\omega'_r} =\sqrt[4]{LC'}/\sqrt{\hbar}= \sqrt{L}/\delta\Phi$, and we divide out the prefactor in Eq.~\eqref{eq_H_fast_rewritten}, obtaining
\begin{equation}\label{eq_eps2_H_fast_type_1}
    \epsilon^2 H_\text{fast} = H_0  + \epsilon^2 U_{\rm nl} \left( \frac{\sqrt{L}}{\epsilon} \hat{z} +\Phi \right),
\end{equation}
with the dimensionless harmonic-oscillator Hamiltonian $H_0 = (\hat{p}^2+\hat{z}^2)/2$. Note that we are looking to determine the $\Phi$ dependence of the ground state of this Hamiltonian, which appears only in the final term. Furthermore, we note that since $\epsilon$ can be viewed as a small parameter when we consider $C'\rightarrow 0$, we may be able to progress by treating the final term in Eq.~\eqref{eq_eps2_H_fast_type_1} as a perturbation of the ground state of $H_0$ (as in the expansion in Eq.~\eqref{eq:nl-exp}). The ground-state energy of $H_0$ is just $1/2$, so we will seek small corrections in $\epsilon$ to this. Note that we will have to divide out the $\epsilon^2$ from the left-hand side of Eq.~\eqref{eq_eps2_H_fast_type_1} in the end, so that the ground-state energy of $H_{\rm fast}$ will diverge; this is correct, and due to zero-point fluctuations. But the $\Phi$ dependence of this ground-state energy will be in the perturbative corrections, and will not diverge for $U_{\rm nl}$ functions of interest.

Of course, the case of most interest is the Josephson potential, but the perturbative calculation is not feasible to work out in closed form in this case, so we take another model $U_{\rm nl}$ potential, namely
\begin{equation}
       U_{\rm nl}(\Phi)=E_{\rm nl}\left|\frac{\Phi}{\Phi_0}\right|^{\nicefrac{3}{2}}.
\end{equation}
For our choice, the perturbation in Eq.~\eqref{eq_eps2_H_fast_type_1} becomes explicitly
\begin{equation}
  \epsilon^2 U_{\rm nl} \left( \frac{\sqrt{L}}{\epsilon} \hat{z}+ \Phi \right)=\epsilon^{\nicefrac{1}{2}} E_{\rm nl} \frac{L^{\nicefrac{3}{4}}}{\Phi_0^{\nicefrac{3}{2}}}   \left|\hat{z}+\frac{\epsilon }{\sqrt{L}} \Phi \right|^{\nicefrac{3}{2}}.  
\end{equation}
To proceed with first-order perturbation theory, we compute

\begin{multline}
\bra{0}\epsilon^{\nicefrac{1}{2}} E_{\rm nl} \frac{L^{\nicefrac{3}{4}}}{\Phi_0^{\nicefrac{3}{2}}}   \left|\hat{z}+\frac{\epsilon }{\sqrt{L}} \Phi\right|^{\nicefrac{3}{2}} \ket{0}=\epsilon^{\nicefrac{1}{2}} E_{\rm nl}\frac{L^{\nicefrac{3}{4}}}{\pi^{\nicefrac{1}{2}}\Phi_0^{\nicefrac{3}{2}}}\int_{-\infty}^\infty dz  e^{-z^2}  \left|z+\frac{\epsilon}{\sqrt{L}} \Phi \right|^{\nicefrac{3}{2}} \\
=\epsilon^{\nicefrac{1}{2}} E_{\rm nl}\frac{L^{\nicefrac{3}{4}}}{\pi^{\nicefrac{1}{2}}\Phi_0^{\nicefrac{3}{2}}} \frac{\pi\sqrt{\epsilon\Phi}}{2\sqrt[4]{L}} e^{-\frac{\epsilon^2\Phi^2}{4L}} \left[\frac{\epsilon^2\Phi^2}{L}
   I_{\frac{3}{4}}\left(\frac{\epsilon^2\Phi^2}{4L}\right)+\left(\frac{\epsilon^2\Phi^2}{L}+1\right)
   I_{-\frac{1}{4}}\left(\frac{\epsilon^2\Phi^2}{4L}\right)\right].
   \label{reallan}
\end{multline}
Here we see the appearance of modified Bessel functions $I_n$ (integrals courtesy of Wolfram Mathematica software). The Bessel functions here are non-analytic (fourth root) in their arguments, but because these arguments go like $\Phi^2$, and because of the additional $\sqrt{\Phi}$ factor, Eq.~(\ref{reallan}) is analytic in $\Phi^2$. Expanding in a Taylor series, we get
\begin{equation}
\bra{0}\epsilon^{\nicefrac{1}{2}} E_{\rm nl}\frac{L^{\nicefrac{3}{4}}}{\Phi_0^{\nicefrac{3}{2}}}\left|\hat{z}+\frac{\epsilon}{\sqrt{L}} \Phi \right|^{\nicefrac{3}{2}}\ket{0}=\epsilon^{\nicefrac{1}{2}} E_{\rm nl}\frac{L^{\nicefrac{3}{4}}}{\pi^{\nicefrac{1}{2}}\Phi_0^{\nicefrac{3}{2}}}\left({\mbox{const.}}+\frac{3\pi}{4\sqrt[4]{2}\Gamma(3/4}\frac{\epsilon^2\Phi^2}{L}+O\left((\epsilon\Phi)^4\right)\right).
\end{equation}
So, dividing by $\epsilon^2$, we get the following form for the Born-Oppenheimer potential:
\begin{equation}
    U_\text{BO}(\Phi) = E_{\Phi,0}-E_{\Phi=0,0}\approx \frac{3\pi^{\nicefrac{1}{2}}}{4\sqrt[4]{2}\Gamma({\nicefrac{3}{4}})}\frac{E_{\rm nl}}{L^{\nicefrac{1}{4}}\Phi_0^{\nicefrac{3}{2}}}\epsilon^{\nicefrac{1}{2}}\Phi^2=\frac{3\pi^{\nicefrac{1}{2}}}{4\sqrt[4]{2}\Gamma({\nicefrac{3}{4}})}\frac{E_{\rm nl}(C')^{\nicefrac{1}{8}}}{\hbar^{\nicefrac{1}{4}}L^{\nicefrac{1}{8}}\Phi_0^{\nicefrac{3}{2}}}\Phi^2.
\end{equation}
By comparing with Eq.~\eqref{eq:chargeflux_aadag}, the reader can see that this expression for the Born-Oppenheimer potential is dimensionally consistent. Examination of the Taylor expansion shows that this quadratic expression is valid over a range that diverges as $\epsilon\rightarrow 0$. Thus, one observes that the anharmonic potential has, due to quantum fluctuations, been turned into a harmonic effective potential for the slow coordinate. But perhaps more significant is that, due to the final $C'$ dependence, this whole potential is vanishing (very slowly!) as $C'\rightarrow 0$. Thus, the whole circuit from node $\Phi$ to ground is going to an open circuit (i.e., effectively infinite inductance). This is the most notable consequence of quantum fluctuations. In Ref.~\cite{rymarz:sing}, it was proven that this open-circuit behavior is universal, and applies also to the case where a Josephson junction is present as the nonlinear element in Fig.~\ref{fig_Non_Linear_Inductance}. The only thing that is required for this universal behavior to hold is that the nonlinear potential increases more slowly than $\Phi_c^2$ for large $|\Phi_c|$.  

\section{Normal modes and Cauer's construction}\label{app:norm_mode}

Building on the analysis in Appendix~\ref{sec:diag}, we now show that an arbitrary lossless, reciprocal impedance matrix can always be expanded as in Eq.~\eqref{eq:z_foster}, a result that goes by the name of Foster's theorem when we only have a single port~\cite{russer, foster1924}.  We provide a derivation of this result based on the Lagrangian formalism for electrical circuits that we have given in this book. A similar approach has been put forward also in Ref.~\cite{egusquiza2022}. The derivation makes clear the equivalence between the normal modes of a linear circuit and the Foster expansion of the impedance matrix. The Lagrangian formulation also allows the straightforward derivation of the Hamiltonian and consequently the quantization of the circuit. In addition, we will point out how nonlinear effects, due for instance to Josephson junctions coupled at the ports, can be straightforwardly introduced in the model, leading to the general Hamiltonian in Eq.~\eqref{eq:h_tr_circuit}. \par 
We start our derivation by considering the general case of the $N$-port network pictorially depicted in Fig.~\ref{fig:el_network}. As we assume the network to be LTI, it is completely characterized by its impulse-response matrix, i.e., its impedance $\bm{Z}(s)$ in the Laplace domain, and the network must consist only of interconnections of capacitances, inductances and mutual inductances (or ideal transformers). We follow the general procedure for obtaining the Lagrangian of a circuit described in Section~\ref{sec:canq_el}. We make the following observations: 
\begin{itemize}
\item We can always choose the port fluxes $\Phi_{p_n}$, $n=1, \dots, N$ defined in Eq.~\eqref{eq:port_fluxes} as the variables in the problem;
\item Additionally, there will be internal degrees of freedom that will generally be defined as the integral of the voltages across some arbitrary branches. These internal degrees of freedom depend on the topology of the specific circuit. We denote them as $\Phi_{I n'}$, with $ n' = 1, \dots, N_I$ and $N_I$ the total number of independent internal degrees of freedom.
\end{itemize} 

Given these facts, the Lagrangian of a linear, reciprocal and lossless network can always be written as
\begin{equation}\label{eq:lagrLTI}
\lagrangian_{\mathrm{lin}}(\vect{\dot{\theta}}; \vect{\theta}) = \frac{1}{2} \vect{\dot{\theta}}^T \mat{C} \vect{\dot{\theta}} - \frac{1}{2}\vect{\theta}^T \mat{K} \vect{\theta},
\end{equation} 
where we define the total vector of fluxes
\begin{equation}
\vect{\theta} = \begin{bmatrix}
\Phi_{p_1}, \dots, \Phi_{p_N}, \Phi_{I_1}, \dots, \Phi_{I_{N_I}}
\end{bmatrix}^T,
\end{equation}
i.e., $\bm{\theta}$ is a vector of length $M = N + N_I$ with $M$ the total number of independent variables in the problem. 

 We now note that Eq.~\eqref{eq:lagrLTI} is of the form of the Lagrangian in Eq.~\eqref{eq::lagrl}, and thus we can simply follow the derivation there. Defining the normal-mode variables (cf. Theorem \ref{th::diag})
\begin{equation}\label{eq:phi_theta}
\vect{\Phi} = \mat{S}^{-1} \vect{\theta},
\end{equation}
we can readily write 
\begin{equation}
\lagrangian_{\mathrm{lin}}(\dot{\vect{\Phi}}, \vect{\Phi}) = \sum_{m=1}^M  \left[\frac{C_0}{2} \dot{\Phi}_m^2 - \frac{C_0}{2} \omega_m^2 \Phi_m^2\right]. 
\end{equation}
As expected, the Lagrangian expressed as a function of the normal mode fluxes can be interpreted as a collection of $M$ uncoupled harmonic oscillators with frequency $\omega_m$ and (arbitrary) characteristic impedance given in Eq.~\eqref{eq:z0m}, using the reference capacitance $C_0$.

The previous derivation shows that such a network can always be replaced by an equivalent circuit shown in Fig.~\ref{fig::cauer_circuit}. In fact, the circuit has a simple interpretation. We have several uncoupled parallel LC oscillators that are associated with the normal modes. Eq.~\eqref{eq:phi_theta} implies that the port fluxes (or voltages) can be expressed as a linear combination of those associated with the normal modes, specifically
\begin{equation} \label{eq::phi_exp}
\Phi_{p_n} = \sum_{m =1}^M t_{mn} \Phi_m, \quad n=1, \dots, N,
\end{equation}
where the $t_{mn}=S_{mn} \in \mathbb{R}$ have the interpretation of (signed) turn ratios of the transformers in the circuit in Fig.~\ref{fig::cauer_circuit} and are the elements of the vectors $\bm{t}_m$ in Eq.~\eqref{eq:vec_turn}. 

It is now a simple exercise to show that the impedance matrix associated with the circuit in Fig.~\ref{fig::cauer_circuit} can be expanded as in Eq.~\eqref{eq:z_foster}. The impedance of the $m$th LC oscillator in the circuit is given by
\begin{equation}
Z_m (s) = \frac{1}{2 C_0} \biggl (\frac{1}{s + i \omega_m} + \frac{1}{s - i \omega_m} \biggr ).
\end{equation}
Let us now apply a generic vector of currents at the ports
\begin{equation}
\bm{I}(s)= \begin{bmatrix}
I_{p_1}(s) & I_{p_2}(s) & \dots & I_{p_N}(s)
\end{bmatrix}^T.
\end{equation}
The total current passing through the $m$th LC oscillator is given by
\begin{equation}
I_{m}(s) = \sum_{n=1}^{N} t_{mn} I_{P_n}(s) =\bm{t}_m^T \bm{I}(s).
\end{equation}
As a consequence we get the following voltage across the $m$th LC oscillator:
\begin{equation}
V_{m}(s) = Z_{m}(s) I_{m}(s) = \frac{1}{2 C_0} \biggl (\frac{1}{s + i \omega_m} + \frac{1}{s - i \omega_m} \biggr ) \bm{t}_m^T \bm{I}(s).
\end{equation}
The voltage drop across the $n$th port is then given as
\begin{equation}
V_{p_n}(s) = \sum_{m=1}^{M} t_{mn} V_{m}(s)= 
 \sum_{m=1}^{M} \frac{1}{2 C_0} \biggl (\frac{1}{s + i \omega_m} + \frac{1}{s - i \omega_m} \biggr ) t_{mn}  \bm{t}_m^T \bm{I}(s),
\end{equation}
and so we write the vector of port voltages as
\begin{equation}
\bm{V}(s) = \underbrace{\biggl(\sum_{m=1}^{M} \frac{1}{2 C_0} \biggl (\frac{1}{s + i \omega_m} + \frac{1}{s - i \omega_m} \biggr ) \bm{t}_m \bm{t}_m^T \biggr)}_{\bm{Z}(s)} \bm{I}(s),
\end{equation}
with matrix $\mat{R}_m=\vect{t}_m \vect{t}_m^T$, which shows the result. 

After quantizing the circuit, annihilation and creation operators for the normal modes can be introduced as usual, to write
\begin{equation}
\hat{\Phi}_m = \sqrt{\frac{\hbar Z_m}{2}}\bigl(\hat{a}_m + \hat{a}_m^{\dagger} \bigr),
\end{equation}
which gives Eq.~\eqref{eq:phi_port_norm} and from which Eq.~\eqref{eq:h_nm} also follows.

Finally, as discussed in Section~\ref{sec:loss}, we also mention the weakly dissipative case which can be modeled by adding resistances $R_m$ with $m=1, \dots, M$ in parallel with the LC oscillators associated with each mode, as shown in Fig.~\ref{fig::cauer_circuit}. Weak dissipation corresponds to the resistance of the $R_m$s being large.

In this case, following a similar reasoning as that in Section~\ref{subsec:sportcase} for the single RLC oscillator, in the weakly dissipative limit, the impedance matrix can be approximated as

\begin{equation}
\label{eq:z_lossyfoster}
\mat{Z}(s) \approx \sum_{m=1}^M \frac{1}{2 C_0} \biggl(\frac{1}{s + i (\omega_m + i \kappa_m/2)} + \frac{1}{s+i (\omega_m - i\kappa_m/2)} \biggr)\mat{R}_m,
\end{equation}
Recall that the condition for having weak dissipation for each mode is
\begin{equation}
    \kappa_m = \frac{1}{R_m C_0} \ll \omega_m.
\end{equation}

In order to determine the decay rates $\kappa_m$, and thus the effective resistances $R_m$, in the circuit in Fig.~\ref{fig::cauer_circuit}, the procedure is completely equivalent to the one described in Section~\ref{subsec:sportcase} for the single-port case. Let us select an arbitrary port, say port $1$, and let us take all other ports to be open-circuited. If we let a current $I_1(s)$ flow through port $1$, we get

\begin{equation}
    \vect{V}(s) = \begin{pmatrix}
        V_{p1}(s) \\
        V_{p2}(s) \\
        \vdots \\
        V_{pn}(s)  
    \end{pmatrix} = \begin{pmatrix}
        Z_{11}(s) \\
        Z_{21} (s) \\
        \vdots \\
        Z_{M 1} (s)
    \end{pmatrix} I_1(s).
\end{equation}

Note that in this setup port $1$ does not see the effect of the other ports, and thus the circuit can be treated as the single-port case in Fig.~\ref{fig::smlossy_foster}. Thus, port $1$ sees the effective admittance

\begin{equation}
\label{eq:yz11}
    Y(s) = \frac{I_{1}(s)}{V_{p1}(s)} = \frac{1}{Z_{11}(s)}. 
\end{equation}

This shows that, in the multi-port case, we can obtain the approximate damping rate $\kappa_m$ of mode $m$ using Eq.~\eqref{eq:approx-kappa} with the admittance $Y(s)$ in Eq.~\eqref{eq:yz11} in the frequency domain.

\section{LC circuit shunted by a small admittance}
\label{app:lcys} 

\begin{figure}[htbp]
    \centering
    \includegraphics[height=3cm]{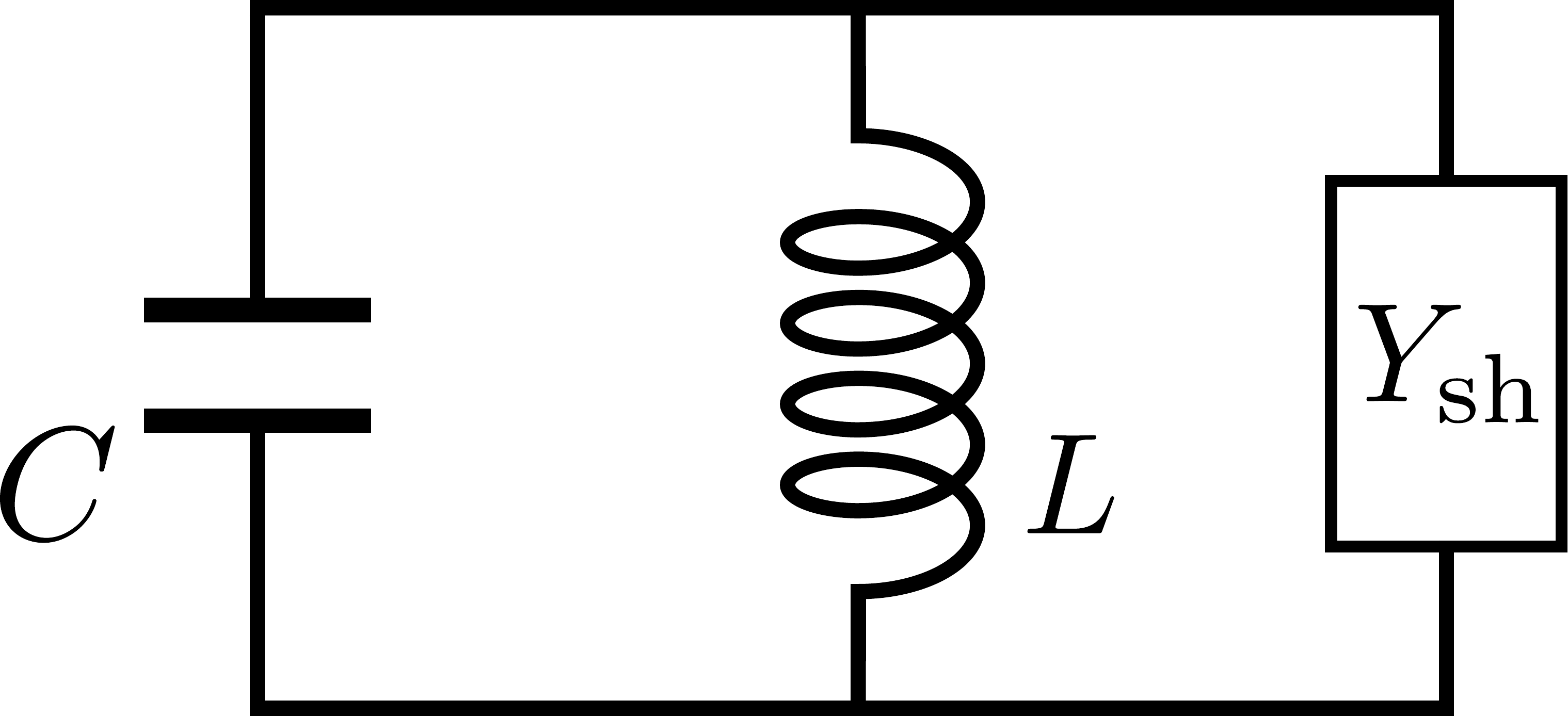}
    \caption{LC oscillator in parallel with a shunting admittance.}
    \label{fig:lcy}
\end{figure}

In this section, we consider the circuit in Fig.~\ref{fig:lcy} focusing on the case of small shunting admittance $Y_{\mathrm{sh}}(s)$. Our goal is two-fold. First, we want to show how the small admittance modifies the resonant frequency of the circuit and induces a decay rate. Second, we highlight the fact that the decay rate we obtain with this simple classical reasoning matches the quantum one in Eq.~\eqref{BKDT1:alt}. In fact, we have already seen in Section~\ref{sec:losscircuit} that Eq.~\eqref{BKDT1:alt} reproduces the decay rate of a parallel RLC oscillator and we wish to confirm that the same holds for a weak shunting admittance. Let us start by considering the total admittance in the Laplace domain of the circuit in Fig.~\ref{fig:lcy} that reads

\begin{equation}
    Y(s) = Y_{\mathrm{LC}}(s) + Y_{\mathrm{sh}}(s),
\end{equation}
with the admittance of the LC oscillator given by
\begin{equation}
    Y_{\mathrm{LC}}(s) = \frac{1}{Z_{\mathrm{LC}}(s)} = \left[\frac{1}{2 C} \left(\frac{1}{s + i \omega_r} + \frac{1}{s - i \omega_r}  \right) \right]^{-1},
\end{equation}
with resonant frequency $\omega_r = 1/\sqrt{LC}$. Close to the pole $s \approx  i \omega$ the admittance of the LC oscillator is approximately
\begin{equation}
    Y_{\mathrm{LC}}(s) \approx 2 C (s - i \omega_r). 
\end{equation}
Thus, within this approximation, the total admittance becomes
\begin{equation}
    Y(s) \approx 2 C (s - i\omega_r) + Y_{\mathrm{sh}}(s) \approx 2 C \left\{s - i \left[\omega - \frac{\mathrm{Im}(Y_{\mathrm{sh}}(i \omega_r))}{2 C} \right] + \frac{\mathrm{Re}(Y_{\mathrm{sh}}(i \omega_r))}{2 C} \right \}.
\end{equation}
Thus, we obtain a new pole
\begin{equation}
    \tilde{p} = i \left[\omega_r - \frac{\mathrm{Im}(Y_{\mathrm{sh}}(i \omega_r))}{2 C} \right] - \frac{\mathrm{Re}(Y_{\mathrm{sh}}(i \omega_r))}{2 C} = i \tilde{\omega}_r - \frac{\kappa_{\mathrm{sh}}}{2},
\end{equation}
where we define the shifted resonant frequency
\begin{equation}
    \tilde{\omega}_r = \omega_r - \frac{\mathrm{Im}(Y_{\mathrm{sh}}(i \omega_r))}{2 C},
\end{equation}
and the decay rate caused by the shunting admittance
\begin{equation}
\label{eq:ksh}
    \kappa_{\mathrm{sh}} =  \frac{\mathrm{Re}(Y_{\mathrm{sh}}(i \omega_r))}{ C}.
\end{equation}
Notice that this procedure is valid self-consistently as long as the new pole $\tilde{p}$ does not deviate too much from the original one at $i \omega_r$, a condition that is satisfied if
\begin{equation}
  \biggl | \frac{\mathrm{Im}(Y_{\mathrm{sh}}(i \omega_r))}{2 C}  \biggr| \ll \omega_r, \quad \biggl | \frac{\mathrm{Re}(Y_{\mathrm{sh}}(i \omega_r))}{2 C}  \biggr| \ll \omega_r. 
\end{equation}
Repeating the procedure close to $s \approx -i \omega_r$, and using the fact that $\forall \omega \in \mathbb{R}$ $\mathrm{Re}(i \omega) = \mathrm{Re}(- i \omega)$ and $\mathrm{Im}(i \omega) = - \mathrm{Im}(- i \omega)$ (see Chapter $4$ in \cite{pozar} for these properties), we also get another pole $\tilde{p}^*$ which is simply the complex conjugate of $\tilde{p}$. Thus, following a similar reasoning as in Section~\ref{subsec:sportcase}, we get that the impedance of the circuit in Fig.~\ref{fig:lcy} can be approximated as

\begin{equation}
    Z(s) = \frac{1}{Y(s)} \approx \frac{1}{2 C} \left( \frac{1}{s - \tilde{p}} + \frac{1}{s - \tilde{p}^*} \right).
\end{equation}

Up to now it is not clear why the coefficient $\kappa_{\mathrm{sh}}$ in Eq.~\eqref{eq:ksh} is called a ``decay rate". In order to understand this we need to analyze the problem in the time domain. Let us suppose a Dirac delta current $\ii(t) = \ii_0 \delta(t)$ flows through the circuit. In the Laplace domain $I(s) = \ii_0$ and thus we get

\begin{equation}
    V(s) = Z(s) I(s) \approx \frac{\ii_0}{2 C} \biggl( \frac{1}{s - \tilde{p}} + \frac{1}{s - \tilde{p}^*} \biggr).
\end{equation}
Now, remember that the Laplace transform of $g(t) = e^{p t} u(t)$ with $u(t)$ the Heaviside step function is given by

\begin{equation}
    G(s) = \int_{- \infty}^{+\infty} dt \,e^{(p - s) t} u(t) =  \int_{0}^{+\infty} dt \,e^{(p - s) t} = \frac{1}{s - p},
\end{equation}
when $\mathrm{Re}(p) < \mathrm{Re}(s)$. Thus, in the time domain we get that for $t \ge 0$

\begin{equation}
    \vv(t) = \frac{\ii_0}{2 C} \left(e^{\tilde{p}t} + e^{\tilde{p}^* t}\right) = \frac{\ii_0}{2 C} \left(e^{i \tilde{\omega}_r t} + e^{-i \tilde{\omega}_r t} \right) e^{- \frac{\kappa_{\mathrm{sh}}}{2} t},
\end{equation}
which shows why $\kappa_{\mathrm{sh}}/2$ is interpreted as a decay rate. 

Finally, as for the RLC circuit analyzed in Section~\ref{sec:losscircuit} (see the derivation of Eq.~\eqref{eq:t1rc}) the quantum formula for the decay rate in Eq.~\eqref{BKDT1:alt} at zero temperature matches the classical derivation in Eq.~\eqref{eq:ksh} for a harmonic oscillator:

\begin{equation}
\label{eq:t1ys}
    \frac{1}{T_1} = \frac{\mathrm{Re}(Y_{\mathrm{sh}}(i \omega_r))}{C} = \frac{\mathrm{Re}(\mathcal{Y}_{\mathrm{sh}}(\omega_r))}{C}=\kappa_{\mathrm{sh}}.
\end{equation}

%\newpage
%\chapter*{Answers to exercises}
%\pagestyle{plain}
%\shipoutAnswer

\bibliographystyle{apsrev4-1}
%\nocite{apsrev41Control}
\bibliography{circuit-quant.bib, revtex-custom}

\end{document}